\documentclass[11pt]{scrartcl}

\usepackage[utf8]{inputenc}
\usepackage[T1]{fontenc}
\usepackage{lmodern}
\usepackage{alphabeta}

\usepackage{titling}
\usepackage{xspace}
\usepackage{soul} 

\usepackage{mathrsfs}

 \usepackage[a4paper, top=25.4mm, bottom=25.4mm, left=25.4mm, right=25.4mm, includefoot]{geometry}

\DeclareSectionCommand[%
level=4,
indent=0pt,
beforeskip=1ex plus 1ex minus .2ex,
afterskip=-1em,
font={},
tocindent=7em,
tocnumwidth=4.1em,
counterwithin=subsubsection
]{paragraph}

\usepackage[inline,shortlabels]{enumitem}

\usepackage{dsfont}
\usepackage{setspace}

\usepackage{bm}
\usepackage{microtype}
\usepackage{amsmath}
\usepackage{amssymb}
\usepackage{amsfonts}
\usepackage{mathtools}
\usepackage[mathscr]{euscript}
\usepackage{wasysym}

\usepackage[hyphens]{url}
\usepackage{amsthm}

\usepackage{tikz}
\usepackage{xcolor}
\usepackage{subcaption}
\usepackage{graphicx}
\usepackage{wrapfig}

\usepackage{cite}

\usepackage{hyperref}
\usepackage{zref-clever}

\usepackage{nicefrac}
\usepackage[textwidth=1.5in]{todonotes}

\newcommand{\cref}[1]{\zcref{#1}}
\renewcommand{\autoref}[1]{\zcref{#1}}

\definecolor{mediumslateblue}{rgb}{0.48, 0.41, 0.93}
\definecolor{lightsalmonpink}{rgb}{1.0, 0.6, 0.6}
\definecolor{amaranth}{rgb}{0.9, 0.17, 0.31}
\definecolor{jonquil}{rgb}{0.98, 0.85, 0.37}
\definecolor{orangepeel}{rgb}{1.0, 0.62, 0.0}
\definecolor{darkspringgreen}{rgb}{0.09, 0.45, 0.27}
\colorlet{myGreen}{green!50!black}
\colorlet{myLightgreen}{green}
\colorlet{myRed}{red!90!black}
\definecolor{myBlue}{rgb}{0.25, 0.0, 1.0}
\definecolor{myLightBlue}{rgb}{0.39, 0.58, 0.93}
\colorlet{myViolet}{myBlue!55!myRed}
\definecolor{myOrange}{rgb}{1.0, 0.66, 0.07}
\definecolor{ceruleanblue}{rgb}{0.16, 0.32, 0.75}
\definecolor{CornflowerBlue}{rgb}{0.39, 0.58, 0.93}
\definecolor{DarkGoldenrod}{rgb}{0.72, 0.53, 0.04}
\definecolor{BritishRacingGreen}{rgb}{0.0, 0.26, 0.15}
\definecolor{DarkMagenta}{rgb}{0.55, 0.0, 0.55}
\definecolor{AO}{rgb}{0.0, 0.5, 0.0}
\definecolor{BostonUniversityRed}{rgb}{0.8, 0.0, 0.0}
\definecolor{myRed}{rgb}{0.8, 0.0, 0.0}
\definecolor{DarkMidnightBlue}{rgb}{0.0, 0.2, 0.4}
\definecolor{DarkTangerine}{rgb}{1.0, 0.66, 0.07}
\definecolor{AppleGreen}{rgb}{0.55, 0.71, 0.0}
\definecolor{BrightUbe}{rgb}{0.82, 0.62, 0.91}
\definecolor{Amethyst}{rgb}{0.6, 0.4, 0.8}
\definecolor{DarkGray}{rgb}{0.52, 0.52, 0.51}
\definecolor{Gray}{rgb}{0.66, 0.66, 0.66}
\definecolor{BananaYellow}{rgb}{1.0, 0.88, 0.21}
\definecolor{Amber}{rgb}{1.0, 0.75, 0.0}
\definecolor{LightGray}{rgb}{0.83, 0.83, 0.83}
\definecolor{PrincetonOrange}{rgb}{1.0, 0.56, 0.0}
\definecolor{DeepCarrotOrange}{rgb}{0.91, 0.41, 0.17}
\definecolor{CarrotOrange}{rgb}{0.93, 0.57, 0.13}
\definecolor{MidnightBlue}{rgb}{0.1, 0.1, 0.44}
\definecolor{Magenta}{rgb}{0.50, 0.0, 0.50}
\definecolor{BrightPink}{rgb}{1.0, 0.0, 0.5}
\definecolor{BrilliantRose}{rgb}{1.0, 0.33, 0.64}
\definecolor{ChromeYellow}{rgb}{1.0, 0.65, 0.0}
\definecolor{HotMagenta}{rgb}{1.0, 0.11, 0.81}
\definecolor{Auburn}{rgb}{0.43, 0.21, 0.1}
\definecolor{BrightTurquoise}{rgb}{0.03, 0.91, 0.87}
\definecolor{DarkCyan}{rgb}{0.0, 0.55, 0.55}
\definecolor{DarkGray}{rgb}{0.66, 0.66, 0.66}
\definecolor{DimGray}{rgb}{0.41, 0.41, 0.41}
\definecolor{DarkBananaYellow}{RGB}{240,181,67}

\vfuzz \hfuzz
\setlist[itemize]{topsep=0pt,partopsep=0pt,itemsep=0pt,parsep=0pt}
\setlist[itemize,1]{label={\small\textbullet}}
\setlist[itemize,2]{label={\tiny\textbullet}}
\setlist[itemize,3]{label=$\cdot$}
\setlist[enumerate]{topsep=0pt,partopsep=0pt,itemsep=0pt,parsep=0pt}
\setlist[enumerate,1]{label=\roman*)}
\setlist[enumerate,2]{label=\alph*)}
\setlist[enumerate,3]{label=\arabic*)}

\hypersetup{
colorlinks=true,
linkcolor=AO!65!black,
citecolor=AO!65!black,
urlcolor=AppleGreen!65!black,
bookmarksopen=true,
bookmarksnumbered,
bookmarksopenlevel=2,
bookmarksdepth=3
}

\newenvironment{claimproof}[1][Proof of claim]
  {\begin{proof}[#1]}
  {\end{proof}}

\theoremstyle{definition}

\newtheorem{environment}{Environment}[section]

\newtheorem{lemma}[environment]{Lemma}
\AddToHook{env/lemma/begin}{\zcsetup{countertype={environment=lemma}}}
\zcRefTypeSetup{lemma}{
Name-sg = Lemma ,
name-sg = Lemma ,
Name-pl = Lemmas ,
name-pl = Lemmas ,
}

\newtheorem*{lemma*}{Lemma}
\AddToHook{env/lemma*/begin}{\zcsetup{countertype={environment=lemma*}}}
\zcRefTypeSetup{lemma*}{
Name-sg = Lemma ,
name-sg = Lemma ,
Name-pl = Lemmas ,
name-pl = Lemmas ,
}

\newtheorem{proposition}[environment]{Proposition}
\AddToHook{env/proposition/begin}{\zcsetup{countertype={environment=proposition}}}
\zcRefTypeSetup{proposition}{
Name-sg = Proposition ,
name-sg = Proposition ,
Name-pl = Propositions ,
name-pl = Propositions ,
}

\newtheorem{corollary}[environment]{Corollary}
\AddToHook{env/corollary/begin}{\zcsetup{countertype={environment=corollary}}}
\zcRefTypeSetup{corollary}{
Name-sg = Corollary ,
name-sg = Corollary ,
Name-pl = Corollaries ,
name-pl = Corollaries ,
}

\newtheorem{theorem}[environment]{Theorem}
\AddToHook{env/theorem/begin}{\zcsetup{countertype={environment=theorem}}}
\zcRefTypeSetup{theorem}{
Name-sg = Theorem ,
name-sg = Theorem ,
Name-pl = Theorems ,
name-pl = Theorems ,
}

\newtheorem*{theorem*}{Theorem}
\AddToHook{env/theorem*/begin}{\zcsetup{countertype={environment=theorem*}}}
\zcRefTypeSetup{theorem*}{
Name-sg = Theorem ,
name-sg = Theorem ,
Name-pl = Theorems ,
name-pl = Theorems ,
}

\AddToHook{env/conjecture/begin}{\zcsetup{countertype={environment=conjecture}}}
\zcRefTypeSetup{conjecture}{
Name-sg = Conjecture ,
name-sg = Conjecture ,
Name-pl = Conjectures ,
name-pl = Conjectures ,
}

\newtheorem*{hypothesis*}{Hypothesis}
\AddToHook{env/hypothesis*/begin}{\zcsetup{countertype={environment=hypothesis*}}}
\zcRefTypeSetup{hypothesis*}{
Name-sg = Hypothesis ,
name-sg = Hypothesis ,
Name-pl = Hypotheses ,
name-pl = Hypotheses ,
}

\newtheorem{observation}[environment]{Observation}
\AddToHook{env/observation/begin}{\zcsetup{countertype={environment=observation}}}
\zcRefTypeSetup{observation}{
Name-sg = Observation ,
name-sg = Observation ,
Name-pl = Observations ,
name-pl = Observations ,
}

\AddToHook{env/example/begin}{\zcsetup{countertype={environment=example}}}
\zcRefTypeSetup{example}{
Name-sg = Example ,
name-sg = Example ,
Name-pl = Examples ,
name-pl = Examples ,
}

\AddToHook{env/remark/begin}{\zcsetup{countertype={environment=remark}}}
\zcRefTypeSetup{remark}{
Name-sg = Remark ,
name-sg = Remark ,
Name-pl = Remarks ,
name-pl = Remarks ,
}

\zcRefTypeSetup{equation}{
Name-sg = Equation ,
name-sg = Equation ,
Name-pl = Equations ,
name-pl = Equations ,
}

\zcRefTypeSetup{chapter}{
Name-sg = Chapter ,
name-sg = Chapter ,
Name-pl = Chapters ,
name-pl = Chapters ,
}

\zcRefTypeSetup{section}{
Name-sg = Section ,
name-sg = Section ,
Name-pl = Sections ,
name-pl = Sections ,
}

\zcRefTypeSetup{algorithm}{
Name-sg = Algorithm ,
name-sg = Algorithm ,
Name-pl = Algorithms ,
name-pl = Algorithms ,
}

\AddToHook{env/notation/begin}{\zcsetup{countertype={environment=notation}}}
\zcRefTypeSetup{notation}{
Name-sg = Notation ,
name-sg = Notation ,
Name-pl = Notations ,
name-pl = Notations ,
}

\AddToHook{env/question/begin}{\zcsetup{countertype={environment=question}}}
\zcRefTypeSetup{question}{
Name-sg = Question ,
name-sg = Question ,
Name-pl = Questions ,
name-pl = Questions ,
}

\AddToHook{env/problem/begin}{\zcsetup{countertype={environment=problem}}}
\zcRefTypeSetup{problem}{
Name-sg = Problem ,
name-sg = Problem ,
Name-pl = Problems ,
name-pl = Problems ,
}

\newtheorem{definition}[environment]{Definition}
\AddToHook{env/definition/begin}{\zcsetup{countertype={environment=definition}}}
\zcRefTypeSetup{definition}{
Name-sg = Definition ,
name-sg = Definition ,
Name-pl = Definitions ,
name-pl = Definitions ,
}

\zcRefTypeSetup{figure}{
Name-sg = Figure ,
name-sg = Figure ,
Name-pl = Figures ,
name-pl = Figures ,
}

\newtheoremstyle{beautypalour}
{3pt}
{3pt}
{}
{}
{\itshape}
{.}
{.5em}
{}
\theoremstyle{beautypalour}

\newtheorem{claim}{Claim}[environment]
\AddToHook{env/claim/begin}{\zcsetup{countertype={environment=claim}}}
\zcRefTypeSetup{claim}{
Name-sg = Claim ,
name-sg = Claim ,
Name-pl = Claims ,
name-pl = Claims ,
}

\AddToHook{env/beautifulclaim/begin}{\zcsetup{countertype={environment=beautifulclaim}}}
\zcRefTypeSetup{beautifulclaim}{
Name-sg = Claim ,
name-sg = Claim ,
Name-pl = Claims ,
name-pl = Claims ,
}

\usetikzlibrary{calc}
\usetikzlibrary{fit}
\usetikzlibrary{decorations}
\usetikzlibrary{decorations.pathmorphing}
\usetikzlibrary{decorations.text}
\usetikzlibrary{shapes,hobby}

\tikzset{
	position/.style args={#1:#2 from #3}{
		at=($(#3)+(#1:#2)$)
	}
}

\tikzset{
  v:main/.style = {draw, circle, scale=0.8, thick,fill=black,inner sep=0.7mm},
  v:ghost/.style = {inner sep=0pt,scale=1},
  >={latex},
  e:marker/.style = {line width=8.5pt,line cap=round,opacity=0.35,color=DarkGoldenrod},
  e:main/.style = {line width=1pt},
}

\newcommand{\Coordinates}{%
    \def\R{8}

    \draw[step=.25, very thin, gray!40] (-\R,-\R) grid (\R,\R);

    \draw[step=1, thin, gray!70] (-\R,-\R) grid (\R,\R);

    \draw[->, thick] (-\R,0) -- (\R+0.35,0) node[right] {$x$};
    \draw[->, thick] (0,-\R) -- (0,\R+0.35) node[above] {$y$};

    \foreach \x in {-8,-5,...,8}{
        \draw (\x,0.08) -- (\x,-0.08);
        \ifnum\x=0\else
            \node[below,font=\scriptsize] at (\x,0) {\x};
        \fi
    }

    \foreach \y in {-6,-5,...,6}{
        \draw (0.08,\y) -- (-0.08,\y);
        \ifnum\y=0\else
            \node[left,font=\scriptsize] at (0,\y) {\y};
        \fi
    }

}

\newcommand{\N}{\mathbb{N}}

\newcommand{\LLL}{\mathcal{L}}

\definecolor{BrilliantRose}{rgb}{1.0, 0.33, 0.64}
\definecolor{MidnightBlack}{rgb}{0.1,0.1,.34}
\definecolor{MidnightBlue}{rgb}{0.1,0.1,0.43}
\definecolor{Black}{rgb}{0,0, 0}
\definecolor{Blue}{rgb}{0, 0 ,1}
\definecolor{Red}{rgb}{1, 0 ,0}
\definecolor{White}{rgb}{1, 1, 1}
\definecolor{grey}{rgb}{.6, .6, .6}
\definecolor{Mygreen}{rgb}{.0, .7, .0}
\definecolor{Yellow}{rgb}{.55,.55,0}
\definecolor{Mustard}{rgb}{1.0, 0.86, 0.35}
\definecolor{applegreen}{rgb}{0.55, 0.71, 0.0}
\definecolor{darkturquoise}{rgb}{0.0, 0.81, 0.82}
\definecolor{celestialblue}{rgb}{0.29, 0.59, 0.82}
\definecolor{green_yellow}{rgb}{0.68, 1.0, 0.18}
\definecolor{crimsonglory}{rgb}{0.75, 0.0, 0.2}
\definecolor{darkmagenta}{rgb}{0.30, 0.0, 0.30}
\definecolor{magenta}{rgb}{0.50, 0.0, 0.50}
\definecolor{internationalorange}{rgb}{1.0, 0.31, 0.0}
\definecolor{darkorange}{rgb}{1.0, 0.55, 0.0}
\definecolor{ao}{rgb}{0.0, 0.5, 0.0}
\definecolor{awesome}{rgb}{1.0, 0.13, 0.32}
\definecolor{darkcyan}{rgb}{0.0, 0.50, 0.50}
\definecolor{violet}{rgb}{0.93, 0.51, 0.93}
\definecolor{brown}{rgb}{0.65, 0.16, 0.16}
\definecolor{orange}{rgb}{1.0, 0.65, 0.0}
\definecolor{cornflowerblue}{rgb}{0.39, 0.58, 0.93}

\newcommand{\Acal}{\mathcal{A}}

\newcommand{\Ccal}{\mathcal{C}}

\newcommand{\Pcal}{\mathcal{P}}

\newcommand{\Tcal}{\mathcal{T}}

\newcommand{\Nbbb}{\mathbb{N}}

\newcommand{\remove}[1]{}

\newcommand{\tw}{\mathsf{tw}\xspace}%
\newcommand{\poly}{\mathbf{poly}\xspace}%

\newcommand{\p}{\mathsf{p}\xspace}%

\usepackage{xparse}

\predate{}
\date{}
\postdate{}

 \ifdefined\sth
 \else 
 \newcommand{\sth}{\mathrel : }
 \fi
\ifdefined\bd
\else
\newcommand{\bd}{\mathsf{bd}}
\fi

\newcommand{\R}{\mathbb{R}}

\newcommand{\hh}{\end{document}}

\newcommand{\yes}{\textsf{yes}}

\DeclareFontEncoding{LS1}{}{}
\DeclareFontSubstitution{LS1}{stix}{m}{n}
\DeclareSymbolFont{symbolsstix}{LS1}{stixscr}{m}{n}
\SetSymbolFont{symbolsstix}{bold}{LS1}{stixscr}{b}{n}
\DeclareMathSymbol{\mathvisiblespace}{0}{symbolsstix}{"B6}

\RequirePackage{stmaryrd}
\usepackage{textcomp}
\DeclareUnicodeCharacter{2286}{\subseteq}
\DeclareUnicodeCharacter{2192}{\ifmmode\to\else\textrightarrow\fi}
\DeclareUnicodeCharacter{2203}{\ensuremath\exists}
\DeclareUnicodeCharacter{183}{\cdot}
\DeclareUnicodeCharacter{2200}{\forall}
\DeclareUnicodeCharacter{2264}{\leq}
\DeclareUnicodeCharacter{2265}{\geq}
\DeclareUnicodeCharacter{8614}{\mathbin{\mapsto}}
\DeclareUnicodeCharacter{8656}{\Leftarrow}
\DeclareUnicodeCharacter{8657}{\Uparrow}
\DeclareUnicodeCharacter{8658}{\Rightarrow}
\DeclareUnicodeCharacter{8659}{\Downarrow}
\DeclareUnicodeCharacter{8669}{\rightsquigarrow}
\newcommand{\eqdef}{\stackrel{{\scriptsize\rm def}}{=}}
\DeclareUnicodeCharacter{8797}{\eqdef}
\DeclareUnicodeCharacter{8870}{\vdash}
\DeclareUnicodeCharacter{8873}{\Vdash}
\DeclareUnicodeCharacter{22A7}{\models}
\DeclareUnicodeCharacter{9121}{\lceil}
\DeclareUnicodeCharacter{9123}{\lfloor}
\DeclareUnicodeCharacter{9124}{\rceil}
\DeclareUnicodeCharacter{2208}{\in}
\DeclareUnicodeCharacter{9126}{\rfloor}
\DeclareUnicodeCharacter{9655}{\triangleright}
\DeclareUnicodeCharacter{9665}{\triangleleft}
\DeclareUnicodeCharacter{9671}{\diamond}
\DeclareUnicodeCharacter{9675}{\circ}
\DeclareUnicodeCharacter{10178}{\bot}
\DeclareUnicodeCharacter{10229}{\longleftarrow}
\DeclareUnicodeCharacter{10230}{\longrightarrow}
\DeclareUnicodeCharacter{10231}{\longleftrightarrow}
\DeclareUnicodeCharacter{10232}{\Longleftarrow}
\DeclareUnicodeCharacter{10233}{\Longrightarrow}
\DeclareUnicodeCharacter{10234}{\Longleftrightarrow}
\DeclareUnicodeCharacter{10236}{\longmapsto}
\DeclareUnicodeCharacter{10238}{\Longmapsto} 
\DeclareUnicodeCharacter{10503}{\Mapsto}    
\DeclareUnicodeCharacter{10971}{\mathrel{\not\hspace{-0.2em}\cap}}
\DeclareUnicodeCharacter{65294}{\ldotp}
\DeclareUnicodeCharacter{65372}{\mid}

\title{Spanning Paths and Cycles: Structural Limitations of the Irrelevant Vertex Technique}

\predate{}
\date{}
\postdate{}

\preauthor{}
\DeclareRobustCommand{\authorthing}{
	\begin{center}
		Dimitrios M.\@ Thilikos\thanks{Supported by French National Research Agency (ANR) under project GODASse ANR-24-CE48-4377 and under the France 2030 grant reference number ANR-24-RRII-0002 operated by the Inria Quadrant Program.}~\thanks{\href{mailto:sedthilk@thilikos.info}{sedthilk@thilikos.info}} \\
		{\small LIRMM, Université de Montpellier, CNRS, Montpellier, France} \\
		  \medskip
		Sebastian Wiederrecht\thanks{Supported by the Institute for Basic Science (IBS-R029-C1).}~\thanks{\href{mailto:sebastian.wiederrecht@gmail.com}{wiederrecht@kaist.ac.kr}} \\
		{\small School of Computing, KAIST, South Korea} \\
\end{center}}
\author{\authorthing}
\postauthor{}

\begin{document}
\maketitle

\begin{abstract}
\noindent The Irrelevant Vertex Technique is one of the cornerstones of algorithmic graph theory, underlying Robertson and Seymour's algorithm for \textsc{Disjoint Paths} and much of the algorithmic Graph Minors theory.
We show that, in the setting of spanning routing, this technique exhibits an exact combinatorial limitation.

Unlike classical routing problems, spanning routing is not governed by the number of distinguished vertices but by the way they are distributed throughout the graph.
Here, the input is a triple $(G,R,\mathcal{T})$ where $(G,R)$ is an annotated graph and $\mathcal{T}$ is a set of terminal pairs.
The goal is to determine if $G$ contains a family of internally disjoint paths connecting the pairs in $\mathcal{T}$ such that the union of the paths spans the entire set $R$.
We identify a new structural parameter of annotated graphs, called $\mathsf{depth}_2$, that measures precisely this phenomenon.
Our main result is a complete combinatorial dichotomy:
for every red-minor-closed class of annotated graphs, the Irrelevant Vertex Technique applies to \textsc{Spanning Disjoint Paths} \textsl{if and only if} the class has bounded $\mathsf{depth}_2$.
Thus $\mathsf{depth}_2$ forms the exact structural boundary between classes where the Robertson-Seymour paradigm survives and those where it  breaks down.
Our proof combines a new local structure theorem for annotated graphs of bounded $\mathsf{depth}_2$ with a spanning analogue of the celebrated Vital Linkage Theorem.
The resulting algorithm solves \textsc{Spanning Disjoint Paths} in time $2^{2^{\poly(k+d)}}\cdot n^2$ where $d$ denotes the $\mathsf{depth}_2$ of the input instance.
We provide matching lower bounds which show that beyond bounded $\mathsf{depth}_2$ no irrelevant-vertex rule can exist, even on planar graphs.
In particular, $\mathsf{depth}_2$ is the exact combinatorial barrier for the Irrelevant Vertex Technique under spanning constraints.

\end{abstract}

\let\sc\itshape
\thispagestyle{empty}

\newpage

\newpage
\thispagestyle{empty}
\tableofcontents

\newpage

\setcounter{page}{1}


\newcommand{\TC}{\textsc{Terminal Cyclability}\xspace}
\newcommand{\HCp}{\textsc{Hamiltonian Cycle}\xspace}
 
\section{Introduction}\label{sec:introduction}

A guiding principle in algorithmic graph theory is that the tractability of a
graph problem on a class of graphs is governed by the substructures the class
is allowed to contain. \HCp is the textbook example. On any class of bounded
treewidth it is solvable in linear time by Courcelle's theorem
\cite{Courcelle1990Monadic}, while on planar graphs it is already
\textsf{NP}-complete \cite{GareyJohnsonTarjan1976PlanarHC}. Writing
$\textsc{HC}(\mathcal{C})$ for the restriction of \HCp to a minor-closed class
$\mathcal{C}$, these two facts, together with the Grid Theorem
\cite{RobertsonS1986GraphMinorsV}, yield a clean dichotomy:
$\textsc{HC}(\mathcal{C})$ is polynomial-time solvable if and only if
$\mathcal{C}$ excludes some grid as a minor, equivalently, if and only if
$\mathcal{C}$ has bounded treewidth \cite{RobertsonS1986GraphMinorsV}. Grids are thus the precise
\emph{dichotomy delimiters}, under minors, of polynomial solvability for \HCp.
The same mechanism -- tractability below a treewidth threshold via
Courcelle's theorem plus hardness above it via \textsf{NP}-hardness in planar graphs --
delineates the complexity of a large family of graph problems.
Many algorithmic results that extend beyond bounded treewidth exploit this structural picture through the \emph{Irrelevant Vertex Technique}, which repeatedly reduces instances of large treewidth to equivalent instances of bounded treewidth before applying dynamic programming. 
This paradigm underlies Robertson and Seymour's algorithm for \textsc{Disjoint Paths} \cite{RobertsonSeymour1995DisjointPaths} and many subsequent developments in algorithmic graph minors.
Despite its remarkable generality, surprisingly little is understood about the precise conditions under which the technique itself applies.
Rather than seeking a dichotomy in the sense of classical complexity theory, in this paper we ask the following fundamental question:
\begin{eqnarray}
\begin{minipage}{13cm}
\textsl{What are the exact structural limitations of the Irrelevant Vertex Technique?}
\end{minipage}\label{questi0}
\end{eqnarray}

The search for such limitations naturally leads to \textsl{annotated graphs}, where a distinguished set of vertices forms part of the input.
An \emph{annotated graph} is a pair $(G,R)$ where $G$ is a graph and $R\subseteq V(G)$, depicted in \textcolor{BostonUniversityRed}{red}, is a set of annotated vertices.
In the \TC problem the input is an annotated graph $(G,R)$ and the question is whether $G$
contains a simple cycle passing through every red vertex. When $R=V(G)$, this is exactly \HCp. 
Given a class $\mathcal{A}$ of annotated graphs, we write $\textsc{TC}(\mathcal{A})$ for the restriction of \TC to the annotated graphs in
$\mathcal{A}$.
Passing from graphs to annotated graphs sharpens the lens: the natural containment relation is no longer the minor relation but its annotated refinement, the \emph{red-minor} relation \cite{ProtopapasThilikosWiederrecht2025ColorfulMinors,GorskyPW2026Quickly}.
Here, each branch set representing a red vertex is required to contain at least one red vertex \cite{ThilikosWiederrecht2025Bidimensionality,ProtopapasThilikosWiederrecht2025ColorfulMinors} of the host graph.
If the Irrelevant Vertex Technique admits an exact structural characterisation in this setting, it should therefore be governed not by excluded minors, but by excluded red-minors.

Cycles through prescribed vertices are a classical theme in graph theory, studied since the 1960s \cite{Dirac1960Abstrakten,BondyLovasz1981Cycles,
HaggkvistThomassen1982Circuits,Gould2009Cycles}.
The flavour of the problem is older than its complexity.
In Lewis Carroll's word game Doublets, first published in Vanity Fair in 1879, one transforms one word into another through intermediate words while being required to visit prescribed words along the way \cite{Carroll1879Doublets}.
Viewed graph-theoretically, Carroll's puzzle asks for a path through a distinguished set of vertices, making it a charming historical precursor of \TC (see also the related discussion by Bj\"orklund, Husfeldt, and Taslaman in \cite{BjorklundHusfeldtTaslaman2012Shortest}).

From the algorithmic point of view, however, the problem has been always studied under the assumption that the prescribed set is small.
Robertson and Seymour's algorithm for \textsc{Disjoint Paths} implies polynomial-time solvability for every fixed number $k=|R|$ of prescribed vertices \cite{RobertsonSeymour1995DisjointPaths}.
Since then, a long line of work has focused on improving the dependence on $k$, leading to algebraic algorithms \cite{BjorklundHusfeldtTaslaman2012Shortest}, kernelization results \cite{Wahlstrom2013Tutte}, and, most recently, an almost-optimal $2^{\mathbf{O}(\sqrt{k}\log k)}\cdot n$-time algorithm -- again relying on the Irrelevant Vertex Technique -- together with a near-linear kernel for planar graphs \cite{GahlawatRathodZehavi2025TCycle}.
Throughout this line of work, the parameter governing tractability is the size of the prescribed set.

The present paper departs from this regime entirely.
We place \textsl{no restriction} on the size of $R$; indeed, it may be even linear in $|V(G)|$.
Instead, we show that, for spanning routing problems, the structural quantity governing the applicability of the Irrelevant Vertex Technique is not the number of annotated vertices, but the way they are distributed throughout the graph.

The difficulty is not that spanning routing appears algorithmically harder than classical routing.
Rather, it is that one of the central paradigms of algorithmic graph theory -- the Irrelevant Vertex Technique -- suddenly ceases to behave in the way we have come to expect.
To better understand this phenomenon, we briefly recall the reduction paradigm itself.

The Irrelevant Vertex Technique repeatedly replaces an instance by an equivalent induced sub-instance of bounded treewidth on which  dynamic programming becomes applicable.
Thus it may be viewed as a structural data-reduction scheme whose combinatorial guarantee is the following.

We say that a problem $\Pi$, whose instances are graphs $G$ or annotated graphs $(G,R)$, has the
\emph{irrelevant vertex property} if every instance is equivalent to an induced sub-instance of bounded treewidth:
an induced subgraph $G'$ of $G$ -- with $R\subseteq V(G')$ in the annotated case -- of bounded treewidth that is a \textsf{yes}-instance if and only if the original is.
The reduction proceeds by repeatedly identifying a vertex whose deletion preserves the answer -- an \emph{irrelevant vertex} -- until bounded treewidth is reached.
Whenever such irrelevant vertices can be found efficiently, the reduction yields a polynomial-time preprocessing algorithm with a provable structural guarantee.
Once the treewidth is bounded, dynamic programming solves the problem in time linear in $n$, with the treewidth bound absorbed into the constant factor.
The quantity controlled by this data-reduction is the \emph{treewidth} of the reduced instance rather than its size and it is this bounded width that renders the instance tractable.
A central question in algorithmic graph theory is therefore to identify the conditions under which a problem enjoys the irrelevant vertex property.

For problems on \textsl{graphs}, this paradigm is remarkably robust.
Indeed, it often seems almost unconditional: with no restriction on the input, the irrelevant vertex property holds for the \textsc{Disjoint Paths} and the \textsc{Minor Checking} problem, as well as their common generalisation, the \textsc{Folio} problem \cite{RobertsonSeymour1995DisjointPaths} (see also \cite{CavallaroGKTW2026Optimal}).
For many further problems, excluding a minor is enough to recover it: 
recently, Sau, Stamoulis, and Thilikos \cite{SauStamoulisThilikos2025CMSO} extended the paradigm to a broad family of graph problems definable in a certain fragment of monadic second-order logic.

For \emph{annotated} problems this behaviour breaks down.

The obstacle is no longer the graph structure alone, but also the interaction between the graph and its annotation.
Excluding any non-planar pattern -- regardless of the annotation -- still permits every planar annotated graph and, on fully annotated planar instances \TC, is exactly \HCp, hence \textsf{NP}-hard. 
This brings us to the question driving this paper.
For a red-minor-closed class $\mathcal{A}$ of annotated graphs, we arrive at a more precise formulation of (\ref{questi0}):
\begin{eqnarray}
\begin{minipage}{13cm}
\textsl{What condition should $\mathcal{A}$ satisfy in order for \textsc{TC}$(\mathcal{A})$ to have the irrelevant vertex property?}
\end{minipage}\label{questi}
\end{eqnarray}
Typically, a precise answer to such a question amounts to finding an annotated graph
parameter whose boundedness across $\mathcal{A}$ is \textsl{both} necessary and
sufficient for the desired property.
Since Terminal Cyclability specialises to Hamiltonian Cycle on fully annotated graphs, where the governing parameter is treewidth, it is natural to search for the answer among annotated analogues of treewidth.
As treewidth is characterised by the largest grid minor, such analogues arise by prescribing how the grid interacts with the annotation, i\@e.\@ the red vertices.

\begin{figure}[ht]
 \centering
 \begin{tikzpicture}

 \pgfdeclarelayer{background}
		\pgfdeclarelayer{foreground}
			
		\pgfsetlayers{background,main,foreground}

 \begin{pgfonlayer}{background}
 \pgftext{\includegraphics[width=15.5cm]{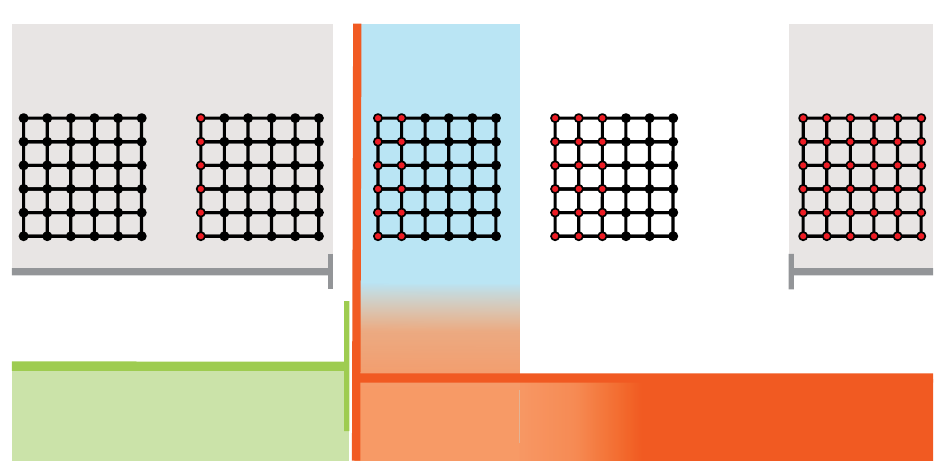}} at (C.center);
 \end{pgfonlayer}{background}
			
 \begin{pgfonlayer}{main}
 \node (C) [v:ghost] {};
 
 \end{pgfonlayer}{main}
 
 \begin{pgfonlayer}{foreground}

    \node (tw) [v:ghost] at (-6.5,3.1) {treewidth};

    \node (monodim) [v:ghost] at (-3.5,3.1) {monodim.\@};

    \node (bidim) [v:ghost] at (6.4,3.1) {bidim.\@};

    \node (depth_0) [v:ghost] at (-6.5,2.3) {$\mathsf{depth}_0$};

    \node (depth_1) [v:ghost] at (-3.5,2.3) {$\mathsf{depth}_1$};

    \node (depth_2) [v:ghost] at (-0.6,2.3) {$\mathsf{depth}_2$};

    \node (depth_3) [v:ghost] at (2.3,2.3) {$\mathsf{depth}_3$};

    \node (depth_dots) [v:ghost] at (4.3,2.3) {$\dots$};

    \node (depth_infty) [v:ghost] at (6.4,2.3) {$\mathsf{depth}_{\infty}$};

    \node (dots) [v:ghost] at (4.3,1) {$\mathbf{\dots}$};

    \node (sufficient_0) [v:ghost] at (-5,-0.9) {\textcolor{DimGray}{sufficient}};

    \node (necessary_0) [v:ghost] at (6.4,-0.9) {\textcolor{DimGray}{necessary}};

    \node (applicable) [v:ghost] at (-6.11,-1.45) {{\textcolor{AppleGreen}{\small applicability of the}}};
    
    \node (Irrelevant) [v:ghost] at (-4.86,-1.85) {\textbf{\textcolor{AppleGreen}{Irrelevant Vertex Technique}}};

    \node (ThisPaper) [v:ghost] at (-6.4,-2.5) {{This paper:}};

    \node (sufficient_1) [v:ghost] at (-4.85,-3.1) {\textbf{sufficient}};

    \node (boundary) [v:ghost] at (-0.6,-3.1) {\textbf{boundary}};

    \node (necessary_1) [v:ghost] at (4,-3.1) {\textbf{necessary}};
 \end{pgfonlayer}{foreground}

 \end{tikzpicture}
 \caption{The hierarchy of annotated grid parameters extending treewidth. Previous work (highlighted in \textcolor{black!60}{gray}) established that bounded $\mathsf{depth}_1$ suffices to guarantee tractability of \TC, via a reduction to the bounded-terminal regime where irrelevant-vertex methods are available, while unbounded $\mathsf{depth}_{\infty}$ already exhibits \textsf{NP}-hardness. This paper shows that the exact boundary where the Irrelevant Vertex Technique may be applied is reached at $\mathsf{depth}_2$, thereby establishing a combinatorial dichotomy.}
 \label{fig_DepthHierarchy}
\end{figure}

This yields an infinite hierarchy of parameters as illustrated in \zcref{fig_DepthHierarchy}. 
For each fixed integer $r\geq 0$, the \emph{$r$-outer-annotated $(k\times k)$-grid} is obtained from the $(k\times k)$-grid by colouring the $r$ leftmost columns red.
The parameter $\mathsf{depth}_r(G,R)$ is then defined as the maximum $k$ for which an $r$-outer-annotated $(k\times k)$-grid is a red-minor of $(G,R)$.
In the limit, we set $\mathsf{depth}_\infty(G,R)$ by requiring all vertices of the grid to be red.
Notice that the case $r=0$ simply asks for the largest grid-minor in the graph while disregarding annotation entirely.
Hence $\mathsf{depth}_0$ is functionally equivalent to treewidth itself.
As illustrated in \zcref{fig_DepthHierarchy}, reddening additional columns only makes the pattern harder to realize.
Consequently, the parameters form a genuine hierarchy -- the \textsf{depth}\emph{-hierarchy} -- extending treewidth to annotated graphs, whose all levels collapse to treewidth whenever $R=V(G)$.

The two extremes of this hierarchy were already understood and carry various names.
The parameter $\mathsf{depth}_1$ appears as \emph{monodimensionality} in the work of Sau, Schirrmacher, Siebertz, Stamoulis, Thilikos, and Vigny \cite{sau2026modelcheckinglowmonodimensionality} while $\mathsf{depth}_{\infty}$ was introduced by Thilikos and Wiederrecht \cite{ThilikosWiederrecht2025Bidimensionality} as \emph{bidimensionality} and was also used in the definition of the fragment $\mathsf{CMSO/tw+dp}$ of {Counting Monadic Second Order Logic}, introduced in  \cite{SauStamoulisThilikos2025CMSO}.
Both names are inspired by the distribution of the red vertices within the grid.
One may therefore view the parameters $\mathsf{depth}_r$ as successive levels of monodimensionality: the red vertices remain essentially one-dimensional inside the grid, but occupy an increasingly thick band.

From the perspective of the irrelevant vertex property, the two extremes already determine the landscape.
We have already observed $\mathsf{depth}_0$ to be the same as treewidth.
Moreover, bounded $\mathsf{depth}_1$ --equivalently bounded \textsl{torso treewidth} (see \cite{JansenSwennenhuis2024SteinerTree,HodorLaMicekRambaud2026ApexForest}), or \textsl{monodimensionality} -- is sufficient for \textsc{Terminal Cyclability} to enjoy the irrelevant vertex property \cite{ProtopapasThilikosWiederrecht2025ColorfulMinors}.
At the other end, unbounded $\mathsf{depth}_{\infty}$ -- known as \emph{bidimensionality} (see \cite{ThilikosWiederrecht2025Bidimensionality,ProtopapasThilikosWiederrecht2025ColorfulMinors}) -- already contains every planar annotated graph and therefore exhibits \textsf{NP}-hardness 
which renders bounded $\mathsf{depth}_{\infty}$ a necessary condition for \textsc{Terminal Cyclability} to enjoy the irrelevant vertex property.
Thus bounded $\mathsf{depth}_1$ is sufficient, bounded $\mathsf{depth}_{\infty}$ is necessary, and \textsl{the exact boundary must lie somewhere between them}.

Our main result shows that the exact boundary is reached just one step beyond monodimensionality.

\begin{theorem}\label{thm:intro-TC}
Let $\mathcal{A}$ be a red-minor-closed class of annotated graphs. Then
$\textsc{TC}(\mathcal{A})$ has the irrelevant vertex property if and only if
$\mathsf{depth}_2$ is bounded in $\mathcal{A}$, equivalently, if and only if
$\mathcal{A}$ excludes some $2$-outer-annotated  grid as a red-minor.
\end{theorem}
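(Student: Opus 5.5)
The proof has two directions, and I would treat them separately.

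\emph{Necessity.} Suppose $\mathsf{depth}_2$ is unbounded in $\mathcal{A}$. Since $\mathcal{A}$ is red-minor-closed, it then contains every $2$-outer-annotated $(k\times k)$-grid $\Gamma_k$, as well as every red-minor of these. I would build a family of instances $(H_k,R_k)\in\mathcal{A}$ of unbounded treewidth such that \emph{every} vertex is relevant: deleting any vertex turns a \yes-instance into a \textsf{no}-instance. Then no equivalent induced sub-instance of bounded treewidth exists.

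The natural candidate is built on $\Gamma_k$ itself. The two red columns force a terminal cycle to "comb" through the grid: it must go down the first column and back up the second, or make excursions into the uncoloured part between consecutive red vertices. I would add gadgets, obtained from $\Gamma_k$ by contractions and deletions so that the instance stays in $\mathcal{A}$. These gadgets make the \emph{unique} terminal cycle traverse a large grid-like region, in the sense that every row must be entered and left in a prescribed way. For example, each pair of horizontally adjacent red vertices in the two columns can be forced to be joined by a detour through row $i$ of the remainder of the grid. Then the cycle's union must contain a $\Omega(k)$-wall, and any induced subgraph containing $R$ and a terminal cycle has treewidth $\Omega(k)$.

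The key check is that the constructed annotated graph really is a red-minor of a $2$-outer-annotated grid. This works because each red vertex lies in the first two columns, so only those branch sets need red vertices.

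\emph{Sufficiency.} Suppose $\mathcal{A}$ excludes the $2$-outer-annotated $(c\times c)$-grid. I would show that any instance $(G,R)\in\mathcal{A}$ of large treewidth has an irrelevant vertex, and then iterate. By the grid theorem, $G$ contains a huge wall $W$. Because $\mathsf{depth}_2<c$, the red vertices cannot meet $W$ "two-dimensionally along a band of width two." I would prove a local structure theorem to this effect. Either a large subwall $W'$ is essentially red-free after removing a bounded apex set, or the red vertices attached to $W'$ can be covered by a bounded number of "red rails": paths, each of which is a $\mathsf{depth}_1$-type structure along a single boundary row or column. This is the analogue of the flat wall theorem relative to the annotation. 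Its proof would run the usual flat-wall argument and then use the excluded $2$-outer-annotated grid to bound how many disjoint red-linked layers can occur.

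Given such a region, the vertex at the centre of a red-free flat subwall is the candidate. I then need a spanning version of the Vital Linkage theorem. A terminal cycle in $G$ that is forced to use the central vertex, because it must collect red vertices on the rails, can be rerouted inside the flat wall, which has bounded "red thickness," to avoid the centre while still visiting every red vertex. The rails have $\mathsf{depth}_1$-like structure, so the cycle visits them sequentially along one side. This lets me reuse the bounded-$\mathsf{depth}_1$ irrelevant-vertex argument of \cite{ProtopapasThilikosWiederrecht2025ColorfulMinors} on each rail, combined with the classical unique-linkage rerouting for the interior.

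The main obstacle is this spanning Vital Linkage step. One must show that a cycle through all red vertices that weaves through many concentric cycles of $W'$ can be straightened without losing any red vertex. My first attempt would be a minimal-counterexample argument, choosing the cycle to minimise its intersection with the inner wall. If the rerouting fails, I would extract from the cycle's pattern across $\Omega(c)$ concentric layers two red columns together with a $c\times c$ grid. That is a $2$-outer-annotated grid red-minor, which contradicts the hypothesis.
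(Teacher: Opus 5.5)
Both directions have genuine gaps.

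\textbf{Necessity.} You correctly name the target: instances in $\mathcal{A}$ of unbounded treewidth in which every cycle through $R$ is Hamiltonian, so that no proper induced sub-instance is equivalent. But you never construct such instances, and the mechanism you sketch does not work.

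The claim that ``the cycle's union must contain an $\Omega(k)$-wall'' is impossible, since a cycle has treewidth $2$. The large treewidth has to live in edges of $G$ that no $R$-spanning cycle can use, while every vertex is still forced onto every solution. Without annotation this is exactly what the Vital Linkage Theorem forbids. So the whole content of this direction is a device by which the red vertices rule out all chords.

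``Forcing a detour through row $i$'' is not such a device. A simple cycle cannot go out and back along a single row, so detours must share rows, and nothing in your description stops the cycle from switching between them or bypassing vertices.

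The paper builds recursive gadgets $\mathsf{forward}_n$ and $\mathsf{backward}_n$:
\begin{itemize}
\item degree-$2$ red vertices subdivide edges and thereby force them;
\item induction shows the $R$-spanning solution is unique and uses every vertex;
\item nested loops $U_i$ meeting the rows $P_j$ for $j\le i$ give a bramble of order about $n/2$;
\item a separate recursive argument, stacking three sub-models, shows the gadgets are red-minors of the $2$-outer-annotated $(8\cdot 3^{n-1}\times 8\cdot 3^{n-1})$-grid.
\end{itemize}
Since you give no gadget, your check of membership in $\mathcal{A}$ is vacuous.

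\textbf{Sufficiency.} Your top-level plan matches the paper's: find a flat wall, take its centre, and use a spanning Vital Linkage Theorem to comb the solution. However, the structure you propose is aimed at the wrong place, and the hard step is only asserted.

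Bounded $\mathsf{depth}_2$ implies bounded bidimensionality, so a red-free flat subwall is always available and your ``red rails'' alternative never arises. The real danger is red vertices \emph{far} from the wall, in apices and inside vortices of the global almost-embedding. These can force the solution to cross the annulus around the centre many times.

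Bounding this is precisely the statement that vital spanning instances of bounded $\mathsf{depth}_2$ have bounded treewidth. The paper needs its whole local structure theorem for it:
\begin{itemize}
\item apices are removed using vitality;
\item each vortex is split into boundedly many segments that are either red-free or facial, meaning all red vertices lie on one ultimate frontier;
\item this split is proved via candles, an annotated Two Paths Theorem, and the fact that many disjoint candles around a nest yield a candle mesh and hence a $2$-outer-annotated grid.
\end{itemize}
A case analysis of simple, red, splitting, complex, mixed and common loops and links then bounds how many deep solution subpaths can touch $R$. Part of this uses explicit rerouting in a $4\times 4$ mesh that keeps every red vertex but frees other vertices. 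The argument finally reduces to the classical vital linkage theorem with boundedly many terminals.

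Your step ``if rerouting fails, extract two red columns and a $c\times c$ grid from the cycle's pattern'' skips all of this, and it is not how the contradiction arises. When the red vertices sit on a single face of a vortex, $\mathsf{depth}_2$ stays bounded, so there is no grid to extract; the bound must come from showing that rerouting succeeds.

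You also omit the large-clique-minor outcome of the flat wall theorem. The paper handles it by finding either a red $K_t$-minor, hence large $\mathsf{depth}_2$, or a small separator cutting off a blank big side, where the Robertson--Seymour folio argument yields the irrelevant vertex.
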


\zcref{thm:intro-TC} is merely the visible face of a more general phenomenon.
The underlying combinatorial dichotomy concerns spanning routing in its full generality.

\begin{figure}
\begin{center}
\begin{tikzpicture}[
 scale=0.4,
 vertex/.style={circle, fill, inner sep=1.7pt},
 edge/.style={black, line width=0.45pt}
]

\def\N{11}

\foreach \i in {0,...,\N} {
 \draw[edge] (\i,0) -- (\i,\N);
 \draw[edge] (0,\i) -- (\N,\i);
}

\foreach \x in {0,...,\N} {
 \foreach \y in {0,...,\N} {
 \node[vertex, black] at (\x,\y) {};
 }
}

\foreach \y in {0,...,\N} {
 \node[vertex, red] at (0,\y) {};
 \node[vertex, red] at (1,\y) {};
}

\end{tikzpicture}
\end{center}
\caption{The $2$-outer-annotated  $(12\times 12)$-grid. The annotated vertices are
depicted in \textcolor{red}{red}.}
\label{fig:two-outer-grid-intro}
\end{figure}

\paragraph{From cycles to linkages.}
An annotated graph $(G,R)$ records a distinguished set $R$; instead of asking that a single cycle covers $R$, we may prescribe a pattern of connections and ask that the routing realising it covers $R$.
A natural generalisation of \textsc{Terminal Cyclability} is the following spanning routing problem:
In \textsc{Spanning Disjoint Paths} (\textsc{SDP}) the input is an annotated graph $(G,R)$ together with terminal pairs $\mathcal{T}=\{(s_1,t_1),\dots,(s_k,t_k)\}$ and the task is to find pairwise vertex-disjoint paths $P_1,\dots,P_k$, forming a \emph{linkage} of $G$,  such that $P_i$ joins $s_i$ to $t_i$ for every $i\in[k]$ and, moreover, $R\subseteq V(P_1\cup\cdots\cup P_k)$.
A solution is thus a linkage that realises the prescribed pattern \emph{and} spans the annotation.
If $R=\varnothing$ this is the classical \textsc{Disjoint Paths} problem; if $R=V(G)$ and $k=1$ it asks for a Hamiltonian path between two prescribed vertices and both \TC and \HCp reduce to it.
In particular \textsc{SDP} is \textsf{NP}-complete already in restricted forms \cite{Karp1972Reducibility,GareyJohnsonTarjan1976PlanarHC}, so the question is not whether unrestricted \textsc{SDP} is tractable, but whether the treewidth-reduction machinery behind \textsc{Disjoint Paths} survives the spanning requirement.

\paragraph{Disjoint Paths and the Irrelevant Vertex Technique.}
The \textsc{Disjoint Paths} problem is one of the cornerstones of algorithmic
graph theory. For $k=2$ its tractability is already classical
\cite{Seymour1980Disjoint,Shiloach1980Polynomial,Thomassen19802Linked} and in
Graph Minors XIII Robertson and Seymour proved that, for every fixed $k$, it is
solvable in time $f(k)\cdot n^{3}$ \cite{RobertsonSeymour1995DisjointPaths}. This
theorem is a basic primitive behind minor testing, rooted minor detection, folio
computation, and a large part of the algorithmic theory of graph minors
\cite{LokshtanovSaurabhZehavi2020Efficient}; its running time was later improved
to quadratic \cite{KawarabayashiKobayashiReed2012Quadratic}, i.e., one 
running in $f(k)\cdot n^2$ time for some (huge) function $f$. 
The latter was recently improved in two different ways. 
The one is the almost-linear, for fixed $k$, algorithm of 
Korhonen,  Pilipczuk, and  Stamoulis in \cite{KorhonenPilipczukStamoulis2024AlmostLinear}
and the other is  the  $2^{2^{\poly(b)}\cdot \poly(k)}n^2$ time algorithm of Cavallaro,  Gorsky,  Kreutzer,  Thilikos, and
 Wiederrecht  in \cite{CavallaroGKTW2026Optimal}, where $b$ is the bidimensionlity of the terminals.
 Moreover, the planar case of \textsc{Disjoint Paths} has been  a perennial testing ground for sharper techniques 
\cite{DingSchrijverSeymour1992Planar,Schrijver1994DirectedPlanar,
AdlerEtAl2017PlanarIrrelevant,LokshtanovEtAl2025PlanarDP,
ChoOhOh2023PlanarDP,WlodarczykZehavi2023PlanarKernels}.

The structural engine behind all of these results is precisely the irrelevant vertex
technique described above, which for \textsc{Disjoint Paths} applies
unconditionally: above a treewidth threshold depending only on $k$, some
non-terminal vertex is always irrelevant. Establishing the existence of that
vertex is among the deepest parts of the series: the Vital Linkage Theorem of
Graph Minors XXI \cite{RobertsonSeymour2009UniqueLinkages} and the
Irrelevant Vertex Lemma of Graph Minors XXII
\cite{RobertsonSeymour2012Irrelevant}, later given a shorter proof by
Kawarabayashi and Wollan \cite{KawarabayashiWollan2010Shorter}, are what justify
the algorithm of Graph Minors XIII. Since then, irrelevant vertices have become a
standard interface between structural graph theory and algorithms
\cite{AdlerEtAl2017PlanarIrrelevant,LokshtanovSaurabhZehavi2020Efficient,
KorhonenPilipczukStamoulis2024AlmostLinear,SauStamoulisThilikos2025CMSO,
CavallaroGKTW2026Optimal}.

For the spanning variant, the obstacle is that $R$ is not assumed to be small and
its size alone is too crude a measure: the intuition is that a large red set is harmless when it can be
confined to the outer interface of a grid, yet a red set reaching only two columns
into a large grid already compels every spanning linkage to thread the grid's deep 
interior. What is needed is a structural criterion on $(G,R)$ that captures
exactly when large treewidth still forces an irrelevant vertex outside the
terminals and outside $R$ -- and the parameter that does so is $\mathsf{depth}_2$,
the same parameter that governs \cref{thm:intro-TC}.

\paragraph{The dichotomy.}
The irrelevant vertex property of the first part specialises to
$\textsc{SDP}(\mathcal{A})$, the restriction of \textsc{Spanning Disjoint
Paths} to instances  where $(G,R)\in\mathcal{A}$, with irrelevance understood in the
following concrete sense, the spanning analogue of the classical notion.

\begin{definition}\label{def:intro-irrelevant}
Given some red-minor-closed class of annotated graphs $\Acal$,
we say that $\textsc{SDP}(\mathcal{A})$ has the \emph{irrelevant vertex property}
if there exists a  function $f:\Nbbb\to\Nbbb$ such that, 
for every instance $(G,R,\Tcal)$, there is an equivalent one $(G',R,\Tcal)$
where $G'$ is an induced subgraph of $G$ containing the annotation $R$ and the terminals in $\Tcal$ and 
moreover $\tw(G')\leq f(|\Tcal|)$.
\end{definition}

Our main result determines exactly when $\textsc{SDP}(\mathcal{A})$ has the irrelevant vertex property. The criterion is the same as the one of \cref{thm:intro-TC}. 

\begin{theorem}\label{thm:intro-main-dichotomy}
Let $\mathcal{A}$ be a red-minor-closed class of annotated graphs. Then  $\textsc{SDP}(\mathcal{A})$ has the irrelevant vertex property if and
only if $\mathcal{A}$ has bounded $\mathsf{depth}_2$, equivalently, if and only
if $\mathcal{A}$ excludes some $2$-outer-annotated  grid as a red-minor.
\end{theorem}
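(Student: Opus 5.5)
The proof splits into the two directions of \cref{thm:intro-main-dichotomy}, and I would treat them quite differently.

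\textbf{Sufficiency.} Assume every $(G,R)\in\mathcal{A}$ has $\mathsf{depth}_2(G,R)\le d$. I will show that if $\tw(G)$ exceeds some threshold $f(|\mathcal{T}|)$, depending on $k=|\mathcal{T}|$ and on $d$ (which is fixed by $\mathcal{A}$), then there is a vertex $v\notin R\cup V(\mathcal{T})$ such that $(G,R,\mathcal{T})$ and $(G-v,R,\mathcal{T})$ are equivalent. Iterating this yields the induced subgraph $G'$. The argument has three steps.
\begin{enumerate}
\item Apply a local structure theorem for bounded $\mathsf{depth}_2$. I would take this to mean: large treewidth yields a large flat wall (after removing a bounded apex set, or with no apices in the planar-like setting) whose \emph{interior} is disjoint from $R$. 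All red vertices would then sit in a bounded number of boundary layers, or in bounded-size attachments that meet the wall only through a \emph{single} outer cycle. The intuition is that if red vertices reached two layers deep along a long stretch of the wall, we could contract to obtain a $2$-outer-annotated grid of order larger than $d$. Formally I would argue by contradiction: if many disjoint red vertices lie in distinct bags adjacent to two consecutive concentric cycles, then, using the grid's linkedness, route them to the first two columns of a grid red-minor.
\item Given a wall whose deep interior is red-free, reduce to a vital-linkage statement. Take a solution $\mathcal{P}$ that minimises the number of edges used inside the wall. Red vertices force $\mathcal{P}$ to visit only the outer layer. The $k$ terminal paths plus this outer sweep then interact with the wall through a bounded number of ``crossings'' ($O(k)$ path segments entering the interior).
\item Prove a spanning analogue of the Vital Linkage Theorem. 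If the central vertex $v$ were used by every solution, the linkage restricted to the wall would be \emph{vital} with respect to the spanning constraint. The spanning requirement only adds a fixed boundary cycle that must be covered, and that cycle can be absorbed as one extra ``terminal path'' hugging the boundary. So the classical bound of Robertson–Seymour (or Kawarabayashi–Wollan rerouting inside a flat wall) gives treewidth at most a function of $k$, which is a contradiction. Concretely, I would reroute the $O(k)$ crossing segments within a large homogeneous flat subwall so that they avoid the centre, keeping the outer cycle coverage intact.
\end{enumerate}
This is the step I expect to be the main obstacle. The rerouting must preserve coverage of red vertices in the attachments of the outer layers while only altering the interior. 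I would handle it by fixing the behaviour of $\mathcal{P}$ on the outer $c(k,d)$ layers and rerouting only strictly inside, which is possible because the interior is red-free.

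\textbf{Necessity.} Suppose $\mathcal{A}$ contains $2$-outer-annotated $(n\times n)$-grids for all $n$; these are in $\mathcal{A}$ because the class is red-minor-closed. I would build, with $k=1$, instances $(G_n,R_n,\{(s,t)\})$ from such grids in which every vertex of a large central subgrid is essential. The plan is to use a gadget-by-gadget planar construction, so that every spanning path must snake horizontally between the two red columns and the far side. Covering each pair of red vertices in row $i$, namely $(0,i)$ and $(1,i)$, consecutively forces a return trip through the row. I would design it so that the unique Hamiltonian-like spanning path traverses every row fully, after pruning with red-minor operations (subdividing or deleting edges to rigidify the structure). Then deleting any non-red vertex of high-treewidth regions destroys all solutions. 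Any equivalent induced sub-instance must therefore keep essentially the whole grid, so $\tw(G')\ge \Omega(n)$ is unbounded for fixed $|\mathcal{T}|=1$. Hence no function $f$ can exist.
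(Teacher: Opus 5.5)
\paragraph{Sufficiency.}
The gap is in Steps~2--3. You assert that once a wall with red-free interior is found, a minimal solution meets it in only $O(k)$ segments. You also assert that the spanning condition ``only adds a fixed boundary cycle'' that can be absorbed as one extra terminal path, so that the classical vital linkage bound applies. Neither claim is justified, and the first is exactly what fails without global control of $R$.

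For a flat wall with blank compass, the red vertices are not on its boundary; they lie anywhere outside it (in the paper, in apices and vortices). A spanning solution can be forced to make unboundedly many excursions through the wall to visit them in the right order. Cutting the solution at red vertices to get a classical instance then creates unboundedly many terminals, so the Robertson--Seymour bound gives nothing.

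The paper's own lower-bound gadgets (\zcref{lemma_GadgetsInGrid,lemma_GadgetTreewidth,lemma_VitalGadget}) show this concretely. They are planar, have two terminals and arbitrarily large treewidth. Being red-minors of $2$-outer-annotated grids, they also have bounded bidimensionality. Hence they contain arbitrarily large flat meshes whose compass has no red vertex and no terminal, with all of $R$ attached only through the perimeter. Yet the unique solution uses every vertex.

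Your Step~1 extracts nothing from $\mathsf{depth}_2$ that excludes this situation. Its ``two consecutive layers'' argument concerns red vertices inside the compass, of which there are none. The ``bounded-size attachments'' clause is false under bounded $\mathsf{depth}_2$: attach a long red path along the perimeter. So your chain of steps would prove too much. Also, ``$v$ is used by every solution'' does not make anything vital. The paper obtains vitality by taking a solution minimising edges off the annulus cycles and then contracting.

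What is missing is the paper's actual route:
\begin{itemize}
\item the local structure theorem \zcref{thm_localstructure}, in which bounded $\mathsf{depth}_2$ confines $R$ to a bounded apex set and to vortices whose segments are either red-free or facial (proved via candles, \zcref{thm_FacialSocieties,lemma_2outerInCandles,thm_CleanVortex});
\item the spanning vital linkage theorem \zcref{thm_VitalSpanningLinkage}, which uses this structure (the analysis of simple, red, splitting and complex loops and of red, mixed and common links) to show that only boundedly many red-carrying arcs reach deep into a well, reducing to a classical vital instance with few terminals;
\item only then the spanning combing lemma \zcref{thm_CombedAnus}, which delivers the bounded number of crossings you assumed.
\end{itemize}
You also skip the large-clique-minor outcome (\zcref{lemma_WahtToDoWithClique}) and the solution's interaction with apices.

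\paragraph{Necessity.}
Your logic is right: if every solution must use every vertex of a subgraph of large treewidth, any equivalent induced sub-instance keeps that subgraph. However, you give no construction, and the mechanism you sketch does not give uniqueness. Suppose red vertices sit only in the two leftmost columns and the path snakes through full rows. Then any interior vertical edge joining an outgoing and a returning strand lets the path shortcut, skipping interior vertices without missing a red vertex. Deleting such edges destroys the treewidth.

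Sequential excursions cannot achieve both properties at once. What is needed is nesting, as in the paper's recursive $\mathbf{forward}_n/\mathbf{backward}_n$ gadgets. There, degree-$2$ red vertices on two levels above the top row let the solution skip forward and backward. This forces nested loops $U_1,\dots,U_n$ of increasing depth, which together with the rows form a bramble of order about $n/2$.
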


\zcref{thm:intro-main-dichotomy} identifies when a natural spanning variant of \textsc{Disjoint Paths} has the irrelevant vertex property and the contrast with the unannotated world is the substance of the statement.
There, the property is in effect always present: for \textsc{Disjoint Paths} and its relatives large treewidth alone exposes an irrelevant vertex, while this is also the case, under the exclusion of a minor, for $\mathsf{CMSO/tw+dp}$-definable problems \cite{SauStamoulisThilikos2025CMSO}
(in fact $\mathsf{CMSO/tw+dp}$ is defined using disjoint paths as a core predicate). 
For the spanning annotated problem that we consider no such blanket guarantee holds: even under the exclusion of a red-minor the property may fail, and $\mathsf{depth}_2$ marks the exact line between the two cases.
The spanning requirement is, as far as we know, the first natural setting in which the Irrelevant Vertex Technique meets a precise non-trivial combinatorial limit.

\paragraph{The positive side.}
On a red-minor-closed class of bounded $\mathsf{depth}_2$, the Robertson--Seymour paradigm is fully restored for a genuinely spanning problem: an instance of \textsc{SDP} can be reduced, by deleting irrelevant vertices, to an equivalent instance of bounded treewidth. The resulting algorithm is summarised by the following.

\begin{theorem}
\label{sdp_times}
There is an algorithm that given an instance  $(G,R,\Tcal)$
of \textsc{SDP} either outputs a solution or reports that a solution does not exist in  $2^{2^{\poly(k+d)}}\cdot n^{2}$ time, where $k\coloneqq |\Tcal|$ and $d\coloneqq\mathsf{depth}_2(G,R)$.
\end{theorem}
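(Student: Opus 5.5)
The algorithm follows the Robertson--Seymour template. An irrelevant-vertex loop brings the treewidth down to a bound depending only on $k+d$, and a dynamic program then solves the bounded-treewidth instance. The loop uses the local structure theorem for annotated graphs of bounded $\mathsf{depth}_2$ and a spanning analogue of the Vital Linkage Theorem, both announced in the abstract. Concretely, I fix a threshold $w=\poly(k+d)$ and run the following round. Compute, in $2^{\poly(w)}\cdot n$ time, either a tree decomposition of width $\mathbf{O}(w)$ or a large wall $W$ of order $2^{\poly(k+d)}$ or $\poly(k+d)$, as required by the structure theorem. In the first case we stop and run the DP. In the second case we find a vertex $v\notin R\cup V(\mathcal{T})$ that is irrelevant, delete it, and repeat. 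Each round takes linear time and there are at most $n$ rounds, so the loop costs $f(k+d)\cdot n^2$. With a Kawarabayashi--Kobayashi--Reed or Cavallaro et al.\ style amortisation, removing many irrelevant vertices per round, one could do better, but $n^2$ is all the statement needs.

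The structure theorem is the core step. Since $(G,R)$ has no $2$-outer-annotated $(d+1)\times(d+1)$ grid as a red-minor, I would prove the following. Inside a sufficiently large wall $W$, after removing an apex set $A$ of size $\poly(d)$, there is a large flat subwall $W'$ whose interior $\Delta$ has two properties. First, $R$ meets $\Delta$ only within a thin band of bounded depth near the boundary of a bounded number of ``red lines''. Second, every red vertex in the interior lies on one of $\poly(d)$ red ``tracks'': paths of the wall such that all of $R\cap\Delta$ is within bounded distance of them, with no two such tracks far apart in a grid-like way. If the red vertices instead spread in two genuinely separated layers across many rows, then routing the grid's rows together with two columns made of red vertices gives a $2$-outer-annotated grid red-minor, a contradiction. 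This is the same mechanism as the $\mathsf{depth}_1$/torso-treewidth argument, lifted one level. The output is a large homogeneous flat region $D\subseteq\Delta$, isolated from $R$ except along at most one red ``spine'' path $Q$ that crosses it. Shrinking to a subregion away from $Q$ and away from the terminals, where homogeneity lets us pick it, we obtain a large flat, red-free, terminal-free subwall surrounded by many concentric cycles.

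Finally, I show that the centre vertex $v$ of that subwall is irrelevant, using a spanning vital-linkage argument. Take a solution linkage $\mathcal{P}$ that minimises the number of edges used inside the flat region, which is the usual ``tight'' choice. The classical Vital Linkage and Irrelevant Vertex Lemma argue that such a linkage meets only $g(k)$ concentric cycles deeply. The new difficulty is the spanning constraint: $\mathcal{P}$ must still visit the red vertices on the spine $Q$ and in the $\mathsf{depth}_1$-type band around the region. Because there are only two red ``layers'' available, the red vertices adjacent to the region can be served by detours along the region's boundary and along $Q$. I would prove this rerouting lemma by replacing $\mathcal{P}\cap D$ with a planar rerouting through the outer cycles, using flatness so that the reroute can be drawn in the disc without crossings, while preserving every red vertex that $\mathcal{P}$ visited.

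I expect this step to be the main obstacle. The Robertson--Seymour rerouting may discard red vertices, and the rerouting has to keep them, which is exactly what fails at $\mathsf{depth}_3$. Once $\tw(G')\le 2^{\poly(k+d)}$, a standard DP for spanning linkages runs in $2^{\mathbf{O}(\tw\log \tw + k)}\cdot n$ time. The DP uses states given by connectivity patterns of partial paths on bags, together with a flag ensuring every forgotten red vertex is covered, and it yields the total bound $2^{2^{\poly(k+d)}}\cdot n^2$. Backtracking through the DP recovers an actual solution, and the irrelevant-vertex deletions preserve equivalence, so that solution is valid in $G$.
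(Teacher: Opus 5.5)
\textbf{Main gap: the irrelevance step.} Your outer loop matches the paper: at most $n$ rounds of linear-time irrelevant-vertex search, then a bounded-treewidth DP that records coverage of forgotten red vertices. But the step you flag as the main obstacle is a genuine gap, and the mechanism you sketch for it cannot work.

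You carry over from the classical setting that a tight solution meets only $g(k)$ concentric cycles deeply. You then treat red vertices on $Q$ or near the region as the only new issue. Once the subwall is blank and terminal-free, any rerouting inside it trivially preserves coverage. The real problem is different: a spanning solution can cross the surrounding annulus an unbounded number of times, and the pairing of those crossings on the disc boundary is dictated by red vertices anywhere in $G$. The classical $g(k)$ bound comes from the Vital Linkage Theorem for $k$ paths, and it has no spanning analogue unless $\mathsf{depth}_2$ is used globally.

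The paper's lower-bound gadgets $\mathbf{forward}_n$ show this. They are planar red-minors of $2$-outer-annotated grids, hence of bounded bidimensionality, so they contain arbitrarily large blank, terminal-free flat walls. Yet each carries a unique $R$-spanning $s$--$t$ path through every vertex, so no vertex is irrelevant, and your local argument would go through there unchanged. For the same reason, the red structure inside the wall is not where $\mathsf{depth}_2$ matters: bounded $\mathsf{depth}_2$ gives bounded bidimensionality, which already yields a blank submesh.

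Two ingredients are missing:
\begin{itemize}
\item A spanning Vital Linkage Theorem: a vital spanning instance with $\mathsf{depth}_2\le r$ has treewidth $2^{\mathbf{poly}(r+k)}$. Its proof needs a global local-structure theorem. There, red vertices lie only in apices and vortices, and each vortex splits into boundedly many red-free segments and facial segments whose red vertices lie on one frontier. This structure is obtained by excluding candles, since many candles give a $2$-outer-annotated grid. On top of that comes an analysis of how loops and links of the vital linkage can enter a well.
\item A spanning annulus-combing lemma. Choose the solution minimising edges off the annulus cycles and contract it to a vital spanning instance. The first ingredient then makes mountains and valleys shallow and bounds the number of rivers. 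Comb the rivers onto rails. Only then apply a homogeneous-mesh irrelevant-vertex theorem to the ordinary \textsc{Disjoint Paths} instance left inside the blank centre.
\end{itemize}

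\textbf{Further gaps.} Your structure statement is false as stated. $K_N$ with $R=\emptyset$ has $\mathsf{depth}_2=0$, but no flat subwall survives deleting $\mathbf{poly}(d)$ vertices, so you need a separate clique case. The paper uses the colourful clique lemma: either a red $K_{(r+1)^2}$-minor exists, contradicting $\mathsf{depth}_2\le r$, or fewer than $(r+1)^2$ vertices separate $R$ from the clique's big side. In the second case, after duplicating the separator into terminals, the Robertson--Seymour weak-folio irrelevant vertex is irrelevant for the spanning problem as well. The same case absorbs dense inputs via Reed--Wood, which is what makes each round linear in $n$ rather than in $|E(G)|$.

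You also ignore the apex set. Solution edges from apices may land anywhere in the wall. So one must nest $2a+1$ annuli, pick one containing no solution-neighbour of an apex, and split the solution at the apices into at most $k+2a$ pairs before combing.
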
 

As \TC can be seen as a special case of \textsc{SDP}, where $k=3$, \cref{sdp_times}  also implies an $2^{2^{\poly(d)}}\cdot n^{2}$ time algorithm for \TC. Also, Carroll's puzzle, being   \textsc{SDP} with $k=1$, can be solved in the same running time
where $G$ expresses the ``graph of words'' and $R$ are the prescribed words in it.

The decisive feature of \cref{sdp_times} is that the dependence is on the number $k$ of paths and on $d$ and \emph{not} on $|R|$: the algorithm applies even when the set to be spanned is arbitrarily large, so that what governs the complexity is the placement of $R$ (measured by $d\coloneqq\mathsf{depth}_2(G,R)$) rather than its size.
To our knowledge this is the first parametrization of \textsc{SDP} -- and, through \cref{thm:intro-TC}, of \TC -- that does not charge for the distinguished set; the bounded-distinguished-set regime is recovered as the special case in which $d$ is automatically small.
At the technical heart of the positive direction is a spanning analogue of the \emph{Vital Linkage Theorem}:
a vital instance of \textsc{SDP} of $\mathsf{depth}_2$ at most $d$ has treewidth $2^{\poly(k+d)}$ 
where vitality asks the spanning linkage realising $\mathcal{T}$ to be unique and to span the entire vertex set of the graph (see \cref{thm_findIrrelevantVertexFlatWall}).

\paragraph{The negative side.}
The $\mathsf{depth}_2$ criterion is tight in the strongest sense we could ask for.
If $\mathcal{A}$ has unbounded $\mathsf{depth}_2$ then arbitrarily large $2$-outer-annotated  grids occur and no irrelevant-vertex rule of the above form can exist -- already for a single terminal pair and already on \emph{planar}
graphs.
The obstruction is best seen through $R$-spanning vital linkages.
A classical vital linkage spans all of $V(G)$ and is the unique linkage with its pattern and the unique-linkage theorem forces its host to have bounded treewidth \cite{RobertsonSeymour2009UniqueLinkages}; an $R$-spanning vital linkage is required to be unique only \emph{subject to} spanning $R$.
We construct, for $k=1$, planar annotated graphs of arbitrarily large treewidth carrying an $R$-spanning vital linkage.
Hence the bounded-treewidth phenomenon for vital linkages does not survive the replacement of $V(G)$ by $R$ and no treewidth threshold can
guarantee an irrelevant vertex.
Returning to \TC, the same construction applies already to cycles through a prescribed set: the irrelevant-vertex method provably breaks down once $R$ carries arbitrarily large $2$-outer-annotated  grids, so the dichotomy of \cref{thm:intro-TC} is tight in the same strong sense.
Already for \textsc{Terminal Cyclability}, the situation of the prescribed set, rather than its size, decides whether the technique applies.
Taken together, these results identify $\mathsf{depth}_2$ as the exact combinatorial boundary of the Irrelevant Vertex Technique for spanning routing problems.

\begin{figure}[ht]
 \centering
 \begin{tikzpicture}

 \pgfdeclarelayer{background}
		\pgfdeclarelayer{foreground}
			
		\pgfsetlayers{background,main,foreground}

 \begin{pgfonlayer}{background}
 \pgftext{\includegraphics[width=7.6cm]{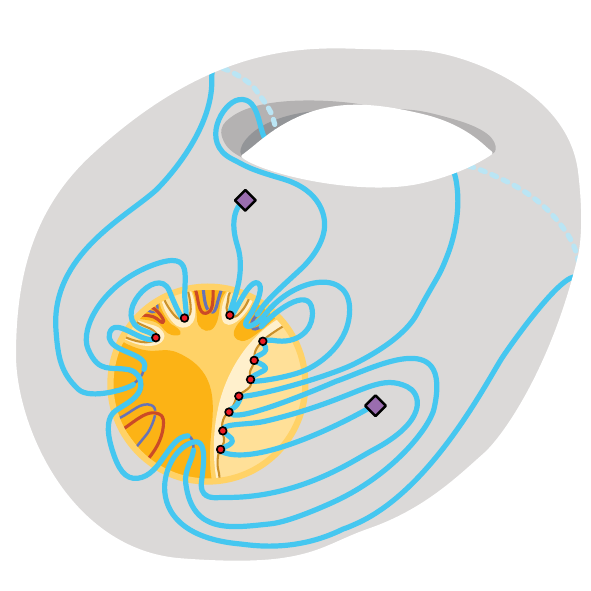}} at (C.center);
 \end{pgfonlayer}{background}
			
 \begin{pgfonlayer}{main}
 \node (C) [v:ghost] {};
 
 \end{pgfonlayer}{main}
 
 \begin{pgfonlayer}{foreground}
 \end{pgfonlayer}{foreground}

 \end{tikzpicture}
 \caption{The local structure of an annotated graph excluding a $2$-outer-annotated grid as a red-minor. The \textcolor{DarkBananaYellow}{yellow} disc is a vortex which is partitioned into non-planar (darker) and planar (lighter) regions. The red vertices lie on the outer face of the planar regions of the vortex.
 In \textcolor{CornflowerBlue}{blue} we depict a path spanning all red vertices while connecting the two \textcolor{Amethyst}{purple} terminals.}
 \label{fig_Depth2Structure}
\end{figure}

\paragraph{Why $\mathsf{depth}_2$?}
Before proceeding, we briefly explain why $\mathsf{depth}_2$ turns out to be the correct parameter and why, in retrospect, one should expect it to appear.

The central obstacle in extending the Irrelevant Vertex Technique to spanning routing is no longer the topology of the graph alone, but the interaction between topology and annotation.
In the Graph Minor Structure Theorem, every graph of sufficiently large treewidth is described in terms of pieces that are almost embeddable into a surface.
The only genuinely complicated parts of such a decomposition are the \emph{vortices}: regions where the graph may exhibit arbitrary non-planar behaviour while remaining attached to the embedded part through a controlled interface.
For spanning problems, these vortices acquire a second role.
Besides carrying the topological complexity of the graph, they may also contain the annotated vertices that every solution is required to visit.
Understanding how these two sources of complexity interact is precisely the difficulty addressed in this paper.

Our local structure theorem reveals that excluding a $2$-outer-annotated grid as a red-minor forces these two phenomena to separate.
On the one hand, bounded $\mathsf{depth}_2$ implies bounded bidimensionality, and therefore every annotated vertex must already lie either in the apex set or inside a vortex~\cite{ThilikosWiederrecht2025Bidimensionality,ProtopapasThilikosWiederrecht2025ColorfulMinors}.
On the other hand, and this is the new structural insight, each vortex itself admits a decomposition into two qualitatively different types of regions.
Some regions may contain the full non-planar complexity of typical vortices but are guaranteed to contain no red vertices.
The remaining regions are essentially planar and, moreover, all red vertices contained in such a region are confined to a single face.
Thus the topology and the annotation become disentangled inside every vortex: non-planar behaviour occurs where there are no annotated vertices, while red vertices occur only in regions that are topologically simple.

This phenomenon is not accidental.
The $2$-outer-annotated grid is precisely the annotated analogue of the shallow-vortex grid introduced by Thilikos and Wiederrecht~\cite{ThilikosW2024Killing}, which characterises when vortices can no longer be simplified inside the Graph Minor Structure Theorem.
Viewed from this perspective, $\mathsf{depth}_2$ is not merely another member of the depth hierarchy but the first parameter that detects whether topology and annotation can still be separated inside vortices.

This separation is ultimately what makes our positive result possible.
Once every annotated vertex lies on a bounded collection of facial regions, spanning linkages can be analysed using topological arguments that are impossible in general.
See \zcref{fig_Depth2Structure} for an illustration.
Conversely, the $2$-outer-annotated grid is exactly the first configuration in which annotated vertices can no longer be confined to a bounded number of faces.
The additional freedom created by this failure is precisely what enables the construction underlying our lower bound.
In this sense, the dichotomy of \zcref{thm:intro-main-dichotomy} is already encoded in the local geometry of vortices: bounded $\mathsf{depth}_2$ is exactly the point where topology and annotation cease to interfere with one another.

\paragraph{Beyond the Irrelevant Vertex Technique.}
\zcref{thm:intro-main-dichotomy} settles the applicability of the Irrelevant Vertex Technique for \textsc{SDP}.
It does not, however, settle the complexity of the problem itself.
Our dichotomy concerns the boundary of a fundamental algorithmic paradigm rather than the boundary of polynomial-time solvability.

In particular, it remains open whether \textsc{SDP} is already \textsf{NP}-hard for some fixed value of $k$ on annotated graphs of unbounded $\mathsf{depth}_2$.
A positive answer would establish $\mathsf{depth}_2$ not only as the exact boundary of the Irrelevant Vertex Technique, but also as the exact structural boundary of polynomial-time solvability.

There are, however, reasons to believe that the picture may be more subtle.
One possibility is that algorithmic paradigms fundamentally different from the Irrelevant Vertex Technique -- for example algebraic methods in the spirit of Bj\"orklund, Husfeldt, and Taslaman~\cite{BjorklundHusfeldtTaslaman2012Shortest} -- may succeed beyond classes of bounded $\mathsf{depth}_2$.
Another is that $\mathsf{depth}_2$ already marks the boundary of fixed-parameter tractability, so that algorithms beyond it necessarily leave the \textsf{FPT} regime and require genuinely different techniques.
Recent progress on \textsf{XP} algorithms exploiting the structure of vortices~\cite{FioriniJWY2025IntegerPrograms} may provide a promising starting point for such an approach.

Whether either of these possibilities can be realised remains, in our view, one of the most intriguing questions raised by this work.

\paragraph{Related work.}
\cref{thm:intro-main-dichotomy} should be compared with recent quantitative work
on irrelevant vertices. Cavallaro, Gorsky, Kreutzer, Thilikos, and Wiederrecht
prove a general irrelevant-vertex theorem for the $(k,d)$-\textsc{Folio} problem
\cite{CavallaroGKTW2026Optimal}: if the terminals are drawn from a set $R$ of
bidimensionality $b$, the irrelevant-vertex threshold has the form
$2^{\poly(b+d)}\cdot\poly(k)$, so the distribution of the distinguished set
controls the quantitative cost of the machinery. Our result is aligned with this
viewpoint but goes one step further for the spanning problem: $\mathsf{depth}_2$
is not only a quantitative parameter in the running time, but the
\emph{qualitative} sharp barrier for the applicability of the technique itself.
More broadly, the paper fits into the developing theory of rooted, annotated, and
colorful graph parameters: rooted grid minors show how a distinguished set can be
forced onto a grid \cite{MarxSeymourWollan2017RootedGrid}; torso-type parameters
measure the contribution of the annotation to treewidth after annotation-free
parts are compressed
\cite{JansenSwennenhuis2024SteinerTree,HodorLaMicekRambaud2026ApexForest,
FioriniEtAl2025FaceCovers}; annotated treewidth and bidimensionality measure
grid minors controlled by the annotation
\cite{ThilikosWiederrecht2025Bidimensionality,SauStamoulisThilikos2025CMSO,
GorskyPW2026Quickly}; and colorful minors provide the
structural language for several possibly overlapping annotated sets
\cite{ProtopapasThilikosWiederrecht2025ColorfulMinors}.

\subsection{Proof outline and organisation}
The positive direction rests on three ingredients. A local structure theorem for
annotated graphs of bounded $\mathsf{depth}_2$ (\cref{sec_structureTheorem}) shows that,
away from large clique minors, the annotation can be combed onto a few layers of
a flat wall. This feeds the spanning unique-linkage theorem
(\cref{sec_VitalTheorem}), which bounds the treewidth of vital instances of
bounded $\mathsf{depth}_2$, building on the vital-linkage bounds of
\cite{CavallaroGKTW2026Optimal} and the flat-wall machinery of
\cite{SauST2024More,ProtopapasThilikosWiederrecht2025ColorfulMinors,GorskyPW2026Quickly}. Together
they yield the irrelevant vertex (\cref{sec_Irrelevant}) and the algorithm
(\cref{sec_algorithm}). The matching lower bound, including its planarity, is
given in \cref{sec_CounterExample}.
Below, we give a slightly more in-depth overview of our proof and the techniques used.

\paragraph{The local structure of annotated graphs of small $\mathsf{depth}_2$.}
The main structural contribution of this paper is a ``local'' structure theorem for annotated graphs with bounded $\mathsf{depth}_2$.
Here the term ``local'' means ``with respect to a large wall'' -- or tangle if one prefers the language of Robertson and Seymour's Graph Minors Series.
The difference to typical structure theorems from the realm of graph minors is that, instead of describing the structure of the entire graphs in terms of clique sums or a tree-decomposition, we only describe the structure of the part of the graph which is highly connected to a given wall which is part of the input.
Historically, such local structure theorems have turned out to be the more important and powerful counterpart of global theorems -- see for example the introduction of \cite{RobertsonS2003GraphMinorsXVI}.
Our structure theorem describes the local structure of annotated graphs of small $\mathsf{depth}_2$ in terms of a so-called almost embedding on a surface of bounded Euler-genus under the deletion of a bounded size apex set and with the exception of a bounded number of vortices -- each of bounded depth.
What sets the theorem apart from comparable structure theorems \cite{ProtopapasThilikosWiederrecht2025ColorfulMinors,GorskyPW2026Quickly} is a more refined treatment of the red vertices:
All red vertices must be confined to the vortices and into the apex set. Moreover, we show that each vortex can be partitioned into non-planar areas which are void of red vertices, and (essentially) planar areas, each with a single face that hosts all red vertices contained in the area.
This ``facial'' behaviour is key for our algorithmic applications and the most technical part in our structural result.
\smallskip

Our local structure theorem is substantially stronger than a refinement of the Flat Wall Theorem.
Its proof genuinely requires the full machinery of Robertson and Seymour's structure theory, combining almost embeddings with the recently developed theory of colorful minors.
Although in this paper we use the theorem only as an ingredient in the irrelevant-vertex algorithm, we expect it to have independent applications beyond the present work.
\smallskip

To be able to find this structure, we, in particular, require an annotated refinement of the so-called Two Paths Theorem which may be of independent interest (see for example \zcref{thm_FacialSocieties}).
We base our proof on recent structure theorems for annotated graphs of bounded bidimensionality -- since bounded $\mathsf{depth}_2$ implies bounded $\mathsf{depth}_{\infty}$.
Indeed, we use the structure theorem with polynomial bounds due to Gorsky, Protopapas, and Wiederrecht \cite{GorskyPW2026Quickly}.
From here, we start analysing the structure of the vortices with respect to the red vertices inside.
In order to reach the final state we show that a technique introduced by Thilikos and Wiederrecht \cite{ThilikosW2024Killing} known as ``vortex killing'' can be applied to either find a large $2$-outer-annotated grid as a red-minor, or we may increase the apex set slightly in order to remove a certain substructure we call a ``candle'' from the vortices.
The absence of candles is then the core property that enables us to push the red vertices remaining onto a single facial boundary.

A conceptual difficulty is, that we are working with almost embeddings and not real embeddings.
This is unavoidable whenever one tries to prove a local structure theorem in the context of graph minors.
However, this also makes it hard to properly define the notion of a face.
Therefore, we introduce a property we call ``facial'' for annotated graphs in almost embeddings and make use of a special kind of curve to express the property we need.

\paragraph{A Vital Spanning Linkage Function.}
The fundamental engine behind Robertson and Seymour's algorithm for the \textsc{$k$-Disjoint Paths} problem is the so-called \emph{unique linkage function} \cite{RobertsonSeymour2009UniqueLinkages}.
Let us denote this function by $\beta$.
We say that an instance of the \textsc{$k$-Disjoint Paths} problem is \emph{vital} if it has a unique solution and this solution uses all vertices in the graph.
The \emph{Vital Linkage Theorem} of Robertson and Seymour \cite{RobertsonSeymour2009UniqueLinkages} (see \cite{CavallaroGKTW2026Optimal} for a more recent proof with better bounds) says that any vital instance of the \textsc{$k$-Disjoint Paths Problem} has treewidth at most $\beta(k)$.
The Irrelevant Vertex Technique itself is then a non-trivial and technical consequence of the Vital Linkage Theorem.
The route we are taking is that we show on vital instances of \textsc{$k$-Spanning Disjoint Paths} that, if $\mathsf{depth}_2$ is bounded, one can reduce the instance to an equivalent vital instance of \textsc{$k$-Disjoint Paths} where the Vital Linkage Theorem bounds the treewidth.
This allows us to prove that vital instances of \textsc{$k$-Spanning Disjoint Paths} of small $\mathsf{depth}_2$ have small treewidth.

In order to implement this reduction, we show that -- under the assumption of bounded $\mathsf{depth}_2$ -- any part of the solution that interacts with the red vertices in a meaningful way must be radically limited.
At their core, most of our arguments amount to the observation that the order in which the red vertices are visited, as well as the distribution of the red vertices onto the paths, are mostly irrelevant to the feasibility of the solution.
This, in particular, means that even a surprisingly small amount of bidimensional infrastructure is often enough to reroute a solution to avoid some vertex and thereby contradict the vitality assumption.
  
\paragraph{Irrelevant vertices.}
With a spanning version of the Vital Linkage Function established, we are now able to proceed with the design of the actual algorithm.
The final algorithm will only make use of the so-called \emph{Flat Wall Theorem} (see for example \cite{RobertsonSeymour1995DisjointPaths,KawarabayashiTW2018New}) and does not require the full power of our local structure theorem for annotated graphs of bounded $\mathsf{depth}_2$.
To prove the existence of an irrelevant vertex -- always under the assumption that $\mathsf{depth}_2$ is bounded -- we follow a well-established route.
We prove a variant of the so-called ``Annulus Combing Lemma'' of Golovach, Stamoulis, and Thilikos \cite{GolovachST2023Combing} for the setting of spanning linkages.
Roughly speaking the Annulus Combing Lemma says that, given a large enough cylindrical grid which is ``almost embedded'' into an annulus such that most of the remaining graph -- including the terminals -- can be confined into the two holes of the annulus, then any solution to the \textsc{$k$-Disjoint Paths} problem can be transformed to one that intersects the embedded part in the columns of the cylindrical grid only.
The main advantage of this piece of technology is, that it allows to decompose any given solution into two independent instances of the \textsc{$f(k)$-Disjoint Paths} problem.
Establishing a variant for the \textsc{$k$-Spanning Disjoint Paths} problem allows us to implement the following strategy:

\begin{figure}[ht]
 \centering
 \begin{tikzpicture}

 \pgfdeclarelayer{background}
		\pgfdeclarelayer{foreground}
			
		\pgfsetlayers{background,main,foreground}
			
 \begin{pgfonlayer}{main}
 \node (C) [v:ghost] {};

 \node(L) [v:ghost] at (-3.5,0) {
 \begin{tikzpicture}

 \pgfdeclarelayer{background}
		 \pgfdeclarelayer{foreground}
			
		 \pgfsetlayers{background,main,foreground}

 \begin{pgfonlayer}{background}
 \pgftext{\includegraphics[width=5cm]{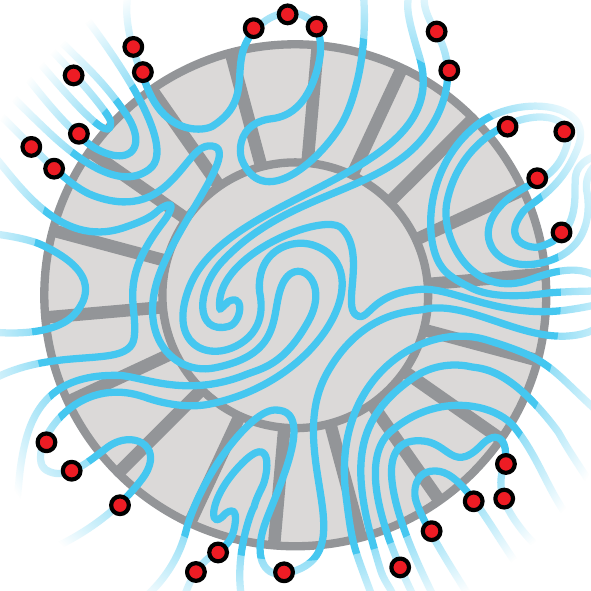}} at (C.center);
 \end{pgfonlayer}{background}
			
 \begin{pgfonlayer}{main}
 \node (C) [v:ghost] {};
 
 \end{pgfonlayer}{main}
 
 \begin{pgfonlayer}{foreground}
 \end{pgfonlayer}{foreground}

 \end{tikzpicture}
 };

 \node(M) [v:ghost] at (3.5,0) {
 \begin{tikzpicture}

 \pgfdeclarelayer{background}
		 \pgfdeclarelayer{foreground}
			
		 \pgfsetlayers{background,main,foreground}

 \begin{pgfonlayer}{background}
 \pgftext{\includegraphics[width=5cm]{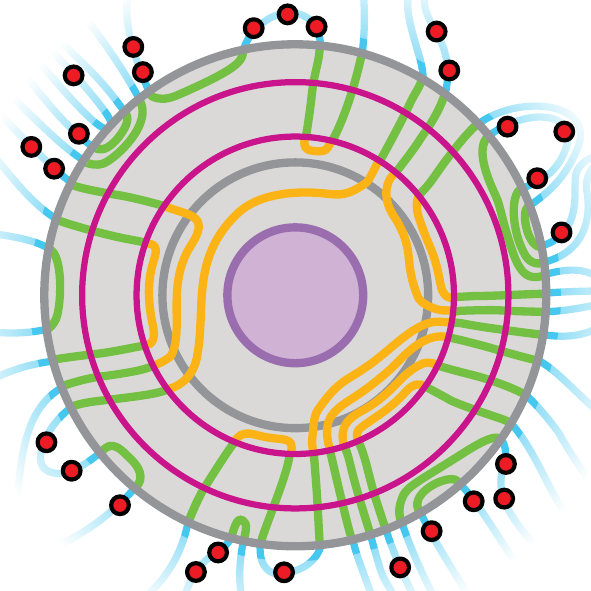}} at (C.center);
 \end{pgfonlayer}{background}
			
 \begin{pgfonlayer}{main}
 \node (C) [v:ghost] {};
 
 \end{pgfonlayer}{main}
 
 \begin{pgfonlayer}{foreground}
 \end{pgfonlayer}{foreground}

 \end{tikzpicture}
 };

 \node (i) [v:ghost] at (-3.5,-3) {\textsl{(i)}};
 \node (ii) [v:ghost] at (3.5,-3) {\textsl{(ii)}};

 \end{pgfonlayer}{main}
 
 \begin{pgfonlayer}{foreground}
 \end{pgfonlayer}{foreground}

 \end{tikzpicture}
 \caption{(i) An instance of the \textsc{$k$-Disjoint Paths} problem partially drawn in a disc ``guarded'' by a large cylindrical wall.
 (ii) The outcome of an application of the Annulus Combing Lemma followed by an application of known techniques for redrawing linkages in a disc. The result is an equivalent solution to the problem which avoids the central part of the wall.}
 \label{fig_RedrawModel}
\end{figure}

Once we have found a large flat wall -- one may think of a flat wall as a big grid-like subgraph $W$ of a graph $G$ such that $W$ itself together with all attachments of $G-W$ to the interior parts of $W$ can be drawn\footnote{Technically this is not a drawing but an ``almost embedding'', but for the sake of this informal description, the intuition of a drawing is sufficient. See \zcref{subsec_grpahMinors} for the formal definitions.} in a disc.
Since $\mathsf{depth}_2$ is bounded, there must exist a large subwall $W'$ of $W$ whose interior attachments do not contain any red vertices.
Another application of pigeonhole yields a still huge subwall $W''$ of $W'$ such that now also no terminal attaches to the interior.
Notice now that the inner part of $W''$ is surrounded by a large cylindrical grid which allows for the application of the Annulus Combing Lemma.
See \zcref{fig_RedrawModel} for an illustration.
To prove the spanning version of the Annulus Combing Lemma, we adapt the recent proof due to Cavallaro, Gorsky, Kreutzer, Thilikos, and Wiederrecht \cite{CavallaroGKTW2026Optimal} to our setting.
A key observation is that this is the only part where the Spanning Vital Linkage Function is explicitly applied.
Once the Spanning Annulus Combing Lemma is proven, we have reached a point where neither our structure theorem nor the Spanning Vital Linkage Function are explicitly necessary any more.

\paragraph{The lower bound.}
To prove the lower bound we give a recursive construction of gadgets which contain vital instances of \textsc{$k$-Spanning Disjoint Paths} for $k \in \{ 1,2\}$ while having increasing treewidth and being red-minors of sufficiently large $2$-outer annotated grids.
The idea follows more or less from the way the existence of the Spanning Unique Linkage function is proven:
A crucial part of the upper bound is to consider a vital solution and then decompose the paths in this solution into families of homotopic subpaths in a surface which have all of their endpoints on certain parts of the boundary -- this is really why we require the structure theorem for annotated graphs of bounded $\mathsf{depth}_2$.
Here, we encounter the same thing.
However, while bounded $\mathsf{depth}_2$ implies that all red vertices sit -- essentially -- on the boundary of faces, if $\mathsf{depth}_2$ is unbounded, there are ways to ``skip over'' some of the red vertices with additional paths.
The ability to skip over can be exploited to push parts of the solution deeper and deeper into a family of concentric cycles without generating any chance to reroute.

\begin{figure}[ht]
 \centering
 \begin{tikzpicture}

 \pgfdeclarelayer{background}
		\pgfdeclarelayer{foreground}
			
		\pgfsetlayers{background,main,foreground}

 \begin{pgfonlayer}{background}
 \pgftext{\includegraphics[width=15.5cm]{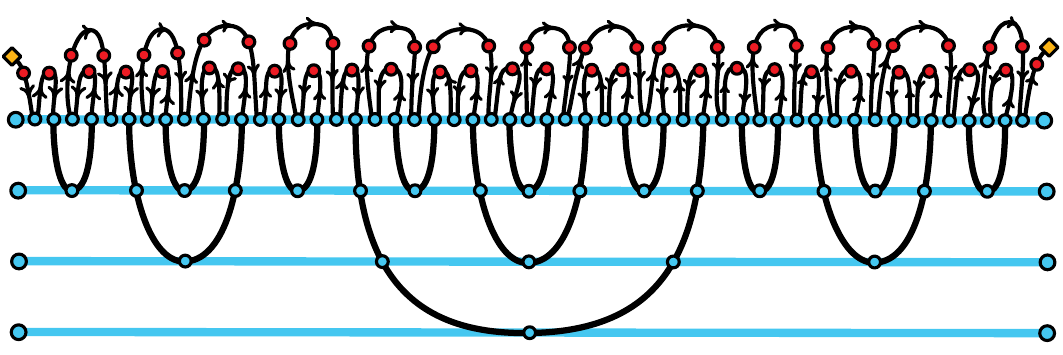}} at (C.center);
 \end{pgfonlayer}{background}
			
 \begin{pgfonlayer}{main}
 \node (C) [v:ghost] {};
 
 \end{pgfonlayer}{main}
 
 \begin{pgfonlayer}{foreground}
 \end{pgfonlayer}{foreground}

 \end{tikzpicture}
 \caption{The recursive construction of depth $4$ that shows that within the class of planar annotated graphs that are red-minors of $2$-outer annotated grids there exist examples which are vital instances of \textsc{$1$-Spanning Disjoint Paths} of unbounded treewidth.}
 \label{fig_Forward4Gadget}
\end{figure}

\zcref{fig_Forward4Gadget} shows an example of one of our gadgets.
The two 
\textcolor{DarkBananaYellow}{yellow}
vertices are the terminals and the black path is the unique solution.
The arrows in the figure indicate in which direction the solution must traverse the edges drawn above the top \textcolor{CornflowerBlue}{blue} line when going from left to right.
Notice that there are indeed $2$ levels above the top \textcolor{CornflowerBlue}{blue} line and those $2$ levels can be used to skip forward and backward.
The red vertices along these oriented edges are precisely what forces the entire solution to be unique:
Each red vertex has degree $2$, which means  that all of these ``subdivided edges'' must be used by any solution.
Moreover, every second one of the vertices on the top \textcolor{CornflowerBlue}{blue} line is incident with two of these forced edges while every other vertex along that line is incident to one of the black paths intruding deeper into the lower levels.
The construction in \zcref{fig_Forward4Gadget} is of a ``forward'' type and has a relatively simple recursive structure.

\begin{figure}[ht]
 \centering
 \begin{tikzpicture}

 \pgfdeclarelayer{background}
		\pgfdeclarelayer{foreground}
			
		\pgfsetlayers{background,main,foreground}
			
 \begin{pgfonlayer}{main}
 \node (C) [v:ghost] {};

 \node(L) [v:ghost] at (-3.8,0) {
 \begin{tikzpicture}

 \pgfdeclarelayer{background}
		 \pgfdeclarelayer{foreground}
			
		 \pgfsetlayers{background,main,foreground}

 \begin{pgfonlayer}{background}
 \pgftext{\includegraphics[width=5.5cm]{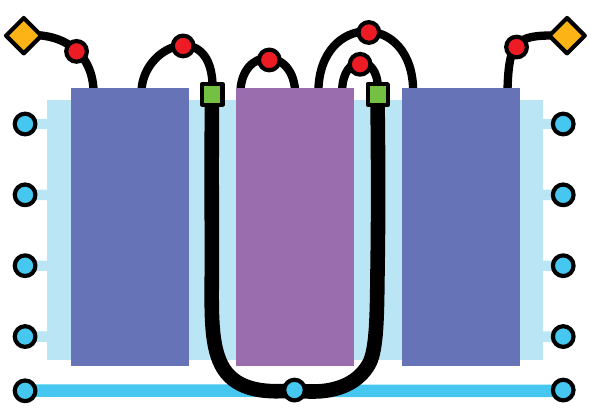}} at (C.center);
 \end{pgfonlayer}{background}
			
 \begin{pgfonlayer}{main}
 \node (C) [v:ghost] {};
 
 \end{pgfonlayer}{main}
 
 \begin{pgfonlayer}{foreground}
 \end{pgfonlayer}{foreground}

 \end{tikzpicture}
 };

 \node(M) [v:ghost] at (3.8,0) {
 \begin{tikzpicture}

 \pgfdeclarelayer{background}
		 \pgfdeclarelayer{foreground}
			
		 \pgfsetlayers{background,main,foreground}

 \begin{pgfonlayer}{background}
 \pgftext{\includegraphics[width=5.5cm]{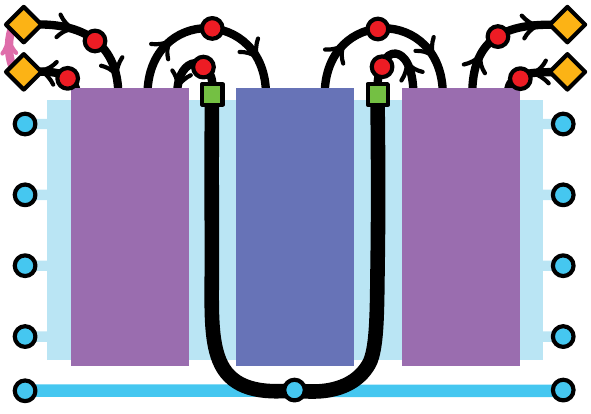}} at (C.center);
 \end{pgfonlayer}{background}
			
 \begin{pgfonlayer}{main}
 \node (C) [v:ghost] {};
 
 \end{pgfonlayer}{main}
 
 \begin{pgfonlayer}{foreground}
 \end{pgfonlayer}{foreground}

 \end{tikzpicture}
 };

 \node (i) [v:ghost] at (-3.8,-2.3) {$\mathsf{forward}_n$};
 \node (ii) [v:ghost] at (3.8,-2.3) {$\mathsf{backward}_n$};

 \end{pgfonlayer}{main}
 
 \begin{pgfonlayer}{foreground}
 \end{pgfonlayer}{foreground}

 \end{tikzpicture}
 \caption{Schematic illustrations of the gadgets of the forward type -- on the left -- and the backward type -- on the right. In \textcolor{MidnightBlue}{dark blue} we mark copies of the lower level forward gadget, while the \textcolor{Amethyst}{purple} boxes mark copies of the lower level backward gadget.}
 \label{fig_GeneralGadgets}
\end{figure}

In fact, we can divide the gadget in \zcref{fig_Forward4Gadget} into four parts:
The black path that goes all the way down to the bottom \textcolor{CornflowerBlue}{blue} line, the part that is drawn to the left of it, to its right, and inside of it.
The left and the right most parts are themselves smaller gadgets of the forward type, while the one in the middle is being entered from the right and also leaves to the right again.
This is a gadget of the ``backward'' type.
In \zcref{fig_GeneralGadgets} we give a schematic illustration of the recursive construction of both types of gadgets.
Due to the simple inductive construction, proving that these gadgets possess the desired properties is surprisingly straightforward.

\section{Preliminaries}\label{sec_prelim}
We introduce basic concepts as well as more advanced ideas from the toolkit of graph minors.

\subsection{Basics}
We open with a short array of definitions for commonly used concepts and notation.
By $\mathbb{N}$ we denote the set of non-negative integers.
Given any two integers $a,b\in\mathbb{N}$, we write $[a,b]$ for the set $\{z\in\mathbb{N} ~\!\colon\!~ a\leq z\leq b\}.$
Notice that the set $[a,b]$ is empty whenever $a>b.$
For any positive integer $c$ we set $[c]\coloneqq [1,c].$

\paragraph{Treewidth.}
A \emph{tree-decomposition} for a graph $G$ is a pair $(T,\beta)$ such that $T$ is a tree, $\beta\colon V(T)\to 2^{V(G)}$ assigns to each node of $T$ a subset of the vertices of $G$ known as a \emph{bag}, $\bigcup_{t\in V(T)}G[\beta(t)] = G$, and for each $v\in V(G)$, the set $\{ t\in V(T) \mid v\in\beta(t) \}$ is connected.
The \emph{width} of $(T,\beta)$ is defined as $\max_{t\in V(T)} |\beta(t)|-1$.

The \emph{treewidth} of a graph $G$, denoted by $\mathsf{tw}(G)$, is the smallest integer $k$ such that $G$ has a tree-decomposition of width at most~$k$.

\paragraph{Red-minors and bidimensionality.}
Let $H$ and $G$ be graphs.
A \emph{minor model} of $H$ in $G$ is a family of connected subgraphs $\{ G_v\}_{v \in V(H)} \cup \{ G_e\}_{e \in E(H)}$ of $G$ such that
\begin{enumerate}
    \item $V(G_v) \cap V(G_u) = \emptyset$ for all $u \neq v \in V(H)$,
    \item $|V(G_u) \cap V(G_{uv})| = 1$ for all $uv \in E(G)$,
    \item $V(G_u) \cap V(G_{e}) = \emptyset$ if $u \notin e$ for all $u \in V(G)$ and $e \in E(G)$,
    \item $G_e$ is a path in $G$ for every $e \in E(G)$, and
    \item $G_{e}$ and $G_{e'}$ are internally vertex-disjoint for all $e \neq e' \in E(G)$.
\end{enumerate}
In case $G_e$ is isomorphic to $K_2$ we do note require $G_e$ to be part of the model explicitly.
This also means that $\{ G_v\}_{v \in V(H)}$ may be a minor model of $H$ in $G$ if for all $uv \in E(G)$, there is an edge in $G$ between $V(G_u)$ and $V(G_v)$.
Most of the time we will be working with this simpler version of a model, but sometimes it will be convenient to provide explicit models for edges.

An annotated graph $(H,R_H)$ is a \emph{red-minor} of an annotated graph $(G,R_G)$ if there exists a minor-model $\{ G_v\}_{v\in V(H)}$ of $H$ in $G$ such that $R_G \cap V(G_v) \neq \emptyset$ for all $v\in R_H$.

The \emph{bidimensionality} of an annotated graph $(G,R)$, denoted by $\mathsf{bidim}(G,R)$, is the largest integer $k$ such that there exists $(\Gamma_k,V(\Gamma_k))$ as a red-minor in $(G,R)$ where $\Gamma_k$ denotes the $(k \times k)$-grid.

\paragraph{Paths, linkages, and vitality.}
A \emph{linkage} $\mathcal{L}$ in a graph $G$ is a set of pairwise vertex-disjoint paths.
Given the set $T \subseteq V(G)$ of endpoints the endpoints of the paths in $\mathcal{L}$, we say that $\mathcal{L}$ a \emph{$T$-linkage}.
Based on $T$, we define the \emph{pattern of $\mathcal{L}$} as $\tau(\mathcal{L}) \coloneqq \{\{s,t\} \mid \text{ there is }L\in \LLL \text{ with endpoints } s \text{ and } t\}$.
In some cases, we are already given a set $\mathcal{T} = \{ (s_i,t_i) \mid i\in[k]\}$ of terminal pairs.
In such a situation, we say that a linkage $\mathcal{L}$ is a \emph{$\mathcal{T}$-linkage} if $|\mathcal{L}| = k$ and for every $i\in[k]$ there is a path $L \in \mathcal{L}$ with endpoints $s_i$ and $t_i$. 

A linkage $\mathcal{L}$ in a graph $G$ is said to be \emph{vital} if $V(\mathcal{L}) = V(G)$ and there does not exist a linkage $\mathcal{L}' \neq \mathcal{L}$ in $G$ with the same pattern.
\smallskip

We say that a path $P$ in $G$ is \emph{internally disjoint} from a set $X \subseteq V(G)$ if $V(P) \cap X$ does not contain any vertex of $P$ that is not an endpoint vertex.
Given a graph $G$ and two subsets $A, B \subseteq V(G)$, an \emph{$A$-path} in $G$ is a path with both endpoints in $A$ and internally disjoint from $A$, and an \emph{$A$-$B$-path} is a path with one endpoint in $A$, the other in $B$, and internally disjoint from $A \cup B$.
An \emph{$A$-$B$-linkage} in $G$ is a linkage consisting of $A$-$B$ paths.
If $H$ is a subgraph of $G$, an \emph{$H$-path} is a $V(H)$-path of length at least one with no edge in $E(H)$.
\smallskip

A core result on graphs with vital linkages is the main theorem of Graph Minors XXI by Robertson and Seymour \cite{RobertsonSeymour2009UniqueLinkages} stating that a graph with a vital linkage $\mathcal{L}$ must have treewidth bounded in some function depending only on $|\mathcal{L}|$.
We state here the recent result of Cavallaro, Gorsky, Kreutzer, Thilikos, and Wiederrecht \cite{CavallaroGKTW2026Optimal} which provides almost optimal bounds for the treewidth of a graph with a vital linkage.

\begin{proposition}[Cavallaro, Gorsky, Kreutzer, Thilikos, and Wiederrecht \cite{CavallaroGKTW2026Optimal}]\label{prop:VitalLinkage}
There exists a function $\beta \colon \mathbb{N}^2 \to \mathbb{N}$ such that, for every graph, if there exists a vital $T$-linkage for some $T \subseteq V(G)$, then $G$ has treewidth at most $\beta(k,b)$ where $k = |T|$ and $b = \mathsf{bidim}(G,T)$.
Moreover, $\beta(k,b) \in \poly(k) \cdot 2^{\poly(b)}$.
\end{proposition}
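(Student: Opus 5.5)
The plan is to argue by contradiction. Suppose $G$ has a vital $T$-linkage $\mathcal{L}$ with $|T|=k$ and $\mathsf{bidim}(G,T)=b$, but $\mathsf{tw}(G)$ is larger than a suitable bound of the form $\poly(k)\cdot 2^{\poly(b)}$. We will exhibit a vertex that some linkage with the same pattern as $\mathcal{L}$ can avoid. This contradicts vitality: since $V(\mathcal{L})=V(G)$, any such linkage is different from $\mathcal{L}$. The proof has three stages: find a large wall far from the terminals, make it flat, and reroute inside the flat part.

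\textbf{Step 1 (a wall far from the terminals).} By the polynomial Grid Theorem, large treewidth yields a large wall $W$. The dependence on $k$ should only be polynomial, so we do not simply pigeonhole $W$ against $T$. Instead we use the local structure of annotated graphs of bounded bidimensionality, applied to $(G,T)$ with respect to $W$: after deleting an apex set $A$ of size $\poly(b)$, every terminal not in $A$ lies in one of at most $\poly(b)$ vortices of depth $\poly(b)$. Each vortex attaches to a bounded number of cells of $W$. Pigeonholing over these regions gives a subwall $W'$ whose order is a $\poly(k)\cdot 2^{\poly(b)}$ fraction of that of $W$ and whose interior contains no terminals. Here the $2^{\poly(b)}$ factor comes from the local structure theorem, and the $\poly(k)$ factor from the apex terminals.

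\textbf{Step 2 (flatness).} Apply the Flat Wall Theorem to $W'$. If $W'$ grasps a $K_t$-minor with $t=\poly(k)$, we use the classical rerouting argument. The $k/2$ paths of $\mathcal{L}$ can be linked into the clique model by disjoint paths, because the clique is highly connected to $W'$. Re-linking the endpoints through the clique model then gives a linkage with the same pattern that misses a vertex of some branch set. This is possible since a large enough clique model has branch sets that are not needed.

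Otherwise, after removing a small apex set we obtain a flat subwall $W''$ whose interior is embedded in a disc $\Delta$, contains no terminals, and is surrounded by many concentric cycles. The paths of $\mathcal{L}$ intersect $\Delta$ in a family of subpaths, each with both endpoints on the boundary of $\Delta$.

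\textbf{Step 3 (planar rerouting, the main obstacle).} We need to bound the number of such subpaths by $\poly(k)$ and then use the planar vital linkage bound. That bound states that a vital linkage in a disc with $p$ boundary-attached paths has treewidth $2^{\mathbf{O}(p)}$, and in its sharpened form $\poly(p)$, in the spirit of Adler et al.

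To bound the number of segments, I would first try an annulus-combing argument. Using the concentric cycles of $W''$ together with uniqueness, we show that segments penetrating deep into $\Delta$ are combed along the radial paths. If too many segments reach a given depth, then exchanging along two consecutive concentric cycles produces a different linkage with the same pattern, which contradicts uniqueness. This leaves $\poly(k)$ segments crossing the inner disc. The planar bound then says that the innermost part of $W''$ has bounded treewidth, which contradicts its large order.

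The delicate point is to make the combing work in an almost embedding rather than a true embedding, keeping the apex vertices and the vortex attachments outside the disc. It is also at this point that we must keep all losses polynomial in $k$, so that the final bound is $\beta(k,b)\in\poly(k)\cdot 2^{\poly(b)}$.
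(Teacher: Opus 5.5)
Note first that the paper imports this proposition from Cavallaro et al.\ without proof. The natural benchmark is therefore its spanning analogue \zcref{thm_VitalSpanningLinkage}.

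\textbf{The gap is Step~3, which is where all the difficulty lies.} You assert that uniqueness together with the concentric cycles leaves only $\poly(k)$ segments of $\mathcal{L}$ crossing the inner disc. The mechanism you suggest does not work. Since $V(\mathcal{L})=V(G)$, every vertex of every cycle $C_j$ already lies on $\mathcal{L}$. So there is no free subpath of $C_j$ or $C_{j+1}$ to exchange into, and any rerouting moves many segments at once with no control over the pattern.

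More decisively, the claim is false at the generality in which you use it. Nothing in Steps~2--3 uses $\mathsf{bidim}(G,T)\le b$: a flat subwall whose compass avoids all $k$ terminals already follows from the Flat Wall Theorem by pigeonholing, at a cost of a factor $\mathbf{O}(\sqrt{k})$. Combined with the $\poly(p)$ one-face bound you invoke, your argument would therefore give $\mathsf{tw}(G)\le\poly(k)$ for every graph with a vital linkage on $k$ terminals, planar graphs included. This contradicts the exponential lower bound for planar unique linkages of Adler et al. The bidimensionality hypothesis must be spent exactly where deep segments are counted, and your sketch has no place for it.

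Annulus combing cannot be the tool either. A vital $\mathcal{L}$ coincides with any combed linkage of the same pattern, so it would meet the annulus only along the few rails it uses and miss the rest, contradicting $V(\mathcal{L})=V(G)$. Combing with your bounds is therefore already a vital-linkage theorem. Accordingly, in the paper the dependencies run the other way:
\begin{itemize}
\item The combing lemma \zcref{thm_CombedAnus} is derived from the vital bound: \zcref{claim_FlatMountains} and \zcref{claim_FewRivers} contract to a vital instance and build a bramble against $\beta_{\ref{thm_VitalSpanningLinkage}}$.
\item The bounds on simple loops and on common links and loops (\zcref{claim_SimpleLoops}, \zcref{claim_CommonLinksAndLoops}) apply the present proposition to sub-instances.
\end{itemize}
Neither can be recycled to prove it. Likewise, a $\poly(p)$ bound for terminals on one face is not what Adler et al.\ prove (their planar bound is $2^{\mathbf{O}(p)}$). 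It is essentially the case $b\le 2$ of the statement itself.

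\textbf{There are also real but fixable omissions upstream.}
\begin{itemize}
\item \emph{Apices.} After deleting the apex set, paths of $\mathcal{L}$ can enter the interior of $\Delta$ through edges at $A$, so segments need not end on $\mathsf{bd}(\Delta)$. You must first use vitality to delete every edge at $A$ outside $E(\mathcal{L})$; a subgraph containing $\mathcal{L}$ keeps it vital. Then split the paths at the apices and choose a subwall avoiding the at most $2|A|$ new terminals, as in \zcref{lemma_theDesolationOfAnInstance}.
\item \emph{Clique case.} High connectivity of the clique to $W'$ does not let you link the terminals into it, since a small separation may cut them off. You need the Robertson--Seymour minimum-order-separation argument (\zcref{thm_cliqueirrelevantvertex}). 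To keep genus, apex and breadth bounds free of $k$, use \zcref{thm:colorfulclique}: a red $K_{(b+1)^2}$ would force $\mathsf{bidim}(G,T)>b$. Hence a separator of order less than $(b+1)^2$ leaves a terminal-free big side, and a clique of order $\poly(b)$ suffices, as in \zcref{lemma_WahtToDoWithClique}.
\item \emph{Step~1 accounting.} \zcref{prop:localstructure} has polynomial bounds, and its third outcome is a blank rendition. So the flat wall is automatically terminal-free, and no $2^{\poly(b)}$ loss arises there. That outcome is only available after both clique outcomes are excluded, which makes the separate Flat Wall Theorem in Step~2 redundant.
\end{itemize}

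What the argument actually needs is to keep the global picture that outcome provides: terminals confined to $\poly(b)$ vortices of depth $\poly(b)$ on a surface of genus $\poly(b)$. One must then bound, class by class over the $\poly(b)$ homotopy types of \zcref{prop_TypeCounting}, the arcs of $\mathcal{L}$ that enter a dry well. That count is where the dependence on $b$ has to be paid.
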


\paragraph{Surfaces.}
By a \emph{surface} we mean a compact $2$-dimensional manifold $\Sigma$ with or without boundary.
Let $X \subseteq \Sigma$.
The boundary and interior of $X$ will be denoted $\bd(X)$ and $\mathsf{int}(X)$, respectively.
The topological closure of $X$ is denoted by $\mathsf{cl}(X)$.
If $\Sigma$ is a surface with boundary then we refer to the connected components of $\bd(\Sigma)$ as its \emph{cuffs}.
\smallskip

Let $\gamma \sth [0,1]_{\mathbb{R}}\to \Sigma$ be continuous and injective on $[0,1[_{\mathbb{R}}$.
We say that the image $\gamma([0,1]_{\mathbb{R}})$ of $\gamma$ is a \emph{curve} in $\Sigma$ and $\gamma(0),\gamma(1)$ are its \emph{endpoints}.
A curve is \emph{closed} if its endpoints agree.
A curve with both endpoints on the boundary of $\Sigma$ is called a \emph{boundary curve}.

Two curves $C, C' \subseteq \Sigma$ are \emph{homeomorphic} if there is a homeomorphism $h \sth \Sigma \rightarrow \Sigma$ which maps $C$ to $C'$. 
$C$ and $C'$ are \emph{homotopic} if there is a continuous function $f \sth \Sigma \times [0,1]_{\mathbb{R}} \rightarrow \Sigma$ such that $f(x, 0) = C(x)$ and $f(x, 1) = C'(x)$ and for every $t\in [0,1]_{\mathbb{R}} $ the function $f_t(x)\coloneqq f(x,t)$ maps the boundary of $\Sigma$ to the boundary of $\Sigma$.
We call $f$ a \emph{homotopy}.
Homotopy induces an equivalence relation on the set of curves on a surface $\Sigma$.
We refer to the equivalence classes of homotopy as \emph{homotopy classes} and to the class containing a curve $C \subseteq \Sigma$ as its \emph{homotopy type}.

We will require the following lemma on the number of homotopy types.

\begin{proposition}[Cavallaro, Gorsky, Kreutzer, Thilikos, and Wiederrecht \cite{CavallaroGKTW2026Optimal}]\label{prop_TypeCounting}
 Let $\Sigma$ be a connected surface of genus $g$ with $b \geq 1$ boundary
 components.
 Let $\mathcal{L}$ be a set of pairwise disjoint curves on $\Sigma$ whose endpoints are all on the boundary of $\Sigma$.

 Then the number of homotopy types realised by curves in $\mathcal{L}$ is $\leq 1$ in
 case $b=1$ and $g=0$, it is $\leq 3$ in case $b=2$ and $g=0$, and otherwise it is bounded by $4b + 6g - 6$.
\end{proposition}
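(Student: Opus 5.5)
We plan to choose one representative curve from each homotopy class realised by $\mathcal{L}$. We then split these classes into \emph{trivial} ones, whose curves can be homotoped into $\bd(\Sigma)$, and \emph{essential} ones, and bound each family with an Euler characteristic count. The representatives remain pairwise disjoint, since they are members of $\mathcal{L}$.

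\emph{Trivial classes.} A trivial curve has both endpoints on the same cuff $c$ and cuts off a disc together with an arc of $c$. The endpoints of a homotopy may slide along the boundary, so such a curve can be shrunk, inside that disc, to a short arc near a point of $c$. Since $c$ is connected, any two such shrunken arcs on $c$ are homotopic by sliding along $c$. Hence there is at most one trivial class per cuff, giving at most $b$ trivial classes.

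\emph{Essential classes.} We contract every cuff to a single point. This turns $\Sigma$ into a closed surface $\hat\Sigma$ of the same Euler genus $g$, carrying a graph $H$ embedded in it: the vertices of $H$ are the $b$ contracted cuffs and the edges are the essential representatives. Consider a face of $H$ that is an open disc.
\begin{itemize}
\item If its boundary walk has length $1$, the corresponding curve bounds a disc together with a piece of a cuff, so the curve is trivial, which is a contradiction.
\item If its boundary walk has length $2$, the two curves bounding it cobound a disc between cuff segments, so they are homotopic, which is again a contradiction.
\end{itemize}
Hence every disc face has length at least $3$. For a non-disc face, we repair the count first: we add extra edges inside the face, in the standard way, so that $H$ becomes cellularly embedded. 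Alternatively, we use Euler's inequality $V-E+F\ge 2-g$ componentwise, which only improves the bound. This gives $3F\le 2E$ and $b-E+F\ge 2-g$, so
\[
E\le 3(b+g-2)=3b+3g-6 .
\]
Adding the at most $b$ trivial classes yields at most $4b+3g-6\le 4b+6g-6$ classes.

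\emph{Small cases.} The inequality $E\le 3b+3g-6$ needs $\hat\Sigma$ and $H$ to be large enough that face lengths are not degenerate. We therefore treat small cases directly.
\begin{itemize}
\item If $b=1$ and $g=0$, $\Sigma$ is a disc. Every curve is trivial, so there is at most one class.
\item If $b=2$ and $g=0$, $\Sigma$ is an annulus. The possible classes are at most one trivial class on each cuff and one class of curves crossing between the cuffs. Crossing curves with different winding cannot be disjoint, so there are at most $3$ classes.
\end{itemize}
In the remaining cases $3b+3g-6\ge 0$ and the count above applies.

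The main obstacle is rigour in the face-length argument: a face of length $2$ that is not a disc, or a face whose boundary walk traverses the same edge twice, must not be miscounted. We intend to handle this by noting that such faces have Euler characteristic at most $0$, so they only strengthen Euler's inequality. The slack between $3g$ and $6g$ in the target bound absorbs any loss from non-orientable cases or from the repair step.
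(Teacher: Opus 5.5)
The paper imports this proposition without proof, so your argument has to stand on its own. The strategy is the right one: at most one boundary-parallel class per cuff, plus an Euler-characteristic count on the capped surface $\hat\Sigma$ for the essential classes. However, the face-length step fails in a case your list of small cases misses: the M\"obius band, $b=1$ and Euler genus $1$.

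Cutting the M\"obius band along its spanning arc $\alpha$ leaves a rectangle. So in $\hat\Sigma=\mathbb{RP}^2$ the graph $H$ is a single loop $\alpha$, and its unique face is an open disc with boundary walk $\alpha\alpha$ of length $2$. Both sides of that face are the \emph{same} curve, so ``the two curves bounding it are homotopic'' gives no contradiction. The face is also a disc, so your remedy (``such faces have Euler characteristic at most $0$'') does not apply either. Your inequality $E\le 3b+3g-6$ then reads $1\le 0$. Your total $4b+3g-6=1$ is violated by the family $\{\alpha,\beta\}$, with $\beta$ a short boundary-parallel arc, which realises two classes. The proposition itself survives only because the stated bound here is $4b+6g-6=4$. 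This case must be handled explicitly, not attributed to ``slack''.

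The repair is short. Suppose a disc face has boundary walk of length $2$ traversing a single edge twice. Then its closure contains a neighbourhood of every point of that edge and of its endpoints, so the closure is open and closed in $\hat\Sigma$, hence equal to $\hat\Sigma$. This happens only for $\hat\Sigma=S^2$ with $H=K_2$ (the annulus) or for $\hat\Sigma=\mathbb{RP}^2$ with $H$ a single loop (the M\"obius band). Similarly, a disc face of length $0$ occurs only for the disc.

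For the M\"obius band, take any essential arc disjoint from $\alpha$. It lies in the rectangle with one endpoint on each boundary side, since otherwise it cuts off a disc. It is therefore homotopic to $\alpha$, giving at most $2\le 4$ classes. In all other cases your length-$1$ and length-$2$ arguments are correct, and every disc face has length at least $3$.

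For the non-disc faces, commit to your second option and make it precise. Use $\chi(\hat\Sigma)=V-E+\sum_f\chi(f)$ with $\chi(f)\le 0$ for non-disc faces. This gives $2-g\le b-E+F_{\mathrm{disc}}$, which together with $3F_{\mathrm{disc}}\le 2E$ yields your bound. Your first option (adding edges to make the embedding cellular) is not justified as stated, since added edges can create disc faces of length $1$ or $2$.

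Finally, suppose $g$ is read as orientable genus. Then $\hat\Sigma$ has Euler genus $2g$, and your count gives exactly $3b+6g-6$ essential classes, i.e.\ exactly the target $4b+6g-6$. So in the orientable case there is no slack at all. Slack exists only in the non-orientable reading, and it is precisely what absorbs the M\"obius band.
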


Let $\Sigma$ be a connected surface. A curve $C \subseteq \Sigma$ is \emph{contractible} if one of the components of $\Sigma - C$ is homeomorphic to an open disc.
Otherwise it is \emph{non-contractible}.

\subsection{Tools more specific to graph minor structure theory}\label{subsec_grpahMinors}
We continue by briefly introducing some key concepts for handling the general structure of $H$-minor-free graphs.
The definitions we introduce here originate fromt he work of Robertson and Seymour \cite{RobertsonS1999GraphMinorsXVII}.
They were later refined by Kawarabayashi, Thomas, and Wollan \cite{KawarabayashiTW2021Quickly} and then adopted and reiterated by several authors (see for example \cite{ThilikosW2024Excluding, PaulPTW2024Obstructionsa, PaulPTW2025Local, GorskySW2025Polynomial}).

\paragraph{Brambles.}
Let $G$ be a graph.
A \emph{bramble} in $G$ is a collection of connected subgraphs $\mathcal{B}$ of $G$ such that for all $B_1,B_2 \in \mathcal{B}$ either $V(B_1) \cap V(B_2) \neq \emptyset$ or there exists an edge with one end in $B_1$ and the other in $B_2$.
A \emph{hitting set} for a bramble $\mathcal{B}$ is a set $S \subseteq V(G)$ such that $S \cap V(B) \neq \emptyset$ for all $B \in \mathcal{B}$.
The \emph{order} of a bramble $\mathcal{B}$ is the minimum size of a hitting set for $\mathcal{B}$.

\begin{proposition}[Reed \cite{Reed1997Tree}]\label{prop_brambles}
Let $k \geq 0$ be an integer and $G$ be a graph.
If $G$ has a bramble of order $k+1$, then $\mathsf{tw}(G) \geq k$.
\end{proposition}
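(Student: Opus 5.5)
We prove the easy direction of the bramble duality: a bramble of order $k+1$ rules out every tree-decomposition of width less than $k$. Fix an arbitrary tree-decomposition $(T,\beta)$ of $G$ and a bramble $\mathcal{B}$ of order $k+1$. It suffices to find a node $t\in V(T)$ whose bag $\beta(t)$ is a hitting set for $\mathcal{B}$. Then $|\beta(t)|\geq k+1$, so every decomposition has width at least $k$, and hence $\mathsf{tw}(G)\geq k$.

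To find such a bag, we orient the edges of $T$. For each $B\in\mathcal{B}$, let $T_B$ be the set of nodes whose bags meet $V(B)$. Since $B$ is connected and each vertex occupies a subtree of $T$, the set $T_B$ induces a subtree: adjacent vertices of $B$ share a bag, so their subtrees intersect.

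Suppose no bag hits every member of $\mathcal{B}$. Consider an edge $st$ of $T$, and set $S=\beta(s)\cap\beta(t)$. This set separates the vertices appearing only in the $s$-side component $T_s$ from those appearing only in the $t$-side component $T_t$.
\begin{itemize}
\item If some $B$ avoids $S$, then $T_B$ avoids the edge $st$ (a subtree meeting both sides must contain both $s$ and $t$, whose bags would then share a vertex of $B$ lying in $S$). So $T_B$ lies entirely on one side, and we orient $st$ towards that side.
\item This orientation is well defined. If $B_1$ lies on the $s$-side and $B_2$ on the $t$-side, both avoiding $S$, then $B_1$ and $B_2$ are disjoint. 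Any edge between them would be covered by some bag, and this bag would have to contain vertices occurring only on opposite sides, which is impossible. This contradicts the touching property of a bramble.
\item It remains to see that every edge gets oriented. If no member of $\mathcal{B}$ avoids $S$, then $\beta(s)\supseteq S$ is a hitting set, contradicting our assumption.
\end{itemize}

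Since $T$ is finite and every edge is oriented, some node $t$ is a sink, meaning every incident edge points towards $t$. Now take $B$ with $\beta(t)\cap V(B)=\emptyset$, which exists because $\beta(t)$ is not a hitting set. Then $T_B$ is a subtree not containing $t$, so it lies in the component of $T-t$ behind some neighbour $s$ of $t$. In particular $T_B$ does not meet $\beta(t)\supseteq\beta(s)\cap\beta(t)$, so $B$ avoids $S$. By the construction above, the edge $st$ is then oriented towards $s$, contradicting that $t$ is a sink.

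Hence some bag is a hitting set for $\mathcal{B}$, and therefore $\mathsf{tw}(G)\geq k$.

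The main obstacle is the well-definedness of the orientation, which relies on the separation property of tree-decompositions. Everything else is routine bookkeeping on subtrees.
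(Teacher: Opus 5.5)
The paper gives no proof of this proposition; it quotes it from Reed, and it is the easy half of the Seymour--Thomas bramble duality. Your argument is the standard one: orient each edge $st$ of $T$ towards the side carrying the members of $\mathcal{B}$ that avoid $\beta(s)\cap\beta(t)$, then take a sink. It is correct.

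One justification in the first bullet should be tightened. The fact that $s,t\in T_B$ does not by itself give a vertex of $B$ lying in $\beta(s)\cap\beta(t)$. Argue instead as follows. If $V(B)\cap S=\emptyset$, then each vertex $v$ of $B$ has its set of nodes entirely on one side of $st$; otherwise that subtree contains both $s$ and $t$, which forces $v\in S$. An edge of $B$ joining two vertices that live on opposite sides would have to be covered by a single bag, which is impossible. By connectivity of $B$, all of $T_B$ therefore lies on one side. This is exactly the separation property you state just before the bullets, so nothing new is needed.

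There is also a small slip in the sink step: ``$T_B$ does not meet $\beta(t)$'' should read ``$V(B)$ does not meet $\beta(t)$''.
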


\paragraph{Separations and tangles.}
Let $G$ be a graph and $k$ be a positive integer.
A separation of $G$ is a pair $(A,B)$ such that $A,B \subseteq V(G)$, $A -\cup B = V(G)$, and there is no edge with one end in $A \setminus B$ and the other in $B \setminus A$.
The \emph{order} of a separation $(A,B)$ of $G$ is defined as $|A \cap B|$.
We denote by $\mathcal{S}_k(G)$ the collection of all separations $(A,B)$ of order less than $k$ in $G$.

An \emph{orientation} of $\mathcal{S}_k(G)$ is a set $\mathcal{O}$ such that for all $(A,B)\in\mathcal{S}_k(G)$ exactly one of $(A,B)$ and $(B,A)$ belongs to $\mathcal{O}$. 
A \emph{tangle} of order $k$ in $G$ is an orientation $\mathcal{T}$ of $\mathcal{S}_k(G)$ such that for all $(A_1,B_1),(A_2,B_2),(A_3,B_3)\in\mathcal{T}$, it holds that $G[A_1]\cup G[A_2]\cup G[A_3]\neq G$.
If $\mathcal{T}$ is a tangle and $(A,B)\in\mathcal{T}$ we call $A$ the \emph{small side} and $B$ the \emph{big side} of $(A,B)$.

Let $G$ be a graph and $\mathcal{T}$ and $\mathcal{D}$ be tangles of $G$.
We say that $\mathcal{D}$ is a \emph{truncation} of $\mathcal{T}$ if $\mathcal{D}\subseteq\mathcal{T}$.
\medskip

Let $G$ and $H$ be graphs as well as $\mathcal{T}$ be a tangle in $G$.
We say that a minor-model $\mu$ of $H$ in $G$ is \emph{controlled} by $\mathcal{T}$ if there does not exist a separation $(A,B)\in\mathcal{T}$ of order less than $|V(H)|$ and an $x \in V(H)$ such that $\mu(x)\subseteq A\setminus B$.

Let $\mathcal{T}$ be a tangle of order at least $1$ in $G$.
The \emph{$\mathcal{T}$-big component} of $G$ is the unique component $J$ of $G$ such that $V(J) \subseteq B$ for all $(A,B) \in \mathcal{T}$. where $A \cap B = \emptyset$. 

\paragraph{Meshes.}
Let $n,m$ be integers with $n,m\geq 2$.
A \emph{$(n\times m)$-mesh} is a graph $M$ which is the union of paths $M=P_1\cup\cdots\cup P_n\cup Q_1\cup \cdots \cup Q_m$ where
    \begin{itemize}
        \item $P_1,\cdots,P_n$ are pairwise vertex-disjoint, and $Q_1,\cdots,Q_m$ are pairwise vertex-disjoint.
        \item for every $i\in [n]$ and $j\in [m]$, the intersection $P_i\cap Q_j$ induces a path,
        \item each $P_i$ is a  $V(Q_1)$-$V(Q_m)$-path intersecting the paths $Q_1,\cdots Q_m$ in the given order, and each $Q_j$ is a $V(P_1)$-$V(P_m)$-path intersecting the paths $P_1,\cdots, P_h$ in the given order. 
    \end{itemize}
We say that the paths $P_1,\cdots,P_n$ are the \emph{horizontal paths}, and the paths $Q_1,\cdots,Q_m$ are the \emph{vertical paths}.
The union $P_{1} \cup P_{n} \cup Q_{1} \cup Q_{m}$ is a cycle called the \emph{perimeter} of $M$.
The unique cycle in the union $P_{i} \cup P_{i+1} \cup Q_{j} \cup Q_{j+1}$, where $i \in [n - 1]$ and $j \in [m - 1]$, is called a \emph{brick} of $M$.
A mesh $M'$ is a \emph{submesh} of a mesh $M$ if every horizontal (vertical) path of $M'$ is a subpath of a horizontal (vertical) path $M$, respectively.
We write \emph{$n$-mesh} as a shorthand for an $(n \times n)$-mesh.

Let $r \in \mathbb{N}$ with $r\geq 3$, let $G$ be a graph, and $M$ be an $r$-mesh in $G$.
Let $\mathcal{T}_M$ be the orientation of $\mathcal{S}_r$ such that for every $(A,B)\in\mathcal{T}_M$, the set $B\setminus A$ contains the vertex set of both a horizontal and a vertical path of $M$, we call $B$ the \emph{$M$-majority side} of $(A,B)$.
Then $\mathcal{T}_M$ is the tangle \emph{induced} by $M$.
If $\mathcal{T}$ is a tangle in $G$, we say that $\mathcal{T}$ \emph{controls} the mesh $M$ if $\mathcal{T}_M$ is a truncation of $\mathcal{T}$.

The significance of meshes lies in the fact that every mesh has large treewidth and, reversely, every graph of large treewidth must contain a big mesh.
This theorem is originally due to Robertson and Seymour \cite{RobertsonS1986GraphMinorsV} but we state here the polynomial version of Chuzhoy and Tan \cite{ChuzhoyT2021TowardsTight}.

\begin{proposition}[Chuzhoy, Tan \cite{ChuzhoyT2021TowardsTight}]\label{prop_GridThm}
There exists a function $\mathsf{mesh}_{\ref{prop_GridThm}} \colon \mathbb{N} \to \mathbb{N}$ such that for every $k \in \mathbb{N}$ with $k \geq 2$ and every graph $G$, if $\mathsf{tw}(G) \geq \mathsf{mesh}_{\ref{prop_GridThm}}(k)$, then $G$ contains a $(k \times k)$-mesh as a subgraph.

Moreover, it holds that $\mathsf{mesh}_{\ref{prop_GridThm}}(k) \in \mathbf{O}(k^9 \mathsf{poly log} k)$.
\end{proposition}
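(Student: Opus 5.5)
The statement is the polynomial Grid Theorem of Chuzhoy and Tan. In the paper it is imported as a black box, so the plan is to derive it from the literature rather than to reprove it. Their result gives a function $g(k)\in\mathbf{O}(k^9\,\mathsf{polylog}\,k)$ with the following property: every graph of treewidth at least $g(k)$ contains the $(k\times k)$-grid $\Gamma_k$ as a minor. Since $\Gamma_k$ has maximum degree at most $4$, minor containment can be upgraded to topological minor containment. A subdivision of a grid is then rearranged into a mesh in the sense defined above.

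\textbf{Step 1: from a grid minor to a wall subdivision.} Take a minor model $\{G_v\}_{v\in V(\Gamma_{k'})}$ of a slightly larger grid, with $k'=\mathbf{O}(k)$. Contain an elementary $(k\times k)$-wall (maximum degree $3$) inside $\Gamma_{k'}$ by deleting alternate vertical edges. For a graph of maximum degree $3$, every minor model can be pruned to a subdivision: in each branch set $G_v$, which is a tree we may assume, take the union of paths joining the at most three attachment vertices. This union is a subdivided star or a path, so each branch vertex becomes a single vertex of degree at most $3$. This yields a subdivided $(k\times k)$-wall $W$ as a subgraph of $G$.

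\textbf{Step 2: from the wall to a mesh.} Let the horizontal paths $P_i$ be the subdivided rows of $W$. Each vertical path $Q_j$ is formed by concatenating the vertical wall edges of a column, together with the short horizontal segments a wall needs to zigzag between consecutive rows. One checks the conditions of a mesh directly:
\begin{itemize}
\item rows are pairwise disjoint and columns are pairwise disjoint;
\item $P_i\cap Q_j$ is a (possibly trivial) path, namely the shared horizontal segment;
\item the rows meet the columns in order, and the columns meet the rows in order;
\item each row has its ends on $Q_1$ and $Q_m$, which holds after trimming the row ends, and each column has its ends on $P_1$ and $P_n$.
\end{itemize}
Trimming dangling ends of the rows and columns loses at most a constant in $k$. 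This gives a $(k\times k)$-mesh as a subgraph of $G$.

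\textbf{Bound.} The threshold is $\mathsf{mesh}_{\ref{prop_GridThm}}(k)=g(\mathbf{O}(k))$, which remains $\mathbf{O}(k^9\,\mathsf{polylog}\,k)$.

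The main obstacle is the Chuzhoy–Tan bound itself. It relies on their path-of-sets system machinery, well-linked decompositions and expander embeddings, and that is where all the difficulty lies. The conversions from minor to subdivision to mesh are routine bookkeeping. I would cite the bound rather than reprove it.
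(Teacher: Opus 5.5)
Your proposal is correct and matches the paper, which imports this statement from Chuzhoy and Tan without proof; your grid-minor $\to$ subdivided wall $\to$ mesh conversion is the standard bookkeeping the paper leaves implicit, and the threshold $g(\mathbf{O}(k))\in\mathbf{O}(k^9\,\mathsf{polylog}\,k)$ is right. One small correction: your opening remark that maximum degree $4$ suffices to upgrade a minor to a topological minor is false in general (the upgrade holds for maximum degree $3$), but this is harmless because Step~1 correctly routes through the degree-$3$ wall.
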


\paragraph{Paintings in surfaces.}
A \emph{painting} in a surface $\Sigma$ is a pair $\Gamma = (U,N)$, where $N \subseteq U \subseteq \Sigma$, $N$ is finite, $U \setminus N$ has a finite number of arcwise-connected components, called \emph{cells} of $\Gamma$, and for every cell $c$, the closure of $c$, denoted by $\overline{c}$, is a closed disc where $N_\Gamma(c) \coloneqq \overline{c} \cap N \subseteq \mathsf{bd}(\overline{c})$.
If $|N_\Gamma(c)| \geq 4$, the cell $c$ is called a \emph{vortex}.
We further let $N(\Gamma) \coloneqq N$, let $U(\Gamma) \coloneqq U$, and let $C(\Gamma)$ be the set of all cells of $\Gamma$.
\medskip

Any given painting $\Gamma = (U,N)$ defines a hypergraph with $N$ as its vertices and the set of closures of the cells of $\Gamma$ as its edges.
Accordingly, we call $N$ the \emph{nodes} of $\Gamma$.

\paragraph{$\Sigma$-renditions.}
Let $G$ be a graph and $\Sigma$ be a surface.
A \emph{$\Sigma$-rendition} of $G$ is a triple $\rho = (\Gamma, \sigma, \pi)$, where
\begin{itemize}
    \item $\Gamma$ is a painting in $\Sigma$,
    \item for each cell $c \in C(\Gamma)$, $\sigma(c)$ is a subgraph of $G$, and
    \item $\pi \colon N(\Gamma) \to V(G)$ is an injection,
\end{itemize}
such that
\begin{description}
    \item[R1] $G = \bigcup_{c \in C(\Gamma)}\sigma(c)$,
    \item[R2] for all distinct $c,c' \in C(\Gamma)$, the graphs $\sigma(c)$ and $\sigma(c')$ are edge-disjoint,
    \item[R3] $\pi(N_\Gamma(c)) \subseteq V(\sigma(c))$ for every cell $c \in C(\Gamma)$, and
    \item[R4] for every cell $c \in C(\Gamma)$, we have $V(\sigma(c) \cap \bigcup_{c' \in C(\Gamma) \setminus \{ c \}} (\sigma(c'))) \subseteq \pi(N_\Gamma(c))$.
\end{description}
We write $N(\rho)$ for the set $N(\Gamma)$, let $N_\rho(c) = N_\Gamma(c)$ for all $c \in C(\Gamma)$, and similarly, we lift the set of cells from $C(\Gamma)$ to $C(\rho)$.
If it is clear from the context which $\rho$ is meant, we will sometimes simply write $N(c)$ instead of $N_\rho(c)$, and if the $\Sigma$-rendition $\rho$ for $G$ is understood from the context, we usually identify the sets $\pi(N(\rho))$ and $N(\rho)$ along $\pi$ for ease of notation.

\paragraph{Blank renditions.}
Let $\rho$ be a $\Sigma$-rendition of an annotated graph $(G,R)$.
If $$\pi(N(\rho)) \cup \bigcup \{ V(\sigma(c)) \colon c \in C(\rho) \text{ and } c \text{ is not a vortex}\}$$ is disjoint from $R$, we call $\rho$ a \emph{blank rendition (of $(G,R)$)}.

\paragraph{Societies.}
Let $\Omega$ be a cyclic ordering of the elements of some set which we denote by $V(\Omega)$.
A \emph{society} is a pair $(G,\Omega)$, where $G$ is a graph and $\Omega$ is a cyclic ordering with $V(\Omega)\subseteq V(G)$.
For a given set $S \subseteq V(\Omega)$ a vertex $s \in S$ is an \emph{endpoint} of $S$ if there exists a vertex $t \in V(\Omega) \setminus S$ that immediately precedes or succeeds $s$ in $\Omega$.
We call $S$ a \emph{segment} of $\Omega$ if $S$ has two or less endpoints.

Let $(G,\Omega)$ be a society and let $\Sigma$ be a surface with one boundary component $B$ homeomorphic to the unit circle.
A \emph{rendition} of $(G,\Omega)$ in $\Sigma$ is a $\Sigma$-rendition $\rho$ of $G$ such that the image under $\pi_{\rho}$ of $N(\rho) \cap B$ is $V(\Omega)$ and $\Omega$ is one of the two cyclic orderings of $V(\Omega)$ defined by the way the points of $\pi_{\rho}(V(\Omega))$ are arranged in the boundary $B$.

\paragraph{Traces of paths and cycles.}
Let $\rho$ be a $\Sigma$-rendition of a graph $G$.
For every cell $c \in C(\rho)$ with $|N_\rho(c)| = 2$, we select one of the components of $\mathsf{bd}(c) - N_\rho(c)$.
This selection will be called a \emph{tie-breaker in $\rho$}, and we assume that every rendition comes equipped with a tie-breaker.

Let $G$ be a graph and $\rho$ be a $\Sigma$-rendition of $G$.
Let $Q$ be a cycle or path in $G$ that uses no edge of $\sigma(c)$ for every vortex $c \in C(\rho)$.
We say that $Q$ is \emph{grounded} if either $P$ is a path with both endpoints in $N(\rho)$, or $Q$ is a cycle that uses edges of $\sigma(c_{1})$ and $\sigma(c_{2})$ for two distinct cells $c_{1}, c_{2} \in C(\rho).$
If $Q$ is grounded we define the \emph{trace} of $Q$ as follows.
Let $P_1,\dots,P_k$ be distinct maximal subpaths of $Q$ such that $P_i$ is a subgraph of $\sigma(c)$ for some cell $c$.
Fix $i \in [k]$.
The maximality of $P_i$ implies that its endpoints are $\pi(n_1)$ and $\pi(n_2)$ for distinct nodes $n_1,n_2 \in N(\rho)$.
If $|N_\rho(c)| = 2$, let $L_i$ be the component of $\mathsf{bd}(c) - \{ n_1,n_2 \}$ selected by the tie-breaker, and if $|N_\rho(c)| = 3$, let $L_i$ be the component of $\mathsf{bd}(c) - \{ n_1,n_2 \}$ that is disjoint from $N_\rho(c)$.
We define $L_i'$ by pushing $L_i$ slightly so that it is disjoint from all cells in $C(\rho)$, while maintaining that the resulting curves intersect only at a common endpoint.
The \emph{trace} of $Q$ is defined to be the curve $\gamma$ obtained from $\bigcup_{i\in[k]} L_i'$ by slightly pushing it in order to ensure that $\gamma$ intersects the painting of $\rho$ in nodes only.
If $Q$ is a cycle, its trace thus the homeomorphic image of the unit circle, and otherwise, it is an arc in $\Sigma$ with both endpoints in $N(\rho)$.

\paragraph{Aligned discs and grounded subgraphs.}
Let $G$ be a graph and let $\rho = (\Gamma, \sigma, \pi)$ be a $\Sigma$-rendition of $G$. 
We say that a 2-connected subgraph $H$ of $G$ is \emph{grounded (in $\rho$)} if every cycle in $H$ is grounded and no vertex of $H$ is drawn by $\Gamma$ in a vortex of $\rho$.
A disc in $\Sigma$ is called \emph{$\rho$-aligned} if its boundary only intersects $\Gamma$ in nodes.
If $H$ is planar, we say that it is \emph{flat in $\rho$} if there exists a $\rho$-aligned disc $\Delta \subseteq \Sigma$ which contains all cells $c \in C(\rho)$ with $E(\sigma(c)) \cap E(H) \neq \emptyset$ and $\Delta$ does not contain any vortices of $\Gamma$.

For any $\rho$-aligned disc $\Delta$, we call the subgraph of $G$ that is drawn by $\Gamma$ onto $\Delta$ the \emph{crop of $G$ by $\Delta$ (in $\rho$)}.
Furthermore, the \emph{restriction $\delta'$ of $\rho$ by $\Delta$} is defined as the $\Delta$-rendition that consists of the restriction of both $\Gamma$, $\sigma$, and $\pi$ to $\Delta$.

This allows to define a society associated to $\Delta$ as follows.
Let $V(\Omega_{\Delta})$ be the set of all vertices whose corresponding nodes are drawn in the boundary of $\Delta$ and let $\Omega_{\Delta}$ be the cyclic ordering of $V(\Omega_{\Delta})$ obtained by traversing along the boundary of $\Delta$ in the anticlockwise direction.
Now, let $G_{\Delta}$ be the crop of $G$ by $\Delta$.
We call the society $(G_{\Delta}, \Omega_{\Delta})$ the \emph{$\Delta$-society (in $\rho$)}.
If $\rho$ is clear from the context, we do not mention it.
We also call the restriction of $\rho$ by $\Delta$, the \emph{restriction of $\rho$ to $(G_{\Delta}, \Omega_{\Delta})$.}

Let $\mathcal{P}$ be an $X$-$Y$-linkage in $G$ such that $X \cap V(G_{\Delta}) \subseteq V(\Omega_{\Delta})$ and $Y \subseteq V(G_{\Delta})$ and assume that each path in $\mathcal{P}$ is grounded in $\rho$.
Then we define the \emph{$\Delta$-truncation (in  $\rho$)} of $\mathcal{P}$ to be the $V(\Omega_{\Delta})$-$Y$-linkage in $G_{\Delta}$ which consists of the minimal $V(\Omega_{\Delta})$-$Y$-subpaths of the paths in $\mathcal{P}.$

Let $\rho$ be a rendition of a society $(G, \Omega)$ in the disc $\Delta.$
Given a cycle $C \subseteq G$ that is grounded in $\rho$ we define the \emph{$C$-disc (in $\rho$)} as the unique $\rho$-aligned disc $\Delta' \subseteq \Delta$ bounded by the trace of $C$ in $\rho.$
We also use the terms \emph{$C$-society (in $\rho$)} to denote the $\Delta'$-society in $\rho$ and \emph{$C$-truncation (in $\rho$)} to denote the $\Delta'$-truncation in $\rho$ of an appropriately defined linkage in $G.$

\paragraph{Inner and outer graphs of cycles and aligned curves.}
Let $G$ be a graph with a $\Sigma$
-rendition $\rho$ and let $\gamma$ be a (closed) curve in $\Sigma$,
We say that $\gamma$ is \emph{$\rho$-aligned} if it intersects $\rho$ in nodes only.
That is, $\gamma$ must be disjoint from all cells of $\rho$.

Let $(G, \Omega)$ be a society with a $\Sigma$-rendition $\rho.$
Further, let $C$ be a grounded cycle or a $\rho$-aligned closed curve whose trace bounds a disc $\Delta_C$ and the $\Delta_C$-society $(G', \Omega').$
We call $G' \cup C$ the \emph{inner graph of $C$ (in $\rho$)} and call $G'$ itself the \emph{proper inner graph of $C$ (in $\rho$).}
Let $B = \pi(N(\rho) \cap \mathsf{bd}(\Delta_C)).$
We define the \emph{proper outer graph of $C$ (in $\rho$)} as $G'' \coloneqq G[B \cup (V(G) \setminus V(G'))]$ and call $G'' \cup C$ the \emph{outer graph of $C$ (in $\rho$).} 

\paragraph{Transactions in societies.}
Let $(G, \Omega)$ be a society. 
A \emph{transaction} in $(G, \Omega)$ is an $A$-$B$-linkage for disjoint segments $A, B$ of $\Omega$ consisting of $V(\Omega)$-paths. 
The inclusion-wise minimal segments $X$ and $Y$ of $\Omega$ for which $\mathcal{P}$ is an $X$-$Y$-linkage are called the \emph{end segments} of $\mathcal{P}$ in $(G, 
\Omega)$.

\paragraph{Cylindrical renditions.}
A rendition of a society $(G, \Omega)$ in the disc with a unique vortex $c_{0}$ is called a \emph{cylindrical rendition} of $(G, \Omega)$ \emph{around} $c_{0}$.

If $(G, \Omega)$ is a society with a cylindrical rendition $\rho$ around a vortex $c_{0}$ and $\mathcal{P}$ is a transaction in $(G, \Omega)$, we call $\mathcal{P}$ \emph{exposed} if for every path $p \in \mathcal{P}$ there exists an edge $e \in E(P) \cap \sigma(c_{0})$.

\paragraph{Nests and radial linkages.}
Let $\rho$ be a rendition of a society $(G, \Omega)$ in a disc $\Delta$.
A \emph{nest (in $\rho$)} is a set of disjoint cycles $\mathcal{C} = \{ C_{1}, \ldots, C_{s} \}$ in $G$ such that each of them is grounded in $\rho$, and if $\Delta_{i}$ is the $C_{i}$-disc for $i \in [s]$, then every vortex of $\rho$ is contained in $\Delta_{1}$ and $\Delta_{1} \subseteq \ldots \subseteq \Delta_{s} \subseteq \Delta$.
We call $C_{1}$ the \emph{inner cycle} of $\mathcal{C}$ and $C_{s}$ the \emph{outer cycle} of $\mathcal{C}$ respectively.
Moreover, we call a $V(\Omega)$-$V(C_{1})$-linkage $\mathcal{R}$ a \emph{radial linkage (in $\rho$) for $\mathcal{C}$} if all paths in $\mathcal{R}$ are grounded in $\rho$ and internally disjoint from $V(\Omega).$

If $(G, \Omega)$ is a society with a nest $\mathcal{C}$ in a rendition $\rho$ of $(G, \Omega)$ in a disc, we say that a radial linkage $\mathcal{R}$ for $\mathcal{C}$ is \emph{orthogonal to $\mathcal{C}$} if for all $C \in \mathcal{C}$ and all $R \in \mathcal{R}$ the graph $C \cap R$ is a path.
Similarly, we say that a transaction $\mathcal{P}$ in $(G, \Omega)$ is \emph{orthogonal to $\mathcal{C}$} if for all $C \in \mathcal{C}$ and all $P \in \mathcal{P}$ the graph $C \cap P$ consists of exactly two paths.

\paragraph{Depth of vortices.}
Let $G$ be a graph and $\rho$ be a $\Sigma$-rendition of $G$ with a vortex cell $c_0.$
Notice that $c_0$ defines a society $(\sigma(c_0), \Omega_{c_0})$, where $V(\Omega_{c_0})$ is the set of vertices of $G$ corresponding $N_\rho(c_0).$
The ordering $\Omega_{c_{0}}$ is obtained by traversing along the boundary of the closure of $c_0$ in anti-clockwise direction.
We call $(\sigma(c_0),\Omega_{c_0})$ as obtained above the \emph{vortex society} of $c_0.$

We define the \emph{depth} of a society $(G, \Omega)$ as the maximum cardinality of a transaction in $(G, \Omega).$
The \emph{depth} of the vortex $c_{0}$ is thereby defined as the depth of its vortex society.

Given a $\Sigma$-rendition $\rho$ with vortices, we define the \emph{breadth of $\rho$} as the number of vortex cells of $\rho$ and the \emph{depth of $\rho$} as the maximum depth of its vortex societies.

\subsection{The structure of annotated graphs of small bidimensionality}\label{subsec_redMinors}

Next, we introduce the structure theorem of Gorsky, Protopapas, and Wiederrecht \cite{GorskyPW2026Quickly} establishing polynomial bounds for the structure of annotated graphs of small bidimensionality.
We slightly modify the original statement by removing some amount of information that will not be necessary for the purpose of this paper.

\begin{proposition}[Gorsky, Protopapas, and Wiederrecht \cite{GorskyPW2026Quickly}]\label{prop:localstructure}
There exist functions $\mathsf{apex}_{\ref{prop:localstructure}},\mathsf{depth}_{\ref{prop:localstructure}} \colon \mathbb{N}^3 \to \mathbb{N}$ and $\mathsf{mesh}_{\ref{prop:localstructure}} \colon \mathbb{N}^4 \to \mathbb{N}$ such that for all integers $k,t \geq 1$, $r \geq 4$, and $w \geq 3$, every annotated graph $(G,R)$ with $\mathsf{bidim}(G,R) \leq r$, and every $\mathsf{mesh}_{\ref{prop:localstructure}}(r, k, t, w)$-mesh $M \subseteq G$ one of the following holds.
\begin{enumerate}

 \item $(G,R)$ has a blank $K_{\lfloor \nicefrac{5t}{2} \rfloor}$-minor model $\psi$ and a separation $(X,Y)$ of order at most $t-1$ such that the $\mathcal{T}_\psi$-big component of $G - (X \cap Y)$ is blank,
 
 \item $(G, R)$ has a red $K_t$-minor model controlled by $M$, or
 \item there exists a set $A \subseteq V(G)$ with $|A| \leq \mathsf{apex}_{\ref{prop:localstructure}}(k, t, r)$ and a surface $\Sigma$ of genus less than $9t^2$ such that $(G - A, R \setminus A)$ has a blank $\Sigma$-rendition $\rho$ with breadth at most $\nicefrac{3}{2}(t-1)(3t-4) + r(r-1)-3$, depth at most $\mathsf{depth}_{\ref{prop:localstructure}}(k, t, r)$, and there exists a vortex-free, $\rho$-aligned disc $\Delta \subseteq \Sigma$ such that the restriction of $\rho$ to $\Delta$ contains a flat $w$-wall.
 Moreover, if we denote by $\mathcal{V}$ the vortices of $\rho$, then for every $v \in \mathcal{V}$ there exists a $\rho$-aligned disc $\Delta_{v}$ such that
 \begin{itemize}
    \item $\Delta_v$ is disjoint from $\Delta$ and from $\Delta_{v'}$ for all $v' \in \mathcal{V}\setminus\{ v\}$, and
    \item there exists a nest of order $k$ in the restriction of $\rho$ to $\Delta_v$.
 \end{itemize}
\end{enumerate}
Moreover, it holds that $\mathsf{apex}_{\ref{prop:localstructure}}(k,t,r), \mathsf{depth}_{\ref{prop:localstructure}}(t,r) \in \poly(k+t+r)$ and $\mathsf{mesh}_{\ref{prop:localstructure}}(r,k,t,w) \in \poly(k+t+r) + \mathbf{O}(t^2w)$.

There also exists an algorithm that, given $t$, $k$, $r$, $w$, a graph $G$, and a mesh $M$ as above as input finds one of these outcomes in time $\poly(t+k+r+w)|E(G)|^3$.
\end{proposition}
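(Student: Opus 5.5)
The plan is to build on the polynomial local structure theorem for $K_t$-minor-free graphs relative to a mesh. This is the non-annotated version, for example Gorsky, Seweryn and Wiederrecht. I would first handle the red vertices through a clique-minor dichotomy, and only then look at how they are distributed inside an almost embedding.

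\textbf{Step 1: large clique minors.} Fix the mesh $M$ and its tangle $\mathcal{T}_M$, and ask whether there is a $K_{\lfloor 5t/2\rfloor}$-minor model $\psi$ controlled by $M$. If there is, apply Menger's theorem between $R$ and the union of the branch sets of $\psi$.
\begin{itemize}
\item If there are $t$ disjoint $R$-to-$\psi$ paths, I would reroute them in the standard Robertson--Seymour manner so that $t$ of the branch sets each absorb a red vertex. The slack of $\lfloor 3t/2\rfloor$ spare branch sets pays for the branch sets that the paths cut through. This yields a red $K_t$-model controlled by $M$, which is outcome (ii).
\item Otherwise there is a separation $(X,Y)$ of order at most $t-1$ with $R\subseteq X$ and $\psi$ essentially on the $Y$-side. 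Then the $\mathcal{T}_\psi$-big component of $G-(X\cap Y)$ contains no red vertex. After cleaning $\psi$ so that it avoids $R$, this is outcome (i).
\end{itemize}

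\textbf{Step 2: no large clique minor.} If no such clique minor exists, apply the local structure theorem to $G$ and $M$, with the mesh size chosen polynomial in $k,t,r$ plus $\mathbf{O}(t^2w)$. This gives an apex set $A$, a surface $\Sigma$ of genus less than $9t^2$, and a $\Sigma$-rendition with breadth $\mathbf{O}(t^2)$ and polynomial depth, together with a flat $w$-wall in a vortex-free disc $\Delta$. The rendition need not yet be blank: red vertices may still sit on nodes or in non-vortex cells.

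\textbf{Step 3: confining the red vertices.} This uses $\mathsf{bidim}(G,R)\le r$. Look at red vertices in the embedded part that are far, in the radial sense of the rendition, both from every vortex and from each other. Through the wall and nest infrastructure, such a dispersed family lets one route a grid in which every branch set picks up a red vertex. That would give a fully red $(r+1)\times(r+1)$ grid as a red-minor, a contradiction. Hence the red vertices in the embedded part can be covered by few bounded-radius discs. The counting here is what produces the additive $r(r-1)-3$ term in the breadth bound.

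Each such disc is declared a new vortex. Its depth is bounded because a large transaction through a red-rich disc would again produce a red grid. Where this is not clean, red vertices are moved into $A$, at polynomial cost. This makes the rendition blank.

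\textbf{Step 4: nests.} For each vortex, take a nest of order $k$ in a private disc $\Delta_v$ disjoint from $\Delta$ and from the other $\Delta_{v'}$. I would obtain these by enlarging the wall and re-selecting vortex neighbourhoods, which is the standard ``nest extraction'' step. This also costs only polynomially in the mesh size.

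\textbf{Algorithm.} Each step is constructive: Menger flows, the algorithmic local structure theorem, and local searches for red grids. Together they give the stated $\poly(t+k+r+w)|E(G)|^3$ running time.

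\textbf{Main obstacle.} I expect Step 3 to be the hard part: showing that red vertices which cannot be captured by few small discs in the rendition force a fully red grid red-minor. I would attack it by first linking dispersed red vertices to distinct, well-separated bricks of the flat wall. I would then invoke the grid-routing argument from the bidimensionality theory of Thilikos and Wiederrecht, applied inside the disc $\Delta$ and along the nests around the vortices.
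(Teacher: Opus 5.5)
The paper does not prove this proposition; it is quoted from Gorsky, Protopapas and Wiederrecht. So everything rests on your outline. Steps 1, 2 and 4 are standard up to bookkeeping: in Step 1 you should start from a clique that is $t-1$ larger, so that a blank $K_{\lfloor 5t/2\rfloor}$ survives after you discard the branch sets hit by the separator. Step 3, however, has a genuine gap.

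Its key claim is false. You assert that red vertices which are radially far from each other and from all vortices, in the rendition returned by the unannotated theorem, force a fully red grid, and hence that $R$ is covered by few bounded-radius discs. But that rendition may contain large regions attached to the wall only through separations of small order. If the order is at least $4$, these regions are not hidden in non-vortex cells, and red vertices inside them have no wall or nest infrastructure to route through.

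Here is an example. Let $P_0$ be a huge grid containing $M$. Attach to four consecutive vertices of its perimeter a chain $Q_1,\dots,Q_m$ of $(D\times D)$-grids, with consecutive ones joined by four disjoint paths. Let $R$ be the set of centres of the $Q_i$.
\begin{itemize}
\item The graph is planar, so Step 2 may return its plane embedding with $A=\emptyset$ and no vortices.
\item The red vertices are then pairwise at radial distance $\Omega(D)$. For large $m$ and $D$, no bounded number of bounded-radius discs covers them.
\item Yet $\mathsf{bidim}(G,R)\le 4$. Take a red $\Gamma_b$-model with $b\ge 5$. Orient each order-$4$ separation of the chain towards the side containing the giant component of $\Gamma_b$ minus the at most four branch sets meeting the separator.
\item These orientations point to one central piece, which has at most one red vertex and at most two adjacent separators. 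Deleting four vertices from $\Gamma_b$ leaves all but at most six vertices in one component.
\item Hence the model has at most $1+2\cdot 4+2\cdot 6=21<b^2$ branch sets, a contradiction.
\end{itemize}

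In this example a blank rendition of bounded breadth does exist. It is obtained only by swallowing the entire chain into a single vortex with four boundary nodes, which has depth at most $2$ but unbounded radius. So ``few bounded-radius discs'' is the wrong target.

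Your depth argument (a large transaction through a red-rich disc would yield a red grid) also invokes the wrong mechanism. The depth of the vortices actually needed is controlled by the small separations cutting them off from $M$, not by $\mathsf{bidim}$.

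What your plan lacks is an analysis of the red vertices in terms of their connectivity to $M$, that is, via separations in $\mathcal{T}_M$. Red vertices cut off from the wall by small-order separations must be absorbed, together with the whole small side, into vortices after re-drawing the rendition. Only red attachments that are well linked to the wall can be routed into a red grid, and this is where $\mathsf{bidim}\le r$ bounds the number of red vortices.

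You must then also check that this re-drawing keeps breadth and depth polynomially bounded and still leaves room for the nests of Step 4. This is the substantive content of the cited theorem, and your proposal does not supply it.
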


\section{The local structure of annotated graphs with small depth$_2$}\label{sec_structureTheorem}

The goal of this section is a further refinement of \zcref{thm_localstructure} towards a local structure theorem for annotated graphs of small $\mathsf{depth}_2$.
To fully express our structural main theorem, we introduce some additional concepts as it is now necessary to further analyse the structure of vortices.

\paragraph{Linear societies.}
In the following we consider a slightly different concept of societies.
Let $(G,\Omega)$ be a society and $x \in V(\Omega)$.
The \emph{$x$-linearisation} of $\Omega$ is the linear order $\Lambda_{\Omega,x}$ of $V(\Omega)$ where $x$ is the minimum, the maximum is $z \in V(\Omega)$ which is the predecessor of $x$ in $\Omega$ and $a,b \in V(\Omega)$ satisfy $a \leq_{\Lambda_{\Omega,x}} b$ if and only if there exists a segment $S$ of $\Omega$ with ends $a$ and $b$ that does not contain $x$ in its interior.
A \emph{linearisation} of $\Omega$ is a linear order $\Lambda$ of $V(\Omega)$ such that there is $x \in \Omega$ for which $\Lambda = \Lambda_{\Omega,x}$.

It turns out that in most situations, it does not make a huge difference when considering a cyclic order or a linearisation of it for the sake of structure theory.
See \cite{PaulPTW2024Obstructionsa} for some discussion on the topic.
Thus, in the following a \emph{linear society} is a pair $(G,\Lambda)$ where $(G,\Omega)$ is a society and $\Lambda$ is a linearisation of $\Omega$.
We will drop the suffix ``linear'' in most cases and only stress the linearity of our societies where necessary.
However, we strive to use $\Lambda$ for linear societies and $\Omega$ for regular societies.
We also lift the notion of segments from cyclic orders to linear orders in the natural way.

A linear society $(G,\Lambda)$ has a \emph{rendition in a disc} if the society $(G,\Omega)$ such that $(G,\Lambda)$ is a linearisation of $(G,\Omega)$ has a rendition in a disc.
Similarly, we lift all other notions relating to societies and renditions to linear societies by simply requiring the property to hold for $(G,\Omega)$ where $(G,\Lambda)$ is a linearisation.

\paragraph{Candles and cropped societies.}
Let $(G,\Lambda)$ be a linear society.

Let $R \subseteq V(G)$.
An \emph{$R$-candle} is a pair $P,Q$ of vertex-disjoint paths where the endpoints of $P$ are $x_1, x_2 \in V(\Lambda)$, $Q$ has an endpoint $y_1 \in V(\Lambda)$ and one endpoint in $R$ -- we allow those endpoints to coincide -- and where $x_1 , y_1$, and $x_2$ occur in $\Lambda$ in the order listed.
See \zcref{fig_CandleExample} for an example.
We call the vertices $x_1,y_1$, and $x_2$ the \emph{roots} of the candle and the endpoint of $Q$ in $R$ its \emph{flame}.
Moreover, $Q$ is referred to as the \emph{candle stick} and we usually write $\mathcal{X} = \{ P,Q\}$ for an $R$-candle.
In slight abuse of notation, we sometimes identify $\mathcal{X}$ and the graph $P \cup Q$.

\begin{figure}[ht]
 \centering
 \begin{tikzpicture}

 \pgfdeclarelayer{background}
		\pgfdeclarelayer{foreground}
			
		\pgfsetlayers{background,main,foreground}

 \begin{pgfonlayer}{background}
 \pgftext{\includegraphics[width=8cm]{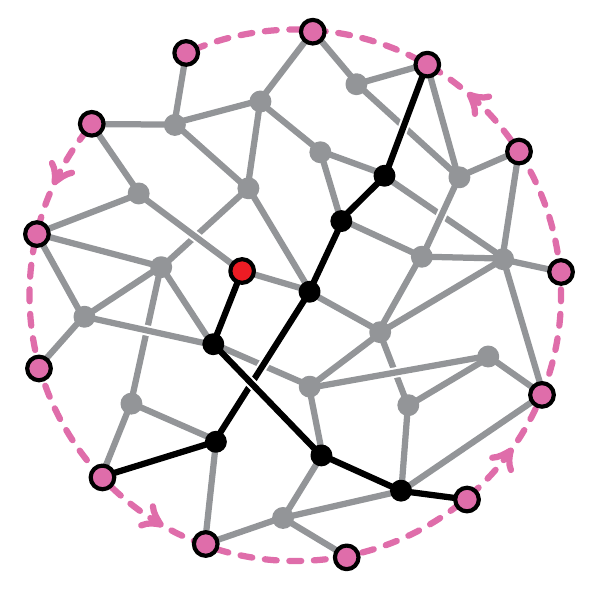}} at (C.center);
 \end{pgfonlayer}{background}
			
 \begin{pgfonlayer}{main}
 \node (C) [v:ghost] {};
 
 \end{pgfonlayer}{main}
 
 \begin{pgfonlayer}{foreground}
 \end{pgfonlayer}{foreground}

 \end{tikzpicture}
 \caption{A candle in a linear society.}
 \label{fig_CandleExample}
\end{figure}

Let $S$ be a segment of $\Lambda$.
We denote by $\Lambda(S)$ the restriction of $\Lambda$ to $S$ where all vertices of $S$ have the same order in $\Lambda(S)$ as in $\Lambda$.
The linear society $(G_S,\Lambda(S))$ is the \emph{$S$-crop} of $(G,\Lambda)$ where $G_S$ is the graph obtained from $G-(V(\Lambda)\setminus S)$ by deleting all components that do not contain at least one vertex of $S$.

\begin{figure}[ht]
 \centering
 \begin{tikzpicture}

 \pgfdeclarelayer{background}
		\pgfdeclarelayer{foreground}
			
		\pgfsetlayers{background,main,foreground}

 \begin{pgfonlayer}{background}
 \pgftext{\includegraphics[width=7cm]{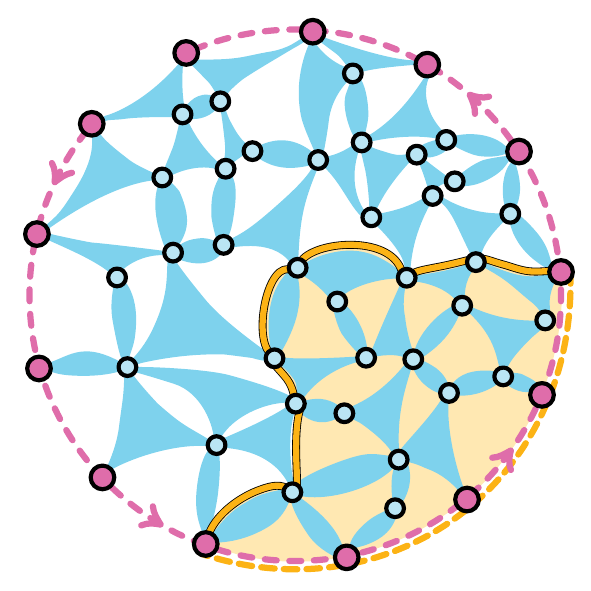}} at (C.center);
 \end{pgfonlayer}{background}
			
 \begin{pgfonlayer}{main}
 \node (C) [v:ghost] {};
 
 \end{pgfonlayer}{main}
 
 \begin{pgfonlayer}{foreground}
 \end{pgfonlayer}{foreground}

 \end{tikzpicture}
 \caption{A society with a vortex-free rendition in a disc together with a rooted curve $\gamma$ depicted in \textcolor{DarkBananaYellow}{yellow}. The base of $\gamma$ is marked as the \textcolor{DarkBananaYellow}{yellow} segment and the shaded area marks the inner disc of $\gamma$.}
 \label{fig_DiscsOfGamma}
\end{figure}

\paragraph{Rooted $\rho$-aligned curves.}
Let $(G,\Lambda)$ be a society and $\rho$ be a cylindrical rendition of $(G,\Lambda)$ in a disc $\Delta$ with vortex $c_0$.
We say that a $\rho$-aligned curve $\gamma$ is \emph{rooted} if both ends of $\gamma$ belong to $V(\Lambda)$ and $\gamma$ is otherwise disjoint from $V(\Lambda)$.

If $\gamma$ is a rooted $\rho$-aligned curve, then $\gamma$ partitions $\Lambda$ into three possibly empty segments where precisely one of those three segments contains both ends of $\gamma$.
We call this segment the \emph{base} of $\gamma$.

Notice that a rooted $\rho$-aligned curve together with the boundary of $\Delta$ decomposes $\Delta$ into two discs which intersect precisely in $\gamma$.
The base of $\gamma$ is contained fully in precisely one of those two discs and we refer to this disc as the \emph{inner disc} of $\gamma$ while the other disc is the \emph{outer disc} of $\gamma$.
See \zcref{fig_DiscsOfGamma} for an illustration.
The \emph{inner graph} of a rooted $\rho$-aligned curve is the inner graph of the boundary of its inner disc and its \emph{outer graph} is the inner graph of the boundary of its outer disc.

Let $\gamma$ be a $\rho$-aligned curve.
We say that $\gamma$ is a \emph{frontier} if: 
\begin{enumerate}
    \item The curve $\gamma$ can be decomposed into rooted $\rho$-aligned curves $\gamma_1,\dots,\gamma_k$ such that $\bigcup_{i \in [k]}\gamma_i = \gamma$  and for any two $i \neq j$, $\gamma_i \cap \gamma_j$ is either empty or $|i - j|=1$ and the intersection is a common endpoint.
    \item For each $i\in[k]$, the inner disc of $\gamma_i$ does not contain any of the $\gamma_j$, $j \in [k] \setminus \{ i\}$.
\end{enumerate}
The inner graph of a frontier $\gamma$ is the union of the inner graphs of the $\gamma_i$ and the outer graph of $\gamma$ is the common intersection of all outer graphs of all the $\gamma_i$.
See \zcref{fig_Frontier} for an illustration.

\begin{figure}[ht]
 \centering
 \begin{tikzpicture}

 \pgfdeclarelayer{background}
		\pgfdeclarelayer{foreground}
			
		\pgfsetlayers{background,main,foreground}

 \begin{pgfonlayer}{background}
 \pgftext{\includegraphics[width=7cm]{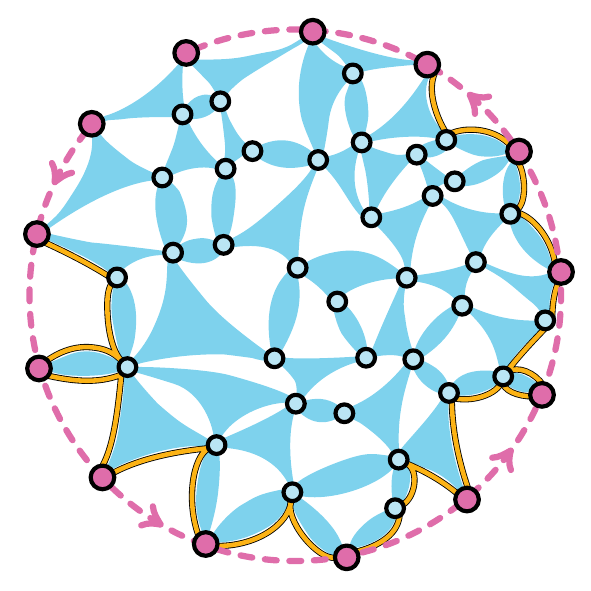}} at (C.center);
 \end{pgfonlayer}{background}
			
 \begin{pgfonlayer}{main}
 \node (C) [v:ghost] {};
 
 \end{pgfonlayer}{main}
 
 \begin{pgfonlayer}{foreground}
 \end{pgfonlayer}{foreground}

 \end{tikzpicture}
 \caption{A frontier -- depicted in \textcolor{DarkBananaYellow}{yellow} -- in a vortex-free rendition in a disc of a linear society.}
 \label{fig_Frontier}
\end{figure}

\paragraph{Candle-free societies.}
A set $R \subseteq V(G)$ is \emph{candle-free} in $(G,\Lambda)$ if there exists a vortex-free rendition of $(G,\Lambda)$ in a disc and there is no $R$-candle in $(G,\Lambda)$.

\paragraph{Facial sets in societies.}
Let $(G,\Lambda)$ be a linear society where $(G,R)$ is an annotated graph and $\rho$ be a vortex-free rendition of $(G,\Lambda)$.
We say that $\rho$ is \emph{facial} if there exists a frontier $\gamma$ such that
\begin{itemize}
    \item $R$ is contained entirely in the outer graph of $\gamma$, and
    \item there does not exist a $V(\gamma)$-candle in $(G,\Lambda)$.
\end{itemize}
We say that $\gamma$ is an \emph{ultimate frontier} of $(G,\Lambda)$.

\paragraph{Shallow renditions.}
Let $k$ be a non-negative integer, $(G,R)$ be an annotated graph and $\Sigma$ be a surface.
We say that a $\Sigma$-rendition $\rho$ of $G$ is \emph{$k$-shallow} if
\begin{enumerate}
    \item $\rho$ is blank, and
    \item for each vortex $v$ of $\rho$ there exists a linearisation $(\sigma(v),\Lambda_v)$ of its  vortex society such that there exists a partition of $\Lambda_v$ into at most $k$ segments $S_1,\dots,S_k$ such that, if we denote by $(G_{S_i},\Lambda_{S_i})$ is the $S_i$-crop of $(\sigma(v),\Lambda_v)$ for each $i \in [k]$,
    \begin{itemize}
        \item for all $i \neq j \in[k]$, $G_{S_i}$ and $G_{S_j}$, and
        \item for all $i \in [k]$ either $V(G_{S_i}) \cap R = \emptyset$ -- and we call $S_i$ a \emph{deep segment}, or $R \cap V(G_{S_i})$ is facial in $(G_{S_i},\Lambda_{S_i})$ -- and we call $S_i$ a \emph{facial segment}.
    \end{itemize}
\end{enumerate}

With these definitions, we have everything in place to state our local structure theorem for annotated graphs of small $\mathsf{depth}_2$.

\begin{theorem}\label{thm_localstructure}
There exist functions $\mathsf{apex}_{\ref{thm_localstructure}},\mathsf{depth}_{\ref{thm_localstructure}} \colon \mathbb{N}^2 \to \mathbb{N}$, and $\mathsf{mesh}_{\ref{thm_localstructure}} \colon \mathbb{N}^3 \to \mathbb{N}$ such that for all integers $t \geq 1$, $r \geq 4$, and $w \geq 3$, every annotated graph $(G,R)$ with $\mathsf{depth}_2(G,R) \leq r$, and every $\mathsf{mesh}_{\ref{thm_localstructure}}(r, t, w)$-mesh $M \subseteq G$ one of the following holds.
\begin{enumerate}

 \item $(G,R)$ has a blank $K_{\lfloor \nicefrac{5t}{2} \rfloor}$-minor model $\psi$ and a separation $(X,Y)$ of order at most $t-1$ such that the $\mathcal{T}_\psi$-big component of $G - (X \cap Y)$ is blank,
 
 \item $(G, R)$ has a red $K_t$-minor model controlled by $M$, or
 \item there exists a set $A \subseteq V(G)$ with $|A| \leq \mathsf{apex}_{\ref{thm_localstructure}}(t, r)$ and a surface $\Sigma$ of genus less than $9t^2$ such that $(G - A, R \setminus A)$ has a blank and $r$-shallow $\Sigma$-rendition $\rho$ with breadth at most $\nicefrac{3}{2}(t-1)(3t-4) + r(r-1)-3$, depth at most $\mathsf{depth}_{\ref{thm_localstructure}}(t, r)$, and there exists a vortex-free, $\rho$-aligned disc $\Delta \subseteq \Sigma$ such that the restriction of $\rho$ to $\Delta$ contains a flat $w$-wall.
\end{enumerate}
Moreover, it holds that $\mathsf{apex}_{\ref{thm_localstructure}}(t,r), \mathsf{depth}_{\ref{thm_localstructure}}(t,r) \in \poly(t+r)$ and $\mathsf{mesh}_{\ref{thm_localstructure}}(r,t,w) \in \poly(t+r) + \mathbf{O}(t^2w)$.

There also exists an algorithm that, given $t$, $r$, $w$, a graph $G$, and a mesh $M$ as above as input finds one of these outcomes in time $\poly(t+r+w)|G|^6$.
\end{theorem}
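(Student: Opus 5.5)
We start from \zcref{prop:localstructure}. Since a $2$-outer-annotated $(k\times k)$-grid is a red-minor of the fully annotated $(k\times k)$-grid, we have $\mathsf{bidim}(G,R)\leq\mathsf{depth}_2(G,R)\leq r$. We apply \zcref{prop:localstructure} with this $r$, the given $t$ and $w$, and a nest order $k\coloneqq \poly(t+r)$ chosen large enough for the later arguments. The first two outcomes carry over verbatim, so we may assume the third outcome: an apex set $A_0$, a blank $\Sigma$-rendition $\rho_0$ of bounded breadth and depth $d_0\in\poly(t+r)$, a flat $w$-wall in a vortex-free disc $\Delta$, and for every vortex $v$ a private disc $\Delta_v$ carrying a nest $\mathcal{C}_v$ of order $k$. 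What remains is to make $\rho_0$ $r$-shallow, at the price of enlarging the apex set by $\poly(t+r)$ and possibly re-rendering inside each $\Delta_v$. Each vortex is handled separately, and the vortex and its nest are treated as a society with a cylindrical rendition.

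The first step is to \emph{kill candles}. Fix a vortex $v$ with vortex society $(\sigma(v),\Omega_v)$ of depth at most $d_0$. Suppose this society contains a large family of pairwise disjoint $R$-candles whose roots are spread along $\Omega_v$ (say $r^2$ of them, interleaved appropriately). Using the nest $\mathcal{C}_v$ together with a radial linkage, and the fact that the paths $P$ of the candles form a transaction that can be routed through the nest, we build a $2$-outer-annotated $(r+1)$-grid as a red-minor, which contradicts $\mathsf{depth}_2\leq r$. The construction is as follows. The cycles of the nest supply the grid rows and the radial paths supply the columns. Each candle $\{P,Q\}$ yields one ``bump'': $P$ connects two consecutive radial paths and $Q$ brings a red flame into the branch set sitting between them. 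This reproduces the two red outer columns exactly as in the shallow-vortex grid of \cite{ThilikosW2024Killing}.

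If no such large family exists, a Menger/Erdős–Pósa type argument for candles applies. Since candles are rooted at three points of a society of bounded depth, their packing–covering duality can be proved via the standard transaction lemma. It yields a set of $\poly(t+r)$ vertices meeting all $R$-candles. We add this set to the apex set. Then, following the vortex-killing technique of \cite{ThilikosW2024Killing}, the vortex society splits along a linearisation into at most $r$ segments whose crops are pairwise disjoint. Each crop either contains no red vertex (a deep segment) or is candle-free; a candle-free crop with a red vertex must in addition admit a vortex-free disc rendition, for otherwise a cross together with a red vertex again produces a candle or a grid. The bound of $r$ segments again comes from $\mathsf{depth}_2\leq r$, because $r+1$ alternating red/deep segments would form the grid.

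The second step, which I expect to be the main obstacle, is to upgrade ``candle-free'' to ``facial''. We must produce a frontier $\gamma$ such that all of $R$ lies in its outer graph and no $V(\gamma)$-candle exists. The plan is to take the rooted $\rho$-aligned curves $\gamma_i$ that are maximal with respect to having red-free inner discs, each grounded on a base segment. Candle-freeness of $R$ should then force their union to separate the red vertices from the interior, i.e.\ to form a frontier. Showing that no $V(\gamma)$-candle survives requires an annotated Two Paths Theorem in the style of \zcref{thm_FacialSocieties}: a $V(\gamma)$-candle could be extended, through the red vertices near $\gamma$, to an $R$-candle, and this extension should follow from the maximality of the $\gamma_i$. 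Some further apex vertices may be needed to handle degenerate separations of order at most three. Finally, all bounds remain polynomial. The running time is dominated by the $|E(G)|^3$ call to \zcref{prop:localstructure} plus polynomially many linkage and disjoint-paths subroutines on the bounded-depth vortices, which gives $\poly(t+r+w)|G|^6$.
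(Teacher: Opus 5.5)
Your outer frame is the paper's: use $\mathsf{bidim}\le\mathsf{depth}_2$, call \zcref{prop:localstructure}, then clean each vortex by destroying $R$-candles. The per-vortex cleaning, however, has a genuine gap. You work with candles of the vortex society $(\sigma(v),\Omega_v)$, and these are the wrong objects.

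\textbf{Candles in the vortex society.} Candle-freeness of $(\sigma(v),\Omega_v)$ does not give the deep/facial dichotomy with boundedly many segments. Let components containing a cross but no red vertex, each occupying a consecutive block of $\Omega_v$, alternate with red pendant vertices. Then $(\sigma(v),\Omega_v)$ has no $R$-candle. Yet no crop may contain both a cross block and a red vertex, since it would be neither deep nor facial, so unboundedly many segments are needed. Such configurations are excluded only because, in a society cut out by a nest cycle, a cross and a red vertex can be joined \emph{through the nest} into a candle (\zcref{lemma_RedCross}). Your sentence that ``$r+1$ alternating red/deep segments would form the grid'' is exactly this unproved step. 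Likewise, your grid construction uses a radial linkage that \zcref{prop:localstructure} does not provide.

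\textbf{What the paper does instead.} It snugs the nest (\zcref{prop:SnugNest}) and works in the society $(G',\Lambda')$ bounded by its outermost cycle. This society has depth at most $2s+d+2$ (\zcref{lemma_SnugDepthBounded}), so the disc can become the new vortex. Blankness puts every flame inside $c_0$, so candles of $(G',\Lambda')$ run through the nest and supply their own radial paths. The deletion set and the segmentation are then produced together by a greedy procedure: the next segment $S_i$ is the shortest one whose crop, after removing an isolator of size at most the depth, still contains a candle. There are two outcomes.
\begin{enumerate}
\item $k$ such segments exist. This gives $k$ candles in consecutive isolated segments. After keeping their loops shallow (\zcref{lemma_CandleLoopsAreNotDeep}, via the vital linkage bound) and untangling them against the nest (\zcref{lemma_ForcingCandlesIntoAMesh}), you get a $k$-candle mesh, hence a large $2$-outer grid (\zcref{lemma_2outerInCandles}).
\item Fewer than $k$ segments cover $\Lambda'$. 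The union of the isolators is both the apex addition and what separates the crops, and \zcref{lemma_RedCross} makes each candle-free crop deep or facial.
\end{enumerate}
The number of segments is bounded because each segment carries its own candle. A global Erd\H{o}s--P\'osa hitting set followed by an unspecified split provides none of this.

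\textbf{Faciality.} The faciality step you flag as the main obstacle is missing its key idea. Maximal red-free curves in the \emph{given} rendition cannot work, because faciality may require a different rendition. Suppose a red $z$ adjacent to $a<b<c$ is hidden in a single cell with nodes $a,b,c$. The society is candle-free, yet in that rendition every frontier through $b$ admits the $V(\gamma)$-candle $P=a\,z\,c$, $Q=\{b\}$, and the frontier from $a$ to $c$ avoiding $b$ has $z$ in its inner graph. Repeating this gadget along $\Lambda$ shows that patching with extra apices cannot stay bounded either.

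The paper's argument (\zcref{thm_FacialSocieties}) adds a vertex $r^\star$ adjacent to all of $R$, inserted into the cyclic order between the maximum and the minimum of $\Lambda$. The $R$-candles of $(G,\Lambda)$ then correspond exactly to crosses of the augmented society, so the Two Paths Theorem yields a \emph{new} vortex-free rendition. After deleting $r^\star$, the cells that touched it have at most two nodes, and a curve through their nodes is an ultimate frontier. Hence candle-free is equivalent to facial, with no extra apices.
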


Notice that the main difference between \zcref{prop:localstructure} and \zcref{thm_localstructure} is the addition of the condition on $\rho$ in the third outcome to be shallow.
This shallowness is precisely the distinguishing factor between annotated graphs of small $\mathsf{depth}_2$ and those of bounded bidimensionality:
It provides further control about the interaction of the annotation with the vortices.

Our proof for \zcref{thm_localstructure} is divided into two major parts.

In \textbf{Part 1} we investigate the behaviour of candles in societies and establish general structure theorems for situations where candles occur versus situations where candles are absent.
We also show that, in the case where we have a society containing a large nest around a single vortex and we sequentially find many disjoint candles, then these candles together with the nest give rise to a large candle mesh which implies that the annotated graph underlying the society has large $\mathsf{depth}_2$.
This is a core observation that will later allow us to bound the number of disjoint candles we can find.

In \textbf{Part 2}, we adapt ideas from the work of Thilikos and Wiederrecht \cite{ThilikosW2024Killing} on handling structures rooted at the boundaries of bounded depth vortices to show that each vortex in the outcome of \zcref{prop:localstructure} either contains many disjoint candles -- implying that the annotated graph has large $\mathsf{depth}_2$ by using the tools developed in \textbf{Part 1} -- or there exists a small set of vertices whose deletion removes all candles from all vortices.
The later outcome will then imply the final structure required to prove \zcref{thm_localstructure}.

\subsection{Part 1: Placing candles}\label{subsec_Step1}
We begin with \textbf{Part 1}.
Our goal is to establish a sequence of general structural observations on candles in linear societies with underlying annotated graphs.
There are several core lemmas we need to extract in this first part, those include:
\begin{itemize}
    \item establishing a link between the absence of candles and the property of being facial in societies with vortex-free renditions in a disc,
    \item showing that many candles rooted at the top of a large mesh give rise to a big $2$-outer-annotated grid,
    \item a toolset for routing and redrawing candles under certain conditions, and finally
    \item proving that any linear society is either facial, has a candle, or is blank, where the last outcome is distinct from the first in the case where the society does not have a vortex-free rendition in a disc.
\end{itemize}

\paragraph{Facial societies.}
We start with a somewhat independent result describing the structure of linear societies without $R$-candles.

\begin{theorem}\label{thm_FacialSocieties}
Let $(G,\Lambda)$ be a society where $(G,R)$ is an annotated graph and $(G,\Lambda)$ has a vortex-free rendition in a disc.
Then $(G,\Lambda)$ is facial if and only if $(G,\Lambda)$ does not contain an $R$-candle.
\end{theorem}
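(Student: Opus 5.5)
We prove the two directions separately; the forward direction is a short rerouting argument, while the converse requires constructing an ultimate frontier.

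\emph{Facial $\Rightarrow$ no $R$-candle.} Suppose $\rho$ is facial with ultimate frontier $\gamma=\gamma_1\cup\dots\cup\gamma_k$, and suppose for contradiction that $\{P,Q\}$ is an $R$-candle with roots $x_1,y_1,x_2$ and flame $r\in R$. By definition $r$ lies in the outer graph of $\gamma$. Since $y_1\in V(\Lambda)$, two cases arise.
\begin{itemize}
\item If $y_1$ lies in the base of some $\gamma_i$ but not on $\gamma$ itself, then $Q$ starts in the inner disc of $\gamma_i$ and ends in the outer graph. Since $\gamma$ meets $\rho$ only in nodes, $Q$ must pass through a vertex $z\in V(\gamma_i)$. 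Replacing $Q$ by its subpath from $y_1$ to $z$ produces a $V(\gamma)$-candle, contradicting that $\gamma$ is ultimate.
\item If $y_1$ lies in no base, then $y_1$ is itself in the outer graph. Here we use planarity of the vortex-free disc rendition. The path $P$ joins $x_1$ and $x_2$, which lie on opposite sides of $y_1$ in $\Lambda$, so $P$ separates $y_1$ from the part of $\Lambda$ outside $[x_1,x_2]$. Every $\gamma_j$ is a rooted curve, and its base either contains $y_1$ in its interior (which we have excluded) or is disjoint from the interval. We then show that some $\gamma_j$ has an endpoint vertex reachable from $y_1$ inside the region cut off by $P$ without meeting $P$. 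Taking $Q'$ to be such a path gives a $V(\gamma)$-candle.
\end{itemize}
The case analysis of where the bases of the $\gamma_j$ sit relative to $x_1,y_1,x_2$ is routine but fiddly. One simplifying choice is to allow a trivial candle stick, i.e.\ $y_1$ itself lying on $\gamma$, which the definition permits since the endpoints of $Q$ may coincide.

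\emph{No $R$-candle $\Rightarrow$ facial.} Fix a vortex-free disc rendition $\rho$ of $(G,\Lambda)$. For each $r\in R$ and each $y\in V(\Lambda)$ joined to $r$ by a path $Q$, the absence of candles says that every $V(\Lambda)$-path $P$ with ends on opposite sides of $y$ intersects $Q$. So every $V(\Lambda)$-path "jumping over" $y$ meets every $y$–$r$ path.

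We build the frontier greedily. Call a rooted $\rho$-aligned curve $\gamma'$ \emph{admissible} if its inner graph contains no vertex of $R$ other than on $\gamma'$ itself. Among all families of admissible rooted curves whose inner discs are pairwise internally disjoint, choose one maximising the total inner region, ordered by inclusion. Then concatenate consecutive curves along the boundary to form $\gamma$; where consecutive members do not touch, join them through the roots of $\Lambda$. The first condition, that $R$ lies in the outer graph, holds by construction.

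For the second condition, suppose a $V(\gamma)$-candle $P,Q$ exists with flame $z\in V(\gamma)$. Using the planarity of $\rho$ and the Jordan curve theorem, the path $P$ together with the segment $[x_1,x_2]$ bounds a disc containing $y_1$ and $Q$. Consider the region enclosed by the trace of $P$ (pushed slightly outward). If it contains an $R$-vertex, we extend $Q$ from $z$ into the outer graph; since $z$ is a node on a curve hugging $R$-free inner regions, it is adjacent to or reachable toward $R$ across the outer side, which yields an $R$-candle and a contradiction. Otherwise the trace of $P$ is a larger admissible rooted curve, contradicting maximality.

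\emph{Main obstacle.} The hardest step is the final extension argument: showing that a $V(\gamma)$-candle whose disc meets $R$ can actually be converted into a genuine $R$-candle. This needs a Menger/planarity argument that, inside the disc bounded by $P$, a vertex of $\gamma$ connects to an $R$-vertex avoiding $P$. I would first try to choose the frontier as the boundary of the component structure of $G - R$ seen from $\Lambda$, essentially the face of $\rho$ "closest" to $\Lambda$ among those containing $R$, so that every frontier node is adjacent to the $R$-side. This is what turns the two-paths-style topological statement into a clean contradiction.
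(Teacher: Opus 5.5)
Your Case~1 of the forward direction is the paper's argument. The stick starts in the inner disc of some $\gamma_i$ and ends in the outer graph, so it meets a node of $\gamma_i$, and its initial segment gives a $V(\gamma)$-candle.

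Case~2 is not a fiddly case analysis but a step that cannot be proved. Nothing makes some $\gamma_j$ reachable from $y_1$ inside the region cut off by $P$, and with the definition of frontier read literally the implication is false there. Let $G$ be the disjoint union of the path $x_1px_2$ and the edges $y_1r$ and $uv$, with $R=\{r\}$ and $\Lambda=\langle x_1,y_1,x_2,u,v\rangle$. The rooted curve $\gamma$ from $u$ to $v$ that cuts off only the cell of $uv$ is a frontier with $r$ in its outer graph. There is no $V(\gamma)$-candle: a stick ending in $\{u,v\}$ must have its root in $\{u,v\}$, and no path jumps over $u$ or $v$. Yet $\{x_1px_2,\,y_1r\}$ is an $R$-candle. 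The paper's proof takes for granted that the stick starts inside the union of the inner discs, i.e.\ that every vertex of $\Lambda$ lies in the base of some $\gamma_i$. This holds for the frontier it builds in the converse, which runs from the minimum to the maximum of $\Lambda$. You should adopt that reading, under which Case~2 is vacuous, rather than try to prove Case~2.

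The converse has a more basic flaw. You fix an arbitrary vortex-free rendition $\rho$ and build the frontier inside it, but being facial is a property of the rendition, and a candle-free society can have non-facial renditions. Let $G$ consist of the path $x_1n_2mn_3x_2$ and the edges $ym$ and $mr$, with $R=\{r\}$ and $\Lambda=\langle x_1,y,x_2\rangle$. There is no $R$-candle, since every $y$--$r$ path and every $x_1$--$x_2$ path passes through $m$. Render $x_1n_2$ and $n_3x_2$ as separate cells, and $n_2m,mn_3,ym,mr$ as one cell $c$ with $N(c)=\{y,n_2,n_3\}$. As $r$ is not a node, $c$ must lie in the outer disc of every piece of a frontier having $r$ in its outer graph. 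A frontier avoiding $y$ would be a single curve from $x_1$ to $x_2$ whose inner disc contains $c$, so $y\in V(\gamma)$. Then $x_1n_2mn_3x_2$ together with the trivial stick $\{y\}$ is a $V(\gamma)$-candle. This is exactly where your extension step breaks: the flame $y$ reaches $R$ only through $m\in V(P)$.

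The missing idea is the paper's reduction to the Two Paths Theorem, which chooses the rendition for you. Your observation that every path jumping over $y$ meets every $y$--$R$ path says precisely that a certain augmented society has no cross. Take $(G+r^*,\Omega')$, where $r^*$ is a new vertex adjacent to all of $R$, placed between the maximum and the minimum of $\Lambda$. Since $(G,\Lambda)$ has no cross, any cross here must use $r^*$, and then it yields an $R$-candle. The Two Paths Theorem therefore gives a vortex-free disc rendition of $(G+r^*,\Omega')$. Deleting $r^*$ leaves the cells that had $r^*$ as a node with at most two nodes each, and together these cells contain all of $R$. The frontier $\psi$ then runs from the minimum of $\Lambda$ through the nodes of these cells to the maximum. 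A $V(\psi)$-candle is excluded because its stick can be continued through such a cell to $r^*$, giving a cross. The frontier nodes are chosen so that precisely the extension you could not justify is available.
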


Our main concern towards proving \zcref{thm_FacialSocieties} is to understand what the absence of an $R$-candle in a society implies if we already know that the society has a vortex-free rendition in a disc. 
Here, a fundamental theorem from Graph Minors becomes relevant: the \emph{Two Paths Theorem}.

We will later require the following subroutine for the \textsc{$2$-Disjoint Paths} problem.

\begin{proposition}[Kawarabayashi, Li, and Reed \cite{KawarabayashiLR2015Connectivity}]\label{prop_TwoPaths}
There exists a linear-time algorithm for the \textsc{$2$-Disjoint Paths} problem.
\end{proposition}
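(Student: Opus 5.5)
We treat this as the algorithmic form of the Two Paths Theorem (Seymour; Shiloach; Thomassen) and aim to make every step linear. Given $G$ with terminal pairs $(s_1,t_1),(s_2,t_2)$, form $G^+$ from $G$ by adding the edges of the $4$-cycle $C=s_1s_2t_1t_2$. The characterisation we use is the following. After every $(\le 3)$-separation $(A,B)$ with all terminals in $B$ has been \emph{reduced}, i.e.\ $A\setminus B$ deleted and $A\cap B$ turned into a clique, exactly one of two things happens. Either the reduced graph is planar with $C$ bounding a face, and then no linkage exists. Or the linkage exists.

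The algorithm would proceed in four steps.

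\textbf{Step 1: degenerate cases.} Handle in $\mathbf{O}(n+m)$ the cases where the terminals are not distinct, or where $G$ is disconnected. These reduce either to single-path questions or to an immediate answer by BFS.

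\textbf{Step 2: reductions via triconnected components.} Compute the block tree and then the SPQR-tree of $G^+$ with the Hopcroft--Tarjan / Gutwenger--Mutzel linear-time triconnectivity algorithms. The edges of $C$ keep all four terminals in the same block. Every $1$-, $2$- or $3$-separation whose small side avoids the terminals must therefore hang off the node of the decomposition that contains $C$. We replace each such small side by a clique on its separator, after first testing in linear time on that piece which separator vertices it can connect internally. This replacement preserves the answer, because a path of the linkage can enter a side without terminals at most once through a cut of size $\le 3$.

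\textbf{Step 3: planarity test.} Run a linear-time planarity test (Hopcroft--Tarjan or Boyer--Myrvold) on the reduced graph and check whether $C$ bounds a face of the embedding. If it does, report that no solution exists.

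\textbf{Step 4: constructing the paths.} Otherwise the reduced graph is non-planar, or planar with the terminals not cofacial in the right order. The constructive proof of the Two Paths Theorem then yields the two paths. We find a subdivision witnessing non-planarity (a Kuratowski subgraph, extractable in linear time) or a face-crossing path, route the two paths through it, and expand each added clique edge back to a path inside the corresponding deleted piece.

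The main obstacle is Step 2 together with the expansion in Step 4. A $3$-separation is not directly visible in the SPQR-tree, which only encodes $2$-separations. So we need a linear-time procedure that finds all maximal terminal-free $3$-separated pieces inside a triconnected component, and these pieces may be nested. I would first attempt a Kawarabayashi--Li--Reed style argument: a degree/contraction process inside each triconnected component, which peels off such pieces in amortised constant time per edge, combined with a bound on the total size of the replacement cliques (each has at most $3$ edges). If a direct peeling is too delicate, the fallback is a depth-first traversal that recomputes low-points relative to the terminal set, as in Hopcroft--Tarjan, so that each vertex is charged once.
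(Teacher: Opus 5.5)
The paper gives no proof of this statement. It imports it as a black box from Kawarabayashi, Li, and Reed, so your sketch has to stand on its own, and as written it does not.

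\textbf{What is sound.} You have the classical framework.
\begin{enumerate}
\item The dichotomy for $G^+=G+C$ after reducing terminal-free separations of order at most $3$. Your justification of the reduction is correct: no linkage can use an edge of $C$, and at most one path can traverse a terminal-free piece, using two of its at most three boundary vertices.
\item Block and SPQR-trees, together with linear-time planarity testing.
\end{enumerate}
One small inaccuracy: before reductions, $C$ need not lie in a single SPQR skeleton. This happens, e.g.\ when $\{s_1,t_1\}$ is a $2$-cut separating $s_2$ from $t_2$ (an immediate \textsf{no}), or when a terminal-free piece hangs on an adjacent pair such as $\{s_1,s_2\}$. Once these cases are handled, the reduced $G^+$ is $3$-connected and the face test of Step~3 is well defined. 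This framework is known to give polynomial algorithms. With substantial extra work it gives nearly linear ones, such as Tholey's $\mathbf{O}(n+m\,\alpha(m,n))$ algorithm. The linear bound is exactly what is missing.

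\textbf{The gap in Step~2.} You do not say how to compute, in linear total time, a maximal sequence of reductions of terminal-free $3$-separations inside a $3$-connected component. Several things make this hard:
\begin{enumerate}
\item these separations are invisible to the SPQR-tree;
\item the pieces can be nested and can cross;
\item each reduction changes the graph in which the next one must be found, and re-searching after every reduction is far from linear.
\end{enumerate}
``A peeling process in amortised constant time per edge'' and ``low-points recomputed relative to the terminal set'' name a desired algorithm. They give no invariant or charging argument showing that each edge is handled $\mathbf{O}(1)$ times. You flag this yourself as the main obstacle, and it is precisely where the difficulty of the cited result lies.

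\textbf{The gap in Step~4.} Appealing to ``the constructive proof of the Two Paths Theorem'' does not give linear time. A Kuratowski subgraph of the reduced graph does not tell you how to route disjoint $s_1$--$t_1$ and $s_2$--$t_2$ paths through it. The standard constructive proofs proceed by repeated contraction and re-reduction, which is superlinear unless implemented with the same machinery that Step~2 lacks.

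So your proposal is a correct roadmap to a polynomial-time algorithm, but not a proof of the linear-time statement. In this paper the appropriate justification is the citation. The paper only uses the result inside polynomial-time subroutines, so a slower classical algorithm would merely worsen the stated exponents.
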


Let $(G,\Omega)$ be a society. 
We say that $(G,\Omega)$ \emph{has a cross} if there exist four distinct vertices $s_1,s_2,t_1,t_2$ appearing in $\Omega$ in the order listed together with two vertex-disjoint paths $P_1$ and $P_2$ in $G$ where $P_i$ joins $s_i$ and $t_i$ for each $i\in[2]$.
We also refer to the two paths $P_1$ and $P_2$ as a \emph{cross} of $(G,\Omega)$.

The Two Paths Theorem can be seen as a structural variant of \zcref{prop_TwoPaths} and it is the engine underlying the algorithm.

\begin{proposition}[Two Paths Theorem \cite{Jung1970Verallgemeinerung,Seymour1980Disjoint,Shiloach1980Polynomial,Thomassen19802Linked,RobertsonS1990Graph}]\label{prop_TwoPathsProper}
A society $(G,\Omega)$ has no cross if and only if it has a vortex-free rendition in a disc.
\end{proposition}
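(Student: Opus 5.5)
We sketch a proof of \zcref{prop_TwoPathsProper}, proving the two directions separately. The backward direction is topological; the forward direction goes by induction on $|V(G)|$ and carries the real content.

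\emph{A rendition excludes a cross.} Suppose $\rho$ is a vortex-free rendition of $(G,\Omega)$ in a disc $\Delta$, and suppose $P_1,P_2$ form a cross with ends $s_1,s_2,t_1,t_2$ appearing in this order on $\bd(\Delta)$.
\begin{enumerate}
    \item Every cell has at most three nodes. Inside a cell, a path enters and leaves through two of its nodes, and two disjoint paths cannot both pass through one cell with at most three nodes.
    \item Hence the traces of $P_1$ and $P_2$, as defined via the tie-breaker, are disjoint arcs in $\Delta$. These arcs join $s_1$ to $t_1$ and $s_2$ to $t_2$, and otherwise meet $\bd(\Delta)$ only at their ends.
    \item By the Jordan curve theorem, the trace of $P_1$ separates $\Delta$ into two parts with $s_2$ and $t_2$ on different sides. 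This contradicts the disjointness of the two traces.
\end{enumerate}

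\emph{No cross gives a rendition.} We argue by induction on $|V(G)|$, with a reduction step followed by a $3$-connected core case.
\begin{enumerate}
    \item Let $G^+$ be obtained from $G$ by adding a new vertex $z$ adjacent to every vertex of $V(\Omega)$. We may also add the edges of the cycle $C_\Omega$ through $V(\Omega)$ in the order $\Omega$; this creates no cross, since a cross using an edge of $C_\Omega$ can be shortened to a cross in $G$.
    \item Suppose there is a separation $(A,B)$ of $G$ of order at most $3$ with $V(\Omega)\subseteq A$ and $B\setminus A\neq\emptyset$. Replace $G[B]$ by a clique on $A\cap B$; this is harmless because $G[B]$ connects $A\cap B$ well enough. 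The reduced society still has no cross, so by induction it has a vortex-free rendition.
    \item In that rendition, the clique on $A\cap B$ is drawn so that $G[B]$ can be placed into a single new cell with $N(c)=A\cap B$. This requires that a triangle on three nodes bounds a face-like region, which we arrange by choosing the separation minimal.
    \item Similarly, we handle separations of order at most $1$ and order-$2$ cuts that split $V(\Omega)$ by gluing renditions along the boundary.
    \item This reduces to the case where $G^+$ (with $C_\Omega$) is essentially $3$-connected relative to $V(\Omega)$, with no such small separations.
\end{enumerate}

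\emph{The core case.} Here we must show that $G\cup C_\Omega$ is planar with $C_\Omega$ bounding a face; equivalently, that $G^+$ is planar. Then Whitney's uniqueness of embeddings for $3$-connected graphs gives the rendition with trivial cells (edges). The plan is to apply Kuratowski's theorem and show that any subdivision of $K_5$ or $K_{3,3}$ in $G^+$ yields a cross in $(G,\Omega)$.

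Such a subdivision must use $z$ or edges of $C_\Omega$, since otherwise we run the argument on a $3$-connected block of $G$. Deleting $z$ leaves four or more branch points or paths attached to $V(\Omega)$. A case analysis on how the Kuratowski subgraph meets $V(\Omega)$, using $3$-connectivity to reroute to boundary vertices via Menger's theorem, produces two disjoint paths whose ends alternate on $\Omega$.

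The main obstacle is this case analysis, in particular when $z$ is a branch vertex of the Kuratowski subgraph. As an alternative, we would instead use Tutte's bridge-overlap argument relative to the cycle $C_\Omega$: two overlapping bridges of $C_\Omega$ directly give a cross, and the absence of overlap yields the planar drawing inside the disc.
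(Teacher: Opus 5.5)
The paper does not prove this proposition. It imports the Two Paths Theorem from the cited literature and uses it as a black box.

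Your backward direction is the standard topological argument and works, with one fix. First pass to a cross of minimum total length. Otherwise a path may contain further vertices of $V(\Omega)$, and then its trace touches $\mathsf{bd}(\Delta)$ in its interior. In a shortest cross this cannot happen, because any such vertex lets you shorten one path while keeping the ends alternating.

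Your reductions in the forward direction are also the standard ones, provided you choose $B\setminus A$ connected with every vertex of $A\cap B$ adjacent to it. This is what lets you replace every clique edge used by a cross by a path through $B\setminus A$.

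The justification of step 3, however, is wrong. The rendition supplied by induction is arbitrary, so no choice of $(A,B)$ can force the triangle on $A\cap B$ to bound a face. What works is the following. If some cell has node set exactly $A\cap B$, put $G[B]$ into that cell. Otherwise the three triangle edges lie in three different cells, so the triangle is grounded. Then merge all cells inside the disc bounded by its trace into a single new cell with node set $A\cap B$, and add $G[B]$ to it.

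The genuine gap is the core case. It carries the whole content of the theorem, and you leave it as an unexecuted case analysis. There you must show that a society with no cross and no reducible separation is planar with $V(\Omega)$ on the outer face in the order $\Omega$.

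For the Kuratowski route you only assert that every $K_5$- or $K_{3,3}$-subdivision in $G^+$ yields a cross. The remark that subdivisions avoiding $z$ and $C_\Omega$ can be handled by ``running the argument on a $3$-connected block of $G$'' is not an argument, since a non-planar $G$ is precisely the main case. Turning a Kuratowski subdivision plus a Menger fan into two disjoint paths with alternating ends on $\Omega$ is the hard step. You cannot choose where the fan lands on $\Omega$. The fan paths meet the subdivision at arbitrary subdivision vertices and may run along its branch paths. So there is no direct way to read off the required pair.

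The fallback via Tutte's bridge overlaps does not close the gap either. It only controls how distinct $C_\Omega$-bridges sit relative to one another. In the core case two overlapping bridges are indeed skew and give a cross, since every non-trivial bridge has at least four attachments. But typically $G-V(\Omega)$ forms a single bridge, for which ``no overlap'' is vacuous. Whether that one bridge can be drawn inside $C_\Omega$ with its attachments in the order $\Omega$ is exactly the original problem.

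So the forward direction needs a real argument at this point, such as those in the classical proofs of Seymour, Shiloach, Thomassen, or Robertson and Seymour that the paper cites.
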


Indeed, the algorithm from \zcref{prop_TwoPaths} actually finds either a vortex-free rendition in a disc for $(G,\Omega)$ or a cross in $(G,\Omega)$ in linear time.

\begin{proof}[Proof of \zcref{thm_FacialSocieties}]
Let us first assume that $(G,\Lambda)$ has an $R$-candle $\mathcal{X}=\{ P,Q\}$ where $Q$ is the candle stick and show that it cannot be facial.

For this, assume that $(G,\Lambda)$ is indeed facial and let $\rho$ be a vortex-free rendition in the disc $\Delta$ together with a frontier $\gamma$ that meets the requirements of the definition of facial.
Let $r \in V(Q)\cap R$ be the flame of $\mathcal{X}$.
Now consider the trace $\pi$ of $P$ and notice that $\pi$ is a rooted $\rho$-aligned curve.
Let $\Delta'$ denote the inner disc of $\pi$ and $G'$ be its inner graph.
Then $Q \subseteq G'$.
With $\gamma$ being a frontier, we know that there is a decomposition of $\gamma$ into rooted $\rho$-aligned curves $\gamma_1,\dots,\gamma_k$ as in the definition of frontier.
For each $i\in[k]$ let $\Delta_i$ denote the outer disc of $\gamma_i$ and let $\fullmoon$ be the closure of $\Delta \setminus \big(\bigcap_{i \in [k]}\Delta_i\big)$.
It follows that there exists a subpath $Q'$ of $Q$ starting on $V(\Lambda)$ and ending on $V(\gamma)$ which is entirely contained in $\fullmoon$.
Then $\{ P,Q'\}$ is a $V(\gamma)$-candle in $(G,\Lambda)$ which is impossible.
\medskip

So now we may assume that $(G,\Lambda)$ has no $R$-candle and we need to prove that $(G,\Lambda)$ is facial.

Towards this goal let $G'$ be the graph obtained from $G$ by introducing a new vertex $r$ such that $N_{G'}(r) = R$.
Then let $\Omega'$ be the cyclic order of $V(\Lambda) \cup \{r\}$ obtained from $\Lambda$ by declaring $r$ the successor of the maximum of $\Lambda$ and the predecessor of the minimum of $\Lambda$.

\begin{claim}\label{claim_flatCandle}
There exists an $R$-candle in $(G,\Lambda)$ if and only if there exists a cross in $(G',\Omega')$.
\end{claim}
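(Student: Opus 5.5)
The claim is proved directly in both directions, by translating between candles and crosses through the edges joining $r$ to $R$. The key observation is the position of $r$ in $\Omega'$. Since $r$ sits between the maximum and the minimum of $\Lambda$, cutting $\Omega'$ at $r$ recovers $\Lambda$. Hence, for two vertices $a,b\in V(\Lambda)$, the cyclic order $\Omega'$ splits into two arcs between $a$ and $b$. Exactly one of these arcs contains $r$, and a vertex $y\in V(\Lambda)\setminus\{a,b\}$ lies strictly between $a$ and $b$ in $\Lambda$ if and only if $y$ lies on the arc not containing $r$.

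\emph{Candle $\Rightarrow$ cross.} Let $\{P,Q\}$ be an $R$-candle with roots $x_1,y_1,x_2$ occurring in this order in $\Lambda$, and let $f\in R$ be its flame. The roots are pairwise distinct: $x_1\neq x_2$ because they occur in order, and $y_1\notin\{x_1,x_2\}$ because $P$ and $Q$ are disjoint. Let $Q'$ be the path $Q$ extended by the edge $fr$. Then $P$ and $Q'$ are vertex-disjoint paths in $G'$, since $r\notin V(G)$. In $\Omega'$ the four vertices occur cyclically as $x_1,y_1,x_2,r$. Thus $P$ joins $x_1,x_2$ and $Q'$ joins $y_1,r$, and these pairs alternate, so $P,Q'$ is a cross in $(G',\Omega')$.

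\emph{Cross $\Rightarrow$ candle.} Let $P_1,P_2$ be a cross in $(G',\Omega')$ with ends $s_1,s_2,t_1,t_2$ in this cyclic order.

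First, $r$ lies on $P_1\cup P_2$. Otherwise none of the four ends is $r$, and deleting $r$ from $\Omega'$ preserves their alternation, so $P_1,P_2$ is a cross in $(G,\Omega)$, where $\Omega$ is the cyclic order underlying $\Lambda$. This contradicts \zcref{prop_TwoPathsProper}, because $(G,\Lambda)$ has a vortex-free rendition in a disc.

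So, by symmetry, assume $r\in V(P_1)$. Then $P_2\subseteq G$ and its ends $s_2,t_2$ lie in $V(\Lambda)$. The ends $s_1$ and $t_1$ lie on different arcs of $\Omega'$ between $s_2$ and $t_2$. Let $y$ be the end of $P_1$ on the arc not containing $r$. Then $y\neq r$, and by the observation above $y$ lies strictly between $s_2$ and $t_2$ in $\Lambda$.

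Let $Q$ be the subpath of $P_1$ from $y$ up to, but excluding, the first occurrence of $r$. This handles both the case where $r$ is an internal vertex of $P_1$ and the case where $r$ is its other end. The last vertex of $Q$ is a neighbour of $r$ in $G'$, hence lies in $R$, and $Q\subseteq G$. It remains to name the roots. Let $x_1$ and $x_2$ be $s_2$ and $t_2$ in their $\Lambda$-order, so $x_1<y<x_2$ in $\Lambda$. Then $\{P_2,Q\}$ is an $R$-candle with roots $x_1,y,x_2$, since $P_2$ and $Q$ are disjoint as subpaths of the cross.

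\emph{Main obstacle.} The only delicate point is the backward direction: one must rule out crosses avoiding $r$, which is exactly where the Two Paths Theorem and the vortex-free rendition enter. One must also choose the correct end of $P_1$ so that the roots come out in the order required by the linearisation; the arc argument settles this. Everything else is bookkeeping about $r$ being a new vertex whose neighbourhood is exactly $R$.
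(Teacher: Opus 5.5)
Your proof is correct and follows the same route as the paper. For the forward direction you attach $r$ to the flame, and for the converse you use the Two Paths Theorem to force $r$ onto the cross and then cut the path through $r$ to obtain the candle stick. Your version is slightly more careful than the paper's in two respects: the paper asserts that $r$ must be an endpoint of the cross, overlooking the case where $r$ is an internal vertex, and it does not check that the roots come out in the required $\Lambda$-order.
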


\begin{claimproof}
Suppose there is a cross $P_1,P_2$ in $(G',\Omega')$.
First of all notice that $r$ must be an endpoint of one of the two paths $P_i$ as otherwise $P_1$ and $P_2$ form a cross in $(G,\Lambda)$ which contradicts our assumption that $(G,\Lambda)$ has a vortex-free rendition in a disc because of \zcref{prop_TwoPathsProper}.
Hence, without loss of generality we may assume that $r$ is an endpoint of $P_1$.
But if we let $r'$ be the neighbour of $r$ on $P_1$, then we know that $r' \in R$ and thus $P_1-r$ is the candle stick of the $R$-candle $\{ P_2,P_1-r\}$ in $(G,\Lambda)$.

For the reverse direction let $\{ P,Q\}$ be an $R$-candle in $(G,\Lambda)$ where $Q$ is the candle stick and $r'$ is the flame.
Then $r'$ is adjacent to $r$ in $G'$ and thus, $P$ and $Qr$ form a cross in $(G',\Omega')$.
\end{claimproof}

By our assumption, we know that $(G,\Lambda)$ has no $R$-candle.
Hence, \zcref{claim_flatCandle} tells us that $(G',\Omega')$ has no cross and therefore, due to \zcref{prop_TwoPathsProper}, $(G',\Omega')$ has a vortex-free rendition $\rho'$ in a disc $\Delta$.

Notice that for every vertex $r' \in R$ there exists a cell $c$ of $\rho'$ such that $r \in N(c)$ and $r' \in V(\sigma(c))$.
Let $\rho$ be the vortex-free rendition of $(G,\Lambda)$ in $\Delta$ obtained from $\rho'$ by deleting $r$ from $G'$, removing $r$ from the set of nodes of $\rho'$ and removing, from each cell $c$ of $\rho'$ with $r \in N_{\rho'}(c)$ an open disc with radius $\varepsilon$ for some $\varepsilon > 0$ small enough such that the disc is disjoint from all other nodes of $\rho'$, all cells apart from $c$, and its removal leaves what remains of $c$ to be simply connected.
Let us call these newly created cells the \emph{residual cells} of $\rho$.
That is, a cell $c$ of $\rho$ is residual if and only if it was created by the process above from a cell $c'$ of $\rho'$ and thus $r \in N_{\rho}(c')$.
Notice that, as a result, every cell that had $r$ on its boundary in $\rho'$ now corresponds to a cell $c'$ in $\rho$ with at most two nodes in its boundary.
It follows that for every cell $c'$ of $\rho$ corresponding to a cell $c$ in $\rho'$ which had $r$ on its boundary, and every node $v \in N_{\rho}(c')$, there is now a curve $\gamma_{v}$ with one end in $r$ -- which belongs to the boundary of $\Delta$ -- and the other being $v$ such that the only intersection of $\gamma_{v}$ and $\rho$ is $v$.
Indeed, the curves $\gamma_{v}$ may be chosen in such a way that they are all pairwise disjoint except for their shared point $r$.
This implies that there is a linear ordering $\lambda$ of the nodes $v$ on the boundaries of the residual cells of $\rho$ obtained by tracing an arc in $\Delta$ from a point on the boundary of $\Delta$ right before $r$ to a point on the boundary of $\Delta$ right after $r$ -- where ``before'' and ``after'' are chosen by following along the boundary of $\Delta$ in clockwise direction.
This gives us the right to define the following frontier for $\rho$:
Let $\psi$ be a curve obtained by starting at the minimum of $\Lambda$, then tracing through the nodes of the residual cells of $\rho$ in the order induced by $\lambda$, and finally ending in the maximum of $\Lambda$ such that
\begin{itemize}
    \item $\psi$ intersects $\rho$ only in the extrema of $\Lambda$ and the nodes of the residual cells of $\rho$, and
    \item all residual cells of $\rho$ are contained in the outer graph of $\psi$.
\end{itemize}
This is possible because $|N_{\rho}(c)| \leq 2$ for all residual cells of $\rho$.
Moreover, now $\sigma_{\rho}(c)$ is contained in the outer graph of $\psi$ if and only if $c$ is a residual cell of $\rho$.
Hence, by definition of $\rho$ it follows that all of $R$ is contained in the outer graph of $\psi$.
Moreover, since all residual cells of $\rho$ have at most two nodes on their boundary each, and all of those nodes are on $\psi$, it follows that there does not exist a $V(\psi)$-candle in $(G,\Lambda)$.
This can also be seen through the fact that any such candle would imply the existence of a cross in $(G',\Omega')$.
Hence $\psi$ is an ultimate frontier of $(G,\Lambda)$ and thus, $(G,\Lambda)$ is facial as desired.
\end{proof}

\paragraph{From candles to two red columns.}
Before we proceed looking for candles in a vortex, we need to show that these candles ultimately yield want we want.
What do we want though?
Our goal is to show that whenever $\mathsf{depth}_2$ is bounded, we can ``clean'' any vortex by deleting a small set of vertices.
Here ``cleaning'' essentially means the removal of all candles in the vortex.
If however, we find many candles, we should be able to find a witness that $\mathsf{depth}_2$ is large.

To this end, let $k \geq 1$ be an integer.
A \emph{$k$-candle mesh} is an annotated graph obtained from a $(3k \times 3k)$-mesh $M$ as follows.
Let the vertical paths of $M$ be $Q_1,\dots,Q_{3k}$ and the horizontal paths of $M$ be $P_1,\dots,P_{3k}$.
For each $i \in [3k]$, let us denote by $x_i$ the endpoint of $Q_i$ on $P_1$.
For each $j\in[k]$ we now add a path $L_j$ with endpoints $x_{3(j-1)+1}$ and $3j$ whose internal vertices are do not belong to $M$ or any of the other paths added to the graph.
Then, for each $j\in[k]$ we add a path $B_j$ which is vertex-disjoint from the entire graph as constructed so far except for one of its endpoints -- this endpoint is $x_{3(j-1) + 2}$.
For each $j \in[k]$ let $r_j$ be the endpoint of $B_j$ that is not $x_{3(j-1) + 2}$ and let $R \coloneqq \{ r_j \mid j \in [k] \}$.
If we let $H$ be the graph constructed above, then $(H,R)$ is now an annotated graph we call a $k$-candle mesh.
See \zcref{fig_CandleMesh} for an illustration.
We refer to the vertices $r_j$ as the \emph{flames} and the paths $B_j$ as the \emph{candle sticks}.

\begin{figure}[ht]
 \centering
 \begin{tikzpicture}

 \pgfdeclarelayer{background}
		\pgfdeclarelayer{foreground}
			
		\pgfsetlayers{background,main,foreground}

 \begin{pgfonlayer}{background}
 \pgftext{\includegraphics[width=7cm]{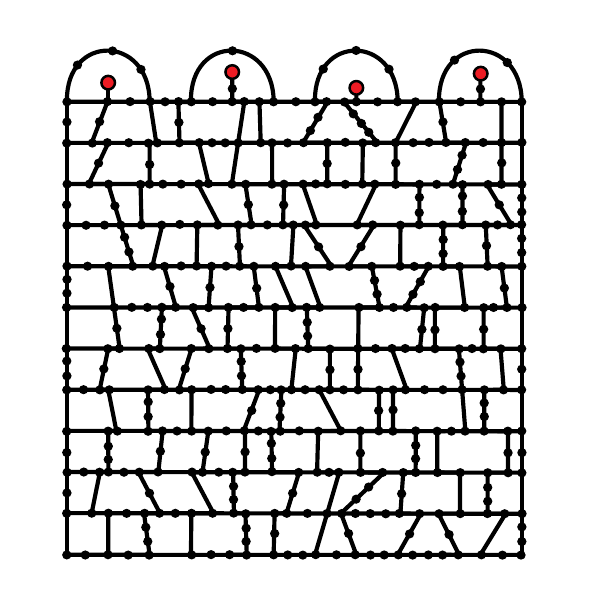}} at (C.center);
 \end{pgfonlayer}{background}
			
 \begin{pgfonlayer}{main}
 \node (C) [v:ghost] {};
 
 \end{pgfonlayer}{main}
 
 \begin{pgfonlayer}{foreground}
 \end{pgfonlayer}{foreground}

 \end{tikzpicture}
 \caption{A $4$-candle mesh.}
 \label{fig_CandleMesh}
\end{figure}

\begin{lemma}\label{lemma_2outerInCandles}
For every integer $k \geq 1$, every $2k$-candle mesh contains the $2$-outer $(k \times k)$-grid as a red-minor.
\end{lemma}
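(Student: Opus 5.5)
The plan is to write down an explicit red-minor model of the $2$-outer-annotated $(k\times k)$-grid inside the $2k$-candle mesh $(H,R)$. The only places where red vertices live are the flames $r_1,\dots,r_{2k}$. Each flame is reachable from the rest of $H$ only through its candle stick $B_j$ and the root $x_{3(j-1)+2}$. So the $2k$ red branch sets, two per row of the target grid, must be built around the $2k$ candle sticks and then extended into the mesh $M$, which is a $(6k\times 6k)$-mesh. The arcs $L_j$ are not needed for the model; they can simply be deleted. The whole argument is a planar routing exercise in $M$.

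\textbf{Step 1: hooks.} For each $j\in[2k]$ define a \emph{hook} $H_j$. It consists of $B_j$, then the vertical path $Q_{3j-1}$ from $x_{3j-1}$ down to a horizontal path $P_{d_j}$, then a subpath of $P_{d_j}$ leaving sideways. I choose the depths and the turning directions in a staircase fashion so that the hooks are pairwise vertex-disjoint. Concretely, a hook turning left must be deeper than every hook whose vertical part lies to its left, and symmetrically for hooks turning right. Since hook $j$'s vertical part is exactly $Q_{3j-1}$ and consecutive sticks are three columns apart, spacing the depths by $2$ leaves free vertical and horizontal mesh segments between consecutive hooks. These free segments serve as connectors.

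\textbf{Step 2: assigning branch sets.} Row $i$ of the target grid uses the two hooks $H_{2i-1}$ and $H_{2i}$. The red branch set for column $1$ is an initial piece of $H_{2i-1}$ containing $r_{2i-1}$. The red branch set for column $2$ is an initial piece of $H_{2i}$ containing $r_{2i}$. These two are made adjacent by a short mesh segment between $Q_{3(2i-1)-1}$ and $Q_{3(2i)-1}$ at a depth above both turns. The remaining part of row $i$, namely columns $3,\dots,k$, is realised by cutting the horizontal part of $H_{2i}$ (possibly extended along $P_{d_{2i}}$ through free mesh) into $k-2$ consecutive subpaths. Vertical adjacencies between row $i$ and row $i+1$ are obtained from subpaths of vertical mesh paths $Q_q$ lying strictly between the horizontal parts of consecutive hooks. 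Such a connector is contracted into one of its endpoints' branch sets, or kept as an edge-path $G_e$ as the definition of a minor model allows. A column $q$ is used only when no vertical part of another hook lies between the two horizontals, so the connectors are internally disjoint from all branch sets.

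\textbf{Main obstacle: the bookkeeping of Step 2.} I need $k$ pairwise disjoint vertical connectors between every two consecutive rows, in the correct left-to-right order. They must avoid the vertical parts of all sticks, which occupy every third column. With a naive single-direction staircase the space between two consecutive hooks shrinks for hooks near one end. To fix this, I would first try the following layout. Hooks of the first half of the rows descend and turn left, hooks of the second half descend and turn right, and the remaining free rectangle at the bottom of $M$ (depth beyond about $4k$, still $\ge 2k$ rows) carries the continuation of each row and all vertical connectors. Every hook's horizontal part is routed along the boundary into this rectangle, giving $k$ disjoint ``row'' paths in the correct cyclic order entering a $(2k\times 6k)$ block. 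There a $(k\times k)$-grid minor with these paths as its rows is standard. If the counting is too tight, I would check whether the grid's red columns can instead be placed on a boundary cycle of this block, which is equivalent up to reflection. In either version, the verification that the constructed sets form a minor model, with $R\cap V(G_v)\neq\emptyset$ for the $2k$ red grid vertices, is then immediate from planarity of the layout.
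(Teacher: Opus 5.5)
There is a genuine gap, and it is in your first move: the arcs $L_j$ cannot be deleted, so no ``planar routing exercise in $M$'' can produce the model.

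Without the $L_j$, the graph $M\cup\bigcup_j B_j$ is plane, and each $B_j$ is a pendant path at $x_{3j-1}\in V(P_1)$. So all flames lie on the outer face, in the order $r_1,r_2,\dots,r_{2k}$. In your Step 2, the branch sets of $(1,1)$ and $(2,1)$ form a connected subgraph containing $r_1,r_3$. The branch sets of $(1,2)$ and $(2,2)$ form a disjoint connected subgraph containing $r_2,r_4$. That gives disjoint $r_1$--$r_3$ and $r_2$--$r_4$ paths with the four ends in this cyclic order on the outer face of a plane graph, which is impossible.

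Choosing a different assignment does not help once $k\ge 3$. Adding a new vertex adjacent to all flames keeps the graph planar. A red-minor model would then yield, as a minor of a planar graph, the $(3\times 3)$-grid plus a vertex adjacent to all six vertices of its first two columns. But the $(3\times 3)$-grid is a subdivision of the $3$-connected wheel with four spokes, so its embedding is unique. In that embedding no face contains $(2,2)$ together with both $(1,1)$ and $(3,1)$, so this graph is not planar.

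So the real obstacle is not the connector bookkeeping you flag but a topological one: the two red columns interleave along the row of roots. Your fallback of placing ``the red columns on a boundary cycle'' fails for the same reason, since only the first red column of the grid lies on its outer face.

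The missing idea is to use exactly the arcs you discard, as the paper does.
\begin{enumerate}
\item \emph{First red column.} Run a path $D_1'$ along $P_1$, but for each $i\in[k]$ replace the subpath of $P_1$ from $x_{6i-2}$ to $x_{6i}$ by $L_{2i}$. Then $D_1'$ passes through every odd root $x_{6i-4}$ and avoids every even root $x_{6i-1}$. $D_1'$ together with the odd candle sticks, cut into $k$ consecutive pieces, gives the first red column.
\item \emph{Second red column.} Since $x_{6i-1}\notin V(D_1')$, each even stick $B_{2i}$ can be extended down $Q_{6i-1}$ to $P_3$ without meeting the first column. $P_3$ together with these extensions, cut into $k$ pieces, gives the second red column.
\item \emph{Rest of the grid.} The $i$th pieces of the two red columns are joined by vertical paths between $P_1$ and $P_3$, for instance along $Q_{6i-4}$. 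The black columns $3,\dots,k$ are then built from $P_4,P_5,\dots$ and the vertical paths below. Only this last part is the routine mesh routing you describe.
\end{enumerate}
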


\begin{proof}
Let $H$ be a $k$-candle mesh.
We collect the flames of $H$ into pairs of two as follows.
Let $r_1,\dots,r_{2k}$ be the flames of $H$ numbered as in the definition of $k$-candle meshes.
For each $i\in[k]$ let $R_i \coloneqq \{ r_{2(i-1) + 1}, r_{2i}\}$.
For each $i \in [k]$, we will use $r_{2(i-1) + 1}$ for the $i$th vertex in the first column while $r_{2i}$ will be used for the $i$th vertex in the second column.
Recall from the definition of $2k$-candle meshes that each flame $r_i$ comes together with a path $B_i$.
Let us now trace a path $D_1'$ in $H$ in the union of the first horizontal path $P_1$ and the $L_i$ with $i \in[k]$ and $i$ being even such that $D_1'$ is disjoint from end endpoints of the candle sticks $B_i$ with even $i$.
The path $D_1'$ can now be broken into $k$ disjoint subpaths whose union contains all vertices of $D_1'$ and such that each subpath contains an endpoint of precisely one candle stick $B_i$ where $i$ is odd.
The path $D_1'$ together with the odd $B_i$ will make up the first row of the $(k \times k)$-grid we are building, we denote the resulting subgraph of $H$ by $D_1$.

For the second row take the union $D_2$ of the third horizontal path $P_3$ together with the extension of the even candlesticks along the even vertical paths until they meet $P_3$.
As before, $D_2$ can now be partitioned into precisely $k$ pieces whose union covers the entire vertex set of $D_2$ such that each piece contains precisely one of the even flames.
Notice that, by our construction, there also exist vertical paths in $H$, each connecting the $i$th part of $D_1$ to the $i$th part of $D_2$ for all $i\in[k]$.
This creates the complete first two columns of the $2$-outer $(k \times k)$-grid.

Notice that it is now straightforward to add the remaining $k-2$ columns to our minor model, as we have only used up the first three horizontal paths of the $(2k \times 2k)$-mesh within $H$ to capture the annotated vertices.
See \zcref{fig_CandlesTo2Outer} for a sketch of a slightly more expensive construction of the minor model and an indication on how the remaining $k-2$ columns may be created.
\end{proof}

\begin{figure}[ht]
 \centering
 \begin{tikzpicture}

 \pgfdeclarelayer{background}
		\pgfdeclarelayer{foreground}
			
		\pgfsetlayers{background,main,foreground}

 \begin{pgfonlayer}{background}
 \pgftext{\includegraphics[width=9cm]{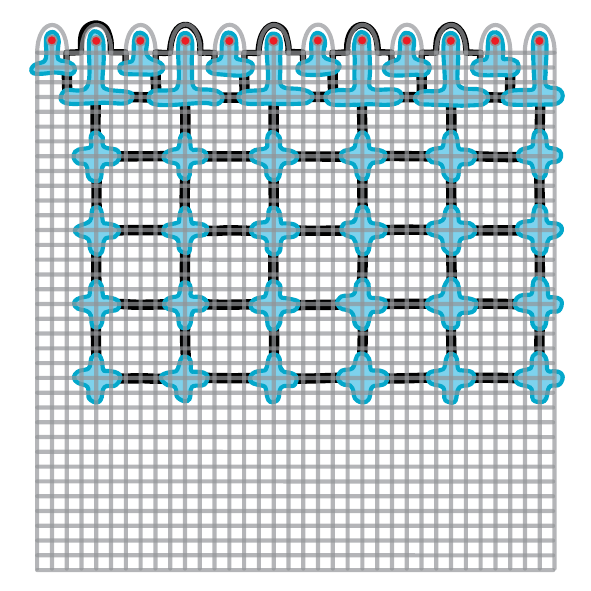}} at (C.center);
 \end{pgfonlayer}{background}
			
 \begin{pgfonlayer}{main}
 \node (C) [v:ghost] {};
 
 \end{pgfonlayer}{main}
 
 \begin{pgfonlayer}{foreground}
 \end{pgfonlayer}{foreground}

 \end{tikzpicture}
 \caption{A $2$-outer-annotated $(6 \times 6)$-grid as a red-minor in a $12$-candle mesh.
 Notice that the minor model indicated above has been chosen to enhance visibility and is not necessarily the most efficient choice.
 Still, as one can infer from the number of red vertices, no larger $2$-outer-annotated grid can exist in a $12$-candle mesh.
 In particular, the construction used here is slightly different from the one explained in the proof of \zcref{lemma_2outerInCandles}. The differences are only minor though and chosen purely to enhance visibility in the figure.}
 \label{fig_CandlesTo2Outer}
\end{figure}

\paragraph{Taming candles.}
With \zcref{thm_FacialSocieties,lemma_2outerInCandles} we have enough justification why candles are the correct object to be looking for.
Next we require a way to tame the way candles may appear inside of a vortex.

In later stages of our proof we will be confronted with the following situation.
We start with a cylindrical rendition $\rho$ in a disc $\Delta$ of some society $(G,\Lambda)$ containing a big nest $\mathcal{C} = \{ C_1,\dots,C_s \}$ around a vortex $c_0$ of bounded depth.
We then select -- in a later stage -- a rooted $\rho$-align curve $\gamma$ with the goal to capture an $R$-candle in the inner graph of $\gamma$.
However, $\gamma$ does not have to be well behaved with respect to the nest.
Let $\Delta_{\gamma} \subseteq \Delta$ be the inner disc of $\gamma$ and let $\rho_{\gamma}$ be the rendition of the inner graph $G_{\gamma}$ of $\gamma$ into $\Delta_{\gamma}$ inherited from $\rho$.
For each $i\in[n]$, $C_i \cap G_{\gamma}$ is now either still a cycle, or a collection of disjoint subpaths $Q$ of $C_i$ where each such $Q$ has both endpoints in $V(\gamma)$ -- where $V(\gamma)$ denotes the set of vertices corresponding to the nodes of $\rho$ contained in $\gamma$.
Let us call such a subpath $Q \subseteq C_i$ an \emph{arc} of $C_i$.

\begin{observation}\label{obs_arcsInVortices}
Let $Q$ be an arc of some cycle $C_i \in \mathcal{C}$.
Then there exists a subcurve $\psi_Q$ of $\gamma$ such that, when replacing $\psi_Q$ with the trace of $Q$ in $\gamma$, the resulting $\rho$-aligned curve $\gamma_Q$ is rooted and has an inner disc $\Delta_Q$.
Then the inner graph of $\gamma_Q$ contains at least one arc $Q_j$ of every $C_j$ with $j\in[i+1,s]$ such that $Q$ is not contained in the inner graph of $\gamma_{Q_j}$.
\end{observation}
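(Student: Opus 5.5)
The argument is purely topological and works inside the disc $\Delta$ carrying $\rho$. It has three steps: build $\gamma_Q$ from the trace of $Q$, show $C_j$ must enter the new inner disc, and pick an arc $Q_j$ of $C_j$ that lies there.

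\emph{Step 1: constructing $\gamma_Q$.} Let $a,b\in V(\gamma)$ be the endpoints of the arc $Q$ and let $\tau_Q$ be its trace. Since $Q$ is grounded and lies in $G_\gamma$, the curve $\tau_Q$ is $\rho$-aligned, lies in $\Delta_\gamma$, and meets $\gamma$ only in $a$ and $b$ (after the usual slight pushing). The arc $\tau_Q$ splits $\Delta_\gamma$ into two discs. Exactly one of them contains the base of $\gamma$ on its boundary; the base is a segment of $\Lambda$, so it lies in $\mathsf{bd}(\Delta)\cap\Delta_\gamma$, which is disjoint from $\tau_Q$. Let $\psi_Q$ be the subcurve of $\gamma$ between $a$ and $b$ that bounds the \emph{other} disc, i.e.\ the one not containing the base. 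Put $\gamma_Q\coloneqq(\gamma\setminus\psi_Q)\cup\tau_Q$. This curve is $\rho$-aligned, still has both ends in $V(\Lambda)$, and is otherwise disjoint from $V(\Lambda)$, so it is rooted. Its inner disc $\Delta_Q$ is the part of $\Delta_\gamma$ containing the base, and $\Delta_Q\subseteq\Delta_\gamma$.

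\emph{Step 2: why $C_j$ enters $\Delta_Q$.} Fix $j\in[i+1,s]$. The nest gives $\Delta_i\subseteq\Delta_j$, and $C_i$, $C_j$ are disjoint, so the trace of $C_j$ separates $\mathsf{int}(\Delta_i)\supseteq c_0$ from $\mathsf{bd}(\Delta)$. The trace $\tau_Q$ is a subcurve of the trace of $C_i$, so it lies in $\Delta_i$. Now use the base $B$: it lies on $\mathsf{bd}(\Delta)$, which is outside $\Delta_j$. Its end segment then runs from $\mathsf{bd}(\Delta)$ along $\gamma_Q$ (or, more cleanly, along $\mathsf{bd}(\Delta_Q)$) to the endpoint $a$ of $\tau_Q$, which lies in $\Delta_i$. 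Any such curve inside $\Delta_Q$ must cross the trace of $C_j$. The trace of $C_j$ therefore meets $\Delta_Q$, and since it is disjoint from $\tau_Q$, the part of $C_j$ inside $\Delta_Q$ consists of subpaths with both ends on $\gamma\setminus\psi_Q$. These are arcs of $C_j$ in the inner graph of $\gamma_Q$. Choose $Q_j$ to be one of them that separates $\tau_Q$ from the base inside $\Delta_Q$; such an arc exists because the nest separates $\Delta_i$ from $\mathsf{bd}(\Delta)$ and this separation restricts to $\Delta_Q$. This is the step I expect to be the main obstacle: it needs a careful argument that the portion of the trace of $C_j$ inside $\Delta_Q$ really separates $\tau_Q$ from the base, and not merely that it meets $\Delta_Q$. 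That is a Jordan-curve argument on the disc, using that $C_j$ surrounds $c_0$ while the base lies on the outer boundary. The degenerate case, where $\gamma$ does not cut $C_j$ and $C_j$ is a full cycle in $G_\gamma$, has to be excluded separately. It cannot occur: $C_j$ would then enclose $C_i$ inside $\Delta_\gamma$, contradicting that $Q$ is a proper arc of $C_i$.

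\emph{Step 3: $Q$ is not in the inner graph of $\gamma_{Q_j}$.} Apply the construction of Step 1 to $Q_j$. Because $Q_j$ separates $\tau_Q$ from the base, the disc cut away by $\tau_{Q_j}$ contains $\tau_Q$. Hence $\Delta_{Q_j}$ misses the interior of $\tau_Q$, so $Q$ (apart from possibly its endpoints, which do not count for containment of the path) is not in the inner graph of $\gamma_{Q_j}$, as the statement requires.
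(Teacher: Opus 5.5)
Your argument is the natural one, and it already goes beyond the paper, which states this as an observation with no proof. The paper treats it as an immediate consequence of the nest being concentric around $c_0$. Step~1, Step~3 and your exclusion of the case that $C_j$ is a whole cycle of $G_\gamma$ are fine. Two small corrections:
\begin{enumerate}
\item Since $\gamma$ is an arc, $\psi_Q$ is forced to be $\gamma[a,b]$; there is no second subcurve between $a$ and $b$ to choose from.
\item In Step~3 the hedge about endpoints is unnecessary. Since $C_i\cap C_j=\emptyset$, the points $a,b$ lie in the interior of $\psi_{Q_j}$, so they are outside $\Delta_{Q_j}$ too.
\end{enumerate}

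The real gap is the step you flag yourself, which is asserted rather than proved. Knowing that the trace of $C_j$ separates $\tau_Q$ from the base $B$ only shows that the \emph{union} of the $C_j$-arcs in $\Delta_Q$ separates them; you need a \emph{single} arc.

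First state precisely what is needed. Let $e_1,e_2$ be the ends of $\gamma$, oriented so that $a$ precedes $b$. The side of $\tau_{Q_j}$ away from $B$ is bounded by $\tau_{Q_j}$ and the subarc of $\gamma_Q$ between the ends of $Q_j$. So it contains $\tau_Q$ exactly when $Q_j$ has one end on $\gamma[e_1,a)$ and the other on $\gamma(b,e_2]$.

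To find such an arc, use a parity argument.
\begin{enumerate}
\item Take $u$ in the interior of $\tau_Q$ and $w$ in $B$ off the trace of $C_j$. Join them by a curve $\alpha$ through the interior of $\Delta_Q$, in general position with respect to that trace.
\item We have $u\in\mathsf{int}(\Delta_j)$ and $w\notin\Delta_j$; the latter holds because a point of $\mathsf{bd}(\Delta)$ lying in $\Delta_j\subseteq\Delta$ would be on the trace of $C_j$. Hence $\alpha$ crosses the trace of $C_j$ an odd number of times.
\item Inside $\Delta_Q$, that trace consists only of the pairwise disjoint chords given by the $C_j$-arcs there. So some chord is crossed an odd number of times, and it separates $u$ from $w$ in $\Delta_Q$.
\item The side of this chord away from $B$ therefore contains $u$, and hence all of the connected set $\tau_Q$, which is disjoint from the chord. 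The corresponding arc is the required $Q_j$.
\end{enumerate}

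An equivalent argument: the $B$-free sides of these chords are nested or interior-disjoint. If none contained $\tau_Q$, removing the maximal ones would leave a connected region joining $\tau_Q$ to $B$ that avoids the trace of $C_j$, a contradiction. With this inserted, your proof is complete.
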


In a later part of the proof we aim to prove that there is a constant $x$ such that no subpath of an $R$-candle caught in a rooted $\rho$-aligned curve starts on the vortex boundary, touches $C_x$, and then returns to the vortex.
\smallskip

Let $(G,\Lambda)$ be a society and $(G,R)$ be an annotated graph.
Moreover, let $\rho$ be a blank cylindrical rendition of $(G,\Lambda)$ in a disc $\Delta$ with a nest $\mathcal{C} = \{ C_1,\dots,C_s \}$ around the vortex $c_0$.
Finally, let $\gamma$ be a rooted $\rho$-aligned curve with base $S$ and let $(G_{\gamma},\Lambda_{\gamma})$ denote the restriction of $(G,\Lambda)$ to the inner graph of $\gamma$.

Consider an $R$-candle $\mathcal{X} = \{ P,Q\}$ in $(G_{\gamma},\Lambda_{\gamma})$.
We say that such an $R$-candle $\mathcal{X}$ is \emph{caught by $\gamma$}.
A path $L \subseteq P \cup Q$ is a \emph{$\mathcal{X}$-loop} if it is grounded and both of its endpoints belong to $N(c_0)$.
\smallskip

Our next goal is to prove the following lemma.

\begin{lemma}\label{lemma_CandleLoopsAreNotDeep}
There exists a universal constant $c > 0$ such that for every society $(G,\Lambda)$ where $(G,R)$ is an annotated graph, every blank cylindrical rendition $\rho$ of $(G,\Lambda)$ in a disc $\Delta$ with a nest $\mathcal{C} = \{ C_1,\dots,C_s \}$ around the vortex $c_0$ where $s \geq c+1$ the following holds:

If $\gamma$ is a rooted $\rho$-aligned curve and there exists an $R$-candle caught by $\gamma$ in $(G,\Lambda)$m then there also exists an $R$-candle $\mathcal{X}$ in $(G,\Omega)$ caught by $\gamma$ such that no $\mathcal{X}$ loop in $G$ intersects $C_{c+1}$.
\end{lemma}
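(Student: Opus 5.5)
Choose, among all $R$-candles caught by $\gamma$, one that is extremal, and show that any loop reaching $C_{c+1}$ can be rerouted to give a smaller caught candle. Concretely, let $\mathcal{X}=\{P,Q\}$ be a caught $R$-candle minimising the number of edges of $P\cup Q$ that lie outside $\sigma(c_0)$, i.e.\ in the planar part of the rendition between the vortex and $\bd(\Delta)$. Ties are broken by the total number of edges. Suppose some $\mathcal{X}$-loop $L\subseteq P\cup Q$ meets $C_{c+1}$; we aim for a contradiction. Since $\rho$ is blank, no red vertex and no flame lies in the planar part. Hence the flame lies in $\sigma(c_0)$ or on $V(\Lambda)$. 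Every loop is thus an excursion of $P$ or of $Q$ that leaves the vortex and returns to it, carrying no annotation.

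The key step is to take a loop $L$ whose trace is innermost among all $\mathcal{X}$-loops reaching $C_{c+1}$. Together with a segment $I$ of the vortex boundary $\Omega_{c_0}$, its trace bounds a disc $D_L$ that contains no other such loop of $\mathcal{X}$. Here Observation~\ref{obs_arcsInVortices} does the work. Inside $D_L$, for each $j\le c$, the cycle $C_j$ restricted to $G_\gamma$ contains an arc separating $I$ from the outer part of $L$. Other parts of $P\cup Q$ can enter $D_L$ only through $I$. Since the vortex has bounded local interaction, only a constant number of disjoint pieces of $P\cup Q$ can cross the disc transversally; by the planarity of the rendition outside $c_0$, they cross it as nested excursions. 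With $c$ at least some small constant, say two more than this number, pigeonhole yields an index $j\le c$ whose arc $A_j$ in $D_L$ is met only by $L$ and by pieces that can be shortcut. We then replace the portion of $L$ beyond $A_j$ by a subpath of $A_j$ between the first and last intersection points of $L$ with $A_j$. The new path is still disjoint from the other path of the candle: any intersection would have to enter $D_L$, which is excluded by innermostness and the choice of $j$.

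It remains to check that the rerouted pair is still an $R$-candle caught by $\gamma$. The roots $x_1,y_1,x_2$ and the flame are untouched, because $L$ is an interior subpath with both ends in $N(c_0)$, and these lie in $V(\Lambda)$ or in the vortex. The arc $A_j$ lies in the inner graph of $\gamma$ by construction of arcs. The modification strictly decreases the number of non-vortex edges, since the removed part of $L$ reaches $C_{c+1}$ while $A_j$ is contained in the shallower region. This contradicts minimality, and the universal constant $c$ is whatever the pigeonhole count needs.

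\textbf{Main obstacle.} The delicate step is making the intersection argument precise when $\gamma$ cuts the nest cycles into many arcs. We must show that within $D_L$ there is a full arc of some $C_j$, $j\le c$, connecting the two sides of $L$ that is not obstructed by other segments of $P$ or $Q$, including the candle stick winding through. I would first try to prove this using only planarity, the innermost choice, and the requirement that the flame and roots not lie in the planar annulus. If needed, I would fall back on a counting argument bounding how often $P\cup Q$ can alternate, using the fact that $P$ and $Q$ are only two paths.
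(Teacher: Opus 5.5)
\textbf{There is a genuine gap: the pigeonhole step has no justification, and the configuration it is meant to exclude is exactly the hard case.}

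\emph{The count is not bounded.} The lemma places no bound on the depth of $c_0$ and asks for a universal $c$, so you cannot invoke ``bounded local interaction'' of the vortex. A depth bound would not help either. The pieces of $P\cup Q$ inside $D_L$ are excursions that leave and re-enter the vortex through $I$, not transactions of the vortex society, and a single candle may have arbitrarily many of them.

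\emph{Local rerouting stalls.} Your step (push an innermost deep loop down onto a free arc of some $C_j$) is precisely the argument of Lemma~\ref{lemma_candlesAreExhaustive}, and it stalls on exhaustive candles. Take $L\subseteq P$ reaching $C_{c+1}$. Suppose the loop immediately nested inside $L$ is a loop of $Q$ reaching $C_c$, the loop immediately nested inside that is a loop of $P$ reaching $C_{c-1}$, and so on. Then for every $j\le c$, the arc $A_j$ of $C_j$ in $D_L$, traversed from $L$, first meets a loop of $Q$. So no arc is free and no loop can be shortcut individually. Since $P$ and $Q$ may cross arbitrarily inside $\sigma(c_0)$, such alternating chains are compatible with planarity outside the vortex. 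Having only two paths does not bound the number of alternations of a given candle, and no choice of extremal candle in your sense rules this out. What is needed is a global change of the linkage.

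\emph{How the paper handles this.} After making $\mathcal{X}$ exhaustive, a loop reaching $C_{c+1}$ yields a nested chain $L_1,\dots,L_{c+1}$ with arcs $A_i\subseteq C_i$. The family $\{A_i\cup L_i\}$ is a bramble of order at least $(c+1)/2$ in the graph formed by the candle and the nest arcs. With $c=2\beta(4,2)+1$, this exceeds the bound of Proposition~\ref{prop:VitalLinkage} for the two-path linkage on the four candle endpoints. So Proposition~\ref{prop_DisjointPathsIrrelevant} provides a vertex whose non-nest edges can be deleted while a candle on the same roots survives. Iterating until no such vertex exists forces every loop to miss $C_{c+1}$. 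This is why $c$ comes from the unique linkage function rather than from a small count.

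\emph{Secondary flaw.} Your potential need not decrease: the subpath of $A_j$ you insert can be much longer than the part of $L$ you remove. A depth-based measure, such as the number of nest cycles each loop meets, is what makes the single-loop pushing step monotone.
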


In order to prove this lemma, we require a first intermediate result on the behaviour of $\mathcal{X}$-loops.
In the situation of \zcref{lemma_CandleLoopsAreNotDeep}, we say that an $R$-candle $\mathcal{X}$ caught by $\gamma$ is \emph{exhaustive} if for every $\mathcal{X}$-loop $L$, if $i \in [2,s]$ is the largest integer such that $L \cap C_i \neq \emptyset$, then there exists an $\mathcal{X}$-loop $L'$ which intersects $C_{i-1}$ and which is contained in the inner graph of the unique closed curve $\varphi_{L}$ defined by the trace of $L$ and the boundary of the vortex $c_0$ that does not contain $c_0$.
See \zcref{fig_ExhaustiveExample} for an illustration.
We refer to the disk bounded by $\varphi_L$ as the \emph{$L$-centre}.
We say that any $\mathcal{X}$-loop $L''$ which is contained in the $L$-centre is an \emph{inner} loop of $L$.

\begin{figure}[ht]
 \centering
 \begin{tikzpicture}

 \pgfdeclarelayer{background}
		\pgfdeclarelayer{foreground}
			
		\pgfsetlayers{background,main,foreground}

 \begin{pgfonlayer}{background}
 \pgftext{\includegraphics[width=8cm]{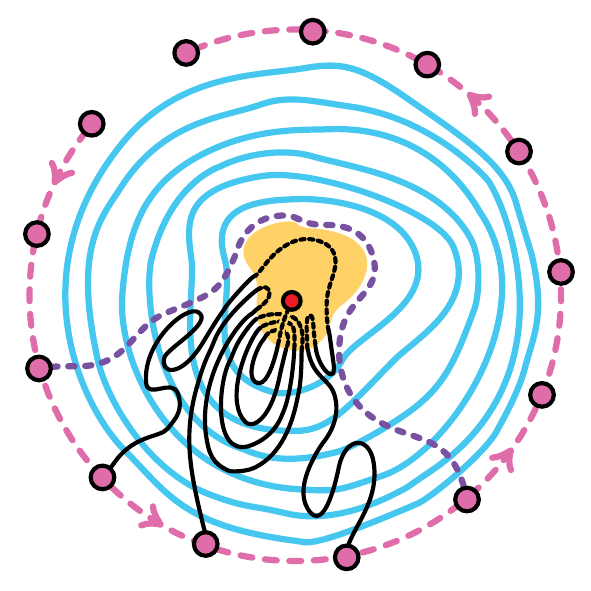}} at (C.center);
 \end{pgfonlayer}{background}
			
 \begin{pgfonlayer}{main}
 \node (C) [v:ghost] {};
 
 \end{pgfonlayer}{main}
 
 \begin{pgfonlayer}{foreground}
 \end{pgfonlayer}{foreground}

 \end{tikzpicture}
 \caption{A sketch of an exhaustive candle in a blank cylindrical rendition $\rho$ of a linear society with a nest (depicted in \textcolor{CornflowerBlue}{blue}).
 The \textcolor{DarkBananaYellow}{yellow} area marks the unique vortex of $\rho$.}
 \label{fig_ExhaustiveExample}
\end{figure}

\begin{lemma}\label{lemma_candlesAreExhaustive}
For every society $(G,\Lambda)$ where $(G,R)$ is an annotated graph, every blank cylindrical rendition $\rho$ of $(G,\Omega)$ in a disc $\Delta$ with a nest $\mathcal{C} = \{ C_1,\dots,C_s \}$ around the vortex $c_0$ the following holds:

If $\gamma$ is rooted $\rho$-aligned curve and there exists an $R$-candle caught by $\gamma$ in $(G,\Omega)$, then there also exists an exhaustive $R$-candle caught by $\gamma$.
\end{lemma}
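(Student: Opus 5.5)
We argue by an extremal choice. Among all $R$-candles caught by $\gamma$, choose $\mathcal{X}=\{P,Q\}$ minimising a potential that records how deep its loops reach into the nest. Concretely, to each $\mathcal{X}$-loop $L$ we associate the largest index $i(L)$ with $L\cap C_{i(L)}\neq\emptyset$. The potential $\Phi(\mathcal{X})$ is the vector, indexed by $i$ from $s$ down to $2$, counting loops $L$ with $i(L)=i$, compared lexicographically. Secondary tie-breaks are the total number of edges of $P\cup Q$ lying outside the vortex $c_0$, and then $|E(P\cup Q)|$. Since candles caught by $\gamma$ exist by assumption and the potential takes values in a well-ordered set, a minimiser exists. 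We then show that a minimiser is exhaustive.

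Suppose it is not. Then there is a loop $L$ with maximal index $i=i(L)\ge 2$ such that no inner loop of $L$ meets $C_{i-1}$. The $L$-centre is the disc bounded by $\varphi_L$, formed by the trace of $L$ and the boundary segment of $c_0$. Consider the part of $C_{i-1}$ lying in the inner graph of $\gamma$ and inside the $L$-centre. By \zcref{obs_arcsInVortices} and the fact that $L$ touches $C_i$, which encloses $C_{i-1}$, this part contains an arc $A$ of $C_{i-1}$. This arc separates, within the $L$-centre, the vortex side from the deepest point of $L$.

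No inner loop of $L$ meets $C_{i-1}$. By planarity of the vortex-free part, the only pieces of $P\cup Q$ inside the $L$-centre that cross $A$ are therefore subpaths of $L$ itself. These are grounded subpaths outside the vortex, and the candle stick $Q$ may end at the flame there. We reroute $L$ along $A$: take the first and last intersections of $L$ with $A$ and replace the subpath of $L$ between them by the corresponding subpath of $A$. This needs care if the flame of $Q$, or an endpoint of $P$, lies on the replaced part. In that case we choose the intersection points so that the rerouted piece still reaches the flame. The candle stick's flame is the only place where deviation matters, and since $\rho$ is blank, red vertices occur only inside $c_0$, so the flame is not on the grounded portion beyond $A$.

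The new pair is still vertex-disjoint, because $A$ avoids the other paths: any path crossing $A$ inside the $L$-centre would yield an inner loop meeting $C_{i-1}$. The roots, their order in $\Lambda$, and the flame are unchanged. The pair stays inside the inner graph of $\gamma$, since $A$ does, so it is still a candle caught by $\gamma$. In the new candle, $L$ is replaced by loops of index at most $i-1$, and no other loop gains depth. Hence $\Phi$ strictly decreases, contradicting minimality.

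The main obstacle is the planarity step inside the $L$-centre. We must argue rigorously, using that $c_0$ is the only non-planar cell and that traces of disjoint grounded paths do not cross, that $A$ is met only by $L$. We must also check that an arc of $C_{i-1}$ does lie in the $L$-centre and within $\Delta_\gamma$. For the latter, \zcref{obs_arcsInVortices} should give an arc $Q_{i-1}$ of $C_{i-1}$ on the vortex side of the $C_i$-arc touched by $L$; we need to make it precise that such an arc is nested inside $\varphi_L$. A secondary obstacle is that rerouting may merge $L$ with neighbouring loops. The lexicographic potential is designed so that such merges only produce loops of smaller maximal index.
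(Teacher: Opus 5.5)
Your plan is the paper's. You choose an extremal candle. If some loop $L$ of maximal index $i$ has no inner loop meeting $C_{i-1}$, you reroute $L$ along $C_{i-1}$ inside the $L$-centre, using that every vertex of $P\cup Q$ strictly inside the $L$-centre lies on an inner loop, so the pieces of $C_{i-1}$ there are free of $\mathcal{X}$.

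The gap is that you reroute along a \emph{single} arc $A$ and assert that it separates the vortex side from ``the deepest point'' of $L$. A loop can reach $C_i$ in several separate excursions. It can leave the $C_{i-1}$-disc, touch $C_i$, come back into the $C_{i-1}$-disc without touching $c_0$, and leave again. Inside the $L$-centre each excursion is cut off by a different arc of $C_{i-1}$, and no single arc separates all of them from the vortex segment. Replacing one excursion by its arc therefore leaves a loop that still meets $C_i$, so the index-$i$ entry of $\Phi$ does not drop. Your tie-breaks do not rescue the step either: the arc of $C_{i-1}$ may be much longer than the excursion it replaces, which can consist of just two edges through a vertex of $C_i$. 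Reading $A$ as all of $C_{i-1}$ and taking the first and last intersections does not help. The connecting subpath of $C_{i-1}$ then leaves the $L$-centre between excursions and may run into the rest of $\mathcal{X}$.

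The repair is what the paper does. Let $Q_1,\dots,Q_\ell$ be \emph{all} arcs of $C_{i-1}$ in the inner graph of the $L$-centre. By your inner-loop argument none of their internal vertices lies on $P\cup Q$. Then $L\cup Q_1\cup\dots\cup Q_\ell$ contains a path $L'$ with the same endpoints as $L$, obtained by walking along the boundary of the component of the $L$-centre minus $C_{i-1}$ whose closure contains the vortex segment. This $L'$ lies in the $C_{i-1}$-disc and hence avoids $C_i$, so all excursions disappear at once.

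It is also cleaner to minimise the lexicographic vector $\bigl(|V(P\cup Q)\cap V(C_j)|\bigr)_{j=s,\dots,1}$ instead of counting loops. Rerouting leaves the entries for $j>i$ unchanged and strictly lowers the $i$th one. Loop counts are fragile, because loops are arbitrary grounded subpaths between vortex nodes and rerouting can create or merge them; this is your ``secondary obstacle''. With the vertex-count potential, even removing one excursion at a time suffices.

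Finally, your worry about the flame or an endpoint of $P$ lying on the replaced part is vacuous. The replaced vertices are internal vertices of the subpath $L$.
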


\begin{proof}
The claim follows inductively from the following relatively simple argument.
Let $L$ be an $\mathcal{X}$-loop where $i \in [2,s]$ is the largest integer such that $L$ intersects $C_i$.
If no such $L$ exists, $\mathcal{X}$ is already exhaustive.
Moreover, if $L$ is not exhaustive, then we may choose $L$ such that no inner loop of $L$ intersects $C_{i-1}$.

Let $H$ be the inner graph of the $L$-centre.
Then notice that there exist disjoint subpaths $Q_1,\dots,Q_{\ell}$ of $C_{i-1}$, each with both endpoints in $L$ such that $\bigcup_{j \in [\ell]}Q_i = C_{i-1} \cap H$.
Since $L$ is an $\mathcal{X}$-loop, any subpath of a path of $\mathcal{X}$ contained in $H - V(\sigma(c_0))$ must be contained in some $\mathcal{X}$-loop.
Hence, from our assumption that there is no inner loop of $L$ that intersects $C_{i-1}$ we may infer that for each $j \in [\ell]$, none of the internal vertices of $Q_j$ belongs to $\mathcal{X}$.
Hence, if we consider $J \coloneqq L \cup \bigcup_{j\in [\ell]}Q_j$, then it becomes apparent that $J$ contains a path $L'$ with the same endpoints as $L$ but entirely disjoint from $C_i$ -- in fact, $L'$ contains only vertices from the inner graph of $C_{i-1}$.
Since both $L$ and the $Q_j$ are disjoint from the rest of $\mathcal{X}$ it follows that replacing $L$ with $L'$ in $\mathcal{X}$ results in a new $R$-candle caught by $\gamma$ where one of its loops intersects strictly less cycles of the nest while all other loops remain unchanged.

This last observation shows that, by choosing $\mathcal{X}$ to be an $R$-candle in $(G,\Omega)$ where the number of intersected cycles from the nest is minimal for each $\mathcal{X}$-loop, $\mathcal{X}$ must be an exhaustive $R$-candle as desired.
\end{proof}

Notice that the proof \zcref{lemma_candlesAreExhaustive} can actually be turned into an algorithm.
This algorithm will touch each of the at most $||G||$ loops at most $s$ times -- that is once for each cycle of the nest where the loop has its peak.
A naive implementation of the ``pushing'' procedure for a single loop takes $|G|$ time, so in total we get the following corollary towards our algorithmic application.

\begin{corollary}\label{cor_AlgoFindExhaustiveCandle}
There exists an algorithm that takes as input a society $(G,\Lambda)$ where $(G,R)$ is an annotated graph, a blank cylindrical rendition $\rho$ of $(G,\Lambda)$ in a disc $\Delta$ with a nest $\mathcal{C} = \{ C_1,\dots,C_s \}$ around the vortex $c_0$ where $s \geq c+1$, a rooted $\rho$-aligned curve $\gamma$ and an $R$-candle $\mathcal{X}$ caught by $\gamma$, and outputs an exhaustive $R$-candle caught by $\gamma$ in time $\mathbf{O}(|G|(s + ||G||))$.
\end{corollary}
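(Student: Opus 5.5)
The corollary is the algorithmic version of the potential-function argument in the proof of \zcref{lemma_candlesAreExhaustive}. I will implement the ``pushing'' step directly and bound the number of times it can be applied.

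\textbf{Preprocessing.} Given $\mathcal{X}=\{P,Q\}$, first compute all $\mathcal{X}$-loops. These are the maximal grounded subpaths of $P\cup Q$ whose endpoints lie in $N(c_0)$ and which otherwise avoid $\sigma(c_0)$. They can be read off in $\mathbf{O}(|G|)$ time by walking along $P$ and $Q$ and marking vertices of $V(\sigma(c_0))$. For each loop $L$, record its peak: the largest $i$ with $L\cap C_i\neq\emptyset$. This takes a further $\mathbf{O}(|G|)$ per loop after labelling every vertex with the index of the nest cycle containing it, if any. There are at most $\|G\|$ loops, since they are edge-disjoint subpaths.

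\textbf{Main loop.} Repeatedly find a loop $L$ with peak $i\geq2$ that violates exhaustiveness, taking one whose inner loops do not reach $C_{i-1}$; an innermost violating loop has this property. Form the graph $J=L\cup\bigcup_j Q_j$, where the $Q_j$ are the arcs of $C_{i-1}$ in the inner graph $H$ of the $L$-centre. The arcs $Q_j$ are obtained by restricting $C_{i-1}$ to $H$, computable from $\rho$ in linear time.

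As argued in the proof of \zcref{lemma_candlesAreExhaustive}, the $Q_j$ have no internal vertices in $\mathcal{X}$. Hence a path $L'$ in $J$ with the endpoints of $L$, avoiding $C_i$, can be found by BFS in $\mathbf{O}(|G|)$ time. The path $L'$ is obtained by following $L$ and shortcutting along each $Q_j$ at its first and last contact with $L$. Replacing $L$ by $L'$ keeps $\mathcal{X}$ an $R$-candle caught by $\gamma$, because:
\begin{itemize}
\item the roots and the flame are unchanged;
\item $L'$ lies in the inner graph of $\gamma$, since $H$ does;
\item disjointness from the rest of $\mathcal{X}$ is preserved.
\end{itemize}

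One subtlety needs checking: $L'$ may touch $\sigma(c_0)$ only at nodes, and so may split into several new loops. Each new loop still has peak at most $i-1$, so the potential argument below is unaffected.

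\textbf{Termination and running time.} Each push strictly decreases the peak of the affected loop(s) and leaves all other loops untouched. Charge each push to a pair (loop, cycle index). A loop's peak can decrease at most $s$ times, so the total number of pushes is at most $s$ times the number of loops ever present. Bounding that number is the main obstacle: splitting loops could in principle increase the count. I would handle it with the potential $\sum_L \mathrm{peak}(L)$ weighted by edges. Each push strictly decreases the peaks of the edges of $L'$ that are new relative to $L$, and loops remain edge-disjoint, which gives an $\mathbf{O}(s+\|G\|)$ bound on iterations charged per vertex.

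Each iteration costs $\mathbf{O}(|G|)$: BFS, recomputing the loops locally, and updating peaks. Maintaining a list of candidate violating loops ordered by peak avoids rescanning. Altogether this yields the claimed total of $\mathbf{O}(|G|(s+\|G\|))$. At termination no violating loop exists, so the output candle is exhaustive by definition.
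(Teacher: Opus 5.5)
Your strategy is the paper's. There the corollary is justified only by a remark: the pushing step from the proof of \zcref{lemma_candlesAreExhaustive} can be run as an algorithm, touching each of at most $\|G\|$ loops at most $s$ times at $\mathbf{O}(|G|)$ per push. Your description of a single push is fine, and you are right that the count must survive changes to the loop family. The gap is in the termination and running-time count, which is the actual content of the corollary.

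Your edge-weighted potential is not monotone. $L'$ consists partly of edges of the arcs $Q_j\subseteq C_{i-1}$ that lay on no loop before the push, and $|E(L')|$ may be much larger than $|E(L)|$. So $\sum_e \mathrm{peak}$ can increase, and ``the peaks of the new edges strictly decrease'' is meaningless for edges that had no peak.

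The arithmetic also does not give the claimed bound. Even granting at most $s$ pushes per loop, up to $\|G\|$ loops at $\mathbf{O}(|G|)$ each gives $\mathbf{O}(s\,\|G\|\,|G|)$, a product rather than $|G|(s+\|G\|)$. Your ``$\mathbf{O}(s+\|G\|)$ iterations per vertex'', summed over vertices and multiplied by the per-iteration cost, is worse still. The paper's one-line count has the same product-versus-sum slip, so it cannot simply be copied.

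The fix needs two observations.

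First, splitting is harmless. For $j\geq 2$ the cycle $C_j$ avoids $V(\sigma(c_0))$: by R4 a common vertex would be a node of $c_0$ and hence lie in $\Delta_1$, while the trace of $C_j$ is disjoint from $\Delta_1$. The interior of $L$ also avoids $V(\sigma(c_0))$, so $L'$ can split only when $i=2$. In that case every piece has peak at most $1$, which exhaustiveness never constrains. Hence loops of peak at least $2$ never multiply, and $\Phi\coloneqq\sum_{L\colon \mathrm{peak}(L)\geq 2}\mathrm{peak}(L)$ drops by at least one per push.

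Second, initially $\Phi\leq 2|G|$. A loop of peak $p$ meets every $C_j$ with $j\leq p$: its ends lie in $\Delta_1\subseteq\Delta_j$, it reaches $C_p$ outside $\Delta_j$, and a grounded path leaves the $\rho$-aligned disc $\Delta_j$ only through a node of $C_j$. Hence $\mathrm{peak}(L)\leq |V(L)|$, and the loops are internally disjoint subpaths of $P\cup Q$.

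This gives $\mathbf{O}(|G|)$ pushes. Each push can be done in time $\mathbf{O}(|H_0|)$ inside $H_0\coloneqq P\cup Q\cup\bigcup\mathcal{C}$, which has maximum degree $4$ and contains every later version of $\mathcal{X}$. Since $|H_0|=\mathbf{O}(\|G\|+1)$, the total is $\mathbf{O}(|G|(\|G\|+1))$, within the claimed bound.
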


We are now ready to proceed with the proof of \zcref{lemma_CandleLoopsAreNotDeep}.

\begin{proof}[Proof of \zcref{lemma_CandleLoopsAreNotDeep}]
Let $\beta$ be the function from \zcref{prop:VitalLinkage}.
We set $c \coloneqq 2 \cdot \beta(4,2) + 1$.

Let $\mathcal{X}$ be an $R$-candle in $(G,\Lambda)$ caught by $\gamma$.
By \zcref{lemma_candlesAreExhaustive} we may assume that $\mathcal{X}$ is exhaustive.
This means that one of two cases holds:
\smallskip

\textbf{Case 1:}
No $\mathcal{X}$-loop intersects $C_{c+1}$ and we are done.
\smallskip

\textbf{Case 2:}
There exists a $\mathcal{X}$-loop $L_{c+1}$ which intersects $C_{c+1}$.

In this case, let $H_0$ be the graph obtained from the union of all $C_i$-arcs and complete cycles $C_i$ in the inner graph of $\gamma$ and the two paths in $\mathcal{X}$.
Let then $H_1$ be obtained from $H_0$ by iteratively performing the following operations until none of them can be applied
\begin{enumerate}
    \item if there is a vertex $v \in V(H_0)$ of degree $0$, delete $v$,
    \item if there is $v\in V(H)$ of degree $1$, contract its unique incident edge.
    \item if there is $v \in V(H)$ of degree $2$, contract one of its two incident edges.
\end{enumerate}
As a result, any $C_i$-arc of $H_0$ that intersected $\mathcal{X}$ is still present -- be it a bit smaller possibly -- in $H_1$.
Moreover, every vertex of $H_1$ is now of degree $3$ or $4$.
Now notice that each $\mathcal{X}$-loop that intersects some cycle of $\mathcal{C}$ in $H_0$ corresponds to an $\mathcal{X}$-loop in $H_1$.
Indeed, this means that there is now a sequence $\langle L_1 ,\dots, L_{c+1}\rangle$ of $\mathcal{X}$ loops where $L_i$ intersects an arc of $C_j$ for each $j \in [i]$ and $L_{i-1}$ is an inner loop of $L_i$ for all $i \in[2,c+1]$.
This means that whenever some $L_i$ meets an arc of any $C \in \mathcal{C}$, then so does every $L_j$ for which $L_i$ is an inner loop.
Thus, for each $i\in[c+1]$ we may select an arc $A_i$ of $C_i$ which is intersected by all $C_j$ with $j \in [i,c+1]$.
Then let $B_i \coloneqq A_i \cup L_i$.
Since all of the $A_i$ are pairwise disjoint and all of the $L_i$ are, each vertex of $\bigcup_{i \in [c+1]}B_i$ belongs to at most two $B_i$.
Moreover, any two $B_i$ intersect in at least one vertex.
Hence, $\mathcal{B} \coloneqq \{ B_i \colon i\in[c+1] \}$ is a bramble.
And since any vertex can meet at most two elements of $\mathcal{B}$, the order of $\mathcal{B}$ is at least $\lceil \nicefrac{c+1}{2}\rceil$.
With $c = 2 \cdot \beta(4,2) + 1$ this implies by \zcref{prop_brambles} that the treewidth of $H_1$ is at least $\beta(4,2) + 1$.

Let $\mathcal{X}= \{ P,Q \}$ where $P,Q \subseteq H_1$ are the two paths of what remains of the candle $\mathcal{X}$ in $H_1$.
Let $p_1,p_2$ be the endpoints of $P$ and $q_1,q_2$ be the endpoints of $Q$.
Then, by \zcref{prop_DisjointPathsIrrelevant}, there exists a vertex $v \in V(H_1)$ such that there still exist disjoint paths $P'$ and $Q'$ such that $P'$ joins $p_1$ and $p_2$ while $Q'$ joins $q_1$ and $q_2$ in $H_1 - v$.
By our choice of $H_1$ we know that $v$ has degree $3$ or $4$.
In either case, there are at most two edges incident with $v$ in $H_1$ that do not belong to the nest.
Let $H_2$ be the graph obtained from $H_1$ by removing all such edges and then applying the three reduction rules above.
It follows, that also $H_2$ still contains a candle rooted on the same four vertices $p_1$, $p_2$, $q_1$, $q_2$.
Moreover, still there is a bijection between the arcs of the cycles in the nest in $H_0$ and those in $H_1$ -- with the exception of those arcs that do not intersect with the candle at all.

From here on we repeat the steps above as follows:
Suppose we have constructed $H_i$ for some $i \geq 2$.
Then we know that $H_i$ has a candle rooted on the vertices $p_1,p_2,q_1$, and $q_2$.
By \zcref{lemma_candlesAreExhaustive} we may assume this candle -- let us call it $\mathcal{X}_i$ -- to be exhaustive.
If no $\mathcal{X}$-arc intersects $C_{c+1}$ we have won.
Otherwise we again use \zcref{prop_DisjointPathsIrrelevant} to find an irrelevant vertex $v_i$, remove all edges incident with $v_i$ that do not belong to the nest, and apply our three reduction rules in order to remove vertices of degree less than $3$.

Since each iteration removes at least one edge, but we always ensure that there still is a candle afterwards, this process must eventually come to a halt.
Hence, after at most $||H_0||$ steps we have found the desired candle.
\end{proof}

Notice that we have used a non-constructive version of \zcref{prop:VitalLinkage} in the proof of \zcref{lemma_CandleLoopsAreNotDeep}.
There exists a constructive variant of \zcref{prop:VitalLinkage} in the work of Cavallaro, Gorsky, Kreutzer, Thilikos, and Wiederrecht \cite{CavallaroGKTW2026Optimal}.
In order to save on notation, we will not state the full theorem but rather provide a specialisation of this theorem to the \textsc{$k$-Disjoint Paths Problem}.

In this context, given an annotated graph $(G,R)$ and an integer $k$, we say that a vertex $v \in V(G) \setminus R$ is \emph{strongly $(k,R)$-irrelevant} if for every choice of at most $k$ terminals pairs $\mathcal{T} = \{(s_1,t_1),\dots,(s_{k'},t_{k'})\}$ where $s_i,t_i \in R$ for all $i\in[k']$, $(G,\mathcal{T})$ is a \textsf{yes}-instance of the \textsc{$k$-Disjoint Paths} problem if and only if $(G-v,\mathcal{T})$ is a \textsf{yes}-instance of the \textsc{$k$-Disjoint Paths} problem.
The original version of the theorem gives an even stronger conclusion, allowing for up to $d$ additional terminals that are ``floating'', that is unspecified.
This problem is known as the $(k,d)$\textsc{-Folio} problem and here we only state the version where $d=0$.
See \cite{CavallaroGKTW2026Optimal} for more information.

\begin{proposition}[Cavallaro, Gorsky, Kreutzer, Thilikos, and Wiederrecht \cite{CavallaroGKTW2026Optimal}]\label{prop_DisjointPathsIrrelevant}
There exists a function $f_{\ref{prop_DisjointPathsIrrelevant}} \colon \mathbb{N}^2 \to \mathbb{N}$ and an algorithm that takes as input a annotated graph $(G,R)$ of bidimensionality at most $b$, and either correctly determines that $\mathsf{tw}(G) \leq f_{\ref{prop_DisjointPathsIrrelevant}}(k,b)$ or finds a vertex $v\in V(G)\setminus R$ such that $v$ is strongly $(k,R)$-irrelevant in time $2^{2^{\poly(b)}\cdot \poly(k)}\cdot |G|$.
Moreover, $f_{\ref{prop_DisjointPathsIrrelevant}}(k,b) \in 2^{\poly(b)}\cdot \poly(k)$.
\end{proposition}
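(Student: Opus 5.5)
The plan is to follow the classical Robertson–Seymour route, using \zcref{prop:VitalLinkage} as the combinatorial engine and \zcref{prop:localstructure} to handle the annotation. Fix $k$ and $b$ and let $(G,R)$ have $\mathsf{bidim}(G,R)\le b$. If $\mathsf{tw}(G)$ is below a threshold $f(k,b)$, we report this, computing an approximate tree-decomposition in $2^{\poly(f)}\cdot|G|$ time. Otherwise \zcref{prop_GridThm} yields a large mesh $M$, and we apply \zcref{prop:localstructure} to $M$ with $t$ chosen as a function of $k$ (say $t=\Theta(\sqrt{k})$, or simply $t=2k+1$).

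\textbf{The two clique outcomes.}
\begin{itemize}
\item A red $K_t$-minor controlled by $M$ is a clique minor whose branch sets all meet $R$. The classical argument (Graph Minors XXII, in the form of Kawarabayashi–Wollan) then applies: any $k$ pairs in $R$ can be rerouted through the clique model. Hence a vertex of $M$ outside the model, in a part separated from $R$ by the clique, is strongly $(k,R)$-irrelevant.
\item In the blank-clique outcome, the $\mathcal{T}_\psi$-big component beyond a separation of order $<t$ contains no vertex of $R$. A vertex deep inside it is irrelevant for the same reason: the part can only be entered through fewer than $t$ vertices, and the clique model realises every linkage pattern on them.
\end{itemize}

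\textbf{The surface outcome.} The main case is the third outcome: an apex set $A$ of size $\poly(k+b)$ and a blank rendition $\rho$ of $G-A$ with a flat $w$-wall $W$ in a vortex-free disc $\Delta$. Since $\rho$ is blank, no vertex of $R\setminus A$ is drawn in $\Delta$ except possibly on vortex interiors, and $\Delta$ contains none. So every terminal is in $A$ or outside $\Delta$.
\begin{itemize}
\item We first pick, by pigeonhole, a subwall $W'$ of $W$ whose interior is not adjacent to any apex vertex of $A$ that attaches only sparsely. Apices with many attachments can instead be routed through directly, giving the standard "big apex" reduction.
\item This leaves a large flat wall of height $h$ in which all $\le 2k$ terminals lie outside the disc bounded by its perimeter.
\item We then claim that the central vertex $v$ of $W'$ is irrelevant for every pattern simultaneously. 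Take a linkage $\mathcal{L}$ for some pattern that minimises the number of edges it uses outside $W'$. Its restriction to the disc of $W'$ is a linkage on the boundary society of that disc.
\item A minimal such restriction, after contracting the outside, is a vital linkage in a graph of bidimensionality (with respect to its endpoints) bounded in terms of $k$. This uses planarity of the flat part, together with the concentric cycles of $W'$ to bound how many times the paths cross the perimeter (an annulus-combing argument).
\item \zcref{prop:VitalLinkage} then bounds the treewidth of the part of $W'$ the linkage truly needs by $\poly(k)\cdot 2^{\poly(b)}$. Choosing $h$ larger than this forces $\mathcal{L}$ to be reroutable avoiding the centre, so $v$ is strongly $(k,R)$-irrelevant.
\end{itemize}

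\textbf{Main obstacle.} I expect the hard step to be the uniform-over-patterns rerouting: showing that the number of traversals of the wall perimeter by an optimal linkage is bounded by $\poly(k)$, independent of $|R|$. The bound $\beta(k,b)\in\poly(k)2^{\poly(b)}$ must enter only through the number of relevant endpoints, not through $R$. My first attempt would be to use the nest of $\Theta(\beta)$ concentric cycles around the centre: uncross the linkage with these cycles so that each path meets the annulus in radial segments, and count homotopy types via \zcref{prop_TypeCounting}.

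\textbf{Running time.} The claimed $2^{2^{\poly(b)}\poly(k)}\cdot|G|$ bound requires avoiding the $|E(G)|^3$ of \zcref{prop:localstructure}. I would do this by working on a linear-size minor obtained through a standard compression / recursive-contraction scheme, as in Korhonen's linear-time flat-wall finding. The doubly exponential dependence arises from the treewidth DP on the threshold $f(k,b)$.
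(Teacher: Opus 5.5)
This proposition is not proved in the paper. It is imported from Cavallaro, Gorsky, Kreutzer, Thilikos, and Wiederrecht, and Sections~5 and~6 re-run that argument for spanning linkages. Measured against that argument, your sketch has genuine gaps.

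\textbf{Clique outcomes.} Your treatment of the red-clique outcome does not work. Every branch set of a red $K_t$-model contains a vertex of $R$, and the terminals may be \emph{any} vertices of $R$. Nothing in your argument therefore identifies a vertex that is irrelevant for all choices of terminals. This outcome must instead be excluded. The $(s\times s)$-grid is a subgraph of $K_{s^2}$, so a red $K_t$-minor with $t\geq (b+1)^2$ yields a fully red $((b+1)\times(b+1))$-grid as a red-minor, contradicting $\mathsf{bidim}(G,R)\leq b$. Hence $t$ must depend on $b$, not only on $k$; this is how \zcref{lemma_WahtToDoWithClique} disposes of the red clique. Your blank-clique argument is essentially right. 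It should, however, use a minimum-order separation as in \zcref{thm_cliqueirrelevantvertex}, rather than claiming that the clique realises every pattern on the separator.

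\textbf{Surface outcome.} Here the step that constitutes the theorem is asserted rather than proved. Terminals may lie in $A$, and any apex may have neighbours deep inside $W'$. A solution can therefore reach the centre without crossing the perimeter. Your ``big apex reduction'' is not an argument. Moreover, which apex neighbours a solution uses depends on the pattern, so this is exactly where uniformity over patterns is at stake.

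The device that works is the one in the proof of \zcref{thm_findIrrelevantVertexFlatWall}. Each apex has degree at most $2$ in a linkage, so among $2a+1$ nested annuli one avoids the apex neighbours of the given solution. You then comb the solution through that annulus, as in \zcref{thm_CombedAnus}. The remaining apex attachments in the centre are absorbed by $(d,A)$-folio homogeneity (\zcref{prop_FolioHomoWall}, \zcref{prop_irrelevantVertex}). Throughout, the central vertex $v$ is fixed in advance.

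Likewise, going from ``\zcref{prop:VitalLinkage} bounds the treewidth'' to ``the centre can be avoided'' needs a concrete mechanism:
\begin{enumerate}
\item take a solution minimising the edges not on the nested cycles;
\item contract to obtain a vital linkage in a minor of $G$ (bidimensionality of the terminal set does not increase under red-minors);
\item make the wells dry (\zcref{prop_drywell});
\item build a bramble from nested mountains, or from rivers together with the cycles.
\end{enumerate}
$R$ enters only through the at most $2k$ chosen terminals. So the bound ``independent of $|R|$'' that you single out as the main obstacle is not where the difficulty lies.

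\textbf{Running time.} Linear time cannot come through \zcref{prop:localstructure}, which costs $|E(G)|^3$, and the unspecified compression scheme does not fix this. A linear route would use a clique-or-flat-wall theorem such as \zcref{prop_FindGoodFlatWall}, as the paper's own algorithm does. Also, the proposition never solves an instance, so no final dynamic program is run; the doubly exponential term cannot be attributed to one.
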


With \zcref{prop_DisjointPathsIrrelevant} and \zcref{prop_TwoPaths}, the proof of \zcref{lemma_CandleLoopsAreNotDeep} can be made constructive and thus, such candles can be found efficiently.

\begin{corollary}\label{cor_FindShallowCandles}
Let $c > 0$ be the constant from \zcref{lemma_CandleLoopsAreNotDeep}.
There exists an algorithm that takes as input a society $(G,\Lambda)$ where $(G,R)$ is an annotated graph, a blank cylindrical rendition $\rho$ of $(G,\Lambda)$ in a disc $\Delta$ with a nest $\mathcal{C} = \{ C_1,\dots,C_s \}$ around the vortex $c_0$ where $s \geq c+1$ and a rooted $\rho$-aligned curve $\gamma$, and finds one of the following outcomes in time $\mathbf{O}(|G|^5)$:
\begin{enumerate}
    \item An $R$-candle $\mathcal{X}$ captured y $\gamma$ in $(G,\Lambda)$ such that no $\mathcal{X}$-loop intersects $C_{c+1}$, or
    \item the correct conclusion that $(G,\Lambda)$ does not contain an $R$-candle captured by $\gamma$.
\end{enumerate}
Moreover, if we are given any $R$-candle $\mathcal{X}'$ captured by $\gamma$ as an additional input, the algorithm finds an $R$-candle $\mathcal{X}$ captured by $\gamma$ as above with the same roots as $\mathcal{X}'$.
\end{corollary}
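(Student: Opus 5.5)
We make the proof of \zcref{lemma_CandleLoopsAreNotDeep} constructive in two phases: first decide whether any $R$-candle is caught by $\gamma$, then run the loop-pushing and irrelevant-vertex iteration with algorithmic subroutines.

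\textbf{Phase 1: finding some candle.} Let $(G_\gamma,\Lambda_\gamma)$ be the restriction to the inner graph of $\gamma$. As in \zcref{claim_flatCandle}, add a new vertex $r$ adjacent to $R\cap V(G_\gamma)$. An $R$-candle caught by $\gamma$ is exactly a pair of disjoint paths, one joining $x_1,x_2$ and one joining $y_1,r$, with $x_1<y_1<x_2$ in $\Lambda_\gamma$. Since $G_\gamma$ need not have a vortex-free rendition, we cannot simply apply \zcref{prop_TwoPathsProper}. Instead we enumerate all $\mathbf{O}(|G|^3)$ triples of roots $(x_1,y_1,x_2)$. For each triple we call the linear-time algorithm of \zcref{prop_TwoPaths} on the \textsc{$2$-Disjoint Paths} instance $\{(x_1,x_2),(y_1,r)\}$, which costs $\mathbf{O}(|G|^4)$ in total. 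If every call fails, we output the second outcome. Otherwise deleting $r$ from a solution yields a candle $\mathcal{X}'$. If a candle $\mathcal{X}'$ is given as input, we skip this phase and fix its roots.

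\textbf{Phase 2: making loops shallow.} Starting from $\mathcal{X}'$, we replay the proof of \zcref{lemma_CandleLoopsAreNotDeep}. First apply \zcref{cor_AlgoFindExhaustiveCandle} to obtain an exhaustive candle with the same roots; pushing loops never moves roots, so this takes $\mathbf{O}(|G|(s+\|G\|))$ time. Then check whether some loop meets $C_{c+1}$. If not, we output this candle. If some loop does, build $H_0$ and reduce it to $H_1$ by the three degree rules in linear time.

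By the bramble argument, $\mathsf{tw}(H_1)>\beta(4,2)$. We then call the algorithm of \zcref{prop_DisjointPathsIrrelevant} with annotation the four endpoints $p_1,p_2,q_1,q_2$ and $k=2$; the bidimensionality with respect to four vertices is at most $2$. This returns a strongly irrelevant vertex $v$ in linear time. We need the threshold $f_{\ref{prop_DisjointPathsIrrelevant}}(2,2)$ to be at most the treewidth bound $\beta(4,2)+1$ used in defining $c$; if it is not, we enlarge the universal constant $c$ accordingly, which the lemma permits.

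We delete the non-nest edges at $v$, re-reduce, and recover an actual candle with the same roots. For the recovery we solve one \textsc{$2$-Disjoint Paths} instance by \zcref{prop_TwoPaths} and then make the result exhaustive again. Each round deletes at least one edge of $H_0$, so there are at most $\|H_0\|\le\mathbf{O}(|G|^2)$ rounds. Each round costs $\mathbf{O}(|G|(s+\|G\|))$, which stays within $\mathbf{O}(|G|^5)$ if $s\le |G|$ and graphs are sparse enough; otherwise we bound naively by $\mathbf{O}(|G|^3)$ per round after noting that $s\le |G|$.

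\textbf{Correctness and main obstacle.} Correctness follows from the invariant that every round keeps a candle caught by $\gamma$ with the same roots. Its edges are edges of $G_\gamma$ (contractions only shorten paths, so we lift back by uncontracting), and the process terminates only in the first outcome. The main obstacle is this lifting: after contracting degree-$2$ vertices and deleting edges, we must map the candle found in $H_i$ back to a genuine candle in $G$ whose flame is still in $R$. I would handle this by never contracting edges on the candle stick's flame end (marking the flame as undeletable) and keeping the contraction history, so that each path in $H_i$ expands to a path in $G_\gamma$.
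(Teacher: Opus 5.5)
Your proposal is correct and follows the paper's proof: a brute-force search over roots using the linear-time \textsc{$2$-Disjoint Paths} algorithm, then the constructive replay of \zcref{lemma_CandleLoopsAreNotDeep} via \zcref{cor_AlgoFindExhaustiveCandle} and repeated irrelevant-vertex edge deletions from \zcref{prop_DisjointPathsIrrelevant}; your remarks on choosing $c$ large relative to the threshold and on lifting through contractions make explicit points the paper leaves implicit. The differences are minor: your auxiliary vertex adjacent to $R$ saves a factor of $|G|$ in the search phase, whereas the paper uses that $H_0$ has maximum degree $4$ to get only $\mathbf{O}(|G|)$ rounds and recomputes a candle once at the end rather than in every round, but your looser count still fits within $\mathbf{O}(|G|^5)$.
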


\begin{proof}
We start by simply looking for an $R$-candle captured by $\gamma$ greedily as follows:
For each possible choice of $r \in R$ and $x_1,x_2,x_3 \in N(\gamma)$ where $x_1,x_2,x_3$ appear on $\Lambda$ in the order listed, we run the algorithm from \zcref{prop_TwoPaths} on $(G_{\gamma},\{ (r,x_2),(x_1,x_3) \})$ where $G_{\gamma}$ is the inner graph of $\gamma$.
If the answer is yes for one of those $|G|^4$ calls, then we have found an $R$-candle $\mathcal{X}$ captured by $\gamma$.
Otherwise we may conclude that no such candle exists and stop the algorithm.

Hence, after at most $\mathbf{O}(|G|^5)$ time we may now assume to have found an $R$-candle $\mathcal{X}_0$ in $(G,\Lambda)$.
In case we were provided with such an $R$-candle $\mathcal{X}'$ beforehand, this is now where the algorithm starts.

Next we use the algorithm from \zcref{cor_AlgoFindExhaustiveCandle} to find an exhaustive $R$-candle $\mathcal{X}_1$ captured by $\gamma$ in time $\mathbf{O}(|G|(s + ||G||)) \in \mathbf{O}(s \cdot |G|^3)$.

From here we follow a similar route as the one explained in the proof of \zcref{lemma_CandleLoopsAreNotDeep}.
If some $\mathcal{X}_1$-loop intersects $C_{c+1}$, then the treewidth of the union $H$ of the $C_i$-arcs for all $i\in[s]$ and $\mathcal{X}_1$ must be above the threshold needed for an application of \zcref{prop_DisjointPathsIrrelevant} with $k=4$ and $b=2$.
This yields an irrelevant vertex $v$.

Notice that the maximum degree of $H$ is at most $4$.
If $v$ is a vertex without neighbours, simply delete $v$ and let $H_1$ be the resulting graph.
In case $v$ has degree $1$, let $H_1$ be the graph obtained from $H_0\coloneqq H$ by contracting the unique edge incident with $v$.
Similarly, if $v$ has degree $2$, contract one of its two incident edges.
Finally, if $v$ has degree larger than $2$, then at least one of its incident edges belongs to $\mathcal{X}_1$ but not to any of the cycles of $\mathcal{C}$.
In this case let $H_1$ be the graph obtained from $H_0$ by deleting all edges incident with $v$ that do not belong to $\mathcal{C}$.

Notice that this process keeps all arcs of the cycles in $\mathcal{C}$ intact -- it might shrink them though.
Also, $||H_0|| \leq 4|H_0|$ due to the bound on the maximum degree.

Hence, by repeating this process at most $\mathbf{O}(|G|)$ times -- each time either contracting an edge or removing edges incident with an irrelevant vertex that do not belong to the cycles --  we eventually must end in a graph $H_{\ell}$ which has bounded treewidth.
The treewidth of $H_{\ell}$ is in fact at most $c$.

Notice that $H_{\ell}$ still contains an $R$-candle $\mathcal{X}_{\ell}$ on the same roots as $\mathcal{X}_1$.
We may even find an exhaustive one, call it $\mathcal{X}_{\ell}$, using the algorithm from \zcref{cor_AlgoFindExhaustiveCandle}.
However, it now follows from the arguments in the proof of \zcref{lemma_CandleLoopsAreNotDeep}, that no $\mathcal{X}_{\ell}$-loop can intersect what remains of $C_{c+1}$ as otherwise we would have a witness that the treewidth of $H_{\ell}$ is still at least $c+1$. 

There are $\mathcal{O}(|G|)$ edges in $H_0$ and each iteration of this process takes $\mathbf{O}(|G|)$ time.
Thus, after spending another $\mathbf{O}(|G|^2)$ time we may call \zcref{prop_TwoPaths} and \zcref{cor_AlgoFindExhaustiveCandle} to find $\mathcal{X}_{\ell}$ as described above.
\end{proof}

\paragraph{Arranging candles into a mesh.}
In the previous steps we have shown how to find candles and how to enforce some additional structure.
Before we are able to move on to the next step, we need to show that in the case where we have many consecutive candles rooted on a society with a large nest, there exists a large candle mesh.

Let $(G,\Lambda)$ be a society with a cylindrical rendition $\rho$ and vortex $c_0$.
We say that a curve $\gamma$ is \emph{semi $\rho$-aligned} if it can be divided into three -- possibly empty -- pieces:
Two of them are $\rho$-aligned curves with one end in $\Lambda$, while the last starts and ends on points of $N(c_0)$ and is otherwise entirely contained in the interior of $c_0$.
This last part is called the \emph{inner piece} of $\gamma$.

For a semi $\rho$-aligned curve, the vortex society $\Lambda_{c_0}$ can be partitioned into two segments:
\begin{enumerate}
    \item $S_1\subseteq V(\Lambda_{c_0})$ such that, if $\gamma'$ is the rooted $\rho$-aligned curve obtained from $\gamma$ by replacing the inner piece of $\gamma$ with any $\rho$-aligned curve $\gamma''$ that intersects $\rho$ precisely in $S_1$, then the inner disc of $\gamma'$ does not contain $c_0$, and
    \item $S_2 \coloneqq V(\Lambda_{c_0}) \setminus S_1$.
\end{enumerate}
We call the curve $\gamma'$ the \emph{projection of $\gamma$} and $S_1$ the \emph{inner segment} of $\gamma$.
If there does not exist a path in $\sigma(c_0)$ from $S_1$ to $S_2$ we say that $\gamma$ is a \emph{proper semi $\rho$-aligned curve}.
The \emph{inner graph} of a proper semi $\rho$-aligned curve $\gamma$ is defined as the union of the inner graph of $\gamma'$ together with all components of $\sigma(c_0)$ that contain a vertex from $S_1$.

Notice that whenever we consider a proper semi $\rho$-aligned curve $\gamma$, we may delete everything in $\sigma(c_0)$ that does not belong to the inner graph of $\gamma$, resulting in the society $(G',\Lambda)$.
If we then shrink the vortex $c_0$ in order to be entirely in the interior of $\gamma$, we obtain a cylindrical rendition $\rho'$ of $(G',\Lambda)$ from $\rho$ where $\gamma$ is a rooted $\rho'$-aligned curve whose inner disc contains the vortex $c_0'$ of $\rho'$.
Notice further that any nest in $G$ and $\rho$ around $c_0$ remains a nest around the vortex of $\rho'$ in $G'$.
This gives us the right to treat proper semi $\rho$-aligned curves as rooted $\rho$-aligned curves by implicitly referring to the graph $G'$ and the rendition $\rho'$.
Hence, any definition and result for rooted $\rho$-aligned curves referring only to their inner graphs can be lifted to also apply to proper semi $\rho$-aligned curves.

\begin{lemma}\label{lemma_ForcingCandlesIntoAMesh}
Let $c$ be the constant from \zcref{lemma_CandleLoopsAreNotDeep}.
For every integer $k \geq 1$ and every society $(G,\Lambda)$ where $(G,R)$ is an annotated graph, if the following requirements are met:
\begin{itemize}
    \item $(G,\Lambda)$ has a blank cylindrical rendition in a disc with a nest $\mathcal{C} = \{ C_1,\dots,C_s\}$ where $s \geq c + 3k +1$, and
    \item there exist $k$ pairwise disjoint $R$-candles $\mathcal{X}_1,\dots,\mathcal{X}_k$ in $(G,\Lambda)$ such that $\Lambda$ can be partitioned into $k$ segments $S_1,\dots,S_k$ appearing in this order and the roots of $\mathcal{X}_i$ belong to $S_i$ for all $i\in[k]$,
    \item for every $i \in [k]$ there exists a proper semi $\rho$-aligned curve $\gamma_i$ whose inner graph contains $\mathcal{X}_i$ and are disjoint except for $V(\gamma_i \cap \gamma_j)$ for all $i \neq j \in[k]$, and
    \item for each $i\in[k]$, no $\mathcal{X}_i$-loop intersects $C_{c+1}$.
\end{itemize}
then $(G,R)$ contains a $k$-candle mesh.
\end{lemma}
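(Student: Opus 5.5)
The plan is to build the $k$-candle mesh directly: a $(3k\times 3k)$-mesh from the outer cycles of the nest and a family of radial paths, with the top path $P_1$ replaced by the vortex-side boundary near the candles.

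Key observation: since no $\mathcal{X}_i$-loop meets $C_{c+1}$, each candle $\mathcal{X}_i$ lives in two regions. One part lies inside the inner graph of $C_{c+1}$, namely the vortex together with the first $c$ cycles. The other part consists of grounded subpaths that leave this region and reach $V(\Lambda)$ on the outside.

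\textbf{Step 1.} For each $i$, trace the three paths of $\mathcal{X}_i$ inward from its three roots $x_1^i,y_1^i,x_2^i\in S_i$ (two ends of $P$, one end of $Q$). Take the initial subpaths until they first hit $C_{c+1}$. Because no $\mathcal{X}_i$-loop crosses $C_{c+1}$, these subpaths together with the remaining inner portions still encode the candle. The part of $\mathcal{X}_i$ beyond $C_{c+1}$ (inside) can only reach back out through these three crossings.

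\textbf{Step 2.} Use the $3k$ cycles $C_{c+2},\dots,C_{c+3k+1}$ as horizontal paths, with $C_{c+2}$ playing $P_1$. For the vertical paths, use the $3k$ outward segments of the candle paths obtained in Step 1. They hit the cycles in cyclic order $x_1^1,y_1^1,x_2^1,x_1^2,\dots$, which follows from the segments $S_1,\dots,S_k$ appearing in order. The proper semi $\rho$-aligned curves $\gamma_i$ are pairwise disjoint apart from endpoints, so the inner graphs are essentially disjoint and the verticals are disjoint across different $i$.

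These verticals need not be orthogonal to the nest, since they may weave. To fix this, I would reroute them along the cycles, in the standard way of converting a planar radial linkage plus nest into a mesh in the annulus between $C_{c+2}$ and $C_{c+3k+1}$. This is possible since that annulus is vortex-free. I would then cut the cycles open at a point between $S_k$ and $S_1$ to obtain paths, which gives a $(3k\times 3k)$-mesh.

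\textbf{Step 3.} Attach the candle parts lying inside $C_{c+2}$:
\begin{itemize}
\item For each $i$, the inner part of $P$ connects the verticals corresponding to $x_1^i$ and $x_2^i$; this is $L_i$.
\item The inner part of $Q$ continues from the $y_1^i$ vertical to the flame; this is $B_i$.
\end{itemize}
Disjointness from the mesh holds because these inner parts stay strictly inside $C_{c+2}$ by the loop condition, except for their attachment vertices. Inner parts for different $i$ are disjoint because the candles are disjoint.

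This orientation places the mesh outside and the candles at the inner boundary row. That is acceptable because the definition of a candle mesh is symmetric up to relabelling the horizontals.

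\textbf{Main obstacle.} I expect the hardest step to be Step 2's rerouting. The rerouted verticals must remain attached to the correct candle pieces. Also, the parts of the candles lying outside $C_{c+1}$, for example near $\Lambda$, must not collide with the rerouted verticals.

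My first attempt would be to avoid using the outer candle parts altogether. Instead I would take fresh radial paths for the verticals, obtained by Menger in the vortex-free annulus. Their existence follows from the nest having $\ge 3k$ cycles. I would then splice these radial paths to the candle pieces at $C_{c+1}$, using the arcs of $C_{c+1}$ that meet each $\gamma_i$.

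The budget $s\ge c+3k+1$ is exactly $c$ cycles to absorb the loops plus $3k$ mesh rows.
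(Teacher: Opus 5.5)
Your skeleton is the paper's: the three root-to-$C_{c+1}$ segments of each candle (the paper's $X_i,Y_i,Z_i$) are the $3k$ verticals, arcs of the nest cycles outside $C_{c+1}$ are the horizontals, and the parts of $P_i$ and $Q_i$ inside $C_{c+1}$ give $L_i$ and $B_i$. However, the cut point in Step~1 is wrong. The hypothesis only controls $\mathcal{X}_i$-loops, i.e.\ grounded subpaths with \emph{both} ends in $N(c_0)$. The piece of $P_i$ (or $Q_i$) between its \emph{first} visit to $C_{c+1}$ and its first vortex vertex is not a loop, and nothing stops it from running back out across $C_{c+2},C_{c+3},\dots$. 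So your inner parts need not stay inside $C_{c+2}$, and the disjointness claim of Step~3 fails. The paper instead cuts at the \emph{last} visit to $C_{c+1}$ before the vortex, i.e.\ it takes the longest vortex-free initial subpath ending on $C_{c+1}$. The remainder then cannot leave the disc of $C_{c+1}$ without touching $C_{c+1}$ again. Every later grounded excursion is an $\mathcal{X}_i$-loop and so avoids $C_{c+1}$. Hence all of $V(\mathcal{X}_i)\cap V(C_{c+1})$ lies on the outer segments, and $P_i'$, $Q_i'$ lie strictly inside.

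The more serious gap is Step~2, which is where the lemma's content lies. You leave it as ``the standard way'' and then list it as the main obstacle. The uncrossing must keep the endpoints on $C_{c+1}$, where $L_i$ and $B_i$ attach. The usual orthogonalisation of a radial linkage against a nest fixes the endpoints on the $\Lambda$ side, which are irrelevant here, and may move the inner ones, which is fatal here. Your fallback does not repair this, for three reasons:
\begin{itemize}
\item The $3k$ disjoint radial paths exist because of the candle segments, not because the nest has $3k$ cycles; a nest alone gives no radial connectivity.
\item Menger paths carry no orthogonality, so the weaving problem is unchanged.
\item Splicing along arcs of $C_{c+1}$ needs the new endpoints to interleave correctly with the attachment points, which nothing guarantees.
\end{itemize}
The paper closes this step by an explicit uncrossing in $H=\bigcup_i(X_i\cup Y_i\cup Z_i)\cup\bigcup_{j\ge c+1}T_j$, where $T_j\subseteq C_j$ is the arc from $X_1$ to $Z_k$. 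First, each vertical is pushed as far right as possible by swapping in \emph{bumpies}: grounded subpaths of $H$ on its right side with both ends on it. This preserves the vertical's endpoints. Then, sweeping the verticals from right to left, every \emph{curl} of a horizontal is removed. A curl is a subpath of some $T_j$ lying left of a vertical with both ends on it; each curl is shown to be free (the subpath of the vertical between its ends meets no other horizontal) and is replaced by that subpath. With no bumpies or curls left, the $T_j$ and the verticals form a mesh with $s-c\ge 3k+1$ horizontal and $3k$ vertical paths. Attaching $P_i'$ and $Q_i'$ then gives the $k$-candle mesh. You need this argument, or an orthogonalisation lemma that provably fixes the endpoints on $C_{c+1}$, to complete the proof.
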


\begin{proof}
For each $i\in[k]$ let $\mathcal{X}_i \coloneqq \{ P_i,Q_i\}$ where $Q_i$ is the candle stick of $\mathcal{X}_i$.
We now select three paths $X_i,Y_i,Z_i$ from each $\mathcal{X}_i$ as follows.
Let $x_i^1,y_i,x_i^2$ be the three roots of $\mathcal{X}_i$ on $\Lambda$ in order of their appearance.
Then let $X_i$ be the longest subpath of $P_i$ that contains $x_i^1$, has its other endpoint on $C_{c+1}$, and does not contain a vertex of the vortex $c_0$.
Similarly, let $Y_i$ be the longest subpath of $Q_i$ that contains $y_i$, has its other endpoint on $C_{c+1}$, and does not contain a vertex of the vortex $c_0$.
Finally, let $Z_i$ be the longest subpath of $P_i$ that contains $x_i^2$, has its other endpoint on $C_{c+1}$, and does not contain a vertex of the vortex $c_0$.

Since no $\mathcal{X}_i$-loop intersects $C_{c+1}$ it follows that $X_i \cup Z_i$ contain all vertices in $V(P_i) \cap V(C_{c+1})$ while $Y_i$ contains all vertices in $V(Q_i) \cap V(C_{c+1})$.
Note that we are given the right to talk about $\mathcal{X}_i$-loops in the first place by the third assumption of the assertion that provides us with a proper semi $\rho$-aligned curve for each candle.
Moreover, $X_i$ and $Z_i$ must be vertex-disjoint since $\rho$ is a blank rendition and thus all vertices of $R$ are contained in $\sigma(c_0)$.
Let $P_i'$ be the subpath of $P_i$ joining the two non-$\Lambda$-endpoints of $X_i$ and $Y_i$ as well as $Q_i'$ be the subpath of $Q_i$ joining the flame of $\mathcal{X}_i$ and the non-$\Lambda$-endpoint of $Y_i$.

Now, for every $i \in[c+1,s]$ there exists a subpath $T_i$ of $C_i$ starting on $X_1$, ending on $Z_k$, and intersecting all $X_i, Y_i, Z_i$ for $i \in [k]$.
From here on we consider the graph $H \coloneqq \bigcup_{i\in[k]} (X_i \cup Y_i \cup Z_i) \cup \bigcup_{i \in[c+1,s]} T_i$.
Our goal is to show that within $H$ we can replace each $T_i$ with a path $T_i'$ whose intersection with any of the $X_i'$, $Y_i'$, and $Z_i'$ is a single path where the $X_i',Z_i',Y_i'$ are replacements of their respective paths while keeping their endpoints.
If this is true, then $H' \coloneqq \bigcup_{i\in[k]} (X_i' \cup Y_i' \cup Z_i') \cup \bigcup_{i \in[c+1,s]} T_i'$ is a mesh and $H' \cup \bigcup_{i \in [k]} (P_i' \cup Q_i')$ is easily seen to contain a $k$-candle mesh since $s - c \geq 3k$.
\smallskip

In order to find $H'$ we start by first ''pushing'' the $F_i$ as much towards $Z_k$ as possible for all $F \in \{ X,Y,Z\}$ and $i \in [k]$.
What we mean by that is as follows.
Notice that $(H,\Lambda_H)$ inherits a vortex-free rendition $\rho_H$ in a disc $\Delta$ from $\rho$ where $\Lambda_H$ is the linear ordering obtained from first tracing along $X_1$ towards its endpoint on $\Lambda$, then iterating over the ends of the $F_i$ on $\Lambda$ until we reach the endpoint of $Z_k$.
Then we trace along $Z_k$ towards its other endpoints and now we list the remaining endpoints of the $F_i$ in reverse order compared to the order in which we visited their other endpoints on $\Lambda$.
Notice that the trace of any $F_i \notin \{ X_1,Z_k\}$ separates $X_1$ and $Z_k$ in $\Delta$.
Indeed, the trace of each such $F_i$ separates $\Delta$ into two discs, $\Delta^1_{F_i}$ which contains $X_1$ and $\Delta^2_{F_i}$ which contains $Z_k$.
We say that a subgraph $J \subseteq H$ is on the \emph{right side} of $F_i$ if it is contained in the restriction of $\rho_H$ to the disc $\Delta^2_{F_i}$.
For the case $F_i = X_1$, any subgraph of $H$ is on its right side, and for $F_i = Z_k$ no such $J$ is on its right side except for subgraphs of $Z_k$ itself.

Now proceed as follows:
Suppose there exists some path $U\subseteq H$ grounded in $\rho_H$ such that there is $F_i$ for which both endpoints of $U$ belong to $F_i$, $U$ is otherwise disjoint from \textsl{all} $F_i'$, has at least one edge not on $F_i$, and $U$ is on the right side of $F_i$.
We call $U$ a \emph{$F_i$-bumpy} and we say that $F_i$ \emph{has a bumpy} if there exists an $F_i$-bumpy. 

Let $U'$ be the unique subpath of $F_i$ joining the endpoints of $U$.
If we now replace $U'$ with $U$ in $F_i$, the disc $\Delta^1_{F_i}$ for the new $F_i$ has strictly increased.
Otherwise, however, we are still able to recover the full candle $\mathcal{X}_i$ with the changed subpath and $F_i$ remains vertex-disjoint from all other.
Hence, as long as there exists some $F_i$ that has a bumpy we may proceed with this replacement strategy.
As the size of the discs $\Delta^1_{F_i}$ strictly increases with each exchange and each bumpy introduces a fresh edge, this process must halt after at most $||H||$ steps.
Thus, from here on we may assume that no $F_i$ has a bumpy.
\smallskip

Next we deal with paths that extend to the left.
Let $i \in[c+1,s]$ and $U \subseteq T_i$.
We say that $U$ is an \emph{$T_i$-curl} at $F_j$ if $U$ is grounded, both endpoints lie on $F_j$ where $j \in[1,k]$ and $F \in \{ X,Y,Z\}$, $U$ is otherwise disjoint from $F_j$, and $U$ is contained in the restriction of $\rho_H$ to $\Delta^1_{F_j}$. 
We say that a $T_i$-curl at $F_j$ is \emph{free} if the subpath of $F_j$ joining the endpoints of $U$ -- we refer to this path as the \emph{$U$-bridge} -- does not share an internal vertex with any of the $T_{i'}$, $i' \in[c+1,s]$.

Whenever we find some $T_i$-curl $U$ at some $F_j$, we may replace $U$ on $T_i$ with the $U$-bridge of $F_j$.
This clearly maintains that each $T_i$ is an $X_1$-$Z_k$-path, as well as the pairwise disjointness of the $T_i$.
Moreover, this operation cannot introduce new bumpies.

Notice that for all $i\in[c+1,s]$, there does not exist any $T_i$-curl on $Z_k$.
Moreover, it is also easy to see that every $T_i$-curl at $Y_k$ is free.
Hence, by performing the operation above for each such curl at $Y_k$, we may assume that there is also no curl at $Y_k$.

We say that $F_i$ is \emph{right of} $F_j'$ if either $i = j$ and $F = X$, $F' \in \{ Y,Z\}$ or $F = Y$, $F' = Z$ or $i < j$.
We now iterate through the $F_j$ from right to left -- where $Z_k$ and $Y_k$ are our starting points -- and use the same replacement trick as the one used for $Y_k$ to ensure that $F_j$ has no $T_i$-curls for any $i \in [c+1,s]$.
To see that this can be done simply consider the following situation.

Suppose we have arrived at $F_j$ and already know that there is no $i\in[c+1,s]$ such that there is a $T_i$-curl at $F'_{j'}$ for any $F'_{j'} \neq F_j$ that is to the right of $F_j$.

\textbf{We claim} that every $T_i$-curl at $F_j$ must be free.
Suppose there is such a curl $U$ which is not free.
Then there is a vertex $x$ on the $U$-bridge of $F_j$ that belongs to $T_{i'}$ for some $i' \in[c+1,s]$.
We may select $x$ and a subpath $A$ of $T_{i'}$ such that $x$ is an endpoint of $A$, the other endpoint of $A$ belongs to some $F'_{j'}$, and $A$ is internally disjoint from all of the $F''_{j''}$.
We distinguish three cases.
\medskip

\textbf{Case 1:}
$A$ is contained in a $T_{i'}$-curl at $F_j$.

In this case we have that the $A$-bridge on $F_j$ is strictly shorter than the $U$-bridge on $F_j$ and thus we may start over with $A$ instead of $U$ in our case distinction.
This case cannot repeat indefinitely since in each step the length of the bridge is strictly reduced.
\smallskip

\textbf{Case 2:}
$A$ is not a $T_{i'}$-curl at $F_j$ but its other endpoint is also on $F_j$.

In this case, the only other option is that $A$ is an $F_j$-bumpy.
Since we assumed that $F_j$ has no bumpies, this is a contradiction.
\smallskip

\textbf{Case 3:}
The other endpoint of $A$ lies of $F'_{j'} \neq F_j$.

In this case, since $A$ is not contained in a $T_{i'}$-curl at $F_j$ it follows that $F'_{j'}$ must be to the right of $F_j$.
In this case, we start tracing along $T_{i'}$ starting at $x$ but avoiding $A$.
At some point $T_{i'}$ must leave $F_j$ and then meet some $F'''_{j'''}$.
Let $B$ be the subpath of $T_{i'}$ joining the point where we leave $F_j$ and the point where we meet $F'''_{j'''}$ for the first time.
There are two cases; $F'''_{j'''} \in \{ F_j,F'_{j'} \}$ since $B$ cannot intersect $A$.

\textbf{Case 3.1:}
$F'''_{j'''} = F_j$.

In this case, $B$ is either a curl or a bumpy.
If it is a curl, we discard $A$ and continue with $B$ by the same arguments as those from \textbf{Case 1}.
Otherwise $B$ is a bumpy and we have a contradiction.

\textbf{Case 3.2:}
$F'''_{j'''} = F'_{j'}$.

In this case we can see that $A$ and $B$ are joined by a path in $T_{i'} \cap F_j$, resulting in a single path $W$ with both ends on $F'_{j'}$.
It is now clear that $W$ is a $T_{i'}$-curl at $F'_{j'}$ which cannot exist by our assumption.
\medskip

It follows that every $T_i$-curl at $F_j$ is free for all $i\in[c+1,s]$.
Hence, by replacing such free curls with subpaths of $F_j$ we restructure all of the $T_i$ in a way that maintains our structure but removes all $T_i$-curl at $F_j$.
Therefore, we may assume that there do not exist $i \in[c+1,s]$ and $F_j$ such that there is a $T_i$-curl at $F_j$.
\medskip

From the absence of curls it follows that whenever one of the $T_i$ enters an $F_j$, leaves $F_j$, and returns, this must happen within $\Delta^2_{F_j}$.
Moreover, before $T_i$ returns it must intersect $F'_{j'}$ directly to the right of $F_j$.
Moreover, each time such a situation arises, $T_i$ must then follow along $F_j$ and, once it leaves, it must return to $F'_{j'}$.
This however means that there is a $T_i$-curl at $F'_{j'}$ which is a contradiction.
Hence, the only option left is, that at this point the $T_i$ and $F_j$ form a $((s-c) \times k)$-mesh.
As discussed at the beginning, this suffices to complete the proof.
\end{proof}

\subsection{Part 2: Cleaning a vortex}\label{subsec_Step2}
We are now ready to move on to the second step.
Our goal is to prove that, whenever we are given a society with a large nest in a cylindrical rendition around some vertex, we either find a large candle mesh -- which by \zcref{lemma_2outerInCandles} implies that the $\mathsf{depth}_2$ is big -- or are able to delete a small set of vertices to remove all candles from the vortex.
Moreover, we wish to further strengthen this in order to be able to apply \zcref{thm_FacialSocieties} to say that those parts of the vortex which do not allow for a cross are entirely facial while those which do contain a cross must be blank.

\paragraph{Two types of candle-free societies.}
Towards this goal, we first start by proving a companion lemma to \zcref{thm_FacialSocieties} that allows to distinguish candle-free societies into two types.
\smallskip

Given a society $(G,\Lambda)$ where $(G,R)$ is an annotated graph, we say that $(G,\Omega)$ has \emph{access to a red vertex} if there exists an $R$-$V(\Lambda)$-path in $G$.
The core observation is that, in the presence of a non-empty nest, a cross together with access to a red vertex imply the existence of a candle.

\begin{lemma}\label{lemma_RedCross}
Let $c > 0$ be the constant from \zcref{lemma_CandleLoopsAreNotDeep}.
Let $(G,\Lambda)$ be a society where $(G,R)$ is an annotated graph, and let $\rho$ be a blank rendition of $(G,\Lambda)$ in a disc $\Delta$ with a nest $\mathcal{C} = \{ C_1,\dots,C_s \}$ around the vortex $c_0$ of $\rho$ where $s \geq c+1$.
Then one of the following is true
\begin{enumerate}
    \item There exists an $R$-candle $\mathcal{X}$ in $(G,\Lambda)$ such that no $\mathcal{X}$-loop intersects $C_{c+1}$,
    \item $R$ is facial in $(G,\Lambda)$, or
    \item $(G,\Lambda)$ has no access to a red vertex.
\end{enumerate}
Moreover, there exists an algorithm that takes $(G,\Lambda)$, $R$, $\mathcal{C}$, and $\rho$ as above as input and finds one of the three outcomes in time $\mathbf{O}(|G|^5)$.
\end{lemma}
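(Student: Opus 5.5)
The plan is a case distinction driven by two algorithmic primitives: the candle search of \zcref{cor_FindShallowCandles} and the cross test of \zcref{prop_TwoPaths}.

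\textbf{Step 1: look for a candle.} Take $\gamma$ to be a rooted $\rho$-aligned curve tracing $\mathsf{bd}(\Delta)$, so that its inner graph is all of $G$. Every $R$-candle in $(G,\Lambda)$ is then caught by $\gamma$. We run the algorithm of \zcref{cor_FindShallowCandles} with this $\gamma$; this uses $s \geq c+1$. If it returns an $R$-candle, \zcref{lemma_CandleLoopsAreNotDeep} guarantees it can be chosen with no $\mathcal{X}$-loop meeting $C_{c+1}$, which is outcome (i). This costs $\mathbf{O}(|G|^5)$ time. Otherwise we know that $(G,\Lambda)$ contains no $R$-candle at all.

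\textbf{Step 2: no candle, no red access.} We test whether there is an $R$-$V(\Lambda)$-path, for instance by a single BFS from $V(\Lambda)$. If there is none, we are in outcome (iii).

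\textbf{Step 3: no candle, red access, and no cross.} Run \zcref{prop_TwoPaths} to decide whether $(G,\Lambda)$ has a cross, equivalently, by \zcref{prop_TwoPathsProper}, whether it has a vortex-free rendition in a disc. If it has such a rendition, then \zcref{thm_FacialSocieties} applies directly: absence of an $R$-candle implies $R$ is facial in $(G,\Lambda)$, giving outcome (ii). The ultimate frontier is constructed as in that proof by adding the apex $r$ and rerunning the two-paths algorithm on $(G',\Omega')$.

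\textbf{Step 4 (main obstacle): a cross together with red access forces a candle.} It remains to show this case is impossible, i.e.\ to derive a contradiction with Step 1. Let $P_1,P_2$ be a cross with ends $s_1,s_2,t_1,t_2$ in order on $\Lambda$, and let $Q$ be an $R$-$V(\Lambda)$-path ending at $y$.

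Because $\rho$ is blank, $R\subseteq V(\sigma(c_0))$, and the cross must pass through the vortex, since the part outside $c_0$ is drawn in the disc. The idea is to use the nest as a routing infrastructure.
\begin{enumerate}
\item Cut $P_1$ (or $P_2$) at its first and last vertices on $C_1$.
\item Reroute through the cylinder between $C_1$ and $C_s$ so as to obtain a path $P$ between two vertices $x_1,x_2\in V(\Lambda)$, together with a disjoint path from the red vertex to some $y_1\in V(\Lambda)$ lying strictly between $x_1$ and $x_2$ in $\Lambda$.
\end{enumerate}
Concretely, the cross gives two disjoint paths through the inner disc of $C_1$ whose four endpoints on $C_1$ interleave. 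Follow $Q$ from its red end until it first meets $P_1\cup P_2\cup C_1$. Since $P_1$ and $P_2$ separate the inner disc of $C_1$ into regions, one of the two paths, say $P_i$, can be kept intact as $P$. The other path, suitably combined with the initial segment of $Q$ and a subpath of $P_{3-i}$, reaches a side of $\Lambda$ between the ends of $P_i$. The linear order then forces one of the two resulting triples to be in candle order. A case analysis is needed on where $Q$ first hits $P_1\cup P_2$, using that the ends of the cross interleave; in each case one obtains an $R$-candle.

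Should the direct surgery fail in a degenerate case, e.g.\ if $Q$ meets both cross paths before reaching $\Lambda$, the fallback is to route via the nest and radial paths. These are obtained from the cylinder structure by Menger, using that crossing paths of the cross through every $C_i$ provide connectivity.

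The resulting candle contradicts Step 1, which completes the trichotomy. The overall running time is dominated by Step 1.
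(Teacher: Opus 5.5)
Your overall architecture matches the paper's. When there is no cross, \zcref{thm_FacialSocieties} settles the matter. When there is a candle, \zcref{cor_FindShallowCandles} gives outcome (i). The only real content is that a cross together with access to a red vertex already yields an $R$-candle.

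That last step, your Step 4, has a genuine gap. The justification you give, that $P_1$ and $P_2$ ``separate the inner disc of $C_1$ into regions'', is not available. Inside the vortex nothing is drawn, $\sigma(c_0)$ need not be planar, and the cross paths separate nothing there. You also never say what to do when the red path $Q$ reaches $C_1$, or even $\Lambda$, without touching $P_1\cup P_2$. That is the genuinely bad case: for example, $Q$ avoids the cross and ends at a vertex of $\Lambda$ lying outside both the interval spanned by $s_1,t_1$ and the one spanned by $s_2,t_2$. By contrast, the case you call degenerate, where $Q$ meets both cross paths, is harmless, since you simply stop at the first hit. The fallback ``via radial paths and Menger'' is too unspecified to be checked.

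The missing idea is to use $C_1$ itself as the connector, which needs no topology inside the vortex. Each $P_j$ must use an edge of $\sigma(c_0)$: if $P_1$ were grounded, its trace would separate $s_2$ from $t_2$ in $\Delta$ with the vortex on one side. Hence both $P_1$ and $P_2$ meet $C_1$. Since $\rho$ is blank, every red vertex lies in $\sigma(c_0)$ and is not a node, so any $R$-$V(\Lambda)$-path also meets $C_1$.

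Now start at $r$ and follow $Q$ until it first hits $V(P_1)\cup V(P_2)\cup V(C_1)$. If the first hit lies only on $C_1$, continue along $C_1$ to the first vertex of $P_1\cup P_2$. This gives an $r$-$z$ path $Q'$ internally disjoint from $P_1\cup P_2$, with $z\in V(P_j)$ for some $j$. Because the ends of the cross interleave in the cyclic order, exactly one end $u$ of $P_j$ lies between the two ends of $P_{3-j}$ in the linearisation $\Lambda$. Then $P_{3-j}$ together with $Q'$ extended by the subpath of $P_j$ from $z$ to $u$ is an $R$-candle.

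With this, Step 4 takes a few lines and your contradiction with Step 1 goes through. This is exactly the paper's argument.

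A minor point in Step 1: a curve tracing all of $\mathsf{bd}(\Delta)$ is not rooted, because it meets $V(\Lambda)$ in its interior. Use instead a curve joining the minimum and maximum of $\Lambda$ that is otherwise disjoint from $\rho$.
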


\begin{proof}
Suppose that $(G,\Lambda)$ has access to a red vertex and $R$ is not facial in $(G,\Lambda)$.
Then, by \zcref{thm_FacialSocieties}, either $(G,\Lambda)$ contains an $R$-candle, or $(G,\Lambda)$ has a cross.
Our goal is to show that in both cases, the first outcome of the lemma holds, that is:
There exists an $R$-candle $\mathcal{X}$ in $(G,\Omega)$ such that no $\mathcal{X}$-loop intersects $C_{c+1}$.
\medskip

\textbf{Case 1:}
$(G,\Lambda)$ contains an $R$-candle.
\smallskip

In this case, all we have to do is to choose $\gamma$ to be a curve whose ends are the minimum and the maximum of $\Lambda$ and  which is otherwise disjoint from $\rho$ and then apply \zcref{cor_FindShallowCandles}.
Since $\mathcal{X}$ trivially captured by $\gamma$, the second outcome of \zcref{cor_FindShallowCandles} cannot apply and thus, we find an $R$-candle $\mathcal{X}'$ such that no $\mathcal{X}'$-loop intersects $C_{c+1}$ as desired.
\medskip

\textbf{Case 2:}
$(G,\Lambda)$ has a cross.
\smallskip

Let $s_1,s_2,t_1,t_2$ be the endpoints of the two paths $P_1$ and $P_2$ that form the cross in $(G,\Lambda)$.
Notice that we can find such a cross in linear time due to \zcref{prop_TwoPaths,prop_TwoPathsProper}.
Then let $r \in R$ be any vertex such that there exists a path $Q'$ in $G$ connecting $r$ to $V(G)$.
Let us now find a path from $r$ to $P_1 \cup P_2$ as follows:
Since $\mathcal{C}$ contains at least one cycle and $\rho$ is a cylindrical rendition in a disc, each of $P_1$ and $P_2$ must have an edge in the vortex $c_0$.
Similarly, $r \in V(\sigma(c_0))$.
Hence, $P_1$, $P_2$, and $Q'$ all intersect $C_1$.
Let $Q'$ be the path found by starting at $r$, then following along $Q''$ until either, we meet a vertex of $V(P_1) \cup V(P_2)$ -- in which case we stop -- or we meet $C_1$.
In the later case we follow along $C_1$ until we meet the first vertex of $V(P_1) \cup V(P_2)$.
Without loss of generality let us assume that $Q'$ ends on $P_1$.
Let now $Q$ be the $r$-$t_1$-subpath of $P_1 \cup Q'$.
Then $\mathcal{X} = \{P_2,Q\}$ is an $R$-candle in $(G,\Omega)$.
Hence, we have reduced \textbf{Case 2} to \textbf{Case 1} and our proof is complete.
\end{proof}

A core feature of the proof of \zcref{lemma_RedCross} is that we actually do not need access to the entire nest to make it work.
At its core the main argument we use is that the three paths $P_1$ and $P_2$ of the cross as well as the path $Q''$ connecting $R$ to $V(\Lambda)$ are all connected by a subpath of $C_1$.
This means that we may use the same proof to derive a slightly stronger variant of \zcref{lemma_RedCross} which is the key tool we are after.
We state the stronger variant below but omit its almost identical but slightly more tedious proof.

\begin{lemma}\label{lemma_StrongRedCross}
Let $c > 0$ be the constant from \zcref{lemma_CandleLoopsAreNotDeep}.
Let $(G,\Lambda)$ be a society where $(G,R)$ is an annotated graph, let $\rho$ be a blank rendition of $(G,\Lambda)$ in a disc $\Delta$ with a nest $\mathcal{C} = \{ C_1,\dots,C_s \}$ around the vortex $c_0$ of $\rho$ where $s \geq c+1$, and let $\gamma$ be a proper semi $\rho$-aligned curve.
Moreover, let $(G_{\gamma},\Lambda_{\gamma})$ be the society obtained from the inner graph $G_{\gamma}$ of $\gamma$ by restricting $\Lambda$ to $G_{\gamma}$.
Then one of the following is true
\begin{enumerate}
    \item There exists an $R$-candle $\mathcal{X}$ in $(G,\Omega)$ such that no $\mathcal{X}$-loop intersects $C_{c+1}$,
    \item $R$ is facial in $(G_{\gamma},\Lambda_{\gamma})$, or
    \item $(G_{\gamma},\Lambda_{\gamma})$ has no access to a red vertex.
\end{enumerate}
Moreover, there exists an algorithm that takes $(G,\Lambda)$, $R$, $\mathcal{C}$, $\rho$, and $\gamma$ as above as input and finds one of the three outcomes in time $\mathbf{O}(|G|^5)$.
\end{lemma}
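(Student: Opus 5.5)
We follow the proof of \zcref{lemma_RedCross} almost verbatim, now applied to the society $(G_{\gamma},\Lambda_{\gamma})$. As explained after the definition of proper semi $\rho$-aligned curves, we first pass to the graph $G'$ and the rendition $\rho'$. These are obtained by deleting everything of $\sigma(c_0)$ outside the inner graph of $\gamma$ and then shrinking $c_0$ into the inner disc of $\gamma$. The nest $\mathcal{C}$ survives as a nest around the new vortex, and $\gamma$ becomes a rooted $\rho'$-aligned curve. This is the setting in which \zcref{cor_FindShallowCandles} is formulated. We may therefore treat $\gamma$ as rooted throughout, and every candle we produce will be caught by $\gamma$. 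Since the inner graph of $\gamma$ is a subgraph of $G$, such a candle is also an $R$-candle of $(G,\Omega)$. Loops are measured against the same nest $\mathcal{C}$ in either picture, so the loop condition transfers unchanged.

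Assume that $(G_{\gamma},\Lambda_{\gamma})$ has access to a red vertex and that $R$ is not facial in $(G_{\gamma},\Lambda_{\gamma})$. There are two cases, depending on whether this society has a vortex-free rendition in a disc. If it does, then \zcref{thm_FacialSocieties} gives an $R$-candle in $(G_{\gamma},\Lambda_{\gamma})$. Such a candle is caught by $\gamma$, so \zcref{cor_FindShallowCandles} applied to $\gamma$ cannot return its second outcome. It therefore returns a candle $\mathcal{X}$ none of whose loops meets $C_{c+1}$, which is the first outcome. If no such rendition exists, then \zcref{prop_TwoPathsProper} gives a cross $P_1,P_2$ in $(G_{\gamma},\Lambda_{\gamma})$, and the algorithm of \zcref{prop_TwoPaths} finds it in linear time. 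We then build a candle from this cross and reduce to the first case.

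\textbf{Main obstacle.} The step needing care is the analogue of ``$P_1$, $P_2$ and $Q''$ all meet $C_1$''. Inside the inner disc of $\gamma$ the cycle $C_1$ may no longer be whole; only arcs of it, with both ends on $\gamma$, may remain. The argument goes as follows.

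First, we show that each $P_i$ and the red path $Q''$ must enter the vortex. All red vertices lie in $\sigma(c_0)$ because $\rho$ is blank, so $Q''$ reaches the vortex. The part of the inner graph outside $c_0$ has a vortex-free rendition, so a cross cannot live there entirely; hence at least one $P_i$ uses an edge of $\sigma(c_0)$.

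Next, we connect these paths. The proper semi $\rho$-aligned curve meets $N(c_0)$ only in the inner segment $S_1$, and $\gamma$ crosses the nest in an essentially radial fashion. Therefore the part of $C_1$ lying in the inner graph contains a single arc running alongside $S_1$, and every path from $V(\Lambda_{\gamma})$ into $\sigma(c_0)$ inside the inner graph must cross this arc. We must also handle the possibility that only one of $P_1,P_2$ enters the vortex. In that situation we route $Q''$ first to that path, or along the arc of $C_1$ until it hits $P_1\cup P_2$, exactly as in \zcref{lemma_RedCross}. If the first path met is $P_j$, we form the candle stick $Q$ as the $r$-$t_j$ subpath of $P_j\cup Q'$. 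Then $\{P_{3-j},Q\}$ is an $R$-candle caught by $\gamma$, and we finish with \zcref{cor_FindShallowCandles}.

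The running time is dominated by the $\mathbf{O}(|G|^5)$ of \zcref{cor_FindShallowCandles}. All other steps, namely the two-paths call, the facial test through \zcref{thm_FacialSocieties}, and the tracing of paths, run in polynomial time of lower order.
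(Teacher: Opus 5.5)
Your proposal follows the route the paper has in mind: the proof of \zcref{lemma_RedCross} transplanted to the inner graph of $\gamma$, after passing to $G'$ and $\rho'$ as in the remark after the definition of proper semi $\rho$-aligned curves. The case split, the use of \zcref{thm_FacialSocieties} and \zcref{cor_FindShallowCandles}, and the time bound are fine. The one step that is not justified as written is exactly the crux you isolate. Nothing in the definition of a proper semi $\rho$-aligned curve makes $\gamma$ cross the nest ``essentially radially''. Its two aligned pieces may wind through the nest, so $C_1\cap G_\gamma$ can consist of many arcs, and more than one of them may separate the vortex from the base. An appeal to radiality therefore does not show that $Q''$ and the cross meet a common arc.

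The claim you need is nevertheless true, and a short separation argument proves it. Let $U$ be the closed disc bounded by three pieces: the base of $\gamma$, the two $\rho'$-aligned pieces of $\gamma$, and the arc of the boundary of the shrunken vortex that carries $S_1$. Everything of $G_\gamma$ outside the vortex is drawn in $U$. The trace of $C_1$ meets $U$ in finitely many cross-cuts, and each of them separates $U$. A point of $U$ next to $S_1$ lies inside the disc bounded by the trace of $C_1$, while the base lies outside it. Hence any curve in $U$ joining them crosses the trace an odd number of times, and some single cross-cut $\alpha$ separates $S_1$ from the base in $U$. Let $A$ be the subpath of $C_1$ whose trace is $\alpha$. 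Then $A\subseteq G_\gamma$, and every path of $G_\gamma$ from a vertex of the vortex to $V(\Lambda_\gamma)$ contains a vertex of $A$. Run your routing with this $A$: stop $Q''$ at its first vertex in $P_1\cup P_2$ or in $A$, and in the latter case walk along $A$ until you reach $P_1\cup P_2$. This produces the candle. Two small points remain. First, both $P_1$ and $P_2$ in fact use vortex edges, because the trace of a grounded $P_i$ would separate the ends of $P_{3-i}$; so your ``only one enters'' case is vacuous. Second, the end of $P_j$ used for the candle stick must be the one lying between the ends of $P_{3-j}$ in $\Lambda_\gamma$, and this need not be $t_j$.
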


Notice that the running time of $\mathbf{O}(|G|^5)$ in \zcref{lemma_StrongRedCross} is really due to our greedy way to find an $R$-candle by guessing all four endpoints of the two paths.
The running time can actually be sped up by a lot simply through using \zcref{prop_TwoPathsProper} and the insights gained through the proof of \zcref{lemma_RedCross}.
Given a society with a nest around a vortex, we can check for the existence of a cross in linear time.
If there is a cross and the society has access to red, then we easily create an $R$-candle out of the cross.
Otherwise either the society does not have access to red, or it has a vortex-free rendition in a disc.
In the later case, the proof of \zcref{thm_FacialSocieties} shows that in order to check for the existence of an $R$-candle we can use a simple auxiliary graph and another call to \zcref{prop_TwoPathsProper} to either find the candle of a witness that the society is facial.

\begin{corollary}\label{cor_checkForCandle}
Let $(G,\Lambda)$ be a society where $(G,R)$ is an annotated graph.
Suppose there exists a cylindrical rendition $\rho$ of $(G,\Lambda)$ in a disc with a non-empty nest around the vortex $c_0$.
Then we can find in linear time one of the following three outcomes:
\begin{enumerate}
    \item an $R$-candle in $(G,\Lambda)$,
    \item a vortex-free rendition $\rho'$ for $(G,\Lambda)$ together with an ultimate frontier $\gamma$ witnessing that $R$ is facial in $(G,\Lambda)$, or
    \item the correct conclusion that $(G,\Lambda)$ doe not have access to a red vertex.
\end{enumerate}
\end{corollary}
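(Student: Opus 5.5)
The algorithm has four stages: a reachability test, a cross test, a candle test on an auxiliary society, and a conversion of any cross into a candle. It avoids the $|G|^4$ guessing of \zcref{cor_FindShallowCandles} by using a constant number of calls to the linear-time algorithm of \zcref{prop_TwoPaths}, in the version that returns either a cross or a vortex-free rendition in a disc (see \zcref{prop_TwoPathsProper}).

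\textbf{Reachability.} Add a super-source adjacent to all of $V(\Lambda)$ and run a single BFS to decide whether some vertex of $R$ is reachable from $V(\Lambda)$. If none is, we output the third outcome and stop. Otherwise we record an $R$-$V(\Lambda)$-path $Q''$.

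\textbf{Cross test.} Run the two-paths algorithm on the cyclic society $(G,\Omega)$ underlying $\Lambda$.

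If it returns a cross $P_1,P_2$, we repeat the argument of Case~2 in the proof of \zcref{lemma_RedCross}. Because the nest is non-empty and $\rho$ is cylindrical, both $P_1$ and $P_2$ meet $C_1$, and so does $Q''$, since $R\subseteq V(\sigma(c_0))$ by blankness. One caveat: the corollary as stated does not assume $\rho$ is blank. I would either read blankness in from context, or note that without it we can still follow $Q''$ and stop at the first vertex of $P_1\cup P_2$ or of $C_1$.

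Starting at the red endpoint, we walk along $Q''$, then along $C_1$ if needed, until we first hit $P_1\cup P_2$, say on $P_1$. Combining this walk with the appropriate subpath of $P_1$ gives a candle stick $Q$, and $\{P_2,Q\}$ is an $R$-candle. To place the roots in the required order, choose the end of $P_1$ that lies between the ends of $P_2$ in the linearisation; the crossing guarantees such an end exists. The walk takes linear time.

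\textbf{Candle test.} If instead the algorithm returns a vortex-free rendition, we build the auxiliary society $(G',\Omega')$ from the proof of \zcref{thm_FacialSocieties}. That is, add a new vertex $r$ with $N(r)=R$, and insert $r$ into the cyclic order between the maximum and the minimum of $\Lambda$. This increases size linearly. Run the two-paths algorithm again.

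By \zcref{claim_flatCandle}, a cross in $(G',\Omega')$ converts directly into an $R$-candle: remove $r$ from the path containing it, and that path becomes the candle stick. Otherwise we obtain a vortex-free rendition $\rho'$ of $(G',\Omega')$. Following the second half of the proof of \zcref{thm_FacialSocieties}, we remove $r$, shrink the residual cells, and trace the curve $\psi$ through the nodes of the residual cells in the order $\lambda$. This yields a vortex-free rendition of $(G,\Lambda)$ together with an ultimate frontier, which is the second outcome.

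\textbf{Main obstacle.} The hard part is making this last construction effective in linear time. We need to output the rendition and the frontier $\psi$ explicitly, and to compute the order $\lambda$ of the nodes of the residual cells around $r$. My plan is to read $\lambda$ off the cyclic rotation at $r$ in $\rho'$. The rendition returned by \zcref{prop_TwoPaths} can be taken to come with its cells (as bridges) and the node positions on their boundaries, so each residual cell contributes at most two nodes, and listing them in rotation order costs time linear in the total size of these cells. Everything else consists of constant many linear-time calls plus simple path tracing.
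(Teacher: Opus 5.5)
Your proposal follows the paper's own argument and is correct provided $\rho$ is blank, which the paper implicitly assumes via \zcref{lemma_RedCross}: one two-paths call detects a cross, a cross plus an $R$-$V(\Lambda)$-path is turned into a candle as in Case~2 of \zcref{lemma_RedCross}, and otherwise the auxiliary vertex $r$ with $N(r)=R$ and a second two-paths call are used as in the proof of \zcref{thm_FacialSocieties}; your choice of the end of $P_1$ lying between the ends of $P_2$ in $\Lambda$ fixes a detail the paper's fixed choice of $t_1$ gets wrong for some linearisations. Drop your fallback (b), though, because without blankness the statement is false: take a cross $P_1,P_2$ through $c_0$ with ends $a<b<c<d$ in $\Lambda$, plus a red vertex $r$ whose only neighbour is $y=\min\Lambda$, itself of degree one; then there is a cross and access to red, yet every candle stick must be $ry$ and no root can precede $y$.
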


\paragraph{Snuggling into a nest.}
So far we have been almost entirely focussed on handling candles in societies.
However, in order to progress, we also require some strengthening of the structure given by a nest in a cylindrical rendition.
This is, because later on we will work with societies whose vortex has bounded depth, but we also require the society itself to be of bounded depth.
By pushing a nest of order $s$ as close as possible towards the vortex of depth $d$, it is possible to restrict to the society defined by the inner graph of the outer-most cycle of the nest and to observe that the depth of this society is now at most $\mathbf{O}( s + d)$.
Towards this observation we require some preparation.
\smallskip

Let $(G,\Lambda)$ be a society with a cylindrical rendition $\rho$ in a disc $\Delta$ together with a nest $\mathcal{C}$ around the vortex $c_0$.
We say that $\mathcal{C}$ is \emph{snug} if there does not exist a cycle $C \in \mathcal{C}$ together with a grounded path $P \subseteq G$ of length at least one such that $P$ has both endpoints on $C$, is otherwise disjoint from $\mathcal{C}$, and $P$ is contained in the inner graph of $C$.
In particular, $P$ cannot share edges with any cycle in $\mathcal{C}$.
\smallskip

The following is a variation of the ``cozy nest lemma'' from the work of Gorsky, Seweryn, and Wiederrecht \cite{GorskySW2025Polynomial}.
The main difference is that in their work, Gorsky, Seweryn, and Wiederrecht want the nest to be as far away from the vortex as possible instead of it being as close to the vortex as possible.
Since any disc becomes an annulus when the interior of the unique vortex is removed, this is simply a matter of perspective which does not add any further complications.
Thus, we omit the nearly identical proof here and refer the interested reader to the original paper.

\begin{proposition}[Gorsky, Seweryn, and Wiederrecht \cite{GorskySW2025Polynomial}]\label{prop:SnugNest}
Let $s\geq 1$ be an integer, $(G,\Lambda)$ be a society, $\rho$ be a cylindrical rendition of $(G,\Lambda)$ in a disc with nest $\mathcal{C} = \{ C_1,\dots,C_s\}$.
Then there exists a snug nest $\mathcal{C}' = \{ C_1',\dots,C_s'\}$ of order $s$ in $(G,\Lambda)$ such that the inner graph of $C_s$ contains $\bigcup \mathcal{C}'$.

Moreover, there exists an algorithm that finds $\mathcal{C}$ in time $\mathbf{O}(s||G||^2)$.
\end{proposition}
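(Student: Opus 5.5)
The approach is an extremal choice followed by a local exchange argument, in the spirit of the cozy nest lemma. Among all nests $\mathcal{C}'=\{C_1',\dots,C_s'\}$ of order $s$ around $c_0$ whose union lies in the inner graph of $C_s$, pick one minimising the potential
\[
\Phi(\mathcal{C}') \coloneqq \sum_{i\in[s]} \bigl|E(\text{inner graph of } C_i')\bigr|,
\]
that is, the total number of edges drawn in the discs $\Delta_i'$ bounded by the traces of the $C_i'$. Such a family is nonempty, since $\mathcal{C}$ itself qualifies, and $\Phi$ is a non-negative integer. So a minimiser exists, and we claim it is snug.

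Suppose instead there is a cycle $C_i'$ and a grounded path $P$ of length at least one. Assume $P$ has both ends $u,v$ on $C_i'$, is otherwise disjoint from $\bigcup\mathcal{C}'$, and lies in the inner graph of $C_i'$. Since $P$ avoids all $C_j'$ with $j<i$ and $\mathcal{C}'$ is a nest, $P$ lies in the annulus between $C_{i-1}'$ (or the vortex, if $i=1$) and $C_i'$. Because $P$ is grounded, its trace is a $\rho$-aligned arc from $u$ to $v$ inside $\Delta_i'$. It therefore splits $\Delta_i'$ into two discs, and exactly one of them contains $\Delta_{i-1}'$ (respectively, the vortex $c_0$). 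Let $A$ be the $u$--$v$ subpath of $C_i'$ bounding the other disc, and set $C_i''\coloneqq (C_i'-\mathring{A})\cup P$. This is a grounded cycle, disjoint from the other $C_j'$, whose disc $\Delta_i''$ still contains $\Delta_{i-1}'$ and the vortex and lies inside $\Delta_i'$. So replacing $C_i'$ by $C_i''$ gives a nest inside the inner graph of $C_s$. Its disc loses at least one edge: either an edge of $A$, or, if the arc of $A$ consists only of nodes, an edge inside the cut-off disc. Hence $\Phi$ drops strictly, contradicting minimality.

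The main obstacle is this final strictness step, together with the degenerate cases. These include $u=v$ being impossible (since $P$ is a path of length at least one with distinct ends), and $P$ being a single edge parallel to part of $C_i'$ inside one cell. For the single-edge case I would argue that if $E(A)$ were all that is removed, then $A$ itself has an edge not in $P$, because $P$ shares no edge with $\mathcal{C}'$. So the inner graph of $C_i''$ omits that edge, using condition R2 to keep cell contents edge-disjoint.

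For the algorithm, I would repeatedly search for such a path $P$. For each $i$, delete $\bigcup\mathcal{C}'$ except $C_i'$, restrict to the annulus between $C_{i-1}'$ and $C_i'$, and look for a component or path with two attachments on $C_i'$ by BFS in $O(\|G\|)$ time. Each found $P$ triggers a swap. Since every swap decreases the inner graph of some $C_i'$ by at least one edge, and each such graph has at most $\|G\|$ edges, there are at most $s\|G\|$ swaps. This gives time $O(s\|G\|^2)$.
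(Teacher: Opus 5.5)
Your overall plan is the standard one: take an extremal nest inside the inner graph of $C_s$, reroute $C_i'$ along a violating path $P$, and bound the number of reroutings by a potential. It is what the cozy-nest argument of Gorsky, Seweryn, and Wiederrecht does (which the paper cites, mirrored towards the vortex). The gap is exactly the step you flag as the main obstacle: $\Phi$ need not drop.

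Take a non-vortex cell $c\subseteq\Delta_i'$, let $A$ be the maximal subpath of $C_i'$ in $\sigma(c)$ with ends $n_1,n_2\in N(c)$, and let $P\subseteq\sigma(c)$ be a second $n_1$--$n_2$ path internally disjoint from $\bigcup\mathcal{C}'$. This $P$ is grounded, has length at least one, and lies in the inner graph of $C_i'$, so it violates snugness as defined. However, the trace of $C_i''$ uses the same arc of $\mathsf{bd}(c)$ as the trace of $C_i'$, so the two traces coincide. Consequently:
\begin{itemize}
\item the trace of $P$ does not split $\Delta_i'$, and $\Delta_i''=\Delta_i'$;
\item the crops agree, and the inner graph (which contains all of $\sigma(c)$, in particular $E(A)$) is unchanged;
\item your appeal to R2 does not help, since $A$ and $P$ lie in the same cell.
\end{itemize}
So $\Phi$ stays the same, $A$ becomes a violating path for $C_i''$, and your procedure can swap back and forth forever. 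No reroute of this type can remove such a $P$. For instance, let $N(c)=\{n_1,n_2,n_3\}$ with $n_3\in N(c_0)$ isolated in $\sigma(c)$, and suppose every cycle around $c_0$ must use $c$. Then $c$ is forced onto the inner side, and no nest is snug in the literal sense.

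The missing ingredient is therefore a convention, not a cleverer swap. Paths contained in a single cell $c$ with $E(\sigma(c))\cap E(C_i')\neq\emptyset$ must be disregarded in the definition of snug; the paper makes exactly this exclusion explicit in the second condition of drained wells. Once such paths are excluded, the trace of $C_i''$ genuinely differs from that of $C_i'$. The cells carrying edges of $A$ then lie either outside $\Delta_i'$ or in the cut-off region, so some edge of $A$ leaves the inner graph. With that, your extremal argument and your bound of $s\|G\|$ swaps go through.

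Two smaller points:
\begin{itemize}
\item Write $C_i''=(C_i'-E(A)-\mathring{A})\cup P$. For a single-edge $A$, deleting $\mathring{A}$ removes nothing, so your formula does not give a cycle.
\item For the running time, say that the annuli between consecutive cycles are edge-disjoint, so one full search over all $i$ costs $\mathbf{O}(\|G\|)$ per swap. Charging $\mathbf{O}(\|G\|)$ per index, as written, only yields $\mathbf{O}(s^2\|G\|^2)$.
\end{itemize}
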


With this, we are now able to proceed to showing that the society defined by the inner graph of the outer-most cycle of a snug nest has an upper bound on its depth depending only on the depth of the vortex and the size of the nest.

\begin{lemma}\label{lemma_SnugDepthBounded}
Let $s\geq 1$ be an integer, $(G,\Lambda)$ be a society, $\rho$ be a cylindrical rendition of $(G,\Lambda)$ in a disc with snug nest $\mathcal{C} = \{ C_1,\dots,C_s\}$ around the vortex $c_0$ of depth at most $d$.

Then, if $(G',\Lambda')$ is a linear society defined by the inner graph of $C_s$ where $N \coloneqq V(\Lambda') = N(\rho) \cap V(C_s)$ and $\Lambda'$ is a linearisation of a cyclic order of $N$ obtained by tracing along $C_s$, the depth of $(G',\Lambda')$ is at most $(2s + d + 2)$. 
\end{lemma}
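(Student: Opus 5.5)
Take a transaction $\mathcal{P}$ of maximum size in $(G',\Lambda')$, with end segments $X$ and $Y$ of $\Lambda'$. Suppose $|\mathcal{P}| \geq 2s+d+3$; the goal is a contradiction. The approach is to classify the paths of $\mathcal{P}$ by how they use the vortex and by how they cross the cycles of the nest. Since $V(\Lambda') \subseteq V(C_s)$, every path $P \in \mathcal{P}$ lies in the inner graph of $C_s$ and has both endpoints on $C_s$. I split $\mathcal{P}$ into two classes: the paths that use an edge of $\sigma(c_0)$, and the paths that are grounded and avoid the vortex.

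\textbf{Step 1 (paths through the vortex).} Let $P$ use an edge of $\sigma(c_0)$. Its maximal subpaths inside $\sigma(c_0)$ have endpoints in $N(c_0)$. The first of these is entered from $X$ and the last leaves towards $Y$, through the annulus between $C_s$ and $\mathsf{bd}(c_0)$. I plan to use the planarity of the annulus to pick, for each such $P$, a single subpath $P'$ of $P$ inside $\sigma(c_0)$ such that these subpaths form a transaction in the vortex society. Its end segments would be the "shadows" of $X$ and $Y$ on $\Omega_{c_0}$. The issue is that a path may enter and leave the vortex several times. In that case I take the subpath of $P$ between its first and last vortex node and argue by planarity that the shadows of $X$ and $Y$ separate. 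If the shadows overlap, I lose a constant number of paths (at most $2$) at the two places where they meet. This gives at most $d+2$ paths of $\mathcal{P}$ through the vortex.

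\textbf{Step 2 (grounded paths avoiding the vortex).} Let $P$ be such a path. Its trace is a curve in the annulus with both ends on $C_s$, and it does not go around $c_0$. If $P$ is disjoint from $C_{s-1}$, then $P$ contains a subpath with both ends on $C_s$, internally disjoint from $\mathcal{C}$, lying in the inner graph of $C_s$. This contradicts snugness, unless $P$ has length zero or is a subpath of $C_s$. So every such $P$ meets $C_{s-1}$. Repeating the argument inward, $P$ must in fact meet every $C_i$, otherwise the innermost cycle $C_i$ it reaches gives a violating path in the inner graph of $C_i$. Now I use the disjointness of the paths and planarity. Since these paths do not go around the vortex, their traces are nested "arches" over either $X$ or $Y$ (cut open the annulus along a curve). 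Consider the outermost arch. Each arch nested inside it also reaches $C_1$, so each consumes two crossings of every $C_i$ inside the region under the outermost arch. Combining this with the snugness argument inside that region should bound the number of such arches per side by $s$. This gives at most $2s$ grounded paths in total.

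\textbf{Main obstacle.} Adding the bounds gives $|\mathcal{P}| \leq 2s+d+2$, the desired contradiction. I expect Step 2 to be the hardest part: making rigorous why nested arches avoiding the vortex are limited to $s$ per side using snugness alone. My first attempt would be an induction on $s$. Remove $C_s$ and view the remaining nest as snug in the inner graph of $C_{s-1}$. Each arch over $X$ yields at most one shorter arch relative to $C_{s-1}$, except the outermost one, which is used up by $C_s$.

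The bookkeeping of the $+2$ term, coming from the loss at the boundary of the linearisation, I would handle last by a direct count.
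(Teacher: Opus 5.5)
Your decomposition is the same as the paper's. The paths using an edge of $\sigma(c_0)$ contribute at most $d+2$, and the grounded paths form two nested families, each of size at most $s$. (The paper separates the two families with a curve $\gamma$ through a segment $U$ of the vortex boundary and counts the paths meeting each of its two halves $\gamma_1,\gamma_2$.)

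Step~1 is right in spirit, with one correction: the subpath of $P$ between its first and last vortex node need not lie in $\sigma(c_0)$. The paper instead takes the two extreme vortex paths $L_1,L_2$. Their initial segments, together with $U$, bound a disc $\Delta'$ that every other vortex path can only leave through $\sigma(c_0)$. This yields a transaction of order $|\mathcal{L}'|-2\le d$ in the vortex society. These two extreme paths are also the source of the $+2$; it has nothing to do with the linearisation.

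The genuine gap is in Step~2. Your claim that a grounded path ``must in fact meet every $C_i$'' is false. Snugness only forbids a grounded subpath with both ends on some $C_j$ that runs strictly inside $C_j$ while avoiding the other cycles. A path that touches $C_{s-1}$ in a single vertex and turns back violates nothing. The counting that follows (``each arch consumes two crossings of every $C_i$'') therefore rests on a false premise. Even as a count it gives no bound: in a nest that is not snug, arbitrarily many nested arches can each cross every cycle twice, so the bound has to come from snugness in a different way.

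The missing idea is a strictly monotone depth.
\begin{itemize}
\item For a grounded path $P$, let $j(P)$ be the least index with $P\cap C_{j(P)}\neq\emptyset$. By snugness, $P$ never enters the open disc bounded by the trace of $C_{j(P)}$.
\item Let $Q$ be a path of the same family with $\Delta_P\subsetneq\Delta_Q$. Then $Q$ must enter that open disc. Otherwise a curve from a vertex of $P\cap C_{j(P)}$ into the interior of $C_{j(P)}$ and on to $c_0$ would avoid the trace of $Q$, contradicting that $Q$ separates $P$ from $c_0$.
\item By snugness, $Q$ then meets $C_{j(P)-1}$, so $j(Q)<j(P)$.
\end{itemize}
Hence $j$ is injective on each nested chain, which gives at most $s$ paths per family.

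Your closing induction sketch points in this direction. What it actually needs is exactly this statement, that two nested arches cannot stop at the same cycle, rather than a bound on the number of sub-arches per arch.
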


\begin{proof}
Let $\Delta$ denote the disc bounded by the trace of $C_s$ and $\rho'$ denote the restriction of $\rho$ to $\Delta$.

Suppose towards a contradiction that there exists a segment $S$ of $\Lambda'$ such that there is a linkage $\mathcal{L}$ of order $2s + d + 3$ from $S$ to $V(\Lambda') \setminus S$.
We may assume that $S$ contains the minimum of $\Lambda'$ since, for the sake of understanding the depth, we should consider $(G,\Omega')$ as a society where $\Omega'$ is the cyclic order obtained from $\Lambda'$ by making the minimum of $\Lambda'$ be the successor of the maximum of $\Lambda'$.

Now let $S'$ be a maximal subsegment of $S$ such that, if $\mathcal{L}' \subseteq \mathcal{L}$ is the set of all paths in $\mathcal{L}$ with one end in $S'$, every path in $\mathcal{L}'$ has an edge in $\sigma(c_0)$.
Nice that, since $\rho'$ is a cylindrical rendition, this means that no path in $\mathcal{L} \setminus \mathcal{L}'$ has an edge in $\sigma(c_0)$.
we order the paths in $\mathcal{L}$ according to the appearance of their endpoints in $S$ with respect to $\Lambda'$ -- this is why we prefer a linear order.
Let $L_1 \in \mathcal{L}'$ be the smallest path in $\mathcal{L}'$ and $L_2$ be the largest w.\@r.\@t.\@ $\Lambda'$.
Let further, for each $L \in \mathcal{L}'$ denote by $L'$ the shortest $S'$-$N(c_0)$-subpath of $L$.
Then there exists a segment $U$ of the boundary of $c_0$ such that $U$ together with the traces of $L_1'$ and $L_2'$ and the subsegment of $S'$ between their endpoints bounds a disc $\Delta'$ that does not contain $c_0$.
We may now observe that every path $L \in \mathcal{L}' \setminus \{ L_1,l_2\}$ must contain a subpath that starts on $N(U)$, ends on $N(c_0) \setminus U$, and is entirely contained within $\sigma(c_0)$.
This is because each such $L$ goes from $S$ to $V(\Lambda')\setminus S$ and in order to reach there it must somehow leave $\Delta'$.
The only way to do so, however, is by passing through $c_0$.
Since the depth of $c_0$ is at most $d$, we now have that $|\mathcal{L}'| \leq d + 2$.

This leaves us with $\mathcal{L}'' \coloneqq \mathcal{L} \setminus \mathcal{L}'$ which now has at least $2s_1$ paths, all of which are edge-disjoint from $\sigma(c_0)$ and thus grounded in $\rho'$.

Let $\gamma$ be a rooted $\rho'$-aligned curve whose endpoints are the two extrema of $S$ and such that $U \subseteq V(\gamma)$ is the unique intersection of $\gamma$ with $N(c_0)$.
It follows that every path in $\mathcal{L}''$ must intersect $\gamma$.
Indeed, $\gamma$ has two disjoint subcurves $\gamma_1$ and $\gamma_2$, each of them joining one extremum of $S$ with an extremum of $U$ and being otherwise disjoint from $U$.

We claim that for each $i\in[2]$, at most $s$ paths of $\mathcal{L}'$ can intersect $\gamma_i$.
To see this, note that there cannot be a path $L \in \mathcal{L}$ such that there exists $j \in [s]$ and $L$ has a grounded subpath $Q$ with both endpoints on $C_j$ such that $Q$ is contained in the inner graph of $C_j$ but disjoint from all other cycles in $\mathcal{C}$.
This is because such a path would contradict our assumption that $\mathcal{C}$ is snug.

Let $\mathcal{L}_1$ be the collection of all paths from $\mathcal{L}$ that intersect $\gamma_1$.
The argument for $\gamma_2$ is analogous and thus we omit it here.
There exists a natural order for the paths in $\mathcal{L}_1$ as follows.
Each path $P \in \mathcal{L}_1$ is grounded and its trace separates $\Delta$ into two discs, precisely one of them contains $c_0$, let $\Delta_P$ be the disc that doesn't contain $c_0$.
Moreover, if $P \neq P'$, then either $\Delta_P \subseteq \Delta_{P'}$ or $\Delta_{P'} \subseteq \Delta_P$.
Thus, we may index $\mathcal{L}_1 = \{ P_1,\dots,P_{\ell}\}$ such that $\Delta_{P_i} \subseteq \Delta_{P_{i+1}}$ for all $i \in [\ell-1]$.

By our observation on the properties of snug nests above, we now know that for each $i\in[\ell]$ there exists $j_i \in [s]$ such that $P_i$ intersects $C_{j_i}$ but no grounded $V(C_{j_i})$-subpath of $P_i$ is contained in the inner graph of $C_{j_i}$.
This also implies that if $i < i' \in[\ell]$, then $j_{i} < j_{i'}$ since $\Delta_{P_i} \subsetneq \Delta_{P_{i'}}$.
Hence, $|\mathcal{L}_1| \leq s$ as claimed.
\smallskip

This, however, means that |$\mathcal{L}| \leq 2s + d + 2$ contradicting our assumption and completing the proof.
\end{proof}

\paragraph{Cleaning a vortex.}
We are finally ready to actually clean a vortex of bounded depth.
That is, we wish to show that for any vortex of bounded depth surrounded by a big enough nest in a blank rendition, we can either find a $k$-candle mesh or delete a small set of vertices such that the remaining society is $k'$-shallow for some $k' \leq k$.
That is, our next goal is to prove the following theorem.

\begin{theorem}\label{thm_CleanVortex}
There exist a functions $f_{\ref{thm_CleanVortex}}\colon \mathbb{N} \to \mathbb{N}$ and $g_{\ref{thm_CleanVortex}} \colon \mathbb{N}^2 \to \mathbb{N}$ such that for all non-negative integers $s,d,k$ and every society $(G,\Lambda)$ where $(G,R)$ is an annotated graph, if there exists a blank cylindrical rendition $\rho$ of $(G,\Lambda)$ in a disc with a nest $\mathcal{C} = \{ C_1,\dots,C_s\}$ around the vortex $c_0$ where $s \geq f_{\ref{thm_CleanVortex}}(k)$, and $c_0$ has depth at most $d$, one of the following is true:
\begin{enumerate}
    \item $(G,R)$ contains a $k$-candle mesh, or
    \item there is a set $A \subseteq V(G)$ of size at most $g_{\ref{thm_CleanVortex}}(d,k)$ such that $(G-A,\Lambda)$ has a $k$-shallow cylindrical rendition in a disc of depth at most $2f_{\ref{thm_CleanVortex}}(k) +d +2$.
\end{enumerate}
Moreover, there exists an algorithm that takes as input a society $(G,\Lambda)$, a set $R \subseteq V(G)$, a blank rendition $\rho$, and a nest $\mathcal{C}$ as above and finds one of the three outcomes in time $\mathbf{O}(|G|^6)$.
And it holds that $f_{\ref{thm_CleanVortex}}(k) \in \poly(k)$, and $g_{\ref{thm_CleanVortex}}(d,s,k) \in \poly(d + k)$.
\end{theorem}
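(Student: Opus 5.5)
We first replace $\mathcal{C}$ by a snug nest via \zcref{prop:SnugNest} and restrict to the inner graph of its outer cycle. By \zcref{lemma_SnugDepthBounded} the resulting society has depth at most $2s+d+2$, which gives the depth bound in the second outcome once we set $s=f_{\ref{thm_CleanVortex}}(k)$. From now on we only use the innermost $c+3k+1$ cycles, so we choose $f_{\ref{thm_CleanVortex}}(k)\coloneqq c+3k+1+\mathbf{O}(k)$, where $c$ is the constant from \zcref{lemma_CandleLoopsAreNotDeep}.

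The core is a greedy left-to-right sweep along $\Lambda$, modelled on the vortex-killing argument of Thilikos and Wiederrecht \cite{ThilikosW2024Killing}. We maintain a current prefix segment and grow a proper semi $\rho$-aligned curve $\gamma$ whose inner graph covers that prefix. At each step we look for the shortest extension for which \zcref{lemma_StrongRedCross}, or \zcref{cor_FindShallowCandles} applied to the inner graph of $\gamma$, returns an $R$-candle $\mathcal{X}_i$ with no $\mathcal{X}_i$-loop reaching $C_{c+1}$. When we find one, we record it and restart the sweep just after its last root. If we collect $k$ such candles, they have consecutive root segments and are caught by internally disjoint proper semi-aligned curves. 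Then \zcref{lemma_ForcingCandlesIntoAMesh} produces a $k$-candle mesh, which is the first outcome.

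Otherwise the sweep stops after fewer than $k$ candles, leaving at most $k$ segments $S_1,\dots,S_{k}$. Each of these segments, taken after removing its boundary, contains no candle. The remaining task is to make these segments genuinely separate pieces of the vortex and to classify each one. The crops $G_{S_i}$ may still be connected through the vortex. Because the vortex has depth at most $d$ and each cut point of the sweep is a single position on $\Lambda_{c_0}$, Menger's theorem lets us separate consecutive crops by deleting at most $\mathbf{O}(d)$ vertices per cut point, for $\mathbf{O}(kd)$ in total. We also add to $A$ the root vertices of the recorded candles, and, to keep the curves proper, the vertices on the $C_1$-arcs at the cut positions. This adds a further $\mathbf{O}(k)$ vertices and yields $g_{\ref{thm_CleanVortex}}(d,k)\in\poly(d+k)$. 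Within each candle-free segment we then apply \zcref{lemma_StrongRedCross} or \zcref{cor_checkForCandle}. If it returns that $R$ is facial, the segment is a facial segment. If it returns that there is no access to red, every red vertex in that crop is disconnected from $\Lambda$, so deleting those components (or treating them as irrelevant) makes the segment deep. Together these give a $k$-shallow rendition.

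The main obstacle is the greedy step together with the separation argument. We must show that minimality of each extension ensures the deletions needed to cut the vortex between consecutive segments cost only $\poly(d)$, and that after these deletions no new candle straddles two segments. Our first attempt would be a bounded-depth Menger cut at each cut point, combined with the fact that a candle rooted across a cut would contradict the minimality of the preceding extension. The running time is dominated by $\mathbf{O}(|G|)$ calls to the $\mathbf{O}(|G|^5)$ routine of \zcref{lemma_StrongRedCross}, which gives $\mathbf{O}(|G|^6)$.
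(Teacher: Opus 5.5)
Your skeleton matches the paper: a snug nest with \zcref{lemma_SnugDepthBounded} for the depth bound, a greedy left-to-right segmentation, \zcref{lemma_ForcingCandlesIntoAMesh} when $k$ candles are found, and \zcref{lemma_RedCross} to classify segments otherwise. The success branch, however, has a genuine gap.

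You assert that the $k$ candles recorded by your sweep are pairwise disjoint and caught by proper semi $\rho$-aligned curves with disjoint inner graphs. Nothing in the sweep enforces either property:
\begin{itemize}
\item A semi-aligned curve is proper only if $\sigma(c_0)$ has no path from its inner segment to the rest of $V(\Lambda_{c_0})$. This generally fails before anything is deleted.
\item Candles found for different prefixes can share vertices inside the vortex.
\end{itemize}
For a concrete failure, take the grounded part to be a cylindrical grid and let $\sigma(c_0)$ be a star whose centre $r$ is the only red vertex and whose leaves are $N(c_0)$; the vortex has depth $1$. Any three consecutive vertices of $\Lambda$ are the roots of a candle: the stick runs down a rail to $r$, and the other path runs the long way around the outer cycle. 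Your sweep therefore records $k$ candles, all through $r$. Yet for $k\ge 2$ no $k$-candle mesh exists, and the correct outcome is the second one with $A=\{r\}$.

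Choosing Menger cuts only afterwards, as you plan for the failure branch, cannot repair this, because a separator picked after a candle was found may cut through it. Your worry about candles straddling two segments is not the real issue either: by definition all roots of a candle in a crop $G_S$ lie in $S$. What matters is that the crops are disjoint and candle-free.

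The missing idea is to build the isolator into the greedy choice itself, which is what the paper's $h$-segmentation does. Work in the snug society $(G',\Lambda')$, whose depth is at most $\theta\coloneqq 2f_{\ref{thm_CleanVortex}}(k)+d+2$.
\begin{itemize}
\item Take $S_{i+1}$ to be the shortest segment following $S_1\cup\dots\cup S_i$ for which some $S_{i+1}$-isolator $Y_{i+1}$ of size at most $\theta$ in $G'-A_i$ leaves an $R$-candle in the $S_{i+1}$-crop of $G'-(A_i\cup Y_{i+1})$. Set $A_{i+1}\coloneqq A_i\cup Y_{i+1}$.
\item If $k$ segments arise, the candles lie in pairwise vertex-disjoint crops of $G'-A_k$. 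This supplies exactly the disjointness and the proper semi-aligned curves that \zcref{lemma_ForcingCandlesIntoAMesh} needs.
\item If fewer than $k$ segments arise, read minimality relative to these deletions. Let $S_i'$ be $S_i$ without its last vertex, and let $Z_i$ be any $S_i'$-isolator of size at most $\theta$ in $G'-A_{i-1}$. Then the $S_i'$-crop of $G'-(A_{i-1}\cup Z_i)$ contains no $R$-candle.
\item Collect $B_h\coloneqq\bigcup_i (Y_i\cup Z_i)$, so $|B_h|\le 2(k-1)\theta$. Every segment crop of $G'-B_h$ is then candle-free, hence facial or red-free by \zcref{lemma_RedCross}.
\end{itemize}
This double-isolator mechanism is what your proposal lacks: $Y_i$ keeps the found candle alive, and $Z_i$ certifies that the shorter segment is clean.
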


\paragraph{Isolating a vortex segment.}
Let $(G,\Lambda)$ be a society of depth at most $d$ and let $S$ be a segment of $\Omega$.
By definition of depth, there cannot be $d+1$ disjoint paths starting on $S$ and ending on $V(\Lambda) \setminus S$.
Hence, by Menger's Theorem, there exists a set $X_S$ of size at most $d$ such that in $G-X_S$ there does not exist an $S$-$(V(\Lambda) \setminus S)$-path in $G-X_S$.
We say that any minimum size set $X_S \subseteq V(G)$ such that there is no $S$-$(V(\Lambda)\setminus S)$-path in $G-X_S$ is an \emph{$S$-isolator} and note that such a set can be found in time $\mathbf{O}(d \cdot ||G||)$.
\smallskip

Let now $(G,\Lambda)$ be a society of depth at most $d$ where $(G,R)$ is an annotated graph.
Our goal is to iteratively separate segments $S_1,\dots,S_{\ell}$ of $\Lambda$ from their complement.
By doing so, we will be collecting an increasing set $A_i \subseteq V(G)$ where $i$ keeps track of the number of iterations.
Moreover, we are going to maintain a second increasing set $B_i$ which will ``guard'' the ``attachments'' inside each of the $A_i$ as follows:
For all $i \leq j \in[\ell]$, if we consider the $S_i$-crop $(G_i,\Lambda_i)$ of $(G - A_j,\Lambda)$, then there exists an $R$-candle in $(G_i,\Lambda_i)$.
However, $(G_i - B_j,\Lambda_i)$ does not have any $R$-candles.
\medskip

Let $(G,\Lambda)$ be a society of depth at most $d$ where $(G,R)$ is an annotated graph.
A \emph{$h$-segmentation} of $(G,\Lambda)$ is iteratively defined as follows:
Let $\Lambda = \langle v_1,\dots,v_n \rangle$.

We define $S_1$ to be the shortest segment of $\Lambda$ containing $v_1$ such that there is a set $Y_1 \subseteq V(G)$ of size at most $d$ for which
\begin{itemize}
    \item there is no $S_1$-$V(\Lambda)\setminus S_1$-path in $G-Y_1$, and
    \item the $S_1$-crop $(G_1^1,\Lambda_1^1)$ of $(G-Y_1,\Lambda)$ has an $R$-candle.
\end{itemize}
Now let $S_1'$ be the segment of $\Lambda$ obtained from $S_1$ by removing the very last vertex and let $Z_1$ be any $S_1'$-isolator of order at most $d$.
Then, by choice of $S_1$ we know that the $(S_1')$-crop of $(G-(Y_1 \cup Z_1),\Lambda)$ does not contain an $R$-candle.

We set $A_1 \coloneqq Y_1$ and $B_1 \coloneqq Y_1 \cup Z_1$.

Now suppose the first $i \in [h-1]$ segments $S_1,\dots,S_i$ have already been defined together with the sets $Y_j,Z_j,A_j$, and $B_j$ for $j \in[i]$.
We describe how the next iteration is constructed.

Let $S_{i+1}$ be the shortest segment of $\Lambda$ disjoint from $S \coloneqq \bigcup_{j \in [i]}S_j$ starting with the smallest vertex in $\Lambda$ not in $S$ such that there is an $S_{i+1}$-isolator $Y_{i+1}$ in $(G-A_i,\Lambda)$ of size at most $d$ for which the $S_{i+1}$-crop of $(G-(A_i \cup Y_{i+1}),\Lambda)$ contains an $R$-candle.
Let further $S_{i+1}'$ be the segment obtained from $S_{i+1}$ by removing its last vertex.
Then for any $S_{i+1}'$-isolator $Z_{i+1}$ in $(G-A_i,\Lambda)$ of size at most $d$ it holds that the $S_{i+1}'$-crop of $(G-(A_i \cup Y_{i+1} \cup Z_{i+1}),\Lambda)$ contains no $R$-candle.
We set $A_{i+1} \coloneqq A_i \cup Y_{i+1}$ and $B_{i+1} \coloneqq B_i \cup Y_{i+1} \cup Z_{i+1}$.

The \emph{$h$-segmentation} is now the list $S_1,\dots,S_h$ together with the sets $A_h$ and $B_h$.
Moreover, a \emph{segmentation} of $(G,\Lambda)$ is an $h$-segmentation $S_1,\dots,S_h$ together with the sets $A_h$ and $B_h$ such that $\bigcup_{i\in[h]} S_i = V(\Lambda)$.
\smallskip

Since -- as it is well known -- Menger's Theorem for finding either $d$ disjoint paths or a separator of size at most $d$ in time $\mathbf{O}( d\cdot |G|)$, and by \zcref{cor_FindShallowCandles} we can decide whether there exists an $R$-candle in any fixed society $(G',\Lambda')$ in time $\mathbf{O}(|G'|^5)$, we can find an  $h$-segmentation for any society $(G,\Lambda)$ of depth at most $d$ in time $\mathbf{O}(h \cdot d \cdot |G|^5)$.

\begin{observation}\label{obs:FindHSegmentation}
Let $(G,\Lambda)$ be a society of depth at most $d$ where $(G,R)$ is an annotated graph.
Then an $h$-segmentation of $(G,\Lambda)$ can be found in time $\mathbf{O}(h \cdot d \cdot |G|^5)$.
\end{observation}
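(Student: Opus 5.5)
The statement is an algorithmic bound, so the plan is to bound the cost of one iteration of the segmentation and multiply by $h$. Iteration $i+1$ does three things. It determines the shortest admissible segment $S_{i+1}$. It computes the isolator $Y_{i+1}$. It computes the guard $Z_{i+1}$ for $S_{i+1}'$. The sets $A_{i+1}$ and $B_{i+1}$ are then formed by unions, which costs $\mathbf{O}(hd)$ and is negligible.

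To find $S_{i+1}$, I will not test every candidate segment from scratch. Instead I grow the segment from its fixed starting vertex, adding one vertex of $\Lambda$ at a time. For each candidate $S$, I run the Menger/Ford--Fulkerson augmenting path procedure in $G-A_i$, stopping after $d+1$ augmentations. This either returns an $S$-isolator of size at most $d$ or certifies that none exists, in time $\mathbf{O}(d\cdot||G||)$. Given an isolator, I form the $S$-crop in linear time. I then decide whether the crop contains an $R$-candle using the greedy procedure from the proof of \zcref{cor_FindShallowCandles}. That procedure guesses the three roots and the flame and makes $|G|^4$ calls to the linear-time algorithm of \zcref{prop_TwoPaths}, for $\mathbf{O}(|G|^5)$ time.

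The key point is that the $\mathbf{O}(|G|^5)$ candle test must not be repeated for every prefix length, since that would cost an extra factor of $|G|$. I see two remedies:
\begin{itemize}
\item \emph{Monotonicity.} I would argue that admissibility is monotone in the length of the prefix, so the shortest admissible prefix can be found by binary search. This gives $\mathbf{O}(\log|G|)$ candle tests per iteration.
\item \emph{Amortisation.} Alternatively, the candle test can be folded into the growth process: each extension of the prefix adds one new possible root, so only the guesses involving that new root need to be re-examined.
\end{itemize}
I would adopt the amortised reading, which is what the statement's bound $\mathbf{O}(h\cdot d\cdot|G|^5)$ implicitly uses. The factor $d$ comes from the Menger computations carried out inside each step. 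The step I expect to be delicate is exactly this accounting, namely showing that the total number of full candle checks per iteration is $\mathbf{O}(d)$ rather than $\mathbf{O}(|G|)$. If that fails, I would fall back on binary search together with monotonicity, which yields the same bound up to a logarithmic factor.

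The guard $Z_{i+1}$ costs one further Menger call on $S'_{i+1}$, again $\mathbf{O}(d\cdot||G||)$. No candle check is needed for it, because the minimality of $S_{i+1}$ already guarantees that the $S'_{i+1}$-crop is candle-free. Summing over the $h$ iterations gives $\mathbf{O}(h\cdot d\cdot|G|^5)$.
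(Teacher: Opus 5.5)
\paragraph{Verdict: genuine gap.}
Your per-iteration skeleton matches the paper. The paper's justification is the single sentence before the observation: one $\mathbf{O}(d\cdot|G|)$ Menger computation and one $\mathbf{O}(|G|^5)$ candle test from \zcref{cor_FindShallowCandles}, multiplied by $h$. It never discusses the search over prefix lengths. You are right that this search is where the cost lies. However, the ingredient you add to control it is not established, and as stated it is wrong.

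\paragraph{Amortisation fails.}
You claim that after extending $S$ by $v$ only guesses involving $v$ need re-examination. This fails because the tested graph changes. In the $S$-crop, $v$ is deleted, since it lies in $V(\Lambda)\setminus S$. In the $(S\cup\{v\})$-crop, $v$ may be an interior vertex of $P$ or $Q$. Moreover, the isolator is recomputed and can be completely different, which releases vertices of the old isolator and re-attaches components. Hence a root/flame guess lying entirely in $S$ that failed for $S$ can succeed for $S\cup\{v\}$, and every guess must be retested.

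\paragraph{Monotonicity is only asserted.}
The binary-search route needs admissibility to be upward closed, but you do not prove this. The isolator of the longer prefix is unrelated to that of the shorter one and may cut through a previously found candle. The obvious candidate $Y_S\cup\{v\}$ can have size $d+1$ and need not be minimum. Even granting monotonicity, binary search costs $\mathbf{O}(\log|G|)$ candle tests, not a factor $d$.

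\paragraph{No source for the factor $d$.}
Nothing in your argument explains why only $\mathbf{O}(d)$ full candle tests per iteration should be needed. The $d$ arises from Menger, whose cost adds to the candle test rather than multiplying it.

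\paragraph{What your scheme actually proves.}
Testing the prefixes one after another gives
\[
\mathbf{O}\bigl(h\cdot|G|\cdot(d\cdot|\!|G|\!|+|G|^5)\bigr)=\mathbf{O}(h\cdot|G|^6).
\]
This is polynomial, but it is weaker than the stated bound by a factor of roughly $|G|/d$. To reach $\mathbf{O}(h\cdot d\cdot|G|^5)$ you need a genuinely new argument. Examples would be a provably faster candle test on the crops, or a structural reason why only $\mathbf{O}(d)$ prefix extensions can change whether the crop contains an $R$-candle. Neither your proposal nor the paper's one-line accounting supplies such an argument. Either prove one or state the weaker bound.
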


Finally, we combine all of the above results into a proof of \zcref{thm_CleanVortex}.

\begin{proof}[Proof of \zcref{thm_CleanVortex}]
We start by setting up the two functions.
Let $c$ be the constant from \zcref{cor_FindShallowCandles}.
\begin{align*}
    f_{\ref{thm_CleanVortex}}(k) & \coloneqq c + 3k + 2\\
    g_{\ref{thm_CleanVortex}}(d,k) & \coloneqq (k-1) \cdot (4f_{\ref{thm_CleanVortex}}(k) + 2d + 4)
\end{align*}

We start with a short preprocessing phase.
First, in case $s > f_{\ref{thm_CleanVortex}}(k)$, let $(G'',\Lambda'')$ be the society induced by $\rho$ and the inner graph of $C_{f_{\ref{thm_CleanVortex}}(k) + 1}$.
Otherwise set $(G'',\Lambda'') \coloneqq (G,\Lambda)$.
Let $\mathcal{C}'' = \{ C_1,\dots,C_{f_{\ref{thm_CleanVortex}}(k)}\}$.
Let further $\Delta''$ be the disc bounded by the trace of $C_{f_{\ref{thm_CleanVortex}}(k) + 1}$ and let $\rho''$ be the restriction of $\rho$ to $\Delta''$.

Next, we apply \zcref{prop:SnugNest} to $(G,''\Lambda'')$ and $\mathcal{C}''$ in order to be able to obtain a snug nest $\mathcal{C}' = \{ C'_1,\dots,C_{f_{\ref{thm_CleanVortex}}(k)}'\}$.
Let now $(G',\Lambda')$ be the society induced by $\rho''$ and the inner graph of $C'_{f_{\ref{thm_CleanVortex}}(k)}$.
Let further $\Delta'$ be the disc bounded by the trace of $C'_{f_{\ref{thm_CleanVortex}}(k)}$ and let $\rho'$ be the restriction of $\rho''$ to $\Delta'$.

By \zcref{lemma_SnugDepthBounded} and because $\mathcal{C}'$ is snug, we have that the depth of $(G',\Lambda')$ is at most $\theta \coloneqq 2f_{\ref{thm_CleanVortex}}(k) + d + 2$.

From here we distinguish two cases:
\begin{description}
    \item[Case 1:] $(G',\Lambda')$ has a $k$-segmentation, or
    \item[Case 2:] there exists an $h$-segmentation $\mathcal{S}$ of $(G',\Lambda')$ such that $h \leq k-1$ and $\mathcal{S}$ is a segmentation of $(G',\Lambda')$.
\end{description}
\medskip

\textbf{Case 1:} $(G',\Lambda')$ has a $k$-segmentation.
\smallskip

Let $S_1,\dots,S_{k}$, $A_{k}$, $B_{k}$ be a $k$-segmentation of $(G',\Lambda')$.
For each $i\in[k]$ let $(G_i,\Lambda_i)$ denote the $S_i$-crop of $(G'-A_i,\Lambda')$.
Then, by definition of $k$-segmentations we have that for each $i\in[k]$ there exists an $R$-candle $\mathcal{X}_i$ in $(G_i,\Lambda_i)$.
Since the $G_i$ are pairwise vertex-disjoint, it follows that the $\mathcal{X}_i'$ are so as well.
Let $A' \coloneqq A_k \cap V(\sigma(c_0))$.

Indeed, by the definition of $k$-segmentations and the fact that each $S_i$ was selected under the choice of an $S_i$-isolator $Y_i$ in $G' - A_{i-1}$, we find that we may associate to each $\mathcal{X}_i$ a proper semi $\rho$-aligned curve $\gamma_i$ whose inner graph contains $\mathcal{X}_i$ and whose inner graphs are disjoint except for $V(\gamma_i \cap \gamma_j)$ for all $i \neq j \in[k]$

By \zcref{cor_FindShallowCandles} we may now assume that every $\mathcal{X_i'}$-loop is disjoint from $C_{c+1}$ where $c$ is the constant from \zcref{cor_FindShallowCandles}.
This takes $\mathbf{O}(|G|^5)$ time.

This gives us the right to apply \zcref{lemma_ForcingCandlesIntoAMesh} to the candles $\mathcal{X}_{i}$, $i\in[k]$, which yields a $k$-candle mesh in $(G,R)$ and thus meets the first outcome of our theorem.
\medskip

\textbf{Case 2:} 
There exists an $h$-segmentation $\mathcal{S}$ of $(G',\Lambda')$ such that $h \leq k-1$ and $\mathcal{S}$ is a segmentation of $(G',\Lambda')$.

In this case we have that $|B_{h}| \leq (k-1) \cdot 2\theta$ since $(G',\Lambda')$ has depth at most $\theta$ and $B_{h}$ consists of $h \leq k-1$ pairs of two sets, each of size at most $\theta$.
Since $\mathcal{S}$ is a segmentation of $(G',\Lambda')$ this means that the $S_i$, $i\in[h]$, form a partition of $\Lambda'$ and thus $(G' - B_h,\Lambda')$ does not contain a single $R$-candle.
Indeed, if we denote by $(G_i,\Lambda_i)$ the $S_i$-crop of $(G'-B_h,\Lambda')$, then also for each $i \in [h]$, $(G_i,\Lambda_i)$ does not contain an $R$-candle.
By \zcref{lemma_RedCross} this means that $(G_i,\Lambda_i)$ either has no access to a red vertex, or that $R$ is facial in $(G_i,\Lambda_i)$.
This, however, implies that $(G'-B_h,\Lambda')$ is $h$-shallow and there exists a cylindrical rendition $\varphi$ of depth at most $\theta$ of $(G'-B_h,\Lambda')$ in the disc $\Delta'$.
\medskip

Since $(G',\Lambda')$ was ultimately obtained by ``cropping'' $(G,\Lambda)$ along the trace of $C'_{f_{\ref{thm_CleanVortex}}(k)}$, we may now replace the restriction of $\rho$ to $\Delta'$ with a single cell $c_1$ containing $G'-B_h$ such that $N(c_1)$ is precisely the set of nodes of $C_{f_{\ref{thm_CleanVortex}}(k)}'$.
Thereby, we obtain a cylindrical rendition $\varphi'$ of $(G-B_h,\Lambda)$ of depth at most $\theta$ which is $h$-shallow.
\end{proof}

\paragraph{The proof of \zcref{thm_localstructure}.}
The proof of \zcref{thm_localstructure} is now relatively straightforward.
We simply call upon \zcref{prop:localstructure} where we ask the nests of the vortices to be at least of order $f_{\ref{thm_CleanVortex}}(k)$ each.
Then we apply \zcref{thm_CleanVortex} to each vortex together with its nest in the resulting $\Sigma$-rendition $\rho$.
Each such application to a vortex $v$ yields a set $B_v$ of size at most $g_{\ref{thm_CleanVortex}}(\mathsf{depth}_{\ref{prop:localstructure}}(f_{\ref{thm_CleanVortex}}(k), t, r),k) \in \poly(k + t + r)$ which we add to the apex set.
Since there are only $\mathbf{O}(t^2 + r^2)$ vortices, this increases the size of the apex set by only a polynomial amount.
Adjusting $\rho$ by inserting the new renditions for the societies of the inner graphs of the outer-most cycles from the nests of vortices that were created by \zcref{thm_CleanVortex} finally yields the desired $k$-shallow $\Sigma$-rendition.

\section{Annotated graphs with vital solutions}\label{sec_VitalTheorem}

The goal of this section is to prove a variant of the Vital Linkage Function function for the $k$\textsc{-Spanning Disjoint Paths} problem.
That is, we want to show that instances of $k$\textsc{-SDP} with a unique solution that uses all vertices must either have large $\mathsf{depth}_2$ or bounded treewidth.

We say that an instance $(G,R,\mathcal{T} = \{ (s_1,t_1),\dots,(s_k,t_k) \})$ of the $k$-\textsc{Spanning Disjoint Paths} problem is \emph{vital} if there exists a unique linkage $\mathcal{L} = \{ L_1,\dots,L_k\}$ where $L_i$ joins $s_i$ and $t_i$ for each $i \in[k]$ and $R \subseteq V(\bigcup \mathcal{L})$ in $G$ and $V(G) = V(\mathcal{L})$.

\begin{theorem}\label{thm_VitalSpanningLinkage}
There exists a function $\beta_{\ref{thm_VitalSpanningLinkage}} \colon \mathbb{N}^2 \to \mathbb{N}$ such that for every $r,k \geq 0$, if $(G,R,\mathcal{T})$ is a vital instance of the $k$\textsc{-Spanning Disjoint Paths} problem where $\mathsf{depth}_2(G) \leq r$, then $\mathsf{tw}(G) \leq \beta_{\ref{thm_VitalSpanningLinkage}}(r,k)$.
Moreover, $\beta_{\ref{thm_VitalSpanningLinkage}}(r,k) \in 2^{\poly(r + k)}$.
\end{theorem}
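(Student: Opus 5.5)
We prove \zcref{thm_VitalSpanningLinkage} by contradiction, reducing a vital spanning instance of huge treewidth and bounded $\mathsf{depth}_2$ to a vital instance of ordinary \textsc{Disjoint Paths} and then invoking \zcref{prop:VitalLinkage}. Fix a vital instance $(G,R,\mathcal{T})$ with unique solution $\mathcal{L}$, $V(\mathcal{L})=V(G)$, and $\mathsf{depth}_2(G,R)\le r$. Since a $2$-outer-annotated grid is a red-minor of a fully annotated grid, bounded $\mathsf{depth}_2$ gives bounded bidimensionality. Also, $G$ has no large clique minor: a vital linkage forbids $K_t$-minors for $t$ roughly $2k+$const, by the classical rerouting argument, so the first two outcomes of \zcref{thm_localstructure} can be excluded for a suitable $t=\poly(k)$. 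We assume $\mathsf{tw}(G)$ is larger than a threshold $2^{\poly(r+k)}$. \zcref{prop_GridThm} then gives a large mesh $M$, and \zcref{thm_localstructure} gives an apex set $A$ of size $\poly(r+k)$ and an $r$-shallow blank $\Sigma$-rendition $\rho$ of $G-A$ of bounded genus, breadth and depth. We add the at most $|A|$ paths' worth of information by treating each apex vertex as a new terminal: cutting the paths of $\mathcal{L}$ at $A$ yields a vital instance of $(G-A, R\setminus A)$ with $k'\le k+2|A|$ terminal pairs. Hence we may assume $A=\emptyset$.

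Next we get rid of the red vertices. Since $\rho$ is blank, every red vertex lies inside a vortex, and inside each vortex it lies in one of at most $r$ facial segments. For a facial segment $S$ with ultimate frontier $\gamma$, the red vertices lie in the outer graph of $\gamma$, and no $V(\gamma)$-candle exists. The plan is to show that the portion of $\mathcal{L}$ in the outer graph of $\gamma$ has bounded complexity. Its intersection with $\gamma$ and with the segment boundary is a transaction-like family, and the depth bound (via \zcref{lemma_SnugDepthBounded}) limits it to $\poly(r+k)$ crossings. We then declare all vertices of $V(\gamma)$ used by $\mathcal{L}$, together with the isolators of the segments, as additional terminals, and delete the outer graphs entirely. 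By vitality every vertex is on $\mathcal{L}$, so the restriction of $\mathcal{L}$ to the remaining graph $G^\ast$ is a red-free linkage with $\poly(r+k)$ endpoints, and it still spans $V(G^\ast)$.

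The crux is \emph{uniqueness} of this restricted linkage in $G^\ast$. Suppose $\mathcal{L}^\ast\ne\mathcal{L}|_{G^\ast}$ has the same pattern. Gluing it with the untouched pieces of $\mathcal{L}$ in the deleted outer graphs gives a linkage of $G$ realising $\mathcal{T}$ that still covers $R$, since every red vertex sits in a kept piece. This contradicts vitality, provided the glued object has no cycles and the pieces join up correctly; that follows because we froze the endpoints on $\gamma$. Uniqueness holds, but the restricted linkage need not span $G^\ast$ if we only keep the unique subgraph. This is harmless, because $\mathcal{L}$ spans $G$. Applying \zcref{prop:VitalLinkage} to $G^\ast$ with bidimensionality $0$ of the annotation (the terminals have bounded count) bounds $\mathsf{tw}(G^\ast)$ by $\poly(r+k)$.

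Finally, $\mathsf{tw}(G)\le \mathsf{tw}(G^\ast)$ plus the contribution of the deleted facial regions, which have bounded treewidth as they are near-planar with few crossings. This contradicts the choice of threshold.

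The main obstacle is the second step: bounding the number of times $\mathcal{L}$ crosses each ultimate frontier. A vortex of depth $d$ limits transactions but not back-and-forth visits of one path. I would attack this first by a homotopy and rerouting argument. Decompose $\mathcal{L}$ into subpaths between frontier crossings and use \zcref{prop_TypeCounting} to group them into few homotopy classes. Then show that within a class of many parallel subpaths, since red vertices sit facially and no candle exists, their order and assignment can be permuted while still covering $R$. That yields a second solution and contradicts uniqueness. This forces each class to have size at most $2$, giving the $\poly(r+k)$ bound and the final $2^{\poly(r+k)}$.
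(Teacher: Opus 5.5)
Your reduction fails at the step you flag as the crux, and neither argument you give for it can work. Both are refuted by a small example, and neither uses large treewidth. Since $V(\mathcal{L})=V(G)$, freezing ``all vertices of $V(\gamma)$ used by $\mathcal{L}$'' freezes all of $V(\gamma)$. Even freezing only the points where $\mathcal{L}$ switches between the outer graph of $\gamma$ and the rest admits no $\poly(r+k)$ bound.

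Let $G$ be the path $v_1w_1v_2z_1v_3w_2v_4z_2\cdots v_N$ with $R=\{z_i\}$ and $\mathcal{T}=\{(v_1,v_N)\}$. Render it with the $z_i$ inside a vortex whose boundary is $v_1,\dots,v_N$ in order, and the $w_i$ outside.
\begin{itemize}
\item The instance is vital, $\mathsf{depth}_2(G,R)\leq 1$, and the vortex has depth at most $2$.
\item There is no $R$-candle, so $R$ is facial by \zcref{thm_FacialSocieties}.
\item Nevertheless $\mathcal{L}$ switches sides at every $v_j$.
\item The arcs $v_{2i-1}w_iv_{2i}$ form one contractible homotopy class of unbounded size, while the solution is unique.
\end{itemize}
Vortex depth bounds transactions, which is all \zcref{lemma_SnugDepthBounded} controls; it does not bound repeated visits. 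Parallel arcs also cannot be ``permuted'' when no edges join them. Hence $G^\ast$ carries unboundedly many terminals and \zcref{prop:VitalLinkage} gives nothing.

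Your closing estimate is also unsupported. A disc-rendered society of bounded depth, or a non-vortex cell, can contain a part of arbitrarily large treewidth, so the deleted regions need not have small treewidth.

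The missing idea is localisation: the paper never bounds the global interaction of $\mathcal{L}$ with the red regions. It proceeds as follows.
\begin{itemize}
\item It uses the large flat wall from \zcref{thm_localstructure} to carve out a well of $w$ snug concentric cycles avoiding vortices, terminals and apex neighbours.
\item It makes this well dry via \zcref{prop_drywell} and vitality.
\item It shows only that boundedly many arcs of $\mathcal{L}$ penetrate deep into this well.
\end{itemize}
The arcs are handled by type:
\begin{itemize}
\item Red loops reach at most the outermost cycle.
\item Simple loops and common links or loops are controlled by a bramble argument together with \zcref{prop:VitalLinkage}, applied to a red-free sub-instance cut out by depth-$d$ isolators.
\item For splitting loops, complex loops, red links and mixed links, one picks four parallel deep arcs in a terminal-free strip. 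The well's own cycles then form a $(4\times 4)$-mesh, which is used to reroute so that every red-carrying connector is kept but some mesh vertices are skipped, contradicting $V(\mathcal{L})=V(G)$.
\end{itemize}
Shallow arcs, such as the $v_{2i-1}w_iv_{2i}$ above, may be arbitrarily numerous and are simply discarded. The deep arcs together with the cycles form a vital \textsc{Disjoint Paths} instance with few terminals. Its well yields a bramble of large order, contradicting \zcref{prop:VitalLinkage}.

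A smaller point: your clique step does not follow from the classical rerouting argument, which may drop red vertices. One needs \zcref{thm:colorfulclique} together with $\mathsf{depth}_2\leq r$ to make the big side blank, as in \zcref{lemma_WahtToDoWithClique}, so $t$ must depend on $r$ as well.
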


The strategy to prove \zcref{thm_VitalSpanningLinkage} follows along three main steps.
The first two of these steps are precisely the same as already established by Robertson and Seymour:
Let $(G,R,\mathcal{T})$ be a vital instance of $k$\textsc{-SDP} with $\mathsf{depth}_2$ at most $r$, then
\begin{description}
    \item[Step 1:] We use a result of Protopapas, Thilikos, and Wiederrecht \cite{ProtopapasThilikosWiederrecht2025ColorfulMinors} together with a result of Robertson and Seymour \cite{RobertsonSeymour1995DisjointPaths} in order to prove that $G$ cannot contain a clique minor of size $\Omega(r)$ as otherwise there would be an irrelevant vertex.
    \item[Step 2:] In the absence of big clique minors, but when the graph has large treewidth, we may apply \zcref{thm_localstructure}.
    Moreover, the assumption of vitality allows us to completely remove all apices.
    \item[Step 3:] Finally, we arrive in an apex-free situation. Here, we now proceed in two substeps:
    First, we show that we only need to care about a set $R' \subseteq R$ of size $\poly( r + k )$ and may remove all other red vertices.
    Second: This will then allow us to interpret $(G',R',\mathcal{T})$ as a vital instance $(G',\mathcal{T}')$ of $\poly( r + k )$\textsc{-Disjoint Paths}.
    Here $G'$ is obtained through a series of small operations in order to reduce $R$ to $R'$.
    Hence, in this last step, \zcref{prop_DisjointPathsIrrelevant} shows that $G$ must have small treewidth as desired.
\end{description}

Notice that the most technical part of this roadmap by far is \textbf{Step 3}.
In this step we need to introduce new tools to deal with spanning linkages and also implement several technical statements from the work on vital linkages by Cavallaro, Gorsky, Kreutzer, Thilikos, and Wiederrecht \cite{CavallaroGKTW2026Optimal} in our setting.
In most cases, we will actually make much stronger assumptions on our instance and  in a fourth and final step we will argue why these assumptions were justified.
Indeed, during \textbf{Step 3} we will assume that the entire graph consists only of the linkage $\mathcal{L}$ itself and a collection of cycles embedded in a concentric way in some disk in the surface.
Our main goal will be to show that the inner-most cycle of this collection is disjoint from $\mathcal{L}$ which will then be the final contradiction to the assumption of vitality.

\subsection{What to do with a clique}\label{subsec_cliqueMinor}
As mentioned before, the original statements for irrelevant vertices and the \textsc{$k$-Disjoint Paths} problem are often given for a more general problem known as the $(k,d)$\textsc{-Folio} problem.
In essence, $(k,d)$\textsc{-Folio} is a common generalisation of \textsc{Minor Containment} and $k$\textsc{-Disjoint Paths} that allows for the minors to be partially rooted at fixed vertices.
Since $(k,d)$\textsc{-Folio} is much more general than what is necessary for our purposes, we simplify the statements sightly.
In this sense we provide a slight weakening of the folio definition in order to state the statements required to handle the case of large clique minors.
\smallskip

Let $G$ be a graph and $Z \subseteq V(G)$ be a set of vertices.
The \emph{weak $k$-folio relative to $Z$}, denoted by $k\text{-}\mathsf{WFolio}_G(Z)$, is the collection of all sets $\{ s_1,\dots,s_{k'},t_1,\dots,t_{k'} \} \subseteq Z$ of at most $k$ vertex pairs $\{ (s_1,t_1),\dots,(s_{k'},t_{k'}) \}$, $k' \leq k$, such that $(G,\mathcal{T})$ is a \textsf{yes}-instance of $k$\textsc{-Disjoint Paths} for every $\mathcal{T} \in k\text{-}\mathsf{WFolio}_G(Z)$ .

A vertex $v \in V(G)$ is said to be \emph{irrelevant} for the weak $k$-folio relative to $Z$ if and only if $k\text{-}\mathsf{WFplio}_{G-v}(Z) = k\text{-}\mathsf{WFplio}_{G}(Z)$.
\smallskip
In in particular, it is worthwhile pointing out that the weak $k$-folio relative to $Z$ defined above is essentially the special case where the detail $d = 0$ in the original statement.

Under these definitions, the following theorem of Robertson and Seymour acts as the backbone of our engine for handling clique minors.

\begin{proposition}[Robertson and Seymour \cite{RobertsonSeymour1995DisjointPaths}]\label{thm_cliqueirrelevantvertex}
 Let $G$ be a graph, let $k \in \mathbb{N}$ be a non-negative integer, and let $Z \subseteq V(G)$ with $|Z| \leq k$.
 Further let $\theta \coloneqq \lfloor \nicefrac{5}{2} \cdot k \rfloor + 1$, let $\varphi$ be a minor model of $K_{\theta}$ in $G$, and let $(A,B)$ be a separation in $G$ such that
 \begin{enumerate}
 \item \label{item:cliqueright} $A \cap \varphi(v) = \emptyset$ for some $v \in V(K_{\theta})$,
 \item \label{item:terminalleft} $Z \subseteq A$,
 \item \label{item:smallsep} subject to \zcref{item:cliqueright} and \zcref{item:terminalleft}, $(A,B)$ is of minimum order, and
 \item subject to \zcref{item:cliqueright}, \zcref{item:terminalleft}, and \zcref{item:smallsep}, $A$ is maximal.
 \end{enumerate}
 Then for any $v \in B \setminus A$, $v$ is irrelevant for the weak $k$-folio of $G$ relative to $Z$.

 In particular, given $G$, $Z$, and $\varphi$ as above, such a vertex $v$ can be found in $\poly(k)||H||$-time.
\end{proposition}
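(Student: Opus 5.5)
This is the classical Robertson--Seymour clique irrelevance statement. The plan is to show that any linkage realising a pattern in the weak folio relative to $Z$ can be rerouted through the clique model $\varphi$ so that it avoids $B\setminus A$ except for a controlled part. Fix a \textsf{yes}-pattern $\mathcal{T}$ on terminals in $Z$, with at most $k$ pairs, and fix a solution linkage $\mathcal{L}$ in $G$. We must produce a linkage with the same pattern in $G-v$ for the given $v\in B\setminus A$. The converse direction is trivial, since deleting a vertex cannot create new linkages.

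\textbf{Step 1 (structure of the separation).} Let $S\coloneqq A\cap B$ and $p\coloneqq|S|$. Minimality of order gives, by Menger, $p$ vertex-disjoint paths from $Z$ to $S$. Since $(A,B)$ separates $Z$ from a branch set $\varphi(u)$, we also have $p\le |Z|\le 2k$. Minimality of order combined with maximality of $A$ gives, again by Menger, a family of $p$ disjoint $S$--$\varphi$ paths in $G[B]$ that land in $p$ distinct branch sets. Indeed, if fewer branch sets could be reached, a smaller or more-to-the-right separation would exist. Because $\theta-p\ge \lfloor 5k/2\rfloor+1-2k$ and each branch set can be hit by $S$ at most once, at least $\theta-p$ branch sets lie entirely in $B\setminus A$. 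So $G[B]-$ (a chosen vertex) still contains a clique model of order large relative to $p$, attached to $S$ by disjoint paths.

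\textbf{Step 2 (reduction to linking $S$ inside $B$).} Restrict $\mathcal{L}$ to $A$. Each path of $\mathcal{L}$ decomposes into segments in $G[A]$ and excursions into $B\setminus A$, and each excursion has both ends in $S$. The excursions therefore define a pattern $\mathcal{T}_S$ of at most $p/2$ pairs on $S$. It suffices to realise $\mathcal{T}_S$ by disjoint paths in $G[B]-v$, internally disjoint from $A$. This is where maximality is used: the same minimality and maximality argument still yields $p$ disjoint paths from $S$ to distinct branch sets in $G[B]-v$. Otherwise some vertex of $B\setminus A$ (here $v$) together with a smaller cut would produce a separation contradicting the choice of $(A,B)$, through a separation $(A',B')$ of equal order with $A'\supsetneq A$.

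\textbf{Step 3 (linking through the clique).} Using these $p$ paths to distinct branch sets, pair up the endpoints according to $\mathcal{T}_S$ and join each pair inside the clique model. A pair reaching branch sets $\varphi(a),\varphi(b)$ is joined via the edge model between them, or via an unused intermediate branch set. Disjointness is the delicate part: paths from $S$ may cross through other branch sets. The standard remedy is to choose the Menger paths to minimise their intersection with $\bigcup\varphi$, and to use the surplus of $\theta-p\ge k/2$ untouched branch sets as routing hubs. The count $\lfloor 5k/2\rfloor$ is exactly what makes enough spare branch sets remain after discarding those met by the at most $2k$ linking paths. I expect this bookkeeping to be the main obstacle. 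Following Robertson--Seymour, I would argue that each linking path can be cut at its first branch set, and that the branch sets it traverses before that are still unused by others.

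Algorithmically, $(A,B)$ is computed by one max-flow between $Z$ and the branch sets together with a rightmost min-cut, and any $v\in B\setminus A$ is returned. This runs in $\poly(k)\cdot\|G\|$ time.
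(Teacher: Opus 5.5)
Your Steps~1 and~2 have the right skeleton, which is the Robertson--Seymour argument that the paper cites without proof. At most $p$ branch sets meet $S=A\cap B$, and every other branch set lies in $B\setminus A$. It suffices to realise the pairing $\mathcal{T}_S$ in $G[B]-v$ internally disjoint from $A$. Maximality of $A$ is exactly what gives $p$ disjoint paths from $S$ to distinct branch sets in $G[B]-v$.

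\textbf{Gap in Step~3.} This is where all the difficulty lies. Disjoint paths ending in distinct branch sets do not by themselves let you ``join each pair inside the clique model''. Such paths may run through other branch sets, in particular through the landing branch sets of other paths. A path crossing $\varphi(a)$ can cut the endpoint in $\varphi(a)$ off from every edge inside the model leading to the partner's branch set or to a hub. Your remedy does not deal with this:
\begin{itemize}
\item Cutting each path at its first vertex of $\bigcup\varphi$ destroys the distinctness of the landing branch sets. Several endpoints then share one branch set, and they need not be able to leave it disjointly through the model, e.g.\ if all its edges to other branch sets leave from a single vertex.
\item ``The branch sets it traverses before that'' is empty, so that sentence proves nothing.
\item ``Discarding those met by the at most $2k$ linking paths'' bounds nothing, since a single path may meet arbitrarily many branch sets.
\end{itemize}
Taking a path family of minimum total length does keep the paths off the spare branch sets, but crossings through the landing branch sets remain.

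What is missing is precisely the Robertson--Seymour linkage lemma behind the constant: if $|S|=p$, no separation of order less than $p$ separates $S$ from a branch set, and at least $\lfloor 3p/2\rfloor$ branch sets are available, then every pairing of $S$ is realisable by disjoint paths. This needs an actual proof, in which the $\lfloor p/2\rfloor$ extra branch sets pay for resolving the collisions. It cannot be left as bookkeeping.

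\textbf{Counting.} The statement assumes $|Z|\le k$, so $p\le k$, not $p\le 2k$. Discard the at most $p$ branch sets meeting $S$ and the one containing $v$. At least $\lfloor 5k/2\rfloor+1-p-1\ge\lfloor 3p/2\rfloor$ branch sets then remain inside $B\setminus A$ and avoid $v$. That is exactly the threshold of the lemma above: $p$ landing branch sets plus $\lfloor p/2\rfloor$ hubs. With your bound $p\le 2k$, only about $k/2$ branch sets would remain. That is too few even for $p$ distinct landing sets, so the picture of ``$k/2$ spare hubs'' cannot work.

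\textbf{Algorithm.} A single max-flow from $Z$ to $\bigcup\varphi$ does not compute the required separation. A $Z$--$\bigcup\varphi$ separator must also cut paths into branch sets that meet $S$ on the $A$-side, so its order can exceed $p$. You need a minimum cut from $Z$ to each branch set separately, then a rightmost cut of minimum order. This stays within the claimed running time, since each flow has value at most $k$.
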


We will also require the following result from \cite{ProtopapasThilikosWiederrecht2025ColorfulMinors} which refines some of the tools from \cite{RobertsonSeymour1995DisjointPaths}.

\begin{proposition}[Protopapas, Thilikos, and Wiederrecht \cite{ProtopapasThilikosWiederrecht2025ColorfulMinors}]\label{thm:colorfulclique}
 Let $t,k \in \mathbb{N}$ be positive integers with $k \geq \lfloor \nicefrac{3}{2} \cdot q t \rfloor + t$.
 Let $(G,R)$ be an annotated graph such that $G$ contains a minor model $\varphi$ of $K_k$.
 Then one of the following is true:
 \begin{enumerate}
 \item There exists a red-minor model $\psi$ of a red $K_t$ in $(G,R)$ such that $\mathcal{T}_\psi$ is a truncation of $\mathcal{T}_\varphi$, or

 \item There exists a set $S \subseteq V(G)$ of size at most $t - 1$ such that the $\mathcal{T}_\varphi$-big component of $G - S$ is blank.
 \end{enumerate}
 Furthermore, there exists an algorithm that takes as input $t$, $(G,\chi)$, and $\varphi$ as above and finds one of the two outcomes in time $\poly(t)|\!|G|\!|$.
\end{proposition}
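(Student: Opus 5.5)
The statement is a cited proposition, so I will not rebuild it from scratch. Instead I would derive it as a consequence of the colourful clique machinery, following the classical Robertson--Seymour argument that produces a red clique from a large clique minor.

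First I would normalise the clique minor. Consider the tangle $\mathcal{T}_\varphi$ of order roughly $k$ induced by the model $\varphi$. Look for a separation $(A,B)\in\mathcal{T}_\varphi$ of order at most $t-1$ whose big side, after deleting $A\cap B$, contains no red vertex in the $\mathcal{T}_\varphi$-big component. If such a separation exists, then $S\coloneqq A\cap B$ witnesses the second outcome.

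Otherwise, every set $S$ of fewer than $t$ vertices leaves some red vertex in the big component of $G-S$. In that case I would apply Menger's Theorem between $R$ and the branch sets of $\varphi$. The aim is to obtain $t$ vertex-disjoint paths from distinct red vertices to distinct branch sets of $\varphi$. One uses the size margin $k\geq \lfloor \nicefrac{3}{2}\cdot qt\rfloor+t$ so that at least $k-t$ branch sets remain untouched. Each path is then attached to its branch set, which extends that branch set so that it contains a red vertex.

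The main obstacle is that these paths may cross many other branch sets and destroy them. To handle this I would use the standard rerouting trick from Graph Minors XIII. Choose the linkage minimising the number of edges outside $\bigcup\varphi$, and cut each path at the first branch set it meets. Each path then destroys at most a bounded number of branch sets, and the slack factor $\nicefrac{3}{2}$ in the size bound pays for the destroyed ones. The surviving branch sets can be regrouped in pairs to restore the adjacencies lost through the damaged sets, yielding a red $K_t$ model $\psi$.

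Controlling $\psi$ by $\mathcal{T}_\varphi$ follows because every branch set of $\psi$ contains a branch set of $\varphi$. For the running time, one max-flow computation of order $t$ takes $\poly(t)\|G\|$ time, and the rerouting steps stay within the same bound.
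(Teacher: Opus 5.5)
The paper does not prove this proposition; it imports it from Protopapas, Thilikos, and Wiederrecht, whose argument refines rerouting tools from Graph Minors XIII. Your first step is sound. Run Menger in the auxiliary graph obtained by attaching a new sink to each branch set of $\varphi$. A separator of fewer than $t$ vertices then leaves some branch set, and its sink, untouched and unreachable from $R$. Branch sets avoiding the separator are pairwise adjacent, so that branch set lies in the $\mathcal{T}_\varphi$-big component, which is therefore blank. Otherwise you get $t$ disjoint paths from distinct red vertices into $t$ distinct branch sets. The gap is the next step, turning this linkage into a red $K_t$ model, which is the real content of the proposition.

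\textbf{Cutting at the first branch set.} Your two normalisations work against each other. If you cut every path at the first branch set it meets, no branch set is crossed any more, but the targets stop being distinct. Several red roots can then land in the same branch set $X$. In general $X$ cannot be split into connected pieces that are each adjacent to all other branch sets, for instance when every edge leaving $X$ starts at one vertex. You then end up with fewer than $t$ red branch sets.

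\textbf{Keeping distinct targets.} If you keep the targets distinct, minimality only shows that every branch set met by a path is the target of some path. So at least $k-t$ branch sets are untouched, but a single path may still run through many of the other $t-1$ targets. That cuts their endpoints off from the rest of their branch set and from the untouched ones. The claim that each path ``destroys a bounded number of branch sets, paid for by the slack'' is therefore unsupported.

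\textbf{What is missing.} The counting is beside the point: untouched branch sets are plentiful, and the difficulty is routing the $t$ red roots disjointly into $t$ distinct ones. ``Regrouping in pairs'' merges branch sets that carry no red vertex and does not help with this. You need a genuine rerouting lemma of the Graph Minors XIII type, which is where the factor $\nicefrac{3}{2}$ originates. It must guarantee that each branch set of $\psi$ contains a red vertex together with an entire untouched branch set of $\varphi$. Only then does your argument that $\mathcal{T}_\psi$ is a truncation of $\mathcal{T}_\varphi$ apply.
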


With this, we are ready to state the main lemma of this subsection.

\begin{lemma}\label{lemma_WahtToDoWithClique}
Let $r,k,t \geq 1$ be positive integers where $t \geq \lfloor \nicefrac{5}{2} ( 2(r+1)^2 + 2k ) \rfloor + 1 $, let $(G,R)$ be an annotated graph with $\mathsf{depth}_2(G,R) \leq r$, and $\mathcal{T} = \{ (s_1,t_1),\dots,(s_k,t_k)\}$ such that $(G,R,\mathcal{T})$ is an instance of $k$\textsc{-Spanning Disjoint Paths}.
If there exists a minor model $\varphi$ of $K_t$ in $G$, then there exists $v \in V(G) \setminus R$ such that $v$ is irrelevant for $(G,R,\mathcal{T})$.

Moreover, there exists an algorithm that takes as input $(G,R,\mathcal{T})$ and $\varphi$ as above and finds the irrelevant vertex $v$ in time $\poly( r + k )|G|$.
\end{lemma}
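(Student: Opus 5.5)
Apply \zcref{thm:colorfulclique} to the clique model $\varphi$ of $K_t$, with the parameter $t$ there taken as $t'\coloneqq 2(r+1)^2$. This requires $t \geq \lfloor \nicefrac{3}{2}\cdot t'\rfloor + t'$, which our hypothesis on $t$ comfortably provides.

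The first outcome would be a red $K_{t'}$-minor model controlled by $\varphi$. A red clique on $2(r+1)^2$ vertices contains every annotated graph on at most that many vertices as a red-minor. In particular it contains the $2$-outer-annotated $((r+1)\times(r+1))$-grid, since that grid has $(r+1)^2\leq t'$ vertices and red-minors compose. This contradicts $\mathsf{depth}_2(G,R)\leq r$. Hence the second outcome holds: there is a set $S$ with $|S|\leq t'-1$ such that the $\mathcal{T}_\varphi$-big component $J$ of $G-S$ contains no vertex of $R$.

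We now reduce to \zcref{thm_cliqueirrelevantvertex}. Let $Z$ consist of $S$, all terminals of $\mathcal{T}$, and all red vertices outside $J$. This $Z$ may be large, so we cannot simply take $Z\coloneqq S\cup\{s_i,t_i\}$ and expect red vertices to be covered. The intended way around this is to consider the separation $(V(G)\setminus V(J),\, V(J)\cup S)$. All red vertices and all terminals lie on the side $V(G)\setminus V(J)$ or in $S$; the terminals can be assumed outside $J$ after adding them to $S$. Any solution restricted to $J$ is a collection of at most $|S|+2k$ subpaths of $J\cup S$ with ends in $S'\coloneqq S\cup\{s_i,t_i\}$, and these subpaths carry no spanning requirement, since $J$ is blank. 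So take $Z\coloneqq N(J)\cup(\text{terminals in }J)\subseteq S'$, whose size is at most $t'-1+2k$. Set $k''\coloneqq t'+2k$; then $\lfloor \nicefrac52 k''\rfloor+1\leq t$.

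Work in $G'\coloneqq G[V(J)\cup Z]$. The clique model restricted to $J$ still has at least $t-|S|$ branch sets inside $J$, and we keep $\theta$ branch sets of it. We adjust the constant by requiring enough slack, which the stated bound on $t$ provides. Then \zcref{thm_cliqueirrelevantvertex} applied to $G'$, $Z$, and this model yields a vertex $v\in V(J)\setminus Z$ that is irrelevant for the weak $k''$-folio of $G'$ relative to $Z$.

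To see that $v$ is irrelevant for $(G,R,\mathcal{T})$, take a solution $\mathcal{L}$ in $G$. Its intersection with $G'$ is a linkage whose pattern lies in the weak folio, pairing vertices of $Z$, with fewer than $k''$ pairs. Replace it by a linkage with the same pattern in $G'-v$. The replacement remains disjoint from the part of $\mathcal{L}$ outside $J$, because it lives in $G'$ and its intersection with $Z$ can be taken to consist only of the endpoints. The new family still covers $R$, since no red vertex lies in $J$ and the red vertices of $Z$ that were used are endpoints and hence still used. One subtlety remains: a red vertex of $S$ that $\mathcal{L}$ used as an interior vertex of a $J$-subpath. We handle it by recording such vertices as endpoints, which amounts to splitting the subpath there, at the cost of at most $|S|$ extra pairs; the budget $k''$ already allows for this. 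The converse direction is trivial. Also $v\notin R$, because $v\in J$.

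For the running time, \zcref{thm:colorfulclique} and \zcref{thm_cliqueirrelevantvertex} run in time $\poly(r+k)\,|G|$. This is fine since $\|G\|$ is linear in $|G|$ in the presence of bounded clique minors, or after trimming the model. The main obstacle is the bookkeeping in the irrelevance argument: making sure the weak folio with terminals in $Z$ captures the spanning constraints through red vertices in $S$ and the endpoint-only usage of $Z$. The parameter $k''$ must be chosen to accommodate this splitting.
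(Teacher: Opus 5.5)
Your overall route is the paper's: apply \zcref{thm:colorfulclique}, exclude the red clique via $\mathsf{depth}_2(G,R)\le r$, obtain a small separator $S$ with a blank big component $J$, and invoke \zcref{thm_cliqueirrelevantvertex} for the weak folio relative to $S$ plus the terminals. The gap is in the bookkeeping step that you flag yourself.

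``Splitting'' a subpath of $\mathcal{L}$ at a vertex $z\in S$ that it traverses yields two paths that share $z$. That is not a linkage, so this pattern is not in the weak folio, which is defined via vertex-disjoint paths, and \zcref{thm_cliqueirrelevantvertex} says nothing about it. Likewise, the weak-folio guarantee only gives \emph{some} linkage with the same pattern in $G'-v$. So your claim that its intersection with $Z$ ``can be taken to consist only of the endpoints'' is unjustified: it may run through vertices of $S$ that $\mathcal{L}$ uses outside $J$.

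The paper's device is to duplicate every vertex of $S$ (a copy with the same neighbourhood), cut each traversal at $z$ versus its copy $z'$, and put $S$ together with all copies into $Z$. The pieces are then genuinely disjoint and re-glue through $z$ in $G$, so red vertices of $S$ remain covered. Vertices of $S$ that $\mathcal{L}$ uses only outside $J$ should also be pinned in the pattern, e.g.\ as trivial pairs.

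This device doubles the contribution of $S$ to $|Z|$. That is exactly why the paper only asks for a red clique of order $(r+1)^2$: then $|S|\le (r+1)^2-1$ and $|Z|\le 2(r+1)^2+2k$, which is what the hypothesis $t\ge\lfloor\frac{5}{2}(2(r+1)^2+2k)\rfloor+1$ pays for. With your choice $t'=2(r+1)^2$, $|S|$ can be $2(r+1)^2-1$. After the necessary duplication $|Z|$ can reach $4(r+1)^2+2k-2>k''$, so ``the budget $k''$ already allows for this'' is false.

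Your assertion that the hypothesis leaves slack for discarding the branch sets of $\varphi$ that meet $S$ is also wrong as stated. Your $\theta=\lfloor\frac{5}{2}k''\rfloor+1$ coincides exactly with the lower bound on $t$, while restricting $\varphi$ to $J$ can lose up to $|S|$ branch sets. The paper's proof passes over this point too. It only costs an extra additive term of order $(r+1)^2$ in the bound on $t$, but it cannot be justified by the stated bound.

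Taking $t'=(r+1)^2$, duplicating $S$, and accounting for the lost branch sets repairs your argument.
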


\begin{proof}
Let us first note that we may assume that $s_i \neq s_j$, $t_i \neq t_j$, and $s_i \neq t_j$ for all $ i \neq j \in[k']$.
For this, whenever we have, say $s_i = s_j$ for $i \neq j$, we simply add a copy $s_i'$ of $s_i$ to the graph.
That is, $s_i'$ is a fresh vertex with neighbourhood precisely $N_G(s_i)$.
We then replace $s_j$ by $s_i'$ in the pair $(s_j,t_j)$.

Next, we apply \zcref{thm:colorfulclique} to $(G,R)$, looking for a red $r^2$-clique minor.
In case we find this clique minor, it is clear that $(G,R)$ also contains the $2$-outer-annotated $((r+1) \times (r+1))$-grid as a red-minor, thereby implying that $\mathsf{depth}_2(G,R) \geq r+1$ which is a contradiction to our assumption.
Hence, we must find the second outcome of \zcref{thm:colorfulclique}:
A set $S \subseteq V(G)$ of size at most $(r+1)^2 - 1$ such that the $\mathcal{T}_{\varphi}$-big component of $G-S$ is blank.

Let now $G'$ be obtained from $G$ by duplicating every vertex of $S$ precisely once and let $S'$ be the set consisting of all members of $S$ and their duplicates.
Finally, let $H'$ be the $\mathcal{T}_{\varphi}$-big component of $G' - S'$ and let $H \coloneqq G'[V(H') \cup S']$.
Finally, let $Z \coloneqq (\{ s_1,\dots,s_{k},t_1,\dots,t_{k}\} \cap V(H)) \cup S'$.
Then $|Z| \leq 2(r+1)^2 + 2k$.

Let $\mathcal{L}$ be any solution for $(G,R,\mathcal{T})$.
Then every component of $H \cap (\bigcup \mathcal{L})$ is a path between a pair of distinct vertices from $Z$In particular, if some path from $H \cap (\bigcup \mathcal{L})$ contains a vertex $s$ from $S$ as an internal vertex, we split this path into two, one ending in $s$ and the other starting in the copy $s'$ of $S$ introduced in the previous step. 
Moreover, any two of these paths are vertex-disjoint.
It follows that any solution for $(G,R,\mathcal{T})$ induces an instance of $k'$\textsc{-Disjoint Paths} on $H$ where $k' \leq (r + 1)^2 + k$ and all terminals are chosen from $Z$.
Hence, if a vertex $v \in V(H)$ is irrelevant for the weak $((r + 1)^2 + k)$-folio relative to $Z$, then it is also irrelevant for $(G,R,\mathcal{T})$.

By applying \zcref{thm_cliqueirrelevantvertex}, not only do we know that such a vertex exists, we can also find it in the required time.
\end{proof}

\subsection{Desolate instances and their loci}\label{subsec_locus}
From here on, we are mostly interested in a special kind of instances for the $k$\textsc{-Spanning Disjoint Paths} problem.
These are instances that adhere to \zcref{thm_localstructure} while meeting additional requirements.
We refer to such instances as ``desolate'' and we will later on see, that any vital instance can be transformed into a desolate one.
Before we introduce the full definition, we require some additional terminology.

\paragraph{Snug cycles.}
Let $(G,\Omega)$ be a society with a vortex-free rendition $\rho$ in a disc $\Delta$ and let $\mathcal{C} = \{ C_1,\dots,C_s\}$ be a collection of cycles in $G$.
We say that $\mathcal{C}$ is \emph{concentric} in $\rho$ if
\begin{enumerate}
    \item $C_i$ is grounded in $\rho$ for each $i\in[s]$, and
    \item if $\Delta_i$ denotes the disc bounded by the trace of $C_i$ for each $i\in[s]$, then $\Delta_1 \subsetneq \Delta_2 \subseteq \dots \subsetneq \Delta_s$.
\end{enumerate}
We say that a set $\mathcal{C} = \{ C_1,\dots,C_s\}$ of concentric cycles in $\rho$ is \emph{snug}, if
\begin{enumerate}
    \item the inner graph of $C_1$ has no grounded cycle apart from $C_1$ itself, and
    \item if $\rho'$ is the cylindrical rendition of $(G,\Omega)$ obtained by removing all cells in $\Delta_1$, and declaring $\Delta_1 \setminus N(C_1)$ a vortex, then $\{ C_2,\dots,C_s\}$ is snug in $\rho'$.
\end{enumerate}
Notice that, due to \zcref{prop:SnugNest}, we know that, starting with any set of concentric cycles $\mathcal{C} = \{ C_1,\dots,C_s\}$ where $\Delta_s$ denotes the disc bounded by the trace of $C_s$, we can find a snug set $\mathcal{C'_1,\dots,C'_s}$ of concentric cycles where, if $\Delta'_s$ denotes the disc bounded by the trace of $C'_s$, then $\Delta'_s \subseteq \Delta_s$.

Let now $(G,R,\mathcal{T})$ be an instance of the $k$\textsc{-Spanning Disjoint Paths} problem and $\rho$ be a blank $\Sigma$-rendition of $G$.
We say that a collection $\mathcal{C} = \{ C_1,\dots,C_s\}$ is \emph{snug} in $\rho$ if
\begin{enumerate}
    \item $C_s$ is grounded in $\rho$ and its trace bounds the disc $\Delta_s$,
    \item $\Delta_s$ contains no vortex of $\rho$, and the inner graph of $\Delta_s$ is disjoint from $\mathcal{T}$, and
    \item if $\rho'$ is the restriction of $\rho$ to $\Delta_s$ and $(G',\Omega')$ is the society where $G'$ is the inner graph of $\Delta_s$ and $\Omega'$ is the cyclic ordering of the nodes of $C_s$ obtained by tracing along the boundary of $\Delta_s$ in clockwise direction, then $\mathcal{C}$ is snug in $\rho'$.
\end{enumerate}

\paragraph{Dry wells.}
Let $s$ be a positive integer.
An \emph{$s$-well} $\mathcal{W} = (G, \Omega, \rho, \mathcal{C}, \mathcal{P})$ is a society $(G,\Omega)$ with a vortex-free, cylindrical rendition $\rho$ in a disc $\Delta$ such that $G = \bigcup \mathcal{C} \cup \bigcup \mathcal{P}$ for a set of pairwise internally disjoint $V(\Omega)$-paths $\mathcal{P}$ and $\mathcal{C} = \{ C_1, \ldots, C_s \}$ is a snug set of cycles.

Since $\mathcal{C}$ is snug, we know that for each $P \in \mathcal{P}$, we may let $T_P$ be the trace of $P$ in $\rho$ and there exists a unique component $\Delta_P$ of $\Delta - T_P$ such that the intersection of $\Delta_P$ and the trace of $C_1$ is empty.
We call the closure of $\Delta_P$ the \emph{interior of $P$}.
See \zcref{fig_Well} for an illustration of a well.

\begin{figure}[ht]
 \centering
 \begin{tikzpicture}

 \pgfdeclarelayer{background}
		\pgfdeclarelayer{foreground}
			
		\pgfsetlayers{background,main,foreground}

 \begin{pgfonlayer}{background}
 \pgftext{\includegraphics[width=8cm]{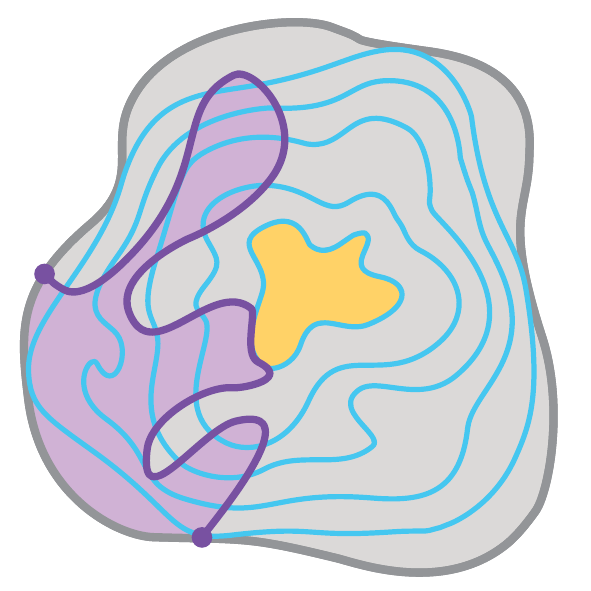}} at (C.center);
 \end{pgfonlayer}{background}
			
 \begin{pgfonlayer}{main}
 \node (C) [v:ghost] {};
 
 \end{pgfonlayer}{main}
 
 \begin{pgfonlayer}{foreground}
 \end{pgfonlayer}{foreground}

 \end{tikzpicture}
 \caption{An example of a well where cycles are depicted in \textcolor{CornflowerBlue}{blue} and the \textcolor{Amethyst}{purple} path $P$ marks an example of a path in the well. The shaded area is the interior of $P$ while the \textcolor{DarkBananaYellow}{yellow} are is the interior of the trace of $C_1$.}
 \label{fig_Well}
\end{figure}

We say that an $s$-well $\mathcal{W} = (G, \Omega, \rho, \mathcal{C}, \mathcal{P})$ is \emph{drained} if either $|\mathcal{P}| \leq 1$ or
\begin{itemize}
 \item for each $P \in \mathcal{P}$ we have that for all $i \in [s-1]$, if $P \cap C_i$ is non-empty, then there exists $Q \in \mathcal{P} \setminus \{ P \}$ such that the trace of $Q \cap C_{i+1}$ is non-empty and found in $\Delta_P$, and
 
 \item for all $i \in [s]$, no $C_i$-path $P'$ in $P$ that is disjoint from $V(\mathcal{C} \setminus \{ C_i \})$ is contained in a cell $c \in C(\rho)$ such that $\sigma(c)$ contains an edge of $C_i$.
\end{itemize}

The core property of drained wells is, that we know for any path $P$ of $\mathcal{P}$ that enters deeply into the well, it cannot be pushed closer to the part of the boundary of the well contained in $\Delta_P$ because it is blocked by an entire sequence of paths that exhaust the cycles of the well that lie further outside.

\begin{proposition}[Cavallaro, Gorsky, Kreutzer, Thilikos, and Wiederrecht \cite{CavallaroGKTW2026Optimal}]\label{prop_fullbucketsindrainedwells}
Let $\mathcal{W} = (G, \Omega, \rho, \mathcal{C}, \mathcal{P})$ be a drained $s$-well and let $P \in \mathcal{P}$ be a path such that $C_i \cap P$ is non-empty for some $i \in [s]$.
Then there exist $(s - i) + 1$ distinct paths $P_s, P_{s-1}, \ldots, P_{i+1}, P_i = P$ such that $P_j \cap C_j$ is non-empty for all $j \in [i,s]$ and we have $\Delta_{P_{i+1}} \subsetneq \ldots \subsetneq \Delta_{P_{s}} $.
\end{proposition}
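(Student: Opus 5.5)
The plan is to build the sequence $P_i=P,P_{i+1},\dots,P_s$ one index at a time using the first condition in the definition of drained wells. Throughout, the ordering of the chosen paths comes from a laminarity property of interiors in the vortex-free, cylindrical rendition $\rho$.

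\textbf{Step 1: laminarity.} I would first record a purely topological fact. Let $Q,Q'\in\mathcal{P}$ be distinct. They are internally disjoint $V(\Omega)$-paths in a vortex-free rendition, so their traces $T_Q,T_{Q'}$ do not cross. Hence the interiors $\Delta_Q,\Delta_{Q'}$ are either nested or have disjoint interiors. In addition, if $T_{Q'}$ meets the interior of $\Delta_Q$, then one of the two inclusions $\Delta_{Q'}\subseteq\Delta_Q$ or $\Delta_Q\subseteq\Delta_{Q'}$ holds. The second fact I need is that a path meeting $C_j$ meets every $C_{j'}$ with $j'\ge j$. This holds because it starts on $V(\Omega)$, outside the trace of $C_s$, and $\mathcal{C}$ is concentric.

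\textbf{Step 2: inductive construction.} Suppose $P_i,\dots,P_{j-1}$ have been chosen with $P_{j-1}\cap C_{j-1}\neq\emptyset$ and $j\le s$. Apply the first drained condition to $P_{j-1}$ with the index $j-1\in[s-1]$. This gives some $Q\in\mathcal{P}\setminus\{P_{j-1}\}$ with $Q\cap C_j\neq\emptyset$ and the trace of $Q\cap C_j$ lying in $\Delta_{P_{j-1}}$. Among all such candidates I would choose $P_j$ to be one whose interior is inclusion-maximal; by Step 1 the candidates' interiors are pairwise comparable or disjoint. The condition $P_j\cap C_j\neq\emptyset$ is then immediate. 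Distinctness of all paths in the sequence follows once strict nesting is shown, since nested interiors with strict inclusion cannot come from the same path.

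\textbf{Step 3: orienting the chain (main obstacle).} The real work is to show that, for $j\ge i+2$, the inclusion comes out as $\Delta_{P_{j-1}}\subsetneq\Delta_{P_j}$ rather than the reverse. By Step 1, the two interiors are comparable, because part of $T_{P_j}$ lies in $\Delta_{P_{j-1}}$. I would exclude $\Delta_{P_j}\subsetneq\Delta_{P_{j-1}}$ using the maximal choice of $P_{j-1}$ made at the previous step, as follows.

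\begin{itemize}
\item \emph{Candidacy of $P_j$ at the previous step.} $P_j$ meets $C_j$, hence also $C_{j-1}$ by Step 1. I would check that the trace of $P_j\cap C_{j-1}$ lies in $\Delta_{P_{j-2}}$, so that $P_j$ was itself a candidate when $P_{j-1}$ was selected.
\item \emph{Contradiction with maximality.} I would then combine this with snugness. If $\Delta_{P_j}$ is nested inside $\Delta_{P_{j-1}}$ and also reaches $C_j$, snugness forces a grounded subpath of $P_{j-1}$ or $P_j$ that lies in the inner graph of a cycle and is otherwise disjoint from $\mathcal{C}$. Such a subpath is excluded by the second drained condition, which forbids $C_i$-subpaths hiding in cells carrying edges of $C_i$.
\item \emph{Conclusion.} This forces $\Delta_{P_{j-1}}\subsetneq\Delta_{P_j}$. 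Strictness again follows from $P_{j-1}\neq P_j$ together with laminarity.
\end{itemize}

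The base index $i+1$ needs no comparison with $\Delta_{P_i}$, which matches the statement: the chain is only asserted from $\Delta_{P_{i+1}}$ upward.

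If the extremal-choice argument in Step 3 fails to fix the orientation, my fallback would be to first collect, by Step 2, all paths reachable from $P$ through the drained condition. By laminarity these form a forest under inclusion of interiors. I would then extract from the deepest-to-shallowest branch of this forest a chain indexed by the deepest cycle each path meets. I expect this orientation issue to be the only delicate point; the rest is routine planar bookkeeping.
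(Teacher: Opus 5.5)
Your Step 3 is where the argument breaks, and no cleverer extremal choice can repair it.

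\textbf{A local error.} The candidacy check applies Step 1 backwards. Meeting $C_j$ forces meeting $C_{j'}$ only for $j'\ge j$. Since $C_{j-1}$ lies further in, nothing gives $P_j\cap C_{j-1}\neq\emptyset$.

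\textbf{The main problem.} The inclusion you try to exclude is forced by your own choice of $P_j$. The trace of $P_j\cap C_j$ lies in the open region $\Delta_{P_{j-1}}$, and the traces of $P_j$ and $P_{j-1}$ do not cross. Hence $T_{P_j}$ lies in the closure of $\Delta_{P_{j-1}}$. The region on the other side of $T_{P_{j-1}}$ is connected, misses $T_{P_j}$, and meets the trace of $C_1$. So it lies in the component of $\Delta-T_{P_j}$ that is not $\Delta_{P_j}$. This gives $\Delta_{P_j}\subseteq\Delta_{P_{j-1}}$. The inclusion is strict, because a point of $T_{P_j}$ lies in $\Delta_{P_{j-1}}\setminus\Delta_{P_j}$. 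Snugness and the second drained condition, which only concerns $C_i$-paths inside cells carrying an edge of $C_i$, do not touch this at all.

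\textbf{Your fallback also fails.} A chain with $\Delta_{P_{i+1}}\subsetneq\dots\subsetneq\Delta_{P_s}$ need not exist. Take a drained $3$-well whose paths are $Q_1,Q_2,Q_3$, where $Q_j$ meets $C_j$ but not $C_{j-1}$, and $\Delta_{Q_3}\subsetneq\Delta_{Q_2}\subsetneq\Delta_{Q_1}$. For $P=Q_1$ the only admissible choices are $P_2=Q_2$ and $P_3=Q_3$. But $\Delta_{Q_2}\not\subseteq\Delta_{Q_3}$.

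\textbf{What is actually true.} The chain as displayed is reversed relative to the conventions in force: $C_1$ is innermost, and $\Delta_P$ is the side of $T_P$ away from $C_1$. Every witness supplied by the drained condition for $P$ sits inside the cap $\Delta_P$, so the chain descends. The correct conclusion is $\Delta_{P_s}\subsetneq\dots\subsetneq\Delta_{P_{i+1}}\subsetneq\Delta_{P_i}$. This is also the form in which the paper actually uses the proposition, in \zcref{claim_SimpleLoops} and \zcref{claim_SplittingLoops}.

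\textbf{How to fix your proof.} Your Step 2 alone proves the correct version. Take any witness $Q$ given by the first drained condition for $P_{j-1}$ at index $j-1$. The argument above yields $\Delta_{P_j}\subsetneq\Delta_{P_{j-1}}$, and hence distinctness. No maximal choice, no comparison between unrelated candidates, and no snugness is needed.

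\textbf{One further caveat.} The definition of drained imposes nothing when $|\mathcal{P}|\le 1$, and in that case the proposition is false for $i<s$. So you must assume $|\mathcal{P}|\ge 2$ before invoking the first drained condition.
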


Indeed, simply draining a well is not strong enough for our purposes, we require a well to entirely ``dry''.
Indeed, the fact that the cycles of a well form a snug family allows us to deduce a set of very strong properties for the interaction between the paths in $\mathcal{P}$ and the cycles in $\mathcal{C}$.

We say that a well $\mathcal{W} = (G, \Omega, \rho, \mathcal{C} = \{ C_1, \ldots, C_s\}, \mathcal{P})$ is \emph{dry} if $\mathcal{W}$ is drained and for all $P \in \mathcal{P}$ we have that
\begin{enumerate}
 \item the graph $C_1 \cap P$ is either empty or a single path,

 \item there exists a unique $i \in [s]$ such that $C_i \cap P$ is a single path,

 \item for all $j \in [i-1]$ the graph $C_j \cap P$ is empty, and

 \item for all $j \in [i+1,s]$ the graph $C_j \cap P$ consists of exactly two paths.
\end{enumerate}

\begin{proposition}[Cavallaro, Gorsky, Kreutzer, Thilikos, and Wiederrecht \cite{CavallaroGKTW2026Optimal}]\label{prop_drywell}
 Let $s$ be a positive integer and let $\mathcal{W} = (G, \Omega, \rho, \mathcal{C}, \mathcal{P})$ be an $s$-well.
 Then there exists a dry $s$-well $\mathcal{W}' = (G', \Omega, \rho', \mathcal{C}, \mathcal{P}')$ such that $G' \subseteq G$, $\rho'$ is the rendition of $G'$ induced by $\rho$, $|\mathcal{P}| = |\mathcal{P}'|$, and for each $P \in \mathcal{P}$ there exists a $P' \in \mathcal{P}'$ with the same endpoints.
\end{proposition}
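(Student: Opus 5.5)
We first make the well drained, then tighten how each path meets each cycle, so that the dry conditions follow from snugness. Every modification replaces a subpath of some $P\in\mathcal{P}$ by a subpath of a cycle in $\mathcal{C}$, or deletes edges of $\bigcup\mathcal{P}$ not on $\bigcup\mathcal{C}$. Such operations keep $G'\subseteq G$, keep the endpoints of every path (which lie on $V(\Omega)$), and keep $\rho'$ the rendition induced by $\rho$. The paths must also stay pairwise internally disjoint after each operation. As a termination measure we minimise $|E(\bigcup\mathcal{P})\setminus E(\bigcup\mathcal{C})|$, and break ties by the total area $\sum_P\mathrm{area}(\Delta_P)$.

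\textbf{Step 1 (draining).} Take the family $\mathcal{P}'$ that is lexicographically minimal with respect to this measure. Suppose the first drained condition fails for some $P$ and index $i$: $P$ meets $C_i$, but no other path has its trace meeting $C_{i+1}$ inside $\Delta_P$. Restrict attention to the region of $\Delta_P$ bounded by $P$ and $C_{i+1}$. This region meets no other path, so we reroute the part of $P$ beyond $C_{i+1}$ along an arc of $C_{i+1}$, exactly as for loops in the proof of \zcref{lemma_candlesAreExhaustive}. The rerouted path no longer reaches $C_i$, has the same endpoints, stays disjoint from the other paths, and strictly decreases the measure. Now suppose the second condition fails: some $C_i$-subpath of $P$ lies inside a cell whose $\sigma$ also carries an edge of $C_i$. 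Then we shortcut it along $C_i$ inside that cell, which again decreases the measure. At a minimum neither move applies, so the well is drained.

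\textbf{Step 2 (dryness).} Fix $P$ in the drained well and let $i$ be the smallest index with $C_i\cap P\neq\emptyset$. We claim $P$ meets each $C_j$, $j>i$, in exactly two paths and $C_i$ in exactly one. Suppose $P$ meets $C_j$ in three or more components. Then some subpath of $P$ between two consecutive components is a grounded $C_j$-path lying on one side of $C_j$. If it lies inside the inner graph of $C_j$ and avoids the other cycles, it contradicts snugness of $\mathcal{C}$, which in a well (where $G=\bigcup\mathcal{C}\cup\bigcup\mathcal{P}$) rules out exactly such chords. If it lies outside, or crosses other cycles, we replace it by the corresponding arc of $C_j$. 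This is legal because the region it cuts off from $\Delta_P$'s side contains no other path; otherwise draining plus \zcref{prop_fullbucketsindrainedwells} would give a nested chain forcing a path to cross $P$. The replacement lowers the measure. The same argument at the innermost touched cycle $C_i$ shows that $C_i\cap P$ is a single path. For $j<i$ the intersection is empty by the choice of $i$. Finally, $C_1\cap P$ is empty or a single path, which is a special case of the above.

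\textbf{Main obstacle.} The delicate part is checking that the rerouting in Step 2 preserves internal disjointness and does not undo drainedness. A replacement arc of $C_j$ might be used by other paths that run along the same cycle. I would handle this with the nesting order of the sets $\Delta_P$: always process an innermost offending path first, and argue via \zcref{prop_fullbucketsindrainedwells} that any path occupying the arc would have to lie inside $\Delta_P$ and cross $P$, which is impossible. If this fails, the fallback is to re-run Step 1 after each Step 2 move; the global measure still decreases, so the alternation terminates.
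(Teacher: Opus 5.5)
The paper imports this statement from \cite{CavallaroGKTW2026Optimal} without proof, so I judge your argument on its own. Your framework is the right one: minimise the number of edges of $\bigcup\mathcal{P}$ not lying on $\bigcup\mathcal{C}$, use snugness to forbid inward chords that avoid deeper cycles, and remove outward bumps by rerouting along a cycle. But Step~2 has a genuine gap in the legality of the reroute.

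First, when $P\cap C_j$ has three or more components, the piece between two consecutive components may be an inward dive that meets $C_{j-1}$. Your rule ``if it lies outside, or crosses other cycles, replace it by the arc of $C_j$'' would then replace a dive by an arc of $C_j$. That is exactly the move drainedness says other paths block.

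Second, consider a genuine outward bump $B$ from $u$ to $v$, with $A\subseteq C_j$ the arc such that $B\cup A$ bounds a disc $D$ outside the disc of $C_j$. No other path can enter $D$ through $B$. However, a path $Q$ can still touch the interior of $A$ from inside the disc of $C_j$: it dives through $C_j$, meets $C_{j-1}$, rises to touch $A$, and descends to $C_{j-1}$ again. This $Q$ never crosses $P$; it lies in $\Delta_P$ or on the centre side, depending on which side of $P$ the disc $D$ is on. So neither the nested chain of \zcref{prop_fullbucketsindrainedwells} nor processing an ``innermost'' offending path excludes it. Other parts of $P$ itself can touch $A$ in the same way. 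Replacing $B$ by $A$ then destroys internal disjointness or simplicity. Your fallback of re-running Step~1 only addresses termination, not legality.

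The missing idea is an extremal choice that makes $A$ provably free. Call an \emph{outward bump at level $j$} a subpath of some path of $\mathcal{P}$ joining two consecutive components of its intersection with $C_j$ and lying outside the disc of $C_j$. Over all paths, take an outward bump at the smallest level $j$, and among those one whose disc $D$ is inclusion-minimal.

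A path touching the interior of $A$ cannot continue into $D$, since that would give a smaller level-$j$ bump. So it continues into the disc of $C_j$ on both sides. By snugness both continuations reach $C_{j-1}$, and the piece between these two visits is an outward bump at level $j-1$, a contradiction. For $j=1$, nothing enters the disc of $C_1$ at all. Thus $A$ meets $\bigcup\mathcal{P}$ only in the components of $P\cap C_j$ containing $u$ and $v$, which a shortcut handles, and replacing $B$ by $A$ lowers the count.

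Hence a minimiser has no outward bumps, and dryness follows from snugness alone. Two dives through $C_j$ would create an outward bump at level $j-1$. A second component on the deepest cycle $C_i$ would require either an inward chord avoiding $C_{i-1}$ or an outward bump.

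The first drained condition is easiest to obtain afterwards. With this structure, $P$ crosses $C_{i+1}$ by a single dive, and the arc of $C_{i+1}$ closing it off lies in $\overline{\Delta_P}$. If no other path meets $C_{i+1}$ there, swapping the dive for that arc lowers the count. This also repairs your Step~1. There, a single arc between the first and last contact of $P$ with $C_{i+1}$ can pass through points on the centre side of $P$ that other paths touch.
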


\paragraph{Desolate instances.}
We are now ready to give a full list of the properties we wish to assume for \textbf{Step 2} the first part of \textbf{Step 3}.

Let $G$ be a graph $\mathcal{L}$ be a linkage in $G$, and $H \subseteq G$ be a subgraph of $G$.
We denote by $\mathcal{L}[H]$ the linkage $\mathcal{L}'$ consisting of the collection of all maximal subpaths $L'$ of paths in $\mathcal{L}$ that are entirely contained in $L$.

Let $g,s,d$, and $b$ be non-negative integers.
Let $\mathcal{G} = (G,R,\mathcal{T})$ be a vital instance of $k$\textsc{-Spanning Disjoint Paths}.
We say that $\mathcal{G}$ is \emph{$(g,s,d,b,w)$-desolate} if there exist
\begin{itemize}
    \item a surface $\Sigma$ of Euler-genus at most $g$,
    \item an $s$-shallow $\Sigma$-rendition $\rho$ of $G$ with at most $b$ vortices, all of which have depth at most $d$, and
    \item a collection $\mathcal{C} = \{ C_1,\dots,C_w\}$ of vertex-disjoint cycles in $G$
\end{itemize}
such that, if $\mathcal{L}$ denotes the unique solution of the vital instance $\mathcal{G}$,
\begin{enumerate}
    \item $G = \bigcup \mathcal{C} \cup \bigcup \mathcal{L}$
    \item $\mathcal{C}$ is snug in $\rho$,
    \item no two vortices of $\rho$ intersect on their boundaries,
    \item every terminal vertex is either contained in $\sigma(c)$ for some vortex $c$ of $\rho$ or belongs to $N(\rho)$, and
    \item if $\Delta_w$ denotes the disc bounded by the trace of $C_w$ and $(G_w,\Omega_w)$ is the society obtained from the inner graph $G_w$ of $C_w$ where $\rho_w$ denotes the restriction of $\rho$ to $\Delta_w$ and $G_w$, then $(G_w,\Omega_w,\rho_w,\mathcal{C},\mathcal{L}[G_w])$ is a dry $w$-well.
\end{enumerate}
We say that the rendition $\rho$ \emph{witnesses} the desolation of $(G,R,\mathcal{T})$.

With this, we finally have the correct terminology to talk about the type of instances we are interested in.
The intermediate goal is now to show that and $(g,s,d,b,w)$-desolate instance $(G,R,\mathcal{T})$ of $k$\textsc{-Spanning Disjoint Paths} can be transformed into a $(g,s,d,b,w)$-desolate instance $(G',R',\mathcal{T}')$ of $\poly(g + b + k)$\textsc{-Spanning Disjoint Paths} where every vertex of $R'$ is also a terminal.
Since this will imply an upper bound on $|R'|$ it means that we may completely ignore the ``spanning'' aspect of the problem.
We will use this observation to deduce that, in this case, $w$ must be bounded by $2^{\poly(g + b + k)}$.

\subsection{Bringing desolation to a vital instance}\label{subsec_apexRemoval}
We start by showing that any vital instance of large enough treewidth can be reduced to one that is desolate.
This is, in some sense, the most crucial part of the proof and it is also where we get rid of apices, which is a fundamental step that allows our future arguments.

\begin{lemma}\label{lemma_theDesolationOfAnInstance}
There exist functions $\mathsf{genus}_{\ref{lemma_theDesolationOfAnInstance}},\mathsf{depth}_{\ref{lemma_theDesolationOfAnInstance}},\mathsf{breadth}_{\ref{lemma_theDesolationOfAnInstance}},\mathsf{terminals}_{\ref{lemma_theDesolationOfAnInstance}} \colon \mathbb{N}^2 \to \mathbb{N}$ and $\mathsf{treewidth}_{\ref{lemma_theDesolationOfAnInstance}} \colon \mathbb{N}^3 \to \mathbb{N}$ such that, for every vital instance $\mathcal{G} = (G,R,\mathcal{T})$ of $k$-\textsc{Spanning Disjoint Paths} where $\mathsf{depth}_2(G,R) \leq r$ and every integer $w \geq 1$, one of the following is true
\begin{enumerate}
    \item $\mathsf{tw}(G) \leq \mathsf{treewidth}_{\ref{lemma_theDesolationOfAnInstance}}(r,k,w)$, or
    \item there exists a $(\mathsf{genus}_{\ref{lemma_theDesolationOfAnInstance}}(r,k),r,\mathsf{depth}_{\ref{lemma_theDesolationOfAnInstance}}(r,k),\mathsf{breadth}_{\ref{lemma_theDesolationOfAnInstance}}(r,k),w)$-desolate instance $\mathcal{G}'$ of $k'$\textsc{-Spanning Disjoint Paths} where $k' \leq \mathsf{terminals}_{\ref{lemma_theDesolationOfAnInstance}}(r,k)$.
\end{enumerate}
Moreover, it holds that $\mathsf{genus}_{\ref{lemma_theDesolationOfAnInstance}}(r,k),\mathsf{depth}_{\ref{lemma_theDesolationOfAnInstance}}(r,k),\mathsf{breadth}_{\ref{lemma_theDesolationOfAnInstance}}(r,k),\mathsf{terminals}_{\ref{lemma_theDesolationOfAnInstance}}(r,k) \in \poly(r + k)$, and $\mathsf{treewidth}_{\ref{lemma_theDesolationOfAnInstance}}(r,k,w) \in \poly(r + k + w)$.
\end{lemma}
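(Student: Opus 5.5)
We assume $\mathsf{tw}(G)$ exceeds a suitable polynomial threshold and build the desolate instance in four steps: rule out large cliques, apply the local structure theorem, remove the apices, and finally keep only the solution together with a dry well.

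\textbf{Excluding large cliques.} Since $\mathcal{G}$ is vital, no vertex of $G$ outside $R$ is irrelevant; deleting any vertex destroys the unique solution. By \zcref{lemma_WahtToDoWithClique}, $G$ therefore has no $K_t$-minor for $t \coloneqq \lfloor \nicefrac{5}{2}(2(r+1)^2+2k)\rfloor+1 \in \poly(r+k)$. In particular the first two outcomes of \zcref{thm_localstructure} are excluded for this $t$. A red $K_{t}$ would even give a large $2$-outer-annotated grid.

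\textbf{Local structure.} Next, \zcref{prop_GridThm} gives a $\mathsf{mesh}_{\ref{thm_localstructure}}(r,t,w')$-mesh, where $w'$ is polynomial in $r+k+w$. We apply \zcref{thm_localstructure} to it and obtain:
\begin{itemize}
\item an apex set $A$ with $|A|\in\poly(r+k)$,
\item an $r$-shallow blank $\Sigma$-rendition $\rho$ of $G-A$ of genus and breadth $\poly(r+k)$ and depth $\poly(r+k)$,
\item a flat $w'$-wall inside a vortex-free disc $\Delta$.
\end{itemize}

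\textbf{Removing the apices.} Let $\mathcal{L}$ be the unique solution. Every $a\in A$ lies on $\mathcal{L}$, so we cut $\mathcal{L}$ at each apex: each apex is deleted and its at most two neighbours on $\mathcal{L}$ become new terminals. This yields $k'\le k+2|A|$ terminal pairs in $G-A$. The new instance is still vital, because any alternative spanning linkage in $G-A$ with the new pattern, glued back through the apices, would give an alternative solution of $\mathcal{G}$ that also covers $A$.

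The new terminals must then be made to satisfy item 4 of desolation, i.e.\ lie in vortices or on nodes. Vertices of $G-A$ are nodes or lie inside small cells, so we refine $\rho$ slightly and place terminals on nodes; terminals inside vortices are allowed. Deleting vertices only removes red vertices, so $r$-shallowness and blankness are preserved. Since $|A|+k$ is small and the wall is large, pigeonhole gives a subwall of order $\ge w''$ whose disc avoids all terminals. Its concentric cycles then give $w$ grounded concentric cycles in a vortex-free disc disjoint from $\mathcal{T}$. We make them snug using \zcref{prop:SnugNest}, which only shrinks the disc.

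\textbf{Pruning to the well, and the main obstacle.} We delete every edge not in $\bigcup\mathcal{C}\cup\bigcup\mathcal{L}$. Vitality is preserved, because $\mathcal{L}$ spans $V(G)$, and deleting edges can neither create new solutions nor lose $\mathcal{L}$. The well inside $C_w$ is then $(G_w,\Omega_w,\rho_w,\mathcal{C},\mathcal{L}[G_w])$.

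The main obstacle is dryness. \zcref{prop_drywell} replaces the paths $\mathcal{L}[G_w]$ by paths with the same endpoints inside $G_w$, but the replaced linkage could miss vertices, or could give a second solution. However, the replaced linkage lies in $G_w$, which is blank, so it has the same pattern and covers all red vertices. By uniqueness it therefore coincides with $\mathcal{L}[G_w]$, and so the well is already dry.

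The argument must also ensure that the cycles, after snugging, are still vertex-disjoint from one another and compatible with $\mathcal{L}$. Keeping vortex boundaries pairwise disjoint (item 3) is handled by merging touching vortices, which changes breadth and depth only polynomially. With these checks, all the parameters come out in $\poly(r+k)$ and the treewidth threshold in $\poly(r+k+w)$, as claimed.
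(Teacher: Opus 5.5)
Your overall route is the paper's: exclude a $K_t$-minor via vitality and \zcref{lemma_WahtToDoWithClique}, apply \zcref{thm_localstructure}, cut $\mathcal{L}$ at the apices, choose a terminal-free subwall and snug its layers, prune to $\bigcup\mathcal{C}\cup\bigcup\mathcal{L}$, and get dryness from uniqueness. However, two steps fail as written.

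\textbf{Placing terminals on nodes.} You do this by ``refining $\rho$ slightly'' before pruning, and at that stage such a refinement need not exist. A non-vortex cell $c$ has $|N(c)|\le 3$, but $\sigma(c)$ still contains all non-linkage edges. So it can be an arbitrary graph attached at these nodes, for instance a $K_6$ containing $x$. In that case no rendition without an additional vortex has $x$ as a node.

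The paper avoids this by acting after pruning and truncating rather than refining. The subpath of $\mathcal{L}$ from $x$ to the first node $x'\in N(c)$ has only internal vertices of degree $2$. None of them is red, because $\rho$ is blank. So one deletes this subpath except for $x'$ and replaces $x$ by $x'$ in its terminal pair. If the whole path lies in $\sigma(c)-N(c)$, one discards it instead. This keeps the instance vital and does not increase $k'$.

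Your refinement becomes legitimate if you move it after the pruning. At that point $\sigma(c)$ is just a linear forest of subpaths of $\mathcal{L}$; the well cycles do not enter $c$, by the choice of the subwall.

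\textbf{Item 3 via merging vortices.} Enforcing item 3 by merging touching vortices is incompatible with the stated conclusion. A merged vortex generally needs up to $2r$ segments. Moreover, crops of segments sharing a node are no longer disjoint. The rendition would then cease to be $r$-shallow, and the instance would not be $(\cdot,r,\cdot,\cdot,w)$-desolate as the lemma claims.

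No merging is needed. In the construction behind \zcref{thm_localstructure}, each vortex lies in its own disc $\Delta_v$, and these discs are pairwise disjoint by \zcref{prop:localstructure}. \zcref{thm_CleanVortex} only replaces the interior of an outermost nest cycle inside $\Delta_v$. The paper takes item 3 from this construction.
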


\begin{proof}
We start by giving some insight to the functions involved.
Let 
\begin{align*}
    \theta(r,k) \coloneqq \left\lfloor \frac{5}{2}(2(r+1)^2 + 2k) \right\rfloor + 1.
\end{align*}
We set
\begin{align*}
    \mathsf{genus}_{\ref{lemma_theDesolationOfAnInstance}}(r,k) & \coloneqq 9\theta(r,k)^2 - 1\\
    \mathsf{depth}_{\ref{lemma_theDesolationOfAnInstance}}(r,k) & \coloneqq \mathsf{depth}_{\ref{thm_localstructure}}(\theta(r,k), r)\\ 
    \mathsf{breadth}_{\ref{lemma_theDesolationOfAnInstance}}(r,k) & \coloneqq \nicefrac{3}{2} \cdot ( \theta(r,k) -1) \cdot (3 \theta(r,k) -4) + r(r-1)-3\\
    \mathsf{terminals}_{\ref{lemma_theDesolationOfAnInstance}}(r,k) & \coloneqq k + 2 \cdot \mathsf{apex}_{\ref{thm_localstructure}}(\theta(r,k) , r)\\
    \mathsf{treewidth}_{\ref{lemma_theDesolationOfAnInstance}}(r,k,w) & \coloneqq \mathsf{mesh}_{\ref{prop_GridThm}} \left( \mathsf{mesh}_{\ref{thm_localstructure}}(r, \theta(r,k), \left\lceil \sqrt{2\mathsf{apex}_{\ref{thm_localstructure}}(\theta(r,k) , r) + 2k + 1} \right\rceil \cdot 2(w+2)) \right).
\end{align*}

Now, we first observe that, under the assumption that $\mathcal{G}$ is vital, due to \zcref{lemma_WahtToDoWithClique} we know that $G$ does not contain $K_{\theta(r,k)}$ as a minor.

In case $\mathsf{tw}(G) \leq \mathsf{treewidth}_{\ref{lemma_theDesolationOfAnInstance}}(r,k,w)$ we are done immediately, so we may assume that $\mathsf{tw}(G) > \mathsf{treewidth}_{\ref{lemma_theDesolationOfAnInstance}}(r,k,w)$.
Hence, by an application of \zcref{prop_GridThm}, we get that there exists a $\mathsf{mesh}_{\ref{thm_localstructure}}(r, \theta(r,k), 2 \cdot \left\lceil \sqrt{2\cdot \mathsf{apex}_{\ref{thm_localstructure}}(\theta(r,k) , r) + 2k + 1} \right\rceil \cdot (w+2))$-mesh $M_0$ in $G$.
Moreover, by our assumption we have that $\mathsf{depth}_2(G,R) \leq r$.

So now, if we apply \zcref{thm_localstructure} to $(G,R)$ and $M_0$, we know that the first two outcomes of \zcref{thm_localstructure} are impossible since $G$ does not contain a $K_{\theta(r,k)}$-minor.
Thus, there exists a set $A \subseteq V(G)$ with $|A| \leq \mathsf{apex}_{\ref{thm_localstructure}}(\theta(r,k), r)$ and a surface $\Sigma$ of genus less than $9\theta(r,k)^2$ such that $(G - A, R \setminus A)$ has a blank and $r$-shallow $\Sigma$-rendition $\rho$ with breadth at most $\nicefrac{3}{2} \cdot (\theta(r,k)-1) \cdot (3\theta(r,k)-4) + r(r-1)-3$, depth at most $\mathsf{depth}_{\ref{thm_localstructure}}(\theta(r,k), r)$, and there exists a vortex-free, $\rho$-aligned disc $\Delta \subseteq \Sigma$ such that the restriction of $\rho$ to $\Delta$ contains a flat $2 \cdot \left\lceil \sqrt{2\cdot \mathsf{apex}_{\ref{thm_localstructure}}(\theta(r,k) , r) + 1} \right\rceil \cdot (w+2)$-wall $M_1$.
\smallskip

Since $\mathcal{G}$ is a vital instance, there exists a unique $\mathcal{T}$-linkage $\mathcal{L}$ in $G$.
Let $G_1$ be obtained from $G$ by deleting every edge $e$ incident with a vertex of $A$ but not in $E(\mathcal{L})$.
It follows that every vertex of $A$ has either degree $2$ or degree $1$ if it is a vertex of $\mathcal{T}$.
We return to the apex set in a moment, but first, we fix our attention to the flat wall $M_1$.

Notice that $M_1$ contains $2\cdot \mathsf{apex}_{\ref{thm_localstructure}}(\theta(r,k) , r) + 2k + 1$ subwalls, each of order $2(w+2)$ such that the compasses of these subwalls are pairwise vertex-disjoint.
It follows, that there must exist one of those subwalls, let us call it $M_2$, such that its compass does not contain a neighbour of any vertex in $A$ in $G_1$, nor a terminal from $\mathcal{T}$.
Let $\mathcal{C}_0 = \{ C^0_1,\dots,c^0_{w+2}\}$ be the collection of cycles obtained by iteratively choosing $C_{w+3 - j}$ to be the perimeter of the subwall $M_2^j$ obtained by removing the perimeter from $M_2^{j-1}$ for all $j \in [w+2]$ where $M_2^0 \coloneqq M_2$.
Then, any subpath of a path in $\mathcal{L}$ that starts in a neighbour of $A$ in $G_1$ and contains a vertex of $C_{w+1}$ must contain a node of $C_{w+2}$.

Next, consider the disc $\Delta^1$ bounded by the trace of $C_{w+2}$ in $\Sigma$.
By applying \zcref{prop:SnugNest} to the cycles $\{ C_1^0,\dots,C_{w+1}^0\}$, we obtain a new collection $\mathcal{C}_1 = \{ C^1_1,\dots,C^1_{w+1}\}$ of concentric cycles that are snug in $\rho$ and whose traces are contained within $\Delta^1$.
\smallskip

Let now $G_2$ be obtained from $G_1$ by deleting all edges that do not belong to $E(\mathcal{L})$ or $E(\mathcal{C}_1)$.
Let $\rho_1$ be the restriction of $\rho$ to $G_2$.
Due to \zcref{thm_localstructure} and our construction above, we already have that
\begin{enumerate}
    \item $G_2 = \bigcup \mathcal{L} \cup \bigcup \mathcal{C}_1$,
    \item $\mathcal{C}_1$ is snug in $\rho_1$, and
    \item no two vortices of $\rho_1$ intersect on their boundaries.
\end{enumerate}
However, we still have apices, some terminals might be in the interior of non-vortex cells, and we need to ensure that we can use $\mathcal{C}_1$ to create a dry well.
\smallskip

Let $A = \{ a_1,\dots,a_{\ell}\}$.
We iterate over all $i \in [\ell]$ and perform the following actions.
For each $i \in [\ell]$ let $L \in \mathcal{L}$ be the path containing $a_i$ and let $x$ and $y$ be the endpoints of $L$.
Moreover, for both $z \in \{ x,y\}$ let $L_z$ be the component of $L - a_i$ containing $z$ and let $z'$ be the other endpoint of $L_z$.
Then $z'$ is a neighbour of $a_i$.
Note that, in case $a_i$ is an endpoint of $L$, one -- or both -- of $x$, $y$ might not exist.
We then delete $a_i$ from $G_2$ and replace $(x,y) \in \mathcal{T}$ with the (up to) two terminal pairs $(x,x')$ and $(y,y')$.
Moreover, we replace $L$ in $\mathcal{L}$ with the paths $L_x$ and $L_y$.

Let $G_3 = G_2 - A$ and let $\mathcal{T}_1$ and $\mathcal{L}_1$ be the set of terminal pairs and the corresponding linkage in $G_3$ after all $\ell$ iterations of the above process have been completed.
Notice that $(G_3,R\setminus A,\mathcal{T}_1)$ is still a vital instance.

We have now reached a point where $\rho_1$ is a $\Sigma$-rendition of $G_3$ and no longer have any apex vertices.
Moreover, note that $|\mathcal{T}_1| \leq |\mathcal{T}| + 2|A|$.
\smallskip

Next, let $x$ be any terminal from a pair $(x,y) \in \mathcal{T}_1$ such that there is a non-vortex cell $c$ of $\rho_1$ and $x \in V(\sigma(c)) \setminus N(c)$.
Let $L \in \mathcal{L}_1$ be the path with endpoint $x$.
Recall out previous observation that any subpath of some path in $\mathcal{L}$ starting in $A$ and ending in a vertex of $C^1_{w+1}$ must contain a node of $C^1_{w+2}$.
By our construction, this is still true in $G_3$ in the sense that, any subpath of a path in $\mathcal{L}_1$ starting in a terminal and ending on $C^1_{w+1}$ must contain a node of $C^1_{w+2}$.
Hence, we know that either $L \in \sigma(c)\setminus N(c)$, or there exists an $x$-$N(c)$-path $L' \subseteq L$ in $\sigma(c)$ such that all internal vertices of $L'$ have degree $2$ in $G_3$.
In the first case, we may simply remove $L$ from $G_3$ and $\mathcal{L}_1$, and delete $(x,y)$ from $\mathcal{T}_1$ where $y$ is the other endpoint of $L$.
Since, in this case, $L$ is disjoint from all cycles in $\mathcal{C}_1$, this deletion does not interfere with our goal of finding a desolate instance.
In the second case, let $x' \in N(c)$ be the other endpoint of $L'$.
Now delete all internal vertices together with $x$ from $G_3$ and $L$, and replace $(x,y)$ in $\mathcal{T}_1$ with $(x',y)$.
As a result, one more terminal has become a node of $\rho_1$.
Let $G_4$ be the resulting graph and let $\rho_2$ be the restriction of $\rho_1$ to $G_4$.
Moreover, let $\mathcal{T}_2$ and $\mathcal{L}_2$ the new set of terminal pairs and the new linkage respectively.
Note that these alterations still maintain the fact that $(G_4,R \setminus A,\mathcal{T}_2)$ is a vital instance and $|\mathcal{T}_2| \leq |\mathcal{T}_1| \leq k + 2|A|$.
\smallskip

Finally, let $\Delta_w$ be the disc bounded by the trace of $C^1_w$ and let $(G_w,\Omega_w)$ be the corresponding society induced by $\Delta_w$ and the inner graph $G_w$ of $C^1_w$.
Let $\mathcal{P}_1 = \mathcal{L}_2[G_w]$ be the collection of all maximal subpaths of paths from $\mathcal{L}_2$ in $G_w$.
Since $\mathcal{C}_1$ is snug and $G_w$ does not contain any of the terminals from $\mathcal{T}_2$ by our construction it follows that $\mathcal{W}_1 = (G_w,\Omega_w,\rho_2,\mathcal{C}_1,\mathcal{P}_1)$ is in fact a $w$-well.
We now apply \zcref{prop_drywell} to find a dry $w$-well $(G_w',\Omega_w,\rho_3,\mathcal{C}_1,\mathcal{P}_2)$ such that for each $P \in \mathcal{P}_1$ there is a path $P' \in \mathcal{P}_2$ with he same endpoints.
Finally, if we were to replace the subpaths in $\mathcal{P}_1$ with those in $\mathcal{P}_2$ in $\mathcal{L}_2$, this would lead to a new linkage connecting the terminals in $\mathcal{T}_2$ and spanning $R \setminus A$.
Since $(G_4,R \setminus A,\mathcal{T}_2)$ is a vital instance, this operation cannot change $\mathcal{L}_2$ and thus, $\mathcal{W}_1$ was a dry well to begin with.
Hence, we have now entered the stage where $(G_4,R\setminus A,\mathcal{T}_2)$ is a $(\mathsf{genus}_{\ref{lemma_theDesolationOfAnInstance}}(r,k),r,\mathsf{depth}_{\ref{lemma_theDesolationOfAnInstance}}(r,k),\mathsf{breadth}_{\ref{lemma_theDesolationOfAnInstance}}(r,k),w)$-desolate instance of $k'$\textsc{-Spanning Disjoint Paths} where $k' \leq k + 2|A| \leq \mathsf{terminals}_{\ref{lemma_theDesolationOfAnInstance}}(r,k)$ as desired.
\end{proof}

\subsection{Entering the wasteland}\label{subsec_simpleLoops}
With \zcref{lemma_theDesolationOfAnInstance} we now know that any vital instance of large enough treewidth can be transformed into a desolate one at the cost of increasing the number of terminals but without changing the Euler-genus of the surface.
Moreover, it is worth pointing out that the parameter $w$ in \zcref{lemma_theDesolationOfAnInstance} which dictates the size of the well does not appear in $\mathsf{terminals}_{\ref{lemma_theDesolationOfAnInstance}}(r,k)$ and also does not have any influence on the number or depth of the vortices, or on the Euler-genus of the surface.
This means, vital instances of larger and larger treewidth can be transformed into desolate instances with deeper and deeper wells without introducing any additional cost on any of the other parameters involved.
Thus, ultimately, all we have to do towards proving \zcref{thm_VitalSpanningLinkage} is to show that desolate instances cannot have arbitrarily deep wells.

Towards this goal, we next want to prove that any desolate instance of $k$\textsc{-Spanning Disjoint Paths} can be transformed into a vital instance of $\poly(r + k)$\textsc{-Disjoint Paths} with a large dry well.
It will then easily follow from \zcref{prop:VitalLinkage} that such an instance cannot exist, thereby completing the proof of \zcref{thm_VitalSpanningLinkage}.
\smallskip

We say that a pair $(G,\mathcal{T})$ where $G$ is a graph and $\mathcal{T} = \{ (s_i,t_i) \mid i \in [k] \}$ is a set of $k$-terminal pairs is a \emph{vital instance} of $k$\textsc{-Disjoint Paths} if there exists a unique $\mathcal{T}$-linkage $\mathcal{L}$ in $G$ and $V(\mathcal{L}) = V(G)$.

Let $g,d$, and $b$ be non-negative integers.
Let $\mathcal{G} = (G,\mathcal{T})$ be a vital instance of $k$\textsc{-Disjoint Paths}.
We say that $\mathcal{G}$ is \emph{$(g,d,b,w)$-barren} if there exist
\begin{itemize}
    \item a surface $\Sigma$ of Euler-genus at most $g$,
    \item a $\Sigma$-rendition $\rho$ of $G$ with at most $b$ vortices, all of which have depth at most $d$, and
    \item a collection $\mathcal{C} = \{ C_1,\dots,C_w\}$ of vertex-disjoint cycles in $G$
\end{itemize}
such that, if $\mathcal{L}$ denotes the unique solution of the vital instance $\mathcal{G}$,
\begin{enumerate}
    \item $G = \bigcup \mathcal{C} \cup \bigcup \mathcal{L}$
    \item $\mathcal{C}$ is snug in $\rho$,
    \item no two vortices of $\rho$ intersect on their boundaries,
    \item every terminal vertex is either contained in $\sigma(c)$ for some vortex $c$ of $\rho$ or belongs to $N(\rho)$, and
    \item if $\Delta_w$ denotes the disc bounded by the trace of $C_w$ and $(G_w,\Omega_w)$ is the society obtained from the inner graph $G_w$ of $C_w$ where $\rho_w$ denotes the restriction of $\rho$ to $\Delta_w$ and $G_w$, then $(G_w,\Omega_w,\rho_w,\mathcal{C},\mathcal{L}[G_w])$ is a dry $w$-well.
\end{enumerate}
We say that the $\Sigma$-rendition $\rho$ \emph{witnesses} that $\mathcal{G}$ is barren.
Moreover, we refer to $(G_w,\Omega_w,\rho_w,\mathcal{C},\mathcal{L}[G_w])$ as the \emph{well of $\rho$}.

Our next goal is to prove the following lemma.

\begin{lemma}\label{lemma_FromDesolationToBarren}
There exist functions $\mathsf{well}_{\ref{lemma_FromDesolationToBarren}}\colon\mathbb{N}^3 \to \mathbb{N}$ and $\mathsf{terminals}_{\ref{lemma_FromDesolationToBarren}} \colon \mathbb{N}^4 \to \mathbb{N}$ such that for all non-negative integers $k,g,s,d,b$, and $w$, if there exists a $(g,s,d,b,w')$-desolate instance of $k$\textsc{-Spanning Disjoint Paths} where $w' \geq \mathsf{well}_{\ref{lemma_FromDesolationToBarren}}(k,g,s,d,w)$, then there also exists a $(g,d,b,w)$-barren instance $(G,\mathcal{T})$ with terminal set $T$ of $\mathsf{terminals}_{\ref{lemma_FromDesolationToBarren}}(k,g,s,b)$\textsc{-Disjoint Paths} such that $\mathsf{bidim}(G,T) \leq 2k + b$.
Moreover, $\mathsf{terminals}_{\ref{lemma_FromDesolationToBarren}}(k,g,s), \mathsf{well}_{\ref{lemma_FromDesolationToBarren}}(k,g,d,w) \in 2^{\poly(k + d)} \cdot w$.
\end{lemma}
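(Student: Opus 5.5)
Our plan is to get rid of the annotation by turning every red vertex that still matters into a terminal, and to show that only boundedly many red vertices matter. Start from a $(g,s,d,b,w')$-desolate instance $(G,R,\mathcal{T})$ with witness $\rho$, unique solution $\mathcal{L}$, and snug cycles $\mathcal{C}=\{C_1,\dots,C_{w'}\}$. Since $\rho$ is blank, every red vertex lies in some vortex. Since $\rho$ is $s$-shallow, each of the at most $b$ vortices splits into at most $s$ segments, each either deep (no red vertices) or facial. So we only need to handle red vertices in the at most $bs$ facial segments $(G_{S_i},\Lambda_{S_i})$. Each of these has a vortex-free rendition in a disc and, by \zcref{thm_FacialSocieties}, an ultimate frontier $\gamma_i$. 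All red vertices of the segment lie in the outer graph of $\gamma_i$, and there is no $V(\gamma_i)$-candle.

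First we replace the interior of each vortex by a bounded structure. Inside a facial segment, the paths of $\mathcal{L}$ enter and leave through $V(\Lambda_{S_i})$, and the vortex has depth at most $d$, so only $\mathbf{O}(d)$ subpaths of $\mathcal{L}$ cross each segment boundary. The key claim is that the absence of $V(\gamma_i)$-candles forces the red vertices to be visited in a nested, ``face-like'' order. We then cut every path of $\mathcal{L}$ at its first and last red vertex within each facial region, together with the points where it crosses $\gamma_i$. We argue that, apart from $\poly(k+d)$ many cut points per segment, the remaining red vertices lie on subpaths whose endpoints are forced no matter how the rest of the solution is rerouted: they sit on degree-$2$ stretches on the outer side of the frontier, where no alternative linkage can reach them. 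Those forced stretches can be contracted into single edges without affecting vitality.

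The surviving cut points become new terminal pairs obtained by splitting paths of $\mathcal{L}$, exactly as we split at apices in \zcref{lemma_theDesolationOfAnInstance}. Once every remaining red vertex is a terminal, the spanning condition is automatic. The instance is then a vital instance of $\mathsf{terminals}(k,g,s,b)$\textsc{-Disjoint Paths}, and it keeps $\rho$, the vortex bounds, and the snug cycles. This makes it barren, provided the well stays dry. To make sure it does, we take $w'$ larger than $w$ by an amount $2^{\poly(k+d)}$. The new terminals all live in vortices, which are disjoint from the inner graph of $C_{w'}$. Terminals that are not exactly nodes can be pushed to nodes as in the proof of \zcref{lemma_theDesolationOfAnInstance}. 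We may discard outer cycles where needed, and reapplying \zcref{prop_drywell} together with vitality restores dryness.

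Finally, we bound the bidimensionality of $(G,T)$ by $2k+b$. All terminals lie on at most $b$ vortex boundaries, either on the original $2k$ terminals or clustered at the vortices. A grid model whose branch sets all contain terminals can use at most one row's worth of branch sets per vortex and per original terminal, which gives the bound. We expect the main obstacle to be the claim in the second paragraph: that candle-freeness of the facial segments implies all but $\poly(k+d)$ red vertices lie on forced, contractible stretches. This requires a careful uncrossing argument that combines the frontier with the depth bound. We would first try an induction on the frontier pieces $\gamma_1,\dots,\gamma_\ell$. In that induction, a non-forced red vertex would let us build a $V(\gamma)$-candle out of two solution subpaths and contradict facialness.
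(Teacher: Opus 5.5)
\textbf{The gap.} The step you flag yourself cannot be closed along the lines you suggest. In a desolate instance $G=\bigcup\mathcal{C}\cup\bigcup\mathcal{L}$, and the cycles of $\mathcal{C}$ are grounded and lie in a vortex-free disc. So inside every vortex the graph is just a disjoint union of subpaths of $\mathcal{L}$, and every vertex of degree at least $3$ lies on $\bigcup\mathcal{C}$.

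Hence \emph{every} red vertex already sits on a degree-$2$ stretch of $\mathcal{L}$, and candle-freeness of the facial segments tells you nothing about the solution. The number of red-carrying stretches is the number of excursions of $\mathcal{L}$ beyond the ultimate frontiers, and no function of $k,g,s,d,b$ bounds it. The depth bound controls transactions across a segment, not how often $\mathcal{L}$ enters a facial region through distinct boundary nodes, for instance via short red loops or red links that never approach the well.

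Worse, contracting such a stretch $u\cdots v$ into an edge does not preserve vitality. Uniqueness of $\mathcal{L}$ holds only among linkages that cover $R$. Once those red vertices are gone, nothing excludes a linkage with the same pattern that simply does not use the edge $uv$, rerouting through the cycles of $\mathcal{C}$. Your justification (``no alternative linkage can reach them'') is therefore beside the point. The only way to keep such a stretch mandatory in a \textsc{Disjoint Paths} instance is to make its ends terminals, which reintroduces a dependence on $|R|$. So your plan loses either vitality or the bound $\mathsf{terminals}(k,g,s,b)$.

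\textbf{The missing idea.} Your endgame matches the paper's: turn every surviving red vertex into a terminal and split paths as in \zcref{lemma_theDesolationOfAnInstance}. What is missing is how to get there with boundedly many terminals. The paper never bounds red vertices or excursions globally. It bounds only the pieces of $\mathcal{L}$ that reach deep into the well, and then throws everything else away.

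Concretely, it passes to the true rendition, whose vortices are the deep segments. It cuts holes at the vortices, the terminals and the ultimate frontiers, giving a surface with at most $2k+sb$ boundary components, and splits $\mathcal{L}$ into arcs between these holes. Then:
\begin{itemize}
\item Arcs in red-free regions (simple loops, common links and loops) cannot go deep. Stacked ones, supplied by \zcref{prop_fullbucketsindrainedwells}, form a large bramble in a vital red-free sub-instance, contradicting \zcref{prop:VitalLinkage}.
\item Red loops are blocked by the frontier.
\item For splitting loops, complex loops, red links and mixed links, many deep arcs in one homotopy class yield a terminal-free strip containing a $(4\times 4)$-mesh built from the arcs and the well cycles. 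Rerouting inside this mesh covers the same red vertices with strictly fewer vertices, contradicting $V(\mathcal{L})=V(G)$.
\end{itemize}
With at most $4sb+6g-6$ homotopy classes by \zcref{prop_TypeCounting}, only boundedly many arcs meet $C_w$.

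One then deletes everything except $C_1,\dots,C_w$ and these arcs, and makes the arcs' endpoints terminals. The only red vertices left are terminals. Vitality transfers by splicing any alternative linkage into the untouched remainder of $\mathcal{L}$, which still covers all other red vertices. This deletion step is what makes the terminal count independent of $|R|$. In your proposal the well cycles serve only to restore dryness, not as the rerouting infrastructure that forces the bound.
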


To proceed with the proof of \zcref{lemma_FromDesolationToBarren}, we require some additional terminology.

\paragraph{More on shallow renditions.}
Let $(G,R,\mathcal{T})$ be a $(g,s,d,b,w)$-desolate instance of $k$\textsc{-Spanning Disjoint Paths} and let $\rho$ be a $\Sigma$-rendition witnessing the desolation of $(G,R,\mathcal{T})$.
Let $\mathcal{V}$ denote the set of all vortices of $\rho$ and for each vortex $v \in \mathcal{V}$ let $S^v_1,\dots,S^v_{s_v}$ be a partition of the vortex society $(\sigma(v),\Lambda^v)$ into deep and facial segments.
For each $i \in [s_v]$, we denote by $(G^v_i,\Lambda^v_i)$ the $S^v_i$-crop of $(\sigma(v),\Lambda^v)$ and, in case $S^v_i$ is facial, we denote by $\gamma^v_i$ an ultimate frontier of $(G^v_i,\Lambda^v_i)$ in the facial rendition $\rho^v_i$ of $(G^v_i,\Lambda^v_i)$.

Notice that there exists a $\Sigma$-rendition $\rho^{\star}$ with at most $b \cdot \nicefrac{s}{2}$ vortices, each of depth at most $d$, such that $\rho$ is a restriction of $\rho^{\star}$ and for each $v \in \mathcal{V}$ and $i \in [s_v]$ where $S^v_i$ is a facial segment, $\rho^v_i$ is the restriction of $\rho^{\star}$ to $(G^v_i,\Lambda^v_i)$.
Indeed, note that we may also assume that every terminal which belongs to some $G^v_i$ where $S^v_i$ is a facial segment is a node of $\rho^{\star}$.
This is, because here we may perform the same reduction as before in the last part of the proof of \zcref{lemma_theDesolationOfAnInstance} while possibly removing some vertices of $R$.
We refer to $\rho^{\star}$ as a \emph{true rendition} of $(G,R,\mathcal{T})$ obtained from $\rho$.
The vortices of $\rho^{\star}$ can be seen to correspond precisely to the deep segments of the vortices of $\rho$.

\paragraph{Links and loops.}
The most technical part of the proof of \zcref{lemma_FromDesolationToBarren} will be to show that the number of maximal grounded subpaths of the unique solution $\mathcal{L}$ that interact with the ultimate frontiers of some facial segments of some vortices while also entering the well is bounded by a polynomial function only depending on the Euler-genus of the surface, the number and depth of the vortices, and the number of facial and deep segments in the shallow rendition.
Indeed, in order to show this it suffices to control only those subpaths start have at least one end on the ultimate frontier of some facial segment, and those that enclose discs which contain neither other vortices, nor terminals, nor other segments.
This is, because once these subpaths are controlled, all other subpaths are guaranteed to contain only terminals but no vertices of $R$.
This means that, once we reach this state, we have found an instance where all of $R$ has been reduced to occur only as endpoints of a bounded number of paths, meaning that we are now allowed to forgo the ``spanning condition'' of our \textsc{Spanning Disjoint Paths} problem entirely and focus only on the underlying \textsc{Disjoint Paths} problem.
\smallskip

To this end let $\mathcal{G}=(G,R,\mathcal{T})$ be a $(g,s,d,b,w)$-desolate instance of $k$\textsc{Spanning Disjoint Paths} whose desolation is witnessed by the $\Sigma$-rendition $\rho$.
Let also $\rho^{\star}$ be a true rendition of $\mathcal{G}$ obtained from $\rho$.
Let $G'$ be obtained from $G$ by deleting $\sigma_{\rho^{\star}}(v) - N(v)$ for every vortex of $\rho^{\star}$.
Let $X$ denote the set of all vertices of $N(\rho^{\star})$ that are either terminals, belong to the boundary of some vortex, or to some ultimate frontier of one of the facial segments of $\rho$.
and let $\mathcal{L}^{\star}$ denote the set of all maximal $\rho$-grounded subpaths of paths in $\mathcal{L}$ with both endpoints $X$ but with no interior vertices in $X$.
We call the paths in $\mathcal{L}^{\star}$ the \emph{arcs} of $\mathcal{L}$.
It follows that every path in $\mathcal{L}^{\star}$ is grounded in $\rho^{\star}$.
We are now going to classify the paths in $\mathcal{L}^{\star}$ into different types.
For this, we slightly modify $\Sigma$ by removing the interiors of all closures of the vortices of $\rho^{\star}$ as well as the interiors of $\rho^{\star}$-aligned discs $\Delta_x$ of radius $\varepsilon > 0$ for some small $\varepsilon$ such that, for every terminal $x \in N(\rho^{\star})$ where $x$ is not on the boundary of a vortex, $x$ is the only intersection of $\Delta_x$ with $\rho^{\star}$.
Finally, for every ultimate frontier $\gamma^v_i$ of $\rho^{\star}$, we remove the interior of a $\rho^{\star}$-aligned disk $\Delta^i_v$ which intersects $\rho^{\star}$ precisely in $N(\gamma^v_i)$.
Let $\Sigma^{\star}$ be the resulting surface.
As a consequence, we now have that every path in $\mathcal{L}^{\star}$ now has both endpoints on the boundary of $\Sigma^{\star}$.
Moreover, the number of boundary components of $\Sigma^{\star}$ is at most $2k + sb$.
We refer to $\Sigma^{\star}$ as the \emph{true surface} under $\rho^{\star}$.
\smallskip

A path $L \in \mathcal{L}^{\star}$ is a \emph{loop} if both of its endpoints belong to the boundary of the same boundary component of $\Sigma^{\star}$.
Otherwise we say that $L$ is a \emph{link}.
See \zcref{fig_LinksAndLoops} for an illustration.

\begin{figure}[ht]
 \centering
 \begin{tikzpicture}

 \pgfdeclarelayer{background}
		\pgfdeclarelayer{foreground}
			
		\pgfsetlayers{background,main,foreground}

 \begin{pgfonlayer}{background}
 \pgftext{\includegraphics[width=9cm]{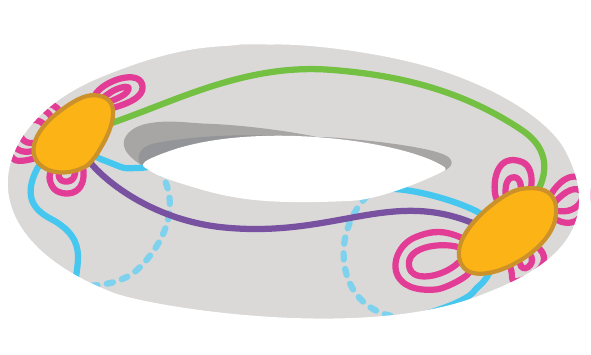}} at (C.center);
 \end{pgfonlayer}{background}
			
 \begin{pgfonlayer}{main}
 \node (C) [v:ghost] {};
 
 \end{pgfonlayer}{main}
 
 \begin{pgfonlayer}{foreground}
 \end{pgfonlayer}{foreground}

 \end{tikzpicture}
 \caption{A sketch of a torus-rendition $\rho$ of some graph with two vortices -- the \textcolor{DarkBananaYellow}{yellow} areas -- and examples of different types of loops and links.}
 \label{fig_LinksAndLoops}
\end{figure}

\textbf{Loops.}
We say that a loop $L$ of $\mathcal{L}^{\star}$ is \emph{simple} if both of its endpoints belong to the boundary of one of the original vortices of $\rho^{\star}$ and, if we were to identify the boundary containing the endpoints of $L$ into a single point, its trace bounds a disc in $\Sigma^{\star}$ which contains no boundary component.

A loop $L$ of $\mathcal{L}$ is called \emph{red} if there exists an ultimate frontier $\gamma^v_i$ such that both endpoints of $L$ belong to $N(\gamma^v_i)$ and the trace of $L$ together with some segment of $\gamma^v_i$ bound a disc in $\Sigma^{\star}$ which contains no other boundary component of $\Sigma^{\star}$ nor a terminal.

A loop $L$ that has both ends on an ultimate frontier $\gamma^v_i$, and the closed curve obtained from the union of $\gamma^v_i$ and the segment $\varphi$ of $\gamma^v_i$ between the endpoints of $L$ is contractible in $\Sigma^{\star}$, but $\varphi$ contains a terminal is called \emph{splitting}.

A loop $L$ is \emph{complex} if it has both endpoints on an ultimate frontier of $\rho^{\star}$ and is neither red nor splitting.
Any other loop that is neither simple, red, splitting, or complex is called \emph{common}.

Notice that every loop of $\mathcal{L}^{\star}$ is either simple, red, complex, or common.
Moreover, a loop is complex or common if and only if it forms -- possibly together with a vortex boundary or a segment of some ultimate frontier -- a non-contractible curve in $\Sigma^{\star}$
See \zcref{fig_TypesOfLoops} for an illustration of different types of loops.

\begin{figure}[ht]
 \centering
 \begin{tikzpicture}

 \pgfdeclarelayer{background}
		\pgfdeclarelayer{foreground}
			
		\pgfsetlayers{background,main,foreground}

 \begin{pgfonlayer}{background}
 \pgftext{\includegraphics[width=9cm]{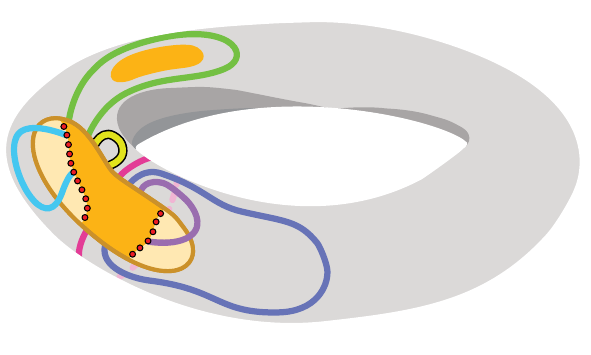}} at (C.center);
 \end{pgfonlayer}{background}
			
 \begin{pgfonlayer}{main}
 \node (C) [v:ghost] {};
 
 \end{pgfonlayer}{main}
 
 \begin{pgfonlayer}{foreground}
 \end{pgfonlayer}{foreground}

 \end{tikzpicture}
 \caption{A sketch of a shallow torus-rendition of some graph. The \textcolor{DarkBananaYellow}{yellow} areas mark the vortices where the lighter areas are those segments with final frontierts -- marked by the \textcolor{BostonUniversityRed}{red} vertices. The different colours mark different kinds of loops.}
 \label{fig_TypesOfLoops}
\end{figure}

\textbf{Links.}
Any path $L$ in $\mathcal{L}^{\star}$ that is not a loop is a link.
However, there are still several types of links.
A link is \emph{simple} if it runs between two terminal vertices both of which do not belong to a vortex boundary or an ultimate frontier or between one terminal vertex that does not belong to a vortex boundary or an ultimate frontier and some node on a vortex boundary or an ultimate frontier.
Notice that the number of simple links is at most $k$ and every link with an endpoint on the boundary of one of the discs $\Delta_x$ is simple.

A link is \emph{red} if both of its endpoints belong to ultimate frontiers of $\rho^{\star}$.
Notice that a red link might even have both endpoints on the same ultimate frontier if it forms a non-contractible curve.

A link is \emph{mixed} if one of its endpoints belongs to an ultimate frontier of $\rho^{\star}$ and the other endpoint belongs to the boundary of some vortex of $\rho^{\star}$.

Any link that is neither simple, red, or mixed is \emph{common}.
\medskip

We now have everything in place for the proof of \zcref{lemma_FromDesolationToBarren}.
To give another brief overview:
Our strategy is to first show that, by sacrificing some amount of cycles from the well, we may assume that no simple or red loop intersects any of the remaining cycles of the well.
This will then imply that we may as well assume that no simple or red loops exist.
In a second and third step, we then show that the total number of red and mixed links as well as complex loops are bounded.
When this is establish it will finally imply that we may assume that $|R|$ is bounded and thus, we have reached a point where we are looking at a regular instance of \textsc{Disjoint Paths} instead of its spanning variant.

\begin{proof}[Proof of \zcref{lemma_FromDesolationToBarren}]
Our goal is to step by step transform $(G,R,\mathcal{T})$ into the desired instance of \textsc{Disjoint Paths}.
Towards this goal let $\rho^{\star}$ be a true rendition of $(G,R,\mathcal{T})$ and consider the arcs $\mathcal{L}^{\star}$ of the unique solution $\mathcal{L}$ of $(G,R,\mathcal{T})$.
Moreover, let $\Sigma^{\star}$ denote the true surface under $\rho^{\star}$.

Let us set up the two functions involved:
\begin{align*}
\mathsf{well}_{\ref{lemma_FromDesolationToBarren}}(k,g,s,d,w) & \coloneqq 2\beta(2k + d,\lceil\sqrt{ 2k + d }\rceil) + 2\beta(2k + 2d,\lceil\sqrt{ 2k + 2d }\rceil) + w
\\
\mathsf{terminals}_{\ref{lemma_FromDesolationToBarren}}(k,g,s) & \coloneqq 2\cdot (4sb + 6g - 6) \cdot (8(2k + 1) + 2\beta(2k + 2d,\lceil\sqrt{ 2k + 2d }\rceil) - 4),
\end{align*}
where $\beta$ denotes the function from \zcref{prop:VitalLinkage}.
Let $w' \coloneqq \mathsf{terminals}_{\ref{lemma_FromDesolationToBarren}}(k,g,s)$.

\paragraph{Taking care of simple loops.}
We first are going to remove the simple loops.
For this let $L \in \mathcal{L}^{\star}$ be a simple loop such that $L \cap C_{w'} \neq \emptyset$.
Let $v$ be the vortex of $\rho^{\star}$ where $L$ has both endpoints.
Moreover, let us denote by $\Delta_L$ the disk bounded by $L$ and a segment $\psi$ of the boundary of $v$ which does not contain any terminals or vortices of $\rho^{\star}$.

\begin{claim}\label{claim_SimpleLoops}
The simple loop $L$ is disjoint from $C_{w' - 2\beta(2k + d,\lceil\sqrt{ 2k + d }\rceil) - 1}$.
\end{claim}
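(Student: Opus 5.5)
We argue by contradiction: we assume that the simple loop $L$ meets $C_{w'-2\beta(2k+d,\lceil\sqrt{2k+d}\rceil)-1}$ and extract a vital instance of ordinary \textsc{Disjoint Paths} with few terminals and large treewidth, which contradicts \zcref{prop:VitalLinkage}.

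First we set up the relevant region. Let $\Delta_L$ be the disc bounded by the trace of $L$ and the segment $\psi$ of the boundary of $v$; it contains no terminal, no other vortex, and no ultimate frontier. Since the well is dry, every cycle $C_j$ that $L$ meets, with $j$ between $w'-2\beta(\cdot)-1$ and $w'$, intersects $\Delta_L$ in a collection of arcs with both ends on $L$. By \zcref{prop_fullbucketsindrainedwells} applied inside the drained well, $L$ is nested around a chain of further paths of $\mathcal{L}^{\star}$ that cross successively deeper cycles inside $\Delta_L$. Each of these paths lies in $\Delta_L$ and has both ends on $\psi$, so each is again a simple loop at $v$. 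So we get at least $2\beta(2k+d,\lceil\sqrt{2k+d}\rceil)+1$ nested arcs of cycles together with the loop segments crossing them.

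Next we form the instance. Let $H$ be the subgraph of $G$ drawn in $\Delta_L$ together with the part of $\sigma(v)$ between the endpoints of $L$ (only the portion attached to $\psi$). Because $\Delta_L$ contains no facial segment or ultimate frontier, every vertex of $R$ in $H$ lies in the vortex $v$, i.e.\ in a deep segment; in fact it lies only on the boundary nodes, since $\rho^{\star}$ has blank non-vortex cells. We cut $H$ off along $\psi$ and along the at most $d$ paths of $\mathcal{L}$ that leave the vortex portion through a separator of $\sigma(v)$ of size at most $d$ (depth at most $d$ yields such an isolator by Menger). The restriction of $\mathcal{L}$ to $H\cup C$-arcs is then a linkage whose terminals are the two ends of $L$, the crossing points with the isolator, and the at most $2k$ further endpoints, in total at most $2k+d$. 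Since $\mathcal{L}$ is the unique spanning solution and all red vertices of $H$ are forced (they are covered by any rerouting that keeps the pattern: we add them as terminals of degree-$2$ subdivided pieces, or argue that rerouting inside $\Delta_L$ away from $\psi$ cannot drop red vertices since none lie there), the restriction is a vital linkage in the graph induced by it. The bidimensionality of its terminal set is at most $\lceil\sqrt{2k+d}\rceil$, because it has at most $2k+d$ terminals.

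Finally we bound the treewidth from below. As in the proof of \zcref{lemma_CandleLoopsAreNotDeep}, the nested loops $L_i$ together with the arcs $A_i$ of the cycles form a bramble in which every vertex lies in at most two members. Its order is therefore larger than $\beta(2k+d,\lceil\sqrt{2k+d}\rceil)$, and \zcref{prop_brambles} gives treewidth larger than $\beta$. This contradicts \zcref{prop:VitalLinkage}. The main obstacle is the middle step: verifying that vitality of the spanning instance passes to vitality of the restricted plain instance. In particular, any alternative linkage in $H$ with the same pattern must still cover the red vertices in the vortex part, and we handle this by making the red vertices in $\sigma(v)\cap H$ terminals or by cutting them off using the shallow (deep or facial) structure.
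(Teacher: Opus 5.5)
You have the same architecture as the paper: nested simple loops from \zcref{prop_fullbucketsindrainedwells}, an isolator $Z$ of size at most $d$ in the vortex, restriction of $\mathcal{L}$ to a graph with at most $2k+d$ endpoints, a bramble built from loops and cycle arcs, and a contradiction with \zcref{prop:VitalLinkage}. The gap is in the step you yourself flag as the main obstacle: you never establish that vitality passes to the restricted instance, and the picture you base it on is wrong.

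\textbf{Red vertices.} You assert that red vertices of $H$ lie in the vortex $v$, ``i.e.\ in a deep segment''. By the definition of a shallow rendition, a deep segment is exactly one whose crop contains \emph{no} vertex of $R$. The vortices of the true rendition $\rho^{\star}$ correspond precisely to the deep segments, so $v$ is red-free. Moreover, $\rho$ is blank, so its nodes and non-vortex cells are red-free. Finally, $\Delta_L$ contains no boundary component of $\Sigma^{\star}$, so it contains no ultimate frontier, no facial region, no terminal and no other vortex. Together these give the fact the paper uses: the graph $G_L$ (inner graph of $\Delta_L$ plus $Z$ plus the components of $\sigma(v)-Z$ meeting $\psi$) contains no vertex of $R$ at all. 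Vitality then transfers immediately. Any other linkage in $G_L$ with the same pattern can be swapped into $\mathcal{L}$ without losing a red vertex, which would give a second spanning $\mathcal{T}$-linkage.

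\textbf{Your fallbacks.} Turning red vertices of $\sigma(v)\cap H$ into terminals fails on its own terms. Their number is unbounded, so you would lose $|T|\le 2k+d$ and could no longer invoke $\beta(2k+d,\lceil\sqrt{2k+d}\rceil)$. Your other option only covers $\Delta_L$, not the vortex part where you believe the red vertices sit. Your remark that $\rho^{\star}$ has blank non-vortex cells is also false: the cells coming from facial segments carry red vertices. What matters is only that none of them lies in $\Delta_L$ or in $v$.

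\textbf{Terminal count.} If you genuinely cut $H$ along $\psi$, each of the at least $2\beta(2k+d,\lceil\sqrt{2k+d}\rceil)+1$ nested loops contributes two endpoints on $\psi$, and the bound $2k+d$ fails. The point of adding the components of $\sigma(v)-Z$ that meet $\psi$ is to glue these loops into few paths. Every path of $\mathcal{L}[G_L]$ then ends in $Z$ or at a terminal, which gives at most $k+d/2$ paths and $|T|\le 2k+d$. In that setting the two ends of $L$ are interior vertices of such paths, not terminals, so they do not belong in your count.
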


\begin{claimproof}
Suppose towards a contradiction that $L$ intersects $C_{w' - 2\beta(2k + d,\lceil\sqrt{ 2k + d }\rceil) - 1}$.

Notice that, since the vortex $v$ has depth at most $d$, there exists a set $Z \subseteq V(\sigma_{\rho^{\star}}(v))$ of size at most $d$ that separates $S$ from the complement of the vortex society of $v$.
Let $G_L$ be the union of the inner graph of $\Delta_L$ together with the set $Z$ and all components of $\sigma_{\rho^{\star}}(v) - Z$ that contain a vertex of $S$.
Consider the linkage $\mathcal{L}_L \coloneqq \mathcal{L}[G_L]$ and notice that $|\mathcal{L}| \leq k + \nicefrac{d}{2}$ since each path $P$ in $\mathcal{L}_L$ either has both endpoints in $Z$, or at least one endpoint of $P$ is a terminal from $\mathcal{T}$.

Let $T$ be the set of all endpoints of the paths in $\mathcal{L}_L$.
Now we may make a couple of observations.

First, notice that every path of $\mathcal{L}^{\star}$ that is contained in the inner graph of $\Delta_L$ is a subpath of some path from $\mathcal{L}_L$ and must also be a simple loop.

Second, no vertex of $G_L$ belongs to $R$.

Third, $\mathcal{L}_L$ is a vital $T$-linkage in $G_L$ since otherwise, we could find a different $T$-linkage $\mathcal{Q}$ with the same pattern in $G_L$.
By swapping the paths in $\mathcal{L}_L$ with those in $\mathcal{Q}$ within $\mathcal{L}$ we would then find a new spanning $\mathcal{T}$-linkage in $G$ with would contradict the vitality of $(G,R,\mathcal{T})$.
Hence, if we let $\mathcal{T}'$ be the collection of all endpoint pairs of the paths in $\mathcal{L}_L$, we get that $(G_L,\mathcal{T}')$ is a vital instance of $|\mathcal{L}_L|$\textsc{-Disjoint Paths}.

By construction, we have that $|\mathcal{L}_L| \leq k + \nicefrac{d}{2}$, which means that $|T| \leq 2k + d$.

Now recall that, by \zcref{prop_fullbucketsindrainedwells} and the definition of desolate instances, we know that there are paths $Q_{w'},Q_{w' - 1},\dots,Q_{w' - j} = L$ where $j \geq 2\beta(2k + d,\lceil\sqrt{ 2k + d }\rceil) + 1$ in $\mathcal{L}^{\star}$ such that each of them is a simple loop with both endpoint in the boundary of $v$ and such that, $\Delta_{Q_{i'}} \subsetneq \Delta_{Q_{i}}$ for all $i' \geq i \in [w'-j,w']$.
Indeed, this means that we have $j \geq 2\beta(2k + d,\lceil\sqrt{ 2k + d }\rceil) + 1$ such paths.
Moreover, for each $i \in [w'-j,w']$ we have that $Q_i$ intersects $C_{i'}$ for every $i' \geq i$.
Even stronger, since the well of $(G,R,\mathcal{T})$ in $\rho$ is dry, we know that for each $i \in [w'-j,w']$ there exists a subpath $U_i \subseteq C_i$ such that $Q_{i'}$ intersects $U_i$ for every $i' \in [w'-j,i]$ and $U_i \subseteq G_L$.

With this information, we are going to build a bramble in $G_L$ as follows:
For each $i \in [w'-j,w]$ let $B_i \coloneqq U_i \cup Q_i$ and let $\mathcal{B} \coloneqq \{ B_i \mid i\in[w'-j,w'] \}$.
From the construction it follows that any two $B,B' \in \mathcal{B}$ have a non-empty intersection and thus, $\mathcal{B}$ is indeed a bramble.
Moreover, any vertex of $G_L$ is contained in at most $2$ elements of $\mathcal{B}$.
Hence, the order of $\mathcal{B}$ is at least $\nicefrac{j}{2} \geq \beta(2k + d,\lceil\sqrt{ 2k + d }\rceil) + \nicefrac{1}{2}$.
By \zcref{prop_brambles} and \zcref{prop:VitalLinkage} this, however, implies that $(G_L,\mathcal{T}_L)$ contains an irrelevant vertex, thereby contradicting the vitality of $(G_L,\mathcal{T}_L)$.
Hence, $L$ must be disjoint from $C_{w' - 2\beta(2k + d,\lceil\sqrt{ 2k + d }\rceil) - 1}$.
\end{claimproof}

\paragraph{The case of red loops.}
Next, we will take care of the red loops.
For red loops, we bound on how deep they can go into the well of $\rho$ is much stronger.
Let now $L \in \mathcal{L}^{\star}$ be a red loop such that $L \cap C_{w'} \neq \emptyset$ and let $\gamma$ be the ultimate frontier where both endpoints of $L$ lie.
Moreover, let us denote by $\Delta_L$ the disk bounded by $L$ and a segment $\psi$ of $\gamma$ which does not contain any terminals or vortices of $\rho^{\star}$.

\begin{claim}\label{claim_RedLoops}
The red loop $L$ is disjoint from $C_{w' - 1}$.
\end{claim}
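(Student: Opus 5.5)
Suppose, towards a contradiction, that the red loop $L$ meets $C_{w'-1}$. Since the well is dry, \zcref{prop_fullbucketsindrainedwells} gives an arc $Q$ of $\mathcal{L}^{\star}$ with $\Delta_Q\subsetneq\Delta_L$ that meets $C_{w'}$. I will argue that $Q$ is also a red loop on the same ultimate frontier $\gamma$. Every path of $\mathcal{L}^{\star}$ inside $\Delta_L$ has both ends on $\psi\cap N(\gamma)$, because $\Delta_L$ contains no terminal, no vortex and no other boundary component of $\Sigma^{\star}$. By drynessotherwise , $L\cap C_{w'}$ and $Q\cap C_{w'}$ are single paths and $L\cap C_{w'-1}$ consists of exactly two paths. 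Hence $C_{w'}\cap\Delta_L$ contains a subpath $U$ that crosses $\Delta_L$ between the two components of $L\cap C_{w'-1}$ and meets $Q$.

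The contradiction comes from rerouting $\mathcal{L}$ inside $\Delta_L$ so that the solution changes while still spanning $R$, contradicting vitality. The key fact is that the ultimate frontier admits no $V(\gamma)$-candle, and $\gamma$ is already a boundary of $\Sigma^{\star}$. Since $\gamma\cap\psi$ is facial and every red vertex of this part lies in the outer graph of $\gamma$, the red vertices are visited outside $\Delta_L$ or on $\psi$ itself. I would then proceed in three steps.
\begin{enumerate}
\item Take $L'$ to be $L$ with its innermost excursion, the part meeting $C_{w'-1}$, replaced by a subpath of $U\subseteq C_{w'}$. This is possible because the internal vertices of $U$ between its two crossings with $L$ can only be used by arcs nested inside $\Delta_L$, and these are red loops on $\psi$.
\item Reroute those nested loops in parallel, peeling them one level outward. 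As in the proof of \zcref{lemma_candlesAreExhaustive}, this uses the facts that $G=\bigcup\mathcal{C}\cup\bigcup\mathcal{L}$ and that the cycles are snug.
\item Check that the vertices of $\psi\cap N(\gamma)$ are still covered and that the new linkage has the same pattern. Endpoints stay fixed, and the vertices of $L$ that are dropped are not red, since they lie on grounded arcs in the interior of $\Sigma^{\star}$ and $\rho$ is blank.
\end{enumerate}

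I expect the main obstacle to be the rerouting when every vertex of $U$ is already occupied by deeper red loops. In that case I would induct on the depth of nesting: take the innermost red loop $L_0$ in $\Delta_L$ meeting $C_{w'}$. Inside the $L_0$-centre there is no other path of $\mathcal{L}$, so the part of $C_{w'}$ inside $\Delta_{L_0}$ is unused. Replacing the peak of $L_0$ by this subpath of $C_{w'}$ changes $\mathcal{L}$, preserves spanning (all of $R$ is outside or on $\gamma$), and leaves the vertices of $C_{w'}$ that $L_0$ no longer uses uncovered. Since these are not red, this is the forbidden alternative solution, which contradicts vitality. Hence no red loop reaches $C_{w'-1}$, which is the claim. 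The only extra care needed is that the vertices freed on $L_0$ lie on $C_{w'}$ and are not covered by another path; this does not affect spanning because they are not red.
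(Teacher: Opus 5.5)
There is a genuine gap: the rerouting you rely on does not exist in the actual configuration.

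\emph{The dryness picture is reversed.} The nest is indexed with $C_1$ innermost. By dryness, an arc that reaches $C_{w'-1}$ meets $C_{w'}$ in \emph{two} paths, while an arc that only touches $C_{w'}$ meets it in a single path. You state the opposite, and this is what creates your picture of a free stretch of $C_{w'}$.

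\emph{There is nothing to shortcut along.} An innermost arc $L_0$ meeting $C_{w'}$ meets it in one subpath. So no part of $C_{w'}$ lies strictly inside $\Delta_{L_0}$, and ``replacing the peak of $L_0$'' by $L_0\cap C_{w'}$ changes nothing. More basically, the instance is vital, so $V(\mathcal{L})=V(G)$ and no vertex of $C_{w'}$ is ever unused.

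\emph{Steps 1 and 2 have no room to work.} The vertices of $U$ are occupied by the nested arcs. Since $G=\bigcup\mathcal{C}\cup\bigcup\mathcal{L}$ and every cycle of $\mathcal{C}$ lies inside the well, nothing in $\Delta_L$ outside $C_{w'}$ could receive those arcs when you ``peel'' them outward.

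\emph{Sanity check.} Your last step never uses that $L$ reaches $C_{w'-1}$. It would therefore also forbid a red loop that merely touches $C_{w'}$, which the claim allows. Also, with only one cycle of slack, no mesh-type rerouting could be available; compare \zcref{claim_SplittingLoops}, which needs $18k+8$ cycles.

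The missing idea is a connectivity argument, not a rerouting argument. It is where the ultimate frontier is really used, rather than only to locate $R$.

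Let $G_L$ be the inner graph of $\Delta_L$ together with the cells attached to $\gamma$ along $\psi$. Because $\gamma$ is an ultimate frontier (no $V(\gamma)$-candle, red vertices only in cells with at most two nodes on $\gamma$), every path from $G_L$ to $G-G_L$ passes through $V(L)$. Because $\Delta_L$ contains no terminal, every path of $\mathcal{L}$ meeting $G_L$ has both terminals outside $G_L$.

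Now suppose $L$ meets $C_{w'-1}$. Then \zcref{prop_fullbucketsindrainedwells} yields another arc $Q\subseteq G_L$ meeting $C_{w'}$.

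Consider the path of $\mathcal{L}$ through $Q$. Its segment from one terminal to $Q$ must meet $V(L)$, and so must its segment from $Q$ to the other terminal. The internal vertices of $L$ already have degree $2$ in $\mathcal{L}$, so this path is the path $L'\supseteq L$. But $V(L)$ is a contiguous segment of $L'$, so it lies on one side of $Q$ only, a contradiction.

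This is the paper's proof. Equivalently, the vertices of $C_{w'}$ between the two crossings of $L$ could not be covered at all, contradicting $V(\mathcal{L})=V(G)$.
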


\begin{claimproof}
As before, we may assume that $L$ intersects $C_{w' - 1}$
Let $G_L$ be the inner graph of $\Delta_L$ together with $\sigma_{\rho^{\star}}(c)$ for all cells whose closure intersects the ultimate frontier $\gamma$ in at least two nodes.
Notice that, since $\gamma$ is an ultimate frontier, every path of $G$ that starts in $G_L$ and ends in $G - G_L$ must contain a vertex of $L$.
This implies that there exists some path $L' \in \mathcal{L}$ such that every path $Q' \in \mathcal{L}^{\star}$ that is contained in $G_L$ is a subpath of $L'$.

Now, since $L$ intersects $C_{w' - 1}$, and by \zcref{prop_fullbucketsindrainedwells} and the definition of desolate instances, there must exist some path $Q \in \mathcal{L}^{\star}$ which intersects $C_{w'}$ and which is entirely contained in $G_L$.
Then, by our observation above, $Q$ and $L$ are both subpaths of the same path $L' \in \mathcal{L}$.
Moreover, since every path starting in $G_L-L$ and ending in $G-G_L$ must intersect $L$ we know that there exists a maximal path $P \subseteq L'$ in $G_L$ which contains both $L$ and $Q$.
Thus, $L$ and $P$ can share at most one endpoint.
In fact, $L$ and $P$ must share an endpoint since every internal vertex of $L$ must have degree $2$ in $L'$ and every path from $G_l - L$ to $G -G_L$ must intersect $L$.
Let $x \in V(P)$ be the endpoint of $P$ that does not belong to $L$.
Then there exists a subpath $J$ of $L'$ starting in $x$ and ending in an endpoint $t$ of $L'$, which is a terminal, such that $J$ is internally disjoint from $P$.
Since $t \notin V(G_L)$ because $G_L$ does not contain any terminals, $J$ must contain a vertex of $L$.
However, by choice of $x$, this vertex cannot be an endpoint of $L$ which contradicts the choice of $J$ in the first place and our claim follows.
\end{claimproof}

\paragraph{When a loop splits the terminals.}
Our next concern are splitting loops.
Let $L \in \mathcal{L}^{\star}$ be a splitting loop such that $L \cap C_{w'} \neq \emptyset$ and let $\gamma$ be the ultimate frontier where both endpoints of $L$ lie.
Let further $\varphi$ be the segment of $\gamma$ between the endpoints of $L$.
Moreover, let us denote by $\Delta_L$ the disk bounded by $L$ and a segment $\psi$ of $\gamma$ which does not contain any terminals or vortices of $\rho^{\star}$.

Our strategy to deal with this case is similar to the strategies we will apply for the cases of red and mixed links.
So in some sense, the case of splitting loops is the most crucial.

A core property we have used in previous iterations was, that whenever we have a loop that intersects some cycle $C_{j}$ of the well, then there actually exist $w' - j + 1$ many distinct loops all ``stacked'' in the same disc and each intersecting one cycle of the well fewer than the previous.
This is really what \zcref{prop_fullbucketsindrainedwells} says about wells, and it follows because we may extend the disc $\Delta_{w'}$ defined by the trace of $C_{w'}$ along the trace of $L$ and the curve $\varphi$ in order to get a new well which contains all of the loops caught in the disc $\Delta_L$.
Applying \zcref{prop_drywell} to this well and using the fact that our instance is vital implies that each loop individually can intersect any cycle of the well in at most two subpaths.

\begin{claim}\label{claim_SplittingLoops}
The splitting loop $L$ is disjoint from $C_{w' - 18k - 8}$.
\end{claim}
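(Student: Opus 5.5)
We argue by contradiction and assume that $L$ meets $C_{w'-18k-8}$. As described in the paragraph preceding the claim, we first set up a stack of loops. Extending the disc $\Delta_{w'}$ along the trace of $L$ and the curve $\varphi$ gives a new well whose paths are the arcs of $\mathcal{L}^{\star}$ caught in $\Delta_L$. Applying \zcref{prop_drywell} to this well, together with vitality, shows it is already dry. Then \zcref{prop_fullbucketsindrainedwells} yields a nested family $L = Q_{w'-j},\dots,Q_{w'}$ with $j \geq 18k+8$. Each $Q_i$ has both ends on $\gamma$, meets $C_i$ in one path and every $C_{i'}$ with $i'>i$ in exactly two paths, and $\Delta_{Q_i} \subsetneq \Delta_{Q_{i+1}}$. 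Each $Q_i$ is a loop on $\gamma$ whose closed curve with the corresponding segment $\varphi_i$ of $\gamma$ is contractible. So every $Q_i$ is either red or splitting. By \zcref{claim_RedLoops}, a red loop cannot reach $C_{w'-1}$, so all $Q_i$ with $i \le w'-2$ are splitting. Hence each segment $\varphi_i$ contains a terminal, and these segments are nested along $\gamma$.

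The key step is a pigeonhole argument on the terminals. There are at most $2k$ terminals, and each terminal on $\gamma$ separates $\gamma$ into nested intervals. I expect that between any two consecutive terminals in the nested order of the $\varphi_i$, only a constant number of loops can occur, around $9$. So I would partition the indices $i$ according to which terminal is newly added when passing from $\varphi_i$ to $\varphi_{i+1}$, or whether none is. Since there are at least $18k+8$ loops and at most $2k$ terminals, we obtain a run of about $9$ consecutive loops $Q_i,\dots,Q_{i+8}$ such that the regions $\Delta_{Q_{i+8}}\setminus\Delta_{Q_i}$, cut off by $\varphi_{i+8}\setminus\varphi_i$, contain no terminal.

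In such a terminal-free band, we contradict vitality in the same way as in \zcref{claim_RedLoops}. Every path leaving the band must cross one of the $Q$'s, because $\gamma$ is an ultimate frontier and there is no $V(\gamma)$-candle. So the arcs inside the band all belong to the same path $L'\in\mathcal{L}$. Following $L'$ from an inner loop out to its terminal endpoint, it must cross an outer loop at an interior vertex, which is impossible since interior vertices of arcs have degree $2$. Alternatively, one can reroute using the two paths in which the $Q$'s meet the well cycles $C_i$. This rerouting keeps every red vertex covered, because all red vertices lie in the outer graph of $\gamma$ and the rerouted portion only swaps which arc traverses them. It therefore produces a second spanning linkage, which contradicts vitality.

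The main obstacle is making the band argument precise. The terminals inside $\varphi_i$ may be endpoints of other paths that enter the disc. So one must show that a terminal-free band is connected to the outside only through its bounding loops and through the segment of $\gamma$. The frontier's candle-freeness is what guarantees this, and the constant $18k+8$ comes from the count of $2k$ terminals times $9$, plus slack for red loops. If this connectivity step fails, I would instead build a bramble from the $Q_i\cup U_i$ inside the band and invoke \zcref{prop:VitalLinkage}, as in \zcref{claim_SimpleLoops}. That route gives a weaker, $\beta$-dependent bound.
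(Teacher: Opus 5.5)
Your setup (the $L$-bay made dry by vitality, the nested stack from \zcref{prop_fullbucketsindrainedwells}, all deep loops being splitting by \zcref{claim_RedLoops}) agrees with the paper. So does the pigeonhole producing a terminal-free band of nine consecutive loops whose arcs all lie on one path $L'$. The gap is in the step that is supposed to yield the contradiction.

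Your first argument copies the proof of \zcref{claim_RedLoops}, but that proof needs the region to be bounded by a \emph{single} loop plus a segment of $\gamma$. Then $L'$ can reach the region only at the two ends of that loop, must traverse the loop between them, and so can never visit anything inside. Your band is bounded by two loops, $Q_i$ and $Q_{i+8}$, so it has four access points. $L'$ can enter by traversing $Q_{i+8}$ and then snake through $Q_{i+7},\dots,Q_{i+1}$, alternately left-to-right and right-to-left, passing from one loop to the next along the red side of $\gamma$. It can then traverse $Q_i$ and leave towards its terminal. No interior vertex of an arc is ever crossed, so there is no contradiction. This serpentine configuration is exactly what has to be excluded, and connectivity alone cannot exclude it.

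Your alternative, rerouting so that it ``only swaps which arc traverses'' the red vertices, names no mechanism. All arcs in the band lie on the single path $L'$, so there is nothing to swap between paths, and any reroute must still traverse every stretch of $L'$ that carries red vertices. The missing idea is the paper's mesh rerouting. Red loops in the band avoid $C_{w'-1}$, so four consecutive loops of the group, together with minimal subpaths of four well cycles, form a $(4\times 4)$-mesh $M$. This mesh meets $\bigcup\mathcal{L}$ only in the shortest subpath $Q'$ of $L'$ through these four loops. $Q'$ splits into four pieces inside $M$, alternating with three connectors $Q_2,Q_4,Q_6$ outside $M$, and these connectors contain all red vertices of $Q'$. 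Using the vertical paths of $M$, one builds $Q^\star\subseteq Q'\cup M$ with the same ends that visits the connectors in the order $Q_2,Q_6,Q_4$ and skips part of the third vertical path. Since vitality gives $V(M)\subseteq V(Q')$, this is a second $R$-spanning solution on fewer vertices, contradicting uniqueness.

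Your fallback via \zcref{prop:VitalLinkage} is also unavailable. Unlike the region in \zcref{claim_SimpleLoops}, the band is not red-free, because the connectors of $L'$ run through red vertices. So it does not give a vital instance of plain \textsc{Disjoint Paths}, and appealing to a spanning version would be circular.
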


\begin{claimproof}
Towards a contradiction, we assume that $L$ intersects $C_j$ with $j \leq w' - 18k - 8$.

Let us consider the following graph.
Given a loop $L$ as above, we let $G_L^{\star}$ be the graph obtained from the union of all paths from $\mathcal{L}^{\star}$ in $G_L$ together with all cycles in $\mathcal{C}$.
Let $\mathcal{L}_L$ be the set of all paths from $\mathcal{L}^{\star}$ in $G_L$.
Let further $\rho_L$ be the restriction of $\rho$ to $G_L^{\star}$.
Notice that $\rho_L$ can be understood as a rendition of $G_L^{\star}$ in a disc $\Delta_L^{\star}$ such that the nodes on the boundary of $\Delta_L^{\star}$ belong to $L$, $\varphi$ or $C_{w'}$, giving rise to a cyclic ordering $\Omega_L$.
Finally notice that $\mathcal{W}_L = (G_L^{\star},\Omega_L,\rho_L,\mathcal{C},\mathcal{L}_L)$ is a $w'$-well.
We call $\mathcal{W}_L$ the \emph{$L$-bay}.
Because the $L$-bay is dry due to \zcref{prop_drywell} and the vitality of our instance, we know that any loop in $G_L^{\star}$ intersects any cycle of $\mathcal{C}$ in at most two subpaths.

Since $L$ is a splitting loop, we know that every path in $\mathcal{L}_L$ must be either a red or a splitting loop.
By \zcref{claim_RedLoops} we know that none of the red loops can intersect $C_{w' - 1}$.

With $j \leq w' - 18k - 8$ we know by \zcref{prop_fullbucketsindrainedwells} that there are loops $P_{w'},P_{w'-1},\dots,P_{w' - 18k -7},P_{w' - 18k - 8} \in \mathcal{L}_L$ such that $P_i \cap C_i$ is non-empty for all $i \in [w'-18k-8,w']$ and we have $\Delta_{P_{w'}} \subsetneq \dots, \subsetneq \Delta_{P_{w' - 18k - 8}}$.

For every $i \in [2k + 1]$ let $\mathcal{P}_i \coloneqq \{ P_{i'} \mid i' \in [w' - (i-1)9 , w' - (i-1)9 - 8] \}$.
Then $|\mathcal{P}_i| = 9$ for all $i \in [2k + 1]$.
Moreover, for every $i \in [2k+1]$ if we let $\varphi^1_{i}$ and $\varphi^2_{i}$ be the two disjoint segments of $\gamma$ which both join one end of $P_{w' - (i-1)9}$ to an end of $P_{w' - (i-1)9 - 8}$, then the traces of $P_{w' - (i-1)9}$ and $P_{w' - (i-1)9 - 8}$ together with $\varphi^1_i$ and $\varphi^2_i$ form a closed curve bounding a disc $\Delta_i$.
See \zcref{fig_SplittingLoops} for an illustration of the situation.

\begin{figure}[ht]
 \centering
 \begin{tikzpicture}

 \pgfdeclarelayer{background}
		\pgfdeclarelayer{foreground}
			
		\pgfsetlayers{background,main,foreground}

 \begin{pgfonlayer}{background}
 \pgftext{\includegraphics[width=7cm]{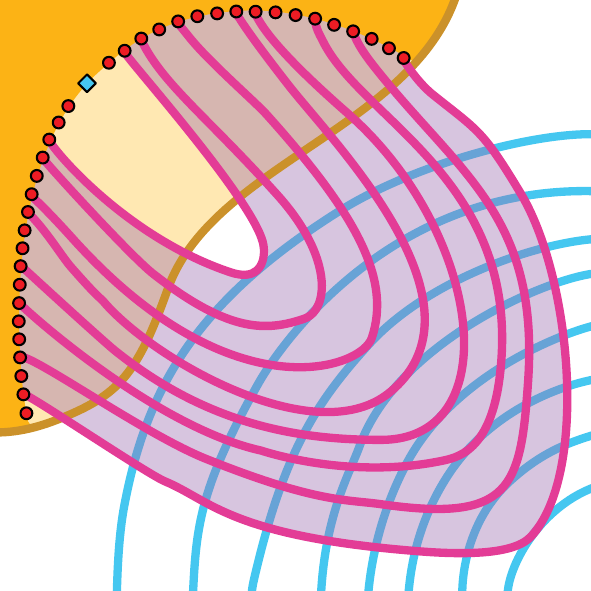}} at (C.center);
 \end{pgfonlayer}{background}
			
 \begin{pgfonlayer}{main}
 \node (C) [v:ghost] {};
 
 \end{pgfonlayer}{main}
 
 \begin{pgfonlayer}{foreground}
 \end{pgfonlayer}{foreground}

 \end{tikzpicture}
 \caption{A sketch of a family of splitting loops. The \textcolor{Amethyst}{purple} shaded area marks the disc $\Delta_i$.}
 \label{fig_SplittingLoops}
\end{figure}

Notice that two such discs $\Delta_i$ and $\Delta_{i'}$ may only intersect in their boundaries and there, only in nodes that are shared endpoints of paths from $\mathcal{P}_i$ and $\mathcal{P}_{i'}$.

Since we have $2k + 1$ such discs, there must be one, say $\Delta_i$, whose inner graph does not contain a terminal.
Let $G_i$ be the inner graph of $\Delta_i$ under $\rho_L$ in $G_L^{\star}$.
Notice that, by our choices of the paths in $\mathcal{P}_i$ and the fact that $G_i$ does not contain a terminal, it follows that no path of $\mathcal{L}_L \setminus \mathcal{P}_i$ that is contained in $G_i$ can be a splitting loop.
Hence, all such paths must be red loops and therefore disjoint from $C_{w'-1}$.

Next observe that there must be a subpath $Q$ of some path $L' \in \mathcal{L}$ such that $Q$ contains every path of $\mathcal{L}_L$ that belongs to $\Delta_i$ -- including the paths from $\mathcal{P}_i$.
Let us select the paths $P_{w' - (i-1)9 - 5}$, $P_{w' - (i-1)9 - 6}$, $P_{w' - (i-1)9 - 7}$, and $P_{w' - (i-1)9 - 8}$ and let $\Delta_i'$ be the disc obtained from $\Delta_i$ by cutting along the trace of $P_{w' - (i-1)9 - 5}$ and keeping the part whose inner graph contains $P_{w' - (i-1)9 - 6}$.
Let further $G_L'$ be the restriction of $G_L^{\star}$ to $\Delta_i'$ and $Q'$ be the shortest subpath of $Q$ that contains $P_{i'}$ for $i' \in[w' - (i-1)9 - 5,w' - (i-1)9 - 8]$.
Since, for each $i' \in[w' - (i-1)9 - 5,w' - (i-1)9 - 8]$, $P_{i'}$ intersects $C_{i'}$ and all $C_{i''}$ with $i'' \geq i'$, there exist subpaths $F_{j'}$ of $C_{j'}$ for $j' \in [w' - (i-1)9 - 1,w' - (i-1)9 - 4]$:
\begin{itemize}
    \item $F_{j'}$ is minimal with the property that it intersects each $P_{i'}$, $i' \in [w' - (i-1)9 - 5,w' - (i-1)9 - 8]$ in precisely one subpath,
    \item $F_{j'}$ intersects $P_{w' - (i-1)9 - 5}$ and $P_{w' - (i-1)9 - 8}$ precisely in its endpoints, and
    \item $\bigcup_{i' \in[w' - (i-1)9 - 5,w' - (i-1)9 - 8]} P_{i'} \cup \bigcup_{j' \in [w' - (i-1)9 - 1,w' - (i-1)9 - 4]} F_{j'}$ contains a unique $(4 \times 4)$-mesh $M$.
    See \zcref{fig_MeshInLoops} for an illustration of the construction of $M$.
\end{itemize}

\begin{figure}[ht]
 \centering
 \begin{tikzpicture}

 \pgfdeclarelayer{background}
		\pgfdeclarelayer{foreground}
			
		\pgfsetlayers{background,main,foreground}

 \begin{pgfonlayer}{background}
 \pgftext{\includegraphics[width=8cm]{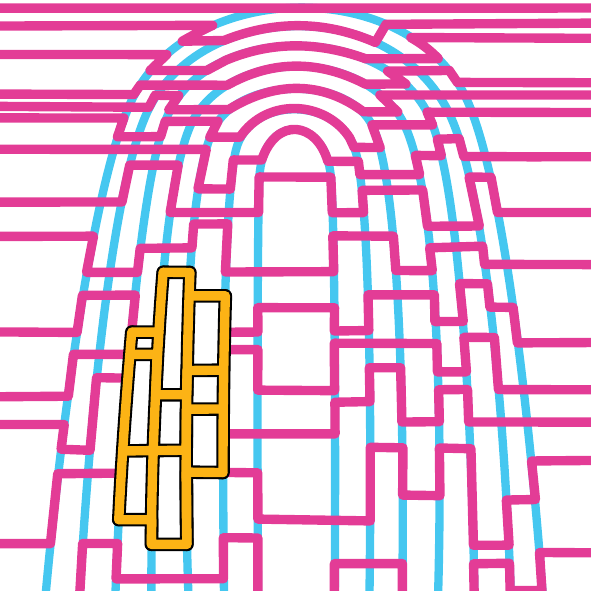}} at (C.center);
 \end{pgfonlayer}{background}
			
 \begin{pgfonlayer}{main}
 \node (C) [v:ghost] {};
 
 \end{pgfonlayer}{main}
 
 \begin{pgfonlayer}{foreground}
 \end{pgfonlayer}{foreground}

 \end{tikzpicture}
 \caption{The $(4 \times 4)$-mesh -- depicted in \textcolor{DarkBananaYellow}{yellow} -- made from the cycles of $\mathcal{C}$ -- depicted in \textcolor{CornflowerBlue}{blue} -- and the paths from the loops -- depicted in \textcolor{HotMagenta}{magenta}.}
 \label{fig_MeshInLoops}
\end{figure}

Since every loop in $G_L^{\star}$ belongs to $Q$ and no red loop intersects $C_{w' - 1}$, it follows that the intersection of $M$ with $Q$ is equal to the intersection of $M$ with $Q'$, which in turn is equal precisely to the four subpaths of the paths $P_{i'}$.
Therefore, $Q'$ can be divided into $7$ pairwise vertex-disjoint subpaths whose union makes up all of $Q'$ as follows:
For $j' \in [4]$, $Q_{2j' - 1}$ is the subpath of $P_{w' - (i-1)9 - 4 - j'}$ contained in $M$.
For $j' \in [3]$, the path $Q_{2j'}$ is the subpath of $Q'$ which is internally disjoint from $M$ and shares one endpoint with $Q_{2j' - 1}$ and the other with $Q_{2j' + 1}$.
Finally, notice that $M \cap \bigcup \mathcal{L} = M \cap Q'$ while $R \cap V(Q') = \bigcup_{j' \in[3]} R \cap V(Q_{2j'})$.

With all of these observations and terminologies at our disposal, we now claim that there exists a path $Q^*$ in $M \cup Q'$ with the same endpoints as $Q'$ such that $V(Q') \cap R = V(Q^{\star}) \cap R$, and such that $|Q^{\star}| < |Q'|$.
If this is true, then this would give a contradiction to the vitality of our instance since we could exchange $Q'$ for $Q^{\star}$ and obtain a feasible solution for $\mathcal{G}$ using less vertices.

To see that this claim is true we follow along $Q'$ starting with its endpoints on $P_{w' - 9(i-1) - 5}$, we follow along $Q$ until we reach the second endpoint of $Q_2$.
Now we follow along $M$ towards the endpoint of $Q_6$ while minimising our intersection with $P_{w' - 9(i-1) - 6}$.
This means, we travel along the first vertical path of $M$.
Then we follow along $Q_6$ until we reach $M$ again.
From here follow $P_{w' - 9(i-1) - 8}$ until we meet the second vertical path -- one of the subpaths of the cycles from the well -- of $M$.
Follow this vertical path until we reach $P_{w' - 9(i-1) - 6}$ again.
From here, trace along $P_{w' - 9(i-1) - 6}$ until we reach the endpoint of $Q_4$, then follow $Q_4$ to its other endpoint on $P_{w' - 9(i-1) - 7}$.
Finally, from the second endpoint of $Q_4$ we follow the fourth vertical path directly to the remaining endpoint of $Q'$ and call the path we just traces $Q^{\star}$.
See \zcref{fig_RerouteToAvoid} for an illustration of the re-routing.
Since $P_2j' \subseteq Q^{\star}$ we have that $V(Q') \cap R = V(Q^{\star}) \cap R$.
Moreover, $Q^{\star}$ does not contain the intersection of the third vertical path of $M$ with the paths $P_{w' - 9(i-1) - 7}$ and $P_{w' - 9(i-1) - 8}$.
Hence $|Q^{\star}| < |Q'|$ as desired and the proof of our claim is complete.
\end{claimproof}

\begin{figure}[ht]
 \centering
 \begin{tikzpicture}

 \pgfdeclarelayer{background}
		\pgfdeclarelayer{foreground}
			
		\pgfsetlayers{background,main,foreground}

 \begin{pgfonlayer}{background}
 \pgftext{\includegraphics[width=8cm]{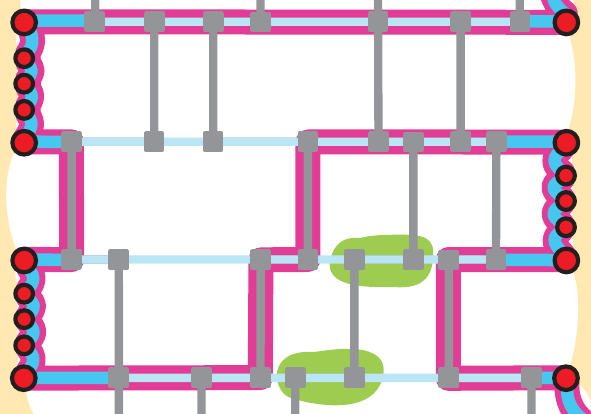}} at (C.center);
 \end{pgfonlayer}{background}
			
 \begin{pgfonlayer}{main}
 \node (C) [v:ghost] {};
 
 \end{pgfonlayer}{main}
 
 \begin{pgfonlayer}{foreground}
 \end{pgfonlayer}{foreground}

 \end{tikzpicture}
 \caption{The re-routing within the mesh in order to avoid some of the internal vertices. Notice that the figure is rotated by $90$ degrees and thus the paths that are horizontal here are vertical in the proof of \zcref{claim_SplittingLoops}.
 The parts marked in \textcolor{AppleGreen}{green} are those vertices freed up by the re-routing.}
 \label{fig_RerouteToAvoid}
\end{figure}

From here on, we will only consider $\mathcal{C}' \coloneqq  \{ C_1,\dots,C_{w''}\}$ where $w'' \coloneqq 2\beta(2k + 2d,\lceil\sqrt{ 2k + 2d }\rceil) + w$.
By \zcref{claim_SimpleLoops}, \zcref{claim_RedLoops}, and \zcref{claim_SplittingLoops} we know that all simple, red, and splitting loops are vertex disjoint from the cycles in $\mathcal{C}'$.
Sadly, this is where our ability ends to prove that paths in $\mathcal{L}^{\star}$ may be assumed disjoint from most cycles of the well.
However, by pivoting slightly, we can now make use of the ideas from the proof of \zcref{claim_SplittingLoops} in order to show that there is an upper bound on the number of remaining paths from $\mathcal{L}^{\star}$ that contain vertices from $R$ and enter deep into the well.

For this, we say that two paths $L_1,L_2 \in \mathcal{L}^{\star}$ are \emph{homotopic} if they are either both links of the same type, both common loops, or both complex loops and their traces in $\Sigma^{\star}$ are homotopic.
We say that a path in $\mathcal{L}^{\star}$ is \emph{essential} if it is neither a simple, red, or splitting loop.
Then the notion of homotopy above defines an equivalence relation $\sim$ on the set of essential paths.
Indeed, the total number of equivalence classes under this relation is at most $4sb + 6g - 6$ by \zcref{prop_TypeCounting}.
We refer to the equivalence classes of $\sim$ as the \emph{homotopy classes} of $\mathcal{L}^{\star}$.
In what follows we will consider equivalence classes of red and mixed links only.

Let $\mathcal{Q}$ be a homotopy class of $\mathcal{L}^{\star}$.
Then there exist two minimal segments $S_1$ and $S_2$ of the boundaries of $\Sigma^{\star}$ such that a path in $\mathcal{L}^{\star}$ belongs to $\mathcal{Q}$ if and only if it has one end on $S_1$ and the other on $S_2$.
If we assume that $|\mathcal{Q}| \geq 2$, then the minimality of $S_1$ and $S_2$ implies that there are two internally vertex-disjoint paths $P_1,P_2 \in \mathcal{Q}$ such that their traces together with $S_1$ and $S_2$ bound a disc $\Delta_{\mathcal{Q}}$ -- we refer to this disc as the \emph{strip of $\mathcal{Q}$} -- such that a path from $\mathcal{L}^{\star}$ intersects the inner graph of $\Delta_{\mathcal{Q}}$ if and only if it belongs to $\mathcal{Q}$.
The segments $S_1$ and $S_2$ will be referred to as the \emph{ends} of $\Delta_{\mathcal{Q}}$ and the paths $P_1$ and $P_2$ are the \emph{boundary paths} of $\mathcal{Q}$.

Let $\mathcal{Q}' \subseteq \mathcal{Q}$ with $|\mathcal{Q}'| \geq 2$.
We say that $\mathcal{Q}'$ is \emph{closed} if there exists a $\rho^{\star}$-aligned disc $\Delta' \subseteq \Delta_{\mathcal{Q}}$ such that a path $P$ of $\mathcal{L}^{\star}$ is contained in the inner graph of $\Delta'$ if and only if $P$ belongs to $\mathcal{Q}'$.
We call $\Delta'$ the \emph{strip} of $\mathcal{Q}'$ and the intersections of $S_1$ and $S_2$ with $\Delta'$ are the ends of $\Delta'$.
By the definition $\Delta'$ it follows that we may assume that there must exist two paths $P_1,P_2 \in \mathcal{Q}'$ such that the boundary of $\Delta'$ is the union of its ends and the traces of $P_1$ and $P_2$.
We call $P_1$ and $P_2$ the \emph{boundary paths} of $\mathcal{Q}'$

\paragraph{Of links and meshes.}
Our next goal is to formalise the idea used in the proof of \zcref{claim_SplittingLoops} to be more easily applicable for links and complex loops.
That is, we wish to show that either, only few members of a homotopy class of $\mathcal{L}^{\star}$ enter deep into the well, or we can find a closed subset of this homotopy class which contains a big mesh whose vertical paths are made from subpaths of the cycles from our well.

\begin{claim}\label{claim_DeepLinksMakeMesh}
Let $\mathcal{Q}$ be a closed subset of a homotopy class of $\mathcal{L}^{\star}$, and $j,k'\geq 1$ be integers with $j\in[w+7]$, then either at most $2k'-2$ members of $\mathcal{Q}$ intersect $C_j$, or there exists a closed set $\mathcal{Q}'=\{ Q_1,\dots,Q_{k'}\}$ of size ${k'}$ with boundary paths $Q_1$ and $Q_{k'}$ and strip $\Delta'$ and a $({k'} \times (w'' - j + 1))$-mesh whose horizontal paths $Q'_1,\dots,Q'_{k'}$ are such that $Q'_i \subseteq Q_i$ for each $i \in [{k'}]$ and whose vertical paths are $P_j,\dots,P_{w''}$ such that $P_i \subseteq C_i$ for each $i \in [j,w'']$.
\end{claim}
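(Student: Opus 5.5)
The plan is to combine the nested-disc structure of the strip of $\mathcal{Q}$ with the dry-well properties, then clean up the intersections with the same uncrossing steps used in \zcref{lemma_ForcingCandlesIntoAMesh}.

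\textbf{Step 1: order the paths.} Assume at least $2k'-1$ members of $\mathcal{Q}$ meet $C_j$. Since $\mathcal{Q}$ is closed, its members are pairwise disjoint grounded paths in the strip $\Delta_{\mathcal{Q}}$. Each of them joins the two ends $S_1,S_2$ of the strip. They are therefore linearly ordered across the strip, from one boundary path to the other; write this order as $Q^1,\dots,Q^m$.

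\textbf{Step 2: find a consecutive block meeting $C_j$.} Let $I$ be the set of indices whose paths meet $C_j$, so $|I|\ge 2k'-1$. If $k'$ members of $I$ are consecutive, take $\mathcal{Q}'$ to be those $k'$ paths. Otherwise, choose any $k'$ paths of $I$ that are consecutive within $I$, together with everything lying between them; call this set $\mathcal{Q}'$. Its strip $\Delta'$ is the subdisc bounded by its two extreme paths.
\begin{itemize}
\item $\mathcal{Q}'$ is closed by construction.
\item We need every path of $\mathcal{Q}'$ to meet $C_j$. For an intermediate path, apply \zcref{prop_fullbucketsindrainedwells} to the dry well restricted to the strip, in the same way as in the definition of the $L$-bay.
\item Because the well is dry, each path meeting $C_j$ also meets every $C_i$ with $i\ge j$, in at most two subpaths.
\item A path lying between two paths that reach $C_j$ must also reach $C_j$. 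Otherwise it would bound a region inside which a deeper path sits, contradicting the drained property.
\item So, up to discarding, we may take $\mathcal{Q}'=\{Q_1,\dots,Q_{k'}\}$ to be consecutive, all meeting $C_j$, with boundary paths $Q_1$ and $Q_{k'}$.
\end{itemize}
The slack in the threshold $2k'-2$ is there to absorb a path that meets $C_j$ only at the ``turning point'' of the well (a single subpath rather than two). Counting each crossing as at most two intersection points gives the factor $2$.

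\textbf{Step 3: extract the vertical paths.} For each $i\in[j,w'']$, take $P_i$ to be a minimal subpath of $C_i\cap\Delta'$ running from $Q_1$ to $Q_{k'}$. Such a subpath exists because $C_i$ crosses the strip: both boundary paths meet $C_i$, and $C_i$ separates the parts of the strip nearer $S_1$ from those nearer $S_2$. Take $Q'_i\subseteq Q_i$ to be the minimal subpath meeting all of $P_j,\dots,P_{w''}$. The mesh conditions require that each $Q'_i\cap P_l$ is a single path and that the intersections appear in order.

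\textbf{Step 4 (main obstacle): uncross the intersections.} $C_l$ may zigzag through the strip and meet some $Q_i$ several times. This is the one place needing real work. First, remove curls and bumps exactly as in the proof of \zcref{lemma_ForcingCandlesIntoAMesh}: reroute $P_l$ along the bridge portions of the $Q_i$, processing the $Q_i$ from one side of the strip to the other. Here there is an extra constraint: we may only reroute the cycle subpaths, never the $Q_i$, since the $Q_i$ are fixed subpaths of the unique linkage. The snugness of $\mathcal{C}$ helps with this. Moreover, any second intersection of $Q_i$ with $C_l$ would produce a $C_l$-path inside the inner graph of $C_l$. That is excluded by dryness, because each $Q_i\cap C_l$ has at most two components, and the two components lie on opposite sides of the turning point. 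Choosing the side of the turning point consistently for all $Q_i$ then yields single intersections in the correct order, and hence the $(k'\times(w''-j+1))$-mesh.
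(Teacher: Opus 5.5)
\textbf{The gap is in Step~2.} This is where the threshold $2k'-2$ has to come from.

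Your contiguity assertion says that a member of $\mathcal{Q}$ lying between two members that reach $C_j$ must itself reach $C_j$. That holds only within a single passage of the strip through the well. There, dryness makes the depth strictly monotone, and the arc of $C_j$ joining the two outer paths must cross the middle one.

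But the strip may pass through the well more than once. The centre can be nearest the boundary path $P_1$ in one passage and nearest $P_2$ in another. This is why the paper tracks the depths $j_1,j_2$ of \emph{both} boundary paths. In that situation the members meeting $C_j$ form an initial run and a final run of the strip order, and a middle member need not meet $C_j$ at all. This does not contradict drainedness: in each passage, the paths in $\Delta_X$ of such a middle path $X$ are shallower, not deeper. So the contiguity claim is false in general.

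Your ``otherwise'' branch is circular in any case. If everything between the chosen paths met $C_j$, they were already consecutive in $I$. Your account of the threshold is also wrong. The factor $2$ has nothing to do with paths touching $C_j$ only at a turning point; it counts the two sides of the strip.

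\textbf{The missing idea is the paper's two-sided count.} In each passage, the members meeting $C_j$ form a run starting at whichever boundary path is nearer the centre, because depth increases strictly as one moves away from it. Hence all members meeting $C_j$ lie in a run at $P_1$ together with a run at $P_2$. In the paper's formulation: unless $j\geq\min\{j_1,j_2\}+k'-1$, stripping $k'-1$ paths from each side leaves nothing that meets $C_j$. So if at least $2k'-1$ members meet $C_j$, one of the two runs has $k'$ members. Those $k'$ consecutive paths form the closed set $\mathcal{Q}'$.

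\textbf{Step~4.} The last sentence of your Step~4 is the paper's argument. Drop the curl/bumpy rerouting from \zcref{lemma_ForcingCandlesIntoAMesh}: rerouting pieces of cycles along the $Q_i$ would destroy the required $P_i\subseteq C_i$, and it is not needed. The paper proceeds as follows.
\begin{enumerate}
\item Follow $Q_{k'}$ from one end to its first contact with $C_j$.
\item For each $l\in[j,w'']$, take the subpath of $C_l$ inside $\Delta'$ running from the component of $Q_{k'}\cap C_l$ met on the way to $Q_1$.
\item Cut each $Q_i$ between the first and last of these subpaths.
\end{enumerate}
Each resulting piece lies on one side of the turning point of a \emph{single} arc of the dry well. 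Dryness constrains arcs, not whole members of $\mathcal{Q}$, and it gives the single-path intersections directly.
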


\begin{claimproof}
Let $P_1$ and $P_2$ be the two boundary paths of $\mathcal{Q}$.
For each $i \in [2]$ let $j_i$ be the smallest integer such that $P_i$ intersects $C_{j_i}$.
We may observe that, no path in $\mathcal{Q} \setminus \{ P_1,P_2\}$ can intersect $C_{\min\{ j_1,j_2\}}$.
Hence, if $j < \min \{ j_1,j_2\}$ there is nothing to show.
We distinguish two cases: $j < \min \{ j_1,j_2\} + {k'} - 1$ or $j \geq \min \{ j_1,j_2\} + {k'} - 1$.
\medskip

\textbf{Case 1:} $j < \min \{ j_1,j_2\} + {k'} - 1$.
\smallskip

In this case we may consider the closed set $\mathcal{Q}'' \subseteq \mathcal{Q}$ obtained by iteratively removing the boundary paths from both sides until each side has lost ${k'} - 1$ boundary paths.
Recall that $\mathcal{W}$ is dry and thus every arc of $\mathcal{L^{\star}}$ in $\Delta_{w''}$ -- the disc bounded by the trace of $C_{w''}$ -- follows the definition of dry wells.
Therefore, any arc of some path in $\mathcal{Q}$ that intersects a cycle beyond $C_{\min \{ j_1,j_2\} + {k'} - 1}$ -- i.\@e.\@ a cycle with smaller index -- must belong to one of the paths in $\mathcal{Q} \setminus \mathcal{Q}'$.
Moreover, $|\mathcal{Q}| - |\mathcal{Q}'| = 2{k'} - 2$.
Since $j < \min \{ j_1,j_2\} + {k'} - 1$ this implies that the first outcome of our claim holds.
\medskip

\textbf{Case 2:} $j \geq \min \{ j_1,j_2\} + {k'} - 1$
\smallskip

Similar to before we consider the closed set $\mathcal{Q}'' \subseteq \mathcal{Q}$ obtained by iteratively removing the boundary paths from both sides until each side has lost $\min \{ j_1,j_2\} + {k'} - 1$ boundary paths.
Let $\mathcal{Q}_i$ be the set of those paths removed from the side of $P_i$ for both $i \in[2]$.
Then $|\mathcal{Q}_i| = {k'}$ and for each $i \in [2]$, every path in $\mathcal{Q}_i$ intersects $C_{j_i + {k'} - 1}$.
Indeed, this means that there exists $i \in[2]$ where $j_i = \min\{ j_1,j_2\}$ such that every path in $\mathcal{Q}_i$ intersects $C_i$.
Notice that, by definition, $\mathcal{Q}' \coloneqq \mathcal{Q}_i$ is closed.
We enumerate $\mathcal{Q}' = \{ Q_1,\dots,Q_{k'}\}$ and denote its strip by $\Delta'$ such that $Q_1$ and $Q_{k'}$ are its boundary paths and for each $i \in [2,{k'}-1]$, the trace of $Q_i$ separates the traces of $Q_1,\dots,Q_{i-1}$ from the traces of $Q_{i+1},\dots,Q_{k'}$ in $\Delta'$.

Without loss of generality we may assume that $Q_1$ is one of the boundary paths of $\mathcal{Q}$.
We follow along $Q_{k'}$ from one of its endpoints, say $x$, until the first time $Q_{k'}$ intersects $C_j$, let $J^{k'}_1$ be the component of $Q_{k'} \cap C_j$ containing this intersection.
It follows that within $\Delta'$ there exists a path $N_1$ from $J^{k'}_1$ to $Q_1$ which is entirely contained in $C_j$.
Similarly, since the inner graph of $\Delta_{w''}$ -- the disc bounded by the trace of $C_{w''}$ -- has a vortex-free rendition in $\Delta_{w''}$, the same observation holds for every $j' \in[w'',j]$.
That is if start in $J^{k'}_1$ and follow along $Q_{k'}$ towards $x$, next we meet a component $J_{2}^{k'}$ of $C_{j+1} \cap Q_{k'}$, then we encounter a component $J_3^{k'}$ of $C_{j+2} \cap Q_{k'}$, and so on until we meet, for the first time, $C_{w''}$ in $J_{w'' - j + 1}^{k'}$.
For every $j' \in[2,w''-j+1]$ we then let $N_{j'}$ be the subpath of $C_{j + j' - 1}$ starting in $J^{k'}_{j'}$, ending on $Q_1$ and being fully contained in $\Delta'$.
Finally, for each $i' \in [{k'}]$ let $M_{i'}$ be the $N_1$-$N_{w'' - j + 1}$-subpath of $Q_{i'}$.
Notice that $M_{i'}$ is entirely contained in a single arc of $Q_{i'}$ in the inner graph of $\Delta_{w''}$ and thus, as $\mathcal{W}$ is a dry well, the intersection of $M_{i'}$ with $N_{j'}$ for each $i' \in [{k'}]$ and $j' \in [w'' - j + 1]$ is a single path.
Hence $\bigcup_{i' \in[{k'}]} M_{i'} \cup \bigcup_{j'\in[1,w''-j+1]}$ is a $({k'} \times (w'' - j +1))$-mesh as desired.
\end{claimproof}

\paragraph{Complex loops and red links.}
With \zcref{claim_DeepLinksMakeMesh} we have a straightforward way for extracting meshes from homotopy classes of $\mathcal{L}^{\star}$.
In the next step we apply this tool to control complex loops and red links.
It should be pointed out that, while technically different, for our purposes complex loops and red links behave in exactly the same way.
This is because for any homotopy class $\mathcal{Q}$ of either, both ends are segments of one or two ultimate frontiers of $\rho^{\star}$.

In particular, we have the following observation: 
Let $\mathcal{Q}$ be a closed subset of some homotopy class of complex loops or red links in $\mathcal{L}^{\star}$ and let $\Delta$ be the strip of $\mathcal{Q}$.
If the inner graph of $\Delta$ does not contain any terminal, then there exists a single path $L \in \mathcal{L}$ such that there is a subpath $L' \subseteq L$ which is precisely the union of all paths in $\mathcal{L}^{\star}$ that are contained in the inner graph of $\Delta$.

\begin{claim}\label{claim_ComplexLoopsAndRedLinks}
Let $\mathcal{Q}_0$ be a homotopy class of $\mathcal{L}^{\star}$ consisting of complex loops or red links.
Then at most $8(2k + 1) - 2$ members of $\mathcal{Q}_0$ intersect $C_{w'' - 4}$.
\end{claim}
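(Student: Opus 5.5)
I will argue by contradiction: suppose more than $8(2k+1)-2$ members of $\mathcal{Q}_0$ intersect $C_{w''-4}$, and derive a shorter solution spanning the same red vertices, contradicting vitality. This mirrors \zcref{claim_SplittingLoops}.

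First, I apply \zcref{claim_DeepLinksMakeMesh} to $\mathcal{Q}_0$ (closed, with strip $\Delta_{\mathcal{Q}_0}$), with $j = w''-4$ and $k' = 4(2k+1)$. Since more than $2k'-2 = 8(2k+1)-2$ members meet $C_{w''-4}$, the first outcome fails. So there is a closed set $\mathcal{Q}' = \{Q_1,\dots,Q_{4(2k+1)}\}$ with strip $\Delta'$, together with a $(4(2k+1)\times 5)$-mesh. Its horizontal paths are subpaths $Q_i'\subseteq Q_i$, and its vertical paths are subpaths of $C_{w''-4},\dots,C_{w''}$.

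Next, I partition $\mathcal{Q}'$ into $2k+1$ consecutive blocks of four paths each. For each block, the traces of its first and last paths together with the two ends bound a sub-strip. These sub-strips meet only along boundary nodes. There are at most $2k$ terminals, so some block's sub-strip contains no terminal in its inner graph.

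By the observation preceding the claim (complex loops and red links with terminal-free strips), all arcs in that sub-strip lie on a single path $L\in\mathcal{L}$. Hence the shortest subpath $Q'\subseteq L$ containing the four arcs of the block decomposes into seven pieces. The odd pieces are the four mesh rows $Q_{2j'-1}$. The even pieces $Q_{2j'}$ run from one end segment (an ultimate frontier) to the next row and are internally disjoint from the $4\times 4$ mesh $M$ formed by the block's rows and four of the cycle subpaths.

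One point needs checking here: $M\cap\bigcup\mathcal{L} = M\cap Q'$. Any other arc meeting $M$ would have to lie in the sub-strip. Any such arc belongs to the homotopy class by closedness, and by the ordering of the $Q_i$ it is then one of the four block paths. This is where closedness of $\mathcal{Q}'$ is used.

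Finally, I perform the same rerouting as in the proof of \zcref{claim_SplittingLoops}. The new path $Q^\star$ keeps every connector $Q_{2j'}$, so it contains all red vertices of $Q'$; red vertices can only sit in the non-mesh parts, since the mesh lies in the well, which is blank. It uses vertical paths of $M$ to skip a segment of the third vertical path between two rows, so $|Q^\star|<|Q'|$. Swapping $Q'$ for $Q^\star$ in $\mathcal{L}$ yields a different $\mathcal{T}$-linkage spanning $R$, contradicting vitality.

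The main obstacle is the rerouting step. For red links the two ends lie on possibly different frontiers, so the order in which $Q'$ traverses the four rows (alternating ends) must be pinned down before the \zcref{fig_RerouteToAvoid}-style reroute applies. I would first check, using the planarity of the strip, that consecutive rows are joined by connectors alternately at $S_1$ and $S_2$. Then I would verify that the reroute of \zcref{claim_SplittingLoops} is insensitive to whether the ends are on one frontier or two.
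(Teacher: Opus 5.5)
Your proposal is correct and follows essentially the same route as the paper: you apply \zcref{claim_DeepLinksMakeMesh} to get a closed set of $4(2k+1)$ deep paths carrying a mesh built from the cycles $C_{w''-4},\dots,C_{w''}$, split it into $2k+1$ blocks of four, and pick a terminal-free block by pigeonhole so that all its arcs lie on a single path $L\in\mathcal{L}$; you then reroute inside the resulting $(4\times 4)$-mesh exactly as at the end of \zcref{claim_SplittingLoops}, obtaining a strictly shorter $R$-spanning solution that contradicts vitality. The point you flag -- that consecutive rows are joined alternately through the two ends of the strip, giving the same zigzag configuration as in the splitting-loop case -- is left implicit in the paper, and checking it via the frontier structure and planarity of the strip is the right way to make the paper's ``same situation as before'' step rigorous.
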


\begin{claimproof}
We start by two applications of \zcref{claim_DeepLinksMakeMesh}.
First note that, if we assume towards a contradiction that at least $8(2k + 1) - 1$ members of $\mathcal{Q}_0$ intersect $C_{w''}$, then \zcref{claim_DeepLinksMakeMesh} provides us with a closed set $\mathcal{Q}_1$ of size $4(2k + 1)$ such that each member of $\mathcal{Q}_1$ intersects $C_{w''}$ and there is a mesh in the strip $\Delta_1$ of $\mathcal{Q}_1$ using all cycles $C_{{w''}-4},\dots,C_{w''}$ and subpaths of the paths in $\mathcal{Q}_1$.
We may then partition $\mathcal{Q}_1$ into $2k + 1$ closed sets, each containing precisely $4$ paths.
By the pigeonhole principle, one of those $2k + 1$ closed sets, call it $\mathcal{Q}_2$, has strip $\Delta_2$ with inner graph $G_2$ such that $G_2$ contains no terminal.
Hence, there exists a single path $L \in \mathcal{L}$ such that there is a subpath $L' \subseteq L$ which is precisely the union of all paths in $\mathcal{L}^{\star}$ that are contained in the inner graph of $\Delta_2$ by our discussion above.
Moreover, by \zcref{claim_DeepLinksMakeMesh} we know that there is a $(4 \times 4)$-mesh $M$ in $G_2$ whose horizontal paths $Q_1',\dots,Q_4'$ are subpaths of the paths in $\mathcal{Q}_2 = \{ Q_1,\dots,Q_4\}$ numbered accordingly.
Moreover, the vertical paths of $M$ are paths $P_1,\dots,P_4$ which are subpaths of the cycles $C_{w''}$, $C_{w''-1}$, $C_{w''-2}$, and $C_{w''-3}$ respectively.

Now let $L''$ be the shortest subpath of $L'$ that contains $Q_1', Q_2', Q_3'$, and $Q_4'$.
Then $L''$ can be divided into seven pairwise vertex-disjoint subpaths whose union makes up all of $L''$ as follows:
For $j' \in [4]$, $L_{2j' - 1}$ is the path $Q'_{j'}$.
For $j' \in [3]$, the path $L_{2j'}$ is the subpath of $L''$ which is internally disjoint from $M$ and shares one endpoint with $L_{2j' - 1}$ and the other with $L_{2j' + 1}$.
Finally, notice that $M \cap \bigcup \mathcal{L} = M \cap L''$ while $R \cap V(L'') = \bigcup_{j' \in[3]} R \cap V(L_{2j'})$.
From here, it is apparent that we have reached the same situation as in the end of the proof of \zcref{claim_SplittingLoops}.
Indeed, we may use the same routing strategy as before to observe that $L''$ can be replaced with a path $L^{\star}$ in $L'' \cup M$ in order to obtain a solution for $\mathcal{G}$ that avoids at least two vertices of $M$.
The situation is analogous to the one depicted in \zcref{fig_RerouteToAvoid}.
Since this is a contradiction to the vitality of $\mathcal{L}$ the claim is proven.
\end{claimproof}

\paragraph{Links that are mixed.}
The final case we need to discuss are mixed links as these are the only paths we have not touched yet which may contain vertices of $R$ in their interior.
If we now consider a homotopy class $\mathcal{Q}$ of $\mathcal{L}^{\star}$ consisting of mixed links, we still know that one of the ends of its strip is a segment of some ultimate frontier of $\rho^{\star}$.
However, the other end of the strip of $\mathcal{Q}$ is now a segment of some vortex.
This has the advantage that we still have some amount of control over the interaction of the set $R$ with the paths of $\mathcal{L}^{\star}$ caught in the strip, but the rerouting is no longer as simple because, once we have reduced to a subgraph without any terminals, we can no longer prove that there is always a single path of $\mathcal{L}$ containing all paths of $\mathcal{L}^{\star}$ involved in our situation.

\begin{claim}\label{claim_MixedLinks}
Let $\mathcal{Q}_0$ be a homotopy class of $\mathcal{L}^{\star}$ consisting of mixed links.
Then at most $8(2k + 1) - 2$ members of $\mathcal{Q}_0$ intersect $C_{w''-4}$.
\end{claim}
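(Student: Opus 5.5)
I intend to run the argument of \zcref{claim_ComplexLoopsAndRedLinks}, and to handle separately the one new difficulty: the paths caught in a strip may now belong to several different paths of $\mathcal{L}$.

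Suppose, towards a contradiction, that at least $8(2k+1)-1$ members of $\mathcal{Q}_0$ intersect $C_{w''-4}$. I apply \zcref{claim_DeepLinksMakeMesh} with $k' = 4(2k+1)$. This gives a closed set $\mathcal{Q}_1$ of mixed links, together with a mesh whose vertical paths are subpaths of $C_{w''-4},\dots,C_{w''}$. I partition $\mathcal{Q}_1$ into $2k+1$ consecutive closed sets of four links each. Their strips are internally disjoint, so by pigeonhole one of them, $\mathcal{Q}_2=\{Q_1,\dots,Q_4\}$ with strip $\Delta_2$, has no terminal in its inner graph $G_2$. This yields a $(4\times 4)$-mesh $M\subseteq G_2$ with horizontal paths $Q_i'\subseteq Q_i$ and vertical paths $P_j\subseteq C_{w''-4+j}$. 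Every $Q_i$ has one end on a segment $S_1$ of an ultimate frontier $\gamma$ and the other end on a segment $S_2$ of the boundary of a vortex $v$.

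Next I analyse how consecutive links are connected on the frontier side. The frontier $\gamma$ is ultimate, and every vertex of $R$ lies in the outer graph of $\gamma$. Hence any subpath of $\mathcal{L}$ that leaves $Q_i$ through its endpoint on $S_1$ must stay in the facial region beyond $\gamma$. It must then return to $S_1$ or an adjacent segment before reaching a terminal, since $G_2$ and that region contain no terminal by the choice of $\mathcal{Q}_2$ (after possibly shrinking the strip).

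The absence of $V(\gamma)$-candles constrains this return. A subpath beyond $\gamma$ joining the $S_1$-ends of $Q_i$ and $Q_{i'}$ with $|i-i'|\ge 2$ would separate the end of $Q_{i+1}$. This would produce either a candle or a crossing, contradicting facialness or planarity of the facial rendition. So on the $S_1$ side the links are paired with their neighbours, in the pattern $Q_1Q_2$ and $Q_3Q_4$, or via a red loop; red loops are already excluded from $C_{w''-1}$ by \zcref{claim_RedLoops}.

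On the vortex side $S_2$ no such pairing is available, because the links may continue into $\sigma(v)$ and go elsewhere. This is exactly where the argument departs from the previous claim. I handle it with a local rerouting inside $M$ that uses only the frontier-side connections. Take the segment $Q_1'\cup X\cup Q_2'$, where $X$ is the connector beyond $\gamma$ carrying the red vertices. Keep $X$, keep the two exits of the pair toward the vortex, and replace the part inside $M$ by a path that uses the vertical paths $P_j$. The replacement still joins the same two endpoints on the $S_2$-side boundary of $M$ and still passes through $X$, but it omits an intersection vertex of $P_3$ with $Q_2'$. The vertical paths of $M$ are used by no other path of $\mathcal{L}$ inside $\Delta_2$, since $M\cap\bigcup\mathcal{L}\subseteq \bigcup Q_i'$. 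The new linkage therefore has the same pattern, still spans $R$ (every red vertex lies on the connectors beyond $\gamma$), and misses a vertex of $M$. This contradicts vitality, as in \zcref{fig_RerouteToAvoid}.

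The main obstacle is the second step: ruling out interleaved frontier-side connections when four mixed links are consecutive. If the pairing argument fails, I would fall back on the $9$-path grouping of \zcref{claim_SplittingLoops}. Taking more links per group, say $8$, ensures that some adjacent pair is joined by a connector beyond $\gamma$, and the bound changes only in its constant.
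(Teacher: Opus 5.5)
Your setup agrees with the paper's: the contradiction hypothesis, \zcref{claim_DeepLinksMakeMesh} with $k'=4(2k+1)$, pigeonhole over $2k+1$ groups of four giving a terminal-free strip with a $(4\times 4)$-mesh $M$, and the frontier-side pairing $U_1\supseteq Q_1\cup X_1\cup Q_2$, $U_2\supseteq Q_3\cup X_2\cup Q_4$, with both ends of each $U_i$ on the vortex side $S_2$. The paper simply asserts this pairing, with no more justification than you give.

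\textbf{The gap is the rerouting step.} Rerouting $U_1$ alone inside $M$, while keeping $X_1$ and its two vortex-side exits, cannot free any vertex. With $U_2$ untouched, $Q_3'$ and $Q_4'$ are occupied, so the vertical paths are not free beyond $Q_2'$ (contrary to your remark). The only usable part of $M$ is the ladder formed by $Q_1'$, $Q_2'$ and the rungs $P_j[Q_1',Q_2']$.

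Let $a,b$ be the vortex-side and frontier-side ends of $Q_1'$, and $c,d$ the frontier-side and vortex-side ends of $Q_2'$. You need disjoint $a$--$b$ and $c$--$d$ paths in the ladder. Traversing $X_1$ the other way would instead require disjoint $a$--$c$ and $b$--$d$ paths, which cross. These four vertices lie on the outer face in cyclic order $a,b,c,d$, with $Q_1'$ the boundary arc from $a$ to $b$ and $Q_2'$ the arc from $c$ to $d$.
\begin{itemize}
\item If the $c$--$d$ path met $Q_1'$, its segment from $d$ to that point would separate $a$ from $b$.
\item So it avoids $Q_1'$, and since the rungs then become pendant, it must be $Q_2'$ itself.
\item Symmetrically, the other path is $Q_1'$.
\end{itemize}
Hence nothing is freed; in particular $Q_2'\cap P_3$ cannot be omitted. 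The same planarity argument applies verbatim to the whole region of the strip between $Q_1$ and $Q_2$. Your fallback of larger groups only concerns the pairing and does not touch this obstruction.

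\textbf{The missing idea is an exchange between the two pairs.} It exploits the fact that spanning only constrains the union of the paths. Let $P_1$ be the vertical path nearest $S_2$ and $P_4$ the one nearest the frontier. The paper does two replacements:
\begin{itemize}
\item The part of $U_2$ on the frontier side of $P_1$ is replaced by the segment $F_2$ of $P_1$ between $Q_3'$ and $Q_4'$; this drops $X_2$ from $U_2$.
\item The part of $U_1$ on the frontier side of $P_1$ is replaced by a path $K$. It goes up $Q_1$, through $X_1$, down $Q_2$ to $P_4$, along $P_4$ to $Q_3$, up $Q_3$, through $X_2$, down $Q_4$ to $P_3$, along $P_3$ back to $Q_2$ (passing the now free intersection $Q_3'\cap P_3$), and down $Q_2$ to $P_1$.
\end{itemize}
Both $U_i$ keep their endpoints on $S_2$, so the pattern is unchanged. $K$ contains every red vertex of $X_1\cup X_2$. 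The vertices of $U_2$ on $P_2$ are no longer used, contradicting vitality. Red vertices are thus moved from one pair to the other, which no rerouting of a single pair can achieve.
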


\begin{claimproof}
The first couple of steps are analogous to the proof of \zcref{claim_ComplexLoopsAndRedLinks}.

As before, we assume towards a contradiction that at least $8(2k + 1) - 1$ members of $\mathcal{Q}_0$ intersect $C_{w''-4}$, then \zcref{claim_DeepLinksMakeMesh} provides us with a closed set $\mathcal{Q}_1$ of size $4(2k + 1)$ such that each member of $\mathcal{Q}_1$ intersects $C_{w''-4}$ and there is a mesh in the strip $\Delta_1$ of $\mathcal{Q}_1$ using all cycles $C_{{w''}},\dots,C_{w''-4}$ and subpaths of the paths in $\mathcal{Q}_1$.
We then partition $\mathcal{Q}_1$ into $2k + 1$ closed sets, each containing precisely $4$ paths.
By the pigeonhole principle, one of those $2k + 1$ closed sets, call it $\mathcal{Q}_2$, has strip $\Delta_2$ with inner graph $G_2$ such that $G_2$ contains no terminal.
Moreover, by \zcref{claim_DeepLinksMakeMesh} we know that there is a $(4 \times 4)$-mesh $M$ in $G_2$ whose horizontal paths $Q_1',\dots,Q_4'$ are subpaths of the paths in $\mathcal{Q}_2 = \{ Q_1,\dots,Q_4\}$ numbered accordingly.
Moreover, the vertical paths of $M$ are paths $P_1,\dots,P_4$ which are subpaths of the cycles $C_{w''}$, $C_{w''-1}$, $C_{w''-2}$, and $C_{w''-3}$ respectively.

Let $S_1$ be the end of $\Delta_2$ which coincides with an ultimate frontier of $\rho^{\star}$ and let $S_2$ be the other end of $\Delta_2$.
Notice that $S_2$ may be understood as a segment of a vortex boundary of $\rho$ -- technically speaking, $S_2$ could also arise from one of the boundary components of $\Sigma^{\star}$ introduced where terminal vertices were present in the surface, but such boundary components can only have a single path of $\mathcal{L}^{\star}$ ending on them and we already know that $\mathcal{Q}_2$ contains $4$ paths.
We may now observe that every subpath $U$ of a path from $\mathcal{L}$ that exists within $G_2$ is either a simple loop or a path consisting of a -- possibly empty -- collection of red loops and two paths $Q,Q'$ from $\mathcal{Q}_2$.
Moreover, both endpoints of $P$ must lie on $S_2$ and there is no path in $\mathcal{Q}_2 \setminus \{ Q,Q'\}$ whose trace separates $Q$ and $Q'$ in $\Delta_2$.
Indeed, this means that there are precisely two such paths, say $U_1$ and $U_2$ such that $U_1$ contains $Q_1$ and $Q_2$ and $U_2$ contains $Q_3$ and $Q_4$.

Notice that, by construction, we know that the trace of $P_i$ separates $S_1$ and $S_2$ in $\Delta_2$ for all $i\in[4]$.
We may assume that for each $i\in[2,4]$, $P_1$ is contained in the same component as $S_2$ when deleting the trace of $P_i$.
By \zcref{claim_RedLoops}, \zcref{claim_SimpleLoops}, and \zcref{claim_SplittingLoops} we know that the only paths of $\mathcal{L}^*$ that intersect $M$ are the four paths containing $Q_1,\dots,Q_4$ respectively.
Therefore, there exist two subpaths $F_1$ and $F_2$ of $P_1$, each $F_i$ joining two vertices of $U_i$, such that there is a path $O_i \subseteq U_i$ with the same endpoints as $F_i$ and, if we were to replace $O_1$ and $O_2$ with $F_1$ and $F_2$ in the paths of $\mathcal{L}$ we would get a new solution for the instance $(G,\mathcal{T})$ of the $k$\textsc{-Disjoint Paths} problem.
Of course this is not yet enough since $O_1$ and $O_2$ contain all vertices of $R$ contained in $G_2$.
However, we may replace $F_1$ by a path $K$ as follows:
Follow along $U_1$ starting from the endpoint of $F_1$ on $Q_1$, until we reach the second component of $U_1 \cap P_4$.
Then follow along $P_4$ until we reach $U_2$.
From here we follow along $U_2$ in a way such that we pass through all vertices of $U_2$ on $S_1$ until we reach $P_3$ for the first time.
From here we move along $P_3$ until we meet $U_1$ again.
Finally, we move along $U_1$ until we reach the other endpoint of $F_1$ on $U_1$.
As before, replacing $O_1$ and $O_2$ with $K$ and $F_2$ gives rise to a new solution for instance $(G,\mathcal{T})$ of the $k$\textsc{-Disjoint Paths} problem.
This time, however, $K$ contains all vertices of $R$ contained in $U_1 \cup U_2$ and, at the same time, $K \cup F_2$ avoids the intersection of $U_2$ with $P_2$ entirely.
Hence, this is a contradiction to the vitality of $\mathcal{L}$ and the proof of our claim is complete.
\end{claimproof}

\paragraph{Common links.}
We have now dealt with all links and loops that could be dangerous in the sense that they might interfere with any subpath that contains a vertex of $R$ in its interior.
In this last step we are going to also tame common loops and links.
Since both of these are fundamentally the same, we only require one proof for dealing with both of them at the same time.
The core difference is, that in this case we will not be working with the graph $G'$ all the time.
Instead, finally, we will return to the graph $G$ itself because we  need some further information on the inside of vortices.
This is similar to how we took care of simple loops.

\begin{claim}\label{claim_CommonLinksAndLoops}
let $\mathcal{Q}_0$ be a homotopy class of $\mathcal{L}^{\star}$ consisting of common links or loops.
Then at most $2\beta(2k + 2d,\lceil\sqrt{ 2k + 2d }\rceil) - 2$ members of $\mathcal{Q}_0$ intersect $C_{w}$.
\end{claim}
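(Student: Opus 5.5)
We argue by contradiction in the style of \zcref{claim_SimpleLoops}, but now in the full graph $G$ rather than in $G'$, since common links and loops may end on boundaries of deep vortices of $\rho^{\star}$ and we need their interior structure. Suppose at least $2\beta(2k + 2d,\lceil\sqrt{ 2k + 2d }\rceil) - 1$ members of $\mathcal{Q}_0$ intersect $C_{w}$. We apply \zcref{claim_DeepLinksMakeMesh} with $j = w$ and $k' = \beta(2k + 2d,\lceil\sqrt{ 2k + 2d }\rceil)$. This yields a closed set $\mathcal{Q}' = \{Q_1,\dots,Q_{k'}\}$ with strip $\Delta'$, together with a $(k' \times (w''-w+1))$-mesh $M$ in the inner graph of $\Delta'$. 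The horizontal paths of $M$ are subpaths of the $Q_i$ and its vertical paths are subpaths of the well cycles. Since $w''-w+1 \geq k'$, $M$ contains a $(k'\times k')$-mesh, so the graph we build below has treewidth at least $k'$ by \zcref{prop_GridThm} (or directly by a bramble of crosses, via \zcref{prop_brambles}).

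\textbf{Step 1: the two ends of the strip.} The ends $S_1,S_2$ of $\Delta'$ are segments of vortex boundaries of $\rho^{\star}$. They cannot lie on ultimate frontiers, since the paths are common, and they cannot be terminal discs, since those carry only one arc each. For each end $S_i$ that lies on a vortex $v_i$, the depth bound gives a separator $Z_i \subseteq V(\sigma_{\rho^{\star}}(v_i))$ with $|Z_i|\le d$, separating the segment $S_i$ from the rest of the vortex society; we use the same Menger argument as in \zcref{claim_SimpleLoops}. Define $G_{\mathcal{Q}'}$ as the union of three parts: the inner graph of $\Delta'$, the sets $Z_1 \cup Z_2$, and the components of $\sigma(v_i)-Z_i$ meeting $S_i$.

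\textbf{Step 2: vitality transfers.} The key check is that no vertex of $R$ lies in $G_{\mathcal{Q}'}$. The vortices of $\rho^{\star}$ are exactly the deep segments, which are blank. The strip contains only members of $\mathcal{Q}'$ together with possibly simple loops, red loops or splitting loops. By \zcref{claim_SimpleLoops}, \zcref{claim_RedLoops} and \zcref{claim_SplittingLoops}, these other loops avoid $\mathcal{C}'$. Rather than trying to exclude them, I would include them in $G_{\mathcal{Q}'}$ only through the interior of the strip, then shrink $\Delta'$ by cutting along the traces of the outermost cycles used by $M$. The aim is that red and splitting loops, which are attached to ultimate frontiers, fall outside.

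Let $\mathcal{L}_{\mathcal{Q}'} \coloneqq \mathcal{L}[G_{\mathcal{Q}'}]$. Every path of it either has an endpoint in $Z_1\cup Z_2$ or an endpoint that is a terminal, so $|\mathcal{L}_{\mathcal{Q}'}|\le k+d$ and its endpoint set $T$ has size at most $2k+2d$. Exactly as in \zcref{claim_SimpleLoops}, suppose another $T$-linkage with the same pattern existed inside $G_{\mathcal{Q}'}$. Swapping it in would give a different $R$-spanning solution, because the swapped region has no red vertices. This contradicts vitality, so $\mathcal{L}_{\mathcal{Q}'}$ is a vital linkage in $G_{\mathcal{Q}'}$. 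Moreover $\mathsf{bidim}(G_{\mathcal{Q}'},T) \le |T|$, so $\mathsf{bidim}(G_{\mathcal{Q}'},T) \le \lceil\sqrt{2k+2d}\rceil$ (a $b\times b$ all-red grid needs $b^2$ red branch sets).

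\textbf{Step 3: contradiction and main obstacle.} \zcref{prop:VitalLinkage} bounds $\mathsf{tw}(G_{\mathcal{Q}'})$ by $\beta(2k+2d,\lceil\sqrt{2k+2d}\rceil)$, while $M \subseteq G_{\mathcal{Q}'}$ forces treewidth at least $k'$. With $k'$ chosen one larger than needed, this is a contradiction. The main obstacle is Step 2. We must justify that $G_{\mathcal{Q}'}$ is genuinely red-free and cleanly separated from the rest of $\mathcal{L}$ by the small vertex set $Z_1\cup Z_2$ together with the endpoints on the strip boundary. The difficulty is that the boundary paths of the strip are themselves paths of $\mathcal{L}$, so arcs of other paths could enter the strip through the ends. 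I would handle this using the minimality of the ends $S_1,S_2$ together with the fact that within a vortex segment all crossings pass through $Z_i$. If the counting then gives up to $2(k+d)$ paths instead of $k+d$, that is absorbed by the slack between $2\beta(\cdot)-2$ and $k'$.
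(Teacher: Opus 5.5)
This is essentially the paper's proof. It uses a mesh in a closed subclass via \zcref{claim_DeepLinksMakeMesh} and separators $Z_1,Z_2$ of size at most $d$ at the two vortex ends of the strip. On the resulting red-free graph, $\mathcal{L}$ restricts to a vital linkage of at most $k+d$ paths, which contradicts \zcref{prop:VitalLinkage}. Your observation that $|T|\le 2k+2d$ forces $\mathsf{bidim}\le\lceil\sqrt{2k+2d}\rceil$ supplies the second argument of $\beta$, which the paper leaves implicit.

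The worry in your Step 2 resolves without cutting along cycle traces, and you should drop that idea, since it would destroy the vortex-segment ends on which the $Z_i$-count rests. The strip is a disc in $\Sigma^{\star}$ that meets the boundary of $\Sigma^{\star}$ only in its two ends. Hence no red or splitting loop, and no arc ending on an ultimate frontier or terminal disc, can lie in it. Meanwhile all of $R$ sits beyond the ultimate frontiers, so the strip is red-free and the count is unaffected.
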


\begin{claimproof}
Similar to previous cases, we assume towards a contradiction that at least $2\beta(2k + 2d,\lceil\sqrt{ 2k + 2d }\rceil) - 1$ members of $\mathcal{Q}_0$ intersect $C_{w}$.
Then \zcref{claim_DeepLinksMakeMesh} provides us with a closed set $\mathcal{Q}_1$ of size $\beta(2k + d,\lceil\sqrt{ 2k + d }\rceil)$ such that each member of $\mathcal{Q}_1$ intersects $C_{w}$ and there is a mesh in the strip $\Delta_1$ of $\mathcal{Q}_1$ using all cycles $C_{{w''}},\dots,C_{w}$ and subpaths of the paths in $\mathcal{Q}_1$.
It therefore follows that $\mathsf{tw}(G_2) > \beta(2k + 2d,\lceil\sqrt{ 2k + d }\rceil)$ which will be the source of our contradiction.

Let $S_1$ and $S_2$ be the two ends of $\Delta_2$ and let $v_1,v_2$ be the two vortices which contain $S_1$ and $S_2$ as segments respectively.
Notice that we may assume that both $S_1$ and $S_2$ belong to vortices because $|\mathcal{Q}_1| \geq 2$.

Since the vortex $v_i$ has depth at most $d$ for both $i\in[2]$, there exists a set $Z_i \subseteq V(\sigma_{\rho^{\star}}(v_i))$ of size at most $d$ that separates $S_i$ from the complement of the vortex society of $v_i$.
Let $G_3$ be the union of the inner graph of $\Delta_2$ together with the set $Z_i$ and all components of $\sigma_{\rho^{\star}}(v_i) - Z_i$ that contain a vertex of $S_i$ for each $i\in[2]$.
Consider the linkage $\mathcal{L}_{\mathcal{Q}} \coloneqq \mathcal{L}[G_3]$ and notice that $|\mathcal{L}| \leq k + d$ since each path $P$ in $\mathcal{L}_{\mathcal{Q}}$ either has both endpoints in $Z_1 \cup Z_2$, or at least one endpoint of $P$ is a terminal from $\mathcal{T}$.

Let $T$ be the set of all endpoints of the paths in $\mathcal{L}_{\mathcal{Q}}$.

Now notice that, $\mathcal{L}_{\mathcal{Q}}$ is a vital $T$-linkage in $G_{\mathcal{Q}}$ since otherwise, we could find a different $T$-linkage $\mathcal{Q}'$ with the same pattern in $G_3$ -- recall that $G_3$ does not contain a vertex of $R$.
By swapping the paths in $\mathcal{L}_{\mathcal{Q}}$ with those in $\mathcal{Q}'$ within $\mathcal{L}$ we would then find a new spanning $\mathcal{T}$-linkage in $G$ with would contradict the vitality of $(G,R,\mathcal{T})$.
Hence, if we let $\mathcal{T}'$ be the collection of all endpoint pairs of the paths in $\mathcal{L}_{\mathcal{Q}}$, we get that $(G_3,\mathcal{T}')$ is a vital instance of $|\mathcal{L}_{\mathcal{Q}}|$\textsc{-Disjoint Paths}.

By construction, we have that $|\mathcal{L}_{\mathcal{Q}}| \leq k + d$, which means that $|T| \leq 2k + 2d$.
With our previous observation that $\mathsf{tw}(G_2) > \beta(2k + 2d,\lceil\sqrt{ 2k + d }\rceil)$, this is a contradiction to \zcref{prop:VitalLinkage}.
\end{claimproof}

\paragraph{Transforming the instance.}
We have now finally reached the stage where we may conclude our proof.

Let $\mathcal{C}'' \coloneqq \{ C_1,\dots,C_w\}$ and let $G^{\star}_1$ be obtained from $G'$ by deleting every vertex and edge that does not either belong to a cycle of $\mathcal{C}''$ or some path from $\mathcal{L}^{\star}$ that intersects $C_w$.
Let $\widetilde{\mathcal{L}}$ be the collection of all remaining paths from $\mathcal{L}^{\star}$ in $G^{\star}$.
By \zcref{claim_RedLoops}, \zcref{claim_SimpleLoops}, and \zcref{claim_SplittingLoops} this includes all red, simple, and splitting loops.
Moreover, by \zcref{prop_TypeCounting} there are at most $4sb + 6g - 6$ homotopy types among the paths in $\widetilde{\mathcal{L}}$.
By \zcref{claim_ComplexLoopsAndRedLinks} and \zcref{claim_MixedLinks}, any homotopy class of $\widetilde{\mathcal{L}}$ consisting of complex loops, red links, or mixed links contains at most $8(2k + 1) - 2$ paths.
Moreover, by \zcref{claim_CommonLinksAndLoops}, any homotopy class consisting of common loops or links contains at most $2\beta(2k + 2d,\lceil\sqrt{ 2k + 2d }\rceil) - 2$ paths.
Hence, in total we have that
\begin{align*}
    |\widetilde{\mathcal{L}}| \leq (4sb + 6g - 6) \cdot (8(2k + 1) + 2\beta(2k + 2d,\lceil\sqrt{ 2k + 2d }\rceil) - 4).
\end{align*}
For each $L \in \widetilde{\mathcal{L}}$ let us denote by $s_l$ and $t_L$ the endpoints of $L$.
Let now $\mathcal{T}^{\star} \coloneqq \{ (s_,t_L) \mid L \in \widetilde{\mathcal{L}} \}$.
Then $(G^{\star},\mathcal{T}^{\star})$ is an instance of $|\mathcal{T}^{\star}|\textsc{-Disjoint Paths}$.
Notice that the inner graph $G_w$ of $C_w$ gives rise to a dry $w$-well as no path in $\mathcal{L}^{\star} \setminus \widetilde{\mathcal{L}}$ contained any vertex of $G_w$.
Moreover, if there is a vertex of $R$ in $G^{\star}$, then this vertex must be a terminal of $(G^{\star},\mathcal{T}^{\star})$.
Finally, we may observe that $(G^{\star},\mathcal{T}^{\star})$ is a vital instance because otherwise there was a $\mathcal{T}^{\star}$-linkage $\mathcal{Q}^{\star} \neq \mathcal{T}^{\star}$ in $G^{\star}$ and since $G^{\star}$ is disjoint from the vertices in $V(\mathcal{L}) \setminus V(\widetilde{\mathcal{L}})$, replacing the paths in $\widetilde{\mathcal{L}}$ with those in $\mathcal{Q}^{\star}$ would contradict the vitality of $\mathcal{G}$.
Thus, $(G^{\star},\mathcal{T}^{\star})$ is indeed a $(g,d,b,w)$-barren instance of $\mathsf{terminals}_{\ref{lemma_FromDesolationToBarren}}(k,g,s,b)$\textsc{-Disjoint Paths}.
Moreover, all terminals of $\mathcal{T}^{\star}$ lie on the boundaries of at most $b + 2k$ discs with empty interior and thus the bidimensionality of $(G^{\star},T^{\star})$, where $T^{\star}$ is the set of all terminals from $\mathcal{T}^{\star}$ is at most $2k + b$.
\end{proof}

\paragraph{Proof of the Vital Linkage Theorem for Spanning Linkages.}
We are finally ready for the proof of \zcref{thm_VitalSpanningLinkage}.

\begin{proof}[Proof of \zcref{thm_VitalSpanningLinkage}]
Let us first fix the function $\beta_{\ref{thm_VitalSpanningLinkage}}$.
\begin{align*}
    \beta_{\ref{thm_VitalSpanningLinkage}} (r,k) & \coloneqq \mathsf{treewidth}_{\ref{lemma_theDesolationOfAnInstance}} \left( r,k, \mathsf{well}_{\ref{lemma_FromDesolationToBarren}}\big(k,g,r,d,b, 2\beta( \mathsf{terminals}_{\ref{lemma_FromDesolationToBarren}}(t,g,r,b) , 2t + b ) + 1 \big) \right) 
\end{align*}
and
\begin{align*}
    g & \coloneqq \mathsf{genus}_{\ref{lemma_theDesolationOfAnInstance}}(r,k)\\
    d & \coloneqq \mathsf{depth}_{\ref{lemma_theDesolationOfAnInstance}}(r,k)\\
    b & \coloneqq \mathsf{depth}_{\ref{lemma_theDesolationOfAnInstance}}(r,k) \text{, and }\\
    t & \coloneqq \mathsf{terminals}_{\ref{lemma_theDesolationOfAnInstance}}(r,k).
\end{align*}
Now, assume towards a contradiction that there exists a vital instance $(G_0,R_0,\mathcal{T}_0)$ of the $k$\textsc{-Spanning Disjoint Paths} problem where $\mathsf{depth}_2(G,R) \leq r$ and $\mathsf{tw}(G) > \beta_{\ref{thm_VitalSpanningLinkage}} (r,k)$.

Now, an application of \zcref{lemma_theDesolationOfAnInstance} yields that there exists a $(g,r,d,b,w')$-desolate instance $(G_1,R_1,\mathcal{T}_1)$ of $t'$\textsc{-Spanning Disjoint Paths} where $t' \leq t$ where
\begin{align*}
    w' \coloneqq \mathsf{well}_{\ref{lemma_FromDesolationToBarren}}\big(k,g,r,d,b, 2\beta( \mathsf{terminals}_{\ref{lemma_FromDesolationToBarren}}(t,g,r,b),2t + b \big) + 1.
\end{align*}

Next, we apply \ref{lemma_FromDesolationToBarren} which yields the existence of a $(g,d,b, 2\beta( \mathsf{terminals}_{\ref{lemma_FromDesolationToBarren}}(t,g,r,b) , 2t + b ) + 1 )$-barren instance $(G_3,\mathcal{T}_3)$ of $k'$\textsc{-Disjoint Paths} where, if $T$ denotes the terminal set, $k' \leq \mathsf{terminals}_{\ref{lemma_FromDesolationToBarren}}(t,g,r,b)$ and $\mathsf{bidim}(G_3,T_3) \leq 2t + b$.

This means that there exists a surface $\Sigma$ of Euler-genus at most $g$ and a $\Sigma$-rendition $\rho$ which witnesses that $(G_3,\mathcal{T}_3)$ is barren.
Let $w \coloneqq 2\beta( \mathsf{terminals}_{\ref{lemma_FromDesolationToBarren}}(t,g,r,b) , 2t + b ) + 1$ and $\mathcal{W} = (G_w,\Omega_w,\rho_w,\mathcal{C},\mathcal{L}[G_w])$ be the well of $\rho$.
Recall that $\mathcal{W}$ is dry by definition.
Let $\mathcal{C} = \{ C_1,\dots,C_w\}$ and $\mathcal{L}$ be the unique $\mathcal{T}_3$-linkage in $G_3$.
Then we know that there exists at least one arc $P_1$ of some path $L \in \mathcal{L}$ such that $P_1$ has a non-empty intersection with $C_1$.

With \zcref{prop_fullbucketsindrainedwells} we know that there exist $w$ distinct paths $P_1,\dots,P_w$ such that $C_i \cap P_i \neq \emptyset$ for all $i \in[w]$ and it holds that $\Delta_{P_1} \subsetneq \Delta_{P_{2}} \subsetneq \dots \subsetneq \Delta_{P_w}$.
Let now for each $i \in[w]$ denote by $U_i$the subpath of $C_i$ in the inner graph of $\Delta_{P_1}$ which intersects all $P_j$ with $j \leq i$.
Finally, for each $i \in [w]$ let $B_i \coloneqq P_i \cup U_i$ and set $\mathcal{B} \coloneqq \{ B_i \mid i \in[w]\}$.
Then $\mathcal{B}$ is a bramble with $w$ elements.
Moreover, any vertex contained in some element of the bramble $\mathcal{B}$ is contained in at most two of its elements.
Hence, the order of $\mathcal{B}$ is at most $\nicefrac{w}{2} \geq \beta( \mathsf{terminals}_{\ref{lemma_FromDesolationToBarren}}(t,g,r,b), 2t + b) + 1$.
By \zcref{prop_brambles} this means that $\mathsf{tw}(G_3) \geq \beta( \mathsf{terminals}_{\ref{lemma_FromDesolationToBarren}}(t,g,r,b), 2t + b) + 1$ while $\mathsf{bidim}(G_3,T_3) \leq 2t + b$.
This is a contradiction to \zcref{prop:VitalLinkage}.
\end{proof}

\section{Irrelevant vertices for spanning linkages}\label{sec_Irrelevant}

With \zcref{thm_VitalSpanningLinkage} we have completed the core theoretical part of our algorithm.
However, in order to compute an irrelevant vertex, we need to make one more step.
Due to \zcref{thm_cliqueirrelevantvertex} we know how to find an irrelevant vertex in the presence of a large clique minor.
Next, we require a similar technique if we find a large wall, but no large clique minor associated with it.
In such a situation, the so-called \textsl{Flat Wall Theorem} is usually applied to find an area where the graph can be split into two: One part which contains the majority of the graph and the other which allows for a vortex-free rendition in a disc while hosting a large grounded wall.
The goal is to show that the centre of such a wall can always be avoided by a solution.

The way we achieve this is by proving an analogue of the ``Annulus Combing Lemma'' due to Golovach, Stamoulis, and Thilikos \cite{GolovachST2023Combing}.
To be more precise, we follow the proof of the same lemma due to Cavallaro, Gorsky, Kreutzer, Thilikos, and Wiederrecht \cite{CavallaroGKTW2026Optimal}.

The Annulus Combing Lemma says, roughly, that, given a graph with a rendition into the sphere with precisely two vortices, such that there is a large number of disjoint and grounded cycles separating the two vortices and a big flow connecting the two boundary cycles of this ``annulus'', then any (spanning) linkage can be forced to traverse through this annulus by using only the paths of this flow.
The advantage of this viewpoint is, that it allows for the linkage to be ``split'' along the flow, which allows us to decompose the linkage into two new linkages, each with a bounded number of terminals.
In a later step, we will show that in case one of the two vortices contains a large wall and has a vortex-free rendition into a disc itself, then the centre of this wall is irrelevant.

\subsection{Combing spanning linkages}\label{subsec_Combing}
We begin with the proof of the annulus combing lemma for spanning linkages.
The proof is almost identical to the original proof of Theorem 10.1 from \cite{CavallaroGKTW2026Optimal} and we include it here only for the sake of completeness since we do need to alter the statement slightly.
Before we get there, we need some additional terminology -- note that these definitions are taken directly from \cite{CavallaroGKTW2026Optimal}.

\paragraph{Flat Meshes.}
Let $n \ge 2$ be an integer.
Let $G$ be a graph, and let $M \subseteq G$ be an $n$-mesh.
We say that $M$ is a \emph{flat mesh} in $G$ if there exists a rendition $\rho$ of $G$ on the sphere with a single vortex $c_0$ such that $M$ is flat in $\rho$ and the trace of the perimeter of $M$ in $\rho$ separates all vertices in $N(\rho) \cap V(M)$ from $c_0$. 
We say that $\rho$ \emph{witnesses} the flatness of $M.$

\paragraph{Cylindrical meshes.}
In the context of the Annulus Combing Lemma we require a cylindrical variant of meshes.

Let $m,n$ be positive integers, let $M$ be a graph, and let $C_1, \ldots, C_m$ be cycles and $P_1, \ldots, P_n$ be paths in $M$ such that the following holds for all $i \in [m]$ and $j \in [n]$:
\begin{itemize}
 \item $C_1, \ldots, C_m$ are pairwise vertex-disjoint, $P_1, \ldots, P_n$ are pairwise vertex-disjoint, and $M = C_1 \cup \cdots \cup C_m \cup P_1 \cup \cdots \cup P_n$.

 \item $C_i \cap P_j$ is a path, and if $i \in \{ 1, m \}$ or $j \in \{ 1, n \}$, then $C_i \cap P_j$ has exactly one vertex,

 \item when traversing $C_i$ starting from an endpoint of $P_1\cap C_i$, then either the paths $P_1, \ldots, P_n$ are encountered in the order listed or the next $P_j$ one encounters is $P_n$ and from here the paths are encountered in the order $P_n,\dots,P_1$, and

 \item $P_j$ has one end in $C_1$ and the other in $C_m$, and when traversing $P_j$ starting from its endpoint on $C_1$, the cycles $C_1, \ldots, C_m$ are encountered in the order listed.
\end{itemize}
If the above conditions hold for $M$, we call $M$ an \emph{$(n \times m)$-cylindrical mesh}.
The cycles $C_1, \ldots, C_m$ are called the \emph{concentric cycles}, or \emph{cycles}, of $M$ and the paths $P_1, \ldots, P_n$ are called the \emph{radial paths}, or \emph{rails}, of $M$.
We also call $(n \times n)$-cylindrical meshes \emph{$n$-cylindrical meshes}.

\paragraph{The compass of a (cylindrical) mesh.}
Let $M$ be a flat mesh in a graph $G$ where $\rho$ is a sphere rendition of $G$ witnessing the flatness of $M$.
The \emph{compass} of $M$ is the inner graph of its perimeter.

Let now $G$ be a graph and $\rho$ be a $\Sigma$-rendition of $G$ for some surface $\Sigma$.
The \emph{compass} of a cylindrical mesh $M$ grounded in $\rho$ is the union of all graphs $\sigma(c)$ over the cells $c$ contained in the annulus defined by the traces of the outer- and inner-most cycle of $M$.

\paragraph{A partially embedded railed annulus.}
Let $w,r \geq 0$ be two integers.
A \emph{$(w,r)$-railed annulus} is a graph $A_{w,r}$ such that there exist pairwise vertex-disjoint cycles $C_1,\dots, C_w$, called the \emph{circles}, and pairwise vertex-disjoint paths $R_1,\dots R_r$, called the \emph{rails}, such that
\begin{enumerate}
 \item $A_{w,r} = \bigcup_{i\in [w]} C_i \cup \bigcup_{i \in [r]} R_i$ is a planar graph,
 \item $R_i$ has one endpoint on $C_1$, the other on $C_w$ for all $i\in[r]$, and is otherwise disjoint from $C_1\cup C_w$,
 \item for all $i\in[r]$, $j\in[w]$, $R_i \cap C_j$ is a path, and
 \item for each $i\in[r]$, $R_i$ visits the cycles $C_1, \dots,C_w$ in the order listed when traversing along $R_i$ starting from its endpoint on $C_1$.
\end{enumerate}
We call $C_1$ and $C_w$ the \emph{boundary circles} of $A_{w,r}$.
We say that a graph $A$ is a \emph{railed annulus} if there exist $r$ and $w$ such that $A$ is a $(w,r)$-railed annulus.

Now let $\Sigma$ be a surface and $G$ be a graph with a $\Sigma$-rendition $\rho$.
We say that a railed annulus $A \subseteq G$ is \emph{separating} in $\rho$ if
\begin{enumerate}
    \item $A$ is grounded in $\rho$,
    \item if $A$ has at least two circles and at least two rails, then there exists an annulus $\circledcirc \subseteq \Sigma$ such that the traces of the boundary circles of $A$ coincide with the two boundary components of $\circledcirc$, and the trace of every rail of $A$ is contained in $\circledcirc$,
    \item $\circledcirc$ is disjoint from the closures of the vortices of $\rho$, and
    \item for every circle $C$ of $A$, the trace of $C$ is a separating curve in $\Sigma$.
\end{enumerate}
We call the annulus $\circledcirc$ above the \emph{domain} of $A$ in $\rho$.
The \emph{crop of $G$ by $\circledcirc$} is the union of all graphs $\sigma(c)$ whose cell $c$ of $\rho$ is contained in $\circledcirc$.
Finally, if $(G,R)$ is an annotated graph with a $\Sigma$-rendition $\rho$ and a separating railed annulus $A$ with domain $\circledcirc$, we say that $A$ is \emph{blank} if the crop of $G$ by $\circledcirc$ does not contain any vertices from $R$.
In case $(G,R,\mathcal{T})$ is an instance of $k$\textsc{-Spanning Disjoint Paths} for some $k$, we also say that $A$ is \emph{terminal-free} of the crop of $G$ by $\circledcirc$ does not contain any terminals.

Let $p\geq 0$ and $q\geq 2$ be integers.
Let $A = A_{2p+q,r}$ be a railed annulus.
The \emph{$p$-trimming} of $A$ is the railed annulus $A_{q,r}$ obtained from $A$ by removing the first and last $p$ circles and restricting the rails to their minimal $C_{p+1}$-$C_{p+q}$-subpaths.

\paragraph{Combed linkages.}
Let $\Sigma$ be a surface, $G$ be a graph with a $\Sigma$-rendition $\rho$, and let $A \subseteq G$ be a separating railed annulus in $G$ with domain $\circledcirc$ in $\rho$.
Let $H$ be the crop of $G$ by $\circledcirc$.
A linkage $\mathcal{L}$ in $G$ is \emph{combed in $A$} if the intersection of $\bigcup\mathcal{L}$ and $H$ is a subgraph of the union of the rails of $A$.

With this, we are ready to state the spanning version of the Annulus Combing Lemma.

\begin{theorem}\label{thm_CombedAnus}
There exist functions $f_{\ref{thm_CombedAnus}},h_{\ref{thm_CombedAnus}},g_{\ref{thm_CombedAnus}}\colon \mathbb{N}^2 \to \mathbb{N}$ such that for every instance $(G,R,\mathcal{T})$ of $k$-\textsc{Spanning Disjoint Paths} with $\mathsf{depth}_2(G,R) \leq r$, every surface $\Sigma$, and every $\Sigma$-rendition $\rho$ of $G$, the following holds:

If $A = A_{a,s}$ is a blank, terminal-free, and separating railed annulus with $a \geq f_{\ref{thm_CombedAnus}}(k,r)$ and $s \geq g_{\ref{thm_CombedAnus}}(k,r)$ in $G$ and $\rho$ and $\mathcal{L}$ is a spanning $\mathcal{T}$-linkage, then there exists a spanning $\mathcal{T}$-linkage $\mathcal{L}^{\star}$ in $G$ such that
\begin{enumerate}
 \item $\mathcal{L}^{\star}$ is combed in the $h_{\ref{thm_CombedAnus}}(k,r)$-trimming of $A$, and
 \item if $\circledcirc$ is the domain of $A$ and $G'$ is the crop of $G$ by $\circledcirc$, then $\bigcup\mathcal{L}^{\star} - G' \subseteq \bigcup\mathcal{L} - G'$.
\end{enumerate}

Moreover, $f_{\ref{thm_CombedAnus}}(k,r), h_{\ref{thm_CombedAnus}}(k,r), g_{\ref{thm_CombedAnus}}(k,r) \in 2^{\poly(k + r)}$.
\end{theorem}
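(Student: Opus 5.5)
We would follow the extremal strategy behind Theorem 10.1 of \cite{CavallaroGKTW2026Optimal}, with \zcref{thm_VitalSpanningLinkage} playing the role of the classical vital linkage bound. Among all spanning $\mathcal{T}$-linkages $\mathcal{L}^{\star}$ with $\bigcup\mathcal{L}^{\star} - G' \subseteq \bigcup\mathcal{L} - G'$, we fix one minimising the number of edges of $\bigcup\mathcal{L}^{\star}$ in the crop $G'$ that do not lie on the rails of $A$. Ties are broken by minimising $|V(\mathcal{L}^{\star})\cap V(G')|$. The second condition of the statement is then automatic. The whole task is to show that this linkage is combed in the $h_{\ref{thm_CombedAnus}}(k,r)$-trimming.

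\textbf{Step 1: bounded crossing number.} Since $A$ is blank and terminal-free, no vertex of $G'$ is red and no vertex is a terminal. Hence inside $G'$ the only constraint on $\mathcal{L}^{\star}$ is the pattern of its subpaths: any rerouting of the maximal subpaths of $\mathcal{L}^{\star}$ in $G'$ that keeps their endpoints and pairing again yields a spanning $\mathcal{T}$-linkage. This is exactly why the spanning condition disappears in the annulus.

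Since the circles of $A$ are separating, the subpaths of $\mathcal{L}^{\star}$ traversing the annulus from inner to outer boundary are the parts that must be combed. We first show that a circle $C_{h}$ is crossed only $2^{\poly(k+r)}$ times. The argument is as follows.

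1. Restrict to the graph formed by $\mathcal{L}^{\star}$ together with a middle block of circles.
2. Consider the subpaths of $\mathcal{L}^{\star}$ in this block, whose endpoints lie on the two boundary circles of the block.
3. If too many of them cross, minimality forces uniqueness of this local linkage and forces it to use all of its host graph. After deleting unused vertices we obtain a vital instance.
4. Its number of terminals is bounded by the number of homotopy types given by \zcref{prop_TypeCounting} (at most $3$ in the annulus), times the bound for each type.
5. If a type had very many members, the well/bramble arguments used for \zcref{claim_CommonLinksAndLoops} produce large treewidth. This contradicts \zcref{prop:VitalLinkage} with bidimensionality bounded by $2$, since the terminals lie on two circles. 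The same contradiction results from \zcref{thm_VitalSpanningLinkage} with $\mathsf{depth}_2\le r$.

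\textbf{Step 2: removing loops near the trimmed part.} A subpath of $\mathcal{L}^{\star}$ can enter the annulus and return to the same side. Such loops are pushed outward exactly as in \zcref{lemma_candlesAreExhaustive} and \zcref{claim_SimpleLoops}: a loop reaching deep contains a nested sequence of loops, which together with arcs of circles gives a bramble of large order. This again contradicts vitality of the corresponding local instance. Consequently no loop meets the circles indexed in $[h+1,a-h]$ for $h = 2^{\poly(k+r)}$.

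\textbf{Step 3: combing the traversing paths.} It remains to comb the at most $2^{\poly(k+r)}$ traversing paths. Take $s \ge g_{\ref{thm_CombedAnus}}(k,r)$ larger than the number of these paths. Then an untangling in the planar annulus, standard as in \cite{GolovachST2023Combing}, reroutes them along distinct rails between the inner and outer trimmed circles. Their homotopy in the annulus is trivial up to winding, and winding can be absorbed by the circles themselves. Because the annulus contains no red vertices and no terminals, the result is still a spanning $\mathcal{T}$-linkage. It strictly decreases the number of non-rail edges used unless the linkage was already combed.

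\textbf{Main obstacle.} We expect Step 1 to be hardest: extracting a genuinely vital sub-instance from the minimal counterexample while the global linkage must still span $R$ outside $G'$. We would handle this by freezing everything outside the block and applying the blankness argument from \zcref{claim_CommonLinksAndLoops}, at the cost of a few extra terminals on the boundary circles.
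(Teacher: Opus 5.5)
\textbf{There is a genuine gap, in Steps 1 and 2.} Both rest on a \emph{local} vital instance, and that fails for two reasons.

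First, minimising the number of non-rail edges of $\bigcup\mathcal{L}^{\star}$ in $G'$ does not make the linkage unique in your host graph (linkage plus a block of circles). A competing linkage there may drop one linkage edge and pick up many circle edges, which are also non-rail, so minimality excludes nothing.

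Second, and more seriously, the local instance is circular. Its terminals are the endpoints of the linkage subpaths on the two boundary circles of the block. So $|T|$ is essentially the number of crossings (or nested loops) you are trying to bound. A bramble built from $m$ such paths and arcs of circles has order at most $m$. But the function $\beta$ of \zcref{prop:VitalLinkage} must satisfy $\beta(2m,2)\ge m$: the rows of the $(m\times m)$-grid form a vital linkage whose $2m$ terminals lie on the outer face (so bidimensionality is at most $2$), while the treewidth is $m$. The argument therefore never outruns the bound, whether you apply \zcref{prop:VitalLinkage} or \zcref{thm_VitalSpanningLinkage} to the local instance. The local arguments you borrow from \zcref{claim_SimpleLoops} and \zcref{claim_CommonLinksAndLoops} work only because there the region is bounded by linkage paths together with vortex separators of size at most $d$. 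That caps the number of local terminals by $2k+2d$. A block of circles has no such bounded interface, and freezing the outside does not create one.

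\textbf{The missing idea is to make the \emph{whole} instance vital with the original $k$ pairs.}
\begin{enumerate}
\item Let $\mathcal{C}$ be the circles. Choose $\mathcal{L}$ minimising $|E(\mathcal{L})\setminus E(\mathcal{C})|$ over spanning $\mathcal{T}$-linkages in $\bigcup\mathcal{L}\cup\bigcup\mathcal{C}$. This preserves condition~2, since the circles lie in $G'$.
\item Contract non-red vertices of degree $2$ and the edges shared by $\mathcal{L}$ and the circles. The result $(G_2,R_2,\mathcal{T})$ is a red-minor of $(G,R)$, so $\mathsf{depth}_2\le r$. It is vital: every other spanning linkage with the same pattern misses a non-circle edge and so has strictly smaller potential.
\item Now \zcref{thm_VitalSpanningLinkage} gives $\mathsf{tw}(G_2)\le\beta_{\ref{thm_VitalSpanningLinkage}}(r,k)$, independent of the number of rivers or mountains. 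This is also how the spanning requirement outside $G'$ is handled, which is why the spanning theorem, not \zcref{prop:VitalLinkage}, is needed.
\item After drying the wells via \zcref{prop_drywell}, bramble arguments against this bound confine mountains and valleys to the outer $2\beta_{\ref{thm_VitalSpanningLinkage}}(r,k)+2$ circles on either side. They also bound the number of rivers by $2\beta_{\ref{thm_VitalSpanningLinkage}}(r,k)+2$.
\end{enumerate}

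\textbf{Step 3 also needs changing.} Drop the claim that rerouting onto the rails strictly lowers your non-rail edge count. It is unjustified, because the connecting pieces in the buffer circles use circle edges. Instead, build $\mathcal{L}^{\star}$ explicitly from the few rivers: a Menger linkage onto consecutive rails in one buffer block, \zcref{prop_two-sided-cylinder} to realise the required permutation in the other, and the rails in between.
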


Before we dive into the proof, we require another small lemma form the work of Cavallaro et al.

Let $\Sigma$ be a surface. 
A \emph{topological linkage} $\mathcal{T}$ in $\Sigma$ with pattern $\tau$ is a set of pairwise disjoint simple curves such that $\mathcal{T}$ contains for each pair $\{a,b\} \in \tau$ a curve with endpoints $a$ and $b$.
We denote the \emph{pattern} of $\mathcal{T}$ by $\tau(\mathcal{T})$.

A pattern $\tau$ is \emph{realisable} (in $G$) if there is a linkage $\mathcal{L}$ in $G$ with $\tau(\mathcal{L}) = \tau$ and it is \emph{topologically feasible} (in $\Sigma$) if there is a topological linkage $\mathcal{T}$ in $\Sigma$ with $\tau(\mathcal{T}) = \tau$.

\begin{proposition}[Cavallaro, Gorsky, Kreutzer, Thilikos, and Wiederrecht \cite{CavallaroGKTW2026Optimal}]\label{prop_two-sided-cylinder}
 Let $G$ be a graph with a vortex-free rendition $\rho$ into a disc $\Delta$ and let $\mathcal{C} \coloneqq \{ C_1,
 \dots, C_t\}$ with $t \geq 2k$ a sequence of disjoint grounded cycles in $G$ such that the intersection of $\mathsf{bd}(\Delta)$ and $\rho$ is the set of nodes of $C_1$ and for all $i < j \in[2,t]$, the trace of $C_i$ separates the trace of $C_1$ from the trace of $C_j$.
 Let $\mathcal{P}$ be a $V(C_t){-}V(C_1)$-linkage of order $2k$ in $G$ such that $V(C_i) \cap V(P)$ is a path for all $1 \leq i \leq t$ and all $P\in \mathcal{P}$.
 Finally, let $\tau$ be a pattern of order $k$ on $(V(C_t) \cup V(C_1)) \cap V(\mathcal{P})$.

 If $\tau$ is topologically feasible then it is realisable in $G$.
\end{proposition}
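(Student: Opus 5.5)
The statement immediately above is \zcref{prop_two-sided-cylinder}, which the paper imports from Cavallaro et al.; I will nonetheless sketch how I would prove it directly. The plan is to realise a topologically feasible pattern $\tau$ by routing along the cylindrical structure formed by the cycles $C_1,\dots,C_t$ and the $2k$ radial paths $\mathcal{P}$. The union $\bigcup\mathcal{C}\cup\bigcup\mathcal{P}$ contains a $(2k\times t)$-cylindrical mesh, since each $P\cap C_i$ is a path and the cycles are nested. The terminals of $\tau$ are endpoints of rails, lying either on the outer cycle $C_1$ or on the inner cycle $C_t$. So it suffices to show that every topologically feasible pattern on the rail-endpoints of a $(2k\times t)$-cylindrical mesh with $t\ge 2k$ is realisable inside this mesh; it is then realisable in $G$.

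\textbf{Step 1: normalise the topology.} I would fix a topological linkage $\mathcal{T}$ in the annulus realising $\tau$ and bounded by the traces of $C_1$ and $C_t$. Its curves split into three kinds: curves joining two terminals on $C_1$, curves joining two terminals on $C_t$, and crossing curves joining $C_1$ to $C_t$. The crossing curves all wind in a consistent way, so up to homotopy relative to the boundary they are determined by a single twist parameter. By a Dehn twist I would reduce this to a twist of absolute value below $2k$; this is possible because rotating the rails cyclically by one position corresponds to one unit of twist. The curves of the first two kinds form non-crossing matchings that can be pushed close to their respective boundary cycles.

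\textbf{Step 2: route greedily, using one cycle per unit of routing demand.} I would route the $C_1$-side nested pairs from the innermost outward. Each pair is joined by going along its two rails to a fresh cycle and along an arc of that cycle between them; nesting ensures the arcs do not cross. Pairs on the $C_t$ side are handled symmetrically from the other end. Crossing curves then follow their rails straight through. To achieve the required twist, I would shift each such path one rail sideways per cycle level, using staircase moves along an arc of a cycle followed by one step along the next rail. At most $k$ paths ever need to be routed, each using at most one cycle level for its connecting arc or for a shift, so the budget $t\ge 2k$ is enough.

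\textbf{Main obstacle.} The delicate part is the interaction between the twist of the crossing paths and the matched pairs. A shifting crossing path must not cut across an arc already used by a matched pair. I would handle this by allotting disjoint bands of cycle levels: the outer $k$ levels for the $C_1$-side matchings together with half of the twist, and the inner $k$ levels for the $C_t$-side matchings and the rest of the twist. I would then check, via the planarity of $\mathcal{T}$, that within each band the required shifts move only through rails that are currently free.
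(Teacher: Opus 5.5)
The paper imports this proposition from Cavallaro et al.\ without proof, so I assess your sketch on its own terms. Two parts are fine: the reduction to the cylindrical mesh $\bigcup\mathcal{C}\cup\bigcup\mathcal{P}$, and your routing of the one-sided pairs. A $C_1$-side pair at nesting depth $d$ goes down its two rails to $C_{d+1}$ and takes the arc of $C_{d+1}$ on its enclosed side. So the one-sided pairs use only $d_1\le m_1$ and $d_t\le m_t$ cycles (their nesting depths), where $m_1,m_t,m_c$ count the $C_1$-side, $C_t$-side and crossing pairs.

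\textbf{The gap is the crossing pairs, which carry the whole difficulty.} Your mechanism and your count disagree: you shift a path by one rail per cycle, but charge one cycle per path. Normalising by Dehn twists only fixes the displacements modulo a full turn of $2k$ rails, and even the best choice can leave a long detour. For example, number the rails $0,\dots,2k-1$ clockwise. Let the crossing pairs join the outer end of rail $0$ to the inner end of rail $2k-1$, and the outer end of rail $1$ to the inner end of rail $2k-2$. Let the other $k-2$ pairs be the nested $C_1$-side pairs on the outer ends of rails $j$ and $2k+1-j$, for $2\le j\le k-1$.
\begin{itemize}
\item This pattern is topologically feasible, but in every topological realisation in the annulus some crossing path winds past at least $2k-3$ rails.
\item The nested pairs block rails $2,\dots,2k-1$ down to $C_{k-1}$, so all this shifting must happen on $C_k,\dots,C_t$. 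These are only $k+1$ cycles when $t=2k$.
\item A one-rail-per-cycle staircase needs $2k-3$ of those cycles, so it fails for $k\ge 5$.
\end{itemize}
Your band allocation (half the twist inside the outer $k$ cycles, which also host the matchings) is worse. In addition, two crossing paths on adjacent rails that both shift by one rail on the same cycle $C_j$ collide where the second rail meets $C_j$.

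Letting paths move along long arcs of a single cycle fixes this example: both crossing paths change rails along disjoint arcs of $C_k$. But your ``one cycle per path'' count is false in general too. Suppose the $k$ crossing pairs join the outer end of rail $2i$ to the inner end of rail $2i+2$ (indices modulo $2k$). With the clockwise normalisation, path $i$ can only reach rail $2i+2$ on a cycle below the one where path $i+1$ leaves that rail. This is cyclically impossible, and the other normalisations are worse, so some path must change rails on two different cycles.

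What is missing is a scheduling lemma. Do all twisting strictly between the deepest $C_1$-side arc and the deepest $C_t$-side arc. There every rail not carrying a crossing path is free, and at least $t-d_1-d_t-2\ge k+m_c-2$ cycles are available. Then prove that the $m_c$ crossing paths can reach their target rails within that many rounds (one round per cycle), where in each round the moving paths use pairwise disjoint arcs avoiding the current rails of all other paths.

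A natural route: a path blocked in its direction of motion is blocked by the current rail of its neighbour, which must then also move the same way and not yet be finished. So blocking chains either end at a path that can move all the way, or wrap around the whole cylinder, in which case a caterpillar step into the free rails is needed. You would then need to bound the number of rounds this takes. Your sentence ``I would then check, via the planarity of $\mathcal{T}$, that within each band the required shifts move only through rails that are currently free'' is precisely this lemma, asserted rather than proved.
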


\begin{proof}[Proof of \zcref{thm_CombedAnus}]
Let us start by defining our functions.
We set
\begin{align*}
 f_{\ref{thm_CombedAnus}}(k,r) & \coloneqq 16\beta_{\ref{thm_VitalSpanningLinkage}}(r,k) + 16\\
 h_{\ref{thm_CombedAnus}}(k,r) & \coloneqq 8\beta_{\ref{thm_VitalSpanningLinkage}}(r,k) + 8\text{, and}\\
 g_{\ref{thm_CombedAnus}}(k,r) & \coloneqq 6\beta_{\ref{thm_VitalSpanningLinkage}}(r,k) + 6.
\end{align*}
The claimed bounds now follow directly from \zcref{thm_VitalSpanningLinkage}.
\smallskip

Let $\mathcal{C} = \langle C_1,\dots, C_a\rangle$ be the circles of $A$ and let $G_0 \coloneqq \bigcup\mathcal{L} \cup \bigcup\mathcal{C}$ and notice that $G_0$ still contains all vertices of $R_0 \coloneqq R$.
We further assume that $\mathcal{L}$ is chosen to minimise $|E(\mathcal{L}) \setminus E(\mathcal{C})|$ among all spanning $\mathcal{T}$-linkages in $G$.
\smallskip

Let $\circledcirc$ be the domain of $A$ in $\Sigma$ and let $G_A$ be the crop of $G_0$ by $\circledcirc$.
Let further $\mathcal{S} \coloneqq \mathcal{L}[G_A]$ be the set of all $(V(C_1) \cup V(C_a))$-subpaths of the paths in $\mathcal{L}$ contained in $G_A$.
A path in $\mathcal{S}$ is called a \emph{river} if it is a $V(C_1)$-$V(C_a)$-path, a \emph{mountain} if it is a $V(C_1)$-path, and a \emph{valley} if it is a $V(C_a)$-path.

\paragraph{Obtaining a vital instance.}
Let $(G_1,R_1)$ be obtained from $(G,R)$ by iteratively picking a vertex $v \in V(G)\setminus R$ of degree exactly $2$ and contracting one of its two incident edges, whenever at one endpoint of the contracted edge belongs to $R$, we add the resulting contraction vertex to $R_1$.
Finally, every vertex of $R \cap V(G_2)$ is added to $R_1$.
In a second step, let $(G_2,R_2)$ be obtained from $(G_1,R_1)$ by contracting each remaining edge that belongs to some circle of $A$ \textsl{and} some path in $\mathcal{L}$ and setting $R_2 \coloneqq R_1$.
Notice that $G'$ still contains a sequence $\mathcal{C}' =\langle C'_1,\dots, C'_a \rangle$ of vertex-disjoint cycles such that $C_i'$ is obtained from $C_i$ for each $i\in[a]$.
Moreover, if $x$ is an endpoint of some path in $\mathcal{L}$, then $x$ is not a vertex of any circle of $A$ since $A$ is blank and thus, $x$ has degree $1$ in $G_1$.
It follows that $(G_2,R_2)$ contains a spanning linkage $\mathcal{L}_2$ such that $\tau(\mathcal{L}_2) = \tau(\mathcal{L})$.
This is true in particular since the construction step to go from $(G_1,R_1)$ to $(G_2,R_2)$ only contracted edges that already belonged to $\mathcal{L}$ and thus ensured that the paths in $\mathcal{L}$ remain paths.
Moreover, each such edge was an edge of $\mathcal{C}$ which means, since $G_A$ is blank, that all vertices of $R_1$ remain vertices of $G_2$.

Finally, notice that, since $(G_2,R_2)$ is a red-minor of $(G,R)$, we may inherit a $\Sigma$-rendition $\rho_2$ of $G_2$ from the restriction of $\rho$ to $G_0$.

\begin{claim}\label{claim_VitalAnus}
The spanning linkage $\mathcal{L}_2$ is vital in $(G_2,R_2)$.
\end{claim}

\begin{claimproof}
Suppose there is a vertex $v\in V(G_2)$ not used by $\mathcal{L}_2$.
Then $v \notin R_2$ and thus, there exists $i\in[a]$ such that $v \in V(C_i')\setminus V(\mathcal{L}_2)$.
But then, $v$ must have degree $2$ in $G_2$ which is ruled out by the construction of $G_2$ from $G_1$.
Hence, we obtain that $V(\mathcal{L}_2) = V(G_2)$.

Indeed, we may make a stronger observation:
Let $v\in V(C'_i)$ for some $i\in[\ell]$, then $v$ has degree $4$ in $G_2$ since $v$ has degree at least $3$, may be incident with at most four edges of $G_2$, and two of those edges are required to belong to $\mathcal{L}_2$.
Since, by construction, $\mathcal{L}_2$ cannot share an edge with $C_i'$, it follows that $v$ cannot be of degree $3$.

Now suppose that $\mathcal{L}_2$ is not unique in $G_2$.
Then there exists a spanning linkage $\mathcal{Q}$ in $G_2$ with the same pattern such that $\mathcal{L}_2 \neq \mathcal{Q}$.
Suppose that there exists an edge $e$ in $E(\mathcal{L}_2) \setminus E(\mathcal{Q}_2)$.
Then, since $(G_2,R_2)$ is a red-minor of $(G,R)$ with $|R_2| = |R|$ and all endpoints of the paths in $\mathcal{L}$ still have degree $1$ in $G_2$, there would be a spanning linkage $\mathcal{Q}'$ in $(G,R)$ of the same pattern as $\mathcal{L}$ such that $E(\mathcal{Q}) \subseteq (E(\mathcal{L}) \cup E(\mathcal{C})) \setminus \{ e \}$.
Hence, $|E(\mathcal{Q}) \setminus E(\mathcal{C})| < |E(\mathcal{L}) \setminus E(\mathcal{C})|$ which is a contradiction to the minimality of $E(\mathcal{L}) \cap E(\mathcal{C})$.
Therefore, $\mathcal{L}_2$ must be unique in $(G_2,R_2)$ must therefore be vital.
\end{claimproof}

\begin{claim}\label{claim_BidimensionalitySurvives}
We have that $\mathsf{depth}_2(G_2,R_2) \leq r$.
\end{claim}

\begin{claimproof}
The claim follows directly from the fact that $\mathsf{depth}_2$ is monotone under taking red-minors and $(G_2,R_2)$ is indeed a red-minor of $(G,R)$.
\end{claimproof}

We say that two rivers $J_1$ and $J_2$ are \emph{consecutive}, if deleting their traces from $\circledcirc$ leaves one component $\zeta'$, whose closure is homeomorphic to a disc, that does not intersect the trace of any other river in $\mathcal{S}$.
We define what it means for two rails of $A'$ to be \emph{consecutive} in the same way.
The closure $\zeta = \mathsf{cl}(\zeta')$ of such a disc obtained from the traces of two consecutive rails or rivers is called a \emph{slice} of $\circledcirc$.
Note that the rivers in $\mathcal{S}$ decompose $\circledcirc$ into $|\mathcal{S}|$ slices $\zeta_1,\dots,\zeta_{\ell}$ where $\ell$ is the number of rivers in $\mathcal{S}$.
Moreover, for every mountain or valley $P$ in $\mathcal{S}$ there exists a unique $i\in[\ell]$ such that the trace of $P$ is contained in $\zeta_i$.
For each $i\in [\ell]$ let us denote by $\mathcal{M}_i$ the set of all mountains whose trace is contained in $\zeta_i$ and by $\mathcal{V}_i$ the set of all valleys whose trace is contained in $\zeta_i$.

Our goal is to argue that mountains and valleys cannot enter deeply into the circles of $A$.
In order to be able to make this argument, we require some further structure for the ways rivers and mountains can interact with the circles of $A$.
To achieve this, we aim to apply \zcref{prop_drywell} which requires us to confine the curves of mountains and valleys into wells.
Since the arguments for dealing with valleys is analogous to those for mountains, we only discuss the mountain case below.

\paragraph{The wells in the mountains have run dry.}
Fix any $i\in[\ell]$.
Let $H_i$ be the graph $\bigcup \mathcal{M}_i \cup \mathcal{C}'$.
Then one can easily see that $H_i$ is a planar graph with a plane embedding $\Gamma_i$ inherited from $\rho$ and we may declare $\Omega_i$ to be the cyclic ordering of the vertices of $C'_1$ obtained by traversing along $C'_1$ in clockwise order.
Then $(H_i,\Omega_i,\rho,\mathcal{C}',\mathcal{M}_i)$ is a well and by \zcref{prop_drywell} we may assume this well to be dry.
We call $(H_i,\Omega_i,\rho,\mathcal{C}',\mathcal{M}_i)$ the \emph{well associated with $\mathcal{M}_i$}.
Indeed, notice that the new collection of paths returned by \zcref{prop_drywell} must still have all of its traces contained in $\zeta_i$ and is therefore guaranteed to stay disjoint from the rivers of $\mathcal{S}$.

This means that the trace of any mountain in $\mathcal{M}_i$ encloses a disc with the boundary of $\zeta_i$ which cannot contain the trace of any valley from $\mathcal{V}_i$.
Hence, we may treat valleys in exactly the same way and assume their respective wells to also be dry.

\paragraph{Mountains are flat and valleys are shallow.}
With the assumption that each of the wells associated with $\mathcal{M}_i$ or $\mathcal{V}_i$ is dry, we are now ready to venture deeper into the proof.

\begin{claim}\label{claim_FlatMountains}
Let $M$ be a mountain and $V$ be a valley in $\mathcal{S}$.
Then $M$ is disjoint from $C'_{2\beta(k,r) + 3}$ and $V$ is disjoint from $C'_{a - 2\beta(k,r) - 2}$.
\end{claim}

\begin{claimproof}
Since there is still no difference between mountains and valleys apart from the choice of indices, it suffices to make the arguments for mountains and valleys will follow analogously.

Suppose there exist $i\in[\ell]$ and a mountain $M \in \mathcal{M}_i$ which intersects $C'_{2\beta_{\ref{thm_VitalSpanningLinkage}}(r,k)+3}$.
Then, since the well $(H_i,\Omega_i,\rho,\mathcal{C}',\mathcal{M}_i)$ is dry, we know by \zcref{prop_drywell} that there exists a sequence $\langle M_1,\dots,M_{2\beta_{\ref{thm_VitalSpanningLinkage}}(r,k)+3} \rangle$ of pairwise vertex-disjoint mountains in $\mathcal{M}_i$ such that for each $j\in [2\beta_{\ref{thm_VitalSpanningLinkage}}(r,k)+3]$, $M_j$ intersects $C'_j$.
Notice that, whenever a mountain $M_j$ intersects the circle $C'_j$, then $M_j$ also intersects $C'_{j'}$ for all $j' \leq j$.

Now let us define, for each $j \in [2\beta_{\ref{thm_VitalSpanningLinkage}}(r,k) + 3]$ the graph $B_j$ to be $M_j \cup C_j'$ and let $\mathcal{B} \coloneqq \{ B_j \mid j \in [2\beta_{\ref{thm_VitalSpanningLinkage}}(r,k) + 3] \}$.
Notice that every member of $\mathcal{B}$ is connected, and any two members must intersect by the remark above.
Moreover, it follows from the fact that the $M_j$ are pairwise vertex-disjoint and the $C_j'$ are also pairwise vertex disjoint, that no vertex of $G'$ may be contained in more than two distinct members of $\mathcal{B}$.
Hence, $\mathcal{B}$ is a bramble of order at least $\beta_{\ref{thm_VitalSpanningLinkage}}(r,k) + 2$.
Therefore, \zcref{prop_brambles} says that $\mathsf{tw}(G') \geq \beta_{\ref{thm_VitalSpanningLinkage}}(r,k) + 1$ which contradicts the fact that $\mathcal{L}_2$ is vital by \zcref{thm_VitalSpanningLinkage}.
\end{claimproof}

\paragraph{Counting rivers.}
The only remaining members of $\mathcal{S}$ we have not yet taken care of are the rivers.
These must necessarily intersect all cycles in $\mathcal{C}'$, however, the same argument as the one from \zcref{claim_CommonLinksAndLoops} shows that the total number of rivers must be bounded.
Let $\mathcal{R} \subseteq \mathcal{S}$ be the set of all rivers in $\mathcal{S}$.

\begin{claim}\label{claim_FewRivers}
We have that $|\mathcal{R}| \leq 2\beta_{\ref{thm_VitalSpanningLinkage}}(r,k) + 2$.
\end{claim}

We omit this proof of \zcref{claim_FewRivers} since it is almost entirely analogue to the proofs of \zcref{claim_CommonLinksAndLoops} and \zcref{claim_FlatMountains}.
The core idea is, once again, to build a large order bramble which will then contradict the vitality of $\mathcal{L}_2$.

\paragraph{Straightening rivers.}
Notice that $G_2$ still contains an annulus $A_1$ with $q$ circles and $r$ rails which is obtained from $A$ through the contractions that transformed $G_0$ into $G_2$.

Let $A_2$ be the $(2\beta_{\ref{thm_VitalSpanningLinkage}}(r,k) + 2)$-trimming of $A_1$, let $\circledcirc_2$ be the domain of $A_2$, and let $\mathcal{F}$ be the set of rails of $A_2$.
Notice that the set of circles of $A_2$ is the set $$\mathcal{C}_2 \coloneqq \{ C_{2\beta_{\ref{thm_VitalSpanningLinkage}}(r,k) + 3}',\dots, C_{a - 2\beta_{\ref{thm_VitalSpanningLinkage}}(r,k) - 2}' \}.$$
Let us set $a_2 \coloneqq a - 4\beta_{\ref{thm_VitalSpanningLinkage}}(r,k) - 4$ and adjust indices to set $\mathcal{C}_2 = \{ C_1^2,\dots, C_{a_2}^2 \}$ such that the order of the indices is preserved.

Let $A_2^1$ be the railed annulus obtained from $A_2$ by first restricting its circles to the cycles $C_1^2,\dots,C_{6\beta_{\ref{thm_VitalSpanningLinkage}}(r,k) + 6}^2$, and then taking, as rails, from each $F \in \mathcal{F}$ its minimal $C_1^2$-$C_{6\beta_{\ref{thm_VitalSpanningLinkage}}(r,k) + 6}^2$-subpath, forming the set $\mathcal{F}_1$.
Let $\circledcirc_2^1$ be the domain of $A_2^1$.

Similarly, let $A_2^2$ be the railed annulus obtained from $A_2$ by taking as circles the cycles $C_{a_2 - 6\beta_{\ref{thm_VitalSpanningLinkage}}(r,k) - 5}^2,\dots, C_{a_2}^2$ and by taking as rails from each $F \in \mathcal{F}$ its minimal $C_{a_2 - 6\beta_{\ref{thm_VitalSpanningLinkage}}(r,k) - 5}^2$-$C_{a_2}^2$-subpath, thereby forming the set $\mathcal{F}_2$.
Let $\circledcirc_2^2$ be the domain of $A_2^2$.

Finally, let $\mathcal{S}_2$ be the collection of all $(V(C_1^2) \cup V(C_{a_2}^2))$-subpaths of the paths in $\mathcal{L}_2$ which are contained in the crop of $G_2$ by $\circledcirc_2$.
As before, we could partition $\mathcal{S}_2$ into mountains, valleys, and rivers, but by the construction of $A_2$ and \zcref{claim_FlatMountains} it follows that every member of $\mathcal{S}_2$ is a river.
Moreover, $|\mathcal{S}_2| \leq 2\beta_{\ref{thm_VitalSpanningLinkage}}(r,k) + 2$ by \zcref{claim_FewRivers}.

Let for each $i\in[2]$, $\mathcal{S}^i_2$ be the restriction of the paths in $\mathcal{S}_2$ to the crop of $G_2$ by $\circledcirc^i_2$.
We denote the crop of $G_2$ by $\circledcirc^i_2$ by $G_2^i$.
In addition, let $X_1$ be the set of endpoints of the paths in $\mathcal{S}^1_2$ on $C_1^2$ and let $X_2$ be the set of endpoints of the paths in $\mathcal{S}^2_2$ on $C_{a_2}^2$.

We say that a set of $\mathcal{D}$ rails is \emph{consecutive} if $\mathcal{D} = \{ D_1,\dots,D_{|\mathcal{D}|}\}$ such that $D_i$ and $D_{i+1}$ are consecutive for all $i\in [|\mathcal{D}|-1]$.
Let $Y_1$ be the set of endpoints of $2\beta_{\ref{thm_VitalSpanningLinkage}}(r,k) + 2$ consecutive rails of $A_2^1$ on $C_{6\beta_{\ref{thm_VitalSpanningLinkage}}(r,k) + 6}^2$.
Finally, let $Y_2$ be the set of endpoints of the rails of $A_2^2$ on $C_{a_2 - 6\beta_{\ref{thm_VitalSpanningLinkage}}(r,k) - 5}^2$.

\begin{claim}\label{claim_RailTheAnnulus1}
There exists a linkage $\mathcal{J}_1$ of size $|\mathcal{S}_2| = |X_1|$ between $X_1$ and $Y_1$ in $G_2^1$.
\end{claim}

\begin{claimproof}
Suppose there does not exist a $X_1$-$Y_1$-linkage of order $X_1$ in $G_2^1$.
By Menger's theorem, there exists a set $Z_1 \subseteq V(G^1_2)$ of size at most $|X_1| - 1$ such that in $G^1_2 - Z_1$ there is no $X_1$-$Y_1$-path.

Since $|X_1| \leq 2\beta_{\ref{thm_VitalSpanningLinkage}}(r,k) + 2$ as a consequence of \zcref{claim_FewRivers} and $|Y_1| = |\mathcal{F}| = s \geq 2\beta_{\ref{thm_VitalSpanningLinkage}}(r,k) + 2$ there must exist a rail $F$ of $A^1_2$ which completely avoids $Z_1$.
Moreover, there must also exist a river $S \in \mathcal{S}^1_2$ which is disjoint from $Z_1$ and finally, there must exist a circle $C$ of $A^1_2$ which is disjoint from $Z_1$.
Then, $R \cup S \cup C$ is a connected graph which contains a vertex of $X_1$ and a vertex of $Y_1$, which are endpoints of $S$ and $F$ respectively.
This is a contradiction to the fact that $Z_1$ is an $X_1$-$Y_1$-separator and the claim follows.
\end{claimproof}

Now let $Y_2'$ be the subset of $Y_2$ consisting of precisely those vertices in $Y_2$ that belong to members of $\mathcal{F}$ that contain an endpoint of some path of $\mathcal{J}_1$.
Then $|Y_2'| = |\mathcal{J}_1| = |X_2|$.
Moreover, let $H_2'$ be the subgraph of $G_2^2$ consisting only of the circles of $A_2^2$, the rails of $A_2^2$ with vertices in $Y_2^2$, and the paths in $\mathcal{S}_2^2$.

\begin{claim}\label{claim_RailTheAnnulus2}
There exists a linkage $\mathcal{J}_2$ of size $|\mathcal{S}_2| = |X_2|$ between $X_2$ and $Y_2'$ in $H_2'$.
Moreover, for every path $S \in \mathcal{S}_2$, if $x \in Y_1$ is the endpoint of the path in $\mathcal{J}_1$ containing an endpoint of $S$ and $y \in Y_2'$ is the endpoint of the path in $\mathcal{J}_2$ containing the other endpoint of $S$, then there exists a rail of $A_2$ which contains both $x$ and $y$.
\end{claim}

\begin{claimproof}
The proof of this claim follows along two steps.
First we route the paths in $\mathcal{S}'$ starting from $X_2$ onto the rails containing the end of $\mathcal{F}_1$ in $Y_1$, and then we use the remaining infrastructure in $A_2'$ in order to guarantee that the vertices of $X_2$ are connected to the rails in the way required by the claim.

Let now $B_1$ be the railed annulus obtained from $A^2_2$ by taking the cycles $C_{a'-2\beta_{\ref{thm_VitalSpanningLinkage}}(r,k)-1}^2,\dots,C_{a_2}^2$ together with the restriction of the rails of $A_2^2$ to these cycles.
Let $\circledcirc^1_3$ be the domain of $B_1$ and let $H_1$ be the crop of $G_2$ by $\circledcirc^1_3$.

Moreover, let $B_2$ be the railed annulus obtained from $A_2^2$ by first deleting $B_1$ and then iteratively removing vertices of degree $1$.
Let $\circledcirc^2_3$ be the domain of $B_2$ and let $H_2$ be the crop of $G_2$ by $\circledcirc^2_3$.
Notice that $B_2$ has $4\beta_{\ref{thm_VitalSpanningLinkage}}(r,k) + 4$ cycles.

We denote by $\mathcal{F}^{\star}$ the set of rails of $A_2$ containing vertices from $Y_2'$ which are precisely those that contain the endpoints of the paths from $\mathcal{J}_1$ in $Y_1$.
Now let $Z_1$ be the set of all ends of those rails of $B_1$ on $C_{a_2-2\beta_{\ref{thm_VitalSpanningLinkage}}(r,k)-1}^2$ which are subpaths of the paths in $\mathcal{F}^{\star}$.
Similarly, let $Z_2$ be the set of ends endpoints of those rails of $B_2$ on the cycle $C_{a_2-2\beta_{\ref{thm_VitalSpanningLinkage}}(r,k)-2}^2$ which are subpaths of the paths in $\mathcal{F}^{\star}$.

Finally, let $\mathcal{J}_2''$ be the collection of all $Z_1$-$Z_2$-subpaths of the paths in $\mathcal{J}^{\star}$.
\medskip

\textbf{Re-routing in $B_1$.}
We can now use arguments analogous to those from the proof of \zcref{claim_RailTheAnnulus1} to find an $X_2$-$Z_1$-linkage $\mathcal{J}_2'$ of size $|X_2|$ in $H_1$.
As before, the existence of this linkage is guaranteed by Menger's theorem and the fact that $|X_2| = |Z_1| \leq 2\beta_{\ref{thm_VitalSpanningLinkage}}(r,k) + 2$.
\smallskip

Indeed, it is possible to combine the linkages $\mathcal{J}_2'$ and $\mathcal{J}_2''$ along the rails in $\mathcal{F}^{\star}$ to create an $X_2$-$Z_2$-linkage of size $|X_2|$.
Hence, we only have to ensure that the second part of our claim is satisfied.
That is, we need to route the vertices in $Z_2$ to the correct vertices in $Y_2'$ as specified by $\mathcal{J}_1$.

\paragraph{Fixing the connections in $B_2$.}
Recall that $Z_2$ denotes the set of endpoints of the rails of $B_2$ on the cycle $C_{a'-2\beta_{\ref{thm_VitalSpanningLinkage}}(r,k)-2}^2$ which are subpaths of the paths in $\mathcal{F}^{\star}$.
Moreover, $Y_2'$ denotes the set of endpoints of the rails of $B_2$ on the cycle $C_{a_2-6\beta_{\ref{thm_VitalSpanningLinkage}}(r,k)-5}^2$.
Similarly, $X_2'$ denotes the set of the endpoints of the paths in $\mathcal{J}_2'$ on the cycle $C_{a_2}^2$ and $Z_1$ denotes the set of endpoints of the paths in $\mathcal{J}_2'$ on the cycle $C_{a_2-2\beta_{\ref{thm_VitalSpanningLinkage}}(r,k)-1}^2$ while $Y_1'$ denotes the endpoints of the paths in $\mathcal{J}_1$ on the cycle $C_{2\beta_{\ref{thm_VitalSpanningLinkage}}(r,k) + 2}^2$.

Let us now set $s^{\star} \coloneqq |\mathcal{F}^{\star}|$ and $\mathcal{F}^{\star} = \{ F^{\star}_1,\dots,F^{\star}_{r^{\star}} \}$ such that, when restricting the rails of $A_2$ to $\mathcal{F}^{\star}$, the rails $F^{\star}_i$ and $F^{\star}_{i+1}$ are consecutive.
We derive from this ordering indices for the elements of $Y_1'$, $Y_2'$, $Z_1$, and $Z_2$ as follows.
We number $Y_1' = \{ y^1_1,\dots,y^1_{s^{\star}}\}$ and $Y_2' = \{ y^2_1,\dots,y^2_{s^{\star}} \}$ such that $y^j_i$ lies on $F^{\star}_i$ for both $j\in[2]$ and all $i\in[s^{\star}]$.
We also number the vertices in $X_1 = \{ x^1_1,\dots,x^1_{s^{\star}}\}$ such that $x^1_i$ is the endpoint of the path in $\mathcal{J}_1$ that contains $y^1_i$ for each $i\in[s^{\star}]$.
Then, we number the vertices $X_2 = \{x^2_1,\dots,x^2_{s^{\star}}\}$ such that $x^1_i$ and $x^2_i$ belong to the same path, denoted by $S_i$, of $\mathcal{S}_2$ for each $i\in [s^{\star}]$.
Finally, we number $Z_1 = \{ z^1_1,\dots, z^1_{s^{\star}}\}$ and $Z_2 = \{ z^2_1,\dots, z^2_{s^{\star}}\}$ such that $z^1_i$ and $z^2_i$ belong to the path $F^{\star}_i$.

Let us now define the bijection $\pi \colon [s^{\star}] \to [s^{\star}]$ such that for each $i\in[s^{\star}]$, $z^2_i$ lies on $F^{\star}_{\pi(i)}$.

It is now entirely possible that $i \neq \pi(i)$, however, in order to complete our proof we need to find a $Z_2$-$Y_2'$-linkage in $B_2$ which connects $z^2_{\pi(i)}$ to $y^2_i$ for each $i\in[s^{\star}]$.
However, we claim that within $\circledcirc_3^2$, this linkage is topologically feasible.
To see this notice that we know we may use the paths from $\mathcal{J}_1$ and $\mathcal{J}_2'\cup\mathcal{J}_2''$ to redraw the curves obtained from the traces of the paths in $\mathcal{S}_2$ to pass through the points of $Y_2'$ and $Z_2$.
It follows now that for each of those curves, the remaining curve between the two points in $Y_2'$ and $Z_2$ may now be confined to be contained entirely in $\circledcirc^2_3$ and this is possible while keeping the collection of these curves being a topological linkage.

With the pattern of order $s^{\star}$ defined as $\big\{ \{ y^2_i,z^2_{\pi(i)} \} \mid i\in[s^{\star}] \big\}$ being topologically feasible in $\circledcirc$, and the fact that $B_2$ is a railed annulus with at least $2r^{\star}$ circles and at least $2s^{\star}$ rails, it follows from \zcref{prop_two-sided-cylinder} that it is realisable in $B_2$, leading to a linkage $\mathcal{J}_2'''$.
Now, $\mathcal{J}_2 \coloneqq \mathcal{J}_2' \cup \mathcal{J}_2'' \cup \mathcal{J}_2'''$ is our desired linkage.
\end{claimproof}

Let $Y_1'$ denote the subset of $Y_1$ consisting only of endpoints of paths from $\mathcal{J}_1$.
To complete the proof of the theorem, we now observe that $\mathcal{F}^{\star}$ contains a $Y_1'$-$Y_2'$-linkage $\mathcal{J}_3$ of size $|Y_1'| = |Y_2'| = |\mathcal{F}^{\star}|$ that is internally vertex-disjoint from both $\mathcal{J}_1$ and $\mathcal{J}_2$.
By combining $\mathcal{J}_1$, $\mathcal{J}_2$, and $\mathcal{J}_3$ into a single linkage $\mathcal{S}^{\star}$, we have now found a set of rivers that is combed in the $(6\beta_{\ref{thm_VitalSpanningLinkage}}(r,k) + 6)$-trimming $A^{\leftmoon}$ of $A_2$.
This means that, by replacing $\mathcal{S}_2$ with $\mathcal{S}^{\star}$ we may obtain a linkage $\mathcal{L}^{\leftmoon}$ which is combed in $A^{\leftmoon}$ and which satisfies $\tau(\mathcal{L}^{\leftmoon}) = \tau(\mathcal{L}_2)$.
Since $G_2$ is a minor of $G$ and $A^{\leftmoon}$ is a minor of the $(8\beta_{\ref{thm_VitalSpanningLinkage}}(r,k) + 8)$-trimming $A^{\star}$ of $A$, it follows that $\mathcal{L}^{\leftmoon}$ corresponds to a linkage $\mathcal{L}^{\star}$ in $G$ which is combed on $A^{\star}$ and has the same pattern as $\mathcal{L}$.
With this, the proof is complete.
\end{proof}

\subsection{Avoiding an insulated vertex}\label{subsec_Irrelevant}
With \zcref{thm_CombedAnus} we are now able to control the interaction of the graph enclosed by some flat wall with the outside and, in particular, with solutions to spanning linkage problems.
That is, when we are given a large enough flat wall, we know that -- apart from the apex set -- we only need to consider a bounded number of additional paths that enter the inner regions of the wall.
The next step is to turn this intuition into an actual algorithm, thereby proving that the centre of a large enough wall can always be avoided by a solution for $k$\textsc{-Spanning Disjoint Paths}, given that $\mathsf{depth}_2$ is bounded.

Before we continue, we require some more definitions.

\paragraph{Layers and central submeshes.}
Let $r \geq 3$ be an integer and $M$ be an $(r \times r)$-mesh.
We define the layers of $M$ recursively as follows:
The first \emph{layer} of $M$ is its perimeter.
For $i \in [2,\lfloor \nicefrac{r}{2} \rfloor]$, the \emph{$i$th layer} of $M$ is the $(i-1)$th layer of the $((r-1) \times (r-1))$-mesh obtained from $M$ by removing its perimeter and iteratively removing vertices of degree $1$.

Given an integer $q \in [3,r]$ such that $r-q$ is even, the \emph{$q$-central submesh} of $M$ is the $(q \times q)$-mesh obtained from $M$ by removing the first $\nicefrac{r-q}{2}$ layers of $M$ and then iteratively deleting vertices of degree at most $1$.

\paragraph{Bridges and attachments.}
Let $H$ be a subgraph of a graph $G$.
An \emph{$H$-bridge} in $G$ is a connected subgraph $J$ of $G$ such that $E(J) \cap E(H) = \emptyset$ and either $E(J)$ consists of a unique edge with both ends in $H$, or 
$J$ is constructed from a component $C$ of $G - V(H)$ and the non-empty set of edges $F \subseteq E(G)$ with one end in $V(C)$ and the other in $V(H)$, by taking the union of $C$, the endpoints of the edges in $F$, and $F$ itself.
Notice that the $H$-bridges induce a partition of $E(G)\setminus E(H)$.
The vertices in $V(J) \cap V(H)$ are called the \emph{attachments} of $B$ and the set $V(J)\setminus V(H)$ is called the \emph{interior} of $J$.

\paragraph{Homogeneous meshes.}
The following definitions are due to Gorsky, Seweryn, and Wiederrecht \cite{GorskySW2026Price} who recently showed that homogeneity of flat meshes with respect to a set of $q$ colours can be achieved within polynomial bounds.

Let $r\geq 3$ be an integer, $M$ be an $(r \times r)$-mesh in $G$ and let $C_P$ be the perimeter of $M$.
Let $C\subseteq M$ be any cycle of $M$.
The \emph{compass} of $C$ is the union of $C$ together with all $C$-bridges in $G$ that are entirely contained in the compass of $M$.
We denote the compass of $C$ with respect to $\rho$ by $\mathsf{compass}_M(C)$ and we drop the $\rho$ in the subscript if it is understood from the context.
The \emph{interior} of a cycle $C\subseteq M$, denoted by $\mathsf{int}_M(C)$ is the graph $\mathsf{compass}_{\rho}(C)-C$.
We also drop the $M$ in the subscript if the wall is understood from the context.
\smallskip

A \emph{$q$-colorful graph} is a pair $(G,\chi)$ where $G$ is a graph and $\chi\colon V(G) \to 2^{[q]}$ assigns to each vertex a, possibly empty, set of at most $q$ colours.
Notice that the case $q=1$ is precisely the case of annotated graphs.
We explicitly allow for $q$ to be $0$.

Given an arbitrary $q$-colorful graph $(G,\chi)$, a set $X\subseteq V(G)$ and a subgraph $H\subseteq G$, we define
\begin{align*}
 \chi(X) \coloneq \bigcup_{v\in X} \chi(v),
\end{align*}
as well as $\chi(H) \coloneqq \chi (V(H))$.
Note that this should not be confused with the chromatic number of $H$, which is of no concern to us.
Finally, we write $(H,\chi)$ for the \emph{$q$-colorful subgraph} of $G$, where we implicitly restrict $\chi$ to the vertex set of $H$.
\medskip

Let $q$ be a non-negative integer and $(G,\chi)$ be a $q$-colorful graph.
Let $r\geq 3$ be an integer and $M$ be an $(r \times r)$-mesh in $G$ such that $M$ is flat in $G$ witnessed by a sphere-rendition $\rho$.
We say that $M$ is \emph{homogeneous} in $(G,\chi)$ if for every cycle $C\subseteq M$ it holds that
\begin{align*}
 \chi( \mathsf{int} (C) ) = \chi( \mathsf{compass} (M) ).
\end{align*}

We stress that the definition for homogeneity above, as well as the following definitions, are slightly more general than they would need to be for our purposes.
However, the results we will be using from the literature have been proven in this general setting and to adapt them to simpler variants sufficient for our setting appears tedious.
While such an adaptation is clearly possible, it would not provide any substantial improvement for the involved functions as the general bound of $2^{\poly(k + r)}$ is already best possible up to the degree of the polynomial.
Therefore, we opt for a minimisation in technical proofs in exchange for slightly overloaded definitions.
The one simplification we are going to take, however, is the same we have done for our statement of \zcref{prop_DisjointPathsIrrelevant}.
That is, we are not taking full advantage of the \textsc{Folio} problem and instead constrain ourselves to the definition of strong $(k,R)$-irrelevancy as given right before the statement of \zcref{prop_DisjointPathsIrrelevant}.
\smallskip

The main problem now is to provide a definition for the colours we assign to our vertices in order to model the properties required for our case.
The definitions below are much more general than necessary as they provide enough additional information to extend from \textsc{Spanning Disjoint Paths} to a folio variant.
It might well be possible to remove any dependencies on the respective collections of rooted minors inside the cells, but as mentioned before, to remove these dependencies one would need to open the black box of the Graph Minors Algorithm further.

\paragraph{Rooted graphs.}
Let $k\geq 0$ be an integer.
A \emph{$k$-rooted graph} is a pair $(G,\mathcal{R})$ where $\mathcal{R} = \{ r_1,r_2,\dots,r_k \} \subseteq V(G)$ is a multiset of $k$ not necessarily distinct vertices.
We sometimes write $(G,r_1,\dots,r_k)$ instead of $(G,\mathcal{R})$.
The $r_i$ are called the \emph{roots} of $(G,\mathcal{R})$.
A pair $(G,\mathcal{R})$ is a \emph{rooted graph} if it is a $k$-rooted graph for some $k$.

We stress that there is a huge difference between rooted graphs and annotated graphs.
While at first glance both concepts might look fairly similar, in an annotated graph, all annotated vertices are -- essentially -- interchangeable.
In some sense one may interpret a $k$-rooted graph $(G,\mathcal{R})$ as a $k$-colorful graph where for each $i\in[k]$ there is a unique vertex of $G$ that carries colour $i$.
In this sense, the idea of rooted minors is fairly similar to the idea of red-minors, or more general of colorful minors of $k$-colorful graphs.
There is one key difference though and that is that we are no longer allowed to forget that some vertices were annotated -- or carried a colour.
This means, for rooted minors, all roots are required to be preserved in some way.

\paragraph{The $d$-folio of a rooted graph.}
A rooted graph $(H,q_1,\dots,q_k)$ is a \emph{rooted minor} of another rooted graph $(G,r_1,\dots,r_k)$ if there exists a minor model $\varphi$ of $H$ in $G$ such that for every $i\in[k]$ it holds that $r_i \in \varphi(q_i)$.

Given an integer $d\geq 0$ and a rooted graph $(G,\mathcal{R})$, the \emph{$d$-folio} of $G$, denoted as $d\text{-}\mathsf{folio}(G,\mathcal{R})$, is the set of all rooted graphs $(H,\mathcal{Q})$ of detail at most $d$ that are rooted minors of $(G,\mathcal{R})$.

\paragraph{Augmenting cells.}
Let $a,r \in \mathbb{N}$ with $r \geq 3$.
Let $G$ be a graph, $A \subseteq V(G)$ be a set of at most $a$ vertices, and let $M$ be an $(r \times r)$-mesh in $G$ that is flat in $G-A$ witnessed by the sphere rendition $\rho$.
We denote by $C(M)$ the collection of all cells contained in the disc defined by the trace of the perimeter of $M$ which contains all non-perimeter vertices of $M$ of degree $3$.

For each cell $c \in C(M)$ we consider a labelling $\lambda_c \colon N(c) \to [3]$ such that the set of labels assigned by $\lambda_c$ to $N(c)$ is one of $[1]$, $[2]$, or $[3]$.
In addition, we consider a bijection $\alpha_A \colon A \to [a]$.

The labellings $\lambda_c$ and $\alpha_A$ will be used to fix labels for the boundaries of the annotated graphs $(G[V(\sigma(c)) \cup A],N(c) \cup A)$ to translate them into rooted graphs.

Let $c \in C(M)$, we define the ordering $\Omega_c \coloneqq \langle x_1,\dots,x_{\ell}\rangle$ with $\ell \leq 3$ of the vertices of $N(c)$ such that the following holds:
\begin{enumerate}
 \item $\langle x_1,\dots,x_q\rangle$ is a counter clockwise cyclic ordering of the vertices of $N(c)$ as they appear on the boundary of $c$ in $\rho$ and
 \item for each $i\in[q]$ we have that $\lambda_c(x_i) = i$.
\end{enumerate}

For each cell $c \in C(W)$ we fix $\pi_c \colon A \cup N(c) \to [a + |N(c)|]$ such that $\pi_c(x) = \alpha_A(x)$ for all $x \in A$ and $\pi_c(x) - a = \lambda_c(x)$ for all $x \in N(c)$.
Finally, the \emph{augmented cell} $c$ is defined as the rooted graph
\begin{align*}
 \mathbf{R}_c \coloneqq (G[V(\sigma(c) \cup A)],r_1,\dots,r_{a + |N(c)|})
\end{align*}
where $r_i = \pi_c^{-1}(i)$ for all $i \in [ a + |N(c)| ]$.
We denote the graph $G[V(\sigma(c) \cup A)]$ by $R_c$.

\paragraph{Palettes of augmented cells.}
With each $c \in C(M)$ we associate a set of graphs as follows.

Let $d \geq 0$ be an integer.
We denote by $d\text{-}\mathsf{palette}(c)$ the set $d\text{-}\mathsf{folio}(\mathbf{R}_c)$ which is referred to as the \emph{$d$-palette} of $c$.

Notice that $| d\text{-}\mathsf{palette}(c) | \in 2^{\mathbf{O}((a + d)^2)}$.
In the following, the set of all rooted graphs of detail at most $d$ with at most $a + 3$\, roots is treated as the set of colours for our colorful graph.
This means, in particular, that $q \in 2^{(a + d)^2}$.

\paragraph{A $d$-folio homogeneous wall.}
Let $a,r \in \mathbb{N}$ with $r \geq 3$.
Let $G$ be a graph $A \subseteq V(G)$ a set of at most $a$ vertices and let $M$ be an $(r \times r)$-mesh in $G$ that is flat in $G-A$ witnessed by the sphere rendition $\rho$.
We define the \emph{$(d,A,M,\rho)$-folio augmentation} of $G$, denoted by $(G^{(d,A,M,\rho)},\phi)$, as the colorful graph obtained from $G$ by introducing, for each $c \in C(M)$, a vertex $x_c$ adjacent to exactly the vertices of $N(c)$, setting $\phi(v) \coloneqq \emptyset$ for all $v\in V(G)$ and $\phi(c) \coloneqq d\text{-}\mathsf{palette}(c)$ for every $c \in C(M)$.
Notice that $\rho$ may be extended to a sphere rendition $\rho'$ of $G^{(d,A,M,\rho)} - A$ such that $\rho$ is the restriction of $\rho'$ to $G - A \subseteq G^{(d,A,M,\rho)} - A$ simply by setting $\sigma_{\rho'} \coloneqq \sigma(c) + x_c$ for all $c\in C(M)$ and $\sigma_{\rho'}(c) \coloneqq \sigma(c)$ for all $c \in C(\rho) \setminus C(M)$.
We call $\rho'$ the \emph{$\rho$-augmentation} for $G^{(d,A,M,\rho)}$.

In particular, $M$ is still flat in $G^{(d,A,M,\rho)} - A$ as witnessed by $\rho'$.

Now, $M$ is said to be \emph{$(d,A)$-folio homogeneous} in $G$ if $M$ is homogeneous in $(G^{(d,A,W,\rho)} - A,\phi)$.

\paragraph{Suspended meshes.}
Let $r\geq 4$.
Given a flat mesh $M$ in a graph $G$ witnessed by a sphere rendition $\rho$, we say that a cell $c\in C(M)$ is an \emph{inner cell} of $M$ if the second and third layer of $M$ separate $N(c)$ from the perimeter of $M$ in $G$.
\smallskip

Let $r \geq 4$ be an integer.
Let $G$ be a graph and $M \subseteq G$ be an $(r \times r)$-mesh in $G$.
We say that $M$ is a \emph{suspended $(r\times r)$-mesh} -- or simply a \emph{suspended mesh} -- in $G$ if there exists a sphere-rendition $\rho$ of $G-A$ such that $M$ is flat in $\rho$ and for every cell $c\in C(M)$ the following two properties hold:
\begin{enumerate}
 \item for every non-empty set $S \subseteq N(c)$ there exists a connected subgraph of $\sigma(c)$ containing all vertices from $S$ and avoiding all vertices from $N(c)\setminus S$ and
 \item there exist $|N(c)|$ vertex-disjoint paths from $N(c)$ to the perimeter of $M$.
\end{enumerate}
We say that $\rho$ \emph{witnesses} that $M$ is suspended in $G$.

\begin{proposition}\label{prop_irrelevantVertex}
There exists a function $f_{\ref{prop_irrelevantVertex}} \colon \mathbb{N}^4 \to \mathbb{N}$ such that for all integers $d,k,b,r \in \mathbb{N}$, if $(G,R)$ is an annotated graph of $\mathsf{bidim}(G,R) \leq b$ with $A \subseteq V(G)$ containing an $(f_{\ref{prop_irrelevantVertex}}(b,d,k,r) \times f_{\ref{prop_irrelevantVertex}}(b,d,k,r))$-mesh such that $M$ is suspended and $(4d,A)$-folio homogeneous in $G - A$ as witnessed by the sphere rendition $\rho$ of $G - A$ such that the compass of $M$ does not contain a vertex of $R$, then the following holds:

If $\Delta_r$ is the disc bounded by the perimeter of the $r$-central submesh $M'$ of $M$ which contains all nodes of $M'$ in $\rho$ and $$X \coloneqq \bigcup_{\substack{c \in C(\rho) \\ c \subseteq \Delta_r}} V(\sigma(c)),$$ then every $v \in X$ is strongly irrelevant for $(k,d)\text{-}\mathsf{folio}(G,R)$.

Moreover, $f_{\ref{prop_irrelevantVertex}}(b,d,k,r) \in 2^{\poly(b + d)} \cdot \poly(k + r)$.
\end{proposition}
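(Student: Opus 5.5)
This statement is the folio-homogeneous irrelevant-vertex theorem from the work of Cavallaro, Gorsky, Kreutzer, Thilikos, and Wiederrecht \cite{CavallaroGKTW2026Optimal}, combined with the polynomial homogenisation of Gorsky, Seweryn, and Wiederrecht \cite{GorskySW2026Price}. I would not reprove it from scratch. Instead I would check that its hypotheses match ours and then follow their argument, which has three stages.

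\textbf{Reduction to an annulus.} First, use the bound $\mathsf{bidim}(G,R)\le b$ together with the fact that the compass of $M$ is free of $R$. Given any $(k,d)$-folio witness (a linkage or rooted minor model $\mu$ with terminals in $R\cup A$), the compass of $M$ contains only a bounded number of terminals, namely none. So $M$ minus a central submesh yields a large blank, terminal-free, separating railed annulus, whose rails are horizontal and vertical pieces of $M$ around $\Delta_r$. Then apply the unannotated Annulus Combing Lemma, the non-spanning analogue of \zcref{thm_CombedAnus} backed by \zcref{prop:VitalLinkage}. The vital-linkage bound depends on $\mathsf{bidim}$, and this is where the $2^{\poly(b+d)}$ term enters. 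The outcome is that $\mu$ can be assumed to cross an annulus of width $2^{\poly(b+d)}\cdot\poly(k)$ only along $\poly(k+d)$ rails.

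\textbf{Cutting and re-routing.} After combing, cut $\mu$ at the inner boundary of the annulus. This turns the part of $\mu$ inside the inner disc into a rooted minor of detail at most $4d$. Its roots are the crossing points on rails together with the vertices of $A$, and it lives inside the augmented cells of the interior. Because $M$ is suspended, every subset of a cell's nodes can be connected inside the cell, and the cell's nodes reach the perimeter through disjoint paths. Because $M$ is $(4d,A)$-folio homogeneous, every palette that occurs anywhere in the compass also occurs in $\mathsf{int}(C)$ for every cycle $C$ of $M$. In particular, each such palette occurs in the ring between $\Delta_r$ and the annulus.

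The plan is then to re-realise the inner rooted minor inside that ring. Each cell-level piece is replaced by a cell with the same palette, the pieces are joined through the grid infrastructure, and the result avoids $\Delta_r$ entirely. This shows that every $v\in X$ is strongly irrelevant.

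\textbf{Main obstacle.} The hardest step is this last re-realisation. It must keep the roots in $A$ consistent across all the cells used, which is exactly why augmented cells include $A$ among their roots. It must also fit a detail-$4d$ model into $\poly(k+d)$ many disjoint subgrids. I would follow \cite{CavallaroGKTW2026Optimal} verbatim here. To get the final bound, I would size the mesh as $\poly(k+r)$ for the central submesh and the routing, times $2^{\poly(b+d)}$ for the combing width, which gives $f_{\ref{prop_irrelevantVertex}}$.
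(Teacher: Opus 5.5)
Your proposal matches the paper's treatment. The paper gives no proof of this proposition either and imports it as a black box from Cavallaro, Gorsky, Kreutzer, Thilikos, and Wiederrecht; since homogeneity is a hypothesis here, the Gorsky--Seweryn--Wiederrecht homogenisation is not needed for this particular statement. Your outline of the internal argument is only a heuristic: the annulus is separating only in the rendition of $G-A$, so the $\mu$-neighbours of $A$ and the branch vertices of $\mu$ become extra terminals that may lie inside the annulus and must be handled before combing (compare how the paper does this for linkages in the proof of \zcref{thm_findIrrelevantVertexFlatWall}). Because you defer to the cited proof rather than relying on that outline, this does not affect correctness.
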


\paragraph{Finding an irrelevant vertex.}
We are now ready to combine all of the imported technology above with \zcref{thm_CombedAnus} to prove that the centre of any large enough suspended flat and blank mesh is irrelevant for the \textsc{Spanning Disjoint Paths} problem.

\begin{theorem}\label{thm_findIrrelevantVertexFlatWall}
There exists a function $f_{\ref{thm_findIrrelevantVertexFlatWall}} \colon \mathbb{N}^3 \to \mathbb{N}$ that takes as input an instance $(G,R,\mathcal{T})$ of $k$\textsc{-Spanning Disjoint Paths} with $\mathsf{depth}_2(G,R) \leq r$, a set $A \subseteq V(G)$ with $|A| \leq a$, and a suspended, blank, and $(0,A)$-folio homogeneous flat $(m \times m)$-mesh $M$ in $G-A$ witnessed by the sphere-rendition $\rho$ of $G-A$, such that the compass of $M$ does not contain any terminal, and $m \geq f_{\ref{thm_findIrrelevantVertexFlatWall}}(r,a,k)$ and finds a vertex $v \in V(G)\setminus R$ such that $(G,R,\mathcal{T})$ is a \textsc{yes}-instance if and only if $(G-v,R,\mathcal{T})$ is a \textsc{yes}-instance in time $2^{\poly(r + a + k)}|G|$.
\end{theorem}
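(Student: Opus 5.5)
The plan is to combine the spanning Annulus Combing Lemma (\zcref{thm_CombedAnus}) with the known irrelevant vertex result for \textsc{Disjoint Paths} on flat, homogeneous, suspended meshes (\zcref{prop_irrelevantVertex}). I take $v$ to be a vertex drawn in a cell of the $r_0$-central submesh of $M$, for a suitable $r_0$, and show that $v$ is irrelevant. Note that $v \notin R$, because $M$ is blank.

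Since $M$ is flat in $G-A$ and its compass is blank and terminal-free, the concentric layers of $M$ together with its vertical paths yield a separating railed annulus in a sphere rendition of $G-A$. It has two ``vortices'': the outside of the perimeter (everything not drawn in the compass) and the central disc. To handle the apices, I first guess, in $2^{\poly(a)}$ ways, which apex vertices are used by a fixed solution and how each path passes through them. Each guess turns $(G,R,\mathcal{T})$ into an instance of $(k+2a)$\textsc{-Spanning Disjoint Paths} on $G-A$; this is the same splitting used in the proof of \zcref{lemma_theDesolationOfAnInstance}. The deletion does not increase $\mathsf{depth}_2$, since that parameter is red-minor monotone.

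Take $m \geq f_{\ref{thm_findIrrelevantVertexFlatWall}}(r,a,k)$ large enough to contain a railed annulus $A_{p,s}$ with $p \geq f_{\ref{thm_CombedAnus}}(k+2a,r)$ and $s \geq g_{\ref{thm_CombedAnus}}(k+2a,r)$ formed from layers of $M$, with a central submesh of order $f_{\ref{prop_irrelevantVertex}}$ still inside it. Then \zcref{thm_CombedAnus} turns any spanning solution $\mathcal{L}$ into one combed in the trimmed annulus. The number of rails it uses is bounded by $2\beta_{\ref{thm_VitalSpanningLinkage}}+2$, by the river count in \zcref{claim_FewRivers}.

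After combing, the part of the solution inside the inner disc of the annulus is a linkage of bounded size $k' \in 2^{\poly(r+k+a)}$. Its endpoints lie on the innermost combed circle, and it contains no red vertex, because the compass is blank. This inner part is therefore an ordinary instance of $k'$\textsc{-Disjoint Paths} with terminals on the inner circle. \zcref{prop_irrelevantVertex} applies with detail $0$, using homogeneity, suspension and bounded bidimensionality; the bidimensionality bound comes from the terminals being on one face and from bounded $\mathsf{depth}_2$. It says that the centre vertex $v$ can be avoided while realising the same pattern. Gluing this rerouted inner linkage to the unchanged outside part gives a spanning solution in $G-v$. Spanning is preserved because every red vertex lies outside the compass, and by item (ii) of \zcref{thm_CombedAnus} the outside part is untouched. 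The converse direction is trivial.

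The main obstacle is making the apex handling and the combing interact cleanly. The combed linkage must still have its rails meeting the inner circle in a way that matches the roots used in \zcref{prop_irrelevantVertex}, possibly by extending the rivers along rails down to the $r_0$-central submesh. Everything else inherits its bounds from \zcref{thm_VitalSpanningLinkage}.

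For the algorithm, $v$ is simply any vertex in a central cell, so no search is needed. The running time $2^{\poly(r+a+k)}|G|$ then only covers computing the mesh's central layers, which takes linear time.
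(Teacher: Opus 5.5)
Your overall architecture matches the paper's: comb the solution in an annulus built from layers of $M$ via \zcref{thm_CombedAnus}, bound the number of used rails by the river count, then apply \zcref{prop_irrelevantVertex} to the inner \textsc{Disjoint Paths} instance. The gap is in the apex handling, and it is more basic than the rail-matching issue you flag. No guessing is needed: the argument is existential for a fixed solution $\mathcal{L}$, and $2^{\poly(a)}$ guesses could not determine the split anyway. When you split $\mathcal{L}$ at the apices, the new terminals are the $\mathcal{L}$-neighbours of the vertices of $A$, and these can be arbitrary vertices of the compass of $M$:
\begin{enumerate}
\item if one lies in the annulus you comb, the terminal-free hypothesis of \zcref{thm_CombedAnus} fails;
\item if one lies in the central disc, the inner linkage does not have all its ends on the innermost circle;
\item if one is $v$ itself, then $v$ is a terminal of your auxiliary instance on $G-A$ and can never be avoided there.
\end{enumerate}
So an argument that stays entirely in $G-A$ cannot prove that $v$ is irrelevant.

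The paper fixes this in two steps.

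First, it carves $2a+1$ pairwise disjoint nested railed annuli from the layers of $M$, each large enough for the combing lemma; this is where the factor $(4a+2)$ in $f_{\ref{thm_findIrrelevantVertexFlatWall}}$ comes from. Every apex has at most two neighbours on $\mathcal{L}$, so some annulus $A_j$ has a compass containing none of them. After splitting at $A$, that annulus is blank and terminal-free, and the solution is combed only there.

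Second, it puts the apices back before the last step. It takes the inner graph of the innermost circle of the trimmed $A_j$ together with $A$, and uses as terminals the endpoints of the used rails on that circle plus the apices. This instance has the following properties:
\begin{itemize}
\item it has at most $a+2\beta_{\ref{thm_VitalSpanningLinkage}}(2a+k,r)+2$ terminals;
\item its terminal set has bidimensionality at most $a+1$, because all but at most $a$ terminals lie on one cycle (bounded $\mathsf{depth}_2$ plays no role in this bound);
\item the compass of the central submesh contains no terminal.
\end{itemize}
\zcref{prop_irrelevantVertex} then applies because $M$ is $(0,A)$-folio homogeneous. Homogeneity with respect to the apex set is exactly what absorbs the solution's excursions from $A$ into the centre. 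In your version the $A$ in that hypothesis is never genuinely used, which is a symptom of the gap.
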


\begin{proof}
We proceed in the following steps.
First, we show that $M$ contains a large submesh $M_1$ which is still flat in $G-A$ and whose compass is blank and free of terminals.
Then, as the second step, we divide $M_1$ into $2a+2$ regions as follows:
The first $2a+1$ regions are made up from families of layers of $M_1$ while the $2a+2$nd region is a central submesh $M_2$ of $M_1$.
Finally, we show that the central region of $M_2$ is irrelevant to our instance.
The purpose of the first $2a+1$ regions is to find a railed annulus whose compass avoids the interaction of a fixed solution with the apex set entirely.
This is needed in order to apply \zcref{thm_CombedAnus}.

Before we dive in, let us clarify the function $f_{\ref{thm_findIrrelevantVertexFlatWall}}$:
\begin{align*}
f_{\ref{thm_findIrrelevantVertexFlatWall}}(r,a,k) \coloneqq (4a+2)(f_{\ref{thm_CombedAnus}}(k,r) + 1) + 4g_{\ref{thm_CombedAnus}}(k,r) + f_{\ref{prop_irrelevantVertex}}(2a+1,0, 2a + 2\beta_{\ref{thm_VitalSpanningLinkage}}(2a + k,r) + 2,3).
\end{align*}

\paragraph{Insulating the central region.}
Let us fix $m_0 \coloneqq (2a+1)(f_{\ref{thm_CombedAnus}}(k,r) + 1)$ and let $\mathcal{C}_0 = \{ C_{1} ,\dots , C_{m_0}\}$ be the first $m_0$ layers of $M$ numbered in order such that $C_1$ is the perimeter of $M$.

Notice that there exists a set $\mathcal{P}_0'$ of $n_0 \coloneqq g_{\ref{thm_CombedAnus}}(k,r)$ paths in $M$ that start on the perimeter of the $(2g_{\ref{thm_CombedAnus}}(k,r)) + f_{\ref{prop_irrelevantVertex}}(2a+1,0, 2a + 2\beta_{\ref{thm_VitalSpanningLinkage}}(2a + k,r) + 2,3))$-central submesh $M_1$ of $M$, end on $C_1$, and such that $C_i \cap P$ is a path for every $i \in [m_0]$.
Let us denote by $\mathcal{P}_0$ the set of the $C_1$-$C_{m_0}$-subpaths of the paths in $\mathcal{P}_0'$.
Now $\mathcal{C}_0$ together with $\mathcal{P}_0$ forms an $(m_0,n_0)$-railed annulus with domain $\circledcirc_0$.
Moreover, if we consider the restriction $\rho_0$ of $\rho$ to $\circledcirc_0$, we can see that $A_0$ is a partially embedded, blank, and its compass does not contain any terminal of $\mathcal{T}$.

Let $m_1 \coloneqq f_{\ref{thm_CombedAnus}}(k,r)$.
Then there exist $2a+1$ pairwise vertex-disjoint $(m_1,n_0)$-railed annuli $A_1,\dots,A_{2a+1}$ in $A_0$ where, for every $i \in [2a + 1]$, every circle of $A_i$ is a member of $\mathcal{C}$ and every rail of $A_i$ is a subpath of some path in $\mathcal{P}_0$.
Moreover, the $A_i$ may be chosen in a way such that there are $2a$ cycles $U_1,\dots,U_{2a} \in \mathcal{C}$ where $U_i = C_{i \cdot f_{\ref{thm_CombedAnus}}(k,r) + 1}$ and the trace of $U_i$ separates $A_1,\dots,A_{i}$ from $A_{i+1},\dots,A_{2a+1}$ in $\circledcirc_0$.

Finally, let $M_2$ be the $(f_{\ref{prop_irrelevantVertex}}(2a+1,0, 2a + 2\beta_{\ref{thm_VitalSpanningLinkage}}(2a + k,r) + 2,3))$-central submesh of $M_1$.

\paragraph{The central vertex.}
Let $\Delta_3$ be the disc bounded by the perimeter of the $3$-central submesh $M_3$ of $M_2$ which contains all nodes of $M_3$ in $\rho$.
Let furthermore
\begin{align*}
    X \coloneqq \bigcup_{\substack{c \in C(\rho) \\ c \subseteq \Delta_3}} V(\sigma(c)),
\end{align*}
and let $v \in X$ be any vertex.
\medskip

We claim that $(G,R,\mathcal{T})$ is a \textsf{yes}-instance if and only if $(G-v,R,\mathcal{T})$ is a \textsf{yes}-instance.
\smallskip

To show that this claim is true, suppose towards a contradiction that $(G-v,R,\mathcal{T})$ is a \textsf{no}-instance while there exists a spanning $\mathcal{T}$-linkage $\mathcal{L}$ in $(G,R,\mathcal{T})$.

\paragraph{Finding an insulating annulus.}
Notice that every vertex $x \in A$ has degree $0$, $1$, or $2$ in $\mathcal{L}$, where we say that $x$ has degree $0$ if either $x = s_i = t_i$ for some $(s_i,t_i) \in \mathcal{T}$, or $x \notin V(\mathcal{L})$.
In either of these cases, no vertex contained in $X$ can interact with $x$ and thus we may disregard all vertices of $A$ with degree $0$ in $\mathcal{L}$.

For all other vertices from $A$ it might happen that their neighbour(s) in $\mathcal{L}$ belong to the compass of one of the $A_i$.
However, since every $x \in A$ has degree at most $2$ in $\mathcal{L}$, there can be at most $2a$ many indices $i \in[2a+1]$ such that $A_i$ contains some vertex of $\mathcal{L}$ that is adjacent -- in $\mathcal{L}$ --  to a vertex from $A$.
Hence, there must be $j \in [2a + 1]$ such that the compass of $A_j$ does not contain a vertex from $N_{\mathcal{L}}(x)$ for any $x \in A$.
From this discussion it follows that we may take $A = \{ x_1,\dots,x_a\}$ and then perform the same sort of operation as we did in the proof of \zcref{thm_VitalSpanningLinkage} in order to reduce to an instance without apices.
That is, starting from $\mathcal{T}$ we iterate over all $i \in [a]$ and whenever we reach $x_i$, if $x_i$ has degree $0$ in $\mathcal{L}$ we just delete $x_i$ and possibly all corresponding terminal pairs.
If $x_i$ has degree $1$ in $\mathcal{L}$, then there is $(s,t) \in \mathcal{T}$ with $x \in \{ s,t\}$, without loss of generality assume $x = s$.
Now let $y$ be the unique neighbour of $x_i$ in $\mathcal{L}$.
We delete $x_i$ and replace $(x,t)$ with $(y,t)$ in $\mathcal{T}$.
Finally, if $x_i$ has degree $2$, then there exist neighbours $y$ and $z$ of $x_i$ in $\mathcal{L}$ and $(s,t) \in \mathcal{T}$ such that $y$ separates $s$ and $x_i$ in $\mathcal{L}$ while $z$ separates $x_i$ and $t$ in $\mathcal{L}$.
In this case, we delete $x_i$ and replace $(s,t)$ with the pairs $(s,y)$ and $(z,t)$.
Let $(G-A,R',\mathcal{T}')$ be the resulting instance of $|\mathcal{T}'|$\textsc{-Spanning Disjoint Paths} where $R' \coloneqq R\setminus A$, and let $T'$ denote the set of terminals.

\paragraph{Reducing the instance.}
By our choice of $A_j$, we know that in $(G-A,R',\mathcal{T}')$, the compass of $A_j$ contains neither a vertex of $R'$ nor of $T'$.
Moreover, $A_j$ is separating in $G-A$.
Hence, we may now apply \zcref{thm_CombedAnus}.

This yields a spanning $\mathcal{T}'$-linkage $\mathcal{L}^{\star'}$ in $G-A$ such that
\begin{enumerate}
 \item $\mathcal{L}^{\star'}$ is combed in the $h_{\ref{thm_CombedAnus}}(k,r)$-trimming $A_j^{\star}$ of $A_j$, and
 \item if $\circledcirc_1$ denotes the domain of $A_j$ and $G_j$ is the crop of $G-A$ by $\circledcirc_1$, then $\bigcup\mathcal{L}^{\star'} - G_j \subseteq \bigcup\mathcal{L} - G_j$.
\end{enumerate}
Moreover, we know from \zcref{claim_FewRivers} that the total number of rails in $A_j^{\star}$ that contain vertices from $\mathcal{L}^{\star'}$ is at most $2\beta_{\ref{thm_VitalSpanningLinkage}}(2a + k,r) + 2$.
We recover a spanning $\mathcal{T}$-linkage $\mathcal{L}^{\star}$ by re-inserting the vertices of $A$ into the paths of $\mathcal{L}^{\star'}$ according to the way $\mathcal{T}'$ was created.

We now further edit our instance as follows:
Let $C^{\fullmoon}$ be the circle of $A^{\star}$ with the highest index when seen as a circle of $A_0$.
That is, no circle of $A^{\star}$ separates $C^{\fullmoon}$ from $M_3$ and $C^{\fullmoon}$ separates all other circles of $A^{\star}$ from $M_3$ in $G-A$.

Let $G^{\fullmoon'}$ be the inner graph of $C^{\fullmoon}$ in $\rho$ and let $G^{\fullmoon} \coloneqq G[V(G^{\fullmoon'}) \cup A]$.
Then, for each path rail $F$ of $A_j$ that contains a vertex of $\mathcal{L}^{\star}$, select some vertex $y_F$ to be the endpoint of $F$ on $C^{\fullmoon}$.
Notice that, if we let $\Omega^{\fullmoon}$ be the cyclic ordering of $V(C^{\fullmoon})$ obtained by traversing along $C^{\fullmoon}$ in clockwise direction, then $(G^{\fullmoon}-A,\Omega^{\fullmoon})$ has a vortex-free rendition $\rho^{\fullmoon}$ in the disc $\Delta^{\fullmoon}$.

Let $T^{\fullmoon}$ be the collection of all vertices of $T\cup A$ in $G^{\fullmoon}$ together with all vertices $y_F$ as above.
Then $|T^{\fullmoon}| \leq 2\beta_{\ref{thm_VitalSpanningLinkage}}(2a + k,r) + 2 + a$ since no vertex of $T$ is contained in the compass of $M$.
Moreover, since all but at most $a$ vertices of $T^{\fullmoon}$ are contained in $C^{\fullmoon}$ it follows that $\mathsf{bidim}(G^{\fullmoon},T^{\fullmoon}) \leq a + 1$.
Now notice that for each path $L \in \mathcal{L}^{\star}$, every component $L'$ of $G^{\fullmoon} \cap L$ is a $T^{\fullmoon}$-path.
That is, both endpoints of $L'$ belong to $T^{\fullmoon}$, let us call them $s_{L'}$ and $t_{L'}$.
Let 
\begin{align*}
\mathcal{T}^{\fullmoon} \coloneqq \{ (s_{L'},t_{L'}) \mid \text{ there exists } L\in\mathcal{L}^{\star} \text{ such that } L' \text{ is a component of } L \cap G^{\fullmoon} \}.
\end{align*}

\paragraph{Recovering the linkage.}
Notice that, if $\mathcal{Q}$ is any $\mathcal{T}^{\fullmoon}$-linkage in $G^{\fullmoon}$, then we may replace $\mathcal{L}^{\star} \cap G^{\fullmoon}$ with $\mathcal{Q}$ and obtain a spanning $\mathcal{T}$-linkage in $G$.

Recall that $M_2$ is the $(f_{\ref{prop_irrelevantVertex}}2a+1,0, 2a + 2\beta_{\ref{thm_VitalSpanningLinkage}}(2a + k,r) + 2,3)$-central submesh of $M_1$.
Moreover, by our assumption, $M_1$ is $(0,A)$-folio homogeneous.
Since $|T^{\fullmoon}| \leq 2a + 2\beta_{\ref{thm_VitalSpanningLinkage}}(2a + k,r) + $ and $\mathsf{bidim}(G^{\fullmoon},T^{\fullmoon}) \leq 2a + 1$ we may now apply \zcref{prop_irrelevantVertex} to $(G^{\fullmoon},\mathcal{T}^{\fullmoon})$ and $M_2$.
This, however, means that every vertex in $X$, including our vertex $v$, is irrelevant
Hence, there exists a $\mathcal{T}^{\fullmoon}$-linkage $\mathcal{Q}$ in $G^{\fullmoon}$ that does not contain $v$,
By our discussion above this means that there exists a spanning $\mathcal{T}$-linkage $\mathcal{Q}^{\star}$ in $G$ that avoids $v$.
Hence, $\mathcal{Q}^{\star}$ is also a spanning $\mathcal{T}$-linkage in $G-v$ and thus, $(G - v,R,\mathcal{T})$ is in fact a \textsf{yes}-instance of $k$\textsc{-Spanning Disjoint Paths}.
As this is a clear contradiction to our assumptions, our proof is now complete.
\end{proof}

\section{The Spanning Linkage Algorithm}\label{sec_algorithm}

We now have everything in place to finalise the algorithmic part of our main theorems.
Indeed, \zcref{lemma_WahtToDoWithClique} and \zcref{thm_findIrrelevantVertexFlatWall} provide the foundation for finding the irrelevant vertex.
All we need to do is to ensure that, in any graph of large treewidth we can either find a large clique minor or a large suspended and homogeneous flat mesh efficiently.
Once this is done, the data reduction part of our algorithm is established.
Finally, we require a routine that solves $k$\textsc{-Spanning Disjoint Paths} in the regime of bounded treewidth.

We begin by describing the algorithm overall and then, in the subsequent subsections, fill in the two missing steps outlined above.
We describe the algorithm in slightly more detail, elaborating on some crucial subroutines as we go.
One addition we are making is the use of a theorem due to Reed and Wood \cite{ReedW2009Lineartime} which allows us to quickly find a large clique minor or to determine that the number of edges in our instance is linear in the number of vertices.
The only reason we have this step is to ensure that our final running time is $\mathbf{O}(n^2)$ rather than $\mathbf{O}(n \cdot m)$.

\begin{proposition}[Reed and Wood \cite{ReedW2009Lineartime}]\label{prop_quicklyfindingaclique}
 Let $t$ be a positive integer and let $G$ be a graph with $|\!|G|\!| \geq 2^{t-3}|G|$.
 Then one can find a $K_t$-minor model in $G$ in $\mathbf{O}(t|G|)$-time. 
\end{proposition}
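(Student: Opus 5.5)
This is the classical Reed--Wood statement, which the paper imports, so the plan is to reconstruct the standard degeneracy-based argument rather than use anything specific to this paper. The idea is to greedily contract edges to increase density until the graph is small but still dense. Then we use the density to extract a clique minor by contraction, keeping track of branch sets throughout so that the output is a genuine $K_t$-minor model of $G$.

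\textbf{Step 1 (reduction to a minor-minimal dense graph).} Set $c\coloneqq 2^{t-3}$. We repeatedly modify $G$, maintaining the invariant $|\!|H|\!|\ge c|H|$ together with the branch sets (as a union-find structure).
\begin{enumerate}
\item If $H$ contains a vertex of degree at most $c$, we delete it; the invariant survives because we lose at most $c$ edges and one vertex.
\item Otherwise, if some edge $uv$ has fewer than $c$ common neighbours of $u$ and $v$, we contract it. We lose at most $c$ edges, counting $uv$ and merged parallel edges, and one vertex, so the invariant is again preserved.
\end{enumerate}
When neither operation applies, every vertex has degree greater than $c$ and every edge lies in at least $c$ triangles. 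Since $|\!|H|\!|\ge c|H|$ forces $H$ to be nonempty, this terminal state is never empty.

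\textbf{Step 2 (induction on $t$).} Pick a vertex $v$ of minimum degree in $H$. Its degree is at most $2|\!|H|\!|/|H|$, which is bounded in terms of $t$ once we also discard surplus edges to keep the density near $c$. In $H[N(v)]$ every vertex has degree at least $c=2\cdot 2^{t-4}$, because every edge lies in at least $c$ triangles. So $H[N(v)]$ has at least $2^{t-4}|N(v)|$ edges, and induction on $t$ yields a $K_{t-1}$-minor inside $N(v)$. Adding $v$ as a new branch set then gives $K_t$. The base cases $t\le 3$ are trivial, since a graph with at least $|G|$ edges contains a cycle. The branch sets are translated back to $G$ through the stored contractions.

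\textbf{Step 3 (running time).} Each operation removes a vertex, so there are at most $|G|$ of them. To obtain $\mathbf{O}(t|G|)$ total time, we first delete edges so that $|\!|G|\!|=\mathbf{O}(c|G|)$ while keeping the ratio $c$. The main obstacle is that testing ``fewer than $c$ common neighbours'' naively costs degree-squared per edge and breaks the linear bound. I would follow Reed--Wood: keep degrees bounded by working only with low-degree vertices via a bucket queue, and charge each common-neighbour test to the vertex being processed. Getting the dependence on $t$ down to linear rather than exponential in $c$ is exactly where I would defer to the detailed accounting of \cite{ReedW2009Lineartime}.
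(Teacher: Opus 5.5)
The paper does not prove this proposition; it imports it from Reed and Wood as a black box. Your proposal therefore has to stand on its own, and as written it leaves the running-time bound unproved, which is the actual content of the statement.

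\textbf{The combinatorial part is correct, with two repairs.} Your Steps 1 and 2 are Mader's classical argument. Both reductions preserve $|\!|H|\!|\ge c|H|$. In a terminal $H$, every $w\in N(v)$ has at least $c$ neighbours inside $N(v)$, so $H[N(v)]$ has at least $2^{t-4}|N(v)|$ edges and the induction applies. The two repairs are as follows.
\begin{enumerate}
\item $|\!|H|\!|\ge c|H|$ does not exclude the empty graph. The correct reason $H$ never becomes empty is that each step removes exactly one vertex, while a one-vertex graph violates the invariant (so you also need $G$ nonempty).
\item Your claim that the minimum degree of the terminal $H$ is bounded in terms of $t$ is unjustified. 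Deleting low-degree vertices can raise $|\!|H|\!|/|H|$ arbitrarily, and deleting surplus edges after termination would destroy the triangle condition. You need edge deletion (while $|\!|H|\!|\ge c|H|+1$) as a third rule inside the loop, i.e.\ minimality under edge deletion as in Mader's proof.
\end{enumerate}

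\textbf{The genuine gap is the time bound.} Existence of a $K_t$-minor under $|\!|G|\!|\ge 2^{t-3}|G|$ is Mader's theorem; what is attributed to Reed and Wood is finding the model in $\mathbf{O}(t|G|)$ time. You explicitly defer this to the citation, and the scheme you outline cannot deliver it.
\begin{itemize}
\item Building the trimmed graph with $\Theta(c|G|)$ edges already costs $\Theta(2^{t-3}|G|)$ time.
\item Low-degree-first elimination can be forced to delete $\Theta(c|G|)$ edges. For example, take a large $c$-degenerate graph with about $c$ edges per vertex, disjoint from a clique on $3c$ vertices; this satisfies the density hypothesis for $c\ge 2$.
\item You also leave the common-neighbour test open. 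It can be handled by testing only the edges at a vertex $v$ of minimum degree at most $2c+1$, since that is all the recursion needs. Even so, you only get a bound of the form $2^{\mathbf{O}(t)}|G|$, exponential rather than linear in $t$.
\end{itemize}

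What you have is Mader's theorem plus a procedure that is linear in $|G|$ for fixed $t$. That would in fact suffice for the paper's only use of the proposition: in Step~1 of the Spanning Linkage Algorithm, $t$ is polynomial in $r+k$, so a $2^{\mathbf{O}(t)}$ factor is absorbed into $2^{2^{\mathbf{O}(\mathrm{poly}(r+k))}}$. It is not, however, a proof of the stated $\mathbf{O}(t|G|)$ bound.
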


\paragraph{The Spanning Linkage Algorithm.}
~

\textsl{Input:} An annotated graph $(G,R)$ together with a set $\mathcal{T} = \{ (s_i,t_i) \mid i\in[k] \}$ of $k$ terminal pairs and an integer $r \geq 1$.

\begin{description}
    \item[Step 1.] \textsl{Input:} An instance $(G,R,\mathcal{T})$ of $k$\textsc{-Spanning Disjoint paths}.

    \textsl{Procedure:} Check if $G$ contains a $K_t$-minor model $\varphi$ with $t \coloneqq \lfloor \nicefrac{5}{2} (2(r+1)^2 + 2k) \rfloor + 1$ using \zcref{prop_quicklyfindingaclique}. If the outcome is \textsc{yes}, forward $(G,R,\mathcal{T})$ and $\varphi$ to \textbf{Step 1.1}, otherwise proceed to \textbf{Step 2}.

    \begin{description}
        \item[Step 1.1.] \textsl{Input:} An instance $(G,R,\mathcal{T})$ of $k$\textsc{-Spanning Disjoint paths} where $\mathsf{depth}_2(G,R) \leq r$ together with a $K_t$-minor model $\varphi$ in $G$.

        \textsl{Procedure:} Apply \zcref{lemma_WahtToDoWithClique} to either find an irrelevant vertex $v \in V(G)\setminus R$ or determine that $\mathsf{depth}_2(G,R) > r$.
        In the second case \textbf{reject} the instance and in the first case forward $(G-v,R,\mathcal{T})$ to \textbf{Step 1}.
    \end{description}

    \item[Step 2:] \textsl{Input:} An instance $(G,R,\mathcal{T})$ of $k$\textsc{-Spanning Disjoint paths}.

    \textsl{Procedure:} Apply the algorithm from \zcref{prop_FindGoodFlatWall} to find one of the following outcomes:
    \begin{enumerate}
        \item a $K_t$-minor model $\varphi$ in $G$,
        \item a tree-decomposition $\mathcal{D}$ of width at most $f_{\ref{prop_FindGoodFlatWall}}(t) \cdot f_{\ref{thm_IrrelevantInMesh}}(r,k)$ for $G$, or
        \item a set $A \subseteq V(G)$, where $|A| \leq g_{\ref{prop_FindGoodFlatWall}(t)}$ together with an $(f_{\ref{thm_IrrelevantInMesh}}(r,k) \times f_{\ref{thm_IrrelevantInMesh}}(r,k))$-mesh $M$ that is flat and suspended in $G-A$ as witnessed by the sphere-rendition $\rho$ and the treewidth of $M$ in $G$ is at most $f_{\ref{prop_FindGoodFlatWall}}(t) \cdot f_{\ref{thm_IrrelevantInMesh}}(r,k)$.
    \end{enumerate}
    In case i), forward $(G,R,\mathcal{T})$ and $\varphi$ to \textbf{Step 1.1}.
    In case ii), forward $(G,R,\mathcal{T})$ to \textbf{Step 4}, and in case iii) forward $(G,R,\mathcal{T})$, $A$, $M$, and $\rho$ to \textbf{Step 3}.

    \item[Step 3.] \textsl{Input:} An instance $(G,R,\mathcal{T})$ of $k$\textsc{-Spanning Disjoint paths}, together with a set $A \subseteq V(G)\setminus R$ where $|A| \leq g_{\ref{prop_FindGoodFlatWall}(t)}$ together with an $(f_{\ref{thm_IrrelevantInMesh}}(r,k) \times f_{\ref{thm_IrrelevantInMesh}}(r,k))$-mesh $M$ that is flat and suspended in $G-A$ as witnessed by the sphere-rendition $\rho$ and the treewidth of $M$ in $G$ is at most $f_{\ref{prop_FindGoodFlatWall}}(t) \cdot f_{\ref{thm_IrrelevantInMesh}}(r,k)$.

    \textsl{Procedure:}
    Run the algorithm from \zcref{thm_IrrelevantInMesh} to either (i) determine that $\mathsf{depth}_2(G,R) > r$, in which case we \textbf{reject} the instance, or (ii) find a vertex $v \in V(G) \setminus R$ such that $(G,R,\mathcal{T})$ is a \textsc{yes}-instance if and only if $(G-v,R,\mathcal{T})$.
    In case (ii) forward $(G-v,R,\mathcal{T})$ to \textbf{Step 2}.

    \item[Step 4.]
    \textsl{Input:} An instance $(G,R,\mathcal{T})$ of $k$\textsc{-Spanning Disjoint paths} where $\mathsf{tw}(G) \leq f_{\ref{prop_FindGoodFlatWall}}(t) \cdot f_{\ref{thm_IrrelevantInMesh}}(r,k)$.

    \textsl{Procedure:} Apply the algorithm from \zcref{lemma_SmallTreewidthAlgorithm} to $(G,R,\mathcal{T})$ to determine whether $(G,R,\mathcal{T})$ is a \textsc{yes}- or \textsc{no}-instance. \textbf{Return} the outcome.
\end{description}

With the \textbf{Spanning Linkage Algorithm} established, we are ready to state the main theorem of this section.

\begin{theorem}\label{thm_SpanningLinkageAlgorithm}
The \textbf{Spanning Linkage Algorithm} correctly decides the $k$\textsc{-Spanning Disjoint Paths} problem on instances of $\mathsf{depth}_2$ at most $r$ in time $2^{2^{ \poly(r+k) }} \cdot n^2$.
\end{theorem}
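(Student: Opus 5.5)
The proof splits into correctness and running time. Both follow the structure of the algorithm: a loop that deletes irrelevant vertices, followed by a final bounded-treewidth dynamic program.

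\textbf{Correctness.} I argue by induction on the number of iterations that each forwarded instance is equivalent to the original one, or else that a rejection is justified.
\begin{itemize}
\item In \textbf{Step 1.1}, \zcref{lemma_WahtToDoWithClique} guarantees that the deleted vertex $v\notin R$ is irrelevant whenever $\mathsf{depth}_2(G,R)\leq r$. Its proof only fails if the red-clique outcome of \zcref{thm:colorfulclique} appears. That outcome certifies $\mathsf{depth}_2>r$, so rejecting is legitimate under the promise.
\item In \textbf{Step 3}, the flat suspended mesh from \zcref{prop_FindGoodFlatWall} is refined into a blank, terminal-free, $(0,A)$-folio homogeneous submesh. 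This uses the homogenisation result of Gorsky, Seweryn and Wiederrecht (which I assume underlies \zcref{thm_IrrelevantInMesh}).
\item For the blank part I use pigeonhole over disjoint submeshes. If more than about $r^2$ of them contain red vertices in their compasses, these red vertices can be routed to build a $2$-outer-annotated grid, witnessing $\mathsf{depth}_2>r$. Otherwise some submesh is blank, and similarly some submesh avoids the at most $2k$ terminals.
\item \zcref{thm_findIrrelevantVertexFlatWall} then supplies a vertex $v\notin R$ whose deletion preserves the answer.
\item Deleting vertices outside $R$ and outside the terminals keeps the instance in the class: it is a red-minor, so $\mathsf{depth}_2$ does not increase.
\item Since every iteration removes a vertex, the loop terminates. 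It ends in \textbf{Step 4}, where \zcref{lemma_SmallTreewidthAlgorithm} decides the instance exactly.
\end{itemize}

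\textbf{Running time.} There are at most $n$ deletions, and each iteration costs $2^{2^{\poly(r+k)}}\cdot n$ time.
\begin{itemize}
\item Set $t=\poly(r+k)$. If $|\!|G|\!|\geq 2^{t-3}|G|$, \zcref{prop_quicklyfindingaclique} finds a $K_t$-minor model in $\mathbf{O}(t n)$ time. Otherwise $|\!|G|\!|\leq 2^{\poly(r+k)}n$, so all later linear-in-$|\!|G|\!|$ subroutines are linear in $n$ up to a factor $2^{\poly(r+k)}$.
\item \zcref{lemma_WahtToDoWithClique} runs in $\poly(r+k)n$ time.
\item \zcref{prop_FindGoodFlatWall} has mesh size $f_{\ref{thm_IrrelevantInMesh}}(r,k)\in 2^{\poly(r+k)}$. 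I expect it to run in $2^{\poly(t)}\cdot\poly(f)\cdot n$, i.e.\ $2^{\poly(r+k)}n$, time.
\item \zcref{thm_findIrrelevantVertexFlatWall} runs in $2^{\poly(r+a+k)}n$ time with $a=g(t)\in\poly(r+k)$.
\item The homogenisation step is the source of the double exponential. The palettes have $q\in 2^{\poly(a)}$ colours, and homogenising over them costs a factor $2^{q}=2^{2^{\poly(r+k)}}$.
\item The final DP on treewidth $w=2^{\poly(r+k)}$ tracks a partial linkage state per bag. This gives about $w^{\mathbf{O}(w)}\cdot 2^{\mathbf{O}(k)}=2^{2^{\poly(r+k)}}$ states, so it runs in $2^{2^{\poly(r+k)}}n$ time.
\end{itemize}
Multiplying the per-iteration cost by $n$ iterations gives $2^{2^{\poly(r+k)}}n^2$.

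\textbf{Main obstacle.} The delicate part is Step 3: matching the hypotheses of \zcref{thm_findIrrelevantVertexFlatWall}, namely blankness, terminal-freeness and homogeneity, with the mesh actually produced, while keeping both the mesh size and the running time within bounds. Specifically:
\begin{itemize}
\item The blankness argument must turn many red-touched submesh compasses into a $2$-outer-annotated grid. I would try to reuse the candle-mesh construction of \zcref{lemma_2outerInCandles}, treating the red vertices inside distinct compasses as flames attached along a common row.
\item I must check that homogenisation shrinks the mesh only by a factor depending on $q$. This is what keeps the required $f_{\ref{thm_IrrelevantInMesh}}(r,k)$ at $2^{\poly(r+k)}$ and avoids a triple exponential.
\end{itemize}
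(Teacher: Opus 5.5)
Your proposal is correct and follows the paper's own route. There are at most $n$ rounds of deleting an irrelevant vertex, found either via \zcref{lemma_WahtToDoWithClique} or, on a flat mesh, via homogenisation (\zcref{prop_FolioHomoWall}), blanking and \zcref{thm_findIrrelevantVertexFlatWall}, and the procedure ends with the dynamic program of \zcref{lemma_SmallTreewidthAlgorithm}. For blanking the paper uses \zcref{lemma_BlankWall}, whose failure gives a fully red grid and hence $\mathsf{depth}_2>r$ directly, with no candle mesh needed. Your explicit pigeonhole for a terminal-free submesh fills a step the paper leaves implicit.

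Your cost accounting is slightly off, but harmlessly so, and each round still costs $2^{2^{\poly(r+k)}}\cdot n$. \zcref{prop_FindGoodFlatWall} is itself exponential in the mesh size $2^{\poly(r+k)}$, not $2^{\poly(r+k)}\cdot n$ as you guessed. The double exponential in homogenisation comes from the treewidth $w\in 2^{\poly(r+k)}$ of the mesh, not from a $2^q$ factor. The mesh blow-up in \zcref{prop_FolioHomoWall} is only $2^{\poly(a)}$ times a polynomial in the target size, which settles the obstacle you raised about avoiding a triple exponential.
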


Notice that the \textbf{Spanning Linkage Algorithm} closes a loop only if an irrelevant vertex has been found.
Hence, the algorithm iterates through at most $n$ loops.
Since all subroutines called throughout the algorithm have linear running times, this immediately yields the desired $\mathbf{O}(n^2)$ bound.
The double-exponential dependency on $r + k$ then follows from the individual subroutines.
Finally, to see that the algorithm is correct, one only needs to observe that -- under the assumption that $\mathsf{depth}_2(G,R) \leq 2$ -- each subroutine correctly finds an irrelevant vertex or produces the desired outcome.
Hence, all that is left to do is prove \zcref{thm_IrrelevantInMesh} and \zcref{lemma_SmallTreewidthAlgorithm} to complete the proof of \zcref{thm_SpanningLinkageAlgorithm}.

We stress that, although \zcref{thm_SpanningLinkageAlgorithm} is stated for the decision version of \textsc{$k$-Spanning Disjoint Paths}, the algorithm from \zcref{lemma_SmallTreewidthAlgorithm} can be modified to also find a solution in the same time and thus, the \textbf{Spanning Linkage Algorithm} is able to find a solution -- given that one exists.

\subsection{Finding an irrelevant vertex}\label{subsec_IrrelevantVertexFinding}
The goal of this subsection is to piece together several parts of the previously established toolset with some additional tools from the literature in order to find an irrelevant vertex whenever we are presented with a large flat mesh.

\begin{theorem}\label{thm_IrrelevantInMesh}
There exists a function $f_{\ref{thm_IrrelevantInMesh}} \colon \mathbb{N}^3 \to \mathbb{N}$, and an algorithm that takes as input an integer $r\geq 1$, an instance $(G,R,\mathcal{T})$ of $k$\textsc{-Spanning Disjoint Paths}, a set $A \subseteq V(G)$ of size at most $a$, and a flat and suspended $(f_{\ref{thm_IrrelevantInMesh}}(r,k,a) \times f_{\ref{thm_IrrelevantInMesh}}(r,k,a))$-mesh $M$ in $G-A$ witnessed by a sphere-rendition $\rho$ of $G-A$, such that the treewidth of $M$ in $G$ is at most $w$ and finds one of the following outcomes in time $2^{\poly(r + k + w + a)}(|G| + |\!|G|\!|)$:
\begin{enumerate}
    \item $\mathsf{depth}_2(G,R) > r$, or
    \item a vertex $v \in V(G)\setminus R$ such that $(G,R,\mathcal{T})$ is a \yes-instance if and only if $(G-v,R,\mathcal{T})$.
\end{enumerate}
Moreover, $f_{\ref{thm_IrrelevantInMesh}}(r,k,a) \in 2^{\poly(r + k + a)}$.
\end{theorem}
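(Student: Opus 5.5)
The plan is to reduce \zcref{thm_IrrelevantInMesh} to \zcref{thm_findIrrelevantVertexFlatWall}. That theorem needs a suspended flat mesh that is additionally blank, $(0,A)$-folio homogeneous, and terminal-free. So starting from the given flat suspended mesh $M$ we will locate a submesh with these three properties. If that search fails, the failure should exhibit a $2$-outer-annotated grid, which lets us report $\mathsf{depth}_2(G,R)>r$.

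\textbf{Step 1: blankness and terminals.} Partition $M$ into pairwise disjoint submeshes of order $m'$ whose compasses are pairwise disjoint. The order $m'$ will be chosen as the size needed for the later steps. First discard the at most $2k$ submeshes whose compass contains a terminal. Among the remaining submeshes, suppose more than roughly $(r+1)^2$ contain a red vertex in their compass. Because $M$ is suspended, every cell in $C(M)$ is linked by disjoint paths to the mesh. We can therefore route one red vertex per cell onto a vertex of $M$ while staying inside that cell. If enough submeshes carry red vertices, we obtain a large grid minor in which each branch set of a sizable family contains a red vertex. From this we can realise the $2$-outer-annotated $((r+1)\times(r+1))$-grid as a red-minor; the ideas are those of the candle-mesh construction in \zcref{lemma_2outerInCandles}. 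In that case we output $\mathsf{depth}_2(G,R)>r$. Otherwise some submesh $M_1$ of order $m'$ is blank and terminal-free. Since $m'\geq f_{\ref{thm_findIrrelevantVertexFlatWall}}$ times a homogeneity blow-up, the remaining steps have enough room.

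This step is where I expect the \textbf{main obstacle}: the red vertices of distinct cells must be placed so that they realise the first two columns of a grid and not scattered positions. I would first try to pick the red cells along a single row of submeshes. I would then use the suspended property to attach one red vertex per cell as a ``flame''. Pairs of such flames then mimic the candle sticks of a $k$-candle mesh, and \zcref{lemma_2outerInCandles} applies directly.

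\textbf{Step 2: homogeneity.} For $d=0$, the $0$-palette of each augmented cell $\mathbf{R}_c$ takes at most $2^{\mathbf{O}(a^2)}$ values. We apply the polynomial homogenisation result of Gorsky, Seweryn, and Wiederrecht~\cite{GorskySW2026Price} to $M_1$ in the colorful graph $(G^{(0,A,M_1,\rho)}-A,\phi)$. This yields a $(0,A)$-folio homogeneous submesh $M_2$ of order at least $f_{\ref{thm_findIrrelevantVertexFlatWall}}(r,a,k)$. The mesh $M_2$ stays flat, suspended, blank and terminal-free, because it is a submesh and its compass lies inside that of $M_1$. For running time, the palettes are computed cell by cell using the bound $w$ on the treewidth of $M$ in $G$. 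Each augmented cell has bounded treewidth plus $a$ apices, so a Courcelle-type dynamic programme computes its folio in $2^{\poly(w+a)}$ time per cell. Summed over all cells this is linear overall.

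\textbf{Step 3: conclusion.} Finally, invoke \zcref{thm_findIrrelevantVertexFlatWall} on $(G,R,\mathcal{T})$, $A$, $M_2$ and $\rho$ restricted appropriately. This returns $v\in V(G)\setminus R$ whose deletion preserves the answer. Setting $f_{\ref{thm_IrrelevantInMesh}}(r,k,a)$ to be $(2k+(r+1)^2+1)$ times the homogeneity blow-up of $f_{\ref{thm_findIrrelevantVertexFlatWall}}(r,a,k)$ keeps everything in $2^{\poly(r+k+a)}$.
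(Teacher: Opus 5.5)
Your architecture is the same as the paper's. You produce a submesh that is blank, terminal-free and $(0,A)$-folio homogeneous, then invoke \zcref{thm_findIrrelevantVertexFlatWall}. The order of cleaning and homogenising does not matter, since all three properties pass to submeshes. Your explicit terminal step is in fact required by \zcref{thm_findIrrelevantVertexFlatWall}, and the paper's proof does not address it.

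The gap is in Step~1, exactly where you flag it. Your count makes the bad case ``more than roughly $(r+1)^2$ submeshes contain a red vertex''. Those submeshes may be scattered anywhere in the array, and nothing in your count puts them on a common row. So ``pick the red cells along a single row'' is not available. Moreover, \zcref{lemma_2outerInCandles} applies only once a $2(r+1)$-candle mesh has actually been built: the mesh, the loops $L_j$ and the sticks $B_j$ in the prescribed arrangement. Building that from scattered red cells needs a real rerouting argument, and none is given. Naively linking the red vertices to one row tends to leave only a facial, $\mathsf{depth}_1$-type configuration. Separately, suspension only controls connectivity among the nodes of a cell. It does not by itself connect an interior red vertex to $N(c)$. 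As written, the ``many red submeshes'' case does not yield $\mathsf{depth}_2(G,R)>r$.

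The paper avoids any $\mathsf{depth}_2$-specific construction by inverting the pigeonhole, via \zcref{lemma_BlankWall} with $b=r+1$. On a flat mesh of order $(3r+4)m'+1$, either some $(m'\times m')$-submesh has blank compass, or $(G-A,R)$ contains a fully red $((r+1)\times(r+1))$-grid as a red-minor, i.e.\ $\mathsf{bidim}\geq r+1$. Every vertex of that grid is red, so it contains the $2$-outer-annotated $((r+1)\times(r+1))$-grid as a red-minor for free.

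To repair your proof:
\begin{enumerate}
\item Among $2k+1$ submeshes of $M$ of order $(3r+4)m'+1$ with pairwise disjoint compasses, pick one whose compass avoids all terminals.
\item Apply \zcref{lemma_BlankWall} to that submesh.
\item Homogenise with \zcref{prop_FolioHomoWall}. This also supplies the $2^{\poly(w+a)}$ running time, which a Courcelle-type dynamic programme would not give.
\end{enumerate}
The resulting bound on the mesh order remains in $2^{\poly(r+k+a)}$.
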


\paragraph{Quickly finding a suspended flat mesh.}
One subroutine we are still missing is a way to quickly find a suspended flat mesh.
Indeed, for this we require two additional results from the literature.

The first allows us to immediately reduce any flat mesh to one with a blank compass.

\begin{proposition}[Cavallaro, Gorsky, Kreutzer, Thilikos, and Wiederrecht \cite{CavallaroGKTW2026Optimal}]\label{lemma_BlankWall}
For all integers $r \geq 2$ and $b \geq 1$ and every annotated graph $(G,R)$, if $M$ is a flat $(w \times w)$-mesh with $w \geq (3b + 1)r + 1$ in $G$ as witnessed by the sphere-rendition $\rho$, then one of the following holds:
\begin{enumerate}
 \item $\mathsf{bidim}(G,R) \geq b$, or
 \item there exists an $(r \times r)$-submesh $M'$ of $M$ such that the compass of $M'$ does not contain a vertex from $R$.
\end{enumerate}
Moreover, if $\rho$ witnesses that $M$ is suspended in $G$, then $\rho$ also witnesses that $M'$ is suspended in $G$.
There also exists an algorithm that, given $G$, $\rho$, and $M$ as input finds either a red $((b+1) \times (b+1))$-grid as a red-minor in $G$ or the mesh $M'$ as above in time $\poly(b+r)|\!|G|\!|$.
\end{proposition}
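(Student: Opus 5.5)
The plan is to partition the flat mesh $M$ into many pairwise disjoint $(r\times r)$-submeshes and show that, unless $\mathsf{bidim}(G,R)\geq b$, one of them has a compass free of $R$. Since $w\geq (3b+1)r+1$, the perimeter strip can be discarded. The rows and columns of $M$ can then be grouped into $3b+1$ consecutive bands of $r$ horizontal paths and $3b+1$ bands of $r$ vertical paths. This yields a $(3b+1)\times(3b+1)$ array of $(r\times r)$-submeshes $M_{i,j}$. Their compasses are pairwise vertex-disjoint apart from shared boundary nodes; we avoid sharing by leaving one separating row or column between consecutive bands, which the $+1$ slack and a slightly coarser grouping accommodate. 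If some $M_{i,j}$ has an $R$-free compass, we output it as $M'$.

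Suspension is inherited directly. The cells of $M'$ are cells of $\rho$ lying inside the disc of $M$, so property (1) of suspended meshes holds trivially. For property (2), take the $|N(c)|$ disjoint paths to the perimeter of $M$; they must cross the perimeter of $M'$, and truncating them at their first hit gives disjoint paths to that perimeter.

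Otherwise every compass of $M_{i,j}$ contains a red vertex $x_{i,j}$. The goal in this case is a red-minor of the fully annotated $(b+1)\times(b+1)$-grid. We pick the submeshes with indices $i,j\equiv 1 \pmod 3$, which gives at least $(b+1)^2$ of them, and use the remaining mesh rows and columns between them as the grid's connecting infrastructure. Each branch set is $M_{i,j}$ together with a path inside its compass from $x_{i,j}$ to the mesh, together with a neighbourhood of the corresponding grid vertex. Such a path exists because the compass is connected to $M$ in the flat rendition: every cell contained in the compass either lies on $M$ or attaches to it through its nodes. Adjacent branch sets are then joined using the separating bands of mesh rows and columns, in the same way the grid is built from the mesh. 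The factor $3$ guarantees that the linking bands between chosen submeshes are disjoint from every branch set.

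The main obstacle is connecting a red vertex lying deep inside some cell $\sigma(c)$ to the mesh without leaving the compass of $M_{i,j}$ and without colliding with neighbouring branch sets. Flatness ensures that $\sigma(c)$ interacts with the rest of the graph only through $N(c)\subseteq$ nodes inside the disc of $M_{i,j}$. If $x_{i,j}$ lies in a component of $\sigma(c)$ that does not reach $N(c)$, that component is disconnected from $M$ and irrelevant to the compass; this can be handled either by the definition of compass or by discarding such components beforehand.

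Algorithmically, computing the compasses and testing them for red vertices takes linear time per submesh. The grid model is then assembled explicitly, giving $\poly(b+r)|\!|G|\!|$ time.
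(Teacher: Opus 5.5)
Your overall strategy is the natural one. You split $M$ into $(3b+1)^2$ vertex-disjoint $(r\times r)$-submeshes, output one whose compass is $R$-free, and otherwise assemble a fully red $((b+1)\times(b+1))$-grid from the submeshes and the mesh segments between them. The grid assembly is sound, as are the inheritance of suspension (truncating the paths at the perimeter of $M'$) and the running-time analysis. The paper only imports this statement, so there is no in-paper proof to compare with.

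\textbf{The linking step fails as written.} The problem is joining $x_{i,j}$ to $M_{i,j}$. Your reason, that ``every cell contained in the compass either lies on $M$ or attaches to it through its nodes'', is false. The nodes of a cell inside a brick need not be vertices of $M$. A whole cluster of cells drawn inside a brick can be detached from $M$, so the problem is not limited to a component of one $\sigma(c)$ that misses $N(c)$.

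The correct argument uses axiom R4. Let $\Delta_{i,j}$ be the disc bounded by the trace of the perimeter $C_{i,j}$ of $M_{i,j}$. A vertex lying both in $\sigma(c)$ for a cell inside $\Delta_{i,j}$ and in a cell outside it is a node on $\mathsf{bd}(\Delta_{i,j})$, hence a vertex of $C_{i,j}$. So every path of $G$ from $x_{i,j}$ to a vertex outside $\Delta_{i,j}$ has an initial segment inside the compass of $M_{i,j}$ ending on $C_{i,j}$. It follows that $x_{i,j}$ can be joined to $M_{i,j}$ inside its compass if and only if the component of $G$ containing $x_{i,j}$ meets $M$.

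\textbf{A missing hypothesis.} The condition above does not follow from the hypotheses. Let $G$ be $M$ together with an isolated red vertex drawn inside every brick, for instance as a cell without nodes. For $b\ge 2$, every $(r\times r)$-submesh then has a red compass, yet $\mathsf{bidim}(G,R)=1$. So the proposition as stated, with the compass defined as the inner graph of the perimeter, needs a normalisation. Examples are ``$G$ is connected'' or ``every component of $G$ meeting the compass of $M$ meets $M$''; the latter can be arranged by moving floating components into the vortex $c_0$, at the price of changing $\rho$.

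Your remedy does not close this gap. The paper's compass does contain such floating components. Discarding them beforehand changes $R$, and hence the very compasses whose blankness the conclusion asserts. You should state the extra hypothesis explicitly and then use the R4 argument above.

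\textbf{A smaller slip.} Leaving a separating row and column between consecutive bands costs $3b$ extra rows, not the single row of slack you have. It is also unnecessary. Submeshes built on pairwise disjoint bands of rows and columns already have vertex-disjoint perimeters, hence disjoint discs and vertex-disjoint compasses. Even adjacent submeshes can be joined by the short row and column segments between their bands, so the mod-$3$ thinning is not needed either, though it does no harm.
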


The second is the one that actually finds the suspended flat mesh.

\begin{proposition}[Sau, Stamoulis, and Thilikos \cite{SauST2024More}]\label{prop_FindGoodFlatWall}
There exists a functions $f_{\ref{prop_FindGoodFlatWall}},g_{\ref{prop_FindGoodFlatWall}} \colon \mathbb{N} \to \mathbb{N}$ such that for every integer $t \geq 1$, all odd integers $r \geq 5$, and every graph $G$ one of the following holds:
\begin{enumerate}
 \item $G$ contains a $K_t$ minor,
 \item $G$ has a tree-decomposition of width at most at most $f_{\ref{prop_FindGoodFlatWall}}(t) \cdot r$, or
 \item there is a set $A \subseteq V(G)$, where $|A| \leq g_{\ref{prop_FindGoodFlatWall}(t)}$ together with an $(r \times r)$-mesh $M$ that is flat and suspended in $G-A$ as witnessed by the sphere-rendition $\rho$ and the treewidth of $M$ in $G$ is at most $f_{\ref{prop_FindGoodFlatWall}}(t) \cdot r$.
\end{enumerate}
Moreover, $f_{\ref{prop_FindGoodFlatWall}}(t) \in 2^{\mathbf{O}(t^2 \log t)}$, $g_{\ref{prop_FindGoodFlatWall}}(t) \in \mathbf{O}(t^{24})$, and there exists an algorithm that finds one of the three outcomes above in time $2^{2^{\mathbf{O}(t^2 \log t)} \cdot r \log r + \mathbf{O}(r^2)} \cdot |G| + 2^{ 2^{\mathbf{O}(t^2 \log t)} \cdot r^3 \log r }$.
\end{proposition}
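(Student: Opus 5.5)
The statement to prove is \zcref{prop_FindGoodFlatWall}, which is imported from Sau, Stamoulis, and Thilikos. The plan is to assemble it from the standard flat wall machinery rather than prove it from scratch.

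\textbf{Step 1: apply the Flat Wall Theorem.} First compute a tree-decomposition of approximate width, or a large wall, using a constructive treewidth approximation together with \zcref{prop_GridThm}. If the width stays below $f_{\ref{prop_FindGoodFlatWall}}(t)\cdot r$, we are in the second outcome. Otherwise we obtain a mesh of order $R = 2^{\mathbf{O}(t^2\log t)}\cdot r$ and feed it to the Flat Wall Theorem of Kawarabayashi, Thomas, and Wollan (in the form of \cite{KawarabayashiTW2018New}). This yields one of two things:
\begin{itemize}
\item a $K_t$-minor model, which is the first outcome; or
\item an apex set $A$ with $|A| \in \mathbf{O}(t^{24})$ and a flat submesh $M_0$ of $G-A$ of order polynomially related to $R/2^{\mathbf{O}(t^2\log t)}$, together with a sphere rendition $\rho$ of $G-A$ witnessing flatness.
\end{itemize}

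\textbf{Step 2: make the mesh suspended.} This is where the two cell conditions come in: internal connectivity among subsets of $N(c)$, and $|N(c)|$ disjoint paths to the perimeter. The plan is to regularise $\rho$ following Sau, Stamoulis, and Thilikos. Whenever a cell $c$ violates the connectivity condition, split $\sigma(c)$ along its components, refining the cell into smaller cells. Whenever the linkage condition fails, Menger's theorem gives a separator of size less than $|N(c)|\le 3$ between $N(c)$ and the perimeter. In that case we redraw, replacing the cell by a cell on that separator. This keeps the rendition valid and the nodes of $M_0$ intact.

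The step I expect to be the main obstacle is showing this process terminates and leaves a large enough submesh $M$ of order $r$. My approach is to use a potential such as the number of nodes in the compass, and to argue that all changes are confined to within a bounded number of layers of $M_0$.

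\textbf{Step 3: bound the treewidth of the compass of $M$.} To get the bound $f_{\ref{prop_FindGoodFlatWall}}(t)\cdot r$, choose $M$ as a submesh of minimal order $r$ inside a region where the original decomposition certifies small width. Then use the standard fact that a flat wall's compass whose every large-width part would give a bigger flat wall can be restricted to width $\mathbf{O}(f(t)\cdot r)$; in the literature this is ensured by picking a "tight" flat wall via iteration.

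\textbf{Running time.} This follows from the linear-time Flat Wall algorithm, which costs $2^{2^{\mathbf{O}(t^2\log t)} r\log r}|G|$, plus the regularisation on the compass, which has bounded treewidth and so can be handled by brute force in $2^{2^{\mathbf{O}(t^2\log t)} r^3\log r}$.
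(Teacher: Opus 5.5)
The paper does not prove this proposition; it imports it from \cite{SauST2024More}. Judged on its own, your sketch has genuine gaps.

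\textbf{Step~1 cannot give a width bound linear in $r$.} \zcref{prop_GridThm} only turns $\mathsf{tw}(G) > f(t)\cdot r$ into a mesh of order roughly $(f(t)\cdot r)^{1/9}$, up to polylogarithmic factors. The Flat Wall Theorem, however, needs a wall of height linear in $r$ (for instance $\Theta(t^{24}(t^2+r))$ in the Kawarabayashi--Thomas--Wollan version) before it can return an $r$-subwall. With the general grid theorem, outcome~2 would therefore hold only with width polynomial of degree about $9$ in $r$, not $f(t)\cdot r$. The missing ingredient is the linear excluded-grid theorem for $K_t$-minor-free graphs (Kawarabayashi--Kobayashi: $\mathsf{tw}\le 2^{\mathbf{O}(t^2\log t)}\cdot k$ unless there is a $(k\times k)$-grid minor). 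You need it in an algorithmic form that returns either a $K_t$-minor or a wall of height linear in $r$. This is exactly where $f(t)\in 2^{\mathbf{O}(t^2\log t)}$ in the statement comes from.

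\textbf{Step~3 does not bound the treewidth of the compass; this is the more serious gap.} In outcome~3 there is no decomposition of small width, so there is no region ``certified'' to have small width. Even if the mesh lies inside a subgraph of bounded treewidth, its compass is taken in $G-A$. For example, let $G$ be a huge grid and let your mesh be a coarse subdivided mesh whose bricks each enclose a large subgrid. Then every submesh the Flat Wall Theorem can return has a compass of unbounded treewidth. Likewise, a single cell attached at $\le 3$ nodes can carry arbitrarily large treewidth.

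So a real re-selection argument is needed. If the compass has treewidth $> f(t)\cdot r$, then, being $K_t$-minor-free, it contains a wall of height linear in $r$.
\begin{itemize}
\item If this treewidth sits in the disc-embedded part, one can pass to a new flat mesh with the same apex set and a strictly smaller compass.
\item If it sits inside a cell $\sigma(c)$, one can re-apply the Flat Wall Theorem to $G[V(\sigma(c))\cup A]$ with a \emph{fresh} apex set $A'$. One then picks a subwall whose compass avoids the bounded attachment set $(N(c)\cup A)\setminus A'$. The new mesh is then flat in $G-A'$, and the apex sets do not accumulate from round to round.
\end{itemize}
Your appeal to a ``standard fact'' supplies none of this. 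You would also have to control the number of rounds to stay within the claimed running time. Brute force on a compass of unbounded size cannot produce an additive term independent of $|G|$.

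\textbf{Step~2 is only a sketch.} You do not show that a Menger separator of size $<|N(c)|$ can always be realised as the node set of a single new cell. Its vertices may lie in the interior of other cells, and cutting those cells can create cells with four nodes. You also do not show that the process terminates with a mesh of order $r$ still present.
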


\paragraph{Finding a folio homogeneous mesh.}
Let $G$ be a graph and $M \subseteq G$ be a flat wall in $G$ witnessed by the sphere rendition $\rho$.
The \emph{treewidth of $M$ in $G$} is the smallest integer $w$ such that the treewidth of the compass of $M$ in $\rho$ is at most $w$.

With this, we are finally ready to import the key tools from the work of Cavallaro, Gorsky, Kreutzer, Thilikos, and Wiederrecht \cite{CavallaroGKTW2026Optimal} that will drive our later machinery.

\begin{proposition}[Cavallaro, Gorsky, Kreutzer, Thilikos, and Wiederrecht \cite{CavallaroGKTW2026Optimal}]\label{prop_FolioHomoWall}
There exists a function $f_{\ref{prop_FolioHomoWall}} \colon \mathbb{N}^{3} \to \mathbb{N}$ such that for all integers $r\geq 3$ and $d \geq 0$ the following holds.
Let $G$ be a graph, $A \subseteq V(G)$ with $a \coloneqq |A|$, and $M_0$ be a flat $(f_{\ref{prop_FolioHomoWall}}(a,d,r) \times f_{\ref{prop_FolioHomoWall}}(a,d,r))$-mesh of in $G-A$ with a sphere rendition $\rho$ witnessing the flatness of $M$ in $G-A$.
Then there exists an $(r \times r)$-mesh $M_1 \subseteq M_0$ such that
\begin{enumerate}
 \item $M_1$ is flat in $G-A$ witnessed by $\rho$, and
 \item $M_1$ is $(d,A)$-folio homogeneous in $G$.
\end{enumerate}
In particular, if $M_0$ is suspended in $G-A$ witnessed by $\rho$, then so is $M_1$.
Moreover, $f_{\ref{prop_FolioHomoWall}}(a,d,r) \in 2^{\poly(a + d)} \cdot \poly(r)$ and there exists an algorithm that takes as input $G$, $A$, $M_0$, and $\rho$ as above and computes $M_1$ in time $2^{\poly(w + a + d)} \cdot \poly(r) \cdot (|G| + |\!|G|\!|)$ where $w$ denotes the treewidth of $M_0$ in $G$.
\end{proposition}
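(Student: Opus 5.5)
The plan is to reduce to the general polynomial homogenisation theorem for $q$-colorful graphs of Gorsky, Seweryn, and Wiederrecht \cite{GorskySW2026Price}. That reduction is carried by the $(d,A,M_0,\rho)$-folio augmentation $(G^{(d,A,M_0,\rho)},\phi)$ defined above.

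\textbf{Step 1: palettes as colours.} For each cell $c\in C(M_0)$, the augmented cell $\mathbf{R}_c$ has at most $a+3$ roots. Its $d$-palette is therefore a set of rooted graphs of detail at most $d$ with at most $a+3$ roots. Let $\mathcal{U}$ be the set of all such rooted graphs, up to isomorphism of rooted graphs. I would count $|\mathcal{U}|\le 2^{\poly(a+d)}$. With $\mathcal{U}$ as the colour set, $(G^{(d,A,M_0,\rho)}-A,\phi)$ is a $q$-colorful graph with $q\le 2^{\poly(a+d)}$. The extended rendition $\rho'$ still witnesses that $M_0$ is flat, since the new vertices $x_c$ sit inside their cells.

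\textbf{Step 2: homogenise.} Next I would apply the theorem of \cite{GorskySW2026Price}: a flat mesh of order $\poly(q)\cdot r$ contains an $(r\times r)$-submesh $M_1$, flat under the same rendition, such that $\phi(\mathsf{int}(C))=\phi(\mathsf{compass}(M_1))$ for every cycle $C\subseteq M_1$. This is exactly $(d,A)$-folio homogeneity. Substituting $q\le 2^{\poly(a+d)}$ gives $f_{\ref{prop_FolioHomoWall}}(a,d,r)\in 2^{\poly(a+d)}\cdot\poly(r)$.

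Flatness of $M_1$ in $G-A$ is inherited by restricting $\rho'$ back to $\rho$. Suspension is also inherited, because $M_1$ is a submesh of $M_0$ under the same $\rho$:
\begin{itemize}
\item the connectivity property inside each cell $\sigma(c)$ does not depend on the mesh;
\item the $|N(c)|$ disjoint paths from $N(c)$ to the perimeter of $M_0$ can be truncated at the first vertex of the perimeter of $M_1$, by planarity of the disc containing $M_1$.
\end{itemize}

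\textbf{Step 3: algorithm.} The compass of $M_0$ has treewidth at most $w$. Each $R_c$ is then the graph $\sigma(c)$ plus at most $a$ apices, so it has treewidth at most $w+a$. Its $d$-folio is computable by dynamic programming, or by the bounded-treewidth folio algorithm, in time $2^{\poly(w+a+d)}\cdot|R_c|$. The cells are edge-disjoint, so the total cost is linear in $|G|+|\!|G|\!|$. The algorithmic version of \cite{GorskySW2026Price} then finds $M_1$ in time $\poly(q)\cdot\poly(r)\cdot(|G|+|\!|G|\!|)$.

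\textbf{Main obstacle.} The crux is the \emph{polynomial} dependence on $q$ in Step 2. The naive argument loses it. That argument says: if a cycle interior misses a colour, recurse into a submesh inside it. Each recursion level costs a constant-factor shrink, so the bound becomes $r\cdot c^{q}$, and with $q=2^{\poly(a+d)}$ this is doubly exponential. If the cited theorem were not available, I would try the following.
\begin{itemize}
\item Colour sets of interiors of nested layers are monotone decreasing, so along any radial sequence they change at most $q$ times. Within a window of $\Omega(\cdot/q)$ consecutive layers the colour set is therefore constant.
\item This alone does not control small cycles of the final mesh. To handle them, tile the window by disjoint $r$-blocks and discard each colour that occurs only in few blocks. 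For each remaining colour, every block contains it by pigeonhole over a $\poly(q)$ grid of blocks.
\item Finally, select $M_1$ so that each of its bricks encloses a full block.
\end{itemize}
Making this brick-enclosure selection preserve the constant colour set, while keeping the losses polynomial in $q$, is the step I expect to be hardest.
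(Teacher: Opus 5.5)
Your proposal is correct and follows the route the paper itself indicates. Palettes become subsets of a colour set of size $q\in 2^{\poly(a+d)}$, namely the rooted graphs of detail at most $d$ with at most $a+3$ roots. The polynomial homogenisation theorem of Gorsky, Seweryn, and Wiederrecht then yields $f(a,d,r)\in 2^{\poly(a+d)}\cdot\poly(r)$, and $w$ enters the running time only through computing the palettes of the bounded-treewidth graphs $R_c$.

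Your fallback sketch in the final paragraph is not needed, and you should drop it. Its claim that every block then contains every surviving colour does not follow from pigeonhole.
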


We are now ready to proceed with the proof of \zcref{thm_IrrelevantInMesh}.

\begin{proof}[Proof of \zcref{thm_IrrelevantInMesh}]
Let us first set up the function $f_{\ref{thm_IrrelevantInMesh}}$:
\begin{align*}
    f_{\ref{thm_IrrelevantInMesh}}(r,k,a) & \coloneqq f_{\ref{prop_FolioHomoWall}}(a,0, (3r + 4) \cdot f_{\ref{thm_findIrrelevantVertexFlatWall}}(r,a,k) + 1)\\
    & \in 2^{\poly(a)} \cdot \poly\big( (3r + 4) \cdot f_{\ref{thm_findIrrelevantVertexFlatWall}}(r,a,k) + 1) \big)\\
    & = 2^{\poly(r + k + a)}.
\end{align*}

We first apply \zcref{prop_FolioHomoWall} to $M$ in order to find a flat and suspended $\big((3r + 4) \cdot f_{\ref{thm_findIrrelevantVertexFlatWall}}(r,a,k) + 1) \times (3r + 4) \cdot f_{\ref{thm_findIrrelevantVertexFlatWall}}(r,a,k) + 1)\big)$-mesh $M_1 \subseteq M$ in $G-A$ which is $(0,A)$-folio homogeneous.
Notice that this takes $2^{\poly(w + a + d)} \cdot \poly((3r + 4) \cdot f_{\ref{thm_findIrrelevantVertexFlatWall}}(r,a,k) + 1)) \cdot (|G| + |\!|G|\!|)$ time.

Next, we apply \zcref{lemma_BlankWall} to $M_1$ in order to either find a red $((r+1) \times (r+1))$-grid as a red-minor in $(G-A,R)$, or a $(f_{\ref{thm_findIrrelevantVertexFlatWall}}(r,a,k) \times f_{\ref{thm_findIrrelevantVertexFlatWall}}(r,a,k))$-submesh $M_2$ of $M_1$ whose compass is blank.
Notice that the flatness of $M_2$ in $G-A$ is still witnessed by $\rho$ which means that $M_2$ remains both suspended and $(0,A)$-folio homogeneous.
Moreover, the treewidth of $M_2$ in $G$ remains at most $w$ as well since the compass of $M_2$ is a subgraph of the compass of $M$.

In case we find a red $((r+1) \times (r+1))$-grid as a red-minor in $(G-A,R)$, we have, in particular, found that $(G-A,R)$ contains the $2$-outer $((r+1) \times (r+1))$-grid as a red-minor, implying that $\mathsf{depth}_2(G,R) \geq r+1 > r$.
This is the first outcome of the theorem and thus, we may assume it does not happen.
Hence, we may assume to  find $M_2$ as above.

We may now apply \zcref{thm_findIrrelevantVertexFlatWall} to $M_2$ which yields the desired vertex $v \in V(G) \setminus R$ in time $2^{\poly(r + k + a)}|G|$ and completes our proof.
\end{proof}

\subsection{The case of bounded treewidth}\label{subsec_boundedTreewidth}
Deleting irrelevant vertices reduces every instance of bounded $\mathsf{depth}_2$
to an equivalent one whose treewidth is bounded in terms of $k$ and
$\mathsf{depth}_2$; it then remains to solve such a bounded-treewidth instance.
This subsection supplies the algorithm for that case.

The algorithm is a standard dynamic programming over a tree decomposition, and
the only point requiring care is that the table must witness the spanning
condition. We use a tree decomposition in which every bag contains the terminals
and has width $\mathbf{O}(k+\mathsf{tw}(G))$, and at each node we record, for every
partial linkage pattern on the bag, whether it admits a realisation covering
every vertex of $R$ already processed, the vertices of $R$ still isolated in the
current bag being left to be covered later. As there are
$2^{\poly(k+\mathsf{tw}(G))}$ such patterns per node, the decision
follows in time $2^{\poly(k+\mathsf{tw}(G))}\cdot|G|$.  
 
 \begin{lemma}\label{lemma_SmallTreewidthAlgorithm}
 There is an algorithm that given an instance  $(G,R,\Tcal)$
of $k$-\textsc{Spanning Disjoint Paths} decides whether  $(G,R,\Tcal)$ is a \textsf{yes}-instance in  $2^{\poly(k+\mathsf{tw}(G))}\cdot |G|$ time.
 \end{lemma}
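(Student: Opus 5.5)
We will solve the problem by a standard dynamic programme over a nice tree-decomposition, with the table designed so that it also certifies the spanning condition.

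\textbf{Preprocessing.} First compute a tree-decomposition of $G$ of width $\mathbf{O}(\mathsf{tw}(G))$ in time $2^{\mathbf{O}(\mathsf{tw}(G))}\cdot|G|$, for instance with a constant-factor approximation for treewidth. Add all at most $2k$ terminals to every bag. This gives width $w\coloneqq\mathbf{O}(k+\mathsf{tw}(G))$. Then convert the decomposition into a nice one, with introduce-vertex, introduce-edge, forget and join nodes, and $\mathbf{O}(w|G|)$ nodes in total.

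As in the proof of \zcref{lemma_WahtToDoWithClique}, we may assume that the terminal pairs are pairwise disjoint by duplicating terminals. Pairs with $s_i=t_i$ are treated as trivial paths.

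\textbf{States.} For a node $t$ with bag $X_t$, let $G_t$ be the graph built in the subtree of $t$ (vertices and introduced edges below $t$). A partial solution is a subgraph $F\subseteq G_t$ with the following properties:
\begin{itemize}
\item every component of $F$ is a path;
\item every vertex of $V(G_t)\setminus X_t$ has degree $2$ in $F$, unless it is a terminal, in which case it has degree $1$;
\item every vertex of $R\cap (V(G_t)\setminus X_t)$ lies in $F$.
\end{itemize}
The signature of $F$ at $t$ records:
\begin{enumerate}
\item for each $v\in X_t$, its degree in $F$, which is $0$, $1$ or $2$;
\item the matching on the degree-$\le1$ vertices of $X_t$ given by the endpoints of the path components of $F$ whose endpoints lie in $X_t$;
\item for each such path component, the label $i\in[k]$ if it already contains a terminal of pair $i$, or the label ``unlabelled'';
\item a flag telling whether some component has already been completed as an $s_i$--$t_i$ path.
\end{enumerate}
Components contained entirely in the forgotten part must be completed $s_i$--$t_i$ paths. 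This is enforced at forget nodes: forgetting a vertex of degree $\neq 2$ is forbidden unless that vertex is a terminal, and closing a path is allowed only if both of its ends carry matching terminals. It is also enforced that no cycles are ever created.

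The spanning requirement is handled entirely at forget nodes. A vertex $v\in R$ may be forgotten only if its recorded degree is at least $1$, or if $v$ is a trivial terminal. Vertices of $R$ that are still in the bag with degree $0$ are exactly the ``isolated'' red vertices that will be covered later, so no extra bookkeeping is needed.

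The number of signatures is $3^{w}\cdot w^{\mathbf{O}(w)}\cdot (k+1)^{w}=2^{\poly(k+\mathsf{tw}(G))}$.

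\textbf{Transitions.}
\begin{itemize}
\item Introduce-vertex nodes give the new vertex degree $0$, or label it if it is a terminal.
\item Introduce-edge nodes optionally add $uv$. This is allowed only if both degrees stay at most $2$ (at most $1$ for terminals) and $u,v$ are not the two ends of the same path, which would close a cycle. The two paths are then merged, checking that their labels are compatible.
\item Join nodes combine two signatures whose degree vectors add up to at most $2$ per vertex. They merge the two path systems, reject the combination if a cycle appears (detected by union–find on the bag), and merge the labels.
\end{itemize}
Each transition takes $2^{\poly(w)}$ time. At the root, which we make an empty bag, we accept iff some signature certifies that all $k$ pairs have been closed. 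Standard backtracking through the table recovers an actual solution.

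\textbf{Main obstacle.} The step needing the most care is correctness. In one direction, every spanning $\mathcal{T}$-linkage restricts at each node to a partial solution whose signature is in the table. In the other, the soundness of forgetting ensures that no red vertex is ever left uncovered. Since edges are introduced exactly once, the degree bookkeeping is consistent across joins, and the argument is a routine induction over the nice decomposition.

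The total running time is $2^{\poly(k+\mathsf{tw}(G))}\cdot|G|$.
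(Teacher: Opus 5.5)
Your dynamic programme is essentially the one in the paper. The paper also puts all terminals into every bag, describes how the partial linkage meets the bag, and lets red vertices stay isolated in the bag while requiring them to be covered once forgotten. It encodes the state as a linear forest $Z$ on the bag, with the condition $R\cap V(G_t)\subseteq V(\bigcup\mathcal{C}_Z)\cup I(Z)$, and at the end checks for $Z_{\mathcal{T}}$ at a root bag equal to $V_{\mathcal{T}}$. That encoding is equivalent to your degree/matching signature over a nice decomposition.

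Two statements need fixing:
\begin{itemize}
\item As written, ``every vertex of $V(G_t)\setminus X_t$ has degree $2$ in $F$'' and ``forgetting a vertex of degree $\neq 2$ is forbidden'' force every non-terminal vertex into the linkage, so you would be deciding the case $R=V(G)$. The correct rule is that forgotten vertices \emph{of $F$} have degree $2$. Equivalently, forgetting a non-terminal is forbidden exactly when its degree is $1$, or when its degree is $0$ and it is red. Your own spanning paragraph already presupposes this.
\item A single ``completed'' flag cannot record which pairs have been closed. Since the terminals sit in every bag, you can drop the flag and instead check, at a bag equal to $V_{\mathcal{T}}$, that the signature is exactly the pattern $\{\{s_i,t_i\}\mid i\in[k]\}$, as the paper does.
\end{itemize}
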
 

\begin{proof}
Let $V_{\Tcal}$ be the  terminals in the pairs of $\Tcal$. We first build a rooted tree decomposition $(T,\beta,r)$ of $G$ where $\beta(r)=V_{\Tcal}$ and where 
for each $t\in V(T)$, $V_{\Tcal}\subseteq\beta(t)$.
This decomposition can be constructed in $2^{\poly(w)}\cdot |G|$ steps using, e.g., the algorithm of Bodlaender in \cite{Bodlaender1996LinearTimeTreeDecompositions} (see also \cite{BodlaenderKloks1991BetterAlgorithmsTreewidth,Korhonen2021SingleExponentialTreewidthApproximation,BodlaenderJaffkeTelle2020OptimalTreeDecompositionsRevisited}).
Moreover, for each $t\in V(T)$, $|\beta(t)|\leq 2k+w+1 \in \mathbf{O}(k + w)$.
For the needs of the dynamic programming we may also assume that every non-leaf node of $T$
has one or two children in $(T,r)$ and that $|T| \in \mathbf{O}(|G|)$.

For every $t\in V(T)$ we denote by $\textsf{desc}_{T}(t)$ the set containing all descendants 
of $t$ (including $t$) in the rooted tree $(T,r)$ and we define $G_t = G\left[\bigcup_{t'\in\textsf{desc}_{T}(t)} \beta(t')\right].$
Also, define
\[
\Pcal(t)=\{Z \mid V(Z)=\beta(t)\text{ and } Z \text{ is a linear forest}\}.
\]

 Given a graph $Z\in \Pcal(t)$, we denote by $I(Z)$ the set of isolated vertices of $Z$. 
A \emph{$Z$-realisation} in $G_t$ is a family $\mathcal C_Z=\{P_e\mid e\in E(Z)\}$ such that, for each $e=xy\in E(Z)$, $P_e$ is an $x$-$y$ path in $G_t$ whose internal
vertices avoid $\beta(t)$ and two paths $P_e,P_f$ intersect only in their
common endpoint when the corresponding edges $e,f$ share an endpoint in $Z$ 
and are otherwise vertex-disjoint.

For every $t\in V(T)$, the dynamic programming computes 
the function $\mathsf{table}_{t}: \Pcal(t)\to\{0,1\}$ where 
$\mathsf{table}_{t}(Z)=1$ if and only if there is a 
 $Z$-realisation $\Ccal_{Z}$ in $G_t$ where 
\begin{eqnarray}
R\cap V(G_{i}) &  \subseteq & V(\bigcup\Ccal_{Z})\cup I(Z)\label{conf_kiop}
\end{eqnarray}

Notice now that each function $\mathsf{table}_{t}$ can be seen as a set of $2^{\mathbf{O}(|\beta(t)|\log |\beta(t)|)} \subseteq 2^{\poly(k+w)}$ pairs.
Moreover observe that from the pairs in $\mathsf{table}_{r}$ we may derive whether $(G,R,\Tcal)$
 is a \textsf{Yes}-instance of $\textsc{SDP}$.
 For this, set $Z_{\mathcal T}
\coloneqq
\left(V_{\mathcal T},\{\{s_i,t_i\}\mid i\in[k]\}\right).$
Then $(G,R,\mathcal T)\text{ is a \textsf{Yes}-instance}$ if and only if $
\mathsf{table}_r(Z_{\mathcal T})=1.$

 Notice now that it is trivial to compute $\mathsf{table}_{t}$ when $t$ is a leaf of $(T,r)$.
 Moreover, for every non-leaf node $t\in V(T)$, it is easy to compute 
 $\mathsf{table}_{t}$, provided that we know the corresponding functions of its child $t'$ or its children $t_{1},t_{2}$.
As this computation can be performed in $|\mathsf{table}_{t'}|$ or  $|\mathsf{table}_{t_1}|\cdot |\mathsf{table}_{t_2}|$ steps respectively, the overall dynamic programming can be performed bottom-up and runs in   

$2^{\poly(k+w)}\cdot |G|$ time, as claimed.
\end{proof}

The above algorithm may also keep track, for each $t\in V(T)$, 
of a set of $Z$-realisations in $G_t$, for all $Z\in\Pcal(t)$.
The knowledge of these sets makes is possible to construct 
a solution, i.e., the disjoint paths between the terminals of $\Tcal$ that span $R$,   in case $(G,R,\Tcal)$ is a \textsf{yes}-instance.

Notice that an easy modification of the above can also solve the optimisation 
version of \textsc{SDP}, that is to compute $\textsf{max}_\mathsf{SDP}(G,R,\Tcal)$:  the maximum size 
of a subset $R'$ of $R$ for which $(G,R,\Tcal)$ is a \textsf{Yes}-instance of  \textsc{SDP}.
The only that changes is that now, instead of $\mathsf{table}_{t}\colon \Pcal(t)\to\{0,1\}$, we compute $\mathsf{maxtable}_{t}: \Pcal(t)\to\Nbbb$
where $\mathsf{maxtable}_{t}(Z)$ is the maximum size of a subset $R'$ of $R\cap V(G_t)$
where there is a $Z$-realisation $\Ccal_{Z}$ in $G_t$ and \eqref{conf_kiop} holds 
 with $R$ replaced by $R'$.
  
\section{Irrevelant-vertex tightness}\label{sec_CounterExample}
Let us now denote by $\mathcal{X}$ the class of all annotated graphs $\mathbf{H}$ such that there exists a $2$-outer-annotated grid $\mathbf{G}$ such that $\mathbf{H}$ is a red-minor of $\mathbf{G}$.

With \zcref{thm_SpanningLinkageAlgorithm} and \zcref{thm_localstructure} we have established the positive part of our main result.
That is, in every red-minor-closed class $\mathcal{C}$, if it does not contain the entire class $\mathcal{X}$, then there exists an irrelevant vertex rule for the \textsc{$k$-Spanning Disjoint Paths} problem.
In particular, if $\mathcal{X} \not\subseteq \mathcal{C}$, then there exists a Vital Linkage Theorem for \textsc{$k$-Spanning Disjoint Paths}.
In this section we prove the reverse.
That is, we show that for every $w \in \mathbb{N}$, $\mathcal{X}$ contains an annotated graph $(G_w,R_w)$ such that there are vertices $s,t \in V(G_w)$ for which $(G_w,R_w,\{ (s,t)\})$ is a vital instance of \textsc{$1$-Spanning Disjoint Paths} while $\mathsf{tw}(G_w) \geq w$.
This shows that a variant of \zcref{thm_VitalSpanningLinkage} for $\mathcal{X}$ cannot hold and therefore implies, in particular, that $TC(\mathcal{X})$ does not have the irrelevant vertex property.

\begin{figure}[ht]
 \centering
 \begin{tikzpicture}

 \pgfdeclarelayer{background}
		\pgfdeclarelayer{foreground}
			
		\pgfsetlayers{background,main,foreground}

 \begin{pgfonlayer}{background}
 \pgftext{\includegraphics[width=8cm]{Figures/GenericForward.pdf}} at (C.center);
 \end{pgfonlayer}{background}
			
 \begin{pgfonlayer}{main}
 \node (C) [v:ghost] {};
 
 \end{pgfonlayer}{main}
 
 \begin{pgfonlayer}{foreground}
 \end{pgfonlayer}{foreground}

 \end{tikzpicture}
 \caption{A schematic illustration of the gadget $\mathsf{forward}_n$. The arrows indicate the direction the unique solution must traverse from the vertex $s$ to the vertex $t$.
 The two \textcolor{DarkBananaYellow}{yellow} vertices left and right mark $s$ and $t$.
 The \textcolor{Amethyst}{purple} box marks the gadget $\mathsf{backward}_{n-1}$ while the \textcolor{MidnightBlue}{blue} boxes are copies of the gadget $\mathsf{forward}_{n-1}$.}
 \label{fig_GenericForward}
\end{figure}

Our construction is inductive and finally requires two types of gadgets.
Each gadget will have a certain amount of ``entries'' and ``exists'' together with two ordered sets of interface vertices which exist mainly to propagate the construction.
These Interface vertices will be of degree $1$ and will not be part of the vital instance hidden inside.
Roughly speaking, we differentiate between two gadgets: A \textbf{forward} gadget and a \textbf{backward} gadget.
Both gadgets have an intrinsic orientation which we signify by \emph{left} and \emph{right}.
The \textbf{forward} gadget has a single entry on the left and a single exit on the right.
See \zcref{fig_GenericForward} for a schematic illustration.
The \textbf{backward} gadget on the other hand appears in two incarnations (which differ in a single edge).
The core \textbf{backward} gadget is a two way, it has an entry on the left that is connected to an exit on the right, and it has an entry on the right connected to an exist to the left.
The second incarnation of the \textbf{backward} gadget is obtained from the core construction by adding an edge -- subdivided with a red vertex -- connecting the exit on the left to the entry on the left.
Hence, this gadget has a single entry and a single exit, both on the right.
See \zcref{fig_GenericBackward} for an illustration.

\begin{figure}[ht]
 \centering
 \begin{tikzpicture}

 \pgfdeclarelayer{background}
		\pgfdeclarelayer{foreground}
			
		\pgfsetlayers{background,main,foreground}

 \begin{pgfonlayer}{background}
 \pgftext{\includegraphics[width=8cm]{Figures/GenericBackward.pdf}} at (C.center);
 \end{pgfonlayer}{background}
			
 \begin{pgfonlayer}{main}
 \node (C) [v:ghost] {};
 
 \end{pgfonlayer}{main}
 
 \begin{pgfonlayer}{foreground}
 \end{pgfonlayer}{foreground}

 \end{tikzpicture}
 \caption{A schematic illustration of the gadget $\mathsf{backward}_n$. The arrows indicate the direction the unique solution must traverse from the vertices $s_1$, $s_2$ to the vertices $t_1$ and $t_2$.
 The four \textcolor{DarkBananaYellow}{yellow} vertices left and right mark $s_1$, $s_2$, $t_1$ and $t_2$.
 The \textcolor{Amethyst}{purple} boxes are copies of the gadget $\mathsf{backward}_{n-1}$ while the \textcolor{MidnightBlue}{blue} box is a copy of the gadget $\mathsf{forward}_{n-1}$.
 The \textcolor{HotMagenta}{magenta} edge is optional.}
 \label{fig_GenericBackward}
\end{figure}

\subsection{Constructing a vital instance}\label{subsec_VitalInstanceConstruction}
Let us now formalise the construction.
As explained before we will have two main gadgets which we identify by its ``orientation''.
Each higher-order gadget is then constructed out of two copies of its own type one level lower plus one copy of the other kind from one level below.

\paragraph{Base constructions.}
We start with the base of our induction.
\medskip

The gadget $\mathbf{forward}_1$ is the tuple $(G_{\mathbf{f},1},R_{\mathbf{f},1},s,t,\ell_1,r_1)$ defined as follows.
The graph $G_{\mathbf{f},1}$ has vertex set $\{ s,t,x,\ell_1,r_1,z_1,z_2\}$ where $R_{\mathbf{f},1} \coloneqq \{ z_1,z_2\}$.
The edge set of $G_{\mathbf{f},1}$ is $\{ sz_1, z_1x, xz_2, z_2t, \ell_1x, xr_1\}$.

The vertex $s$ is the unique \emph{entry} of $\mathbf{forward}_1$ and the vertex $t$ is the unique \emph{exit} of $\mathbf{forward}_1$.
The \emph{left interface} is $\{ \ell_1\}$ and the \emph{right interface} is $\{ r_1\}$.
The path $P_1= \langle \ell_1,x,r_1\rangle$ is the \emph{$1$st row} of $\mathbf{forward}_1$.
\smallskip

The gadget $\mathbf{backward}_1$ is the tuple $(G_{\mathbf{b},1},R_{\mathbf{b},1},s_1,t_1,s_2,t_2,\ell_1,r_1)$ defined as follows.
The graph $G_{\mathbf{b},1}$ has vertex set $\{ s_1,t_1,s_2,t_2,x,\ell_1,r_1,y,z_1,z_2\}$ where $R_{\mathbf{b},1} \coloneqq \{ y,z_1,z_2\}$.
The edge set of $G_{\mathbf{b},1}$ is $\{ t_2z_1, z_1x, xz_2, z_2s_2, s_1y, yt_1, \ell_1x, xr_1\}$.

The vertices $s_1$ and $s_2$ are the \emph{entries} of $\mathbf{backward}_1$ and the vertices $t_1$ and $t_2$ are the \emph{exists} of $\mathbf{backward}_1$.
The \emph{left interface} is $\{ \ell_1\}$ and the \emph{right interface} is $\{ r_1\}$.
The path $P_1= \langle \ell_1,x,r_1\rangle$ is the \emph{$1$st row} of $\mathbf{backward}_1$.
See \zcref{fig_BaseGadgets} for illustrations of $\mathsf{forward}_1$ and $\mathsf{backward}_1$.

\begin{figure}[ht]
 \centering
 \begin{tikzpicture}

 \pgfdeclarelayer{background}
		\pgfdeclarelayer{foreground}
			
		\pgfsetlayers{background,main,foreground}
			
 \begin{pgfonlayer}{main}
 \node (C) [v:ghost] {};

 \node(L) [v:ghost] at (-3,0) {
 \begin{tikzpicture}

 \pgfdeclarelayer{background}
		 \pgfdeclarelayer{foreground}
			
		 \pgfsetlayers{background,main,foreground}

 \begin{pgfonlayer}{background}
 \pgftext{\includegraphics[width=5.5cm]{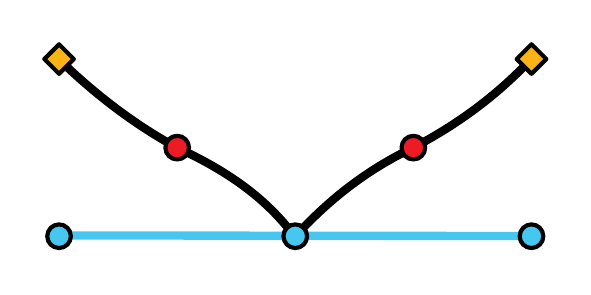}} at (C.center);
 \end{pgfonlayer}{background}
			
 \begin{pgfonlayer}{main}
 \node (C) [v:ghost] {};
 
 \end{pgfonlayer}{main}
 
 \begin{pgfonlayer}{foreground}
 \end{pgfonlayer}{foreground}

 \end{tikzpicture}
 };

 \node(M) [v:ghost] at (3,0) {
 \begin{tikzpicture}

 \pgfdeclarelayer{background}
		 \pgfdeclarelayer{foreground}
			
		 \pgfsetlayers{background,main,foreground}

 \begin{pgfonlayer}{background}
 \pgftext{\includegraphics[width=5.5cm]{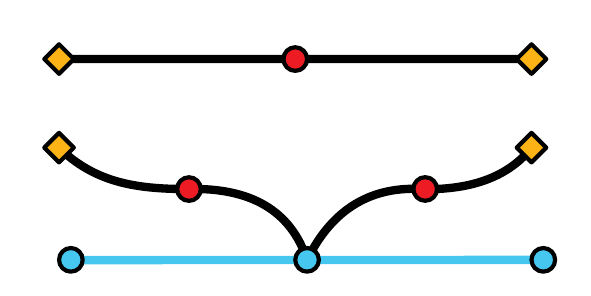}} at (C.center);
 \end{pgfonlayer}{background}
			
 \begin{pgfonlayer}{main}
 \node (C) [v:ghost] {};
 
 \end{pgfonlayer}{main}
 
 \begin{pgfonlayer}{foreground}
 \end{pgfonlayer}{foreground}

 \end{tikzpicture}
 };

 \node (i) [v:ghost] at (-3,-1.5) {$\mathsf{forward}_1$};
 \node (ii) [v:ghost] at (3,-1.5) {$\mathsf{backward}_1$};

 \end{pgfonlayer}{main}
 
 \begin{pgfonlayer}{foreground}
 \end{pgfonlayer}{foreground}

 \end{tikzpicture}
 \caption{The two base-level gadgets $\mathsf{forward}_1$ and $\mathsf{backward}_1$.}
 \label{fig_BaseGadgets}
\end{figure}

Generally speaking, in each level of our construction we will have a left and a right interface, denotes as $\{ \ell_1,\dots,\ell_n\}$ and $\{ r_1,\dots,r_n\}$ respectively.
For $i \in [n]$, the \emph{$i$th left (right) interface} will refer to the vertex $\ell_i$ ($r_i$).

Before we describe the full inductive construction, let us make one more illustration in form of the construction of $\mathbf{forward}_2$ and $\mathbf{backward}_2$.
\medskip

The gadget $\mathbf{forward}_2$ is the tuple $(G_{\mathbf{f},2},R_{\mathbf{f},2},s,t,\ell_1,\ell_2,r_1,r_2)$ obtained as follows.
Let $\mathbf{L} = (L,R_L,s_L,t_L,\ell^1_L,r^1_{L})$ and $\mathbf{R} = (R,R_R,s_R,t_R,\ell^1_R,r^1_{R})$ be two disjoint copies of $\mathbf{forward}_1$.
Moreover, let $\mathbf{M} = (M,R_M,s^1_M,t^1_M,s^2_M,t^2_M,\ell^1_M,r^1_{M})$ be a copy of $\mathbf{backward}_1$ disjoint from both $\mathbf{L}$ and $\mathbf{R}$.
The graph $G_{\mathbf{f},2}$is obtained by 
\begin{itemize}
    \item first identifying $r^1_L$ and $\ell^1_M$ into the vertex $a_1$ and $r^1_M$ and $\ell^1_R$ into the vertex $b_1$, then
    \item adding new vertices $\ell_2$, $x$, and $r_2$ together with the edges $\ell_2x$ and $xr_2$, then
    \item adding the edges $a_1x$ and $xb_1$, and finally,
    \item adding the edges $t_La_1$, $b_1s^2_M$, $t^2_Ms^1_M$, and $t^1_Ms_R$.
\end{itemize}
We set $s \coloneqq s_L$ and $t \coloneqq t_R$, and $R_{\mathbf{f},2} \coloneqq R_L \cup R_M \cup R_R$.
Moreover, we declare $\ell_1 \coloneqq \ell^1_L$ and $r_1 \coloneqq r^1_R$ such that $\{ \ell_1,\ell_2\}$ is the \emph{left interface} of $\mathbf{forward}_2$ and $\{ r_1,r_2\}$ is its \emph{right interface}.

Notice that the construction of $\mathbf{forward}_2$ crucially uses a modification of $\mathbf{backward}_1$ where the exit $t_2$ and the entry $s_1$ are being linked by an edge.
\smallskip

The gadget $\mathbf{backward}_2$ is the tuple $(G_{\mathbf{b},2},R_{\mathbf{b},2},s_1,t_1,s_2,t_2,\ell_1,\ell_2,r_1,r_2)$ obtained as follows.
Let $\mathbf{L} = (L,R_L,s^1_L,t^1_L,s^2_L,t^2_L\ell^1_L,r^1_{L})$ and $\mathbf{R} = (R,R_R,s^1_R,t^1_R,s^2_R,t^2_R,\ell^1_R,r^1_{R})$ be two disjoint copies of $\mathbf{backward}_1$.
Moreover, let $\mathbf{M} = (M,R_M,s_M,t_M,\ell^1_M,r^1_{M})$ be a copy of $\mathbf{forward}_1$ disjoint from both $\mathbf{L}$ and $\mathbf{R}$.
The graph $G_{\mathbf{b},2}$is obtained by 
\begin{itemize}
    \item first identifying $r^1_L$ and $\ell^1_M$ into the vertex $a_1$ and $r^1_M$ and $\ell^1_R$ into the vertex $b_1$, then
    \item adding new vertices $\ell_2$, $x$, and $r_2$ together with the edges $\ell_2x$ and $xr_2$, then
    \item adding the edges $a_1x$ and $xb_1$, and finally,
    \item adding the edges $s^2_La_1$, $t^1_Ls_M$, $b_1t^2_R$, and $t_Ms^1_R$.
\end{itemize}
We set $s_1 \coloneqq s^1_L$, $s_2 \coloneqq s^2_R$, $t_1 \coloneqq t^1_R$ and $t_2 \coloneqq t^2_L$, and $R_{\mathbf{b},2} \coloneqq R_L \cup R_M \cup R_R$.
Moreover, we declare $\ell_1 \coloneqq \ell^1_L$ and $r_1 \coloneqq r^1_R$ such that $\{ \ell_1,\ell_2\}$ is the \emph{left interface} of $\mathbf{backward}_2$ and $\{ r_1,r_2\}$ is its \emph{right interface}.
\smallskip

Notice that each of the level $1$ gadgets has a path joining $\ell_1$ to $r_1$ whose edges are not being used in the $R$-spanning $s$-$t$-path (or the two $R$-spanning paths in the backward gadget).
The level $2$ gadgets maintain this property and add a second path $P_2$ with similar properties connecting $\ell_2$ and $r_2$.
Moreover, the level $2$ constructions also contain a path $U$ from $a_1$ to $b_1$ which intersects $P_2$ precisely in the vertex $x$.
As we will see later, this path $U$ must always be used by any solution.
If we now interpret the area of the level $2$ gadget containing all red vertices as a vortex, then $U$ forms a simple loop on the boundary of this vortex.
Indeed, this is precisely the viewpoint we wish to induce.
Recall our proofs of \zcref{claim_SimpleLoops} and \zcref{claim_RedLoops}.
Here, it was crucial to observe that many ``stacked'' simple loops provide enough infrastructure to reroute a given solution, thereby showing that the instance was not vital.
Our goal is to stack more and more such simple loops onto $U$ by increasing the level, while also showing that these simple loops must always be used in every solution to the corresponding instance of \textsc{Spanning Disjoint Paths}.
If this succeeds, an argument similar to the one from \zcref{claim_SimpleLoops} will then find a large bramble in these gadgets that is unavoidable.

let now $n \geq 2$ be an integer.
We continue with the description of the general level $n$ construction.
\medskip

The gadget $\mathbf{forward}_n$ is the tuple $(G_{\mathbf{f},n},R_{\mathbf{f},n},s,t,\ell_1,\dots,\ell_n,r_1,\dots,r_n)$ obtained as follows.
Let $\mathbf{L} = (L,R_L,s_L,t_L,\ell^1_L,\dots \ell^{n-1}_L,r^1_{L},\dots,r^{n-1}_L)$ and $\mathbf{R} = (R,R_R,s_R,t_R,\ell^1_R,\dots,\ell^{n-1}_R,r^1_{R},\dots,r^{n-1}_R)$ be two disjoint copies of $\mathbf{forward}_{n-1}$.
In addition, we consider a copy of $\mathbf{backward}_{n-1}$ disjoint from both $\mathbf{L}$ and $\mathbf{R}$ denotes as $\mathbf{M} = (M,R_M,s^1_M,t^1_M,s^2_M,t^2_M,\ell^1_M,\dots,\ell^{n-1}_M,r^1_{M},\dots,r^{n-1}_M)$.
The graph $G_{\mathbf{f},n}$is obtained by 
\begin{itemize}
    \item first identifying $r^i_L$ and $\ell^i_M$ into the vertex $a_i$ and $r^i_M$ and $\ell^i_R$ into the vertex $b_i$ for each $i\in[n-1]$, then
    \item adding new vertices $\ell_n$, $x$, and $r_n$ together with the edges $\ell_nx$ and $xr_n$,
    \item adding the edges $a_ia_{i+1}$ and $b_ib_{i+1}$ for all $i \in [n-2]$, then
    \item adding the edges $a_{n-1}x$ and $xb_{n-1}$, and finally,
    \item adding the edges $t_La_1$, $b_1s^2_M$, $t^2_Ms^1_M$, and $t^1_Ms_R$.
\end{itemize}
We set $s \coloneqq s_L$ and $t \coloneqq t_R$, and $R_{\mathbf{f},2} \coloneqq R_L \cup R_M \cup R_R$.
Moreover, we declare $\ell_i \coloneqq \ell^i_L$ and $r_i \coloneqq r^i_R$ for all $i \in[n-1]$ such that $\{ \ell_1,\dots,\ell_n\}$ is the \emph{left interface} of $\mathbf{forward}_n$ and $\{ r_1,\dots,r_n\}$ is its \emph{right interface}.
Additionally, we call the path $U_n$ the \emph{$n$th loop} of $\mathbf{forward}_n$. If we denote by $U_1,\dots,U_{n-1}$ the $n-1$ loops of $\mathbf{M}$, then those together with $U_n$ become the \emph{$n$ loops of $\mathbf{forward}_n$}.
Similarly, the \emph{$n$th row} of $\mathbf{forward}_n$ is the path $P_n = \langle \ell_n,x,r_n\rangle$ and for $i \in [n-1]$, the \emph{$i$th row} of $\mathbf{forward}_n$ is the path $P_i$ obtained from the union of the $i$th rows of $\mathbf{L}$, $\mathbf{M}$, and $\mathbf{R}$.
\smallskip

The gadget $\mathbf{backward}_n$ is the tuple $(G_{\mathbf{b},n},R_{\mathbf{b},n},s_1,t_1,s_2,t_2,\ell_1,\dots,\ell_n,r_1,\dots,r_n)$ obtained as follows.
Let $\mathbf{L} = (L,R_L,s^1_L,t^1_L,s^2_L,t^2_L,\ell^1_L,\dots \ell^{n-1}_L,r^1_{L},\dots,r^{n-1}_L)$ and $\mathbf{R} = (R,R_R,s^1_R,t^1_R,s^2_R,t^2_R,\ell^1_R,\dots,\ell^{n-1}_R,r^1_{R},\dots,r^{n-1}_R)$ be two disjoint copies of $\mathbf{backward}_{n-1}$.
In addition, we consider a copy of $\mathbf{fowward}_{n-1}$ disjoint from both $\mathbf{L}$ and $\mathbf{R}$ denotes as $\mathbf{M} = (M,R_M,s_M,t_M,\ell^1_M,\dots,\ell^{n-1}_M,r^1_{M},\dots,r^{n-1}_M)$.
The graph $G_{\mathbf{b},n}$is obtained by 
\begin{itemize}
    \item first identifying $r^i_L$ and $\ell^i_M$ into the vertex $a_i$ and $r^i_M$ and $\ell^i_R$ into the vertex $b_i$ for each $i\in[n-1]$, then
    \item adding new vertices $\ell_n$, $x$, and $r_n$ together with the edges $\ell_nx$ and $xr_n$,
    \item adding the edges $a_ia_{i+1}$ and $b_ib_{i+1}$ for all $i \in [n-2]$, then
    \item adding the edges $a_{n-1}x$ and $xb_{n-1}$, and finally,
    \item adding the edges $s^2_La_1$, $t^1_Ls_M$, $b_1t^2_R$, and $t_Ms^1_R$.
\end{itemize}
We set $s_1 \coloneqq s^1_L$, $s_2 \coloneqq s^2_R$, $t_1 \coloneqq t^1_R$ and $t_2 \coloneqq t^2_L$, and $R_{\mathbf{b},2} \coloneqq R_L \cup R_M \cup R_R$.
Moreover, we declare $\ell_i \coloneqq \ell^i_L$ and $r_i \coloneqq r^i_R$ for all $i \in[n-1]$ such that $\{ \ell_1,\dots,\ell_n\}$ is the \emph{left interface} of $\mathbf{backward}_n$ and $\{ r_1,\dots,r_n\}$ is its \emph{right interface}.
Additionally, we call the path $U_n$ the \emph{$n$th loop} of $\mathbf{backward}_n$.
If we denote by $U_1,\dots,U_{n-1}$ the $n-1$ loops of $\mathbf{M}$, then those together with $U_n$ become the \emph{$n$ loops of $\mathbf{backward}_n$}.
Similarly, the \emph{$n$th row} of $\mathbf{backward}_n$ is the path $P_n = \langle \ell_n,x,r_n\rangle$ and for $i \in [n-1]$, the \emph{$i$th row} of $\mathbf{backward}_n$ is the path $P_i$ obtained from the union of the $i$th rows of $\mathbf{L}$, $\mathbf{M}$, and $\mathbf{R}$.

Before we move on, let us observe that $\mathbf{forward}_n$ and $\mathbf{backward}_n$ are indeed members of $\mathcal{X}$ for all $n \in \mathbb{N}_{\geq 1}$.

\begin{lemma}\label{lemma_GadgetsInGrid}
For every integer $n \geq 1$, the annotated graphs $(G_{\mathbf{f},n},R_{\mathbf{f},n})$ and $(G_{\mathbf{b},n},R_{\mathbf{b},n})$ from $\mathbf{forward}_n$ and $\mathbf{backward}_n$ are red-minors of the $2$-outer-annotated $(8 \cdot 3^{n-1} \times 8 \cdot 3^{n-1})$-grid.
\end{lemma}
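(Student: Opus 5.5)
We argue by induction on $n$, proving a stronger, ``embedded'' statement so that the induction can glue copies side by side. Let $N_n \coloneqq 8\cdot 3^{n-1}$. The claim we carry along is this: the gadget $\mathbf{forward}_n$ (and likewise $\mathbf{backward}_n$, with or without the optional red-subdivided edge) is a red-minor of the $2$-outer-annotated $(N_n\times N_n)$-grid. The model must satisfy the following, where we orient the grid so that the two red columns become the two topmost rows.

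\begin{itemize}
\item Every red vertex of the gadget is modelled by a branch set meeting one of the two red rows.
\item The entries and exits, $s,t$ or $s_1,t_1,s_2,t_2$, are modelled by branch sets containing vertices of the red rows near the left and right ends of the grid, in the left-to-right order dictated by the figures.
\item For each $i\in[n]$, the $i$th left interface $\ell_i$ and right interface $r_i$ are modelled by the first and last vertex of a fixed row of the grid lying strictly below the red band.
\item The $i$th row $P_i$ of the gadget is routed along that grid row, with rows of increasing $i$ lying further down.
\end{itemize}

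This planar picture is consistent with the schematic figures: the red/terminal material sits on the outer face along the top, the rows lie in horizontal layers below, and all graphs involved are planar.

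For the base case $n=1$, the gadgets have at most ten vertices. We embed them directly into the $(8\times 8)$-grid. The red vertices $z_1,z_2$ (and $y$) go to cells of the two red rows. The vertex $x$ goes to row $3$, and its neighbours $\ell_1,r_1$ are the two ends of row $3$, with $P_1$ realised as row $3$. The path $s z_1 x z_2 t$ descends from the red band to $x$ and back. Because $8$ columns give enough room, the edge model for $s_1 y t_1$ in the backward gadget can run along the first red row, disjoint from the descents.

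For the inductive step, we take three consecutive blocks of width $N_{n-1}$, which give $3N_{n-1}=24\cdot 3^{n-2}<N_n$ columns. We leave extra columns as separators and extra rows below for the new layer. Into these blocks we place the models of $\mathbf{L},\mathbf{M},\mathbf{R}$ given by induction, translated horizontally and aligned so that the rows $1,\dots,n-1$ of the three copies lie on the same grid rows.

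The identifications $r^i_L=\ell^i_M=a_i$ and $r^i_M=\ell^i_R=b_i$ are realised by merging the branch sets through the grid edges between adjacent blocks. The new edges $a_ia_{i+1}$ and $b_ib_{i+1}$ use the separating columns between blocks. The new row $P_n=\langle \ell_n,x,r_n\rangle$ uses a fresh grid row below all the old ones, with $x$ placed in the middle. The edges $a_{n-1}x$ and $xb_{n-1}$ are short vertical segments in the separator columns.

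The connecting edges $t_La_1$, $b_1s^2_M$, $t^2_Ms^1_M$, $t^1_Ms_R$ (and the analogous ones for the backward gadget) join terminals lying in the red band to $a_1$, $b_1$, or to other terminals. Since all terminals of the sub-copies sit at the left or right ends of their blocks on the red rows, these edges can be routed within the separator columns near the top, above row $1$ of the subgadgets. The new entries and exits are the inherited ones, so they again sit at the extreme ends on the red rows, which restores the invariant.

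The main obstacle will be the bookkeeping that keeps all these added edge-paths vertex-disjoint and planar-compatible with the crossings between the separator columns. In particular, the path from $b_1$ up to $s^2_M$ and the edge $t^2_Ms^1_M$ must not cross the vertical $a_i$-chain. We would handle this by reserving $N_n-3N_{n-1}=0$… Since $8\cdot 3^{n-1}=3N_{n-1}$ exactly, there is in fact no slack. To create it, we instead shrink each copy, using $N_{n-1}-1$ internal columns via the generous constant $8$ in the base. Alternatively, we strengthen the invariant so that the model of level $n$ leaves its outermost two columns and one bottom row unused. That unused margin supplies exactly the separator columns and the new row needed at the next level, and checking that the margin regenerates is the crux of the induction.
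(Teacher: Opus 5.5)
\textbf{There is a genuine gap: the inductive step is never actually carried out.}

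Your overall plan matches the paper's: induction on $n$ with a positional invariant, the three sub-models $\mathbf{L},\mathbf{M},\mathbf{R}$ laid side by side along the red band, the vertices $a_i,b_i$ created across the block boundaries, and a fresh grid row for $P_n$. The problems are in the execution.
\begin{itemize}
\item As first written, your step relies on separator columns coming from $3N_{n-1}<N_n$. That inequality is false, since $3N_{n-1}=N_n$.
\item The two repairs you then sketch are left unverified. You yourself call the regeneration of the margin ``the crux of the induction''.
\item Your second repair (leaving the outermost columns unused) contradicts your own invariant that $\ell_i,r_i$ are the first and last vertices of a grid row.
\end{itemize}
So as it stands the proposal does not prove the lemma.

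\textbf{The missing observation: no slack along the red band is needed.} The slack is in depth, because the models only ever use $O(n)$ of the $N_n$ rows. Fix the rows once and for all, as the paper does.
\begin{itemize}
\item The endpoints of the upper path ($s,t$ of a forward gadget, $s_1,t_1$ of a backward gadget) sit in the extreme columns of row $3$.
\item The endpoints $s_2,t_2$ sit in the extreme columns of row $4$.
\item $\ell_i,r_i$ sit in the first and last column of row $4+i$.
\item The whole level-$n$ model lies in rows $1,\dots,4+n$.
\end{itemize}
Your convention of putting the terminals on the red rows works equally well after shifting. In addition, let the branch sets of $s$ and $t$ of a forward gadget occupy both rows $3$ and $4$ of their column.

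Now place the three blocks directly adjacent to each other. Then everything the inductive step needs is immediate:
\begin{itemize}
\item $a_i=r^i_L\cup\ell^i_M$ and $b_i=r^i_M\cup\ell^i_R$ are joined along the matching of grid edges across the boundary.
\item $a_ia_{i+1}$ and $b_ib_{i+1}$ are vertical grid edges in the boundary columns.
\item Each connector is a single grid edge: $t_La_1$, $b_1s^2_M$, $t^2_Ms^1_M$, $t^1_Ms_R$ for the forward gadget, and $s^2_La_1$, $t^1_Ls_M$, $b_1t^2_R$, $t_Ms^1_R$ for the backward gadget.
\item $P_n$ runs along row $4+n$, which the level-$(n-1)$ sub-models (confined to rows $1,\dots,3+n$) never touch. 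Take $x$ to be the segment of that row between the two boundary columns, and $\ell_n,r_n$ the two remaining pieces.
\end{itemize}
The invariant then regenerates verbatim, and the base models in the $8\times 8$ grid are easy to write down.

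Your first repair is also arithmetically consistent, since $3(N_{n-1}-1)+2=N_n-1$. However, it would need the same row bookkeeping to show that one-column separators suffice, and that bookkeeping is exactly what is missing.
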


\begin{proof}
We prove the claim by induction on $n$ where $n=1$ is the base case.
Moreover, the proofs for $\mathbf{forward}_n$ and $\mathbf{backward}_n$ are completely analogous, so we one present the inductive step for the case of $\mathbf{forward}_n$ here.

For the base of the induction, all we have to do is give minor models of $(G_{\mathbf{f},1},R_{\mathbf{f},1})$ and $(G_{\mathbf{b},1},R_{\mathbf{b},1})$ in the $2$-outer-annotated $(8 \times 8)$-grid.
In \zcref{fig_BaseGadgetModels}, we present such minor models with the additional property that all entries, exists, and interface vertices lie on the outer face.
Indeed, crucially, the embeddings of the minor model in the $2$-outer-annotated $(8 \times 8)$-grid mimic the natural embeddings of $(G_{\mathbf{f},1},R_{\mathbf{f},1})$ and $(G_{\mathbf{b},1},R_{\mathbf{b},1})$.
This is a property we will be using in the inductive step.

\begin{figure}[ht]
 \centering
 \begin{tikzpicture}

 \pgfdeclarelayer{background}
		\pgfdeclarelayer{foreground}
			
		\pgfsetlayers{background,main,foreground}
			
 \begin{pgfonlayer}{main}
 \node (C) [v:ghost] {};

 \node(L) [v:ghost] at (-3,0) {
 \begin{tikzpicture}

 \pgfdeclarelayer{background}
		 \pgfdeclarelayer{foreground}
			
		 \pgfsetlayers{background,main,foreground}

 \begin{pgfonlayer}{background}
 \pgftext{\includegraphics[width=5.5cm]{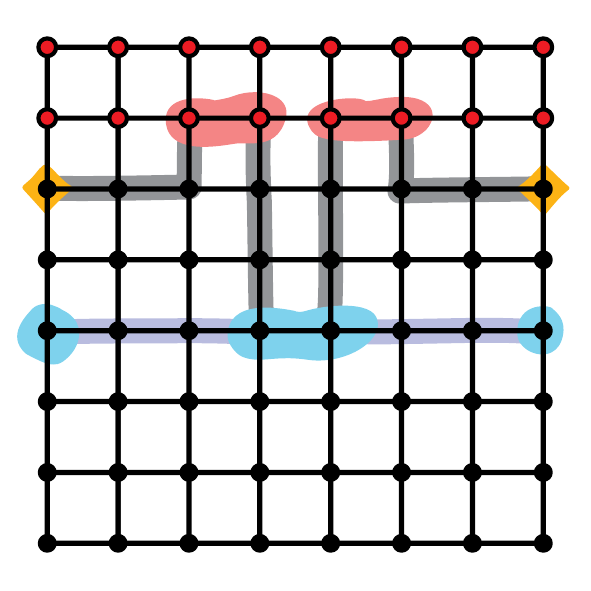}} at (C.center);
 \end{pgfonlayer}{background}
			
 \begin{pgfonlayer}{main}
 \node (C) [v:ghost] {};
 
 \end{pgfonlayer}{main}
 
 \begin{pgfonlayer}{foreground}
 \end{pgfonlayer}{foreground}

 \end{tikzpicture}
 };

 \node(M) [v:ghost] at (3,0) {
 \begin{tikzpicture}

 \pgfdeclarelayer{background}
		 \pgfdeclarelayer{foreground}
			
		 \pgfsetlayers{background,main,foreground}

 \begin{pgfonlayer}{background}
 \pgftext{\includegraphics[width=5.5cm]{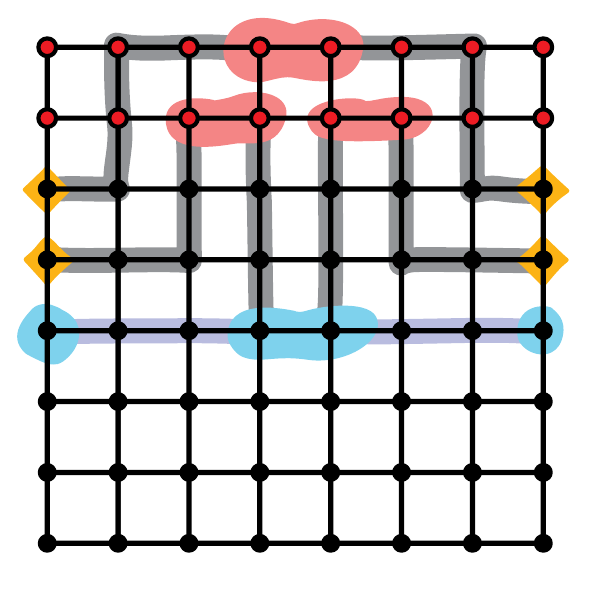}} at (C.center);
 \end{pgfonlayer}{background}
			
 \begin{pgfonlayer}{main}
 \node (C) [v:ghost] {};
 
 \end{pgfonlayer}{main}
 
 \begin{pgfonlayer}{foreground}
 \end{pgfonlayer}{foreground}

 \end{tikzpicture}
 };

 \node (i) [v:ghost] at (-3,-3) {\textsl{(i)}};
 \node (ii) [v:ghost] at (3,-3) {\textsl{(ii)}};

 \end{pgfonlayer}{main}
 
 \begin{pgfonlayer}{foreground}
 \end{pgfonlayer}{foreground}

 \end{tikzpicture}
 \caption{The annotated graphs (i) $(G_{\mathbf{f},1},R_{\mathbf{f},1})$ and (ii) $(G_{\mathbf{b},1},R_{\mathbf{b},1})$ as red-minors in the $2$-outer-annotated $(8 \times 8)$-grid.}
 \label{fig_BaseGadgetModels}
\end{figure}

Now, in order to construct a minor model of $(G_{\mathbf{f},n},R_{\mathbf{f},n})$ in the $2$-outer-annotated $(8 \cdot 3^{n-1} \times 8 \cdot 3^{n-1})$-grid, we assume inductively that we have red-minor models of $(G_{\mathbf{f},n-1},R_{\mathbf{f},n-1})$ and $(G_{\mathbf{b},n-1},R_{\mathbf{b},n-1})$ such that the models of the interface vertices, entries, and exists belong to the first and last row of the $2$-outer-annotated $(8 \cdot 3^{n-2} \times 8 \cdot 3^{n-2})$-grid in the natural way.
Indeed, we may assume that $s$ and $r$ of $\mathbf{forward}_{n-1}$ lie on the $3$rd row from the top, while $a_i$ and $b_i$ lie on the $(4+i)$th row from the top for all $i \in [n-1]$.
Additionally, for $\mathbf{backward}_{n-1}$, we may assume that $s_1$ and $t_1$ are realised on the $3$rd row from the top while $s_2$ and $t_2$ are realised on the $4$th row from the top.
Notice that this is also guaranteed in the base case as depicted in \zcref{fig_BaseGadgetModels}.

Now consider the $2$-outer-annotated $(8 \cdot 3^{n-1} \times 8 \cdot 3^{n-1})$-grid $\mathbf{G}$.
Now let $\mathbf{G}_1,\mathbf{G}_2$, and $\mathbf{G}_3$ be the three vertex-disjoint $2$-outer-annotated $(8 \cdot 3^{n-2} \times 8 \cdot 3^{n-2})$-grids contained in $\mathbf{G}$ indexed such that $\mathbf{G}_1$ intersects the first row of $\mathbf{G}$ and $\mathbf{G}_3$ intersects the last row of $\mathbf{G}$.
By our induction hypothesis, we know that we may find red-minor models of $(G_{\mathbf{f},n-1},R_{\mathbf{f},n-1})$ in $\mathbf{G}_1$ and $\mathbf{G}_3$ while $\mathbf{G}_2$ hosts a red-minor model of $(G_{\mathbf{b},n-1},R_{\mathbf{b},n-1})$.
Now, we merge the models of the right-sided interface vertices in $\mathbf{G}_1$ with their left sided counterparts in $\mathbf{G}_2$ along a matching, thereby creating models of the vertices $a_1,\dots,a_n$.
Similarly, we merge the models of the right-sided interface vertices in $\mathbf{G}_2$ with the models of the left-sided interface vertices in $\mathbf{G}_3$ along a matching in order to create models of the vertices $b_1,\dots,b_n$.
See \zcref{fig_Forward2Minor} for an example.

\begin{figure}[ht]
 \centering
 \begin{tikzpicture}

 \pgfdeclarelayer{background}
		\pgfdeclarelayer{foreground}
			
		\pgfsetlayers{background,main,foreground}

 \begin{pgfonlayer}{background}
 \pgftext{\includegraphics[width=15cm]{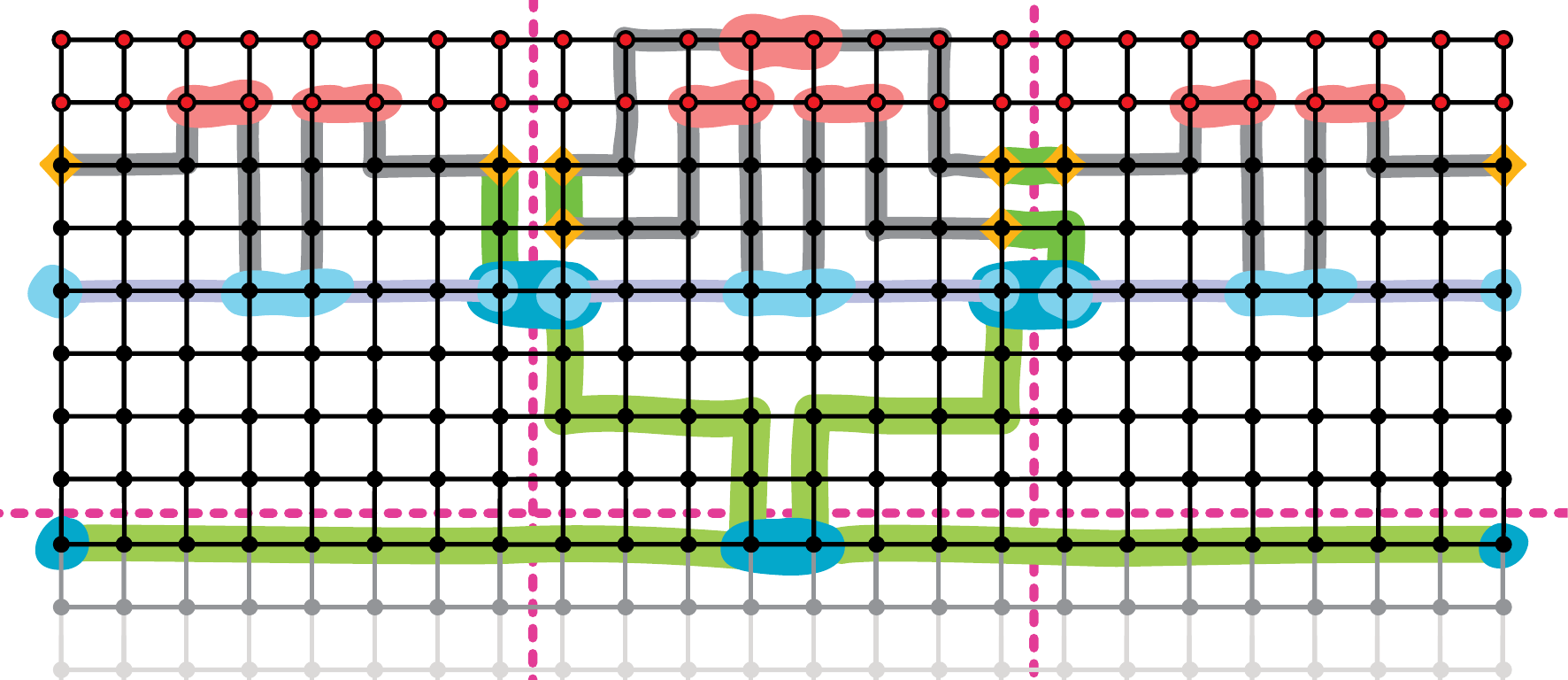}} at (C.center);
 \end{pgfonlayer}{background}
			
 \begin{pgfonlayer}{main}
 \node (C) [v:ghost] {};
 
 \end{pgfonlayer}{main}
 
 \begin{pgfonlayer}{foreground}
 \end{pgfonlayer}{foreground}
 
 \end{tikzpicture}
 \caption{The construction of a minor model of the annotated graph of $\mathbf{forward}_2$ inside the $2$-outer-annotated $(24 \times 24)$-grid in the proof of \zcref{lemma_GadgetsInGrid}. The three grids $\mathbf{G}_i$ are indicated by the dashed \textcolor{HotMagenta}{magenta} lines and the modifications to the constructions from \zcref{fig_BaseGadgetModels} are highlighted in darker colours and different shades of \textcolor{AppleGreen}{green}.}
 \label{fig_Forward2Minor}
\end{figure}
Next we mimic the construction steps for $\mathbf{forward}_n$ in the minor model.

We introduce a path between the model of $t_2$ and $s_1$ within the graph of $\mathbf{backward}_{n-1}$ in $\mathbf{G}_2$.
Also, we add a path between the models of $t_1$ in the graph of $\mathbf{backward}_{n-1}$ in $\mathbf{G}_2$ and the model of $s$ in the graph of $\mathbf{forward}_{n-1}$ in $\mathbf{G}_3$.

We introduce three new vertex models in the $(8 \cdot 3^{n-2})$th row $Q$ from the top as follows:
The model of $\ell_n$ will be the endpoint of $Q$ on the first column, $r_n$ will be the endpoint of $Q$ on the last column, and $x$ consists of the middle edge of $Q$ together with its endpoints.
The rest of $Q$ is used to model the edges $\ell_nx$ and $xr_n$.
We also add paths modelling the edges $a_nx$ and $xb_n$.

Next, we add paths for the edges $t$ -- taken from the graph of $\mathbf{forward}_{n-1}$ in $\mathbf{G}_1$ -- and $a_1$ in $\mathbf{G}_1$ and between $s_1$ -- taken from the graph of $\mathbf{backward}_{n-1}$ in $\mathbf{G}_2$ -- and $b_1$.
This path is simply the edge between the two branch sets joining $\mathbf{G}_2$ and $\mathbf{G}_3$.

Finally, for every $i\in[n-1]$ we add a path between the models of $a_i$ and $a_{i+1}$ and a path between $b_i$ and $b_{i+1}$ within $\mathbf{G}_2$ -- with the exception of the paths modelling the edges $a_{n-1}a_n$.

All of those additional paths are depicted in \textcolor{AppleGreen}{green} in \zcref{fig_Forward2Minor}.
It is now straightforward to check that the constructed red-minor model is indeed a red-minor model of the annotated graph $(G_{\mathbf{f},n},R_{\mathbf{f},n})$ as desired.
\end{proof}

\subsection{Treewidth of the instances}\label{subsec_InstanceTreewidth}

To see that the graphs of $\mathbf{forward}_n$ and $\mathbf{backward}_n$ have treewidth in $\mathbf{O}(n)$ is relatively straightforward.
In essence, we may use the loops and rows to construct a bramble similar to the way we have constructed brambles in several of the proof in earlier sections.

\begin{lemma}\label{lemma_GadgetTreewidth}
For every $n \in \mathbb{N}$, $n \geq 1$, the treewidth of the graphs of $\mathbf{forward}_n$ and $\mathbf{backward}_n$ is at least $\lfloor \nicefrac{n}{2} \rfloor - 1$.
\end{lemma}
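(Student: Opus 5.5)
We will exhibit a bramble of order at least $\lfloor \nicefrac{n}{2}\rfloor$ and then invoke \zcref{prop_brambles}. The natural candidates are the rows and the loops. By construction, $\mathbf{forward}_n$ (and likewise $\mathbf{backward}_n$) has $n$ rows $P_1,\dots,P_n$, which are pairwise vertex-disjoint paths from $\ell_i$ to $r_i$. It also has $n$ loops $U_1,\dots,U_n$, which are pairwise vertex-disjoint. Here $U_j$ is the $a$–$b$ path created at level $j$ inside the recursively nested middle gadget, namely $a_1\dots a_{j-1}\,x\,b_{j-1}\dots b_1$ as seen from the level-$j$ copy.

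The first step is to record the structural facts, by induction on $n$ along the recursive construction.
\begin{enumerate}
\item The $i$th row of the level-$n$ gadget is the concatenation of the $i$th rows of $\mathbf{L},\mathbf{M},\mathbf{R}$, glued at $a_i$ and $b_i$. Hence it passes through the vertices $a_i$ and $b_i$ of every nested middle copy of level greater than $i$.
\item The loop $U_j$ uses the vertices $a_1,\dots,a_{j-1}$, the new vertex $x$ of level $j$, and $b_{j-1},\dots,b_1$ of the level-$j$ middle copy.
\end{enumerate}
Consequently, for $i<j$ the loop $U_j$ meets row $P_i$ (at the identified vertex $a_i$ of the level-$j$ copy, which lies on the $i$th row of the whole gadget), and $U_j$ meets $P_j$ at $x$. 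So $U_j$ meets $P_i$ for every $i\le j$.

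Next, set $B_j\coloneqq U_j\cup P_j$ for $j\in[n]$. Each $B_j$ is connected because $U_j$ and $P_j$ share $x$. For $i<j$ the sets $B_i$ and $B_j$ intersect, since $U_j$ meets $P_i$. Therefore $\mathcal{B}=\{B_1,\dots,B_n\}$ is a bramble. Since the rows are pairwise disjoint and the loops are pairwise disjoint, every vertex lies in at most one row and at most one loop, hence in at most two members of $\mathcal{B}$. A hitting set must therefore have size at least $\lceil n/2\rceil$, and \zcref{prop_brambles} yields $\mathsf{tw}\ge \lceil n/2\rceil-1\ge\lfloor n/2\rfloor-1$.

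The main obstacle is the bookkeeping in the induction. We must verify that the loops of the middle copy $\mathbf{M}$ really are inherited as loops of the larger gadget, and that the vertices $a_i$ used by inner loops really lie on the global $i$th row and are not lost in some identification. The alternation between $\mathbf{forward}$ and $\mathbf{backward}$ means the induction must handle both types simultaneously. If the identifications turn out to be delicate, the fallback is weaker but suffices: we only need $U_j\cap P_i\neq\emptyset$ for $i<j$, and this already holds because the topmost path $a_1\dots$ of each level-$j$ copy lies on the first row, which the loop traverses.
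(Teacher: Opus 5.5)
Your proposal is correct and follows the paper's proof essentially verbatim. You use the bramble $\{U_j\cup P_j\}_{j\in[n]}$, connected through the shared vertex $x$, whose members pairwise touch because the $j$th loop passes through $a_i$ on the $i$th row for every $i<j$; each vertex lies in at most one row and one loop, and \zcref{prop_brambles} then applies (your bound $\lceil n/2\rceil$ is even slightly sharper than the paper's $\lfloor n/2\rfloor$). Your closing ``fallback'' is unnecessary and, as phrased, inaccurate: only $a_1$ lies on the first row, and meeting $P_1$ alone would not make $B_i$ and $B_j$ touch for $1<i<j$. Your main argument does not depend on it.
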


\begin{proof}
We construct a bramble as follows:

For each $i \in[n]$ let $B_i$ be the graph obtained from the union of the $i$th loop and the $i$th row of $\mathbf{forward}_n$ or $\mathbf{backward}_n$.
From the construction of both gadgets it follows that the $i$th loop intersects the $j$th row for all $j \leq i$ and every $i\ in[n]$.
Moreover, every vertex belongs to at most one row and at most one loop.
Hence, $B_i$ and $B_j$ have a non-empty intersection and each $B_i$ is connected for all $i,j \in[n]$.
It follows that $\mathcal{B} \coloneqq \{ B_i ~\colon~ i\in[n] \}$ is indeed a bramble.
Moreover, the order of $\mathcal{B}$ is at least $\lfloor \nicefrac{n}{2} \rfloor$.
Hence, by \zcref{prop_brambles} the assertion follows.
\end{proof}

\subsection{Vitality of the instances}\label{subsec_InstanceVitality}

All that is left is to show that the instances of $1$\textsc{-Spanning Disjoint Paths} and $2$\textsc{-Spanning Disjoint Paths} generated by $\mathbf{forward}_n$ and $\mathbf{backward}_n$ are vital.

Let $G_{\mathbf{f}n,}^{\star}$ be the graph obtained from the graph $G_{\mathbf{f},n}$ of $\mathbf{forward}_n$ by deleting all of its interface vertices.
Similarly, let $G_{\mathbf{b},n}^{\star}$ be the graph obtained from the graph $G_{\mathbf{b},n}$ of $\mathbf{backward}_n$ by deleting all of its interface vertices.
Then, $(G^{\star}_{\mathbf{f},n},R_{\mathbf{f},n},\{ (s,t) \})$ is the instance of $1$\textsc{-Spanning Disjoint Paths} generated by $\mathbf{forward}_n$ while $(G^{\star}_{\mathbf{b},n},R_{\mathbf{b},n},\{ (s_1,t_1), (s_2,t_2)\})$ is the instance of $2$-\textsc{-Spanning Disjoint Paths} generated by $\mathbf{backward}_n$.

\begin{lemma}\label{lemma_VitalGadget}
For every integer $n \geq 1$, the instances $(G^{\star}_{\mathbf{f},n},R_{\mathbf{f},n},\{ (s,t) \})$ and $(G^{\star}_{\mathbf{b},n},R_{\mathbf{b},n},\{ (s_1,t_1), (s_2,t_2)\})$ are vital.
\end{lemma}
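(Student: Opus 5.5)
We prove both statements simultaneously by induction on $n$, with a strengthened hypothesis about the backward gadget. For the backward gadget we need two facts. First, the core instance $(G^{\star}_{\mathbf{b},n},R_{\mathbf{b},n},\{(s_1,t_1),(s_2,t_2)\})$ is vital. Second, its variant with the extra red-subdivided edge $t_2s_1$ is vital as a $1$-instance from $s_2$ to $t_1$. The second fact is what $\mathbf{forward}_{n+1}$ uses internally. In every case ``vital'' means two things: a solution exists and uses every vertex, and it is the only linkage with that pattern covering $R$.

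The base case $n=1$ is a direct check. After the interface vertices are deleted, $G^{\star}_{\mathbf{f},1}$ is the path $s z_1 x z_2 t$. $G^{\star}_{\mathbf{b},1}$ is the disjoint union of the paths $s_1 y t_1$ and $t_2 z_1 x z_2 s_2$, and the variant joins them into a single path. Each solution is therefore forced and spans everything.

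For the inductive step on $\mathbf{forward}_n$, first exhibit the solution by concatenation. It runs through the solution of $\mathbf{L}$ from $s$ to $t_L$ and then the edge $t_La_1$. It then follows the loop $U_n$: $a_1a_2\cdots a_{n-1}x b_{n-1}\cdots b_1$. Next come the edge $b_1s^2_M$, the backward solution in $\mathbf{M}$ from $s^2_M$ to $t^2_M$, the edge $t^2_Ms^1_M$, the solution from $s^1_M$ to $t^1_M$, the edge $t^1_Ms_R$, and finally the solution of $\mathbf{R}$. We must check that it covers every vertex. Here the $a_i,b_i$ of the identified interfaces, together with $x$, lie on $U_n$, and the lower-level rows sit entirely inside the loops, whose interior vertices are used by the recursive solutions. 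This needs the auxiliary invariant that in each gadget every vertex except the deleted interface vertices is used.

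Uniqueness is the main obstacle. The plan is to exploit degree-$2$ red vertices and small cuts. The only connection between the parts $\mathbf{L}$, $\mathbf{M}$, $\mathbf{R}$ and the new layer runs through the vertices $a_i$, $b_i$, $x$, and the edges $t_La_1$, $b_1s^2_M$, $t^2_Ms^1_M$, $t^1_Ms_R$. We show that every red vertex of $\mathbf{L}$ (resp.\ $\mathbf{R}$) must be visited by a subpath lying inside $G^{\star}_{\mathbf{L}}$ together with at most the vertices $a_i$. We argue by a parity and planarity count on the number of times the $s$-$t$ path crosses the interface $\{a_1,\dots,a_{n-1}\}$. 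Every red vertex has degree $2$, so both of its edges are forced. The forced edges at the base level pin down the entry and exit order, and by induction this forces the restriction to $\mathbf{L}$ to be the unique $s_L$-$t_L$ spanning path, which already uses every $a_i$ internally. Hence the $a_i$ cannot be used for shortcuts. The same holds for $\mathbf{R}$ and the $b_i$. The remaining vertices are $x$ and $\mathbf{M}$, with entry points $b_1$, $s^2_M$, $s^1_M$ and exits $t^1_M$, $t^2_M$. The forced red edge $t^2_Ms^1_M$ makes the restriction to $\mathbf{M}$ exactly the $1$-instance variant of $\mathbf{backward}_{n-1}$, which is vital by hypothesis. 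The difficulty is to make the claim ``the path cannot leave a subgadget through an $a_i$ and re-enter elsewhere'' rigorous. The approach is to note that any such excursion must reuse an $a_i$ that the subgadget's unique solution already needs as an internal vertex, and to rule this out with the strengthened invariant that in each gadget the vertices $\ell_i,r_i$-neighbours (the $a_i$, $b_i$) are internal vertices of the forced solution with both incident solution edges inside the gadget.

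The backward gadget is treated identically, with the roles of forward and backward swapped and two paths instead of one. The optional edge case follows because the added degree-$2$ red vertex forces the concatenation of the two paths.
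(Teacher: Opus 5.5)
Your existence argument is correct. Carrying the variant of $\mathbf{backward}_{n-1}$ with the extra edge $t_2s_1$ through the induction matches how the paper treats the middle copy $\mathbf{M}$. The uniqueness part, however, has a genuine gap, and its one concrete step is false.

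You assert that the unique $s_L$-$t_L$ solution of $\mathbf{L}$ ``already uses every $a_i$ internally'', so the $a_i$ cannot serve as shortcuts. But $a_i$ is the identification of $r^i_L$ and $\ell^i_M$. These are interface vertices of $\mathbf{L}$ and $\mathbf{M}$, deleted in the starred sub-instances. No sub-solution touches $a_i$; in the actual solution the $a_i$ are covered only by the loop $U_n$, as your own existence paragraph says. So nothing in your argument excludes a path reaching $a_i$ through a row edge from a row-end vertex of $\mathbf{L}$ or $\mathbf{M}$. Nothing forces $U_n$ to be traversed either, and that is precisely what has to be shown.

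More fundamentally, vitality of the stand-alone sub-instances is too weak as an induction hypothesis. A global solution could a priori enter and leave $\mathbf{L}$ several times through its row-end vertices. Its restriction to $\mathbf{L}$ is then a multi-path linkage, about which vitality of a $1$-instance says nothing. Your closing invariant points in the right direction if read as a statement about the neighbours of the $\ell_i,r_i$ (the row ends) rather than about the $a_i,b_i$. But stated as a property of ``the forced solution'', it presupposes that the restriction to the sub-gadget already is that solution, so the argument is circular. The ``parity and planarity count'' is never supplied.

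The missing idea is to make the induction hypothesis a \emph{local} forcing statement. In each gadget, every edge of its Hamiltonian solution should be forced using only degree-$2$ red vertices and the degree-$2$ requirement at non-terminal vertices, all inside the gadget, so that the conclusion survives any embedding. In particular, every row-end vertex has both solution edges forced inside the gadget, and each terminal's only neighbour in the gadget is a degree-$2$ red vertex. For row $1$ the row ends are base vertices $x$ between two degree-$2$ red vertices; for row $i\geq 2$ they are level-$i$ loop vertices $x$.

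At level $n$ this works as follows.
\begin{itemize}
\item All row edges at the $a_i$ and $b_i$ are excluded, since the row ends they lead to are saturated.
\item $t_La_1$ is forced, because $t_L$ is now a non-terminal whose other neighbour is a degree-$2$ red vertex. This propagates along $a_1a_2\cdots a_{n-1}x$.
\item Symmetrically, $b_1s^2_M$ is forced and propagates along $b_1\cdots b_{n-1}x$.
\item $t^2_Ms^1_M$ and $t^1_Ms_R$ are forced the same way; the edge $t^2_Ms^1_M$ is forced whether or not it is subdivided.
\end{itemize}
Hence every edge of the concatenated Hamiltonian $s$-$t$ path is forced. This yields uniqueness, shows $U_n$ lies in every solution, and re-establishes the invariant at level $n$ with the new row end $x$ saturated. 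The backward gadget is handled identically via $s^2_La_1$, $b_1t^2_R$, $t^1_Ls_M$, $t_Ms^1_R$.
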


\begin{proof}
As before, we prove the claim by induction on $n$.
Moreover, since the arguments for $\mathbf{forward}_n$ and $\mathbf{backward}_n$ are almost identical, we only provide the proof for $\mathbf{forward}_n$ here.
However, it is necessary to assume that the statement holds for both $\mathbf{forward}_{n-1}$ and $\mathbf{backward}_{n-1}$ in order to make the inductive step go through.
\smallskip

For the base case consider the instance $(G_{\mathbf{f},1}^{\star},R_{\mathbf{f},1},\{ (s,t)\})$ generated by $\mathbf{forward}_1$.
Then here, $G_{\mathbf{f},1}^{\star}$ is simply a path with endpoints $s$ and $t$.
Hence, it is easy to see that $(G_{\mathbf{f},1}^{\star},R_{\mathbf{f},1},\{ (s,t)\})$ is vital.
\smallskip

Now let us assume that we have already shown that $(G^{\star}_{\mathbf{f},n-1},R_{\mathbf{f},n-1},\{ (s,t) \})$ and $(G^{\star}_{\mathbf{b},n-1},R_{\mathbf{b},n-1},\{ (s_1,t_1), (s_2,t_2)\})$ are vital.
Our goal is to show that $(G^{\star}_{\mathbf{f},n},R_{\mathbf{f},n},\{ (s,t) \})$ is also vital.
Moreover, we assume that the $n-1$st loop of $(G^{\star}_{\mathbf{f},n-1},R_{\mathbf{f},n-1},\{ (s,t) \})$ and $(G^{\star}_{\mathbf{b},n-1},R_{\mathbf{b},n-1},\{ (s_1,t_1), (s_2,t_2)\})$ respectively is part of the unique solution.

Let $U \coloneqq U_n$ be the $n$th loop of $\mathsf{forward}_n$ and let $P \coloneqq P_n$ denote its $n$th row.
Notice that in $G^{\star}_{\mathbf{f},n}$, $P$ consists of a single vertex which also belongs to $U$.
If we delete all internal vertices of $U$ from $(G^{\star}_{\mathbf{f},n},R_{\mathbf{f},n})$ we obtain three pairwise vertex-disjoint annotated graphs $\mathbf{G}_1$, $\mathbf{G}_2$, and $\mathbf{G}_3$ where both $\mathbf{G}_1$ and $\mathbf{G}_3$ may be seen as copies of $(G^{\star}_{\mathbf{f},n-1},R_{\mathbf{f},n-1},\{ (s,t) \})$
On the other hand, $\mathbf{G}_2$ has an edge $e \coloneqq s_1t_2$ whose deletion results in the annotated graph $\mathbf{G}_2'$ which may be seen as a copy of $(G^{\star}_{\mathbf{b},n-1},R_{\mathbf{b},n-1},\{ (s_1,t_1), (s_2,t_2)\})$.
Hence, by the induction hypothesis, $\mathbf{G}_1$, $\mathbf{G}_2'$, and $\mathbf{G}_3$ are vital.
It follows, that, if we interpret $\mathbf{G}_2$ as the instance $(G^{\star}_{\mathbf{b},n}+e,R_{\mathbf{b},n},\{ (s_2,t_1)\})$, then also $\mathbf{G}_2$ is vital.

We claim that any solution of $(G^{\star}_{\mathbf{f},n},R_{\mathbf{f},n},\{ (s,t) \})$ must contain solutions for all three $\mathbf{G}_i$.
To see this, it is enough to show that $U$ must be part of every $R_{\mathbf{f},n}$-spanning $s$-$t$-path in $(G^{\star}_{\mathbf{f},n},R_{\mathbf{f},n},\{ (s,t) \})$.
For this, notice that every vertex of the first row of $\mathbf{forward}_n$ -- except for the interface vertices and the endpoints of $U$ -- is adjacent to two red vertices, each of which has degree $2$.
This means that for any such vertex $v$ of the first row, none of the edges of the first row incident with $v$ can be part of any $R_{\mathbf{f},n}$-spanning $s$-$t$-path in $(G^{\star}_{\mathbf{f},n},R_{\mathbf{f},n},\{ (s,t) \})$.
Since we may assume by induction that the $n-1$st loops of the $\mathbf{G}_i$ are part of their respective unique solutions, we may now observe that the only option for a solution is to follow along $U$ until its other endpoint as desired.
This is because we may now inductively show that for every $i\in[2,n-1]$, no edge incident with $a_i$ or $b_i$ can be part of a solution.
This, however, means that any solution for $(G^{\star}_{\mathbf{f},n},R_{\mathbf{f},n},\{ (s,t) \})$ must indeed contain solutions for $\mathbf{G}_i$ for each $i\in[3]$.
Moreover, it must use $U$.
Since $G^{\star}_{\mathbf{f},n}$ is the union of $U$ and the graphs of the three $\mathbf{G}_i$, this means that $(G^{\star}_{\mathbf{f},n},R_{\mathbf{f},n},\{ (s,t) \})$ is indeed vital as desired.
\end{proof}

Combining \zcref{lemma_GadgetsInGrid,lemma_GadgetTreewidth,lemma_VitalGadget} no gives the following theorem on the non-existence of a variant of \zcref{thm_VitalSpanningLinkage}.

\begin{theorem}\label{thm_noVitalLinakgeFunction}
Let $\mathcal{X}$ be the class of all annotated graphs $\mathbf{H}$ for which there exists a $2$-outer-annotated grid $\mathbf{G}$ such that $\mathbf{H}$ is a red-minor of $\mathbf{G}$.

Then, for every $n \in \mathbb{N}$ there exists a vital instance $(G,R,\{ s,t\})$ of $1$\textsc{-Spanning Disjoint Paths} with $G \in \mathcal{X}$ and $\mathsf{tw}(G) \geq n$.
\end{theorem}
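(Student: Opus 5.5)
The plan is to assemble \zcref{thm_noVitalLinakgeFunction} directly from the three lemmas of this section, using the forward gadgets. Given $n$, I fix the level $m \coloneqq 2n+2$ and take the instance $(G^{\star}_{\mathbf{f},m},R_{\mathbf{f},m},\{(s,t)\})$ generated by $\mathbf{forward}_m$. I then check membership in $\mathcal{X}$, vitality, and large treewidth in turn.

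\textbf{Membership in $\mathcal{X}$.} \zcref{lemma_GadgetsInGrid} shows that $(G_{\mathbf{f},m},R_{\mathbf{f},m})$ is a red-minor of the $2$-outer-annotated $(8\cdot 3^{m-1}\times 8\cdot 3^{m-1})$-grid. Deleting the degree-one interface vertices preserves being a red-minor, since interface vertices are not red and red-minors are closed under vertex deletion. Hence $(G^{\star}_{\mathbf{f},m},R_{\mathbf{f},m})\in\mathcal{X}$.

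\textbf{Vitality.} This is supplied by \zcref{lemma_VitalGadget}.

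\textbf{Treewidth.} \zcref{lemma_GadgetTreewidth} is stated for $G_{\mathbf{f},m}$, which still contains the interface vertices, so the bound has to be transferred to $G^{\star}_{\mathbf{f},m}$. I would reuse the bramble from that proof: $B_i$ is the $i$th loop together with the $i$th row, where each row is restricted to its non-interface vertices.

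\begin{itemize}
\item Each row, once its two endpoints $\ell_i,r_i$ are removed, stays a connected path.
\item Each loop contains no interface vertices.
\end{itemize}

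So the $B_i$ remain connected and pairwise intersecting. Every vertex still lies in at most one row and at most one loop, so the order is at least $\lfloor m/2\rfloor$. By \zcref{prop_brambles}, $\mathsf{tw}(G^{\star}_{\mathbf{f},m})\ge \lfloor m/2\rfloor-1\ge n$.

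The main obstacle is exactly this transfer of the bramble to $G^{\star}$. One must check that the loop $U_i$ genuinely meets the non-interface part of row $j$ for every $j\le i$; it does, since the intersections occur at the vertices $x$, $a_j$ and $b_j$, none of which is an interface vertex. If some row degenerated to a single vertex, that would still be fine, because connectivity and the intersection property are all that is needed.
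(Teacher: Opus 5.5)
Your proposal is correct and follows the paper, which obtains the theorem simply by combining \zcref{lemma_GadgetsInGrid,lemma_GadgetTreewidth,lemma_VitalGadget}. Your explicit choice $m=2n+2$, and your transfer of both the bramble and red-minor membership from $G_{\mathbf{f},m}$ to $G^{\star}_{\mathbf{f},m}$, fill in details the paper glosses over (the treewidth transfer also follows because the interface vertices have degree one, and deleting a degree-one vertex does not change the treewidth of a graph of treewidth at least two).
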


\section{Conclusion}
In this paper we identify $\mathsf{depth}_2$ as the precise threshold parameter governing the applicability of the Irrelevant Vertex Technique for \textsc{$k$-Spanning Disjoint Paths}. This yields an algorithm running in $2^{2^{\poly(k+d)}}\cdot n^2$ time, where $k\coloneqq|\Tcal|$ is the number of terminal pairs and $d\coloneqq\mathsf{depth}_2(G,R)$. We stress that our dichotomy is combinatorial rather than algorithmic: it delineates exactly those classes on which the Irrelevant Vertex Technique can be made to work, which is a statement about the reach of a particular method and not about the inherent complexity of the problem. In principle, a class could lie beyond the $\mathsf{depth}_2$ boundary and still admit a polynomial-time algorithm by some other route. To turn the combinatorial boundary into a complexity-theoretic one, one would therefore need to prove that \textsc{Spanning Disjoint Paths} is \textsf{NP}-complete when restricted to the class of all minors of the \emph{$2$-outer-annotated $(k\times k)$-grid}, for every $k\in\Nbbb$. As mentioned in the introduction, we cannot exclude that \textsc{Spanning Disjoint Paths} is solvable in polynomial time using other or new algorithmic techniques, and we leave this as an open question.

Another direction is to relax the goal from polynomial-time solvability to fixed-parameter or slice-wise tractability. Concretely, one may ask, for a suitably chosen ``tight'' annotated parameter $\p$, for an \textsf{XP} algorithm for \textsc{Spanning Disjoint Paths} with respect to $\p(G,R)$, the number of terminal pairs $k=|\Tcal|$, or both. In this direction, we conjecture that \textsc{Spanning Disjoint Paths} can be solved in $f(d_\infty)\cdot |G|^{f(k)}$, $f(k)\cdot |G|^{f(d_\infty)}$, or $|G|^{f(k,d_\infty)}$ time, where $d_\infty\coloneqq\mathsf{depth}_\infty(G,R)$; that is, with the exponent of $|G|$ controlled by only one of the two parameters, or by both together. Establishing any of these bounds would show that bidimensionality yields a tight algorithmic dichotomy, since \textsc{Spanning Disjoint Paths} is \textsf{NP}-complete on planar graphs even when $k=1$, so no bound with a $|G|$-exponent independent of $d_\infty$ can be expected.

As already mentioned, $\mathsf{depth}_2$ is only one of infinitely many ways to extend the notion of treewidth to annotated graphs. Recall that, for every $r\in\Nbbb\cup\{\infty\}$, the parameter $\mathsf{depth}_r$ collapses to treewidth once the graph is fully annotated, and that the two ends of this spectrum already carry names: for $r=1$ the parameter is called \textsl{monodimensionality}~\cite{sau2026modelcheckinglowmonodimensionality}, while for $r=\infty$ it is called \textsl{bidimensionality}~\cite{SauStamoulisThilikos2025CMSO,ThilikosWiederrecht2025Bidimensionality}. In this sense, as $r$ grows through the finite values, $\mathsf{depth}_r$ interpolates between the two, capturing gradually denser degrees of ``monodimensionality''. It is natural to ask whether these degrees demarcate dichotomies for other problems as well. A concrete candidate is the \textsc{Terminal $d$-Dominating Cyclability} problem, whose input is again an annotated graph $(G,R)$; here, instead of requiring the cycle to pass through every red vertex, we ask for a cycle $C$ whose vertex set $d$-dominates $R$, in the sense that every vertex of $R$ lies within distance at most $d$ from some vertex of $V(C)$. For $d=0$ this is exactly \textsc{Terminal Cyclability}, but already for larger $d$ our techniques do not extend to it under the $\mathsf{depth}_2$ parameter. Do higher degrees of monodimensionality suffice in that setting, and, if so, when is the resulting threshold tight, either combinatorially (with respect to the Irrelevant Vertex Technique) or algorithmically (with respect to a polynomial algorithm)? Answering this would require, for every $d\in\Nbbb$, a local structure theorem in the spirit of \cref{thm_localstructure}. Such a theorem appears considerably more intricate, and difficult even to formulate, than the one we prove here; for the extreme case $d=\infty$ a corresponding min-max theorem was obtained in \cite{ProtopapasThilikosWiederrecht2025ColorfulMinors}.

Our paper provides two algorithms, and it is worth separating their roles. The first is the algorithm of \cref{thm_localstructure}, which certifies the local structure of annotated graphs of bounded $\mathsf{depth}_2$ and runs in $\poly(t+r+w)\cdot |G|^6$ time. The second takes the local structure provided by \cref{thm_localstructure} as its starting point and solves \textsc{$k$-Spanning Disjoint Paths} in quadratic time by repeatedly removing irrelevant vertices. We made no attempt to optimise the exponent of $|G|$ in the first algorithm and leave this for further investigation; in fact, we believe it can be reduced to quadratic, which would bring the preprocessing step in line with the second algorithm.

A further direction would be to find a min-max graph parameter, defined through tree decompositions, that is equivalent to $\mathsf{depth}_2$. Informally, the torsos of such a decomposition should locally exhibit precisely the structure certified by \cref{thm_localstructure}, so that the global parameter records, in a single quantity, the local guarantees our theorem provides wall by wall. Such a min-max theorem could be established using standard local-to-global arguments, as developed in~\cite{KawarabayashiTW2021Quickly,RobertsonS1991GraphMinorsX}, at the cost of an additional factor of $|G|$ in the running time of the algorithm of \cref{thm_localstructure}. We avoided giving the proof of such a theorem, as it is not required for the results of this paper.

\bigskip

\noindent\textbf{Acknowledgement.} We wish to thank Petr Golovach for valuable early discussions on several  problems about how to span graphs inside graphs.

\newpage

\phantomsection
\addcontentsline{toc}{section}{References}

\newcommand{\etalchar}[1]{$^{#1}$}

\end{document}